\newcommand\footnoteref[1]{\protected@xdef\@thefnmark{\labelcref{#1}}\@footnotemark}
\renewcommand{\paragraph}{%
	\@startsection{paragraph}{4}%
	{\z@}{1.25ex \@plus 1ex \@minus .2ex}{-1em}%
	{\normalfont\normalsize\bfseries}%
}
	\def\jan#1{\textcolor{red}{JAN: #1}}
	\def\richard#1{\textcolor{green}{RICHARD: #1}}
	\def\danupon#1{\textcolor{orange}{DN: #1}}
	\def\thatchaphol#1{\textcolor{purple}{TS: #1}}
	\newcommand{\Zhao}[1]{{\color{red}[Zhao: #1]}}
	\newcommand{\yintat}[1]{{\color{red}[YinTat: #1]}}
	\def\wadi#1{\textcolor{blue}{DW: #1}}
	\definecolor{darkgreen}{rgb}{0.0, 0.42, 0.24}
	\newcommand{\sidford}[1]{{\color{darkgreen}[Aaron: #1]}}
	\def\jan#1{}
\def\richard#1{}
\def\danupon#1{}
\def\thatchaphol#1{}
\newcommand{\Zhao}[1]{}
\def\wadi#1{}
\newcommand{\sidford}[1]{}
\newcommand{\yintat}[1]{}
\declaretheorem[numberwithin=section,refname={Theorem,Theorems},Refname={Theorem,Theorems}]{theorem}
\declaretheorem[numberlike=theorem]{lemma}
\declaretheorem[numberlike=theorem]{proposition}
\declaretheorem[numberlike=theorem]{corollary}
\declaretheorem[numberlike=theorem]{definition}
\declaretheorem[numberlike=theorem]{claim}
\newcommand{\R}{\mathbb{R}}
\newcommand{\Rn}{\mathbb{R}^{n}}
\newcommand{\N}{\mathbb{N}}
\newcommand{\Z}{\mathbb{Z}}
\renewcommand{\S}{\mathbb{S}}%
\renewcommand{\P}{\mathbb{P}}
\newcommand{\E}{\mathbb{E}}
\renewcommand{\tilde}{\widetilde}
\renewcommand{\hat}{\widehat}
\renewcommand{\bar}{\overline}
\newcommand{\median}{\operatorname{median}}
\DeclareMathOperator*{\argmin}{argmin}  %
\DeclareMathOperator*{\argmax}{argmax}  %
\newcommand{\poly}{\operatorname{poly}}
\newcommand{\polylog}{\operatorname{polylog}}
\newcommand{\mdiag}{\mathbf{Diag}} %
\newcommand{\im}{\operatorname{im}}
\newcommand{\grad}{\nabla}%
\newcommand{\nnz}{\operatorname{nnz}}%
\newcommand{\cnorm}{C_{\mathrm{norm}}}%
\newcommand{\Ot}{{\tilde O}}
\newcommand{\otilde}{\Ot}
\newcommand{\zerovec}{\vec{0}} %
\newcommand{\onevec}{{\vec{1}}} %
\newcommand{\mzero}{\boldsymbol{0}}%
\newcommand{\mDelta}{\Delta}%
\newcommand{\mproj}{\mathbf{P}}%
\newcommand{\mSigma}{\boldsymbol{\Sigma}}%
\newcommand{\mLambda}{\boldsymbol{\Lambda}}%
\xdef\csname m\x\endcsname{\noexpand\mathbf{\x}}
\newcommand{\ma}{\mathbf{A}}%
\newcommand{\mb}{\mathbf{B}}%
\newcommand{\md}{\mathbf{D}}%
\newcommand{\mi}{\mathbf{I}}%
\newcommand{\mm}{\mathbf{M}}%
\newcommand{\mn}{\mathbf{N}}%
\newcommand{\mr}{\mathbf{R}}%
\newcommand{\ms}{\mathbf{S}}%
\newcommand{\mq}{\mathbf{Q}}%
\newcommand{\mv}{\mathbf{V}}%
\newcommand{\mw}{\mathbf{W}}%
\newcommand{\mx}{\mathbf{X}}%
\newcommand{\mz}{\mathbf{Z}}%
\newcommand{\otau}{\noexpand{\overline{\tau}}}
\newcommand{\omu}{\noexpand{\overline{\mu}}}
\newcommand{\osigma}{\noexpand{\overline{\sigma}}}%
\newcommand{\os}{\noexpand{\overline{s}}}
\newcommand{\ot}{\noexpand{\overline{t}}}
\newcommand{\og}{\noexpand{\overline{g}}}
\newcommand{\ou}{\noexpand{\overline{u}}}
\newcommand{\ov}{\noexpand{\overline{v}}}
\newcommand{\ow}{\noexpand{\overline{w}}}
\newcommand{\ox}{\noexpand{\overline{x}}}
\newcommand{\oy}{\noexpand{\overline{y}}}
\newcommand{\oz}{\noexpand{\overline{z}}}
\xdef\csname c\x\endcsname{\noexpand\mathcal{\x}}
\newcommand{\ttau}{\widetilde{\tau}}
\newcommand{\tPi}{\widetilde{\Pi}}
\xdef\csname t\x\endcsname{\noexpand\widetilde{\x}}
\xdef\csname om\x\endcsname{\noexpand\mathbf{\overline{\x}}}
\newcommand{\omw}{\omW}
\newcommand{\omx}{\omX}%
\newcommand{\oma}{\omA}%
\newcommand{\oms}{\omS}%
\newcommand{\omh}{\omH}%
\xdef\csname tm\x\endcsname{\noexpand\mathbf{\widetilde{\x}}}
\renewcommand{\t}{^{(t)}}
\newcommand{\tmp}{{\mathrm{(tmp)}}} %
\newcommand{\safe}{_{(\mathrm{safe})}}%
\newcommand{\new}{\mathrm{(new)}}%
\newcommand{\init}{\mathrm{(init)}}%
\newcommand{\target}{\mathrm{(end)}}%
\newcommand{\sample}{\mathrm{sample}} %
\newcommand{\LineComment}[1]{\tcc{#1}}
\newcommand{\State}[0]{} %
\newcommand{\Comment}[1]{\tcp*[h]{#1}}
\newcommand{\VecMaintainer}{\textsc{VectorMaintainer}\xspace}
\newcommand{\ProjHeavyHitter}{\textsc{HeavyHitter}\xspace}
\newcommand{\CAP}{\vec{\mathrm{cap}}}
\newcommand{\opt}{\mathrm{OPT}}
\newcommand{\OPT}{\mathrm{OPT}} %
\newcommand{\norm}[1]{\|#1\|}
\newcommand{\normFull}[1]{\left\Vert #1\right\Vert }
\newcommand{\defeq}{\stackrel{\mathrm{{\scriptscriptstyle def}}}{=}}
\newcommand{\vones}{\onevec}
\newcommand{\phicent}{\Phi}%
\newcommand{\taulog}{\tau_{\mathrm{log}}}
\newcommand{\tauLS}{\tau_{\mathrm{LS}}}
\title{Bipartite Matching in Nearly-linear Time\\ on Moderately Dense Graphs }
\author{
Jan van den Brand\thanks{\texttt{janvdb@kth.se}. KTH Royal Institute of Technology, Sweden.}
\and
Yin Tat Lee\thanks{\texttt{yintat@uw.edu}. University of Washington and Microsoft Research Redmond, USA.}
\and
Danupon Nanongkai\thanks{\texttt{danupon@kth.se}. KTH Royal Institute of Technology, Sweden.}
\and
Richard Peng\thanks{\texttt{rpeng@cc.gatech.edu}. Georgia Institute of Technology, USA.}
\and
Thatchaphol Saranurak\thanks{\texttt{saranurak@ttic.edu}. Toyota Technological Institute at Chicago, USA}
\and
Aaron Sidford\thanks{\texttt{sidford@stanford.edu}. Stanford University, USA.} 
\and
Zhao Song\thanks{\texttt{zhaos@ias.edu}. Columbia University, Princeton University and Institute for Advanced Study, USA.}
\and 
Di Wang\thanks{\texttt{wadi@google.com}. Google Research, USA.}
}
\begin{document}
	
\begin{titlepage}
	\maketitle
	
	\pagenumbering{roman}
	We present an $\tilde O(m+n^{1.5})$-time randomized algorithm for maximum cardinality 
bipartite matching and related problems (e.g. transshipment, negative-weight shortest 
paths, and optimal transport) on $m$-edge, $n$-node graphs. For maximum cardinality  
bipartite matching on moderately dense graphs, i.e. $m = \Omega(n^{1.5})$, our 
algorithm runs in time nearly linear in the input size and constitutes the first 
improvement over the classic $O(m\sqrt{n})$-time [Dinic 1970; Hopcroft-Karp 1971; 
Karzanov 1973] and $\tilde O(n^\omega)$-time algorithms [Ibarra-Moran 1981; Mucha-Sankowski 2004] 
(where currently $\omega\approx 2.373$). On sparser graphs, i.e. when $m =  n^{9/8 + \delta}$ 
for any constant $\delta>0$, our result improves upon the recent advances of [Madry 
2013] and [Liu-Sidford 2020b, 2020a] which achieve an 
$\tilde O(m^{4/3+o(1)})$ runtime.

We obtain these results by combining and advancing recent lines of research in interior point methods (IPMs)  and dynamic graph algorithms. First, we simplify and improve the IPM of [v.d.Brand-Lee-Sidford-Song 2020], providing a general primal-dual IPM framework and new sampling-based techniques for handling infeasibility induced by  approximate linear system solvers. Second, we provide a simple sublinear-time algorithm for detecting and sampling high-energy edges in electric flows on expanders and show that when combined with recent advances in dynamic expander decompositions, this yields efficient data structures for maintaining the iterates of both [v.d.Brand~et~al.] and our new IPMs. 
Combining this general machinery yields a simpler $\tilde O(n \sqrt{m})$ time algorithm for matching based on the logarithmic barrier function, and 
our state-of-the-art $\tilde O(m+n^{1.5})$ time algorithm for matching based on the [Lee-Sidford 2014] barrier (as regularized in [v.d.Brand~et~al.]).

	\newpage
	\setcounter{tocdepth}{2}
	\tableofcontents
	
\end{titlepage}

\newpage
\pagenumbering{arabic}

\newpage

\section{Introduction}
\label{sec:intro}

The {\em maximum-cardinality bipartite matching} problem is to compute a matching of maximum size in an $m$-edge $n$-vertex bipartite graph $G=(V,E)$. This problem is one of the most fundamental and well-studied problems in combinatorial optimization, theoretical computer science, and operations research. It naturally encompasses a variety of practical assignment questions and is closely related to a wide range of prominent optimization problems, e.g. optimal transport, shortest path with negative edge-lengths, minimum mean-cycle, etc.

Beyond these many applications, this problem has long served as a barrier towards efficient optimization and a proving ground for new algorithmic techniques. Though numerous combinatorial and continuous approaches have been proposed for the problem, improving upon the classic time complexities of $O(m \sqrt{n})$~\cite{HopcroftK73,Dinic70,Karzanov73} and $O(n^\omega)$~\cite{IbarraM81,MuchaS04}\footnote{$\omega$ is the matrix multiplication exponent and currently $\omega\approx 2.373$ \cite{Gall14a,Williams12}} has proven to be notoriously difficult.
Since the early 80s, these complexities have only been improved by polylogarithmic factors (see, e.g., \cite{schrijver2003combinatorial})
until a breakthrough result of Madry \cite{m13}  showed that faster algorithms could be achieved when the graph is {\em moderately sparse}. In particular, Madry \cite{m13} showed that the problem could be solved in $\tilde{O}(m^{10/7})$ and a line of research \cite{m16,ls19,cmsv17,ls20_stoc,amv20} led to the recent $\tilde{O}(m^{4/3+o(1)})$-time algorithm \cite{ls20_focs}.\footnote{For simplicity, in the introduction we use $\tilde{O}(\cdot)$ to hide $\polylog n$ and sometimes $\polylog(W)$, where $W$ typically denotes the largest absolute value used for specifying any value in the problem. In the rest of the paper, we define $\tilde O$ (in \Cref{sec:preliminaries}) to hide $\polylog(n)$ and $\poly\log\log W$ (but not $\polylog(W)$). Further, throughout the introduction when we state runtimes for an algorithm, the algorithm may be randomized and these runtimes may only old w.h.p.}
Nevertheless, for {\em moderately dense graphs}, i.e. $m\geq n^{1.5+\delta}$ for any constant $\delta>0$, 
the $O(\min(m \sqrt{n}, n^\omega))$ runtime bound has remained unimproved for decades. 

The more general problem of {\em minimum-cost perfect bipartite $b$-matching}, where an edge can be used multiple times and the goal is to minimize the total edge costs in order to match every node $v$ for $b(v)$ times, for given non-negative integers $b(v)$, has been even more resistant to progress. 
An $\tilde{O}(m\sqrt{n})$ runtime for this problem with arbitrary polynomially bounded integer costs and $b$ was achieved only somewhat recently by \cite{ls14}. Improving this runtime by even a small polynomial factor for moderately dense graphs, is a major open problem (see \Cref{tab:intro:min-weight b-matching}).

The minimum-cost perfect bipartite $b$-matching problem can encode a host of problems ranging from transshipment, to negative-weight shortest paths, to  maximum-weight bipartite matching. Even more recently, the problem has been popularized in machine learning through its encapsulation of the optimal transport problem \cite{BlanchetJKS18,Quanrud19,LinHJ19,AltschulerWR17}. There has been progress on these problems in a variety of settings (see \Cref{sec:intro:results} and \Cref{sec:matching} and the included \Cref{tab:intro:max-C matching} through   
\Cref{table:results:perfect matching}), including recent improvements for sparse-graphs \cite{cmsv17, amv20}, and nearly linear time algorithms for computing $(1+\epsilon)$ approximate solutions for maximum-cardinality/weight matching \cite{HopcroftK73,Dinic70,Karzanov73,GabowT89,GabowT91,DuanP14} and undirected transshipment \cite{Sherman17,AndoniSZ19,Li19}. However, obtaining {\em nearly linear time} algorithms for solving these problems to high-precision for any density regime has been elusive.

\subsection{Our Results} \label{sec:intro:results}

In this paper, we show that maximum cardinality bipartite matching, and more broadly minimum-cost perfect bipartite $b$-matching, can be solved in  $\tilde O(m+n^{1.5})$ time. \Cref{tab:intro:max-C matching,tab:intro:min-weight b-matching} compare these results with previous ones. 
Compared to the state-of-the-art algorithms for maximum-cardinality matching our bound is the fastest whenever $m =  n^{9/8 + \delta}$ for any constant $\delta>0$, ignoring polylogarithmic factors. Our bound is the first (non-fast-matrix multiplication based) improvement in decades for the case of dense graphs. 
More importantly, our bound is nearly linear when $m \geq n^{1.5}$.
This constitutes the first near-optimal runtime in any density regime for the bipartite matching problem.

\begin{table}[t]
	\begin{tabular}{|l|c|p{0.23\textwidth}|c|c|}
		\hline
		\multirow{2}{*}{\textbf{Year}} & \multirow{2}{*}{\textbf{Authors}} & \multirow{2}{*}{\textbf{References}}                   & \multicolumn{2}{c|}{\textbf{Time ($\tilde O(\cdot)$)}}             \\ \cline{4-5} 
		&                                   &                                                        & \textbf{Sparse} & \textbf{Dense} \\ \hline
		1969-1973                      &  Hopcroft, Karp, Dinic, Karzanov   & \cite{HopcroftK73,Dinic70,Karzanov73} & \multicolumn{2}{c|}{$m\sqrt{n}$}               \\ \hline
		1981, 2004 & Ibarra, Moran, Mucha, Sankowski & \cite{IbarraM81,MuchaS04}	&                                  &                  $n^\omega$               \\ \hline
		2013 & Madry&\cite{m13}	&    $m^{10/7}$                              &                                 \\ \hline
		2020& Liu, Sidford & \cite{ls20_stoc} & $m^{11/8+o(1)}$ & \\ \hline
		2020 & Liu, Sidford &\cite{ls20_focs} & $m^{4/3+o(1)}$ & \\ \hline
		2020 & \bf This paper & & & $m+n^{1.5}$\\\hline
	\end{tabular}
	\caption{The summary of the results for the {\bf max-cardinality bipartite matching} problem. For a more comprehensive list, see \cite{DuanP14}.}\label{tab:intro:max-C matching}
\end{table}

\begin{table}[t]
	\begin{center}
		\begin{tabular}{ | l | l | l | l | l | l | }
			\hline
			{\bf Year} & {\bf Authors} & {\bf References} &  \bf Time ($\tilde O(\cdot)$) \\ \hline
			1972 & Edmonds and Karp  & \cite{EdmondsK72} & $mn$  \\ \hline					
			2008 & Daitch, Spielman & \cite{ds08} & $m^{3/2}$ \\ \hline
			2014 & Lee, Sidford & \cite{ls14} & $m\sqrt{n}$ \\ \hline
			2020 & \bf This paper & & $m + n^{1.5}$ \\ \hline
		\end{tabular}
	\end{center}\caption{The summary of the results for the {\bf min-cost perfect bipartite $b$-matching}
		problem (equivalently, {\bf transshipment}) for polynomially bounded integer costs and $b$.
		For a more comprehensive list, see Chapters 12 and 21 in \cite{schrijver2003combinatorial}. 
		Note that there have been further runtime improvements to this problem (not included in the table) under the assumption that $\|b\|_1 = O(m)$, see \cite{cmsv17}. Recently, a state-of-the-art runtime of $\tilde{O}(m^{4/3 + o(1)})$ was achieved by \cite{amv20} under this assumption.
	}\label{tab:intro:min-weight b-matching}
\end{table}

\begin{table}[ht]
	\begin{center}
		\begin{tabular}{ | l | l | l | l | l | l | }
			\hline
			{\bf Year} & {\bf Authors} & {\bf References} & {\bf Time ($\tilde O(\cdot)$)}  \\ \hline
			1972 & Edmonds, Karp & \cite{EdmondsK72} & $n^3$ \\ \hline
			2014 & Lee, Sidford & \cite{ls14} & $n^{2.5}$ \\ \hline
			2017-19 & Altschuler, Weed, Rigollet & \cite{AltschulerWR17} & $n^2 W^2/\epsilon^2$ \\\hline
			2018-19 & Lin, Ho, Jordan & \cite{LinHJ19} & $n^{2.5}W^{0.5}/\epsilon$\\\hline
			2018 & Quanrud / Blanchet, Jambulapati, Kent, Sidford & \cite{Quanrud19,BlanchetJKS18} & $n^2 W / \epsilon$\\\hline
			2020 & {\bf This paper} & & $n^2$\\\hline
		\end{tabular}
	\end{center}\caption{The summary of the results for the {\bf optimal transport}
		problem. 
	}\label{tab:intro:optimal transport}
\end{table}

As a consequence, by careful application of standard reductions, we show that we can solve a host of problems within the same time complexity. 
These problems are those that can be described as or reduced to the following  {\em transshipment} problem. Given $b\in \R^n$, 
$c\in \R^m$, and matrix $\mA \in \{0,1,-1\}^{m \times n}$  where each row of $A$ consists of two nonzero entries, one being $1$ and the other being $-1$, we want to find $x\in \R^m_{\ge 0}$ that achieves the following objective:
\begin{align}
\min_{\mA^\top x = b,\ x\geq 0} c^\top x. \label{eq:intro:LP}
\end{align}
Viewed as a graph problem, we are given a directed graph $G=(V, E)$, a demand function $b: V\rightarrow \R$ and a cost function $c: E \rightarrow \R$. This problem is then to compute a transshipment $f:E\rightarrow \R_{\geq 0}$ that minimizes $\sum_{uv\in E} f(uv)c(uv)$, where a transshipment is a flow $f:E\rightarrow \R_{\geq 0}$ such that for every node $v$, 
\begin{align}\sum_{uv\in E} f(uv)-\sum_{vw\in E} f(vw)=b(v).\label{eq:intro:demand}
\end{align}
The main result of this paper is the following \Cref{thm:intro:transshipment} providing our runtime for solving the transshipment problem.
\begin{theorem}\label{thm:intro:transshipment}
The transshipment problem can be solved to $\epsilon$-additive accuracy  in $\tilde O((m+n^{1.5})\log^2( {\|b\|_\infty\|c\|_\infty}/{\epsilon}))$ time. 
For the integral case, where all entries in $b$, $c$, and $x$ are integers, the problem can be solved exactly in $\tilde O((m+n^{1.5})\log^2(\|b\|_\infty\|c\|_\infty))$ time. 
\end{theorem}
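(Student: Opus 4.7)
The plan is to express transshipment as the linear program in (1), solve it with a primal--dual interior point method (IPM) built on the regularized Lee--Sidford weighted barrier of v.d.Brand--Lee--Sidford--Song, and then drive the per-iteration cost below linear using the paper's new data structures. Standard self-concordance analysis of the weighted barrier converges to $\epsilon$-additive accuracy in $\tilde O(\sqrt n \cdot \log(\|b\|_\infty\|c\|_\infty/\epsilon))$ IPM steps. Because $\mA$ is a signed graph incidence matrix, each Newton step reduces to solving a weighted Laplacian system; if done exactly this would cost $\tilde O(m)$ per step and yield only $\tilde O(m\sqrt n)$ overall, matching the previous best.

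To bring the amortized per-step cost below linear, I would first solve each Laplacian system only approximately with a nearly-linear-time iterative solver, importing the paper's new sampling-based framework to correct the infeasibility that approximate solves induce in the IPM. Second, I would represent the iterate implicitly, updating only those coordinates whose primal, dual, or weight change exceeds a sampling threshold. Third, I would locate those coordinates via the paper's sublinear-time high-energy edge detector for electric flows on expanders, maintained across iterations through dynamic expander decompositions. A potential-function argument driven by the weighted barrier should bound the total number of substantial coordinate updates across all $\tilde O(\sqrt n)$ IPM steps by $\tilde O(n^{1.5})$, so this phase costs $\tilde O(n^{1.5})$. The input scan, initial barrier setup, initial expander decomposition, and final flow output each cost $\tilde O(m)$, bringing the total to $\tilde O(m+n^{1.5})$; the extra $\log(\|b\|_\infty\|c\|_\infty/\epsilon)$ factor from outer bit-precision reduction supplies the second logarithm in the claimed $\log^2$ dependence. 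For the integral case, running the IPM with $\epsilon = 1/\poly(n,\|b\|_\infty,\|c\|_\infty)$ and rounding via a single standard min-cost flow augmentation yields the exact integral optimum within the same asymptotic budget.

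The main obstacle is the tight coupling of three simultaneous approximations: the approximate Laplacian solve, the lazy coordinate updates, and the IPM step itself. One must show at once that only $\tilde O(\sqrt n)$ coordinates move significantly per step on average---so that the expander-based heavy-hitter data structure actually pays off---and that the accumulated infeasibility and weight drift from approximate solves and deferred updates are absorbed by the IPM without losing the $\tilde O(\sqrt n)$ iteration bound. Concretely, the weight-stability analysis of the Lee--Sidford barrier must play against a martingale/sampling argument for the heavy coordinates, and reconciling these two requirements is precisely what the paper's combined IPM plus expander machinery is built to enable; assuming that machinery as a black box, the theorem then follows by summing the per-phase costs as above and invoking standard IPM-to-exact reduction.
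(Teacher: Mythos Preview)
Your outline follows the paper's approach, but two pieces of the accounting are off and one technical step is wrong.

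First, the per-iteration coordinate movement is not $\tilde O(\sqrt n)$ and the IPM phase does not cost only $\tilde O(n^{1.5})$. Under the Lee--Sidford weight function one has $\tau(x,s)\ge n/m$, so the multiplicative change in the slack satisfies $\|\ms^{-1}\delta_s\|_2^2 \le (m/n)\|\ms^{-1}\delta_s\|_\tau^2 = O(m/n)$ per step, not $O(1)$. The heavy-hitter and dual-maintenance data structures then cost $\tilde O\big(K^2\cdot m/n + Kn\log W\big)$ over $K=\tilde O(\sqrt n\log(\cdot))$ iterations, which is $\tilde O\big((m+n^{1.5})\log^2(\cdot)\big)$. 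In other words, the $\tilde O(m)$ term in the final bound is \emph{not} just input/output and initialization---it reappears inside the IPM phase via the $K^2 m/n$ contribution. Attributing $m$ only to setup and claiming the path-following costs $\tilde O(n^{1.5})$ is where your budget would fail.

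Second, for the integral case the paper does not round ``via a single standard min-cost flow augmentation.'' It applies an Isolation-Lemma perturbation to the costs (random multiples of $1/\poly(mW)$) so that the optimum is unique with constant probability, runs the IPM to additive accuracy $1/\poly(mW)$, makes the near-feasible point feasible via a single Laplacian-based projection plus local repair (\Cref{lem:make_flow_feasible}), and then rounds every coordinate to the nearest integer. No augmenting-path phase is used, and indeed one would have to argue separately that such a phase fits in the time budget.

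Finally, you omit the initial-point mechanism: the paper cannot start the IPM on the given cost $c$, so it builds a centered point for an artificial cost $c'$, runs the IPM \emph{away} from optimality to the analytic center, switches $c'\to c$ there, and then runs toward optimality. This is routine but necessary and should appear in any proof sketch.
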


Leveraging \Cref{thm:intro:transshipment} we obtain the following results. 
\begin{enumerate}[noitemsep]
	\item A maximum-cardinality bipartite matching can be computed in  $\tilde O(m+n^{1.5})$ time, where $\tilde O$ hides $\poly\log(n)$ factors. 
	\item The minimum-cost perfect bipartite $b$-matching on graph $G=(V, E)$,  with integer edge costs in $[-W, W]$ and non-negative integer $b(v)\leq W$ for all $v\in V$,  can be computed in $\tilde O((m+n^{1.5})\log^2(W))$ time.
	\item The $\tilde O((m+n^{1.5})\log^2(W))$ time complexity also holds for maximum-weight bipartite matching, negative-weight shortest paths,  uncapaciated min-cost flow, vertex-capacitated min-cost $s$-$t$ flow, minimum mean cost cycle, and deterministic Markov decision processes (here, $W$ denotes the largest absolute value used for specifying any value in the problem). 
	\item The optimal transport problem can be solved to $\epsilon$-additive accuracy in  $\tilde O(n^2 \log^2(W/\epsilon))$ time.
\end{enumerate}

We have already discussed the first two results. Below we briefly discuss some additional results. See \Cref{sec:app} for the details of all results.

\paragraph{Single-source shortest paths with negative weights and minimum weight bipartite perfect matching.} Due to Gabow and Tarjan's algorithm from 1989 \cite{GabowT89}, this problem can be solved in $O(m\sqrt{n}\log(nW))$ time where $W$ is the absolute maximum weight of an edge in the graph. For sparse graphs, this has been improved to $\tilde{O}(m^{10/7}\log W)$ \cite{cmsv17} and recently to  $\tilde{O}(m^{4/3+o(1)}\log W)$ \cite{amv20}. Here, our algorithm obtains a running time of $\tilde O((m+n^{1.5})\log^2(W))$, which again is near-linear for dense graphs and is the lowest known when $m =  n^{9/8 + \delta}$ for any constant $\delta>0$ and $W$ is polynomially bounded. \Cref{table:results:SSSP,table:results:perfect matching} compare our results with the previous ones.

\paragraph{Optimal Transport.} Algorithms with $\epsilon$-additive error received much attention from the machine learning community since the introduction of the algorithm of Altschuler, Weed, Rigollet  \cite{AltschulerWR17} (e.g. \cite{BlanchetJKS18,Quanrud19,LinHJ19}). The algorithm of \cite{AltschulerWR17} runs in time $\tilde O(n^2 W^2/\epsilon^2)$, and \cite{Quanrud19,LinHJ19} runs in time $\tilde O(n^2 W/\epsilon)$. We improve these running times to $\tilde{O}(n^2\log^2(W/\epsilon) )$.
(Note that the problem size is $\Omega(n^2)$.)
\Cref{tab:intro:optimal transport} summarizes previous results.

\subsection{Techniques}

Here we provide a brief high-level overview of our approach (see \Cref{sec:overview} for a much more detailed and formal overview which links to the main theorems of the paper).

Our results constitute a successful fusion and advancement 
of two distinct lines of research on {\em interior point methods} (IPMs) for linear programming 
\cite{Karmarkar84,Renegar88,Vaidya87,VaidyaA93,Anstreicher96,NesterovT97,ls14,cls19,lsz19,ls19,b20,blss20,JiangSWZ20}
and {\em dynamic graph} algorithms \cite{sw19,nsw17,cglnps20,BernsteinBNPSS20,ns17,w17}.
This fusion was precipitated by a breakthrough result 
of Spielman and Teng \cite{SpielmanT04} in 2004 
that Laplacian systems could be solved in nearly linear time.
As IPMs for linear programming 
essentially reduce all the problems considered in this paper 
to solving Laplacian systems in each iteration, 
one can hope for a faster algorithm via a combination 
of fast linear system solvers and interior point methods. 
Via this approach, Daitch and Spielman \cite{ds08} showed in 2008 
that minimum cost flow and other problems could be solved in $\tilde O(m^{3/2})$ time.
Additionally, along this line the results of Madry and others 
\cite{m13,m16,ls19,cmsv17,ls20_stoc,amv20}
all showed that a variety of problems could be solved faster. 
However, as discussed, none of these results lead to improved runtimes for computing maximum cardinality bipartite matching in significantly dense graphs.

More recently, the result of v.d.Brand, Lee, Sidford and Song~\cite{blss20}, 
which in turn was built on \cite{ls14,cls19,b20}, led to new possibilities. 
These methods provide a {\em robust} IPM framework 
which allows one to solve many sub-problems required by each iteration {\em approximately}, 
instead of doing so {\em exactly} as required by the previous interior point frameworks.
Combining this framework with sophisticated dynamic matrix data structures 
(e.g., \cite{Vaidya89a,Sankowski04,BrandNS19,LeeS15,cls19,lsz19,adil2019iterative,b20}) 
has led to the linear programming algorithm of v.d.Brand~et~al.~\cite{blss20}. 
Unfortunately, this algorithm runs in time $\tilde O(mn)$ for graph problems, and this running time seems inherent 
to the data structures used.  Moreover, solving sub-problems only approximately in each iteration leads to infeasible solutions, which were handled by techniques which somewhat complicated and in certain cases inefficient (as this work shows).

Correspondingly this paper makes two major advances. First we show that the data structures (from sparse recovery literature \cite{glps10,knpw11,hikp12a,p13,lnnt16,k17,ns19,nsw19}) used by v.d.Brand~et~al.~\cite{blss20} can be replaced by more efficient data structures in the case of graph problems. These data structures are based on the {\em dynamic expander decomposition} data structure developed in the series of works in \cite{ns17,w17,nsw17,sw19,cglnps20,BernsteinBNPSS20}. For an unweighted undirected graph $G$ undergoing edge insertions and deletions given as input, this data structure maintains a partition of edges in $G$ into expanders. This data structure was originally developed for the dynamic connectivity problem~\cite{nsw17,w17,ns17}, but has recently found applications elsewhere (e.g.~\cite{BernsteinBNPSS20,BernsteinGS20scc,GoranciRST20hierarchy}).
We can use this data structure to detect entries in the solution that change significantly between consecutive iterations of the robust interior point methods. It was known that this task is a key bottleneck in efficiently implementing prior IPMs methods. Our data structures solve this problem near optimally. We therefore hope that they may serve in obtaining even faster algorithms in the future.

The above data structures allow us to solve the problem needed for each iteration (in particular, some linear system) {\em approximately}. It is still left open how to use this approximate solution. The issue is that we might not get a feasible solution (we may get $x$ such that $\ma x\neq b$ when we try to solve the LP \eqref{eq:intro:LP}).
In \cite{blss20}, this was handled in a complicated way that would at best give an $\tilde O(n^2)$ time complexity for the graph problems we consider. In this paper, we simplify and further improve the method of \cite{blss20} 
by sub-sampling entries of the aforementioned approximate solution (and we show that such sampling can be computed  efficiently using the aforementioned dynamic expander decompositions). Because of the sparsity of the sampled solution, we can efficiently measure the infeasibity (i.e. compute $\ma x-b$) and then fix it in a much simpler way than \cite{blss20}.
We actually provide a general framework and analysis for these types of interior point methods that (i) when instantiated on the log barrier, with our data structures, yields a $\tilde O(n\sqrt{m})$-time algorithm (see \Cref{sec:matching:slow}) and (ii) when applied using the more advanced barriers of \cite{ls14} gives our fastest running time (see \Cref{sec:matching:fast}).

We believe that our result opens new doors for combining continuous and combinatorial techniques for graph related problems. The recent IPM advances for maximum flow and bipartite matching problems, e.g. \cite{ds08, m13, ls14, m16, cmsv17, ls20_stoc, ls20_focs, amv20} all use Laplacian system solvers~\cite{SpielmanT04} or more powerful smoothed-$\ell_p$ solver \cite{kpsw19,adil2019iterative,AdilS20} {\em statically} and ultimately spend almost linear work per iteration. In contrast, in addition to using such solvers, we leverage dynamic data-structures for maintaining expanders to implement IPM iterations possibly in sublinear time. Our ultimate runtimes are then achieved by considering the amortized cost of these data structures. 
We hope this proof of concept of intertwining continuous and combinatorial techniques opens the door to new algorithmic advances.

\subsection{Organization}

After the preliminaries and overview in \Cref{sec:preliminaries,sec:overview}, 
we present our IPMs in \Cref{sec:ipm}. 
Our main new data structure called ``\textsc{HeavyHitter}'' is in \Cref{sec:matrix_vector_product}.
In \Cref{sec:vector_maintenance} we show how this \textsc{HeavyHitter} data structure
can be used to efficiently maintain an approximation of the slack of the dual solution.
Maintaining an approximation of the primal solution is described in \Cref{sec:gradient_maintenance}.
In \Cref{sec:matching}, we put everything together to obtain our $\tilde{O}(n\sqrt{m})$-time and $\tilde{O}(m+n^{1.5})$-time algorithms for minimum cost perfect bipartite matching and other problems such as minimum cost flow and shortest paths.

Some additional tools, required by our algorithms, are in \Cref{sec:initial_point}
(constructing the initial point for the IPM),
\Cref{sec:leverage_score} (maintaining leverage scores efficiently),
and \Cref{sec:degenerate} (handling the degeneracy of incidence matrices).

\newpage

\section{Preliminaries}
\label{sec:preliminaries}

We write $[n]$ for the interval $\{1,2,...,n\}$.
For a set $I \subset [n]$ we also use $I$ as $0/1$-vector with $I_{i} = 1$ when $i \in I$ and $I_i = 0$ otherwise.
We write $e_i$ for the $i$-th standard unit vector.
We use $\tilde{O}(\cdot)$ notation to hide $(\log \log W)^{O(1)}$ and $(\log n)^{O(1)}$ factors, where $W$ typically denotes the largest absolute value used for specifying any value in the problem (e.g. demands and edge weights) and $n$ denotes the number of nodes.

When we write \emph{with high probability} (or w.h.p), we mean with probability $1-n^c$ for any constant $c > 0$.

For $x\in\R^{n}$, we use $x_{i}$ to denote
the $i$-th coordinate of vector $x$ if the symbol $x$ is simple. If the symbol is complicated, we
use $(x)_i$ or $[x]_{i}$ to denote the $i$-th coordinate of vector $x$ (e.g. $(\delta_s)_i$).

We write $\mathbf{1}_{\text{condition}}$ for the indicator variable, 
which is $1$ if the condition is true and $0$ otherwise.

\paragraph{Diagonal Matrices}
Given a vector $v \in \R^d$ for some $d$,
we write $\mdiag(v)$ for the $d\times d$ diagonal matrix with $\mdiag(v)_{i,i} = v_i$.
For vectors $x,s,\os,\ox,x_{t},s_{t},w,\ow,w_{t}, \tau, g$ we let $\mx\defeq\mdiag(x)$,
$\ms\defeq\mdiag(s)$, and define $\omx$, $\oms$, $\mx_{t}$, $\ms_{t}$,
$\mw$, $\omw$, $\mw_{t}$, $\mT$, $\mG$ analogously.

\paragraph{Matrix and Vector operations}
Given vectors $u,v \in \R^d$ for some $d$,
we perform arithmetic operations $\cdot,+,-,/,\sqrt{\cdot}$ element-wise.
For example $(u\cdot v)_i = u_i\cdot v_i$ or $(\sqrt{v})_i = \sqrt{v_i}$.
For the inner product we will write $\langle u, v \rangle$ and $u^\top v$ instead.
For a vector $v \in \R^d$ and a scalar $\alpha \in \R$ we have $(\alpha v)_i = \alpha v_i$
and we extend this notation to other arithmetic operations, e.g. $(v + \alpha)_i = v_i + \alpha$. 

For symmetric matrices $\mA,\mB\in \R^{n\times n}$ we write $\mA\preceq \mB$ to indicate that $x^\top \mA x \leq x^\top \mB x$ for all $x\in \R^n$ and define $\succ$, $\prec$, and $\succeq$ analogously. 
We let $\S_{>0}^{n\times n}\subseteq\R^{n\times n}$
denote the set of $n\times n$ symmetric positive definite matrices.
We call any matrix (not necessarily symmetric) {\em non-degenerate} if its rows are all non-zero and it has full column
rank. 

We use $a\approx_{\epsilon}b$ to denote that $\exp(-\epsilon)b\leq a\leq\exp(\epsilon)b$
entrywise and $\ma\approx_{\epsilon}\mb$ to denote that $\exp(-\epsilon)\mb\preceq\ma\preceq\exp(\epsilon)\mb$.
Note that this notation implies $a \approx_\epsilon b \approx_\delta c$ $\Rightarrow$ $a \approx_{\epsilon + \delta} c$, and $a \approx_\epsilon b$ $\Rightarrow$ $a^x \approx_{\epsilon\cdot x} b^x$ for $x \ge 0$.

For any matrix $\ma$ over reals, let $\nnz(\ma)$ denote the number of non-zero entries in $\ma$.

\paragraph{Leverage Scores and Projection Matrices}
For any non-degenerate matrix $\ma\in\R^{m\times n}$ we let $\mproj(\ma)\defeq\ma(\ma^{\top}\ma)^{-1}\ma^{\top}$
denote the orthogonal projection matrix onto $\ma$'s image. 
The definition extends to degenerate matrices via the Penrose-Pseudoinverse,
i.e. $\mproj(\ma)=\ma(\ma^{\top}\ma)^{\dagger}\ma^{\top}$.
Further, we let $\sigma(\ma) \in \R^m$ with $\sigma(\mA)_i \defeq \mproj(\ma)_{i,i}$ denote $\ma$'s \emph{leverage scores} and let $\mSigma(\ma) \defeq \mdiag(\sigma(\ma))$, and we let \emph{$\tau(\ma)\defeq\sigma(\ma)+\frac{n}{m}\vones$} denote $\ma$'s regularized leverage scores and  $\mT(\ma)\defeq\mdiag(\tau(\ma))$.  Finally, we let $\mproj^{(2)}(\ma)\defeq\mproj(\ma)\circ\mproj(\ma)$
(where $\circ$ denotes entrywise product), and $\mLambda(\ma)\defeq\mSigma(\ma)-\mproj^{(2)}(\ma)$.

\paragraph{Norms}
We write $\| \cdot \|_p$ for the $\ell_p$-norm, i.e. $\|v\|_p := (\sum_i |v_i|^p)^{1/p}$, $\|v\|_\infty = \max_i | v_i |$ and $\| v \|_0$ being the number of non-zero entries of $v$.
For a positive definite matrix $\mM$ we define $\| v \|_\mM = \sqrt{v^\top \mM v}$. 
For a vector $\tau$ we define $\| v \|_\tau := (\sum_i \tau_i v_i^2)^{1/2}$
and $\|v\|_{ \tau + \infty } := \| v \|_\infty + 40 \log(4m/n) \|v\|_\tau$, 
where $m\ge n$ are the dimensions of the constraint matrix of the linear program (we define $\|v\|_{ \tau + \infty }$ again in \Cref{def:mixed norm}). 

\paragraph{Graph Matrices}

Given a directed graph $G=(V,E)$, 
we define the (edge-vertex) incidence matrix $\mA \in \{-1,0,1\}^{E\times V}$ 
via $\mA_{e,u} = -1,$ $\mA_{e,v}=1$ for every edge $e = (u,v) \in E$.
We typically refer to the number of edges by $m$ and the number of nodes by $n$,
so the incidence matrix is an $m \times n$ matrix, which is why we also allow indices $\mA_{i,j}$ 
for $i \in [m]$, $j\in[n]$ by assuming some order to the edges and nodes.

For edge weights $w \in \R^E_{\ge0}$ we define the Laplacian matrix as $\mL = \mA^\top \mW \mA$. 
For an unweighted undirected simple graph the Laplacian matrix 
has $\mL_{u,v} = -1$ if $\{u,v\} \in E$ and $\mL_{v,v} = \deg(v)$,
which is the same as the previous definition 
when assigning arbitrary directions to each edge.

Our algorithm  must repeatedly solve Laplacian systems. 
These types of linear systems are well studied
\cite{Vaidya90,SpielmanT03,SpielmanT04,KoutisMP10,KoutisMP11,KelnerOSZ13,lee2013efficient,CohenKMPPRX14, KyngLPSS16, KyngS16}
and we use the following result for solving Laplacian systems (see e.g.~Theorem 1.2 of \cite{KyngS16}):%
\begin{lemma}\label{lem:laplacian_solver}
There is a randomized procedure that given any $n$-vertex $m$-edge graph $G$ with incidence matrix $\mA$, 
diagonal non-negative weight matrix $\mw$, and vector $b \in \R^V$ 
such that there exists an $x \in \R^V$ with $(\ma^\top \mw \ma) x = b$ 
computes $\ox \in \R^V$ with $\norm{\ox - x}_{\ma^\top \mw \ma} \leq \epsilon \norm{x}_{\ma^\top \mw \ma}$ 
in $\tilde{O}(m \log\epsilon^{-1})$ w.h.p.
\end{lemma}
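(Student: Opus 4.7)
The plan is to construct a sparse approximate factorization of $\mL = \mA^\top \mW \mA$ that is spectrally close to $\mL$, and then use this factorization as a preconditioner inside a classical iterative solver to drive the error down to $\epsilon$. Concretely, I aim to build, in $\tilde O(m)$ time, a lower-triangular matrix $\mP$ with $\nnz(\mP) = \tilde O(m)$ such that $\mP \mP^\top \approx_{1/2} \mL$ holds with high probability (after suitable handling of $\mL$'s one-dimensional kernel spanned by $\vones$). Given such a $\mP$, I would then run preconditioned Chebyshev iteration (or PCG) on the system $\mL x = b$, using $\mP \mP^\top$ as the preconditioner; since the preconditioned condition number is $O(1)$, each iteration shrinks the $\mL$-norm error by a constant factor, so $O(\log \epsilon^{-1})$ iterations suffice. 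Each iteration costs one Laplacian--vector product ($O(m)$) plus two triangular solves against $\mP,\mP^\top$ (each $\tilde O(m)$ since $\mP$ is sparse), giving the claimed $\tilde O(m \log \epsilon^{-1})$ runtime.

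For the construction of $\mP$ I would follow an approximate-Cholesky approach: repeatedly pick a random vertex $v$, eliminate it from the current (weighted) Laplacian, and replace the exact dense Schur complement clique on $v$'s neighbors with a sparsified clique obtained by sampling $\tilde O(\deg(v))$ edges with probabilities proportional to the products of incident edge weights. The sampled clique contributes one column of the lower-triangular factor $\mP$, and the algorithm recurses on the sparsified Schur complement until no vertices remain. A symmetric alternative is to build a Spielman--Teng-style chain: take a low-stretch spanning tree $T$ of $G$ (which exists and can be computed in $\tilde O(m)$ time), then add $\tilde O(n)$ off-tree edges sampled by effective-resistance-like probabilities to form a spectral sparsifier; recursing on this sparsifier produces a multilevel preconditioner of total size $\tilde O(m)$.

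The main obstacle is the spectral guarantee $\mP\mP^\top \approx_{1/2} \mL$. In the approximate-Cholesky view, each elimination step introduces a matrix-valued random perturbation, and the errors compose multiplicatively over $n$ rounds, so a naive union bound loses polynomially. To control this I would cast the sequence of Schur-complement approximations as a matrix martingale, and apply a matrix Freedman / matrix Chernoff inequality: the sampling is designed so that each step is unbiased in expectation, its per-step spectral deviation is bounded by a local "leverage" quantity whose sum over the whole process is $\tilde O(1)$, and hence the cumulative relative spectral error is $O(1/2)$ w.h.p. This is the delicate part of the analysis; once it is in hand, combining the spectral bound with standard analyses of PCG or Chebyshev iteration (both of which tolerate an inexact preconditioner up to a change in constants) yields the stated $\ox$ satisfying $\|\ox - x\|_{\mL} \le \epsilon \|x\|_{\mL}$.

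Finally, I would address a couple of routine but essential bookkeeping issues. First, $\mL$ is rank-deficient (with kernel $\vones$ on each connected component), so I would project $b$ onto the image of $\mL$ at the start and work inside the orthogonal complement of the kernel throughout; the hypothesis that $b \in \mathrm{im}(\mL)$ ensures this causes no loss. Second, to make the iterative solver's error statement conform to the $\|\cdot\|_{\mL}$ guarantee requested, I would initialize with $x^{(0)} = 0$ so $\|x^{(0)}-x\|_\mL = \|x\|_\mL$ and note that Chebyshev iteration produces an iterate whose $\mL$-norm error is bounded relative to the initial $\mL$-norm error, exactly matching the desired multiplicative form.
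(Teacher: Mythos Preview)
The paper does not prove this lemma at all: it is stated in the preliminaries as a known black-box result, with a citation (``see e.g.\ Theorem~1.2 of \cite{KyngS16}'') and a list of prior Laplacian-solver references. So there is no ``paper's own proof'' to compare against.

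That said, your sketch is a reasonable high-level outline of exactly the Kyng--Sachdeva approximate-Cholesky approach the paper is citing: random vertex elimination, clique sparsification of the Schur complement, and a matrix-martingale (matrix Freedman) argument to control the accumulated spectral error, followed by a constant-condition-number preconditioned iteration to get the $\log \epsilon^{-1}$ dependence. The alternative you mention (low-stretch trees plus recursive sparsification) is the older Spielman--Teng / Koutis--Miller--Peng line, which also yields the stated bound. Either route is fine here; for the purposes of this paper you should simply invoke the cited result rather than reprove it.
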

Note that we can express the approximation error of \Cref{lem:laplacian_solver} 
as some spectral approximation, i.e. that there exists some $\mH \approx_{20\epsilon} \mA^\top\mw\mA$ such that $\mH \ox = b$ \cite[Section 8]{blss20}.

\paragraph{Expanders}
We call an undirected graph $G=(V,E)$ a $\phi$-expander, 
if 
$$
\phi \le \min_{\emptyset \neq S \subsetneq V} \frac{|\{\{u,v\}\in E \mid u \in S, v \in V \setminus S\} | 
}{
\min \{\sum_{v \in S} \deg(v), \sum_{v \in V \setminus S} \deg(v)\}
}.
$$
For an edge partition $\bigcup_{i=1}^t E_i = E$,
consider the set of subgraphs $G_1,...,G_t$, where $G_i$ is induced by $E_i$ (with isolated vertices removed).  
We call this edge partition and the corresponding set of subgraphs a {\em $\phi$-expander decomposition} of $G$ if each $G_i$ is a $\phi$-expander.

\newpage

\section{Overview}
\label{sec:overview}

We start the overview by explaining how our new interior point method (IPM) works in \Cref{sec:overview:newIPM}. A graph-algorithmic view of this IPM can be found in \Cref{sec:over:graph view} and the full detail can be found in \Cref{sec:ipm}.
This new IPM reduces solving linear programs to efficiently performing a number of computations approximately. To efficiently perform these computations for graph problems and implement our IPM we provide new data structures, outlined in \Cref{sec:over:data structures}.
Some of these data structures are easy to obtain via known tools, e.g. Laplacian solvers, and some constitute new contributions. In \Cref{sec:overview:vector} we outline our main data structure contributions. The details for these data structures are found in \Cref{sec:matrix_vector_product,sec:vector_maintenance,sec:gradient_maintenance}.

\subsection{Interior Point Method} 
\label{sec:overview:newIPM}

Here we provide an overview of our new efficient sampling-based primal-dual IPMs given in \Cref{sec:ipm}. Our method builds upon the recent IPM of v.d.Brand, Lee, Sidford, and Song~\cite{blss20} and a host of recent IPM advances \cite{cls19,ls19,lsz19,b20}. As with many of these recent methods, given a non-degenerate $\ma \in \R^{m \times n}$ and $b\in\R^{n}$ and $c\in\R^{m}$, these IPMs are applied to linear programs represented in the following primal $(P)$ and dual $(D)$ form: 
\begin{equation}
(P) \defeq \min_{x\in\R_{\geq0}^{m}:\ma^{\top}x=b}c^{\top}x
\enspace \text{ and } \enspace
(D) \defeq \max_{y\in\R^{n},s \in\R^m_{\geq 0}:\ma y + s = c}b^{\top}y\,.\label{eq:over:primal_dual}
\end{equation}
In the remainder of this subsection we explain, motivate, and compare our IPM for this general formulation. For more information about how this IPM is applied in the case of matching problems, see the next subsections.

\paragraph{Path following}
As is typical for primal-dual IPMs, both our IPM and the IPMs in  \cite{cls19,lsz19,b20,blss20,JiangSWZ20} maintain primal $x^{(i)} \in \R^m_{\geq 0}$
and dual slack $s^{(i)} \in \R^m_{\geq 0}$ 
and proceed for iterations $i = 0, 1, \ldots$ attempting to iteratively improve their quality.
In each iteration $i$, they attempt to compute $(x^{(i)},s^{(i)})$
so that  
\begin{align}
\label{eq:approx_centered}
x^{(i)}s^{(i)}\approx \mu^{(i)}\tau(x^{(i)},s^{(i)})
\end{align} 
for some {\em path parameter} $\mu^{(i)}\in \R_{\geq 0}$ and {\em weight function} $\tau(x^{(i)},s^{(i)})\in \R^m_{\geq 0}$. (Recall from \Cref{sec:preliminaries} that  $x^{(\ell)}s^{(\ell)}$ is an element-wise multiplication.) 

The intuition behind this approach is that for many weight functions, e.g. any constant positive vector, the set of primal-dual pairs $(x_\mu, s_\mu) \in \R^m_{\geq} \times \R^m_{\geq0}$, that are \emph{feasible}, i.e. satisfy $\ma^\top x_\mu = b$ and $\ma y + s = c$ for some $y \in \R^n$, and are \emph{$\mu$-centered}, i.e. $x s = \mu \tau(x,s)$, form a continuous curve from solutions to \eqref{eq:over:primal_dual}, at $\lim_{\mu \rightarrow 0} (x_\mu, s_\mu)$, to a type of center of the primal and dual polytopes (in the case they are bounded), at $\lim_{\mu \rightarrow \infty} (x_\mu, s_\mu)$. This curve is known as the \emph{central path} and consequently these methods can be viewed as maintaining approximately centrality to approximately follow the central path. 

Our methods follow a standard step-by-step approach (similar to \cite{blss20}) to reduce solving a linear program to efficiently following the central path, i.e. maintaining \eqref{eq:approx_centered} for changing $\mu$: 
(i) modify the linear program to have trivial feasible centered initial points 
(ii) following the path towards the interior (i.e. increase $\mu$) 
(iii) show that the resulting points are on the central path for the original linear program 
(iv) follow the path towards optimal points (i.e. decreasing $\mu$). 
(v) show that the resulting points are near optimal and can be rounded to exactly optimal solutions (depending on the size of the weight function and the particular LP). 
See \Cref{sec:initial_point} for how to perform steps (i), (iii), and (v).

Where the IPMs of \cite{cls19,lsz19,b20,blss20,JiangSWZ20} and ours all differ is in what weight function is used and how the central path is followed. 
There is a further complication in some of these methods in that in some cases feasibility of $x$ is not always maintained exactly. 
In some, linear systems can only be solved to high-precision, however this can be handled by natural techniques, see e.g. \cite{ds08,ls14}. 
Further, in \cite{blss20}, to allow for approximate linear system solves in the iterations and thereby improve the iteration costs, feasibility of $x$ was maintained more crudely through complicated modifications to the steps. 
A key contributions of this paper, is a simple sampling-based IPM that also maintains approximately feasible $x$ to further decrease the iteration costs of \cite{blss20}. 

\paragraph{Weight function} %

In this paper we provide a general IPM framework that we instantiate with our sampling-based techniques on two different weight functions $\tau(x^{(i)},s^{(i)})$. While there are many possible weight functions we restrict our attention to
$\taulog$ induced by the standard logarithmic barrier and $\tauLS$ a regularized variant of the weights induced by the Lee-Sidford barrier function \cite{ls19} (also used in \cite{blss20}) defined as follows:
\begin{align*}
\taulog(x^{(i)},s^{(i)})\defeq\onevec \quad \mbox{ and } \quad\tauLS(x^{(i)},s^{(i)})=\sigma(x^{(i)}, s^{(i)}) + \frac{n}{m} \vones
\end{align*}
Above, $\sigma(x^{(i)}, s^{(i)})\in \R^m$ are {\em leverage scores} of  $\ma$ under a particular row re-weighting by $x^{(i)}$, and $s^{(i)}$ as used in, e.g., \cite{ls14,ls19,blss20} (see \Cref{def:LS_weight} for formal definition). 
Roughly, $\sigma(x^{(i)}, s^{(i)})$ measures the importance of each row of $\mA$ with respect to the current primal dual pair $x^{(i)}$ and $s^{(i)}$ in a way that the induced central path is still continuous and can be followed efficiently.

On the one hand, $\taulog$ is perhaps the simplest weight function one could imagine. The central path it induces is the same as the one induced by penalizing approach the constraints of $(P)$ and $(D)$ with a logarithmic barrier function, see \eqref{eq:log_barrier}. Starting with the seminal work of \cite{Renegar88} there have been multiple $\otilde(\sqrt{m})$ iteration IPMs induced by $\taulog$. 
On the other hand, $\tauLS$ is closely related to the Lewis weight barrier or Lee-Sidford barrier given in \cite{ls19} and its analysis is more complex. However, in \cite{blss20} it was shown that this weight function induces a $\otilde(\sqrt{n})$ iteration IPM. (See \cite{ls19,blss20} for further explanation and motivation of $\tauLS$): 

Though the bounds achieved by $\tauLS$ in this paper are never worse than those achieved by $\taulog$ (up to logarithmic factors), we consider both weight functions for multiple reasons. First, the analysis of $\taulog$ in this paper is simpler than that for $\taulog$ and yet is sufficient to still obtain $\tilde{O}(n \sqrt{m}) = \tilde{O}(n^2)$ time algorithms for the matching problems considered in this paper (and consequently on a first read of this paper one might want to focus on the use of $\taulog$). Second, the analysis of the two weight functions is very similar and leverage much common algorithmic and analytic machinery. Consequently, considering both barriers demonstrates the versatility of our sampling-based IPM approach. 

\paragraph{Centrality potentials} 
To measure whether  $x^{(i)}s^{(i)}\approx \mu^{(i)}\tau(x^{(i)},s^{(i)})$ (for $\tau\in \{\taulog,\tauLS\}$) and design our steps, as with previous IPM advances \cite{ls19, cls19,lsz19,b20,blss20,JiangSWZ20} we use the softmax potential function $\Phi : \R^m \rightarrow \R$ defined for all vectors $v$ by
\[
\Phi(v) \defeq \sum_{i \in [n]} \phi(v_i) \text{ where } \phi(v_i) \defeq \exp(\lambda(v_i - 1)) + \exp(-\lambda(v_i - 1))
\]
for some parameter $\lambda$. We then define the {\em centrality measures} or {\em potentials} as 
\[
\Phi(x^{(i)}, s^{(i)}, \mu^{(i)})  \defeq \Phi\left(
\frac{ x^{(i)} s^{(i)} }{ \mu^{(i)} \tau(x^{(i)}, s^{(i)}) }    
\right)
\]
where $\tau\in \{\taulog,\tauLS\}$ depending on which weight function is used.
$\Phi$ intuitively measure how far $x^{(i)} s^{(i)}$ is from $\mu^{(i)}\tau(x^{(i)},s^{(i)})$, i.e. how far $x^{(i)}$ and $s^{(i)}$ are from being centered and having \eqref{eq:approx_centered} hold. Observe that $\Phi(v)$  is small when $v=\onevec$ (thus $x^{(i)}s^{(i)}=\mu^{(i)}\tau(x^{(i)},s^{(i)})$)  and increases very quickly as $v$ deviates from $\onevec$. 

The centrality potential we consider has been leveraged extensively by previous IPM advances. In particular, $\Phi$ with $\tau = \taulog$ was used in  \cite{cls19,lsz19,b20,JiangSWZ20} and $\Phi$ with $\tau = \tauLS$ was used in \cite{blss20}. Where our method differs from prior work is in how we design our steps for controlling the value of this potential function a discussed in the next section.

\paragraph{Improvement Step (Short Step)}

Given the choice of weight function $\tau \in \{\taulog, \tauLS\}$ our IPM follows the central path by taking improvement steps (called {\em short steps}) defined as follows:
\begin{align}\label{eq:over:update_xs}
x^{(i+1)}= x^{(i)}+ \eta_x \delta_x^{(i)} \mbox{ , } s^{(i+1)}= s^{(i)}+ \eta_s \delta_s^{(i)}
\mbox{ , and }
\mu^{(i + 1)} = \mu^{(i)} + \delta_\mu^{(i )}
\end{align}
where $\eta_x, \eta_s$ are constants depending on whether we use $\taulog$ or $\tauLS$, and $\delta_x^{(i+1)}$, $\delta_s^{(i+1)}$, and $\delta_\mu^{(i + 1)}$ are defined next (See \Cref{alg:short_step_log} and \Cref{alg:short_step_LS} in \Cref{sec:ipm} for the complete pseudocode for the short-step method for  $\taulog$ and $\tauLS$ respectively.) Informally, these steps are defined as approximate projected Newton steps of $\Phi$ in the appropriate norm. Formally, $\delta_x^{(i)}$ and $\delta_s^{(i)}$ are given by the following equations
\begin{align}%
\delta_x^{(i)} = \omX^{(i)} \og^{(i)} - \mR^{(i)} [\omX^{(i)}(\omS^{(i)})^{-1}&\mA(\omH^{(i)})^{-1}\mA^\top \omX^{(i)} \og^{(i)} + \delta_c^{(i)}],   \label{eq:over:delta_x new} \\
\delta_s^{(i)} = &\mA(\omH^{(i)})^{-1}\mA^\top \omX^{(i)} \og^{(i)}. \label{eq:over:delta_s}
\end{align}
where the variables in \eqref{eq:over:delta_x new} and \eqref{eq:over:delta_s} are described below. 
\begin{enumerate}[noitemsep,label=(\Roman*)]
	\item \label{item:over:def ox os} $\ox^{(i)},\os^{(i)} \in \R^{m}$ are any entry-wise, multiplicative approximations of $x^{(i)}$ and $s^{(i)}$ and $\omX^{(i)}=\mdiag(\ox^{(i)})$ and $\omS^{(i)}=\mdiag(\os^{(i)})$. (See:  Line~\ref{line:log:approx_xs} of \Cref{alg:short_step_log} and Line~\ref{line:LSstep:xstau} of \Cref{alg:short_step_LS}.)

	\item \label{item:over:def og} $\og^{(i)}\in \R^{m}$: This is an approximate 
	steepest descent direction of $\Phi$ with respect to some norm 
	$\|\cdot\|$.
	Formally, for $\ov^{(i)}\in \R^m$ which is an element-wise approximation of 
		$\frac{x^{(i)} s^{(i)}}{\mu^{(i)}\tau(x^{(i)},s^{(i)})}$,
	we choose
	\begin{align}
	\og^{(i)} = \underset{ z \in \R^m: \|z\|\le 1 }{ \argmax } \langle \nabla\Phi(\ov^{(i)}), z \rangle \label{eq:over:og}
	\end{align}
	for some norm $\|\cdot\|$ that depends on whether we use $\taulog$ or $\tauLS$. (See Lines \ref{line:log:v}-\ref{line:log:g} of \Cref{alg:short_step_log} and Lines~\ref{line:LSstep:v}-\ref{line:LSstep:g} of \Cref{alg:short_step_LS}.)
	
	\item $\omH^{(i)}\in \R^{n\times n}$ is any matrix such that $\omH^{(i)} \approx 
	\mA^\top \omX^{(i)} (\omS^{(i)})^{-1} \mA$.
	(See Line~\ref{line:logstep:omh} of  \Cref{alg:short_step_log} and 
	Line~\ref{line:LSstep:omh} of \Cref{alg:short_step_LS}.)  

	\item \label{item:over:R} $\mR^{(i)}\in \R^{m\times m}$ is a randomly selected PSD 
	diagonal matrix chosen so that $\E[\mR^{(i)}] = \mI$,  $$\mA^\top \omX^{(i)} 
	(\omS^{(i)})^{-1} \mR^{(i)} \mA \approx \mA^\top \omX^{(i)} (\omS^{(i)})^{-1} 
	\mA,$$ and the second moments of $\delta_x$ are bounded. The number of non-zero 
	entries in $ \mR^{(i)}$ is $\tilde O(n+\sqrt{m})$ when we use $\taulog$ and 
	$\tilde O(n+m/\sqrt{n})$ when we use $\tauLS$. Intuitively $\mR$ randomly samples 
	some rows of the matrix following it in \eqref{eq:over:delta_x new} with 
	overestimates of importance measures of the row. 
	 (See Line~\ref{line:log:R} of \Cref{alg:short_step_log}  and Line~\ref{line:LSstep:R} of \Cref{alg:short_step_LS}.)
	\item \label{item:over:delta_c} $\delta_c^{(i)}\in \R^{m}$: a ``correction vector'' which (as discussed more below), helps control the infeasibility of $x^{(i + 1)}$. For a parameter 
	$\eta_c$ of value $\eta_c\approx 1$  (more precisely, 
	$\eta_c=1$ for $\taulog$ and $\eta_c=\frac{1}{1-1/O(\log n)}$ for $\tauLS$) this is defined as
	\begin{align}\label{eq:over:delta_c}
	\delta_c^{(i)} \defeq \eta_c  \omX^{(i)}(\omS^{(i)})^{-1}\mA (\omH^{(i)})^{-1} (\mA^\top x^{(i)} - b).
	\end{align}

\end{enumerate}

\noindent{\em Flexibility of variables}: Note that there is flexibility in choosing variables of the form $\bar\square$, i.e. $\ox^{(i)}$, $\os^{(i)}$, $\og^{(i)}$ and $\omH^{(i)}$. Further, our algorithms have flexibility in the choice of $\mR^{(i)}$, we just need too sample by overestimates. This flexibility gives us freedom in how we implement the steps of this method and thereby simplifies the data-structure problem of maintaining them. As in \cite{blss20}, this flexibility is key to our obtaining our runtimes.

\smallskip\noindent{\em Setting $\delta_\mu^{(i + 1)}$:} As we discuss more below, if $\mR^{(i)}=\mI$ and $\delta_c^{(i)}=0$ in \eqref{eq:over:delta_x new}, our short steps would be almost the same as those in the IPMs in \cite{cls19,b20,blss20,JiangSWZ20}.
For such IPMs, it was shown in \cite{cls19, b20,JiangSWZ20} (respectively in \cite{blss20}) that  $\delta_\mu^{(i)}$ can be set to be roughly $\tilde{O}(1/\sqrt{m}) \mu^{(i)}$ if we use $\taulog$ (respectively  $\tilde{O}(1/\sqrt{n}) \mu^{(i)}$ if we use $\tauLS$), leading to a method with $\tilde{O}(\sqrt{m})$ (respectively $\tilde{O}(\sqrt{n})$) iterations.

In this paper, 
 we can adjust the analyses in \cite{cls19,b20,blss20,JiangSWZ20} to show that our IPMs require the same number of iterations. 
In particular, we provide a general framework for IPMs of this type (\Cref{sec:lp_meta}) and show that by carefully choosing the distribution for $\mR^{(i)}$ (and restarting when necessary) we can preserve the typical convergence rates from \cite{cls19,b20,blss20,JiangSWZ20} for using $\taulog$ and $\tauLS$ while ensuring that the infeasibility of $x$ is never too large. Provided $\mr^{(i)}$ can be sampled efficiently, our new framework supports arbitrary crude polylogarithmic multiplicative approximations of $\omh^{(i)}$ to $\ma^\top \omx^{(i)} (\oms^{(i)})^{-1} \oma$, in contrast to the high precision approximations required by \cite{cls19,b20,JiangSWZ20} and a more complicated  approximation required in \cite{blss20}

\paragraph{Motivations and comparisons to previous IPMs} 
The IPM in  \cite{blss20} and ours share a common feature that they only {\em approximately} solve linear systems in each iteration, i.e. they apply $(\omH^{(i)})^{-1}$ to a vector for $\omH^{(i)} \approx \ma^\top \omX^{(i)} (\omS^{ (i) })^{-1} \ma$.
While \cite{blss20} carefully modified the steps to make them feasible in expectation, here we provide a new technique of simply sampling from $\delta_x$ to essentially \emph{sparsify} the change in $x$ so that we always know the infeasibility and therefore can better control it.
In particular, the short steps in  \cite{blss20} are almost the same as ours with $\mR^{(i)}=\mI$ and $\delta_c^{(i)}=0$ in \eqref{eq:over:delta_x new}.
This means that we modify the previous short steps in two ways. 
First, we sparsify the change in $x^{(i)}$ using a sparse random matrix $\mR^{(i)}$ 
defined in \ref{item:over:R}. Since $\E[\mR^{(i)}] = \mI$, in expectation the behavior 
of our IPM is similar to that in \cite{blss20}. However, since $\mR^{(i)}$ has 
$\tilde{O}(n+m/\sqrt{n})$ non-zero entries (and less for $\taulog$), we can quickly 
compute $\ma^{\top} x^{(i + 1)} - b$ from $\ma^{\top} x^{(i)} - b$ by looking at 
$\tilde O(n+m/\sqrt{n})$ rows of $\ma$.
This information is very useful in fixing the 
feasibility of $x^{(i + 1)}$ so that $\ma^{\top} x^{(i + 1)}=b$ in the LP. In 
particular,  while \cite{blss20} requires a complicated process to keep  $x^{(i+1)}$ 
feasible, we only need our second modification: a ``correction vector'' 
$\delta_c^{(i)}$. The idea is that we choose $\delta_c^{(i)}$ so that 
$x^{(i+1)}=x^{(i)}+\delta_x^{(i)}+\delta_c^{(i)}$ would be feasible if we use 
$\omH^{(i)} = \ma^\top \omX^{(i)} (\omS^{ (i) })^{-1} \ma$. Although we will still 
have $\omH^{(i)} \approx \ma^\top \omX^{(i)} (\omS^{ (i) })^{-1} \ma$ and so 
$x^{(i+1)}$ will be infeasible, the addition of $\delta_c^{(i)}$ fixes some of the 
previous induced infeasibility. This allows us to bypass the expensive infeasibility
fixing step in \cite{blss20} which takes $\tilde{O}(mn+n^{3})$ time, 
and improve the running time to $\tilde{O}(mn +n^{2.5})$, 
and even less when $\ma$ is an incidence matrix.

To conclude, we advance the state-of-the-art for IPMs by providing new methods which can tolerate crude approximate linear system solvers and gracefully handle the resulting loss of infeasibility. Provided certain sampling can be performed efficiently, our methods improve and simplifying aspects of \cite{blss20}. 
This new IPM framework, together with our new data structures (discussed next), 
allow $\taulog$ to be used to obtain an $\tilde{O}(n \sqrt{m})$-time matching algorithm 
and $\tauLS$ to be used to obtain our $\tilde{O}(m + n^{1.5})$-time matching algorithm. 
We believe that this framework is of independent interest and may find further 
applications. 

\subsection{A Graph-Algorithmic Perspective on our IPM}\label{sec:over:graph view}
Here we provide an overview of the IPM discussed in \Cref{sec:overview:newIPM}, specialized to the graph problems we consider, such as matching and min-cost flow. This subsection is intended to provide further intuition on both our IPM and the data structures we develop for implementing the IPM efficiently. For simplicity, we focus on our IPM with $\taulog$ in this subsection. 
In the case of graph problems, typically the natural choice of $\ma$ in the linear programming formulations is the incidence matrix $\mA \in \{-1,0,1\}^{E\times V}$ of a graph (see \Cref{sec:preliminaries}). The structure of this matrix ultimately enables our methods to have the graph interpretation given in this section and allows us to achieve more efficient data structures (as compared to the case of general linear programs). This interpretation is discussed here and the data structures are discussed in \Cref{sec:over:data structures,sec:overview:vector}. 

Note that incidence matrices are degenerate; the all-ones vector is always in the kernel and therefore $\ma$ is not full column rank (and $\ma^\top \ma$ is not invertible). Consequently, the algorithms in \Cref{sec:overview:newIPM} do not immediately apply. This can be fixed by standard techniques  (e.g. \cite{ds08}). In this paper we fix this issue by appending an identity  block at the bottom of $\mA$ (which can be interpreted as adding self-loops to the input graph; see \Cref{sec:degenerate}). For simplicity, we ignore this issue in this subsection.

\paragraph{Min-cost flow} We focus on the uncapacited min-cost flow (a.k.a.~transshipment) problem, where the goal is the find the flow satisfying nodes' demands (Eq.~\eqref{eq:intro:demand}).  Other graph problems can be solved by reducing to this problem (see \Cref{fig:reductions}). For simplicity, we focus on computing $\delta_x^{(i)}$ as in \eqref{eq:over:delta_x new}  
and assume that $\eta_x=\eta_s=1$. Below, entries of any $n$-dimensional (respectively $m$-dimensional) vectors are associated with vertices (respectively edges). 
After $i$ iterations of our IPM, we have 
\begin{itemize}[noitemsep]
	\item a flow $\ox^{(i)}\in \R^m$ that is an approximation of a flow $x^{(i)}$ (we do not explicitly maintain $x^{(i)}$ but it is useful for the analysis), 
	\item an approximated slack variable $\os^{(i)}\in \R^m$, and
	\item $\mA^\top x^{(i)}-b\in \R^n$ called {\em infeasibility} (a reason for this will be clear later).
\end{itemize}
We would like to improve the cost of $x^{(i)}$ by augmenting it with flow $\ox^{(i)} \og^{(i)}\in \R^m$, for some ``gradient'' vector $\og^{(i)}$. This corresponds to the first term in \eqref{eq:over:delta_x new} and gives us an intermediate $m$-dimensional flow vector
$$\dot{x}^{(i+1)} \defeq x^{(i)}+\ox^{(i)}   \og^{(i)}$$ 
Let us oversimplify the situation by assuming that $\og^{(i)}$ has $\tilde O(n)$ non-zero entries, so that computing $\ox^{(i)}   \og^{(i)}$ is not a bottleneck in our runtime. We will come back to this issue later.

\paragraph{Infeasibility} 
The main problem of $\dot{x}^{(i+1)}$ is that it might be {\em infeasible}, i.e. 
$\mA^\top \dot{x}^{(i+1)}\neq b$.  The infeasibility $\mA^\top \dot{x}^{(i+1)}-b$ is due to (i) the infeasibility of $x^{(i)}$ (i.e. $\mA^\top x^{(i)}-b$), and (ii) the excess flow of $\ox^{(i)}   \og^{(i)}$, which is $(\mA^\top \omX^{(i)} \og^{(i)})_v=\sum_{uv\in E} \ox^{(i)}_{uv}\og^{(i)}_{uv}-\sum_{vu\in E} \ox^{(i)}_{vu}\og^{(i)}_{vu}$ on each vertex $v$.
This infeasibility would be fixed if we {\em subtract} $\dot{x}^{(i+1)}$ with some ``correction'' flow $f^{(i)}_c$ that satisfies,  for every vertex $v$, the demand vector $d^{(i)}\in \R^n$ where
\begin{align}
d^{(i)}\defeq \mA^\top \dot{x}^{(i+1)}-b=\mA^\top \omX^{(i)} \og^{(i)}+(\mA^\top x^{(i)}-b). \label{eq:over:demand}
\end{align}
Note that given sparse $\og^{(i)}$ (as assumed above) and $\mA^\top x^{(i)}-b$, we can compute the demand vector $d^{(i)}$ in $\tilde O(n)$ time. 

\paragraph{Electrical flow} 
A  standard candidate for $f^{(i)}_c$ is an electrical flow on the input graph $G^{(i)}$ with resistance $r^{(i)}_e=\os^{(i)}_e/\ox^{(i)}_e$
on each edge $e$.  In a close form, such electrical flow is 
$$f^{(i)}_c=\omX^{(i)}(\omS^{(i)})^{-1}\mA(\mH^{(i)})^{-1} d^{(i)},$$ 
where $\mH^{(i)}$ is the Laplacian of $G^{(i)}$. (Note that $(\mH^{(i)})^{-1}$ does not exist. 
This issue can be easily fixed (e.g., \Cref{sec:degenerate}), so we ignore it for now.) Observe that $f^{(i)}_c$ is exactly the second term of \eqref{eq:over:delta_x new} (also see \eqref{eq:over:delta_c}) with $\mR^{(i)}=\mI$ and $\mH^{(i)}=\omH^{(i)}$. Such $f^{(i)}_c$ can be computed in $\tilde O(m)$ time in every iteration via {\em fast Laplacian solvers} (\Cref{lem:laplacian_solver}).\footnote{We use a $(1+\epsilon)$-approximation Laplacian solver. Its runtime depends logarithmically on $\epsilon^{-1}$, so we can treat it essentially as an exact algorithm.} Since known IPMs require $\Omega(\sqrt{n})$ iterations, this leads to $\tilde O(m\sqrt{n})$ total time at best. This is too slow for our purpose. 
The main contribution of this paper is a combination of new IPM and data structures that reduces the time per iteration to $\tilde O(n)$. 

\paragraph{Spectral sparsifier}
A natural approach to avoid $\tilde O(m)$ time per iteration is to approximate $f^{(i)}_c$ using a spectral approximation of $\mH^{(i)}$, denoted by  $\omH^{(i)}$. 
In particular, consider a new intermediate flow 
\begin{align}
\ddot{x}^{(i+1)}\defeq x^{(i)}+\ox^{(i)}   \og^{(i)}-\bar{f}^{(i)}_c, \mbox{ where } 
\bar{f}^{(i)}_c\defeq \omX^{(i)}(\omS^{(i)})^{-1}\mA(\omH^{(i)})^{-1} d^{(i)},\label{eq:over:correction}
\end{align} 
Note that the definition of $\bar{f}^{(i)}_c$ is exactly the second term of \eqref{eq:over:delta_x new} with $\mR^{(i)}=\mI$, and it differs from $f^{(i)}_c$ only in $\omH^{(i)}$. 
Given $d\in \R^n$, computing $(\omH^{(i)})^{-1}d\in \R^n$ is straightforward: a spectral sparsifier $\omH^{(i)}$ with $(1+\epsilon)$-approximation ratio and $\tilde O(n/\epsilon^2)$ edges can be maintained in $\tilde O(n/\epsilon^2)$ time per iteration
(under the change of resistances), either using the leverage scores \cite{blss20} or the dynamic sparsifier algorithm of \cite{BernsteinBNPSS20}. We then run a fast Laplacian solver on top of such sparsifier to compute $(\omH^{(i)})^{-1}d$. This requires  only $\tilde O(n)$ time per iteration.

\paragraph{Difficulties}
There are at least two difficulties
in implementing the above idea:
\begin{enumerate}[noitemsep]
	\item {\em Infeasibility}: An approximate electrical flow $\bar{f}^{(i)}_c$ might not satisfy the demand $d^{(i)}$, thus does not fix the infeasibility of $x^{(i)}$. 
 	\item {\em Time:} 
	Computing $\bar{f}^{(i)}_c\in \R^m$ explicitly requires $\Omega(m)$ time even just to output the result. 
\end{enumerate}

\paragraph{Bounding infeasibility and random correction}
For the first issue, it turns out that while we cannot keep each $x^{(i)}$ feasible, we can prove that the infeasibility remains small throughout. As a result, we can bound the number of iterations as if every $x^{(i)}$ is feasible (e.g.  $\tilde O(\sqrt{m})$ iterations using $\taulog$). 
To get around the second issue, we apply the correction flow  $\bar{f}^{(i+1)}_c$ only on $\tilde O(n)$ carefully {\em sampled and rescaled} edges\footnote{\label{foot:over:tauLS}If we use $\tauLS$, the number of edges becomes $\tilde O(m/\sqrt{n})$.}; i.e. our new (final) flow is
\begin{align}
x^{(i+1)}\defeq x^{(i)}+\ox^{(i)}   \og^{(i)}-\mR^{(i)}\bar{f}^{(i)}_c, 
\label{eq:over:new x}
\end{align} 
for some random diagonal matrix $\mR^{(i)}\in \R^{m\times m}$ with $\tilde O(n)$ non-zero entries; in other words, 
$x^{(i+1)}_e= x^{(i)}_e+\ox^{(i)}_e   \og^{(i)}_e-\mR^{(i)}_{e,e}(\bar{f}^{(i)}_c)_e$ for every edge $e$. 
Observe that \eqref{eq:over:new x} is equivalent to how we define $x^{(i+1)}$ in our IPM (\eqref{eq:over:update_xs} and \eqref{eq:over:delta_x new}). 
Since $\mR^{(i)}$ has $\tilde O(n)$ non-zero entries\textsuperscript{\ref{foot:over:tauLS}}, we can compute $h^{(i)}=\mR^{(i)}\bar{f}^{(i)}_c$ in $\tilde O(n)$ time.\footnote{Given $d^{(i)}\in \R^n$, we can compute $(\mH^{(i)})^{-1} d^{(i)}\in \R^n$ using  spectral sparsifiers and Laplacian solvers as discussed earlier. We can then compute $(\mR^{(i)}\bar{f}^{(i)}_c)_{uv}=\mR^{(i)}_{uv,uv}(\ox^{(i)}/\os^{(i)}) (h^{(i)}_v-h^{(i)}_u)$ for every edge $uv$ such that $\mR_{uv,uv}\neq 0$.}

Our sampled edges basically form an {\em enhanced} spectral sparsifier, $\mA^\top \mR^{(i)} \mA$.
For each edge $e$, let $p_e^{(i)}$ be a probability that is proportional to the effective resistance of $e$ and $(\bar{f}^{(i)}_c)_e$. With probability $p_e^{(i)}$, we set $\mR^{(i)}_{e,e}=1/p_e^{(i)}$ and zero otherwise. Without $(\bar{f}^{(i)}_c)_e$ influencing the probability, this graph would be a standard spectral sparsifier.
Our enhanced spectral sparsifier can be constructed in $\tilde O(n)$ time using our new data structure based on the dynamic expander decomposition data structure, called {\em heavy hitter} (discussed in \Cref{sec:overview:vector,sec:matrix_vector_product}).
Compared to a standard spectral sparsifier, it provides some new properties (e.g. $\|\mR f_c\|_\infty$ is small in some sense and some moments are bounded) that allow us to bound the number of iterations to be the same as when we do not have $\mR^{(i)}$. In other words, introducing $\mR^{(i)}$ does not create additional issues (though it does change the analysis and make the guarantees probabilistic), and helps speeding up the overall runtime.

\paragraph{Computing $\ox^{(i+1)}$,  $\os^{(i+1)}$ and $\mA^\top x^{(i+1)}-b$} Above, we show how to compute $x^{(i+1)}$ in $\tilde O(n)$ time under an oversimplifying assumption that $\og^{(i)}$ is sparse. In reality, $\og^{(i)}$ may be dense and we cannot afford to compute  $x^{(i+1)}$ explicitly. A more realistic assumption (although still simplified) is that we can guarantee that the number of non-zero entries in $\og^{(i)}-\og^{(i-1)}$ is $\tilde O(\sqrt{m})$.\footnote{The actual situations  are slightly more complicated. If we use  $\taulog$, we can guarantee that we know some $t^{(i)}\in \R$, for all $i$, such that $\sum_i\|\og^{(i)}-t^{(i)}\og^{(i-1)}\|_0=\tilde O(m)$; i.e. we can obtain $\og^{(i)}$ by rescaling $\og^{(i-1)}$ and change the values of amortized $\tilde O(\sqrt{m})$ non-zero entries. We will stick with the simplified version in this subsection.
Note further that if we use $\tauLS$, we can guarantee that entries of each $\og^{(i)}$ can be divided into $\polylog(n)$ buckets where entries in the same bucket are of the same value. For every $i$, we can describe the bucketing of $\og^{(i)}$ by describing $\polylog(n)$ entries in the buckets of $\og^{(i-1)}$ that move to different buckets in the bucketing of $\og^{(i)}$. Additionally, each bucket of $\og^{(i)}$ may take different values than its  $\og^{(i-1)}$ counterpart.} 
In this case we cannot explicitly compute $\ox^{(i)}   \og^{(i)}$, and thus $x^{(i+1)}$. 
Instead, we explicitly maintain  $\ox^{(i+1)}$ such that for each edge $e$, $\ox^{(i+1)}_e$ is within a constant factor of $x^{(i+1)}_e$. 
This means that, for any edge $e$, if $\ell_e(i)$ is the last iteration before iteration $i$ that we set $\ox^{(\ell_e(i))}_e=x^{(\ell_e(i))}_e$, and $|\sum_{t=\ell_e(i)}^i \og^{(t)}_e|=\Omega(1)$, then we have to set $\ox^{(i)}_e=x^{(i)}_e$. 
Using the fact that $\og^{(i)}$ is a unit vector, we can show that we do not have to do this often; i.e. there are $\tilde O(m)$ pairs of $(i,e)$ such that $|\sum_{t=\ell_e(i)}^i \og^{(t)}_e|=\Omega(1)$.
By exploiting the fact that $\og^{(i)}-\og^{(i-1)}$ contains $\tilde O(\sqrt{m})$ non-zero entries, we can efficiently detect entries of $\ox^{(i)}$ that need to be changed from $\ox^{(i-1)}$. 
Also by the same fact, we can maintain $d^{(i)}$, thus $\mR^{(i)}\bar{f}^{(i)}_c$, in $\tilde O(n)$ time per iteration. 
This implies that we can computed $\ox^{(i+1)}$ in $\tilde O(n+\sqrt{m})=\tilde O(n)$ amortized time per iteration.

We are now left with computing $\os^{(i+1)}$ and $\mA^\top x^{(i+1)}-b$. Observe that $\delta_s^{(i)}$ (Eq.~\eqref{eq:over:delta_s}) appear as part of $\delta_x^{(i)}$  in \eqref{eq:over:delta_x new}; so, intuitively, $\os^{(i)}$ can be computed in a similar way to $\ox^{(i)}$. 
Note that although $\mR^{(i)}$ does not appear in \eqref{eq:over:delta_s}, 
we can use our heavy hitter data structure (mentioned earlier and discussed in \Cref{sec:overview:vector,sec:matrix_vector_product})
to also detect edges $e$ 
where $\os^{(i)}_e$ is no longer a good approximation of $s^{(i)}_e$.
That is, when $j$ was the last iteration when we set $\os^{(j)}_e = s^{(i)}_e$
then we can use the heavy hitter data structure to detect 
when $|s^{(i)}_e - \os^{(i)}_e| = | s^{(i)}_e - s^{(j)}_e |$ grows too large,
because the difference $s^{(i)} - s^{(j)}$ can be interpreted as some flow again.
Finally, note that $\mA^\top x^{(i+1)}-b=(\mA^\top x^{(i)}-b)+\mA^\top\omX^{(i)}   \og^{(i)}-\mA^\top\mR^{(i)}\bar{f}^{(i)}_c$. The first term is given to us. The last term can be computed quickly due to the sparsity of $\mR^{(i)}\bar{f}^{(i)}_c$. The middle term can be maintained in $\tilde O(\sqrt{m})$ time by exploiting the fact that there are $\tilde O(\sqrt{m})$ non-zero entries in $\og^{(i)}-\og^{(i-1)}$.

\subsection{Data Structures}\label{sec:over:data structures}\label{sec:over:what to compute}

As noted earlier, our IPMs are analyzed assuming that the constraint matrix $\mA$ of the linear program is non-degenerate (i.e. the matrix $(\mA^\top \mA)^{-1}$ exists).
If $\mA$ is an incidence matrix, then this is not satisfied. We fix this by appending an identity block at the bottom of $\mA$.  For proving and discussing the data structures we will, however, assume that $\mA$ is just an incidence matrix without this appended identity block, as it results in a simpler analysis. %

Ultimately we would like to compute $x^{(\ell)}$ in the final iteration $\ell$ of the IPM. However, we do {\em not} compute $x^{(i)}$ or $s^{(i)}$ in iterations $i<\ell$ because it would take to much time. Instead, we implement efficient data structures to maintain the following information about \eqref{eq:over:delta_x new} and \eqref{eq:over:delta_s} in every iteration.

\begin{enumerate}[noitemsep,label=\roman*]
	\item \label{item:over:primal maintenance} \label{item:over:gradient maintenance} 
	\textbf{Primal and Gradient Maintenance}
	Maintain vectors $\og^{(i)}$, $\mA^\top \omX^{(i)} \og^{(i)}$ and $\ox^{(i)} \in \R^{m}$. 
	
	\item \label{item:over:dual maintenance}\textbf{Dual Vectors Maintenance:}
	Maintain vector $\os^{(i)} \in \R^{m}$.
	
	\item \label{item:over:sampling}\textbf{Row (edge) sampling:}  Maintain $\mR^{(i)}$. 
	
	\item \label{item:over:inverse maintenance}\textbf{Inverse Maintenance:} Maintain ({\em implicitly}) $(\omH^{(i)})^{-1}$. Given  $w\in \R^n$, return $(\omH^{(i)})^{-1}w$. 
	
	\item \label{item:over:leverage scores maintenance}\textbf{Leverage Scores Maintenance ($\bar\tauLS(x^{(i)},s^{(i)})$):} 
	When using the faster $\tilde{O}(\sqrt{n})$-iteration IPM with potential $\tauLS$,
	we must maintain an approximation $\bar{\tauLS}(x^{(i)},s^{(i)})$ of $\tauLS(x^{(i)},s^{(i)}) = \sigma(x^{(i)},s^{(i)}) + n/m$, 
	so we can maintain $\ov^{(i)} \approx \frac{x^{(i)} s^{(i)}}{\mu^{(i)}(\tauLS x^{(i)},s^{(i)})}$  which is needed for $\og^{(i)}$ (in \eqref{eq:over:og}). 
	
	\item \label{item:over:infeasibility maintenance}\textbf{Infeasibility Maintenance:} Maintain the $n$-dimensional vector $(\mA^\top x^{(i)} - b)$.
\end{enumerate}
The above values, except for $\og^{(i)}$ (\ref{item:over:gradient maintenance}) 
and $(\omH^{(i)})^{-1}$ (\ref{item:over:inverse maintenance}), 
are computed {\em explicitly}, meaning that their values are maintained in the working memory in every iteration.
The vector $\og^{(i)}$ is maintained in an implicit form,
and for $(\omH^{(i)})^{-1}$, we maintain a data structure that, 
given $w\in \R^n$, can quickly return $(\omH^{(i)})^{-1}w$. 

\paragraph{Implementing the IPM via \ref{item:over:primal maintenance}-\ref{item:over:infeasibility maintenance}} 
Below, we repeat  \eqref{eq:over:delta_x new}, \eqref{eq:over:delta_s} and \eqref{eq:over:delta_c} to summarize how we use our data structures to maintain the information in these equations.
\begin{align}
\delta_x^{(i)} = 
  \underbrace{\omX^{(i)} \og^{(i)}}_{\text{(\ref{item:over:primal maintenance})}} 
- \underbrace{\mR^{(i)}}_{\text{(\ref{item:over:sampling})}}
  \underbrace{\omX^{(i)}}_{\text{(\ref{item:over:primal maintenance})}}
  \underbrace{(\omS^{(i)})^{-1}}_{\text{(\ref{item:over:dual maintenance})}}
  \mA  
  & 
  \underbrace{(\omH^{(i)})^{-1}}_{\text{(\ref{item:over:inverse maintenance})}}
  \underbrace{\mA^\top \omX^{(i)} \og^{(i)}}_{\text{(\ref{item:over:gradient maintenance})}} 
+ \underbrace{\mR^{(i)}\delta_c^{(i)}}_{{\text{below}}}, 
\mbox{ and} \tag{$\ref*{eq:over:delta_x new}'$}\label{eq:over:delta_x new prime}\\
\delta_s^{(i)} = 
  \mA
  &
  \underbrace{(\omH^{(i)})^{-1}}_{\text{(\ref{item:over:inverse maintenance})}}
  \underbrace{\mA^\top \omX^{(i)} \og^{(i)}}_{\text{(\ref{item:over:gradient maintenance})}}. 
\tag{$\ref*{eq:over:delta_s}'$}\label{eq:over:delta_s prime}\\
\mR^{(i)}\delta_c^{(i)} = 
  \eta_c 
  \underbrace{\mR^{(i)}}_{\text{(\ref{item:over:sampling})}}
  \underbrace{\omX^{(i)}}_{\text{(\ref{item:over:primal maintenance})}}
  \underbrace{(\omS^{(i)})^{-1}}_{\text{(\ref{item:over:dual maintenance})}}
  \mA 
  & 
  \underbrace{(\omH^{(i)})^{-1}}_{\text{(\ref{item:over:inverse maintenance})}} 
  \underbrace{(\mA^\top x^{(i)} - b)}_{\text{(\ref{item:over:infeasibility maintenance})}}.
\tag{$\ref*{eq:over:delta_c}'$}%
\end{align}
Here we see, that all information required to compute $\delta_x$ and $\delta_s$ is provided by the data structures
\ref{item:over:primal maintenance}-\ref{item:over:inverse maintenance}.

\paragraph{Constructing the data structures} 
Next, we explain how to implement these data structures efficiently.
Our main contribution with respect to the data structures are for 
\ref{item:over:primal maintenance}, 
\ref{item:over:dual maintenance}, 
and \ref{item:over:sampling} 
(primal, dual and gradient maintenance and row sampling). 
These data structures are outlined in \Cref{sec:overview:vector}.

When $\ma$ is an incidence matrix, maintaining the inverse implicitly (\ref{item:over:inverse maintenance}) 
can be done by maintaining a sparse spectral approximation $\omH^{(i)}$ 
of the Laplacian $\mA^\top \omX^{(i)} (\omS^{(i)})^{-1} \mA$ 
and then running an existing approximate Laplacian system solver 
\cite{Vaidya90, SpielmanT03, SpielmanT04, KoutisMP10, KoutisMP11, KelnerOSZ13,lee2013efficient,CohenKMPPRX14, KyngLPSS16, KyngS16}. 
The spectral approximation $\omH^{(i)}$ can be maintained using existing tools 
such as the dynamic spectral sparsifer data structure from \cite{BernsteinBNPSS20} 
or sampling from the leverage scores upper bounds.

Maintaining the leverage scores (\ref{item:over:leverage scores maintenance}) 
is done via a data structure from \cite{blss20}
which reduces leverage scores maintenance to dual slack maintenance 
(\ref{item:over:dual maintenance}) with some overhead. 
The resulting complexity for maintaining the leverage scores,
when using our implementation of \ref{item:over:dual maintenance}
is analyzed in \Cref{sec:leverage_score}.

To maintain $\mA^\top x^{(i)} - b$ (\ref{item:over:infeasibility maintenance}), 
observe that $\mA^\top x^{(i)} - b = \mA^\top x^{(i-1)} - b + \mA^\top \delta_x^{(i-1)}$.
Here $\mA^\top x^{(i-1)} - b$ is known from the previous iteration and 
\begin{align*}
  \mA^\top \delta_x^{(i-1)}
=
  \mA^\top \omX^{(i-1)} \og^{(i-1)} 
- \mA^\top \mR^{(i-1)} [\omX^{(i-1)}(\omS^{(i-1)})^{-1}\mA(\omH^{(i-1)})^{-1}\mA^\top \omX^{(i-1)} \og^{(i-1)} + \delta_c^{(i-1)}]
\end{align*} 
can be computed efficiently because of the sparsity of $\mR^{(i-1)}$ 
and the fact that we know $\mA^\top \omX^{(i-1)}\og^{(i-1)}$ 
from gradient maintenance (\ref{item:over:gradient maintenance}).

\paragraph{Time complexities}

The total time to maintain the data structures \ref{item:over:primal maintenance},
\ref{item:over:dual maintenance}, \ref{item:over:inverse maintenance}, 
and \ref{item:over:infeasibility maintenance} over $\ell$ iterations is $\tilde O(m+n\ell)$
when using the slower $\sqrt{m}$-iteration IPM,
and $\tilde{O}(m+(n+m/\sqrt{n}) \ell)$ when using the faster $\sqrt{n}$-iteration IPM.
The exception is for the leverage scores $\bar\tauLS(x^{(i)},s^{(i)})$ (\ref{item:over:leverage scores maintenance})
which is only needed for the $\sqrt{n}$-iteration IPM, 
where we need $\tilde O(m+n\ell+\ell^2m/n)$ time. 
So, in total these data structures take $\tilde O(n\sqrt{m})$ time when we use $\taulog$ 
and $\tilde O(m+n\sqrt{n})$ time using $\tauLS$.

\subsection{Primal, Dual, and Gradient Maintenance and Sampling}
\label{sec:overview:vector}

We first describe how to maintain the approximation $\os^{(i)} \approx s^{(i)}$, 
i.e.~the data structure of \ref{item:over:dual maintenance}.
Via a small modification we then obtain a data structure for \ref{item:over:sampling}. 
Finally, we describe a data structure for \ref{item:over:gradient maintenance} 
which allows us to maintain the gradient $\og$ and the primal solution $\ox$.

\paragraph{Approximation of $s$ (See \Cref{sec:matrix_vector_product,sec:vector_maintenance})}

In order to maintain an approximation $\os^{(i)} \approx s^{(i)}$ (i.e. a data structure for \ref{item:over:dual maintenance}),
we design a data structures for the following two problems:
\begin{enumerate}[start=1,label={(D\arabic*):},noitemsep]
	\item Maintain the exact vector $s^{(i)} \in \R^m$ implicitly, 
	such that any entry can be queried in $O(1)$ time.
	\item Detect all indices $j \in [m]$ for which the current $\os^{(i)}_j$ 
	is no longer a valid approximation of $s^{(i)}_j$.
\end{enumerate}
Task (D1) can be solved easily and we explain further below how to do it.
Solving task (D2) efficiently is one of our main contributions and proven in \Cref{sec:matrix_vector_product},
though we also given an outline in this section further below.
Once we solve both tasks (D1) and (D2), we can combine these data structures 
to maintain a valid approximation of $s^{(i)}$ as follows (details in \Cref{sec:vector_maintenance}): 
Whenever some entry $s^{(i)}_j$ changed a lot 
so that $\os^{(i)}_j$ is no longer a valid approximation,
(which is detected by (D2))
then we simply query the exact value via (D1) 
and update $\os^{(i)}_j \leftarrow s^{(i)}_j$.
To construct these data structure, observe that by \eqref{eq:over:delta_s prime}, we have
\begin{align*}
s^{(i+1)} = s^{(i)} + 
  \mA 
  \underbrace{(\omH^{(i)})^{-1}}_{\text{(\ref{item:over:inverse maintenance})}}
  \underbrace{\mA^\top \omX^{(i)} \og^{(i)}}_{\text{(\ref{item:over:gradient maintenance})}}
  = s^{(i)} + \mA h^{(i)}. 
\tag{$\ref*{eq:over:delta_s}'$}
\end{align*}
Here the vector $h^{(i)} \in \R^n$ can be computed efficiently, 
thanks to \ref{item:over:inverse maintenance} and \ref{item:over:gradient maintenance}.
So we are left with the problem of maintaining 
\begin{align*}
\os^{(i+1)} \approx s^{(i+1)} = s^\init + \mA \sum_{k=1}^{i} h^{(k)}.
\end{align*}
Here we can maintain $\sum_{k=1}^{i} h^{(k)}$ in $O(n)$ time per iteration 
by simply adding the new $h^{(i)}$ to the sum in each iteration.
For any $j$ one can then compute $s^{(i+1)}_j$ in $O(1)$ time so we have a data structure that solves (D1).

To get some intuition for (D2), assume we have some $\os^{(i)}$ with $\os^{(i)} \approx s^{(i)}$.
Now if an entry $(\delta_s^{(i)})_j$ is small enough,
then we have $\os^{(i)}_j \approx s^{(i)}_j + (\delta_s^{(i)})_j = s^{(i+1)}_j$.
This motivates why we want to detect a set $J \subset [m]$ containing all $j$ where $|(\delta_s^{(i)})_j|$ is large,
and then update $\os^{(i)}$ to $\os^{(i+1)}$ by setting $\os^{(i+1)}_j \leftarrow s^{(i+1)}_j$ for $j \in J$.
So for simplicity we start with the simple case where we only need to detect entries of $s^{(i+1)}$ 
that changed by a lot within a {\em single} iteration of the IPM. 
That is, we want to find every index $j$ 
such that $|(\delta_s^{(i)})_j| = |s^{(i+1)}_j-s^{(i)}_j|> \epsilon s^{(i)}_j$ for some $\epsilon \in (0,1)$; 
equivalently, 
\begin{equation}
| (\mA h^{(i)} )_j |  > \epsilon s^{(i)}_j.\label{eq:over:big entries}
\end{equation}
We assume that in each iteration we are given the vector $h^{(i)}$. 
Since $\mA$ is an incidence matrix, 
index $j$ corresponds to some edge $(u,v)$ 
and \eqref{eq:over:big entries} is equivalent to 
\begin{align}
|h^{(i)}_v - h^{(i)}_u| > \epsilon s^{(i)}_{(u,v)}\label{eq:over:big edges}
\end{align}
where $s^{(i)}_{(u,v)} := s^{(i)}_j$.
Assume by induction $\os^{(i-1)}_j \approx  s^{(i-1)}_j$ for all $j$, 
so then finding all $j$'s where $s_j$ is changed by a lot in iteration $(i+1)$ reduces to the problem of finding all edges $(u,v)$ such that 
$|(h^{(i)}_v - h^{(i)}_u) / \os^{(i)}_{(u,v)}| > \epsilon' $ 
for some $\epsilon'=\Theta(\epsilon)$. 

To get the intuition of why we can efficiently find all such edges, 
start with the simplified case when the edges have uniform weights (i.e. $\os^{(i)}=\vec{1}$). 
Since we only care about the differences between entries in the vector $h$, 
we can shift $h$ by any constant vector $c\cdot\onevec$ in $O(n)$ time to make $h\bot d$, 
where $d$ is the vector of degrees of the nodes in the graph. 
For any edge $j=(u,v)$ to have $|h_u-h_v|\geq \epsilon'$, 
at least one of $|h_u|$ and $|h_v|$ has to be at least $\epsilon'/2$. 
Thus, it suffices to check the adjacent edges of a node $u$ only when $|h_u|$ is large, 
or equivalently $h^2_u\epsilon'^{-2}$ is at least $1/4$. 
Since checking the adjacent edges of any $u$ takes time $\deg(u)$, 
the time over all such nodes is bounded by $O(\sum_u \deg(u)h_u^2\epsilon^{-2})$, 
which is $O(h^\top \mD h \epsilon^{-2})$ where $\mD$ is the diagonal degree matrix. 
If the graph $G$ has conductance at least $\phi$ (i.e., $G$ is a $\phi$-expander), 
we can exploit the classic spectral graph theory result of Cheeger's inequality 
to bound the running time by $O(h^\top \mL h \phi_G^{-2}\epsilon^{-2})$, 
where $\mL=\mA^\top \mA$ is the graph Laplacian (see \Cref{lem:cheeger_based}). 
Here $h^\top \mL h = \|\mA h\|_2^2$ will be small due to properties of our IPM, 
and thus this already gives us an efficient implementation 
if our graph has large conductance $\phi = 1/\polylog(n)$.

To extend the above approach to the real setting where $\os$ is non-uniform and the graph is not an expander, 
we only need to partition the edges of $G$ so that these two properties hold in each induced subgraph. 
For the non-uniform $\os$ part, we bucket the edges by their weights in $\os^{(i)}$ 
into edge sets $E^{(i)}_k=\{(u,v) \mid \os^{(i)}(u,v)\in [2^{k}, 2^{k+1})\}$, 
so edges in each bucket have roughly uniform $\os$ weights. 
To get the large conductance condition, we further partition each $E^{(i)}_k$ into expander subgraphs, 
i.e., $E^{(i)}_{k,1}, E^{(i)}_{k,2}, \ldots$, each inducing an $(1/\polylog(n))$-expander. 
Note that edges move between buckets over the iterations as $\os^{(i)}$ changes, 
so we need to maintain the expander decompositions in a dynamic setting. 
For this we employ the algorithm of \cite{BernsteinBNPSS20} 
(building on tools developed for dynamic minimum spanning tree 
\cite{sw19,cglnps20,nsw17,ns17,w17}, especially \cite{sw19}).
Their dynamic algorithm can maintain our expander decomposition efficiently (in $\polylog(n)$ time per weight update). 
With the dynamic expander decomposition, 
we can essentially implement the method discussed above in each expander as follows. 
For each expander subgraph, we constrain the vector $h$ to the nodes of the expander. 
Then, we translate $h$ by the all-one vector so that $h$ is orthogonal to the degree vector of the nodes in the expander. 
To perform the translation on all expanders, 
we need the total size (in terms of nodes) of the induced expanders to be small for the computation to be efficient. 
We indeed get this property as the dynamic expander decomposition algorithm in \cite{BernsteinBNPSS20} guarantees $\sum_q |V(E^{(i)}_{k,q})| =O(n \log n)$. 
In total this will bound the running time in the $i^{th}$ iteration of the IPM to be
\[
\tilde O( (\epsilon')^{-2} \| (\omS^{(i)})^{-1} \mA h^{(i)}\|_2^2 + n \log W),
\]
where $W$ is a bound on the ratio of largest to smallest entry in $\os$.
By properties of the IPM, which bound the above norm, 
the total running time of our data structure over all $\tilde O(\sqrt{n})$ iterations of the IPM 
becomes $\tilde O(m + n^{1.5})$,
or $\tilde O(\sqrt{m}n)$ when using the slower $\tilde{O}(\sqrt{m})$ iteration IPM.

In the above we only consider detecting entries of $s^{(i)}$ undergoing large changes in a single iteration. 
In order to maintain $\os^{(i)}$, we also need to detect entries of $s^{(i)}$ 
that change slowly every iteration, but accumulate enough change across multiple iterations so our approximation is no longer accurate enough. 
This can be handled via a reduction similar to the one performed in \cite{blss20}, 
where we employ lazy update and batched iteration tracking. 
In particular, for every $k = 0, 1,\ldots,\lceil(\log n)/2\rceil$, 
we use a copy of (D2) to check every $2^k$ iterations of the IPM, 
if some entry changes large enough over the past $2^k$ iterations. 
This reduction only incurs a $\polylog(n)$ factor overhead in running time comparing to the method that only detects large single iteration changes, so the total running time is the same up to $\polylog$ factors.

\paragraph{Row Sampling (Details in \Cref{sec:matrix_vector_product})}

Another task is to solve data structure problem \ref{item:over:sampling} 
which is about constructing the random matrix $\mR^{(i)}$.
The desired distribution of $\mR^{(i)}$ is as follows. 
For some large enough constant $C>0$ let $q \in \R^m$ with
\begin{align*}
q_j \ge \sqrt{m} ((\delta_r^{(i)})^2_j / \|\delta_r\|_2^2 + 1/m) + C \cdot \sigma((\omX^{(i)})^{1/2} (\omS^{(i)})^{-1/2} \mA)_j \polylog n \\
\text{where}~\delta_r^{(i)} = \omX^{(i)} (\omS^{(i)})^{-1} \mA (\omH^{(i)})^{-1} \mA \omX^{(i)} \og^{(i)} + \delta_c^{(i)}
\end{align*}
then we have $\mR^{(i)}_{j,j} = (\min(q_i, 1))^{-1}$ with probability $\min(q_i, 1)$ and $0$ otherwise.

This sampling task can be reduced to the two tasks of 
(i) sampling according to $\sqrt{m} ((\delta_r)^2_j / \|\delta_r\|_2^2$ 
and (ii) sampling according to $C \cdot \sigma((\omX^{(i)})^{1/2} (\omS^{(i)})^{-1/2} \mA) \polylog n$.
The latter can be implemented easily as we have approximate leverage scores 
via data structure \ref{item:over:leverage scores maintenance}.
The former is implemented in a similar way as data structure (D2) of the previous paragraph.
Instead of finding large entries of some vector $(\omS^{(i)})^{-1}\mA h^{(i)}$ 
as in the previous paragraph (i.e.~\eqref{eq:over:big entries}),
we now want to sample the entries proportional to $\omX^{(i)} (\omS^{(i)})^{-1} \mA h'^{(i)}$
where $h'^{(i)} = (\omH^{(i)})^{-1} (\mA \omX^{(i)} \og^{(i)} + \mA^\top x^{(i)} - b)$.

This sampling can be constructed via a simple modification of the previous (D2) data structure.
Where (D2) tries to find edges $(u,v)$ with large $|((\omS^{(i)})^{-1} \mA h^{(i)})_{(u,v)}|$
by looking for nodes $v$ with large $|h^{(i)}_v|$,
we now similarly sample edges $(u,v)$ proportional to $(\omX^{(i)} (\omS^{(i)})^{-1} \mA h'^{(i)})^2_{(u,v)}$
by sampling for each node $v$ incident edges proportional to $(h^{(i)}_v)^2$.

\paragraph{Gradient Maintenance and Approximation of $x$ (\Cref{sec:gradient_maintenance}) }

For the primal solution $x$, again we aim to maintain a good enough approximation $\ox$ through our IPM algorithm. 
Consider the update to $x^{(i)}$ in \eqref{eq:over:delta_x new prime}, 
\begin{align*}
x^{(i+1)} = x^{(i)} + \omX^{(i)} \og^{(i)}
- \mR^{(i)} \omX^{(i)} (\omS^{(i)})^{-1} \mA (\omH^{(i)})^{-1} \mA^\top \omX^{(i)} \og^{(i)}
+ \mR^{(i)}\delta_c^{(i)} 
\end{align*}
the last two terms will be sparse due to the sparse diagonal sampling matrix $\mR^{(i)}$, 
so we can afford to compute that part of the updates explicitly. 
For the part of $\omX^{(i)}\og^{(i)}$
where 
\begin{align*}
\og^{(i)} = \underset{ z \in \R^m: \|z\|\le 1 }{ \argmax } \langle \nabla\Phi(\ov^{(i)}), z \rangle
\end{align*}
(see \eqref{eq:over:og})
we will show that $\og$ admits a low dimensional representation. %
Here by low dimensionality of $\og \in \R^m$ we mean that the $m$ indices in the vector can be put into $\tilde{O}(1)$ buckets, 
where indices $j,j'$ in the same bucket share the common value $\og_j = \og_{j'}$.
This allows us to represent the values of $\og$ as a $\tilde{O}(1)$ dimensional vector
so we can efficiently represent and do computations with $\og$ in a very compact way. 

For simplicity consider the case where we use $\| \cdot \|_2$ as the norm for the maximization problem that defines $\og^{(i)}$
(this norm is used by the $\sqrt{m}$-iteration IPM, while the $\sqrt{n}$-iteration IPM uses a slightly more complicated norm).
In that case $\og^{(i)} = \nabla \Phi(\ov^{(i)})/\|\nabla \Phi(\ov^{(i)})\|_2$
and the way we construct the $\tilde{O}(1)$ dimensional approximation is fairly straightforward. 
We essentially discretize $\ov^{(i)}$ by rounding each entry down to the nearest multiple of some appropriate granularity to make it low dimensional. 
Once $\ov^{(i)}$ is made to be $\tilde{O}(1)$ dimensional, 
it is simple to see from the definition of the potential function $\Phi(\cdot)$ 
that $\nabla\Phi(\ov^{(i)})$ will also be in $\tilde{O}(1)$ dimension. 
For the faster $\sqrt{n}$ iteration IPM 
where a different norm $\|\cdot\|$ is used,
we can show that the low dimensionality of $\nabla\Phi(\ov^{(i)})$ 
also translates to the maximizer being in low dimensional (see \Cref{lem:projected_flat}).

Once we compute the low dimensional updates, 
we still need to track accumulated changes of 
$\omX^{(i)} \og^{(i)}$ over multiple iterations. 
Because of properties of the IPM, we have that on average
any index $j$ switches its bucket (of the low dimensional representation of $\og$) only some $\polylog(n)$ times.
Likewise, the value of any entry $\omX^{(i)}_j$ changes only some $\polylog(n)$ number of times.
Thus the rate in which $\sum_{k=1}^i \omX^{(k)}_j \og^{(k)}_j$ changes, stays the same for many iterations.
This allows us to (A) predict when $\ox^{(i)}_j$ is no longer a valid approximation of $x^{(i)}_j$,
and (B) the low dimensionality allows us to easily compute any $x^{(i)}_j$.
In the same way as $\os^{(i)}$ was maintained via (D1) and (D2),
we can now combine (A) and (B) to maintain $\ox^{(i)}$.

\subsection{Putting it all Together}

In summary, we provide the IPMs in \Cref{sec:ipm}: 
the first one in \Cref{sec:log_barrier_method} requires $\tilde{O}(\sqrt{m})$ iterations,
while the second IPM in \Cref{sec:ls_barrier_method} requires only $\tilde{O}(\sqrt{n})$ iterations.
These IPMs require various approximations which we maintain efficiently via data structures.
The approximation for the slack of the dual solution is maintained via the data structure presented in \Cref{sec:vector_maintenance}
and the approximation of the gradient and the primal solution is handles in \Cref{sec:gradient_maintenance}.
The $\tilde{O}(\sqrt{n})$ iteration IPM also requires approximate leverage scores
which are maintained via the data structure of \Cref{sec:leverage_score}.

As IPMs only improve an initial (non-optimal) solution, we also have to discuss how to construct such an initial solution.
This is done in \Cref{sec:initial_point:initial_point}.
Further, the IPMs only yield a fractional almost optimal and almost feasible solution,
which is why in \Cref{sec:initial_point:feasible} we prove how to convert the solution to a truly feasible solution.
Rounding this fractional almost optimal feasible solution to a truly optimal integral solution is done in \Cref{sec:initial_point:integral}.

In \Cref{sec:matching} we combine all these tools and data structures to obtain algorithms
for minimum weight perfect bipartite matching,
where \Cref{sec:matching:slow} obtains an $\tilde{O}(n\sqrt{m})$-time algorithm via the $\tilde{O}(\sqrt{m})$ iteration IPM
and \Cref{sec:matching:fast} obtains an $\tilde{O}(m + n^{1.5})$-time algorithm via the faster $\tilde{O}(\sqrt{n})$ iteration IPM.
At last, we show in \Cref{sec:app} how this result can be extended to problems 
such as uncapacitated minimum cost flows (i.e.~transshipment) and shortest paths.

\newpage

\section{IPM}
\label{sec:ipm}

Throughout this section we let $\ma\in\R^{m\times n}$ denote a non-degenerate
matrix, let $b\in\R^{n}$ and $c\in\R^{m}$, and consider the problem
of solving the following linear program $(P)$ and its dual $(D)$
\begin{equation}
(P) \defeq \min_{x\in\R_{\geq0}^{m}:\ma^{\top}x=b}c^{\top}x
\enspace \text{ and } \enspace
(D) \defeq \max_{y\in\R^{n}:\ma y\leq c}b^{\top}y\,.\label{eq:primal_dual}
\end{equation}
First, in \Cref{sec:lp_meta} we provide our general IPM framework for solving \eqref{eq:primal_dual}. Then, in \Cref{sec:log_barrier_method} we show how to instantiate this framework with the logarithmic barrier function and thereby enable $\tilde{O}(n \sqrt{m})$ time minimum-cost perfect matching algorithms. Next, in \Cref{sec:ls_barrier_method} we show how to instantiate this framework with the Lee-Sidford barrier \cite{ls14} as used in \cite{blss20} and obtain an improved $\tilde{O}(m + n^{1.5})$-time minimum-cost perfect matching algorithm. The analysis in both \Cref{sec:log_barrier_method} and \Cref{sec:ls_barrier_method} hinge on technical properties of a potential function deferred to \Cref{sec:potential_function}. Finally, additional properties of our IPM needed to obtain our final runtimes are given in \Cref{sec:stability_fixing}.

\subsection{Linear Programming Meta-algorithm}
\label{sec:lp_meta}

Here we present our core routine for solving \eqref{eq:primal_dual}. This procedure, Algorithm~\ref{alg:meta},
is a general meta algorithm for progressing along the central path
in IPMs. This general routine maintains primal-dual, approximately
feasible points induced by a weight function $\tau(x,s)$ defined
as follows.
\begin{definition}[$\epsilon$-Centered Point]
\label{def:central_point} For fixed weight function $\tau: \R_{>0}^{m} \times \R_{>0}^{m} \rightarrow \R_{>0}^{m}$, and $\gamma \in [0,1)$ we call ($x,s,\mu)\in\R_{>0}^{m}\times\R_{>0}^{m}\times\R_{>0}$,
\emph{$\epsilon$-centered }if $\epsilon\in[0,1/80)$ and the following hold
\begin{itemize}
	\item (Approximate Centrality) $w\approx_{\epsilon}\tau(x,s)$ where $w\defeq w(x,s,\mu)\in\R^{m}$
	and $[w(x,s,\mu)]_{i}\defeq\frac{x_{i}s_{i}}{\mu}$.
	\item (Dual Feasibility) There exists $y\in\R^{n}$ such that $\ma y+s=c$.
	\item (Approximate Primal Feasibility) $\|\ma^{\top}x-b\|_{(\ma^{\top}\mx\ms^{-1}\ma)^{-1}}\leq\epsilon\gamma\sqrt{\mu}$.
\end{itemize}
\end{definition}

Under mild assumptions on the weight function $\tau$, the set of
$0$-centered $(x,s,\mu)$ form a continuous curve, known as the \emph{central
path}, from a center of the polytope (at $\mu=\infty$) to the solution
to the linear program (at $\mu=0$). This path changes depending on
the weight function and in Section~\ref{sec:log_barrier_method}
and Section~\ref{sec:ls_barrier_method} we analyze two different
paths, the one induced by the standard logarithmic barrier and the
one induced by the LS-barrier \cite{ls19} respectively. The former
is simpler and yields $\otilde(n\sqrt{m})$ time matching algorithms
whereas the second is slightly more complex and yields a $\otilde(m+n^{1.5})$
time matching algorithm.

Here, we provide a unifying meta algorithm for leveraging both of
these paths. In particular, we provide Algorithm~\ref{alg:meta}
which given centered points for one value of $\mu$, computes centered
points for a target value of $\mu$. Since we provide different algorithms
depending on the choice of weight function, we present our algorithm
in a general form depending on how a certain $\textsc{ShortStep}$
procedure is implemented. Algorithm~\ref{alg:meta} specifies how
to use $\textsc{ShortStep}$ and then we implement this procedure
in Section~\ref{sec:log_barrier_method} for the logarithmic barrier
and Section~\ref{sec:ls_barrier_method} for the LS-barrier.

To analyze and understand this method we measure the quality of $(x,s,\mu)\in\R_{>0}^{m}\times\R_{>0}^{m} \times \R_{>0}$ by a potential function we call the \emph{centrality potential }.
\begin{definition}[Centrality Potential]
\label{def:central_potential} For all $(x,s,\mu)\in\R_{>0}^{m}\times\R_{>0}^{m}\times\R_{>0}$
we define the \emph{centrality potential }as follows
\begin{align*}
\phicent(x,s,\mu) \defeq \Phi(w(x,s,\mu))\defeq\sum_{i\in[n]} \phi\left(\frac{[w(x,s,\mu)]_{i}}{[\tau(x,s)]_{i}}\right)
\end{align*}
where for all $v\in\R^{m}$ we let $\phi(v)_{i}\defeq \phi(v_i) \defeq \exp(\lambda(v_{i}-1))+\exp( - \lambda( v_i - 1 ) )$
for all $i\in[n]$ and $\lambda\geq 1$ is a parameter we chose later.
Further, we let $\phi'(w),\phi''(w) \in \R^{n}$ be the vectors with
$[\phi'(w)]_{i} \defeq \phi'(w_{i})$ and $[\phi''(w)]_{i} \defeq \phi''(w_{i})$
for all $i\in[n]$.
\end{definition}

This centrality potential is the same as the one used in several recent
IPM advances, it was key to the IPM advances in \cite{ls19}, was
used extensively in \cite{cls19,lsz19,b20,blss20}, and most recently was
used in the IPM of \cite{JiangSWZ20}.
Correspondingly, the analysis of
this meta-algorithm is primarily a clean abstraction for these
papers and properties of the potential function are deferred to \Cref{sec:potential_function}. The novelty in our IPM is the attainment of efficient implementations
of $\textsc{ShortStep}$ in Section~\ref{sec:log_barrier_method}
and Section~\ref{sec:ls_barrier_method}. 

In the remainder of this section we define a valid $\textsc{ShortStep}$
procedure (Definition~\ref{lem:path_following_master}), provide
$\textsc{PathFollowing}$ (Algorithm~\ref{alg:meta}) which uses
it, and analyze this algorithm (Lemma~\ref{lem:path_following_master}).

\newcommand{\PathFollowing}{\textsc{PathFollowing}}

\begin{algorithm2e}[H]

\caption{Path Following Meta Algorithm}\label{alg:meta}

\SetKwProg{Globals}{global variables}{}{}

\SetKwProg{Proc}{procedure}{}{}

\SetKwProg{VirtualProc}{virtual procedure}{}{}

\Proc{$\PathFollowing(x^{\init}\in\R_{>0}^{m},s^{\init}\in\R_{>0}^{m},\mu^{\init}>0,\mu^{\target}>0)$}{

\LineComment{Assume $\textsc{ShortStep}$ is a $(\epsilon,r,\lambda,\gamma,\tau)$-short
step procedure}

\textbf{\LineComment{Assume $x^{\init}s^{\init}\approx_{\epsilon/2}\mu^{\init}\cdot\tau(x^{\init},s^{\init})$
}}

\textbf{\LineComment{Assume $\|\ma x^{\init}-b\|_{(\ma^{\top}\mx^{\init}\ms^{\init-1}\ma)^{-1}}\le\epsilon\gamma\sqrt{\mu^{\init}}$
}}

$x\leftarrow x^{\init},s\leftarrow s^{\init},\mu\leftarrow\mu^{\init}$

\While{$\mu\neq\mu^{\target}$}{

$x^{(0)}\leftarrow x,s^{(0)}\leftarrow s,\mu^{(0)}\leftarrow\mu$

\For{$i=1,2,\cdots,\frac{\epsilon}{r}$}{ \label{line:for_loop_start}

  $\mu^{\text{(new)}}\leftarrow\text{median}(\mu^{\target},(1-r)\mu,(1+r)\mu)$

  $(x,s)\leftarrow\textsc{ShortStep}(x,s,\mu,\mu^{(\text{new})})$

  $\mu \leftarrow \mu^{\text{(new)}}$

  Compute $p_{1}\approx_{1/3}\phicent(x,s,\mu),p_{2}\approx_{1/3}\norm{\ma^{\top}x-b}_{(\ma^{\top}\mx\ms^{-1}\ma)^{-1}}$. 

\lIf{$p_{1}>e^{-1/3}\exp(\frac{3\lambda\epsilon}{4})$ or $p_{2}>\epsilon\gamma e^{-1/3}\sqrt{\mu}$}{\textbf{break}\label{line:meta:fail}}

}

\If{$p_{1}>e^{-1/3}\exp(\frac{\lambda\epsilon}{4})$ or $p_{2}>\epsilon\gamma e^{-1/3}\sqrt{\mu}$}{

$x\leftarrow x^{(0)},s\leftarrow s^{(0)},\mu\leftarrow\mu^{(0)}$\label{line:failed_meta}

}

}
\Return$(x,s)$

} \label{line:for_loop_end}

\end{algorithm2e}
\begin{definition}[Short Step Procedure]
\label{def:short_step} We call a $\textsc{ShortStep}(x,s,\mu,\mu^{\new})$
a \emph{$(\epsilon,r,\lambda,\gamma,\tau)$-short step procedure}
if $\epsilon \in (0,1/80)$, $r \in (0,1/2)$, $\lambda\geq 12 \epsilon^{-1} \log(16m/r) $,
$\tau:\R_{>0}^{m}\times\R_{>0}^{m}\rightarrow\R_{>0}^{m}$, and $\gamma\in[0,1)$
and given input $x,s\in\R_{>0}^{m}$ and $\mu,\mu^{\new}\in\R_{>0}$
such that $(x,s,\mu)$ is $\epsilon$-centered and $|\mu^{\new}-\mu|\leq r\cdot\mu$
the procedure outputs random $x^{\new},s^{\new}\in\R_{>0}^{m}$ such
that
\begin{enumerate}
\item $\ma y^{\new}+s^{\new}=c$ for some $y^{\new}\in\R^{n}$,
\item $\E
[ \phicent( x^{\new},s^{\new},\mu^{\new} ) ] \leq(1-\lambda r) \phicent(x,s,\mu)+\exp( \lambda \epsilon /6 )$
(for $\Phi$ defined by $\tau$ and $\lambda)$
\item $\P\left[\norm{\ma^{\top}x^{\new}-b}_{(\ma^{\top}\mx^{\new}(\ms^{\new})^{-1}\ma)^{-1}}\leq(\epsilon/2)\gamma\sqrt{\mu^{\new}}\right]\geq 1 - r.$
\end{enumerate}
\end{definition}

\begin{lemma}
\label{lem:path_following_master} Let $x^{\textrm{\ensuremath{\mathrm{(init)}}}},s^{\textrm{\ensuremath{\mathrm{(init)}}}}\in\R_{>0}^{m}$
and $\mu^{\textrm{\ensuremath{\mathrm{(init)}}}},\mu^{\textrm{\ensuremath{\mathrm{(end)}}}}>0$
satisfy $\ma y^{\init}+s^{\init}=c$ for some $y^{\init}\in\R^{n}$
and satisfy 
\begin{align*}
x^{\init}s^{\init}\approx_{\epsilon/2}\mu^{\init}\cdot\tau(x^{\init},s^{\init})\enspace\text{ and }\enspace\|\ma^{\top} x^{\init}-b\|_{(\ma^{\top}\mx^{\init}\ms^{\init-1}\ma)^{-1}}\le\epsilon\gamma\sqrt{\mu^{\init}}\,.
\end{align*}
 $\textsc{PathFollowing}(x^{\init},s^{\init},\mu^{\init},\mu^{\target})$
(Algorithm~\ref{alg:meta}) outputs $(x^{\target},s^{\target})$
such that 
\begin{align*}
x^{\target}s^{\target}\approx_{\epsilon/3}\mu^{\target}\cdot\tau(x^{\target},s^{\target})\enspace\text{ and }\enspace\|\ma^{\top} x^{\target}-b\|_{(\ma^{\top}\mx^{\target}\ms^{\target-1}\ma)^{-1}}\le\epsilon\gamma\sqrt{\mu^{\target}}
\end{align*}
in expected time $O(\frac{1}{r}\mathcal{T})\cdot|\log(\mu^{\textrm{\ensuremath{\mathrm{(end)}}}}/\mu^{\textrm{\ensuremath{\mathrm{(init)}}}})|$
where $O(\mathcal{T})$ expected time to compute $p_{1}$, $p_{2}$
such that $p_{1}\approx_{1/3}\Phi(x,s,\mu)$ and $p_{2}\approx_{1/3}\norm{\ma^{\top}x-b}_{(\ma^{\top}\mx\ms^{-1}\ma)^{-1}}$
and to invoke $\textsc{ShortStep}$ once on average. Further, the $(x,s,\mu)$ at each \Cref{line:for_loop_start} and the termination of the algorithm are $\epsilon$-centered.
\end{lemma}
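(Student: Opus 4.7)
The plan is to argue inductively that at each entry to the outer while loop (in particular at \Cref{line:for_loop_start}) the triple $(x,s,\mu)$ is $\epsilon$-centered, and that each batch of up to $\epsilon/r$ inner iterations succeeds (completes without hitting the intermediate break or failing the end-of-batch check) with at least constant probability. Given this, progress in $\mu$ and the overall runtime bound both follow easily.

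I would first verify that the two end-of-batch thresholds enforce the desired output centering. Since $p_1 \approx_{1/3} \phicent(x,s,\mu)$ and $p_2 \approx_{1/3} \|\mA^\top x - b\|_{(\mA^\top \mX\mS^{-1}\mA)^{-1}}$, the end checks $p_1 \le e^{-1/3}\exp(\lambda\epsilon/4)$ and $p_2 \le \epsilon\gamma e^{-1/3}\sqrt{\mu}$ imply $\phicent(x,s,\mu) \le \exp(\lambda\epsilon/4)$ and the $\epsilon\gamma\sqrt{\mu}$ infeasibility bound. By the pointwise inequality $\phi(v_i)\ge \exp(\lambda|v_i-1|)$ and nonnegativity of $\phi$, every coordinate satisfies $|v_i-1|\le \epsilon/4$, so $w\approx_{\epsilon/3} \tau(x,s)$, which is exactly the $\epsilon/3$-centering of the lemma's conclusion (and a fortiori $\epsilon$-centering for the loop invariant). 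On an unsuccessful batch we restore $(x^{(0)},s^{(0)},\mu^{(0)})$, which was $\epsilon$-centered at the batch's start by the inductive hypothesis, so the invariant at \Cref{line:for_loop_start} is preserved regardless.

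I would then lower bound the per-batch success probability. Property~2 of \Cref{def:short_step} gives $\E[\phicent_i\mid \cF_{i-1}]\le (1-\lambda r)\phicent_{i-1}+\exp(\lambda\epsilon/6)$, so that
\[
Y_i \defeq (1-\lambda r)^{-i}\phicent_i \;-\; \exp(\lambda\epsilon/6)\sum_{j=1}^{i}(1-\lambda r)^{-j}
\]
is a supermartingale. Combined with the bound on $\phicent_0$ implied by the $\epsilon$-centering at the batch start, and the choice $\lambda\ge 12\epsilon^{-1}\log(16m/r)$ which makes $\exp(\lambda\epsilon/6)/(\lambda r)$ much smaller than $\exp(3\lambda\epsilon/4)$, applying Ville's/Doob's maximal inequality to $Y_i$ (or, more elementarily, using $\E[\phicent_i]\le \phicent_0+\exp(\lambda\epsilon/6)/(\lambda r)$ together with Markov at each of the $\epsilon/r$ iterations and a union bound) yields $\max_{i} \phicent_i \le e^{-1/3}\exp(3\lambda\epsilon/4)$ and $\phicent_{\epsilon/r}\le e^{-1/3}\exp(\lambda\epsilon/4)$ with probability at least, say, $3/4$. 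For the infeasibility, Property~3 of \Cref{def:short_step} gives a $(1-r)$ tail bound per iteration, and a union bound over the $\epsilon/r$ iterations yields that every infeasibility check passes with probability at least $1-\epsilon$. Intersecting these events, each batch succeeds with probability at least a constant $\rho$.

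Finally, for runtime and termination: on a successful batch $\mu$ moves multiplicatively by $(1\pm r)^{\epsilon/r}$, i.e., by a factor $\exp(\pm\Theta(\epsilon))$, or reaches $\mu^{\target}$ exactly via the $\median$ cap, so only $N\defeq O(|\log(\mu^{\target}/\mu^{\init})|/\epsilon)$ successful batches are needed. Unsuccessful batches do not move $\mu$ (they revert state), and with constant per-batch success probability the expected number of batches attempted is $O(N/\rho) = O(N)$. Each batch runs at most $\epsilon/r$ iterations, each invoking $\textsc{ShortStep}$ once and computing the two approximate norms at expected cost $O(\mathcal{T})$, for a per-batch expected cost $O((\epsilon/r)\cdot\mathcal{T})$, and hence a total expected runtime of $O(\mathcal{T}/r)\cdot|\log(\mu^{\target}/\mu^{\init})|$ as claimed. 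The main obstacle is controlling the \emph{supremum} of $\phicent$ across a batch: the short-step guarantee is only in expectation, so one must either use the Doob maximal inequality on the supermartingale $Y_i$ above, or carefully apportion the slack between the intermediate threshold $\exp(3\lambda\epsilon/4)$ and the end-of-batch threshold $\exp(\lambda\epsilon/4)$ to absorb the $\epsilon/r$ iterations in a Markov-plus-union bound; the parameter choice $\lambda \gtrsim \epsilon^{-1}\log(m/r)$ is precisely calibrated so that both arguments go through.
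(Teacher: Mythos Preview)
Your proposal is correct and follows essentially the same route as the paper's proof: establish an invariant at each batch start, show each batch succeeds with constant probability via Markov-plus-union-bound on the potential and a union bound on the per-step infeasibility tail, and conclude the runtime from the multiplicative progress in $\mu$.

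Two small points of comparison. First, the paper uses only the elementary Markov-plus-union-bound option you mention, not Doob's maximal inequality; the slack between the intermediate threshold $\exp(3\lambda\epsilon/4)$ and the end threshold $\exp(\lambda\epsilon/4)$ is exactly what absorbs the $\epsilon/r$ terms in the union bound. Second, be careful with the batch-start invariant: mere $\epsilon$-centeredness only gives $\Phi\le 2m\exp(\lambda\epsilon)$, which is too weak for the per-iteration Markov bound to yield failure probability $\le r$. The paper tracks the sharper invariant $\Phi(x^{(0)},s^{(0)},\mu^{(0)})\le \tfrac{r}{8}\exp(\tfrac{3}{4}\lambda\epsilon)$, obtained either from the initial $\epsilon/2$-centering together with $\lambda\ge 12\epsilon^{-1}\log(16m/r)$, or from the end-of-previous-batch check $p_1\le e^{-1/3}\exp(\lambda\epsilon/4)$. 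You have all the ingredients for this (you already observe the end-of-batch check gives $\Phi\le\exp(\lambda\epsilon/4)$), but your phrase ``the bound on $\phicent_0$ implied by the $\epsilon$-centering'' should be replaced by this tighter quantitative invariant.
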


\begin{proof}
Note that the algorithm divides into phases of size $\frac{\epsilon}{r}$
(Line~\ref{line:for_loop_start} to Line~\ref{line:for_loop_end}).
In each phase, if $\Phi$ or $\|\ma^{\top}x-b\|_{(\ma^{\top}\mx\ms^{-1}\ma)^{-1}}$
is too large, then Line \ref{line:failed_meta} occurs and the phase
is repeated. To bound the runtime, we need to show that Line \ref{line:failed_meta}
does not happen too often. Notationwise, we use $x^{(i)},s^{(i)},\mu^{(i)}$ to denote the state of $x,s,\mu$ at the end of the $i$-th iteration in a phase, and $x^{(0)},s^{(0)},\mu^{(0)}$ denote the vectors at the beginning of a phase which are stored in case we need to revert to those vectors.

First, we claim that throughout the algorithm, $\Phi$ and $\|\ma^{\top}x-b\|_{(\ma^{\top}\mx\ms^{-1}\ma)^{-1}}$
are small at the beginning of each phase, i.e. \Cref{line:for_loop_start}.
\begin{align*}
\Phi( x^{(0)},s^{(0)},\mu^{(0)} ) \leq \frac{r}{8} \exp( \frac{3}{4} \lambda\epsilon )\text{ and }\norm{\ma^{\top}x^{(0)}-b}_{(\ma^{\top}\mx^{(0)}\ms^{(0)-1}\ma)^{-1}}\leq\epsilon\gamma\sqrt{\mu^{(0)}}\,.
\end{align*}
At the beginning of the algorithm, we have by assumption that 
\begin{align*}
xs\approx_{\epsilon/2}\mu\tau(x,s)\text{ and }\norm{\ma^{\top}x-b}_{(\ma^{\top}\mx\ms^{-1}\ma)^{-1}} \leq \epsilon\gamma\sqrt{\mu}\,.
\end{align*}
Since $\epsilon \in [0,1/8]$, we have $\left|\frac{x_{i}s_{i}}{\mu\cdot\tau(x,s)_{i}} - 1\right|\leq\frac{2}{3}\epsilon$.
Using $\lambda \geq  12 \epsilon^{-1} \log(16m/r) $ and Lemma~\ref{lem:smoothing:helper}
yields that
\begin{align*}
\Phi(x,s,\mu) \leq 2m \cdot \exp \left(\frac{2}{3}\lambda\epsilon\right)\leq2\left(\frac{r}{16} \exp \left(\frac{1}{12} \lambda\epsilon \right)\right)\cdot \exp \left( \frac{2}{3}\lambda\epsilon \right)  \leq \frac{r}{8}\exp\left(\frac{3}{4}\lambda\epsilon\right).
\end{align*}
Further, when we replace $(x^{(0)},s^{(0)},\mu^{(0)})$ (i.e., when the last phase is successful), we
have $p_{1} \leq e^{-1/3}\exp(\frac{\lambda\epsilon}{4})$ and $p_{2}\leq\epsilon\gamma e^{-1/3}\sqrt{\mu}$.
By the definition of $p_{1}$ and $p_{2}$, and that $(1/2) \leq \exp(-1/3)$
this implies the same desired bound on $\Phi$ and $\|\ma^{\top}x-b\|_{(\ma^{\top}\mx\ms^{-1}\ma)^{-1}}$
and the claim holds. Similarly, as long as the phase continues (i.e. the conditions at the end of each iteration hold), we have that for all iterations $i$ 
\begin{align*}
\Phi(x^{(i)},s^{(i)},\mu^{(i)})\leq\exp\left(\frac{3}{4}\lambda\epsilon\right)\text{ and }\norm{\ma^{\top}x^{(i)}-b}_{(\ma^{\top}\mx^{(i)}\ms^{(i)-1}\ma)^{-1}}\leq\epsilon\gamma\sqrt{\mu}.
\end{align*}
Note that $\Phi(x,s,\mu)\leq\exp(\frac{3}{4}\lambda\epsilon)$
implies $\left|\frac{x_{j}s_{j}}{\mu\cdot\tau(x,s)_{j}}-1\right|\leq\frac{3}{4}\epsilon$ for all coordinates
by Lemma~\ref{lem:smoothing:helper}. Using $\epsilon\in[0,1/8]$,
this implies that $x^{(i)}s^{(i)}\approx_{\epsilon}\mu^{(i)}\tau(x^{(i)},s^{(i)})$. Combined with
property 1 of a short step procedure, this implies that $(x^{(i)},s^{(i)},\mu^{(i)})$
is $\epsilon$-centered for  each iteration $i$.

Now, we show that Line \ref{line:failed_meta} does not happen too
often. Note that there are three reasons Line \ref{line:failed_meta}
happens, one is that it exceeds the threshold for $\|\ma^{\top}x^{(i)}-b\|_{(\ma^{\top}\mx^{(i)}\ms^{(i)-1}\ma)^{-1}}$
at some iteration, one is that it exceeds the threshold for $\Phi$ at some
iteration, another is that it exceeds the threshold for $\Phi$
at the end of the phase.

For $\| \ma^{\top} x^{(i)} - b \|_{(\ma^{\top}\mx^{(i)}\ms^{(i)-1}\ma)^{-1}}$: conditioned on the phase reaching iteration $i$, we can use the guarantee of $\textsc{ShortStep}$ to get
\begin{align*}
\norm{\ma^{\top}x^{(i)}-b}_{(\ma^{\top}\mx^{(i)}\ms^{(i)-1}\ma)^{-1}}\leq\frac{\epsilon\gamma}{2}\sqrt{\mu^{(i)}}
\end{align*}
with probability $1-r$, in which case $p_{2}\leq(\epsilon/2)e^{1/3}\gamma\sqrt{\mu^{(i)}}\leq\epsilon e^{-1/3}\gamma\sqrt{\mu^{(i)}}$. Thus, the probability that the phase fails at iteration $i$ due to the check on $p_2$ is at most $r$, and union bound over $\epsilon/r$ iterations gives that with probability at least $1-\epsilon$, the algorithm passes per iteration threshold
on $\| \ma^{\top} x^{(i)} - b \|_{(\ma^{\top}\mx^{(i)}\ms^{(i)-1}\ma)^{-1}}$ for all iterations in a phase.

For $\Phi$: again conditioned on the phase reaching iteration $i$, the guarantee of $\textsc{ShortStep}$ yields
\begin{align*}
\E [ \Phi( x^{(i)}, s^{(i)}, \mu^{(i)} ) ] & \leq (1-\lambda r) \Phi(x^{(i-1)},s^{(i-1)},\mu^{(i-1)})+\exp( \frac{1}{6} \lambda\epsilon )\\
 & \leq (1-\lambda r) \Phi(x^{(i-1)},s^{(i-1)},\mu^{(i-1)})+\frac{r^2}{8} \exp( \frac{3}{4}\lambda\epsilon ).
\end{align*}
where the last inequality is because of the $\log(240m/\epsilon^2)$ term in the definition of $\lambda$. We know at the beginning of the phase, $\Phi(x^{(0)},s^{(0)},\mu^{(0)})\leq\frac{r}{8}\exp(\frac{3}{4}\lambda\epsilon)$
by the first claim, and knowing the phase reaches iteration $i$ won't make $\E [ \Phi(x^{(k)},s^{(k)},\mu^{(k)})]$ increase for any $k\in[0,i]$. Together with the bound above, we know $\E[ \Phi(x^{(i)},s^{(i)},\mu^{(i)}) ] \leq \frac{r}{8}\exp(\frac{3}{4}\lambda\epsilon)$. By Markov's inequality we have that $\Phi(x^{(i)},s^{(i)},\mu^{(i)})  \leq \frac{1}{8}\exp(\frac{3}{4}\lambda\epsilon)$
with probability at least $1-r$, and in this case $p_{1} \leq \frac{e^{1/3}}{8}\exp(\frac{3}{4}\lambda\epsilon)\leq e^{-1/3}\exp(\frac{3}{4}\lambda\epsilon)$. Thus, the (a priori) probability that the algorithm fails the check at the end of iteration $i$ is at most $r$, and union bound over the $\epsilon/r$ iterations yields that with probability at least $1-\epsilon$ the check for $\Phi$ is passed in each iteration of a given phase. 

Furthermore, by expanding the guarantee of $\textsc{ShortStep}$ over the $\epsilon/r$ iterations, we know at the end of the phase
\begin{align*}
\E [ \Phi(x^{(\epsilon/r)},s^{(\epsilon/r)},\mu^{(\epsilon/r)}) ] & \leq(1-\lambda r)^{\frac{\epsilon}{r}}\Phi(x^{(0)},s^{(0)},\mu^{(0)})+\frac{1-(1-\lambda r)^{\frac{\epsilon}{r}}}{\lambda r}\exp\left(\frac{1}{6}\lambda\epsilon\right)\\
 & \leq e^{-\lambda\epsilon}\frac{r}{8}\exp \left(\frac{3}{4}\lambda\epsilon\right)+\frac{1}{\lambda r}\exp\left(\frac{1}{6}\lambda\epsilon\right)\\
 & \leq\frac{1}{8}\exp\left(\frac{1}{4}\lambda\epsilon\right)
\end{align*}
where we used that $\lambda \geq  12 \epsilon^{-1} \log(16m/r) $. With probability at least $3/4$, $\Phi(x,s,\mu) \leq \frac{1}{2}\exp(\frac{1}{4}\lambda\epsilon)$ at the end of the phase, and hence $p_{1}\leq\frac{e^{1/3}}{2} \exp (\frac{1}{4}\lambda\epsilon) \leq e^{-1/3}\exp(\frac{1}{4}\lambda\epsilon)$.
Namely, it passes end of phase check for $\Phi$.

Union bounds over all three triggers, each phases success without
reset with probability at least $\frac{1}{4}$. Hence, in expected
$O(1/r)$ iterations, $\mu$ increases or decreases by $\Theta(1)$.
This yields the runtime $O(\frac{1}{r}\mathcal{T})\cdot|\log(\mu^{\textrm{\ensuremath{\mathrm{(end)}}}}/\mu^{\textrm{\ensuremath{\mathrm{(init)}}}})|$.

Finally, we note that when the algorithm ends, we have $p_{1}\leq e^{-1/3}\exp(\frac{1}{4}\lambda\epsilon)$
which implies $\Phi \leq \exp(\frac{1}{4}\lambda\epsilon)$ and $x^{\target}s^{\target}\approx_{\epsilon/3}\mu^{\target}\cdot\tau(x^{\target},s^{\target})$.
\end{proof}

\subsection{Short-step Method for Log Barrier Weights}
\label{sec:log_barrier_method}

In this section we provide and analyze a short step when the weight
function is constant, i.e. $\tau(x,s)\defeq\vones\in\R^{m}$ for all
$x,s\in\R_{>0}^{m}$. The central path induced by this weight function
is also the one induced by the logarithmic barrier function, i.e.
$(x_{\mu},s_{\mu},\mu)$ is $0$-centered if and only if
\begin{align}
\label{eq:log_barrier}
x_{\mu} =  ~ \underset{x\in\R_{\geq0}^{m} : \ma^{\top}x=b}{\argmin} c^{\top} x - \mu\sum_{i\in[m]}\log(x_{i})
\text{ and }
s_{\mu} = ~ \underset{s\in\R_{\geq0}^{m},y\in \R^{n} : \ma y+s=c}{\argmax} b^{\top} y + \mu\sum_{i\in[m]}\log(s_{i})\,.
\end{align}

Our algorithm using this weight function is given as Algorithm~\ref{alg:short_step_log}.
 The algorithm simply computes an approximate gradient of $\Phi(x,s,\mu)$
for the input $(x,s,\mu)$ and then takes a natural Newton step to
decrease the value of this potential as $\mu$ changes and make $x$
feasible. The key novelty in our step over previous IPMs is that we
sparsify the projection onto $x$ by random sampling. We show that steps on $x$ can be made $\otilde(\sqrt{m}+n)$-sparse,
ignoring the effect of a known additive term, 
by carefully picking a sampling distribution. Further, we show that the amount
of infeasibility this sampling induces can be controlled and  $\otilde(\sqrt{m})$ iterations of the method suffice. Ultimately, this leads to a $\otilde(n\sqrt{m})$ time
algorithm for matching.

\begin{algorithm2e}[th!]

\caption{Short Step (Log Barrier Weight Function)}
\label{alg:short_step}
\label{alg:short_step_log}

\SetKwProg{Proc}{procedure}{}{}

\Proc{$\textsc{ShortStep}(x \in \R^m_{>0},s \in \R^m_{>0},\mu \in \R_{>0},\mu^{\new} \in \R_{>0})$  }{

\LineComment{Fix $\tau(x,s)\defeq\vones\in\R^{m}\text{ for all }x\in\R_{>0}^{m},s\in\R_{>0}^{m}$
in Definition~\ref{def:central_potential}}

\LineComment{Let $\lambda\defeq 36\epsilon^{-1} \log(240m/\epsilon^{2}) $, $\gamma\defeq \epsilon / ( 100\lambda )$, $r \defeq \epsilon\gamma/ \sqrt{m}$, $\epsilon \leq 1/80$. }

\textbf{\LineComment{Assume $(x,s,\mu)$ is $\epsilon$-centered
}}

\textbf{\LineComment{Assume $\delta_{\mu}\defeq\mu^{\new}-\mu$ satisfies
$|\delta_{\mu}|\leq r\mu$}}

\vspace{0.2in}

\LineComment{\textbf{Approximate the iterate and compute step direction}}

\State Find $\ox$ and $\os$ such that $\ox\approx_{\epsilon}x$
and \textbf{$\os\approx_{\epsilon}s$} \label{line:log:approx_xs} \label{line:logstep:xstau}

\State Find $\ov\in\R_{>0}^{m}$ such that $\norm{\ov - w}_{\infty}\leq\gamma$ \label{line:log:v} for $w \defeq w(x,s,\mu)$ 

\State $g\leftarrow-\gamma\nabla\Phi(\overline{v})/\norm{\nabla\Phi(\overline{v})}_{2}$ \label{line:log:g}

\vspace{0.2in}

\LineComment{\textbf{Sample rows}}

\State $\oma\defeq\omx^{1/2}\oms^{-1/2}\ma,\ow\defeq w(\ox,\os,\mu),\omx\defeq\mdiag(\ox),\oms\defeq\mdiag(\os),\omw\defeq\mdiag(\ow)$ \label{line:log:define_shorthand}

\State Find $\omh$ such that $\omh\approx_{\gamma}\oma^{\top}\oma$\label{line:logstep:omh}

\State Let $\delta_{p}\defeq\omw^{-1/2}\oma\omh^{-1}\oma^{\top}\omw^{1/2}g$,
$\delta_{c}\defeq\mu^{-1/2}\cdot\omw^{-1/2}\oma\omh^{-1}(\ma^{\top}x-b)$,
and $\delta_{r}\defeq\delta_{p}+\delta_{c}$ \label{line:log:delta_r}\label{line:log:delta_c}

\State Find $q\in\Rn_{>0}$ such that $q_{i}\geq\sqrt{m}\cdot\left((\delta_{r})_{i}^{2}/\|\delta_{r}\|_{2}^{2}+1/m\right)+C\cdot\sigma_{i}(\oma)\log(m/(\epsilon r))\gamma^{-2}$
for some large enough constant $C$.

\State Sample a diagonal matrix $\mr\in\R^{n\times n}$ where each
$\mr_{ii}=\min\{q_{i},1\}^{-1}$ independently with probability $\min\{q_{i},1\}$
and $\mr_{ii}=0$ otherwise \label{line:log:R}

\vspace{0.2in}

\LineComment{\textbf{Compute projected step}\footnote{This step arises as a combination of an approximate solution to $\ms\delta_{x}+\mx\delta_{s}=(\mx\ms)h$,
$\ma^{\top}x=b$, and $\ma y+s=b$ and an additional step on $x$
to fix the current infeasibility. This definition of $h$ is analogous
to what we do in the tall dense paper and makes $h$ multiplicative
and unit-less.}}

\State $\delta_{x}\leftarrow\omx(g-\mr\delta_{r})$ \label{line:log:delta_x}

\State $\delta_{s}\leftarrow\oms\delta_{p}$ \label{line:log:delta_s}

\State $x^{\new}\leftarrow x+\delta_{x}$ and $s^{\new}\leftarrow s+\delta_{s}$ \label{line:log:move_xs}

\Return$(x^{\new},s^{\new})$

}

\end{algorithm2e}

Our main result of this section is the following lemma which shows that \Cref{alg:short_step_log}
is a valid $\textsc{ShortStep}$ procedure for the meta-algorithm of \Cref{sec:lp_meta}. In this section
we simply prove this lemma. Later, we show how to efficiently implement
\Cref{alg:short_step_log} to obtain efficient linear programming
and matching algorithms and improve using the Lee-Sidford
barrier.
\begin{lemma}[Log Barrier Short Step]
\label{lem:main_lem} \Cref{alg:short_step_log} is an $(\epsilon, r, \lambda, \gamma, \tau)$
$\textsc{ShortStep}$ procedure (see \Cref{def:short_step}) for all $\epsilon \in (0, 1/80)$
with weight function $\tau(x,s)\defeq\vones\in\R^{m}\text{ for all }x,s\in\R_{>0}^{m}$
and parameters $\lambda\defeq 36 \epsilon^{-1} \log(240m/\epsilon^{2}) $,
let $\gamma\defeq \epsilon / (100\lambda)$, and $r \defeq \epsilon\gamma/\sqrt{m}$,
i.e.
\begin{enumerate}
\item \label{lem:log_main_claim_1} $\lambda \geq 12 \epsilon^{-1} \log(16m/r) $,
\item \label{lem:log_main_claim_2} $\ma y^{\new}+s^{\new}=c$ for some $y^{\new}\in\R^{n}$,
\item \label{lem:log_main_claim_3} $\E [ \phicent(x^{\new},s^{\new},\mu^{\new}) ] \leq (1-\lambda r)\phicent(x,s,\mu)+\exp( \lambda \epsilon / 6)$
\item \label{lem:log_main_claim_4} $\P\left[\norm{\ma^{\top}x^{\new}-b}_{(\ma^{\top}\mx^{\new}(\ms^{\new})^{-1}\ma)^{-1}}\leq\frac{\epsilon}{2}\sqrt{\mu^{\new}}\right]\geq1-r\,.$
\end{enumerate}
\end{lemma}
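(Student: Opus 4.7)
\medskip

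\textbf{Proof plan for Lemma~\ref{lem:main_lem}.}

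\emph{Part 1 (parameter inequality).} This is a direct substitution: with $r = \epsilon\gamma/\sqrt{m} = \epsilon^2/(100\lambda \sqrt{m})$, one has $16m/r = 1600\, m^{1.5}\lambda/\epsilon^2$, so $12\epsilon^{-1}\log(16m/r) = 12\epsilon^{-1}[\log(1600 m^{1.5}/\epsilon^2)+\log\lambda]$. Plugging in $\lambda = 36\epsilon^{-1}\log(240m/\epsilon^2)$ and bounding $\log\lambda \le O(\log\log(m/\epsilon))$ shows the factor-$36$ definition of $\lambda$ dominates the required $12\epsilon^{-1}\log(16m/r)$. I would keep this computation short.

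\emph{Part 2 (dual feasibility).} I would unfold $\delta_s = \oms \delta_p = \oms \omw^{-1/2}\oma\omh^{-1}\oma^\top \omw^{1/2}g$ using $\oma = \omx^{1/2}\oms^{-1/2}\ma$ and $\ow = \ox\os/\mu$. The scalar factors collapse: $\oms\,\omw^{-1/2}\,\omx^{1/2}\oms^{-1/2} = \sqrt{\mu}\,\mi$, so $\delta_s = \sqrt{\mu}\,\ma\,\omh^{-1}\oma^\top\omw^{1/2}g \in \mathrm{im}(\ma)$. Since by hypothesis $\ma y + s = c$ for some $y$, taking $y^{\new} = y - \sqrt{\mu}\,\omh^{-1}\oma^\top\omw^{1/2}g$ gives $\ma y^{\new} + s^{\new} = c$.

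\emph{Part 3 (potential decrease).} This is where I expect most of the work. The plan is to Taylor-expand $\Phi$ around $w = w(x,s,\mu)$ and write
\begin{align*}
\Phi(x^{\new},s^{\new},\mu^{\new}) - \Phi(x,s,\mu) \;\le\; \langle \nabla\Phi(w),\, \Delta w\rangle + \tfrac{1}{2}\langle |\nabla^2\Phi|,\, (\Delta w)^2\rangle,
\end{align*}
where $\Delta w = w^{\new} - w$. A direct computation using the definitions of $\delta_x,\delta_s$ (and the algebraic identity from Part 2) shows that the ``$\mR=\mi$, $\omh$ exact'' version of $\Delta w$ essentially equals $\ow\cdot g - (\delta_\mu/\mu)w + (\text{infeasibility-correction terms})$. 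Because $g = -\gamma\nabla\Phi(\ov)/\|\nabla\Phi(\ov)\|_2$ and $\|\ov - w\|_\infty \le \gamma$, Lemma-level properties of $\phi$ from \Cref{sec:potential_function} yield a deterministic first-order decrease of order $-\lambda \gamma\,\|\nabla\Phi(w)\|_2 \le -\lambda r \Phi$, giving the $(1-\lambda r)\Phi$ term. Taking expectation removes $\mR$'s first-order effect (since $\E[\mR] = \mi$), so one only needs to bound the variance of $\Delta w$. That variance is controlled entrywise by the choice $q_i \ge \sqrt{m}\bigl((\delta_r)_i^2/\|\delta_r\|_2^2 + 1/m\bigr) + C\sigma_i(\oma)\log(\cdot)\gamma^{-2}$: the first term bounds $\E[(\mR\delta_r)_i^2]$, while the leverage-score term, combined with a matrix Chernoff bound, ensures the sampled normal equations $\oma^\top \mR\oma \approx \oma^\top \oma$. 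Using $\omh \approx_\gamma \oma^\top\oma$ on top and the smoothing properties of $\phi$, the aggregated second-order error contributes at most $\exp(\lambda\epsilon/6)$, producing the claimed inequality. The main obstacle here is keeping track of the three sources of approximation --- the discretization $\ov\approx w$, the sampling $\mR$, and the spectral approximation $\omh$ --- simultaneously; I would handle each by invoking the corresponding general potential lemmas in \Cref{sec:potential_function}.

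\emph{Part 4 (infeasibility).} Compute $\ma^\top x^{\new} - b = (\ma^\top x - b) + \ma^\top \delta_x$. Substituting $\delta_x = \omx(g - \mR\delta_r)$ and using the definitions of $\delta_p,\delta_c$, the ``exact'' choice $\mR=\mi$, $\omh = \oma^\top\oma$ would make this identically zero; with the approximations, the residual is exactly $\ma^\top\omx(\mi-\mR)\delta_r + (\text{error from }\omh \approx_\gamma \oma^\top\oma)$ acting on $(\ma^\top x-b)$ and on $\ma^\top\omx g$. Since $\E[\mR] = \mi$, the first piece is a zero-mean random vector whose covariance, measured in the norm $(\ma^\top\mx^{\new}(\ms^{\new})^{-1}\ma)^{-1}$, is bounded by a constant multiple of $\mu\sum_i (\delta_r)_i^2 / q_i \le O(\mu)\cdot \|\delta_r\|_2^2/\sqrt{m}$; because $\|\delta_r\|_2 \le O(\gamma + \gamma\sqrt{m}) \le O(\gamma\sqrt{m})$ from the hypothesis $(x,s,\mu)$ is $\epsilon$-centered (giving $\|\ma^\top x-b\|_{(\ma^\top\mx\ms^{-1}\ma)^{-1}} \le \epsilon\gamma\sqrt{\mu}$) and the normalization $\|g\|_2 = \gamma$, this variance is $O(\mu\gamma^2\sqrt{m})$. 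A Chebyshev/Bernstein bound together with the $r$-bound in Part 1 then gives $\|\ma^\top x^{\new}-b\|_{(\ma^\top\mx^{\new}(\ms^{\new})^{-1}\ma)^{-1}} \le (\epsilon/2)\sqrt{\mu^{\new}}$ with probability at least $1-r$, where I would absorb the change $\mx^{\new}\to\mx$, $\ms^{\new}\to\ms$ using $\epsilon$-centrality and the small step size $r$.

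The hardest piece will be Part 3 --- in particular verifying that the sampling noise's second moment really does aggregate into only an additive $\exp(\lambda\epsilon/6)$ in the potential, rather than ruining the $(1-\lambda r)$ contraction; for this I plan to lean heavily on the general lemmas developed in \Cref{sec:potential_function} and mirror the analogous steps in \cite{cls19,blss20,JiangSWZ20}.
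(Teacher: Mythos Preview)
Your plan for Parts~1--3 is essentially the paper's approach: Part~1 is the same parameter substitution; Part~2 unfolds $\delta_s$ to see it lies in $\mathrm{im}(\ma)$; Part~3 follows the paper's route of Taylor-expanding $\Phi$ (this is Lemma~\ref{lem:expand_potential}), using $\E[\mr]=\mi$ to make the first-order term equal to $\langle\phi'(w),g\rangle$ up to $O(\epsilon\gamma)\|\phi'(w)\|_2$, and bounding the second moments of $\mx^{-1}\delta_x$ via the $\sqrt{m}$-part of the sampling probabilities (Lemmas~\ref{lem:projection_sizes}--\ref{lem:log_exp_progress}), then invoking the generic potential-decrease lemma (Lemma~\ref{lem:centrality_main_general}).

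Part~4 has a genuine gap. Your decomposition of the residual is fine, but the proposed concentration argument will not produce failure probability $\leq r$: since $r=\epsilon\gamma/\sqrt{m}$ is polynomially small in $m$, a Chebyshev bound from a variance of order $\mu\gamma^2/\sqrt{m}$ against a threshold of order $\epsilon^2\gamma^2\mu$ gives failure probability only $O(1/(\epsilon^2\sqrt{m}))\gg r$, and a scalar/vector Bernstein does not help because the summands $(\mi-\mr)_{ii}(\delta_r)_i\oma_i$ are not uniformly small (neither the $\sqrt{m}$-part nor the leverage-score part of $q_i$ bounds $(1/p_i)|(\delta_r)_i|\sqrt{\sigma_i(\oma)}$ uniformly). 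Also, your bound $\|\delta_r\|_2=O(\gamma\sqrt{m})$ is off by a factor of $\sqrt{m}$: Lemma~\ref{lem:projection_sizes} gives $\|\delta_p\|_2,\|\delta_c\|_2=O(\gamma)$.

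The paper's fix is the one you already noted in Part~3 but did not deploy here: use the leverage-score term $q_i\geq C\sigma_i(\oma)\gamma^{-2}\log(m/(\epsilon r))$ together with a matrix Chernoff bound (the $\log(m/(\epsilon r))$ factor is precisely calibrated for this) to get $\oma^\top\mr\oma\approx_\gamma\oma^\top\oma$ with probability at least $1-r$. Conditional on this spectral event, write the residual as $\sqrt{\mu}\,(\mi-\oma^\top\mr\oma\,\omh^{-1})\,d$ with $d=\oma^\top\omw^{1/2}g+\mu^{-1/2}(\ma^\top x-b)$; since now $\oma^\top\mr\oma\approx_\gamma\oma^\top\oma\approx_\gamma\omh$, the operator $(\oma^\top\oma)^{-1/2}(\mi-\oma^\top\mr\oma\,\omh^{-1})\omh^{1/2}$ has spectral norm $O(\gamma)$, and combining with $\|d\|_{\omh^{-1}}=O(\gamma)$ gives the feasibility bound \emph{deterministically} on the good event. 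In short: move the matrix-Chernoff observation from your Part~3 into Part~4, and replace the scalar concentration argument by a deterministic operator-norm bound given the spectral approximation.
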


We prove \Cref{lem:main_lem} at the end of this
section in Section~\ref{sec:putting_it_all_together}. 
There we prove claims \ref{lem:log_main_claim_1} and \ref{lem:log_main_claim_2}, 
which follow fairly immediately by straightforward calculation. 
The bulk of this section is dedicated to proving claim \ref{lem:log_main_claim_3}, 
i.e. \Cref{lem:centrality_main} in \Cref{sec:centrality}, 
which essentially amounts to proving
that the given method is an $\otilde(\sqrt{m})$-iteration IPM. 
This proof leverages several properties of the centrality potential,
$\Phi$ given in Section~\ref{sec:potential_function}. 
Finally, claim \cref{lem:log_main_claim_4} of the lemma follows by careful calculation 
and choice of the sampling distribution $q$ and is given
towards the end of this section as Lemma~\ref{lem:infeasibility_bound}
in Section~\ref{sec:feasibility}.

\subsubsection{Centrality\label{sec:centrality}}

Here we analyze the effect of Algorithm~\ref{alg:short_step_log} on
the centrality potential $\Phi$ to prove claim \ref{lem:log_main_claim_3} of Lemma~\ref{lem:main_lem}.
Our proof leverages \Cref{lem:expand_potential} and \Cref{lem:centrality_main_general} which together 
reduces bounding the potential $\phicent(x^{\new},s^{\new},\mu^{\new})$ to carefully analyzing
bounding the first and second moments of $\delta_{s}$, $\delta_{x}$,
and $\delta_{\mu}$ (as well as there multiplicative stability). 

More precisely, \Cref{lem:expand_potential} shows that provided  $x^{\new}$, $s^{\new}$, and $\mu^{\new}$ do not change too much multiplicatively, i.e. $\norm{\mx^{-1} \delta_x}_\infty$, $\norm{\ms^{-1}  \delta_s}_\infty$, and $|\delta_{\mu} / \mu|$ are sufficiently small, then
\begin{align}
\E [\phicent(x^{\new},s^{\new},\mu^{\new})]
& \leq\Phi(x,s,\mu)+\phi'(w)^{\top}\mw\left[(\mx^{-1}\E [ \delta_{x} ] +\ms^{-1}\delta_{s}-\mu^{-1}\delta_{\mu}\vones)\right]\nonumber \\
& \enspace\enspace\enspace + O\left(\E \norm{\mx^{-1}\delta_{x}}_{\phi''(w)}^{2}+\norm{\ms^{-1}\delta_{x}}_{\phi''(w)}^{2}+\norm{\mu^{-1}\delta_{\mu}\vones}_{\phi''(w)}^{2}\right) 
\label{eq:expansion_phinew}
\,.
\end{align}
(To see this, pick 
$u^{(1)} = x$, $\delta^{(1)} = \delta_{x}$,
$c^{(1)} = 1$, $u^{(2)} = s$, $\delta^{(2)} = \delta_{s}$, $c^{(2)}=1$,
$u^{(3)} = \mu\vones$, $\delta^{(3)} = \delta_{\mu}\vones$, and $c^{(3)}=-1$ in  \Cref{lem:expand_potential} as we do in the proof of \Cref{lem:log_exp_progress} later in this section).  \Cref{lem:expand_potential} follows by careful analysis of the second order Taylor expansion of $\Phi$ and multiplicative stability of $\phi$. 

Now, \eqref{eq:expansion_phinew} implies that to bound that the centrality potential $\Phi$ decreases in expectation, it suffices to show the following:
\begin{itemize}
	\item \textbf{First-order Expected Progress Bound:} The first-order contribution of $\delta_x$ and $\delta_s$ in expectation, i.e. $\omw [\omx^{-1} \E[\delta_x] + \oms^{-1} \delta_s]$ is sufficiently well-aligned with $-\phi'(w)$, i.e. $g$ (up to scaling). (We show this in \Cref{lem:expect_bound}.)
	\item \textbf{Multiplicative Stability and Second Order $s$ Bound:} $s$ does not change too much multiplicatively, i.e. $\norm{\oms^{-1} \delta_s}_\infty$ is small, and its second order contribution to \eqref{eq:expansion_phinew} is small, i.e. $\norm{\ms^{-1}\delta_{x}}_{\phi''(w)}^{2}$ is small. (We show this in \Cref{lem:log_s_second}.)
	\item \textbf{Multiplicative Stability and Second Order $x$ Bound:} $x$ does not change too much multiplicatively, i.e. $\norm{\omx^{-1} \delta_x}_\infty$ is small with probability $1$, and its expected second order contribution to \eqref{eq:expansion_phinew} is small, i.e. $\E \norm{\mx^{-1}\delta_{x}}_{\phi''(w)}^{2}$ is small. (We show this in \Cref{lem:stoch_x_bounds}.)
\end{itemize}
We provide these results one at a time, after first providing some basic bounds on the components of a step, i.e. $\delta_p$ and $\delta_c$, in \Cref{lem:projection_sizes}. With these results in hand and some basic analysis of $\delta_\mu$, in \Cref{lem:log_exp_progress}, we can show that
\[
\E  [\phicent(x^{\new},s^{\new},\mu^{\new})]
 \leq\Phi(x,s,\mu) + \phi'(w)^{\top} g + O(\norm{\phi'(w)}_2) +O(\norm{\phi''(w)}_2)
\]
for some sufficiently small constants hidden in the $O(\cdot)$ notation above. 
We obtain the desired \Cref{lem:centrality_main} by applying \Cref{lem:centrality_main_general} 
which leverages properties of $\phi$ 
to show that progress of this forms
yields to a sufficiently large decrease to $\Phi$ 
whenever it is sufficiently large.

\begin{lemma}[Basic Step Size Bounds]
\label{lem:projection_sizes} In Algorithm~\ref{alg:short_step_log}
we have that
\begin{equation}
\norm{\delta_{p}}_{2}\leq\exp(4\epsilon)\gamma\text{ and }\norm{\delta_{c}}_{2}\leq\epsilon\exp(4\epsilon)\gamma\label{eq:step_size_bounds}
\end{equation}
and consequently, as $\epsilon<1/80$
\begin{align*}
\norm{\ms^{-1}\delta_{s}}_{2}\leq\exp(5\epsilon)\gamma\text{ and }\norm{\mx^{-1}\E[\delta_{x}]}_{2}\leq3\gamma\,.
\end{align*}
\end{lemma}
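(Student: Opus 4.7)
The proof is essentially a routine operator‑norm calculation using the three multiplicative approximations at hand: $w \approx_\epsilon \vones$ (from $\epsilon$‑centrality with $\tau \equiv \vones$), $\ox \approx_\epsilon x$ and $\os \approx_\epsilon s$ (Line~\ref{line:log:approx_xs}), and $\omh \approx_\gamma \oma^\top \oma$ (Line~\ref{line:logstep:omh}). The first thing I would record is the preliminary fact that $\ow = \ox\,\os/\mu \approx_{2\epsilon} w \approx_\epsilon \vones$, hence $\omw \approx_{3\epsilon} \mI$ and $\omw^{\pm 1/2} \approx_{3\epsilon/2} \mI$. I would also note that $\|g\|_2 = \gamma$ by the normalization on Line~\ref{line:log:g}.

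For $\|\delta_p\|_2 = \|\omw^{-1/2}\oma\omh^{-1}\oma^\top \omw^{1/2} g\|_2$, the plan is to peel off the diagonal factors and then use that $\oma\omh^{-1}\oma^\top$ is a near‑orthogonal projection. Concretely, setting $y = \omw^{1/2} g$ gives $\|y\|_2 \le e^{3\epsilon/2}\gamma$, and then
\[
\|\oma\omh^{-1}\oma^\top y\|_2^2
\;=\; y^\top \oma \omh^{-1}(\oma^\top\oma)\omh^{-1}\oma^\top y
\;\le\; e^{2\gamma}\, y^\top \mproj(\oma)\, y
\;\le\; e^{2\gamma}\|y\|_2^2,
\]
using $\omh^{-1} \approx_\gamma (\oma^\top\oma)^{-1}$ twice. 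A final $e^{3\epsilon/2}$ factor from $\omw^{-1/2}$ yields $\|\delta_p\|_2 \le e^{3\epsilon+\gamma}\gamma \le e^{4\epsilon}\gamma$, since $\gamma = \epsilon/(100\lambda) \ll \epsilon$.

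The main (modestly) nontrivial step is $\|\delta_c\|_2$: here I need to convert the given approximate primal feasibility, stated in the $(\ma^\top\mx\ms^{-1}\ma)^{-1}$ norm, into a bound in the $\omh^{-1}$ norm that matches the structure of $\delta_c$. Since $\oma^\top\oma = \ma^\top\omx\oms^{-1}\ma$ and $\omx\oms^{-1}\approx_{2\epsilon}\mx\ms^{-1}$, we get $\oma^\top\oma \approx_{2\epsilon} \ma^\top\mx\ms^{-1}\ma$, and chaining with $\omh \approx_\gamma \oma^\top\oma$ gives $\|\ma^\top x - b\|_{\omh^{-1}}^2 \le e^{2\epsilon+\gamma}\cdot \epsilon^2\gamma^2\mu$. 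Then, mirroring the calculation above,
\[
\mu\cdot\|\delta_c\|_2^2 \;=\; (\ma^\top x - b)^\top \omh^{-1}\oma^\top\omw^{-1}\oma\omh^{-1}(\ma^\top x - b) \;\le\; e^{3\epsilon}\cdot e^{2\gamma}\,\|\ma^\top x - b\|_{\omh^{-1}}^2,
\]
using $\omw^{-1}\preceq e^{3\epsilon}\mI$ and $\oma\omh^{-1}(\oma^\top\oma)\omh^{-1}\oma^\top \preceq e^{2\gamma}\mproj(\oma)$. Combining exponents yields $\|\delta_c\|_2 \le e^{(5\epsilon+3\gamma)/2}\epsilon\gamma \le e^{4\epsilon}\epsilon\gamma$ for $\gamma\le\epsilon$.

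The two consequences are then immediate. Since $\delta_s = \oms \delta_p$ and $\ms^{-1}\oms$ is a diagonal matrix with entries in $[e^{-\epsilon},e^\epsilon]$, we get $\|\ms^{-1}\delta_s\|_2 \le e^\epsilon\|\delta_p\|_2 \le e^{5\epsilon}\gamma$. For the primal bound, $\E[\mr] = \mI$ (Line~\ref{line:log:R}) gives $\E[\delta_x] = \omx(g - \delta_p - \delta_c)$, so $\|\mx^{-1}\E[\delta_x]\|_2 \le e^\epsilon\bigl(\|g\|_2 + \|\delta_p\|_2 + \|\delta_c\|_2\bigr) \le e^\epsilon\gamma\bigl(1 + e^{4\epsilon}(1+\epsilon)\bigr)$, and plugging in $\epsilon \le 1/80$ comfortably bounds this by $3\gamma$. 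No single step is hard; the only care required is tracking the exponents and confirming that the tiny $\gamma$ in $\omh$'s approximation is dominated by $\epsilon$.
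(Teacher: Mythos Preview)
Your proof is correct and follows essentially the same approach as the paper: peel off the diagonal $\omw^{\pm1/2}$ factors using $\ow\approx_{3\epsilon}\vones$, bound the near‑projection $\oma\omh^{-1}\oma^\top$ via $\omh\approx_\gamma\oma^\top\oma$, and for $\delta_c$ chain $\omh\approx_\gamma\oma^\top\oma\approx_{2\epsilon}\ma^\top\mx\ms^{-1}\ma$ to convert the feasibility bound into the $\omh^{-1}$ norm. One small nit: in the $\delta_c$ step you cite $\oma\omh^{-1}(\oma^\top\oma)\omh^{-1}\oma^\top\preceq e^{2\gamma}\mproj(\oma)$, but what is actually used there is $\oma^\top\oma\preceq e^{\gamma}\omh$ (hence $\omh^{-1}\oma^\top\oma\omh^{-1}\preceq e^{\gamma}\omh^{-1}$); this only tightens your exponent and does not affect the conclusion.
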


\begin{proof}
Recall that $\delta_{p}=\omw^{-1/2}\oma\omh^{-1}\oma^{\top}\omw^{1/2}g$
and $\omh\approx_{\gamma}\oma^{\top}\oma$ for $\gamma \leq \epsilon$. Since $(x,s,\mu)$ is
$\epsilon$-centered we have $\ow\approx_{3\epsilon}\vones$ ($\ow\approx_{2\epsilon}w$
and $w\approx_{\epsilon}1$). Consequently, $\omw^{-1} \preceq \exp(-3\epsilon) \mi$ and
\begin{equation}
\label{eq:log_barrier:deltap_1}
\norm{\delta_{p}}_{2} 
= \norm{\oma\omh^{-1}\oma^{\top}\omw^{1/2}g}_{\omw^{-1}} 
\leq \exp(3\epsilon/2)\norm{\omh^{-1}\oma^{\top}\omw^{1/2}g}_{\oma^{\top}\oma} ~.
\end{equation}
Now, applying $\omh\approx_{\gamma}\oma^{\top}\oma$ and $\gamma \leq \epsilon$ twice we have
\begin{equation}
\label{eq:log_barrier:deltap_2}
\norm{\omh^{-1}\oma^{\top}\omw^{1/2}g}_{\oma^{\top}\oma}
\leq \exp(\epsilon/2)  \norm{\oma^{\top}\omw^{1/2}g}_{\omh^{-1}}
\leq \exp(\epsilon)  \norm{\omw^{1/2}g}_{\oma(\oma^{\top}\oma)^{-1}\oma^{\top}} ~.
\end{equation}
Further, since $\oma(\oma^{\top}\oma)^{-1}\oma^{\top}$ is a projection matrix we have $\mzero \preceq \oma(\oma^{\top}\oma)^{-1}\oma^{\top} \preceq \mI$ and
\begin{equation}
\label{eq:log_barrier:deltap_3}
 \norm{\omw^{1/2}g}_{\oma(\oma^{\top}\oma)^{-1}\oma^{\top}}
\leq \norm{g}_{\ow} 
\leq  \exp(3\epsilon/2)\norm g_{2}
= \exp(3\epsilon/2) \gamma ~,
\end{equation}
where in the second to last step we again used $\ow\approx_{3\epsilon}\vones$  and in the last step we used $\norm g_{2}=\gamma$. Combining \eqref{eq:log_barrier:deltap_1}, \eqref{eq:log_barrier:deltap_2}, and \eqref{eq:log_barrier:deltap_3} yields that $\norm{\delta_p}_2 \leq \exp(4\epsilon) \gamma$ as desired.

Next, as $\ox\approx_{\epsilon}x$ and $\os\approx_{\epsilon}s$ we
have $\oma^{\top}\oma\approx_{2\epsilon}\ma^{\top}\mx\ms^{-1}\ma$. Since $\omh\approx_{\gamma}\oma^{\top}\oma$ for $\gamma \leq \epsilon$
this implies $\omh\approx_{3\epsilon}\ma^{\top}\mx\ms^{-1}\ma$. Since, $\delta_{c}=\mu^{-1/2}\omw^{-1/2}\oma\omh^{-1}(\ma^{\top}x-b)$
and again $\ow\approx_{3\epsilon}\vones$ we have 
\begin{align*}
\norm{\delta_{c}}_{2} 
&= \frac{1}{\sqrt{\mu}}\norm{\oma\omh^{-1}[\ma^{\top} x - b]}_{\omw^{-1}} 
\leq  \frac{\exp(3\epsilon/2)}{\sqrt{\mu}}\norm{\omh^{-1}[\ma^{\top}x-b]}_{\oma^{\top}\oma} \\ 
&\leq  \frac{\exp(2\epsilon)}{\sqrt{\mu}}\norm{\ma^{\top}x-b}_{\omh^{-1}}
\leq  \frac{\exp(7\epsilon/2)}{\sqrt{\mu}}\norm{\ma^{\top}x-b}_{(\ma^{\top}\mx\ms^{-1}\ma)^{-1}} \
\leq \epsilon\exp(4\epsilon)\gamma.
\end{align*}
where in the last step we used that $\norm{\ma^{\top}x-b}_{(\ma^{\top}\mx\ms^{-1}\ma)^{-1}}\leq\epsilon\gamma\sqrt{\mu}$
since $(x,s,\mu)$ is $\epsilon$-centered.

Since, $\norm{\ms^{-1}\delta_{s}}_{2}\leq\exp(\epsilon)\norm{\oms^{-1}\delta_{s}}_{2}=\exp(\epsilon)\norm{\delta_{p}}_{2}$
this yields the desired bound on $\norm{\oms^{-1}\delta_{s}}_{2}$
as well. The remainder of the result follows as by definition and
triangle inequality we have
\begin{align*}
\norm{\omx^{-1}\E[\delta_{x}]}_{2}=\norm{g-\delta_{p}-\delta_{c}}_{2}\leq\norm g+\norm{\delta_{p}}_{2}+\norm{\delta_{c}}_{2}
\end{align*}
and we have $(1+(1+\epsilon)\exp(3\epsilon))\leq3$ for $\epsilon\in[0,1/80]$.
\end{proof}

We now want to prove in \Cref{lem:expect_bound} 
that $\omw [\omx^{-1} \E[\delta_x] + \oms^{-1} \delta_s]$ 
is sufficiently well-aligned with $g$.
For that we will use the following \Cref{lem:mult_to_add} which 
will be proven in \Cref{sec:potential_function}
and bounds additive $\ell_2$-error with respect to multiplicative spectral-error.
\begin{restatable*}{lemma}{multToAdd}
	\label{lem:mult_to_add} If $\mm\approx_{\epsilon}\mn$ for symmetric
	PD $\mm,\mn\in\R^{n\times n}$ and $\epsilon\in[0,1/2)$ then 
	\begin{align*}
	\norm{\mn^{-1/2}(\mm-\mn)\mn^{-1/2}}_{2}\leq\epsilon+\epsilon^{2}\,.
	\end{align*}
	Consequently, if $x\approx_{\epsilon}y$ for $x,y\in\R_{>0}$ and
	$\epsilon\in[0,1/2)$ then $\norm{\mx^{-1}y-\vones}_{\infty}\leq\epsilon+\epsilon^{2}$.
\end{restatable*}

\begin{lemma}[First-order Expected Progress Bound]
\label{lem:expect_bound} In Algorithm~\ref{alg:short_step_log} we
have
\begin{align*}
\norm{\mw(\mx^{-1}\E[\delta_{x}]+\ms^{-1}\E[\delta_{s}])-g}_{2}\leq7\epsilon\gamma\,.
\end{align*}
\end{lemma}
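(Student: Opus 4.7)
The plan is to compute $\E[\delta_x]$ and $\E[\delta_s]$ in closed form (exploiting that $\E[\mr] = \mi$ by construction in \Cref{alg:short_step_log}), rearrange $\mw(\mx^{-1}\E[\delta_x]+\ms^{-1}\E[\delta_s]) - g$ as a linear combination of $g$, $\delta_p$, and $\delta_c$ whose coefficient matrices are diagonal and multiplicatively close to $\mi$ or $\mzero$, and then apply the triangle inequality together with the size bounds of \Cref{lem:projection_sizes}. The key structural point is that $\E[\mr] = \mi$ kills the sampling variance in expectation, which reduces the whole problem to a deterministic perturbation estimate.

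Concretely, since each $\mr_{ii}$ is independently $\min\{q_i,1\}^{-1}$ with probability $\min\{q_i,1\}$ and $0$ otherwise, we have $\E[\mr]=\mi$ and hence $\E[\delta_x] = \omx(g - \delta_p - \delta_c)$, while $\delta_s = \oms \delta_p$ is already deterministic. Setting $\md_1 \defeq \mw\mx^{-1}\omx - \mi$ and $\md_2 \defeq \mw\ms^{-1}\oms - \mi$, a direct expansion gives
\begin{align*}
\mw(\mx^{-1}\E[\delta_x]+\ms^{-1}\E[\delta_s]) - g \;=\; \md_1 g \;+\; (\md_2 - \md_1)\delta_p \;-\; (\mi + \md_1)\delta_c\,.
\end{align*}

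Next, I will bound $\|\md_1\|_\infty$ and $\|\md_2\|_\infty$. The $i$-th diagonal entry of $\mw\mx^{-1}\omx$ equals $w_i \cdot (\ox_i/x_i)$; centrality gives $w \approx_\epsilon \vones$ and the input guarantee gives $\ox \approx_\epsilon x$, so each diagonal entry is multiplicatively $2\epsilon$-close to $1$. \Cref{lem:mult_to_add} then yields $\|\md_1\|_\infty \leq 2\epsilon + (2\epsilon)^2$, and the same argument with $(s,\os)$ in place of $(x,\ox)$ gives the identical bound on $\|\md_2\|_\infty$.

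Finally, taking $\ell_2$-norms in the displayed identity and invoking $\|g\|_2 = \gamma$ together with $\|\delta_p\|_2 \leq \exp(4\epsilon)\gamma$ and $\|\delta_c\|_2 \leq \epsilon\exp(4\epsilon)\gamma$ from \Cref{lem:projection_sizes}, one obtains a sum of three terms each of order $\epsilon\gamma$. Using $\epsilon \leq 1/80$ to absorb the factors $\exp(4\epsilon)$ and the quadratic correction $(2\epsilon)^2$ into small constants, careful bookkeeping yields the claimed $7\epsilon\gamma$ bound. The proof has no conceptual obstacle; the only mild effort is keeping the constants tight enough in the final arithmetic, which is why it is important to have already done the work in \Cref{lem:projection_sizes} of bounding $\|\delta_p\|_2$ and $\|\delta_c\|_2$ with small explicit factors rather than only up to absolute constants.
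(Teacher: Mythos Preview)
Your proposal is correct and follows essentially the same approach as the paper: both use $\E[\mr]=\mi$ to get $\omx^{-1}\E[\delta_x]+\oms^{-1}\delta_s = g - \delta_c$, then control the perturbations coming from $\ox\approx_\epsilon x$, $\os\approx_\epsilon s$, $w\approx_\epsilon\vones$ via \Cref{lem:mult_to_add}, and finish with the size bounds of \Cref{lem:projection_sizes}. The only difference is bookkeeping: you package $w$ and $\ox/x$ together into $\md_1=\mw\mx^{-1}\omx-\mi$, whereas the paper first bounds $\norm{\mw(\cdot)-\mw g}_2$ (keeping the $\mw$ factor outside and controlling $\norm{\omx^{-1}\mx-\mi}_2$, $\norm{\oms^{-1}\ms-\mi}_2$ separately) and then adds $\norm{\mw g - g}_2$.

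One minor caution on the final arithmetic: with your grouping and the crude estimate $\norm{\md_2-\md_1}_\infty\le\norm{\md_1}_\infty+\norm{\md_2}_\infty\le 2(2\epsilon+4\epsilon^2)$, the three contributions at $\epsilon=1/80$ sum to roughly $7.4\,\epsilon\gamma$, slightly above the stated $7\epsilon\gamma$. The paper's grouping (which uses $\norm{\omx^{-1}\E[\delta_x]}_2\le 3\gamma$ and $\norm{\oms^{-1}\delta_s}_2\le e^{5\epsilon}\gamma$ from \Cref{lem:projection_sizes} rather than isolating $\delta_p$) gives a coefficient $\approx 6.25$ at $\epsilon=1/80$. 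This discrepancy is inessential for the rest of the analysis---any $O(\epsilon\gamma)$ bound suffices---but if you want to hit the constant $7$ exactly you should either regroup as the paper does or sharpen the $\delta_p$ term slightly.
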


\begin{proof}
Note that 
\begin{align*}
\mx^{-1}\E[\delta_{x}]+\ms^{-1}\E[\delta_{s}] & =\omx^{-1}\E[\delta_{x}]+\oms^{-1}\E[\delta_{s}]+(\omx^{-1}\mx-\mi)\mx^{-1}\E[\delta_{x}]+(\oms^{-1}\ms-\mi)\ms^{-1}\E[\delta_{s}]\,.
\end{align*}
Further, by the design of $\E[\delta_{x}]$ and $\E[\delta_{s}]$
we have $\omx^{-1}\E[\delta_{x}]+\oms^{-1}\E[\delta_{s}]=g-\delta_{c}$.
Also, since $\omx\approx_{\epsilon}\mx$ and $\oms\approx_{\epsilon}\ms$,
Lemma~\ref{lem:mult_to_add} implies that $\norm{\omx^{-1}\mx-\mi}_{2}\leq\epsilon+\epsilon^{2}\leq\epsilon\exp(\epsilon)$
and similarly, $\norm{\oms^{-1}\ms-\mi}_{2}\leq\epsilon\exp(\epsilon)$.
Therefore Lemma~\ref{lem:projection_sizes} implies
\begin{align*}
&~
\norm{\mw(\mx^{-1}\E[\delta_{x}]+\ms^{-1}\E[\delta_{s}])-\mw g}_{2} \\
\leq & ~
\norm{\delta_{c}}_{\mw^{2}}+\norm{(\omx^{-1}\mx-\mi)\omx^{-1}\E[\delta_{x}]}_{\mw^{2}}+\norm{(\oms^{-1}\ms-\mi)\oms^{-1}\E[\delta_{s}]}_{\mw^{2}}\\
\leq & ~
\exp(\epsilon)\left(\norm{\delta_{c}}_{2}+\epsilon \exp(\epsilon) \left(\norm{\omx^{-1}\E[\delta_{x}]}_{2}+\norm{\oms^{-1}\E[\delta_{s}]}_{2}\right)\right)\\
\leq & ~
\exp(\epsilon)(\epsilon\exp(4\epsilon)+\epsilon\exp(\epsilon) (3+\exp(5\epsilon))\gamma \\
= & ~ \epsilon \gamma \cdot \Big( \exp(\epsilon) \cdot ( \exp(4\epsilon) + \exp(\epsilon) ) \cdot (3 + \exp(5 \epsilon)) \Big).
\end{align*}
Further, as $w\approx_{\epsilon}\mi$, Lemma~\ref{lem:mult_to_add}
implies
\begin{align*}
\norm{\mw g-g}_{2}\leq\norm{\mw-\mi}_{2}\norm g_{2}\leq(\epsilon+\epsilon^{2})\gamma = \epsilon \gamma \cdot ( 1 + \epsilon ).
\end{align*}
The result follows as $\exp(\epsilon)(\exp(4\epsilon)+(\exp(\epsilon))(3+\exp(5\epsilon)))+(1+\epsilon)\leq7$
for $\epsilon\in[0,1/80]$.
\end{proof}

\begin{lemma}[Multiplicative Stability and Second Order $s$ Bound]
\label{lem:log_s_second}In Algorithm~\ref{alg:short_step_log} we have
\begin{align*}
\norm{\ms^{-1}\delta_{s}}_{\infty}\leq\exp(5\epsilon)\gamma\text{ and }\norm{\ms^{-1}\delta_{s}}_{\phi''(v)}^{2}\leq\exp(10\epsilon)\gamma^{2}\norm{\phi''(v)}_{2}\,.
\end{align*}
\end{lemma}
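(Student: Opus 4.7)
The plan is to reduce both bounds to elementary inequalities using the previously established bounds on $\delta_p$ from \Cref{lem:projection_sizes} and the multiplicative closeness $\oms \approx_\epsilon \ms$ guaranteed by Line~\ref{line:log:approx_xs} of \Cref{alg:short_step_log}. Since $\delta_s = \oms \delta_p$ (Line~\ref{line:log:delta_s}), we have $\ms^{-1} \delta_s = (\ms^{-1}\oms) \delta_p$ as a coordinate-wise product, and $\ms^{-1}\oms \approx_\epsilon \mI$ entrywise gives us the conversion between $\ms$-based and $\oms$-based quantities at a cost of an $\exp(\epsilon)$ factor.

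For the first bound, I would observe
\[
\norm{\ms^{-1}\delta_s}_\infty \le \exp(\epsilon) \norm{\oms^{-1}\delta_s}_\infty = \exp(\epsilon)\norm{\delta_p}_\infty \le \exp(\epsilon)\norm{\delta_p}_2 \le \exp(5\epsilon)\gamma,
\]
where the last inequality is exactly \eqref{eq:step_size_bounds} from \Cref{lem:projection_sizes}.

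For the second bound, the idea is to split one factor of $(\ms^{-1}\delta_s)_i^2$ into an $\ell_\infty$ piece and an $\ell_2$ piece, then apply Cauchy--Schwarz against $\phi''(v)$. Explicitly, writing $a := \ms^{-1}\delta_s$,
\[
\norm{a}_{\phi''(v)}^2 = \sum_i \phi''(v)_i \, a_i^2 \le \norm{a}_\infty \sum_i \phi''(v)_i |a_i| \le \norm{a}_\infty \, \norm{\phi''(v)}_2 \, \norm{a}_2.
\]
Now I would plug in $\norm{a}_\infty \le \exp(5\epsilon)\gamma$ from the first part and $\norm{a}_2 = \norm{\ms^{-1}\delta_s}_2 \le \exp(5\epsilon)\gamma$ (also from \Cref{lem:projection_sizes}) to conclude $\norm{a}_{\phi''(v)}^2 \le \exp(10\epsilon)\gamma^2 \norm{\phi''(v)}_2$.

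No step here looks like a serious obstacle — this lemma is essentially a bookkeeping consequence of the fact that $\delta_s$ is (up to an $\oms \approx_\epsilon \ms$ rescaling) a projection of a unit $\omw$-weighted vector and therefore has small $\ell_2$-norm, which then dominates $\ell_\infty$. The only place to be careful is the direction of the Cauchy--Schwarz split: one might be tempted to bound $\sum_i \phi''(v)_i a_i^2 \le \norm{\phi''(v)}_2 \norm{a^2}_2$, but that produces $\norm{a}_4^2$, which is not directly controlled. Splitting instead as $|a_i|\cdot |a_i|$ with one copy bounded in $\ell_\infty$ is what makes the $\norm{\phi''(v)}_2$ factor appear with exponent one, matching the statement.
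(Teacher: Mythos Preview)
Your proof is correct and essentially identical to the paper's. The paper cites the already-stated bound $\norm{\ms^{-1}\delta_s}_2 \le \exp(5\epsilon)\gamma$ from \Cref{lem:projection_sizes} directly rather than rederiving it via $\delta_p$, and then uses exactly your inequality $\norm{a}_{\phi''(v)}^2 \le \norm{a}_\infty \cdot \norm{a}_2 \cdot \norm{\phi''(v)}_2$ for the second part.
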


\begin{proof}
The result follows from Lemma~\ref{lem:projection_sizes}, $\norm{\ms^{-1}\delta_{s}}_{\infty}\leq\norm{\ms^{-1}\delta_{s}}_{2}$,
and 
\begin{align*}
\norm{\ms^{-1}\delta_{s}}_{\phi''(v)}^{2}\leq\norm{\ms^{-1}\delta_{s}}_{\infty}\cdot\norm{\ms^{-1}\delta_{s}}_{2}\cdot\norm{\phi''(v)}_{2}\,.
\end{align*}
\end{proof}
\begin{lemma}[Multiplicative Stability and Second Order $x$ Bound]
\label{lem:stoch_x_bounds}In Algorithm~\ref{alg:short_step_log} we
have 
\begin{align*}
\mathbb{P} \Big[ \norm{\mx^{-1}\delta_{x}}_{\infty}\leq\exp(6\epsilon)\gamma \Big] = 1 {~and~} \E\Big[ \norm{\mx^{-1}\delta_{x}}_{\phi''(v)}^{2} \Big]\le7\gamma^{2}\norm{\phi''(v)}_{2}\,.
\end{align*}
\end{lemma}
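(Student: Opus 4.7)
The starting point is the identity $\mx^{-1}\delta_{x} = (\mx^{-1}\omx)(g - \mr\delta_{r})$, obtained directly from Line~\ref{line:log:delta_x} of Algorithm~\ref{alg:short_step_log}. Since $\ox\approx_{\epsilon} x$, the diagonal matrix $\mx^{-1}\omx$ has entries in $[\exp(-\epsilon),\exp(\epsilon)]$, so the two claimed bounds both reduce (up to factors of $\exp(\epsilon)$ and $\exp(2\epsilon)$) to proving
\begin{align*}
\norm{g - \mr\delta_{r}}_{\infty}\leq O(\gamma) \text{ deterministically} \quad \text{and} \quad \E\norm{g - \mr\delta_{r}}_{\phi''(v)}^{2} \leq O(\gamma^{2}) \norm{\phi''(v)}_{2}.
\end{align*}
Throughout I will use the input size bounds $\norm{g}_{2} = \gamma$ and $\norm{\delta_{r}}_{2}\leq(1+\epsilon)\exp(4\epsilon)\gamma$ from Lemma~\ref{lem:projection_sizes}, and the crucial fact that $\E[\mr]=\mi$ because $\mr_{ii}=\min\{q_{i},1\}^{-1}$ with probability $\min\{q_{i},1\}$.

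\textbf{Almost-sure $\ell_{\infty}$ bound.} By the triangle inequality $\norm{g-\mr\delta_{r}}_{\infty}\leq\norm{g}_{\infty}+\norm{\mr\delta_{r}}_{\infty}\leq\gamma+\norm{\mr\delta_{r}}_{\infty}$. For each coordinate $i$, I will show the deterministic bound $|(\mr\delta_{r})_{i}|\leq\norm{\delta_{r}}_{2}$. When $q_{i}\geq 1$ this is immediate since $\mr_{ii}=1$ and $|(\delta_{r})_{i}|\leq\norm{\delta_{r}}_{2}$. When $q_{i}<1$ we have $|(\mr\delta_{r})_{i}|\leq|(\delta_{r})_{i}|/q_{i}$, and the sampling-distribution condition $q_{i}\geq\sqrt{m}\bigl((\delta_{r})_{i}^{2}/\norm{\delta_{r}}_{2}^{2}+1/m\bigr)$ gives simultaneously $q_{i}\geq\sqrt{m}(\delta_{r})_{i}^{2}/\norm{\delta_{r}}_{2}^{2}$ and $q_{i}\geq 1/\sqrt{m}$; multiplying these two inequalities yields $q_{i}^{2}\geq(\delta_{r})_{i}^{2}/\norm{\delta_{r}}_{2}^{2}$, i.e.\ $|(\delta_{r})_{i}|/q_{i}\leq\norm{\delta_{r}}_{2}$. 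Combining gives $\norm{g-\mr\delta_{r}}_{\infty}\leq\gamma+(1+\epsilon)\exp(4\epsilon)\gamma$, and multiplying by $\exp(\epsilon)$ for the $\mx^{-1}\omx$ factor produces a bound of the form $\exp(O(\epsilon))\gamma$, which is then checked to fit inside $\exp(6\epsilon)\gamma$ for $\epsilon\leq 1/80$.

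\textbf{Expected weighted second moment.} Because $\E[\mr]=\mi$, I write a bias-variance decomposition
\begin{align*}
\E\norm{g-\mr\delta_{r}}_{\phi''(v)}^{2}
=\norm{g-\delta_{r}}_{\phi''(v)}^{2}+\E\norm{(\mr-\mi)\delta_{r}}_{\phi''(v)}^{2},
\end{align*}
the cross term vanishing by the same mean-one property. The bias term is bounded using the coordinate-wise inequality $\norm{\cdot}_{\phi''(v)}^{2}\leq\norm{\cdot}_{2}^{2}\cdot\norm{\phi''(v)}_{\infty}\leq\norm{\cdot}_{2}^{2}\cdot\norm{\phi''(v)}_{2}$, together with $\norm{g-\delta_{r}}_{2}\leq\norm{g}_{2}+\norm{\delta_{r}}_{2}\leq(1+(1+\epsilon)\exp(4\epsilon))\gamma$. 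For the variance term I compute $\mathrm{Var}(\mr_{ii})=(1-q_{i})/q_{i}\leq 1/q_{i}$ (trivially zero when $q_{i}\geq 1$), so each summand satisfies $\E[(\mr_{ii}-1)^{2}(\delta_{r})_{i}^{2}]\leq(\delta_{r})_{i}^{2}/q_{i}\leq\norm{\delta_{r}}_{2}^{2}/\sqrt{m}$ by using the part $q_{i}\geq\sqrt{m}(\delta_{r})_{i}^{2}/\norm{\delta_{r}}_{2}^{2}$ of the sampling condition. Summing and applying Cauchy-Schwarz $\sum_{i}\phi''(v_{i})\leq\sqrt{m}\norm{\phi''(v)}_{2}$ yields the variance contribution $\leq\norm{\delta_{r}}_{2}^{2}\norm{\phi''(v)}_{2}$. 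Adding the two contributions, scaling by $\exp(2\epsilon)$ from $(\mx^{-1}\omx)^{2}$, and plugging in the size bounds gives a constant under $7$ for all $\epsilon\in(0,1/80]$.

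\textbf{Expected main obstacle.} The only subtlety is constant-tracking in both bounds: the analysis is tight enough that one needs to combine the two different lower bounds on $q_{i}$ multiplicatively in the $\ell_{\infty}$ argument (rather than taking the maximum) to get the clean $\norm{\delta_{r}}_{2}$ estimate, and one must be careful that the Cauchy-Schwarz step in the variance bound consumes the $\sqrt{m}$ from the sampling probability exactly. Everything else is a routine application of the bounds from Lemma~\ref{lem:projection_sizes} and the mean-one structure of $\mr$.
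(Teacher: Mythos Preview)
Your proof is correct and follows essentially the same strategy as the paper: reduce to bounding $g-\mr\delta_r$ via $\ox\approx_\epsilon x$, control $\norm{\mr\delta_r}_\infty$ using the lower bound $q_i\geq\sqrt{m}((\delta_r)_i^2/\norm{\delta_r}_2^2+1/m)$, and control the weighted second moment by combining the same lower bound with Cauchy--Schwarz on $\sum_i\phi''(v_i)$.

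There are two minor technical differences worth noting. For the $\ell_\infty$ bound, the paper uses an AM--GM style inequality $|(\delta_r)_i|\leq\tfrac{\sqrt{m}}{2}\norm{\delta_r}_2\bigl((\delta_r)_i^2/\norm{\delta_r}_2^2+1/m\bigr)$ to obtain $|(\delta_r)_i|/q_i\leq\tfrac{1}{2}\norm{\delta_r}_2$, whereas you multiply the two lower bounds on $q_i$ to get $q_i^2\geq(\delta_r)_i^2/\norm{\delta_r}_2^2$; your version is arguably cleaner, though it loses the factor $1/2$ (which is not needed anyway). For the second moment, the paper uses the crude split $\norm{g-\mr\delta_r}_{\phi''(v)}^2\leq 2\norm{g}_{\phi''(v)}^2+2\norm{\mr\delta_r}_{\phi''(v)}^2$ and bounds $\E\norm{\mr\delta_r}_{\phi''(v)}^2$ directly via $\E[\mr_{ii}^2]=1/\min\{q_i,1\}$, while your bias--variance decomposition exploiting $\E[\mr]=\mi$ is slightly sharper and yields a smaller constant; both fit under $7$ for $\epsilon\leq 1/80$.
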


\begin{proof}
Recall that $\delta_{x}=\omx g-\omx\mr\delta_{r}$ and $x\approx_{\epsilon}\ox\in\R^{m}$.
Consequently,
\begin{align*}
\norm{\mx^{-1}\delta_{x}}_{\infty} & \leq\exp(\epsilon)\cdot(\norm g_{\infty}+\norm{\mr\delta_{r}}_{\infty})\,.
\end{align*}
Now,  $\mr_{ii} \leq [\min\{1,q_{i}\}]^{-1}=\max\{1,q_{i}^{-1}\}$ by definition and therefore
\begin{align*}
\norm{\mr\delta_{r}}_{\infty} 
& \leq \max_{i\in[m]}\max\left\{ |[\delta_{r}]_{i}|\,,\,\frac{\left|[\delta_{r}]_{i}\right|}{\sqrt{m}\left( [\delta_{r}]_{i}^{2} / \norm{\delta_{r}}_{2}^{2} + 1/m \right)}\right\} 
\leq \max \left\{ \norm{\delta_{r}}_{\infty}\,,\,\frac{1}{2}\norm{\delta_{r}}_{2}\right\} 
\leq\norm{\delta_{r}}_{2}
\end{align*}
where we used that $q_i \geq \sqrt{m} \left( [\delta_{r}]_{i}^{2} / \norm{\delta_{r}}_{2}^{2} + 1/m \right)$ by assumption and
\begin{align}
\label{eq:entry_to_l2}
\left|[\delta_{r}]_{i}\right|
\leq \frac{\sqrt{m}}{2\|\delta_{r}\|_{2}}[\delta_{r}]_{i}^{2} + \frac{\|\delta_{r}\|_{2}}{2\sqrt{m}}
= \frac{\sqrt{m}}{2} \|\delta_{r}\|_{2} \cdot ( {[\delta_{r}]_{i}^{2}} / {\|\delta_{r}\|_{2}^{2}} + 1/m )\,.
\end{align}
Since by definition and triangle inequality $\norm{\delta_{r}}_{2}\leq\norm{\delta_{p}}_{2}+\norm{\delta_{c}}_{2}$
the desired bound on $\norm{\mx^{-1}\delta_{x}}_{\infty}$ follows
from Lemma~\ref{lem:projection_sizes} and $\exp(5\epsilon)(1+\epsilon)\leq2$
for $\epsilon\in[0,1/80]$.

Now, since $\ox\approx_{\epsilon}x$ the definition of $\delta_{x}$
and $\norm{a+b}_{2}^{2}\leq2\norm a_{2}^{2}+2\norm b_{2}^{2}$ imply
\begin{align*}
\norm{\mx^{-1}\delta_{x}}_{\phi''(v)}^{2}\leq\exp(2\epsilon)\norm{g-\mr\delta_{r}}_{\phi''(v)}^{2}\leq2\exp(2\epsilon)\norm g_{\phi''(v)}^{2}+2\exp(2\epsilon)\norm{\mr\delta_{r}}_{\phi''(v)}^{2}\,.
\end{align*}
Now, 
\begin{align*}
\norm g_{\phi''(v)}^{2}\leq\norm{\phi''(v)}_{2}\norm g_{\infty}\norm g_{2}\leq\norm{\phi''(v)}_{2}\norm g_{2}^{2}=\norm{\phi''(v)}_{2}\gamma^{2}\,.
\end{align*}
Further, the definition of $\mr$ and that%
\begin{align*}
[ \min\{1,q_{i}\} ]^{-1} \leq 1 + \frac{1}{q_{i}} \leq 1 + \frac{\norm{\delta_{r}}_{2}^{2}}{\sqrt{m}[\delta_{r}]_{i}^{2}}
\end{align*}
imply that 
\begin{align*}
\E_{\mr}\left[\normFull{\mr\delta_{r}}_{\phi''(v)}^{2}\right] 
= & ~ \sum_{i\in[m]}[\phi''(v)]_{i}\cdot\frac{[\delta_{r}]_{i}^{2}}{\min\{q_{i},1\}} \\
\leq & ~ \sum_{i\in[m]}[\phi''(v)]_{i}\cdot\left[[\delta_{r}]_{i}^{2}+\frac{1}{\sqrt{m}}\norm{\delta_{r}}_{2}^{2}\right]\\
\leq & ~ \norm{\phi''(v)}_{2}\norm{\delta_{r}}_{\infty}\norm{\delta_{r}}_{2}+\norm{\phi''(v)}_{2}\norm{\delta_{r}}_{2}^{2} \\
= & ~ 2\norm{\phi''(v)}_{2}\norm{\delta_{r}}_{2}^{2}\,.
\end{align*}
The result follows again from $\norm{\delta_{r}}_{2}\leq\norm{\delta_{p}}_{2}+\norm{\delta_{c}}_{2}$,
Lemma~\ref{lem:projection_sizes}, and $2\exp(2\epsilon)+4\exp(2\epsilon)((1+\epsilon)\exp(4\epsilon))^{2}\leq7$
for $\epsilon\in[0,1/80]$.
\end{proof}

With the previous bounds on first and second order error,
we now want to bound in \Cref{lem:log_exp_progress} how much our potential $\Phi$ changes.
For that we first require the following expansion of $\Phi$ obtained via Taylor-approximation.
This expansion is proven in \Cref{sec:potential_function}.

\begin{restatable*}{lemma}{expandPotential}
\label{lem:expand_potential} For all $j\in[k]$ let vector $u^{(j)}\in\Rn_{>0}$,
exponent $c_{j}\in\R$ and change $\delta^{(j)}\in\Rn$ induce vectors
$v,v^{\new}\in\R^{n}$ defined by
\begin{align*}
v_{i}\defeq\prod_{j\in[k]}(u_{i}^{(j)})^{c_{j}}\text{ and }v_{i}^{\new}\defeq\prod_{j\in[k]}(u_{i}^{(j)}+\delta_{i}^{(j)})^{c_{j}}\text{ for all }i\in[n]\,.
\end{align*}
Further, let $\mv\defeq\mdiag(v)$ and $\mU_{j}\defeq\mdiag(u^{(j)})$
and suppose that $\norm{\mU_{j}^{-1}\delta^{(j)}}_{\infty}\leq\frac{1}{12(1+\lambda\norm c_{1})}$
for all $j\in[k]$ and $v\leq\frac{13}{12}\vones$. Then $v^{\new}\approx_{1/(11\lambda)}v$
and
\begin{align*}
\Phi(v^{\new})\leq & \Phi(v)+\sum_{j\in[k]}c_{j}\phi'(v)^{\top}\mv\mU_{j}^{-1}\delta^{(j)}+2(1+\norm c_{1})\sum_{j\in[k]}|c_{j}|\cdot\norm{\mU^{-1}\delta^{(j)}}_{\phi''(v)}^{2}.
\end{align*}
\end{restatable*}

We now have everything to bound the effect of a step in terms of $\norm{\phi'(w)}_{2}$
and $\norm{\phi''(w)}_{2}$.

\begin{lemma}[Expected Potential Decrease]
\label{lem:log_exp_progress} In Algorithm~\ref{alg:short_step_log}
we have
\begin{align*}
\E [ \phicent(x^{\new},s^{\new},\mu^{\new}) ] \leq \phicent(x,s,\mu) + \phi'(w)^{\top}g + 9\epsilon\gamma\norm{\phi'(w)}_{2}+80\gamma^{2}\norm{\phi''(w)}_{2} .
\end{align*}
\end{lemma}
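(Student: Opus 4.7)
The plan is to apply \Cref{lem:expand_potential} with the three triples $(u^{(1)},\delta^{(1)},c_1) = (x, \delta_x, 1)$, $(u^{(2)},\delta^{(2)},c_2) = (s, \delta_s, 1)$, and $(u^{(3)},\delta^{(3)},c_3) = (\mu\vones, \delta_\mu\vones, -1)$, so that the induced $v$ equals $w = w(x,s,\mu)$ and $v^{\new}$ equals $w(x^{\new},s^{\new},\mu^{\new})$. This reduces the claim to bounding a first-order term in $\E[\delta_x]$, $\delta_s$, $\delta_\mu$ against $\norm{\phi'(w)}_2$, plus three second-order terms against $\norm{\phi''(w)}_2$. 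Taking expectation and then applying the lemma pointwise to $\delta^{(1)}$ requires a deterministic bound on $\norm{\mx^{-1}\delta_x}_\infty$, which is exactly what \Cref{lem:stoch_x_bounds} provides.

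Before invoking \Cref{lem:expand_potential} I would verify its hypotheses. Since $\|c\|_1 = 3$, the required stability is $\norm{\mU_j^{-1}\delta^{(j)}}_\infty \le 1/(12(1+3\lambda))$. For $j=1$ this is at most $\exp(6\epsilon)\gamma$ (deterministically, by \Cref{lem:stoch_x_bounds}); for $j=2$ it is at most $\exp(5\epsilon)\gamma$ (by \Cref{lem:log_s_second}); for $j=3$ it is $|\delta_\mu|/\mu \le r = \epsilon\gamma/\sqrt m$. All three are comfortably smaller than $1/(36\lambda)$ using $\gamma = \epsilon/(100\lambda)$ and $\epsilon \le 1/80$. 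The condition $v \le (13/12)\vones$ follows from $\epsilon$-centrality: $w \approx_\epsilon \tau(x,s) = \vones$.

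For the first-order term I would split it as $\phi'(w)^\top \mw(\mx^{-1}\E[\delta_x] + \ms^{-1}\delta_s) - \mu^{-1}\delta_\mu\,\phi'(w)^\top\mw\vones$. Applying Cauchy--Schwarz and \Cref{lem:expect_bound} to the first piece yields at most $\phi'(w)^\top g + 7\epsilon\gamma\norm{\phi'(w)}_2$. For the $\mu$ piece, $\mw \preceq \exp(\epsilon)\mi$ combined with $\norm{\vones}_2 = \sqrt m$ and $|\delta_\mu/\mu| \le \epsilon\gamma/\sqrt m$ gives at most $\exp(\epsilon)\epsilon\gamma\norm{\phi'(w)}_2$. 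Summing and using $\epsilon \le 1/80$ gives the claimed $9\epsilon\gamma\norm{\phi'(w)}_2$.

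For the second-order terms, \Cref{lem:expand_potential} multiplies each by $2(1+\|c\|_1)|c_j| = 8$. Using \Cref{lem:stoch_x_bounds}, the $x$-contribution is at most $8 \cdot 7\gamma^2\norm{\phi''(w)}_2 = 56\gamma^2\norm{\phi''(w)}_2$ in expectation; using \Cref{lem:log_s_second}, the $s$-contribution is at most $8\exp(10\epsilon)\gamma^2\norm{\phi''(w)}_2 \le 16\gamma^2\norm{\phi''(w)}_2$; the $\mu$-contribution is $8(\delta_\mu/\mu)^2\norm{\phi''(w)}_1 \le 8(\epsilon\gamma/\sqrt m)^2\sqrt m\norm{\phi''(w)}_2$, which is lower order and absorbed into the constant. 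Summing gives at most $80\gamma^2\norm{\phi''(w)}_2$. The main obstacle is purely bookkeeping: tracking the multiplicative errors of order $\exp(O(\epsilon))$ and verifying that, with $\gamma = \epsilon/(100\lambda)$ and $\epsilon \le 1/80$, everything fits inside the stated constants $9$ and $80$.
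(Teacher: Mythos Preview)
Your proposal is correct and follows essentially the same approach as the paper's proof: the same instantiation of \Cref{lem:expand_potential} with $(x,\delta_x,1)$, $(s,\delta_s,1)$, $(\mu\vones,\delta_\mu\vones,-1)$, the same use of \Cref{lem:expect_bound}, \Cref{lem:log_s_second}, and \Cref{lem:stoch_x_bounds} to control the first- and second-order terms, and the same handling of the $\mu$ contribution. The only differences are in how you present the constant bookkeeping (e.g., you spell out $56+16+\cdots\le 80$, while the paper just cites the lemmas).
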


\begin{proof}
We apply Lemma~\ref{lem:expand_potential} with $u^{(1)} = x$, $\delta^{(1)} = \delta_{x}$,
$c^{(1)} = 1$, $u^{(2)} = s$, $\delta^{(2)} = \delta_{s}$, $c^{(2)}=1$,
$u^{(3)} = \mu\vones$, $\delta^{(3)} = \delta_{\mu}\vones$, and $c^{(3)}=-1$.
Note that, $v$ in the context of Lemma~\ref{lem:expand_potential}
is precisely $w(x,s,\mu)$ and $v^{\new}$ in the context of Lemma~\ref{lem:expand_potential}
is precisely $w( x + \delta_{x} , s + \delta_{s} , \mu + \delta_{\mu} )$.

Now, by Lemma~\ref{lem:log_s_second}, the definition of $\gamma$,
and $\epsilon\in[0,1/80]$ we have
\begin{align*}
\norm{\ms^{-1}\delta_{s}}_{\infty} \leq 2 \gamma \leq \frac{1}{50\lambda} \leq \frac{1}{12(1+\norm c_{1}\lambda)}\,.
\end{align*}
Similarly, Lemma~\ref{lem:stoch_x_bounds} implies
\begin{align*}
\norm{\mx^{-1}\delta_{x}}_{\infty} \leq 2\gamma \leq \frac{1}{50\lambda} \leq \frac{1}{12(1+\norm c_{1}\lambda)}\,.
\end{align*}
Further, by definition of $r$ we have $\mu^{-1}|\delta_{\mu}| \leq \gamma\leq(12(1+\norm c_{1}\lambda))^{-1}$l. Thus, Lemma~\ref{lem:expand_potential} implies 
\begin{align}
\E[\phicent(x^{\new},s^{\new},\mu^{\new})]
& \leq\Phi(x,s,\mu)+\phi'(w)^{\top}\mw\left[(\mx^{-1}\E [ \delta_{x} ] +\ms^{-1}\delta_{s}-\mu^{-1}\delta_{\mu}\vones)\right]\nonumber \\
 & \enspace\enspace\enspace+8\left(\norm{\E \mx^{-1}\delta_{x}}_{\phi''(w)}^{2}+\norm{\ms^{-1}\delta_{x}}_{\phi''(w)}^{2}+\norm{\mu^{-1}\delta_{\mu}\vones}_{\phi''(w)}^{2}\right)\,.\label{eq:step}
\end{align}

Now, since $w\approx_{\epsilon}\vones$, the assumption on $\delta_{\mu}$
implies 
\begin{align*}
|(\delta_{\mu}/\mu)\phi'(w)^{\top}\mw\vones|\leq|(\delta_{\mu}/\mu)|\exp(\epsilon)\norm{\phi'(w)}_{2}\sqrt{m}\leq2\epsilon\gamma\norm{\phi'(w)}_{2}
\end{align*}
and
\begin{align*}
\norm{(\delta_{\mu}/\mu)\vones}_{\phi''(w)}^{2}=|(\delta_{\mu}/\mu)|^{2}\norm{\phi''(w)}_{1}\leq ( \epsilon\gamma / \sqrt{m} )^{2}\sqrt{m}\norm{\phi''(w)}_{2}\leq\gamma^{2}\norm{\phi''(w)}_{2}\,.
\end{align*}
The result follows by applying
Lemmas~\ref{lem:expect_bound},~\ref{lem:log_s_second},~and~\ref{lem:stoch_x_bounds} to the  expectation of (\ref{eq:step}).
\end{proof}

To obtain our desired bound on $\E [ \phicent(x^{\new},s^{\new},\mu^{\new}) ] $
we will use the following general \Cref{lem:centrality_main_general}
to obtain bounds on $\norm{\phi'(w)}_{2}$
and $\norm{\phi''(w)}_{2}$.
\Cref{lem:centrality_main_general} is proven in \Cref{sec:potential_function}.

\begin{restatable*}{lemma}{centralityMainGeneral}
\label{lem:centrality_main_general}
	Let $U\subset\R^{m}$ be an axis-symmetric convex set\footnote{A set $U \subset\R^{m}$ is axis-symmetric if $x \in U$ implies $y \in U$ for all $y \in \R^m$ with $y_i \in \{-x_i, x_i\}$ for all $i \in [m]$.}
	which is contained in the $\ell_{\infty}$ ball of
	radius $1$ and contains the $\ell_{\infty}$ ball of radius $u\leq 1$, i.e. $x \in U$ implies $\norm{x}_\infty \leq 1$ and for some $\norm{x}_\infty \leq u$ implies $x \in U$. Let
	$w,v\in\R^{m}$ such that $\|w-v\|_{\infty}\leq\delta\leq\frac{1}{5\lambda}$.
	Let $g=-\gamma\argmax_{z\in U}\left\langle \nabla\Phi(v),z\right\rangle $
	and $\norm h_{U}\defeq\max_{z\in U}\left\langle h,z\right\rangle $
	with $\gamma\geq0$ and
	\begin{align*}
	\delta_{\Phi}\defeq\phi'(w)^{\top}g+c_{1}\gamma\norm{\phi'(w)}_{U}+c_{2}\gamma^{2}\norm{\phi''(w)}_{U}
	\end{align*}
	for some $c_{1},c_{2}\geq0$ satisfying $\lambda\gamma\leq\frac{1}{8}$,
	$2\lambda\delta+c_{2}\lambda\gamma\leq 0.5 ( 1 - c_{1} )$. Then, 
	\begin{align*}
	\delta_{\Phi}\leq & - 0.5 (1-c_{1} ) \lambda\gamma u\Phi(w)+m
	\end{align*}
\end{restatable*}

Finally, we can prove the desired bound on $\E [ \phicent(x^{\new},s^{\new},\mu^{\new}) ] $
by combining the preceding lemma with known bounds on $\norm{\phi'(w)}_{2}$
and $\norm{\phi''(w)}_{2}$. We first provide a very general lemma
for this which we can use in both the log barrier and LS barrier analysis
and then we use it to prove our desired bound.

\begin{lemma}[Centrality Improvement of Log Barrier Short-Step] \label{lem:centrality_main}
In Algorithm~\ref{alg:short_step_log} we have 
\begin{align*}
\E [ \phicent(x^{\new},s^{\new},\mu^{\new}) ] \leq(1-\lambda r)\phicent(x,s,\mu)+\exp(\lambda\epsilon / 6 )\,.
\end{align*}
\end{lemma}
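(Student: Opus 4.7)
The plan is to reduce the statement to a direct invocation of the expected potential decrease bound (\Cref{lem:log_exp_progress}) followed by the general centrality-improvement estimate (\Cref{lem:centrality_main_general}). Starting from \Cref{lem:log_exp_progress}, I have
\begin{align*}
\E[\phicent(x^{\new},s^{\new},\mu^{\new})] \leq \phicent(x,s,\mu) + \phi'(w)^{\top} g + 9\epsilon\gamma \norm{\phi'(w)}_{2} + 80\gamma^{2}\norm{\phi''(w)}_{2},
\end{align*}
so it suffices to bound the last three terms by $-\lambda r \phicent(x,s,\mu) + \exp(\lambda\epsilon/6)$.

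The key observation is that the algorithm sets $g = -\gamma \nabla\Phi(\ov)/\norm{\nabla\Phi(\ov)}_{2}$, which is exactly $-\gamma \argmax_{z\in U}\langle \nabla\Phi(\ov), z\rangle$ for $U$ the unit Euclidean ball in $\R^m$. This $U$ is axis-symmetric, contained in the $\ell_\infty$-ball of radius $1$, and contains the $\ell_\infty$-ball of radius $u \defeq 1/\sqrt{m}$, and moreover $\norm{h}_U = \norm{h}_2$ for all $h$. Thus I can apply \Cref{lem:centrality_main_general} with $v = \ov$, $\delta = \gamma$ (since $\norm{\ov - w}_\infty \leq \gamma$ by Line~\ref{line:log:v}), $c_1 = 9\epsilon$, and $c_2 = 80$.

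The hypotheses of \Cref{lem:centrality_main_general} are all routine to verify from $\gamma = \epsilon/(100\lambda)$ and $\epsilon \leq 1/80$: $\delta = \gamma \leq 1/(5\lambda)$ trivially; $\lambda\gamma = \epsilon/100 \leq 1/8$; and $2\lambda\delta + c_2\lambda\gamma = 82\lambda\gamma = 82\epsilon/100$ is at most $0.5(1 - 9\epsilon)$ for $\epsilon \leq 1/80$. The lemma then yields
\begin{align*}
\phi'(w)^{\top} g + 9\epsilon\gamma \norm{\phi'(w)}_{2} + 80\gamma^{2}\norm{\phi''(w)}_{2} \leq -0.5(1-9\epsilon)\lambda\gamma u \phicent(x,s,\mu) + m.
\end{align*}

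To finish, I compare coefficients. Since $u = 1/\sqrt m$ and $r = \epsilon\gamma/\sqrt m$, the claim $0.5(1-9\epsilon)\lambda\gamma/\sqrt m \geq \lambda r = \lambda\epsilon\gamma/\sqrt m$ reduces to $0.5(1-9\epsilon) \geq \epsilon$, i.e.\ $\epsilon \leq 1/11$, which is satisfied. For the additive term, $\lambda\epsilon/6 = 6\log(240m/\epsilon^2) \geq \log m$, so $m \leq \exp(\lambda\epsilon/6)$. Combining these yields the target bound. The only step requiring care is tracking the small constants so that the improvement rate $0.5(1-9\epsilon)\lambda\gamma u$ genuinely dominates $\lambda r$; this is the main (albeit routine) obstacle, and it forces the specific choice $r = \epsilon\gamma/\sqrt m$ built into the algorithm.
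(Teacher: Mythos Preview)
Your proposal is correct and follows essentially the same route as the paper: invoke \Cref{lem:log_exp_progress}, then apply \Cref{lem:centrality_main_general} with $U$ the unit $\ell_2$ ball, $\delta=\gamma$, $c_1=9\epsilon$, $c_2=80$, $u=1/\sqrt{m}$, and finish by comparing the resulting coefficient $0.5(1-9\epsilon)\lambda\gamma/\sqrt{m}$ with $\lambda r$ and bounding $m\le\exp(\lambda\epsilon/6)$. The only cosmetic difference is that the paper first simplifies $0.5(1-9\epsilon)\ge 1/4$ before comparing to $r$, whereas you compare directly; both yield the same conclusion.
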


\begin{proof}
Consider an invocation of \Cref{lem:centrality_main_general} with $\delta = \gamma$, $c_{1}=9\epsilon$, $c_{2}=80$, $U$ set to be the unit $\ell_2$ ball. Note that the assumptions of \Cref{lem:centrality_main_general} hold as $\gamma=\epsilon/(100\lambda)$. Further, as $U$ contains the $\ell_{\infty}$ ball of radius $\frac{1}{\sqrt{m}}$, $\|\ov-w\|_{\infty}\leq \gamma$, and $g = -\gamma\nabla\Phi(\overline{v})/\norm{\nabla\Phi(\overline{v})}_{2}$ we see that \Cref{lem:centrality_main_general}  applied to the claim of \Cref{lem:log_exp_progress} yields that 
\begin{align*}
\E [ \phicent(x^{\new},s^{\new},\mu^{\new}) ] & \leq\phicent(x,s,\mu)-\left(\frac{1-c_{1}}{2}\right)\frac{\lambda\gamma}{\sqrt{m}}\phicent(x,s,\mu)+m\\
& \leq \left(1-\frac{\lambda\gamma}{4\sqrt{m}} \right) \phicent(x,s,\mu)+m.
\end{align*}
The result follows from $\exp( \lambda \epsilon / 6)\geq m$
and $r=\frac{\epsilon\gamma}{\sqrt{m}}$.

\end{proof}

\subsubsection{Feasibility \label{sec:feasibility}}

Here we prove item 3 of \cref{lem:main_lem}, analyzing the effect
of Algorithm~\ref{alg:short_step_log} on approximate $x$ feasibility. Our proof proceeds by first showing that if instead of taking the step we take in \Cref{alg:short_step_log}, we instead took an idealized step where $\omh$ was replaced with $\oma^\top \oma$ and $\mr$ was replaced with $\mI$, then the resulting $x$ would have exactly met the linear feasibility constraints. We then bound the difference between linear feasbility of the actual step and this idealized step. Ultimately, the proof reveals that so long as $\oma^{\top}\mr\oma$ approximates $\oma^{\top}\oma$, and $\omh$ approximates $\oma^\top \oma$ well enough and the step is not too large, then the infeasibility is not too large either.

\begin{lemma}[Feasibility Bound]\label{lem:infeasibility_bound} 
In \Cref{alg:short_step_log}
if $\oma^{\top}\mr\oma\approx_{\gamma}\oma^{\top}\oma$ then
\begin{equation}\label{eq:log_feas_bound}
\norm{\ma^{\top}x^{\new}-b}_{\ma^{\top}\mx^{\new}(\ms^{\new})^{-1}\ma}\leq 0.5 \epsilon\gamma ( \mu^{\new} )^{1/2}
\end{equation}
and consequently, this holds with probability at least $1-r$.
\end{lemma}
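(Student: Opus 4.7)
The plan is to derive an exact algebraic identity for $\ma^{\top}x^{\new}-b$ that isolates a single error factor $\omh - \oma^{\top}\mr\oma$, then to use the hypothesis $\oma^{\top}\mr\oma \approx_\gamma \oma^{\top}\oma$ together with $\omh \approx_\gamma \oma^{\top}\oma$ (Line~\ref{line:logstep:omh}) and \Cref{lem:mult_to_add} to bound that factor spectrally, and finally to control the vector it acts on using $\epsilon$-centrality and $\|g\|_2 = \gamma$. Starting from $\delta_x = \omx(g - \mr\delta_r)$, unfolding $\omw^{1/2}\delta_r = \oma\omh^{-1}\oma^{\top}\omw^{1/2}g + \mu^{-1/2}\oma\omh^{-1}(\ma^{\top}x - b)$, and using the identity $\ma^{\top}\omx = \sqrt{\mu}\,\oma^{\top}\omw^{1/2}$ (immediate from $\oma = \omx^{1/2}\oms^{-1/2}\ma$ and $\omw = \omx\oms/\mu$ together with the fact that diagonal matrices commute), a short calculation yields
\begin{align*}
\ma^{\top}x^{\new}-b \;=\; (\omh - \oma^{\top}\mr\oma)\,\omh^{-1}\bigl[(\ma^{\top}x-b) + \sqrt{\mu}\,\oma^{\top}\omw^{1/2}g\bigr].
\end{align*}
Conceptually, this identity shows that had we used the idealized $\omh = \oma^{\top}\oma$ together with $\mr = \mI$, the leading factor would vanish and $x^{\new}$ would be exactly feasible; thus the infeasibility is controlled entirely by the discrepancy $\omh - \oma^{\top}\mr\oma$.

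For the spectral step, combining $\omh \approx_\gamma \oma^{\top}\oma$ with $\oma^{\top}\mr\oma \approx_\gamma \oma^{\top}\oma$ gives $\omh \approx_{2\gamma} \oma^{\top}\mr\oma$, and then \Cref{lem:mult_to_add} yields $\|\omh^{-1/2}(\omh - \oma^{\top}\mr\oma)\omh^{-1/2}\|_2 = O(\gamma)$. Writing $(\omh - \oma^{\top}\mr\oma)\omh^{-1}(\omh - \oma^{\top}\mr\oma) = \omh^{1/2}\mK^2\omh^{1/2}$ with $\mK := \omh^{-1/2}(\omh - \oma^{\top}\mr\oma)\omh^{-1/2}$ yields the operator bound $\|(\omh - \oma^{\top}\mr\oma)\omh^{-1}v\|_{\omh^{-1}} \leq O(\gamma)\,\|v\|_{\omh^{-1}}$ for any $v$. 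Since Lemmas~\ref{lem:log_s_second} and~\ref{lem:stoch_x_bounds} ensure the step changes each coordinate of $x$ and $s$ by at most an $O(\gamma)$ multiplicative factor, $(\ma^{\top}\mx^{\new}(\ms^{\new})^{-1}\ma)^{-1} \approx_{O(\epsilon+\gamma)} \omh^{-1}$, so the bound transfers to the target norm up to an $e^{O(\epsilon)}$ factor.

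To bound the inner vector in $\omh^{-1}$-norm, split it in two. For $\|\ma^{\top}x - b\|_{\omh^{-1}}$, use $\omh \approx_\gamma \oma^{\top}\oma = \ma^{\top}\omx\oms^{-1}\ma \approx_{2\epsilon} \ma^{\top}\mx\ms^{-1}\ma$ (via $\ox \approx_\epsilon x$, $\os \approx_\epsilon s$) to reduce to the $\epsilon$-centered hypothesis $\|\ma^{\top}x - b\|_{(\ma^{\top}\mx\ms^{-1}\ma)^{-1}} \leq \epsilon\gamma\sqrt{\mu}$. For $\sqrt{\mu}\,\|\oma^{\top}\omw^{1/2}g\|_{\omh^{-1}}$, note that $\oma(\oma^{\top}\oma)^{-1}\oma^{\top}$ is an orthogonal projection and $\omw \approx_{3\epsilon}\mI$, so this term is at most $\sqrt{\mu}\,e^{O(\epsilon)}\|g\|_2 = \sqrt{\mu}\,e^{O(\epsilon)}\gamma$. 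Summing gives $O(\sqrt{\mu}\,\gamma)$, so the full infeasibility is $O(\sqrt{\mu}\,\gamma^2)$. Since $\gamma = \epsilon/(100\lambda) \leq \epsilon/100$ and $\mu^{\new} \geq (1-r)\mu \geq \mu/2$, careful tracking of the accumulated $e^{O(\epsilon)}$ slack (using $\epsilon < 1/80$) leaves enough room to produce the claimed $\tfrac{1}{2}\epsilon\gamma(\mu^{\new})^{1/2}$ prefactor rather than a generic $O(1)\cdot\epsilon\gamma(\mu^{\new})^{1/2}$.

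For the probability statement, the sampling distribution in \Cref{alg:short_step_log} satisfies $q_i \geq C\,\sigma_i(\oma)\log(m/(\epsilon r))\gamma^{-2}$, i.e.\ leverage-score sampling with $\gamma^{-2}\log(m/r)$ oversampling. A standard matrix-Chernoff / Spielman--Srivastava argument then yields $\oma^{\top}\mr\oma \approx_\gamma \oma^{\top}\oma$ with failure probability at most $r$ once $C$ is a sufficiently large absolute constant, and the deterministic analysis above then supplies the bound. The main obstacle I anticipate is not conceptual but accounting: the chain involves several $\approx$ relations, each contributing an $e^{O(\epsilon)}$ or $e^{O(\gamma)}$ factor, and one must verify that the accumulated multiplicative drift still leaves room for the $1/2$ prefactor; the specific scaling $\gamma = \epsilon/(100\lambda)$ with $\epsilon < 1/80$ is calibrated for exactly this purpose.
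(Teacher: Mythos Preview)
Your proposal is correct and follows essentially the same route as the paper: you derive the same algebraic identity $\ma^{\top}x^{\new}-b = \sqrt{\mu}\,(\mI - \oma^{\top}\mr\oma\,\omh^{-1})d$ with $d = \oma^{\top}\omw^{1/2}g + \mu^{-1/2}(\ma^{\top}x-b)$, bound the error operator via \Cref{lem:mult_to_add}, split $\|d\|_{\omh^{-1}}$ into the feasibility and projected-gradient pieces, and transfer to the target norm using the multiplicative stability from Lemmas~\ref{lem:log_s_second}--\ref{lem:stoch_x_bounds}. The only cosmetic difference is that the paper measures first in the $(\oma^{\top}\oma)^{-1}$-norm while you work directly in the $\omh^{-1}$-norm; these differ by an $e^{\gamma}$ factor and the constants work out identically, and the paper invokes \cite{clmmps15} for the leverage-score concentration where you cite the generic matrix-Chernoff argument.
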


\begin{proof}
Recall that $x^{\new}=x+\delta_{x}$ where $\delta_x = \omx(g - \mr \delta_r)$ and $\delta_{r}\defeq\delta_{p}+\delta_{c}$. Further, recall that
\[
\delta_{p} = \omw^{-1/2}\oma\omh^{-1}\oma^{\top}\omw^{1/2}g 
\text{ and }
\delta_{c} = \mu^{-1/2}\cdot\omw^{-1/2}\oma\omh^{-1}(\ma^{\top}x-b)
~.
\]
Consequently, we can rewrite the step as
\[
\delta_r = \omw^{-1/2} \oma \omh^{-1} d 
\text{ where }
d \defeq \oma^\top \omw^{1/2} g +  \mu^{-1/2} (\ma^\top x - b) ~.
\]

Now, consider the idealized step $x^*$ where there is no matrix approximation error,  i.e. $\omh = \oma^\top \oma$, and no sampling error, i.e. $\mI = \mR$. Formally, $x^* \defeq x + \delta_x^*$ where $\delta_x^* \defeq \omx(g - \delta_r^*)$, and $\delta_r^* = \omw^{-1/2} \oma (\oma^\top \oma)^{-1} d$. Now since, $\ma^{\top}\omx=\sqrt{\mu}\oma^{\top}\omw^{1/2}$  we have
\[
\ma^\top x^*
= \ma^\top x
+ \sqrt{\mu} \oma^\top \left( \mw^{1/2} g - \oma (\oma^\top \oma)^{-1} d \right)
= \ma^\top x - (\ma^\top x - b) = b ~.
\]
Thus, we see that $x^*$ obeys the linear constraints for feasibility. Consequently, it suffices to bound the error induced by matrix approximation error and sampling error, i.e. 
\begin{align}
\ma^\top x - b
&= \ma^\top (x - x^*)
= - \ma^\top \omx 
(\mR \delta_r - \delta_r^*)
=  \sqrt{\mu} \oma^\top \left(
\oma (\oma^\top \oma)^{-1} - \mr \oma \omh^{-1} \right) d
\nonumber
\\
&= \sqrt{\mu} \left(\mI - \oma^\top \mR \oma \omh^{-1}\right) d
\label{feasibility:equality}
\end{align}
where we used again that $\ma^{\top}\omx=\sqrt{\mu}\oma^{\top}\omw^{1/2}$  and that diagonal matrices commute.

Now, since $\oma^{\top}\mr\oma\approx_{\gamma}\oma^{\top}\oma\approx_{\gamma}\omh$
we have
\begin{align*}
\norm{(\oma^{\top}\oma)^{-1/2}(\mi-\oma^{\top}\mr\oma\omh^{-1})\omh^{1/2}}_{2} & =\norm{(\oma^{\top}\oma)^{-1/2}(\omh-\oma^{\top}\mr\oma)\omh^{-1/2}}_{2}\\
 & \leq\exp(\gamma)\norm{\omh^{-1/2}(\omh-\oma^{\top}\mr\oma)\omh^{-1/2}}_{2}\leq3\gamma
\end{align*}
where we used Lemma~\ref{lem:mult_to_add} and $(2\gamma+4\gamma^{2})\exp(\gamma)\leq3\gamma$
for $\gamma\leq1/80$. Consequently, combining with \eqref{feasibility:equality} yields that\footnote{For further intuition regarding this proof, note that, ignoring the term $\norm{\oma^{\top}\omw^{-1/2}g}_{\omh^{-1}}$ and difference between the $\ma^{\top}\mx^{\new}(\ms^{\new})^{-1}\ma$ and $\oma^\top \oma$ norm, the inequality below shows that the feasibility of the primal solution is improved by a factor of $3\gamma < 1$. This improvement comes from the term $\delta_c$ defined in Line~\ref{line:log:delta_c}. The remainder of the proof after this step is therefor restricted to bounding the impact from the term  $\norm{\oma^{\top}\omw^{-1/2}g}_{\omh^{-1}}$ which is the impact of $\delta_p$ defined in Line~\ref{line:log:delta_c}.}
\begin{align*}
\norm{\ma^{\top}x^{\new}-b}_{(\oma^{\top}\oma)^{-1}} 
&=
\sqrt{\mu}\normFull{(\oma^{\top}\oma)^{-1/2}(\mi-\oma^{\top}\mr\oma\omh^{-1})\omh^{1/2}\omh^{-1/2}
d}_{2}\\
&\leq 
3\gamma\sqrt{\mu} \norm{d}_{\omh^{-1}}
\leq
3\gamma\sqrt{\mu}\left(\norm{\oma^{\top}\omw^{-1/2}g}_{\omh^{-1}}+\mu^{-1/2}\cdot\norm{\ma^{\top}x-b}_{\omh^{-1}}\right)\,.
\end{align*}
Further, by design of $g$ and that $\omw \approx_{3\epsilon} \mI$ and $3\epsilon \leq \gamma$
\begin{align*}
\norm{\oma^{\top}\omw^{-1/2}g}_{\omh^{-1}} 
\leq & ~ \exp(\gamma/2)\norm{\omw^{-1/2}g}_{\oma(\oma^{\top}\oma)^{-1}\oma^{\top}} 
\leq  \exp(\gamma/2)\norm{\omw^{-1/2}g}_{2} 
\leq  \exp(2\gamma)\norm g_{2}\leq2\gamma
\end{align*}
and by the approximate primal feasibility of $(x,s,\mu)$ we have
\begin{align*}
\norm{\ma^{\top}x-b}_{\omh^{-1}}\leq\exp(\gamma/2)\norm{\ma^{\top}x-b}_{\ma^{\top}\mx\ms^{-1}\ma}\leq2\epsilon\gamma\sqrt{\mu}\,.
\end{align*}
Combining yields that 
\begin{align*}
\norm{\ma^{\top}x^{\new}-b}_{(\oma^{\top}\oma)^{-1}}\leq3\gamma\sqrt{\mu}\left(2\gamma+2\epsilon\gamma\right)\leq9\gamma\sqrt{\mu}\gamma\,\leq 0.25 \epsilon\gamma \sqrt{\mu}
~.
\end{align*}

Now, since $\norm{\ms^{-1}\delta_{s}}_{\infty}\leq2\gamma$ and $\norm{\mx^{-1}\delta_{x}}_{\infty}\leq2\gamma$
by Lemma~\ref{lem:log_s_second} and Lemma~\ref{lem:stoch_x_bounds}
and $\gamma\leq\epsilon/100$ we have that $\ma^{\top}\mx^{\new}(\ms^{\new})^{-1}\ma\approx_{\epsilon}\ma^{\top}\mx\ms\ma$.
Combining with the facts that $\ma^{\top}\mx\ms\ma\approx_{2\epsilon}\oma^{\top}\oma$,
and $\mu^{\new}\approx_{2\epsilon\gamma}\mu$ yields (\ref{eq:log_feas_bound}).
The probability bound follows by the fact that $q_{i}\geq C\sigma_{i}(\oma)\log(n/(\epsilon r))\gamma^{-2}$
and Lemma 4 in \cite{clmmps15}. That is, the matrix $\mr$ corresponds to sampling each row of $\ma$ with probability at least proportional to its leverage score. 
\end{proof}

\subsubsection{Proof of Lemma~\ref{lem:main_lem}}

\label{sec:putting_it_all_together}

We conclude by proving Lemma~\ref{lem:main_lem}, the main claim
of this section.
\begin{proof}[Proof of Lemma~\ref{lem:main_lem}]

(1) Note that $r=\frac{\epsilon\gamma}{\sqrt{m}}=\frac{\epsilon^{2}}{100\lambda\sqrt{m}}=\frac{\epsilon^{3}}{3600\log(240\beta)\sqrt{m}}$
	for $\beta=\frac{m}{\epsilon^{2}}$. Thus, 
\begin{align*}
	12 \epsilon^{-1} \log(16m/r) 
	\leq 12 \epsilon^{-1} \log(16\beta^{3/2}\cdot3600\log(240\beta)) 
	\leq 12 \epsilon^{-1} \log((240)^{3}\beta^{2})
	\leq\lambda
\end{align*}
	where we used that $\sqrt{x}\log x\leq x$ for $x\geq1$, $(240)^{3}=16\cdot3600\cdot240$,
	and $\beta\geq1$
	
	(2) Note that $\delta_{s}=\mu^{1/2}\ma\omh^{-1}\oma^{\top}\omw^{1/2}g$. Consequently,
	$\delta_{s}\in\im(\ma)$ and this follows from the fact that $\ma y+s=c$
	for some $y\in\R^{n}$ by definition of $\epsilon$-centered.
	
	(3) and (4) these follow immediately from Lemma~\ref{lem:centrality_main}
	and Lemma~\ref{lem:infeasibility_bound} respectively.
\end{proof}

\subsection{Short-step Method for Lee-Sidford Weights}
\label{sec:ls_barrier_method}

In this section we show how to efficiently implement a short step
when the weight function is regularized variant of the weights induced by Lee-Sidford barriers \cite{ls19}, i.e. it is defined as follows and for brevity, we refer to it as the Lee-Sidford (LS)-weight function. 

\begin{definition}[Lee-Sidford Weight Function]
\label{def:LS_weight} We define the \emph{Lee-Sidford (LS) weight function}
for all $x,s \in \R_{>0}^{m}$ as $\tau(x,s) \defeq \sigma(x,s) + \frac{n}{m} \vones$
where $\sigma(x,s) \defeq \sigma( \mx^{1/2-\alpha} \ms^{-1/2-\alpha} \ma )$ for  $\alpha \defeq \frac{1}{4\log(4m/n)}$, and $\sigma$ defined as in \Cref{sec:preliminaries}.
\end{definition}

This weight function is analogous to the one used in \cite{blss20} which in turn is inspired by the weight functions and barriers used in \cite{ls19} (see these works for further intuition).

Our short-step algorithm for this barrier (\Cref{alg:short_step_log}) is very similar to the one for the logarithmic barrier (\Cref{alg:short_step_LS}). The primary difference is what $g$ is set to in the two algorithms. In both algorithms $g$ is chosen to approximate the negation of the gradient of the centrality potential while being small in requisite norms for the method. However, by choosing a different weight function both the gradient and the norm change. To explain the new step (and analyze our method)  throughout this section we make extensive use of the following \emph{mixed norm} and its \emph{dual vector} (variants of which were defined in \cite{ls19,blss20}).

\begin{definition}[Mixed Norm and Dual Vector]\label{def:mixed norm}
\label{def:mixed_norm} We define the \emph{mixed norm} $\| x \|_{\tau+\infty} \defeq \| x \|_{\infty} + \cnorm \| x \|_{\tau}$
with $\cnorm = \frac{10}{\alpha}$ and we define the \emph{dual vector} of
$g$ as
\begin{equation}
\label{eq:flat_op}
g^{\flat(w)} \defeq \argmax_{ \|x\|_{w + \infty} = 1} g^{\top} x.
\end{equation}
Further we define the dual norm $\|g\|_{\tau+\infty}^* := \langle g, g^{\flat(\tau)} \rangle$, i.e. the maximum attained by \eqref{eq:flat_op}.
\end{definition}

With this notation in mind, recall that in the short-step algorithm for the logarithmic barrier (\Cref{alg:short_step_LS})) we chose $g$ to approximate the direction which has best inner product with $- \nabla \Phi(w)$ and has $\norm{\cdot}_2$ at most $\gamma$. Formally, we picked $\ov$ with 
$\norm{\ov - w}_{\gamma}$ and let $g = - \gamma \nabla(\Phi(\ov)) / \norm{\nabla(\Phi(\ov))}_2$. Note that $g = - \gamma g^{\flat(\ov)}$ if in \eqref{eq:flat_op} we replaced  $\|x\|_{w + \infty}$ with $\|x\|_{2}$. 
Since when using the Lee-Sidford weight function (\Cref{def:LS_weight}) we instead wish to decrease $\Phi(w/\tau(x,s))$ we instead pick $\ov\in\R_{>0}^{m}$ such that $\norm{\ov- w  / \tau(x,s)}_{\infty}\leq\gamma$  and let $g \defeq - \gamma \nabla\Phi(\overline{v})^{\flat(\overline{\tau})}$ where $\flat$ is as defined in \Cref{def:mixed_norm}.
 
 This setting of $g$ is the primary difference between the algorithm in this section (\Cref{alg:short_step_LS}) and the one in previous section (\Cref{alg:short_step_log}). The other difference are induced by this change or done for technical or analysis reasons. We change the setting of $r$ and the requirements for $q$ to account for the fewer iterations we take and the different norm in which we analyze our error. Further, 
 we slightly change the step sizes for $\delta_x$ and $\delta_s$, i.e. in \Cref{alg:short_step_LS} we let
 \[
 \delta_{x} = \omx ((1+2\alpha)g - \mR \delta_r)
 \text{ and }
 \delta_s = (1 - 2\alpha) \oms \delta_p
 \]
(whereas in \Cref{alg:short_step_log} we set $\delta_x$ and $\delta_s$ as above with $\alpha = 0$). This change was also performed in \cite{blss20} to account for the effect of $\tau$ on the centrality potential and ensure sufficient progress is made.  %

\begin{algorithm2e}[h]

\caption{Short Step (LS Weight Function)}

\label{alg:short_step_LS}

\SetKwProg{Proc}{procedure}{}{}

\Proc{$\textsc{ShortStep}$$(x \in \R^m_{>0},s \in \R^m_{>0},\mu \in \R_{>0},\mu^{\new} \in \R_{>0})$}{

\LineComment{Fix $\tau(x,s) \defeq \sigma(x,s) + \frac{n}{m} \vones \in \R^{m} \text{ for all } x \in \R_{>0}^{m},s \in \R_{>0}^{m}$}

\LineComment{Let $\lambda = \frac{36\log(400m/\epsilon^{2})}{\epsilon}$,
$\gamma = \frac{\epsilon}{100\lambda}$, $r = \frac{\epsilon\gamma}{\sqrt{n}}$,
$\epsilon = \frac{\alpha^2}{240}$, $\alpha = \frac{1}{4\log(\frac{4m}{n})}$}

\textbf{\LineComment{Assume $(x,s,\mu)$ is $\epsilon$-centered
}}

\textbf{\LineComment{Assume $\delta_{\mu}\defeq\mu^{\new}-\mu$ satisfies
$|\delta_{\mu}|\leq r\mu$}}

\vspace{0.2in}

\LineComment{\textbf{Approximate the iterate and compute step direction}}

\State  Find $\ox,\os,\overline{\tau}$ such that $\ox \approx_{\epsilon}x$,
$\os\approx_{\epsilon}s$, $\overline{\tau}\approx_{\epsilon} \tau$ for $\tau \defeq \tau(x,s)$ \label{line:LSstep:xstau}

\State Find $\ov\in\R_{>0}^{m}$ such that $\norm{\ov- w  / \tau}_{\infty}\leq\gamma$   
 for $w \defeq w(x,s,\mu)$
\label{line:LSstep:v} %

\State $g= - \gamma \nabla\Phi(\overline{v})^{\flat(\overline{\tau})}$
where $h^{\flat(w)}$ is defined in \Cref{def:mixed_norm} \label{line:LSstep:g}

\vspace{0.2in}

\LineComment{\textbf{Sample rows}}

\State $\oma \defeq \omx^{1/2}\oms^{-1/2}\ma,\ow \defeq w(\ox,\os,\mu),\omx \defeq \mdiag(\ox),\oms \defeq \mdiag(\os),\omw \defeq \mdiag(\ow)$

\State Find $\omh$ such that $\omh\approx_{\gamma}\oma^{\top}\oma$\label{line:LSstep:omh}

\State Let $\delta_{p} \defeq \omw^{-1/2}\oma\omh^{-1}\oma^{\top}\omw^{1/2}g$,
$\delta_{c} \defeq \mu^{-1/2} \cdot \omw^{-1/2} \oma \omh^{-1} ( \ma^{\top}x - b )$,
and $\delta_{r} \defeq ( 1 + 2\alpha ) \delta_{p} + \delta_{c}$\label{line:LSstep:delta_c}

\State Find $q\in\Rn_{>0}$ such that $q_{i} \geq \frac{m}{\sqrt{n}} \cdot \left((\delta_{r})_{i}^{2}/\|\delta_{r}\|_{2}^{2}+1/m\right)+C\cdot\sigma_{i}(\oma)\log(m/(\epsilon r))\gamma^{-2}$
for some large enough constant $C$ 

\State Sample a diagonal matrix $\mr\in\R^{m\times m}$ where each
$\mr_{i,i} = [\min\{q_{i},1\}]^{-1}$ independently with probability $\min \{q_{i},1\}$
and $\mr_{i,i} = 0$ otherwise\label{line:LSstep:R}

\vspace{0.2in}

\LineComment{\textbf{Compute projected step}}

\State $\delta_{x}\leftarrow\omx((1+2\alpha)g-\mr\delta_{r})$

\State $\delta_{s}\leftarrow(1-2\alpha)\oms\delta_{p}$

\State $x^{\new}\leftarrow x+\delta_{x}$ and $s^{\new}\leftarrow s+\delta_{s}$

\Return$(x^{\new},s^{\new})$

}

\end{algorithm2e}

We analyze the performance of this short-step procedure for the Lee-Sidford barrier analogously to how we analyzed the short step procedure for the log barrier in \Cref{sec:log_barrier_method}. We again apply \Cref{lem:expand_potential} and \Cref{lem:centrality_main_general}  and together these lemmas reduce bounding the potential $\phicent(x^{\new},s^{\new},\mu^{\new})$ to carefully analyzing
bounding the first moments, second moments, and multiplicative stability induced by $\delta_{s}$, $\delta_{x}$,
$\delta_{\mu}$ as well as the change from $\tau(x,s)$ to $\tau(x^{\new}, s^{\new})$, which we denote by $\delta_{\tau} \defeq \tau(x^{\new}, s^{\new}) -  \tau(x,s)$. 

Key differences in the analysis of this section in the previous section are in how these bounds are achieved and in analyzing $\delta_\tau$. Whereas the analysis of the log barrier short-step method primarily considered the $\ell_2$ norm, here we consider the mixed norm, \Cref{def:mixed norm} (variants of which were analyzed in \cite{ls19,blss20}). For $\delta_x$ and $\delta_s$ this analysis is performed in \Cref{sub:ls:deltas}. For analyzing $\delta_{\tau}$ new analysis, though related to calculations in \cite{blss20}, is provided in \Cref{sub:delta_tau}. This analysis is then leveraged in \Cref{sec:ls:potential_decrease} to provide the main result of this section, that \Cref{alg:short_step_LS} is a valid short-step procedure. 

\subsubsection{Bounding $\delta_{x}$ and $\delta_{s}$}
\label{sub:ls:deltas}

Here we provide basic bounds on $\delta_x$, $\delta_s$, and related quantities.  First, we provide \Cref{lem:sigma_approx} and \Cref{lem:proj_bounds} which are restatements of lemmas from \cite{blss20} which immediately apply to our algorithm. Leveraging these lemmas, we state and prove \Cref{lem:proj_bounds_2} which bounds the effect of our feasibility correction step (which was not present in the same way \cite{blss20}). Leveraging these lemmas we prove  \Cref{lem:projection_sizes_ls}, which provides first-order bounds and then we prove \Cref{lem:ls_second} which provides second-order and multiplicative bounds.

\begin{lemma}[{\cite[Lemma 18]{blss20}}]
	\label{lem:sigma_approx}In Algorithm~\ref{alg:short_step_LS}, we
	have
	\begin{align*}
	\frac{1}{2}\sigma(\oms^{-1/2-\alpha}\omx^{1/2-\alpha}\ma) \leq \sigma(\oms^{-1/2}\omx^{1/2}\ma) \leq 2\sigma(\oms^{-1/2-\alpha}\omx^{1/2-\alpha}\ma).
	\end{align*}
\end{lemma}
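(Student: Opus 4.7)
The plan is to reduce this entrywise leverage-score comparison to a spectral comparison of two Gram matrices, exploiting the fact that the two matrices in question differ only by a diagonal rescaling whose entries have a very controlled range.

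Concretely, writing $\mb_2 \defeq \oms^{-1/2}\omx^{1/2}\ma$ and $\mb_1 \defeq \oms^{-1/2-\alpha}\omx^{1/2-\alpha}\ma$, I would factor $\mb_1 = \md\,\mb_2$ where $\md \defeq (\oms\omx)^{-\alpha}$ is diagonal. The key intermediate step is bounding the entry-ratio of $\md$. Invoking $\epsilon$-centrality gives $xs \approx_{\epsilon}\mu\tau(x,s)$, and combining with $\ox \approx_{\epsilon} x$ and $\os \approx_{\epsilon} s$ yields $\ox\os \approx_{3\epsilon}\mu\tau(x,s)$. Since $\tau(x,s) = \sigma(x,s)+(n/m)\vones \in [n/m,\,1+n/m]$ entrywise (leverage scores lie in $[0,1]$), the entries of $\ox\os$ span a multiplicative range of at most $2e^{6\epsilon}(m/n)$. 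Consequently the entries of $\md$ satisfy
\[
d_{\max}/d_{\min} \leq \bigl(2e^{6\epsilon}\, m/n\bigr)^{\alpha}.
\]

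The choice $\alpha = 1/(4\log(4m/n))$ is calibrated precisely so that, taking logs,
\[
\log(d_{\max}/d_{\min}) \leq \alpha\log(4m/n) + O(\alpha\epsilon) \leq \tfrac{1}{4} + o(1),
\]
and hence $(d_{\max}/d_{\min})^{2} \leq 2$. Once this is in hand, the rest is routine. The sandwich $d_{\min}^{2}(\mb_2^{\top}\mb_2) \preceq \mb_1^{\top}\mb_1 \preceq d_{\max}^{2}(\mb_2^{\top}\mb_2)$ inverts to a corresponding sandwich on $(\mb_1^{\top}\mb_1)^{-1}$; combining it with the identity $\sigma_i(\mb_1) = d_{ii}^{2}\cdot(b_2)_i^{\top}(\mb_1^{\top}\mb_1)^{-1}(b_2)_i$ and $d_{ii} \in [d_{\min},d_{\max}]$ gives entrywise
\[
(d_{\min}/d_{\max})^{2}\,\sigma_i(\mb_2) \leq \sigma_i(\mb_1) \leq (d_{\max}/d_{\min})^{2}\,\sigma_i(\mb_2),
\]
which via the ratio bound above yields the claimed factor-$2$ pinch in both directions.

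I expect the main delicacy to be the constant-tracking in the calibration of $\alpha$: because $(2e^{6\epsilon}m/n)^{2\alpha}$ must fit under $2$, any sloppy accumulation of multiplicative slack through the three $\approx_{\epsilon}$'s (coming from $\ox$, $\os$, and the $\tau$-comparison) risks breaking the bound. In particular the hypothesis $\epsilon \leq \alpha^{2}/240$ baked into Algorithm~\ref{alg:short_step_LS} is precisely what guarantees the $o(1)$ error term above is harmless. Everything else is a textbook diagonal-rescaling computation for leverage scores.
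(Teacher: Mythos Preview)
Your argument is correct. The paper does not supply its own proof of this lemma; it is imported verbatim as \cite[Lemma 18]{blss20}, and your diagonal-rescaling argument (bounding the entry spread of $(\ox\os)^{-\alpha}$ via $\epsilon$-centrality and $\tau \in [n/m,\,1+n/m]$, then transferring to leverage scores through the spectral sandwich) is exactly the standard route and recovers the factor-$2$ bound with the stated choice of $\alpha$.
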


\begin{lemma}[{\cite[Lemma 19]{blss20}}]
	\label{lem:proj_bounds}In Algorithm~\ref{alg:short_step_LS}, 
	$\mq \defeq \oma\omh^{-1}\oma^{\top}$ satisfies $\mq\preceq e^{3\epsilon}\mproj(\oma)$. 
	Further, for all $h\in\R^{m}$,
	\begin{align*}
	\norm{\omw^{-1/2}\mq\omw^{1/2}h}_{\tau}\leq e^{4\epsilon}\norm h_{\tau}\text{ and }\norm{\omw^{-1/2}(\mi-\mq)\omw^{1/2}h}_{\tau}\leq e^{3\epsilon}\norm h_{\tau}\,.
	\end{align*}
	Further, $\norm{\omw^{-1/2}\mq\omw^{1/2}h}_{\infty} \leq 2 \norm h_{\tau}$
	and therefore $\|\omw^{-1/2}\mq\omw^{1/2}\|_{\tau+\infty} \leq e^{4\epsilon} + 2/\cnorm$.
\end{lemma}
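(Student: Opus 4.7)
The plan is to combine two inputs: a spectral comparison $\mq \preceq e^{3\epsilon}\mproj$ (where $\mproj \defeq \mproj(\oma)$) coming from the approximation $\omh\approx_\gamma\oma^\top\oma$, and the near-equality $\ow \approx_{O(\epsilon)}\tau$ coming from $\epsilon$-centrality of $(x,s,\mu)$ under the LS weight function. For the first claim, $\omh \preceq e^\gamma\oma^\top\oma$ inverts to $\omh^{-1}\succeq e^{-\gamma}(\oma^\top\oma)^{-1}$, and conjugation by $\oma$ gives $\mq = \oma\omh^{-1}\oma^\top\preceq e^\gamma\mproj\preceq e^{3\epsilon}\mproj$ since $\gamma\leq\epsilon$. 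A useful refinement is obtained by factoring $\mq=\mB\mB^\top$ with $\mB=\oma\omh^{-1/2}$; since $\mB^\top\mB = \omh^{-1/2}(\oma^\top\oma)\omh^{-1/2}\preceq e^\gamma\mI$, we have $\mq^2\preceq e^\gamma\mq$, and also $\mproj\mq=\mq$ because $\im(\mq)\subseteq\im(\oma)$. The weight equivalence $\ow\approx_{O(\epsilon)}\tau$ follows by combining $w\approx_\epsilon\tau(x,s)$ (centrality), $\tau(\ox,\os)\approx\tau(x,s)$ (essentially \Cref{lem:sigma_approx} together with $\ox\approx_\epsilon x$, $\os\approx_\epsilon s$), and the relations $\ox\approx_\epsilon x$, $\os\approx_\epsilon s$ themselves.

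For the $\tau$-norm bound on $\mq$, I expand
\[
\norm{\omw^{-1/2}\mq\omw^{1/2}h}_\tau^2 = h^\top\omw^{1/2}\mq\omw^{-1}\mT\omw^{-1}\mq\omw^{1/2}h,
\]
replace $\omw^{-1}\mT\omw^{-1} \preceq e^{O(\epsilon)}\omw^{-1}$ using the weight equivalence, then apply $\mq^2\preceq e^{\gamma}\mq\preceq e^{O(\epsilon)}\mproj\preceq e^{O(\epsilon)}\mI$ to reduce to $h^\top\omw h\leq e^{O(\epsilon)}\norm{h}_\tau^2$. The bound for $\mI-\mq$ proceeds similarly by splitting $h$ into its projections onto $\im(\mproj)$ and its orthogonal complement: $\mq$ vanishes on the latter and tracks $\mproj$ on the former with multiplicative error $e^\gamma$, so the overall behavior is controlled by the orthogonal projection $\mI-\mproj$, whose operator norm is $1$.

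For the $\ell_\infty$ bound, applying Cauchy--Schwarz at coordinate $i$ with weights $\tau_j$ gives
\begin{align*}
[\omw^{-1/2}\mq\omw^{1/2}h]_i^2 \leq \omw_i^{-1}\Bigl(\sum_j \mq_{ij}^2 \, \omw_j/\tau_j\Bigr)\norm{h}_\tau^2 \leq e^{O(\epsilon)}\omw_i^{-1}[\mq^2]_{ii}\norm{h}_\tau^2.
\end{align*}
The refinement $\mq^2\preceq e^\gamma\mq$ then yields $[\mq^2]_{ii} \leq e^\gamma[\mq]_{ii} \leq e^{O(\epsilon)}\sigma_i \leq e^{O(\epsilon)}\tau_i \leq e^{O(\epsilon)}\omw_i$, so $[\omw^{-1/2}\mq\omw^{1/2}h]_i^2\leq e^{O(\epsilon)}\norm{h}_\tau^2$; with tight constants this gives the claimed $2\norm{h}_\tau$. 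The final mixed-norm bound follows immediately by adding the $\ell_\infty$ bound (contributing $2$) to $\cnorm$ times the $\tau$-norm bound (contributing $e^{4\epsilon}\cnorm$) and dividing through by $\cnorm$.

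The chief obstacle is bookkeeping the multiplicative constants so the exponents $e^{3\epsilon}$, $e^{4\epsilon}$, and the prefactor $2$ come out exactly as stated; each of $\omh\approx_\gamma\oma^\top\oma$, $\ow\approx_{O(\epsilon)}\tau$, and the Cauchy--Schwarz steps contributes a small factor that must be bounded without slack. Beyond this, everything is standard spectral manipulation leveraging the factorization $\mq=\mB\mB^\top$ and the identity $\mproj\mq=\mq$.
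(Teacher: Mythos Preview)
Your approach is correct and matches the standard argument (the paper cites this lemma to \cite{blss20} without reproving it, and your techniques align with how the paper proves the closely related \Cref{lem:proj_bounds_2}). Two small points: First, your derivation of $\mq \preceq e^\gamma \mproj$ has the wrong direction written out---from $\omh \preceq e^\gamma \oma^\top\oma$ you get $\omh^{-1} \succeq e^{-\gamma}(\oma^\top\oma)^{-1}$ and hence $\mq \succeq e^{-\gamma}\mproj$, not the upper bound; the upper bound comes from the \emph{other} side of $\omh \approx_\gamma \oma^\top\oma$. Second, in your $\ell_\infty$ step you write ``$[\mq]_{ii} \leq e^{O(\epsilon)}\sigma_i \leq e^{O(\epsilon)}\tau_i$,'' but the passage from $\sigma(\oma)_i$ (leverage score with exponents $\pm 1/2$) to $\tau_i$ (which involves exponents $1/2\pm\alpha$) is not an $e^{O(\epsilon)}$ approximation---it needs \Cref{lem:sigma_approx} and yields a fixed constant factor (the paper uses $\sigma(\oma)_i \leq 3\tau_i$). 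With that constant tracked through, your Cauchy--Schwarz argument gives $\sqrt{3}\,e^{3\epsilon+\gamma} \leq 2$, as claimed. For the $(\mI-\mq)$ bound, a cleaner route than the decomposition you sketch is to observe directly that $(\mI-\mq)^2 = \mI - (2-e^\gamma)\mq \preceq \mI$ once you have $\mq^2 \preceq e^\gamma\mq$ and $\mq \succeq 0$.
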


\begin{lemma}
	\label{lem:proj_bounds_2} In \Cref{alg:short_step_LS}, 
	\[
	\| \omw^{-1/2} \oma \omh^{-1}g\|_{\tau} \leq e^{2\epsilon} \| g \|_{\omh^{-1}} 
	\text{ and }
	\| \omw^{-1/2}\oma\omh^{-1}g \|_{\infty} \leq 2 \|g\|_{\omh^{-1}}
	~.
	\]
\end{lemma}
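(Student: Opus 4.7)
The plan is to interpret each bound as an operator-norm inequality and reduce both to standard projection/leverage-score facts using the spectral approximation $\omh\approx_\gamma\oma^\top\oma$ and the centrality assumption that relates $\omw$ to $\tau$.

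For the $\tau$-norm bound, substituting $g=\omh^{1/2}\tilde g$ rewrites the claim as $\|\mT^{1/2}\omw^{-1/2}\oma\omh^{-1/2}\|_2\leq e^{2\epsilon}$. I would estimate the squared operator norm by picking any $v\in\R^n$, setting $u=\omh^{-1/2}v$, and computing
\[
v^\top\omh^{-1/2}\oma^\top\omw^{-1}\mT\oma\omh^{-1/2}v \;=\; \sum_i\frac{\tau_i}{\omw_{i,i}}(\oma_i^\top u)^2 \;\leq\; \Big(\max_i\tfrac{\tau_i}{\omw_{i,i}}\Big)\,u^\top\oma^\top\oma u \;\leq\; e^\gamma\Big(\max_i\tfrac{\tau_i}{\omw_{i,i}}\Big)\|v\|_2^2,
\]
where the last step uses $\oma^\top\oma\preceq e^\gamma\omh$. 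Since $(x,s,\mu)$ is $\epsilon$-centered we have $w(x,s,\mu)\approx_\epsilon\tau(x,s)$, and $\ow\approx_{2\epsilon}w(x,s,\mu)$ because $\ox\approx_\epsilon x$ and $\os\approx_\epsilon s$, so $\omw\approx_{3\epsilon}\tau(x,s)$ and hence $\max_i\tau_i/\omw_{i,i}\leq e^{3\epsilon}$. Combined with $\gamma\leq\epsilon$ (by $\gamma=\epsilon/(100\lambda)$) this gives $e^{\gamma+3\epsilon}\leq e^{4\epsilon}$, and taking square roots yields the claimed $e^{2\epsilon}$.

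For the $\infty$-norm bound, I would apply Cauchy--Schwarz in the $\omh^{-1}$-inner product row by row:
\[
\big|[\omw^{-1/2}\oma\omh^{-1}g]_i\big| \;=\; \omw_{i,i}^{-1/2}\,|\langle\omh^{-1/2}\oma_i,\omh^{-1/2}g\rangle| \;\leq\; \omw_{i,i}^{-1/2}\|\oma_i\|_{\omh^{-1}}\|g\|_{\omh^{-1}} \;\leq\; e^{\gamma/2}\sqrt{\sigma(\oma)_i/\omw_{i,i}}\,\|g\|_{\omh^{-1}},
\]
where the last inequality uses $\omh^{-1}\preceq e^\gamma(\oma^\top\oma)^{-1}$ together with $\sigma(\oma)_i=\oma_i^\top(\oma^\top\oma)^{-1}\oma_i$. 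It then suffices to verify that $\sigma(\oma)_i/\omw_{i,i}\leq 4e^{-\gamma}$: by \Cref{lem:sigma_approx}, $\sigma(\oma)_i\leq 2\sigma(\omx^{1/2-\alpha}\oms^{-1/2-\alpha}\ma)_i$, and since $\ox\approx_\epsilon x$, $\os\approx_\epsilon s$ this is bounded by $e^{O(\epsilon)}\sigma(x,s)_i\leq e^{O(\epsilon)}\tau(x,s)_i\leq e^{O(\epsilon)}\omw_{i,i}$ using centrality as above. For the $\epsilon$ chosen in \Cref{alg:short_step_LS} the resulting ratio is at most $4e^{-\gamma}$, yielding the constant $2$.

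The main obstacle is purely bookkeeping: one must keep careful track of which approximation $\approx_\epsilon$ applies to each of $\ox,\os,\overline\tau,\omh$ relative to $x,s,\tau(x,s),\oma^\top\oma$ and ensure that the accumulated multiplicative errors fit strictly within $e^{2\epsilon}$ and the constant $2$ promised in the statement. Conceptually, both bounds reduce to the same observation already used in the proof of \Cref{lem:proj_bounds}: under LS weights, centrality forces $\mT\omw^{-1}\preceq O(1)\cdot\mi$ and $\omh$ is spectrally close to $\oma^\top\oma$, so $\omw^{-1/2}\oma\omh^{-1}$ is a bounded operator from $(\R^n,\|\cdot\|_{\omh^{-1}})$ into both $(\R^m,\|\cdot\|_\tau)$ and $(\R^m,\|\cdot\|_\infty)$.
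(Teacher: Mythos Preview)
Your proposal is correct and follows essentially the same route as the paper. Both arguments use $\ow\approx_{3\epsilon}\tau$ together with $\omh\approx_\gamma\oma^\top\oma$ for the $\tau$-norm bound (the paper writes this as $\|\omw^{-1/2}\oma\omh^{-1}g\|_\tau\leq e^{1.5\epsilon}\|\oma\omh^{-1/2}\|_2\|g\|_{\omh^{-1}}$ rather than framing it as an operator norm, but the computation is identical), and both use Cauchy--Schwarz in the $\omh^{-1}$-inner product plus \Cref{lem:sigma_approx} for the $\infty$-norm bound; the paper records the latter as $[\omw^{-1/2}\mq\omw^{-1/2}]_{i,i}\leq 3e^{6\epsilon}$ with $\mq=\oma\omh^{-1}\oma^\top$, which is exactly your $\sigma(\oma)_i/\omw_{i,i}$ calculation with the constants filled in.
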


\begin{proof}
	Let $\mq = \oma\omh^{-1}\oma^{\top}$.
	Since $\omh \succeq  e^{-\gamma} \oma^{\top} \oma \succeq e^{-\epsilon} \oma^{\top} \oma$, we have 
	\[
	\|\oma\omh^{-1/2}\|_{2}^{2}=\|\oma \omh^{-1} \oma^{\top}\|_{2}\leq e^{\epsilon} \norm{\oma(\oma^{\top}\oma)^{-1}\oma^{\top}}_{2}\leq e^{\epsilon} ~.
	\]
	Since $\gamma \leq \epsilon$ we have $\ow \approx_{2\epsilon} w$ and $w \approx_{\epsilon} \tau$. This implies $\ow \approx_{3\epsilon} \tau$ and
	\begin{align*}
	\|\omw^{-1/2}\oma\omh^{-1}g\|_{\tau}\leq e^{1.5\epsilon}\|\oma\omh^{-1}g\|_{2}\leq e^{1.5\epsilon}\|\oma\omh^{-1/2}\|_{2}\|\omh^{-1/2}g\|_{2}\leq e^{2\epsilon}\|g\|_{\omh^{-1}}.
	\end{align*}
	Further, by Cauchy Schwarz, we have
	\begin{align*}
	\norm{\omw^{-1/2}\oma\omh^{-1}g}_{\infty}^{2} & =\max_{i\in[m]}\left[e_{i}^{\top}\omw^{-1/2}\oma\omh^{-1}g\right]^{2}\leq\max_{i\in[m]}\left[\omw^{-1/2}\mq\omw^{-1/2}\right]_{i,i}\cdot\left[g^{\top}\omh^{-1}g\right]\,
	\end{align*}
	$\mq\preceq e^{3\epsilon}\mproj(\oma)$ (Lemma
	\ref{lem:proj_bounds}) and $\sigma(\oma)_{i}\leq2\sigma(\oms^{-1/2-\alpha}\omx^{1/2-\alpha}\ma)_{i}\leq3\tau_{i}$
	(Lemma~\ref{lem:sigma_approx}), implies 
	\begin{align*}
	\max_{i\in[m]}\left[\omw^{-1/2}\mq\omw^{-1/2}\right]_{i,i} \leq\max_{i\in[m]}\frac{e^{3\epsilon}\sigma(\oma)_{i}}{\ow_{i}}
	\leq e^{6\epsilon}\max_{i\in[m]}\frac{\sigma(\oma)_{i}}{\tau_{i}}
	\leq3e^{6\epsilon}\,.
	\end{align*}
\end{proof}

\begin{lemma}[Basic Step Size Bounds]
\label{lem:projection_sizes_ls}In Algorithm~\ref{alg:short_step_LS},
we have that
\begin{equation}
\norm{\delta_{p}}_{\tau+\infty}\leq\frac{10}{9}\gamma\text{ and }\norm{\delta_{c}}_{\tau+\infty}\leq\frac{11\epsilon\gamma}{\alpha}\leq\frac{2}{7}\gamma\label{eq:step_size_bounds_ls}
\end{equation}
and consequently
\begin{align*}
\norm{ \mx^{-1} \E[ |\delta_{x}| ] }_{\tau+\infty}\leq2\gamma\text{ and }\norm{\ms^{-1}\delta_{s}}_{\tau+\infty}\leq2\gamma .
\end{align*}
\end{lemma}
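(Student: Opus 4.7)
The plan is to bound the four quantities in sequence, mirroring the proof of the analogous log-barrier result \Cref{lem:projection_sizes} but working in the mixed norm $\|\cdot\|_{\tau+\infty}$ via the projection bounds provided by Lemma~\ref{lem:proj_bounds} and Lemma~\ref{lem:proj_bounds_2}. A preliminary ingredient is a bound on $\|g\|_{\tau+\infty}$: since $g = -\gamma\nabla\Phi(\overline{v})^{\flat(\overline{\tau})}$ and the dual-vector operator of \Cref{def:mixed_norm} is normalized so that $\|h^{\flat(\overline{\tau})}\|_{\overline{\tau}+\infty} = 1$, we have $\|g\|_{\overline{\tau}+\infty} = \gamma$; converting via $\overline{\tau} \approx_\epsilon \tau$ yields $\|g\|_{\tau+\infty} \leq e^{\epsilon/2}\gamma$, together with the individual bounds $\|g\|_\infty \leq \gamma$ and $\|g\|_\tau \leq e^{\epsilon/2}\gamma/\cnorm$.

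For $\|\delta_p\|_{\tau+\infty}$, write $\delta_p = \omw^{-1/2}\mq\omw^{1/2}g$ with $\mq \defeq \oma\omh^{-1}\oma^\top$; Lemma~\ref{lem:proj_bounds} gives $\|\omw^{-1/2}\mq\omw^{1/2}g\|_\tau \leq e^{4\epsilon}\|g\|_\tau$ and $\|\omw^{-1/2}\mq\omw^{1/2}g\|_\infty \leq 2\|g\|_\tau$, so combining with the $\|g\|_\tau$ bound above yields $\|\delta_p\|_{\tau+\infty} \leq (e^{4\epsilon} + 2/\cnorm)e^{\epsilon/2}\gamma$, which reduces to $10\gamma/9$ under the parameter choices $\cnorm = 10/\alpha$ and $\epsilon = \alpha^2/240$. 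For $\|\delta_c\|_{\tau+\infty}$, Lemma~\ref{lem:proj_bounds_2} applied with $h = \ma^\top x - b$ gives $\|\delta_c\|_{\tau+\infty} \leq \mu^{-1/2}(2 + \cnorm e^{2\epsilon})\|\ma^\top x - b\|_{\omh^{-1}}$; chaining $\omh \approx_\gamma \oma^\top\oma \approx_{2\epsilon} \ma^\top\mx\ms^{-1}\ma$ (the latter from $\ox \approx_\epsilon x$ and $\os \approx_\epsilon s$) with the $\epsilon$-centrality bound $\|\ma^\top x - b\|_{(\ma^\top\mx\ms^{-1}\ma)^{-1}} \leq \epsilon\gamma\sqrt{\mu}$ yields $\|\delta_c\|_{\tau+\infty} \leq 11\epsilon\gamma/\alpha$, and the final $2\gamma/7$ threshold follows directly from $\epsilon/\alpha = \alpha/240 \leq 2/77$.

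For the two consequences, $\ms^{-1}\delta_s = (1-2\alpha)(\ms^{-1}\oms)\delta_p$ together with $\ms^{-1}\oms \approx_\epsilon \mI$ immediately gives $\|\ms^{-1}\delta_s\|_{\tau+\infty} \leq (1-2\alpha)e^\epsilon\|\delta_p\|_{\tau+\infty} \leq 2\gamma$. For $\mx^{-1}\E[|\delta_x|]$, decompose $\delta_x = \omx((1+2\alpha)g - \mR\delta_r)$ where only $\mR$ is random with $\E[\mR_{ii}] = 1$ and $\mR_{ii} \geq 0$; the entrywise triangle inequality gives $\E[|\delta_x|_i] \leq \ox_i((1+2\alpha)|g_i| + |(\delta_r)_i|)$, after which the bound $\|\mx^{-1}\E[|\delta_x|]\|_{\tau+\infty} \leq 2\gamma$ follows from $\|\delta_r\|_{\tau+\infty} \leq (1+2\alpha)\|\delta_p\|_{\tau+\infty} + \|\delta_c\|_{\tau+\infty}$, the established bounds on $\|g\|_{\tau+\infty}$ and $\|\delta_p\|_{\tau+\infty}$, and the multiplicative stability $\mx^{-1}\omx \approx_\epsilon \mI$. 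The main obstacle will be tight constant tracking: several contributions are close to $1$, so achieving each stated numeric threshold relies on $\cnorm = 10/\alpha$ being large enough that $2/\cnorm$ is negligible, on $\epsilon/\alpha = \alpha/240$ being tiny so that $\delta_c$ is a lower-order term, and on carefully using the $\ell_\infty$ component of the mixed norm wherever possible rather than the looser $\ell_\tau$ component.
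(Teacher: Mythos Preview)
Your treatment of $\norm{\delta_p}_{\tau+\infty}$, $\norm{\delta_c}_{\tau+\infty}$, and $\norm{\ms^{-1}\delta_s}_{\tau+\infty}$ is correct and matches the paper.

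However, your bound for $\norm{\mx^{-1}\E[|\delta_x|]}_{\tau+\infty}$ has a genuine gap: the direct triangle inequality you use is too lossy to reach $2\gamma$. Concretely, from $\E[|\delta_x|_i] \leq \ox_i\big((1+2\alpha)|g_i| + |(\delta_r)_i|\big)$ and $\|\delta_r\|_{\tau+\infty} \leq (1+2\alpha)\|\delta_p\|_{\tau+\infty} + \|\delta_c\|_{\tau+\infty}$ you obtain
\[
\norm{\mx^{-1}\E[|\delta_x|]}_{\tau+\infty}
\;\leq\;
e^{\epsilon}\Big[(1+2\alpha)\|g\|_{\tau+\infty}
+ (1+2\alpha)\|\delta_p\|_{\tau+\infty}
+ \|\delta_c\|_{\tau+\infty}\Big].
\]
Even in the most favorable regime $\alpha\to 0$ this is at least $\gamma + \tfrac{10}{9}\gamma + \tfrac{2}{7}\gamma \approx 2.4\gamma$, and for $\alpha \leq 1/5$ it is roughly $3.2\gamma$. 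No amount of tightening $\epsilon$ or $\cnorm$ rescues this; the loss is structural, coming from splitting $g$ and $\delta_p$.

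The paper avoids this loss by keeping $g$ and $\delta_p$ together. Writing $(1+2\alpha)g - \delta_r = (1+2\alpha)(g-\delta_p) - \delta_c$ and noting that
\[
g - \delta_p \;=\; \omw^{-1/2}(\mI - \mq)\omw^{1/2}g,
\]
one can invoke the \emph{complementary} projection bound in Lemma~\ref{lem:proj_bounds}, namely $\|\omw^{-1/2}(\mI-\mq)\omw^{1/2}h\|_\tau \leq e^{3\epsilon}\|h\|_\tau$, together with $\|g-\delta_p\|_\infty \leq \|g\|_\infty + 2\|g\|_\tau$, to get $\|g-\delta_p\|_{\tau+\infty} \leq (e^{3\epsilon}+2/\cnorm)\|g\|_{\tau+\infty} \leq 1.1\gamma$. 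This is the crucial cancellation: $\|g-\delta_p\|$ is of order $\|g\|$, not $\|g\|+\|\delta_p\|$. With it the final bound becomes $(1.1)^2(1+2\alpha)\gamma + \tfrac{2}{7}\gamma \leq 2\gamma$, which your approach cannot replicate.
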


\begin{proof}
We first bound $\delta_{p}$. Let $\mq \defeq \oma\omh^{-1}\oma^{\top}$
and note that 
\begin{equation}
\delta_{p} = \omw^{-1/2} \mq \omw^{1/2} g.\label{eq:delta_p_formula}
\end{equation}
By Lemma \ref{lem:proj_bounds}, we have $\|\omw^{-1/2}\mq\omw^{1/2}g\|_{\tau+\infty}\leq(e^{4\epsilon}+2/\cnorm)\|g\|_{\tau+\infty}\leq\frac{10}{9}\|g\|_{\tau+\infty}$.

Now, we bound $\delta_{c}$. Recall that
\begin{align*}
\delta_{c} = \mu^{-1/2} \cdot \omw^{-1/2}\oma\omh^{-1}(\ma^{\top}x-b).
\end{align*}
By Lemma \ref{lem:proj_bounds_2}, we have
\begin{align*}
\|\mu^{-1/2}\cdot\omw^{-1/2}\oma\omh^{-1}(\ma^{\top}x-b)\|_{\tau} 
 \leq \frac{e^{2\epsilon}}{\sqrt{\mu}}\norm{\ma^{\top}x-b}_{\omh^{-1}} 
 \leq \frac{e^{3\epsilon}}{\sqrt{\mu}}\norm{\ma^{\top}x-b}_{(\ma^{\top}\mx\ms^{-1}\ma)^{-1}} ~.
\end{align*}
Hence, we have $\norm{\delta_{c}}_{\tau}\leq e^{3\epsilon}\gamma\epsilon$.
For the sup norm, Lemma \ref{lem:proj_bounds_2}, shows that
\begin{align*}
\|\mu^{-1/2}\cdot\omw^{-1/2}\oma\omh^{-1}(\ma^{\top}x-b)\|_{\infty} 
 \leq \frac{2}{\sqrt{\mu}}\norm{\ma^{\top}x-b}_{\omh^{-1}}
 \leq \frac{2e^{\epsilon}}{\sqrt{\mu}}\norm{\ma^{\top}x-b}_{(\ma^{\top}\mx\ms^{-1}\ma)^{-1}} ~.
\end{align*}
Hence, we have $\norm{\delta_{c}}_{\infty}\leq2e^{\epsilon}\gamma\epsilon$ by definition of $\epsilon$-centered. 
In particular, as $\epsilon \leq \alpha / 80$, we have
\begin{align*}
\|\delta_{c}\|_{\tau+\infty} \leq 2e^{\epsilon}\gamma\epsilon+\cnorm e^{3\epsilon}\gamma\epsilon=\left(2e^{\epsilon}+\frac{10e^{3\epsilon}}{\alpha}\right)\gamma\epsilon\leq\frac{11\gamma\epsilon}{\alpha}\leq\frac{2}{7}\gamma.
\end{align*}
For the bound of $\ms^{-1}\delta_{s}$, we note that $\delta_{s}=(1-2\alpha)\oms\delta_{p}$
and hence
\begin{align*}
\|\ms^{-1}\delta_{s}\|_{\tau+\infty}\leq\|\ms^{-1}\oms\delta_{p}\|_{\tau+\infty}\leq1.1\cdot\|\delta_{p}\|_{\tau+\infty}\leq2\gamma.
\end{align*}
For the bound of $\mx^{-1}\E[|\delta_{x}|]$, note that $\mx^{-1}\E[|\delta_{x}|]=\mx^{-1}\omx|(1+2\alpha)(g-\delta_{p})-\delta_{c}|$ and (\ref{eq:delta_p_formula}) yields 
\begin{align*}
g-\delta_{p}=\omw^{-1/2}(\mi-\mq)\omw^{1/2}g.
\end{align*}
By Lemma \ref{lem:proj_bounds}, we have $\|g-\delta_{p}\|_{\tau}\leq e^{3\epsilon}\|g\|_{\tau}$
and $\|g-\delta_{p}\|_{\infty}\leq\|g\|_{\infty}+\|\omw^{-1/2}\mq\omw^{1/2}g\|_{\infty}\leq\|g\|_{\infty}+2\|g\|_{\tau}$.
Hence, we have
\begin{align*}
\|g-\delta_{p}\|_{\tau+\infty} & =\|g-\delta_{p}\|_{\infty}+\cnorm\|g-\delta_{p}\|_{\tau}\\
 & \leq\|g\|_{\infty}+2\|g\|_{\tau}+\cnorm e^{3\epsilon}\|g\|_{\tau}\\
 & \leq \left(e^{3\epsilon}+\frac{2}{\cnorm}\right)\|g\|_{\tau+\infty}\leq1.1\gamma.
\end{align*}
Finally, we have
\begin{align*}
\| \mx^{-1}\E[|\delta_{x}|] \|_{\tau+\infty} 
\leq & ~ 1.1\|(1+2\alpha)(g-\delta_{p})-\delta_{c}\|_{\tau+\infty}
\leq (1.1)^{2}(1+2\alpha)\gamma+\frac{2}{7}\gamma\\
\leq & ~ (1.1)^{2}\cdot(1+2/5)\gamma+\frac{2}{7}\gamma
\leq 2 \gamma.
\end{align*}
where the third step follows $\alpha\in[0,1/5]$.
\end{proof}
\begin{lemma}[Multiplicative Stability and Second-order Analysis of  $x$ and $s$]
\label{lem:ls_second} In Algorithm~\ref{alg:short_step_LS},
\begin{itemize}
\item $\| \mx^{-1} \delta_{x} \|_{\infty} \leq 2\gamma$ and $\norm{ \ms^{-1} \delta_{s} }_{\infty} \leq 2\gamma$
with probability $1$.
\item $\| \E[ (\mx^{-1} \delta_{x})^{2} ] \|_{\tau+\infty} \leq 12\gamma^{2}$
and $\| ( \ms^{-1} \delta_{s} )^{2} \|_{\tau+\infty} \leq 4\gamma^{2}$
\end{itemize}
In particular, we have $\E\left[\normFull{\mx^{-1}\delta_{x}}_{\phi''(v)}^{2}\right]\le12\gamma^{2}\norm{\phi''(v)}_{\tau+\infty}^{*}$.
and $\normFull{\ms^{-1}\delta_{s}}_{\phi''(v)}^{2}\le4\gamma^{2}\norm{\phi''(v)}_{\tau+\infty}^{*}$.
\end{lemma}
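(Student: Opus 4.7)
The plan is to follow the structure of \Cref{lem:log_s_second} and \Cref{lem:stoch_x_bounds} from the log-barrier analysis, but with estimates in the mixed norm $\|\cdot\|_{\tau+\infty}$. First I would prove the deterministic $\infty$-norm bounds on $\mx^{-1}\delta_x$ and $\ms^{-1}\delta_s$, then the expected second-moment bounds in the mixed norm, and finally derive the ``in particular'' claim by a Hölder-type inequality.

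For $\|\ms^{-1}\delta_s\|_\infty$, the bound follows immediately from $\|v\|_\infty\leq\|v\|_{\tau+\infty}$ applied to \Cref{lem:projection_sizes_ls}. For $\|\mx^{-1}\delta_x\|_\infty$, I decompose $\delta_x=\omx((1+2\alpha)g-\mR\delta_r)$ and control $\|\mR\delta_r\|_\infty$. Since $\mR_{ii}\in\{0,[\min\{q_i,1\}]^{-1}\}$, we have $|(\mR\delta_r)_i|\leq\max\{|(\delta_r)_i|,|(\delta_r)_i|/q_i\}$. The first term uses the sharp $\infty$-bound $\|\delta_p\|_\infty\leq 2\|g\|_\tau\leq(\alpha/5)\gamma$ from \Cref{lem:proj_bounds} together with $\|\delta_c\|_\infty=O(\gamma\epsilon)$ from the proof of \Cref{lem:projection_sizes_ls}, yielding $\|\delta_r\|_\infty=O(\alpha\gamma)$. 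For the second, an AM-GM argument $|(\delta_r)_i|\leq(\sqrt{m}\|\delta_r\|_2/2)((\delta_r)_i^2/\|\delta_r\|_2^2+1/m)$ combined with $q_i\geq(m/\sqrt{n})((\delta_r)_i^2/\|\delta_r\|_2^2+1/m)$ gives $|(\delta_r)_i|/q_i\leq(1/2)\sqrt{n/m}\|\delta_r\|_2\leq(1/2)\|\delta_r\|_\tau=O(\alpha\gamma)$, where I used $\|\delta_r\|_2\leq\sqrt{m/n}\|\delta_r\|_\tau$ (since $\tau_i\geq n/m$) and $\|\delta_r\|_\tau\leq(\alpha/10)\|\delta_r\|_{\tau+\infty}$. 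Since $\|g\|_\infty\leq\|g\|_{\tau+\infty}=\gamma$, combining gives $\|\mx^{-1}\delta_x\|_\infty\leq e^\epsilon((1+2\alpha)\gamma+O(\alpha\gamma))\leq 2\gamma$.

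For the second-moment bounds, I would use the algebraic identity $\|v^2\|_{\tau+\infty}\leq\|v\|_\infty\|v\|_{\tau+\infty}$, which follows from $\sum_i\tau_iv_i^4\leq\|v\|_\infty^2\sum_i\tau_iv_i^2$. Applied to $v=\ms^{-1}\delta_s$ this directly yields $\|(\ms^{-1}\delta_s)^2\|_{\tau+\infty}\leq(2\gamma)(2\gamma)=4\gamma^2$. For $\E[(\mx^{-1}\delta_x)^2]$, I decompose each coordinate via $\E[(c-\mR_{ii}d)^2]=(c-d)^2+d^2(\E[\mR_{ii}^2]-1)$ with $c=(1+2\alpha)g_i$ and $d=(\delta_r)_i$. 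The mean-squared contribution reduces to bounding $\|(1+2\alpha)g-\delta_r\|_\infty=O(\gamma)$ and $\|(1+2\alpha)g-\delta_r\|_\tau=\|(1+2\alpha)(g-\delta_p)-\delta_c\|_\tau=O(\alpha\gamma)$ via \Cref{lem:proj_bounds} and \Cref{lem:proj_bounds_2}, and then applying the identity above. The variance contribution $(\delta_r)_i^2(\E[\mR_{ii}^2]-1)\leq(\delta_r)_i^2/q_i\leq(\sqrt{n}/m)\|\delta_r\|_2^2$ is bounded coordinatewise independently of $i$; its $\tau$-norm of squares satisfies $\sum_i\tau_i((\delta_r)_i^2/q_i)^2\leq(2n)\cdot(n/m^2)\|\delta_r\|_2^4\leq 2\|\delta_r\|_\tau^4=O(\alpha^4\gamma^4)$ via $\sum_i\tau_i\leq 2n$ and $\|\delta_r\|_2^2\leq(m/n)\|\delta_r\|_\tau^2$. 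Multiplying by $\cnorm=10/\alpha$ after taking a square root absorbs the $\alpha$ factors, and combining the two pieces via $(a+b)^2\leq 2a^2+2b^2$ yields $\|\E[(\mx^{-1}\delta_x)^2]\|_{\tau+\infty}\leq 12\gamma^2$.

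The ``in particular'' consequence follows from $\langle\phi''(v),u^2\rangle\leq\|\phi''(v)\|_{\tau+\infty}^*\cdot\|u^2\|_{\tau+\infty}$, taking $u\in\{\mx^{-1}\delta_x,\ms^{-1}\delta_s\}$ and moving expectation inside the inner product. The main obstacle in the whole argument is the $\tau$-norm bound on the variance in Step 2: naively, the sampling variance produces factors of $\sqrt{n}$ that must be cancelled by careful combination of $\sum_i\tau_i\leq 2n$ and $\|\cdot\|_2\leq\sqrt{m/n}\|\cdot\|_\tau$, and by trading between $\|\cdot\|_\tau$ and $\|\cdot\|_{\tau+\infty}$ via the factor $\cnorm^{-1}=\alpha/10$ so that the $\cnorm$ in the definition of $\|\cdot\|_{\tau+\infty}$ exactly absorbs the resulting $\alpha$ factors.
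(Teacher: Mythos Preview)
Your proposal is correct and follows essentially the same route as the paper. Two cosmetic differences are worth noting. First, for $\|\mx^{-1}\delta_x\|_\infty$ the paper does a clean case split: if $q_i\ge 1$ the coordinate is deterministic and the bound $\|\mx^{-1}\E[|\delta_x|]\|_{\tau+\infty}\le 2\gamma$ from \Cref{lem:projection_sizes_ls} applies directly; only in the case $q_i<1$ does it invoke the AM--GM bound $|(\mr\delta_r)_i|\le\tfrac12\|\delta_r\|_\tau$. You instead bound $\|\delta_r\|_\infty$ explicitly via $\|\delta_p\|_\infty\le 2\|g\|_\tau$, which also works but is not needed. Second, for the second-moment bound on $x$ the paper uses the crude split $(\mx^{-1}\delta_x)_i^2\le 4g_i^2+2.1(\mr\delta_r)_i^2$ and then $\E[(\mr\delta_r)_i^2]\le(\delta_r)_i^2+\tfrac{\sqrt{n}}{m}\|\delta_r\|_2^2$, while you use the exact mean--variance identity; the variance term $(\sqrt{n}/m)\|\delta_r\|_2^2$ and the subsequent $\|\cdot\|_{\tau+\infty}$ bookkeeping via $\|\onevec\|_{\tau+\infty}=1+\cnorm\sqrt{2n}$ are identical in both approaches. (Your closing remark about ``combining via $(a+b)^2\le 2a^2+2b^2$'' is unnecessary since the mean--variance decomposition is already additive; triangle inequality in $\|\cdot\|_{\tau+\infty}$ suffices.)
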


\begin{proof}
For sup norm for $s$, it follows from $\norm{\ms^{-1}\delta_{s}}_{\infty}\leq\norm{\ms^{-1}\delta_{s}}_{\tau+\infty}$
and Lemma \ref{lem:projection_sizes_ls}.

For sup norm for $x$, recall that
\begin{align*}
\mx^{-1} \delta_{x} = \mx^{-1} \omx( ( 1 + 2 \alpha ) g - \mr \delta_{r} ).
\end{align*}
There are two cases. 

Case 1) $i$-th row is sampled with probability $1$. We have
\begin{align*}
( \mx^{-1} \delta_{x} )_{i} = \E[ ( \mx^{-1} \delta_{x} )_{i} ]
\end{align*}
By Lemma \ref{lem:projection_sizes_ls}, we have $| ( \mx^{-1} \delta_{x} )_{i} | \leq 2\gamma$.

Case 2) $i$-th row is sampled with probability at least $\frac{m}{\sqrt{n}}\cdot((\delta_{r})_{i}^{2}/\|\delta_{r}\|_{2}^{2}+\frac{1}{m}) \leq 1$ and this quantity is at most $1$. In this case, \eqref{eq:entry_to_l2} of the the proof too the analogous \Cref{lem:stoch_x_bounds}, still applies to yield that $| [\delta_r]_i | \leq (\sqrt{m} / 2) \norm{\delta_r}_2 \cdot ([\delta_r]_i^2 / \norm{\delta_r}_2^2 + 1/m)$
and therefore
\begin{align*}
|(\mr\delta_{r})_{i}| & \leq\frac{|(\delta_{r})_{i}|}{q_{i}}\leq\frac{\sqrt{n}}{m}\frac{|(\delta_{r})_{i}|}{(\delta_{r})_{i}^{2}/\|\delta_{r}\|_{2}^{2}+\frac{1}{m}}\leq\frac{\sqrt{n}}{m}\cdot\frac{\sqrt{m}}{2}\|\delta_{r}\|_{2}\\
 & \leq\frac{\sqrt{n}}{m}\cdot\frac{\sqrt{m}}{2}\cdot\sqrt{\frac{m}{n}}\|\delta_{r}\|_{\tau}=\frac{1}{2}\|\delta_{r}\|_{\tau}
\end{align*}
where we used $\tau \geq \frac{n}{m}$ in the last inequality. Using
$\ox\approx_{\epsilon}x$, we have
\begin{align}
\|\mx^{-1}\delta_{x}\|_{\infty} & \leq\frac{3}{2}\|g\|_{\infty}+\frac{2}{3}\|\delta_{r}\|_{\tau}.\label{eq:X_ls_norm}
\end{align}
Finally, by $\delta_{r}=(1+2\alpha) \delta_{p} + \delta_{c}$ and Lemma
\ref{lem:projection_sizes_ls}, we have
\begin{equation}
\|\delta_{r}\|_{\tau+\infty} \leq 1.4 \norm{\delta_{p}}_{\tau+\infty} + \norm{\delta_{c}}_{\tau+\infty} \leq \frac{13}{7}\gamma.\label{eq:delta_r_upper}
\end{equation}
Now, using $\cnorm \geq 50$, we have
\begin{equation}\label{eq:delta_r_upper_tau} %
\|\delta_{r}\|_{\tau}\leq\frac{1}{\cnorm}\|\delta_{r}\|_{\tau+\infty}\leq\frac{1}{50}\|\delta_{r}\|_{\tau+\infty}\leq\frac{\gamma}{40}.
\end{equation}
This gives the bound $\| \mx^{-1} \delta_{x} \|_{\infty}\leq2\gamma$.

For the second order term for $s$, leveraging the generalization of Cauchy Schwarz to arbitrary norms, i.e. that for every norm $\norm{\cdot}$, its dual $\norm{\cdot}^*$, and all $x,y$ we have $|x^\top y| \leq \norm{x} \norm{y}^*$, yields
\begin{align*}
\normFull{\ms^{-1}\delta_{s}}_{\phi''(v)}^{2} 
&=  \sum_{i \in [m]} \phi''(v_{i})(\ms^{-1}\delta_{s})_{i}^{2}
\leq  \|(\ms^{-1}\delta_{s})^{2}\|_{\tau+\infty}\|\phi''(v)\|_{\tau+\infty}^{*}\\
& \leq \|\ms^{-1}\delta_{s}\|_{\infty}\|\ms^{-1}\delta_{s}\|_{\tau+\infty}\|\phi''(v)\|_{\tau+\infty}^{*}
\leq  4\gamma^{2}\|\phi''(v)\|_{\tau+\infty}^{*} ~.
\end{align*}

For the second order term for $x$, we note that
\begin{align*}
\E\left[\norm{\mx^{-1}\delta_{x}}_{\phi''(v)}^{2}\right] = \sum_{i \in [m]} \phi''(v_{i})\E[(\mx^{-1}\delta_{x})_{i}^{2}]\leq\|\E[(\mx^{-1}\delta_{x})^{2}]\|_{\tau+\infty}\|\phi''(v)\|_{\tau+\infty}^{*}.
\end{align*}
Hence, it suffices to bound $\| \E [ (\mx^{-1}\delta_{x})^{2} ] \|_{ \tau + \infty }$.
Using that $\delta_{x}=\omx((1+2\alpha)g-\mr\delta_{r})$, we have
\begin{align*}
(\mx^{-1}\delta_{x})_{i}^{2}\leq2((1+2\alpha)\mx^{-1}\omx g)_{i}^{2}+2(\mx^{-1}\omx\mr\delta_{r})_{i}^{2}
\end{align*}
Using $\ox\approx_{\epsilon}x$ and $\alpha\leq\frac{1}{5}$, we have
\begin{align}
(\mx^{-1}\delta_{x})_{i}^{2} & \leq4g_{i}^{2}+2.1(\mr\delta_{r})_{i}^{2}.\label{eq:Ex2_expand}
\end{align}
For the second term, we note that $i$-th row is sampled with probability
at least $\min(1,\frac{m}{\sqrt{n}}\cdot(\delta_{r})_{i}^{2}/\|\delta_{r}\|_{2}^{2})$.
Hence, we have
\begin{align*}
\E[(\mr\delta_{r})_{i}^{2}]\leq(\delta_{r})_{i}^{2}+\frac{(\delta_r)_{i}^{2}}{\frac{m}{\sqrt{n}}\cdot(\delta_{r})_{i}^{2}/\|\delta_{r}\|_{2}^{2}}\leq(\delta_{r})_{i}^{2}+\frac{\sqrt{n}}{m}\cdot\|\delta_{r}\|_{2}^{2}\leq(\delta_{r})_{i}^{2}+\frac{1}{\sqrt{n}}\|\delta_{r}\|_{\tau}^{2}
\end{align*}
Hence, we have
\begin{align*}
\|\E[(\mr\delta_{r})^{2}]\|_{\tau+\infty}
\leq
\|(\delta_{r})^{2}\|_{\tau+\infty}+\frac{1}{\sqrt{n}}\|\delta_{r}\|_{\tau}^{2}\cdot\|1\|_{\tau+\infty}.
\end{align*}
Since $\|1\|_{\tau+\infty}=\|1\|_{\infty}+\cnorm\|1\|_{\tau}=1+\sqrt{2n}\cnorm$,
we have
\begin{align*}
\|\E[(\mr\delta_{r})^{2}]\|_{\tau+\infty} & \leq\|\delta_{r}\|_{\tau+\infty}^{2}+\frac{5}{2}\cnorm\|\delta_{r}\|_{\tau}^{2}
  \leq \left(1+\frac{5}{2\cnorm}\right)\|\delta_{r}\|_{\tau+\infty}^{2}\\
 & \leq \left(1+\frac{5}{100}\right)\cdot\left(\frac{13}{7}\right)^{2}\gamma^{2}
  \leq3.7\gamma^{2}
\end{align*}
where we used (\ref{eq:delta_r_upper}) at the third inequalities.

Putting this into (\ref{eq:Ex2_expand}) gives
\begin{align*}
\|\E[(\mx^{-1}\delta_{x})^{2}]\|_{\tau+\infty} & \leq4\|g^{2}\|_{\tau+\infty}+(2.1)(3.7)\gamma^{2}\leq4\|g\|_{\tau+\infty} \|g\|_{\infty}+(2.1)(3.7)\gamma^{2}\leq12\gamma^{2}.
\end{align*}
\end{proof}

\subsubsection{Bounding $\delta_{\tau}$}
\label{sub:delta_tau}

Here we bound the first two moments of the change in the LS weight function.
First we leverage \Cref{lem:Dsigma}, leverage score related derivative calculations from \cite{ls19}, to provide general bounds on how $\tau$ changes in \Cref{lem:expand_sigma}. Then, applying a \Cref{lem:P2sample}, a technical lemma about how the quadratic form of entrywise-squared projection matrices behave under sampling from \cite{blss20}, and \Cref{lem:p2_stability}, a new lemma regarding the stability $\mproj(\mw \ma)^{(2)}$ under changes to $\mw$,  in \Cref{lem:Edelta_tau_bound} we bound $\delta_\tau \defeq \tau(x^{\new},s^{\new}) - \tau(x,s)$ in \Cref{alg:short_step_LS}. 

\begin{lemma}[{\cite[Lemma 49]{ls19}}]
\label{lem:Dsigma}Given a full rank $\ma\in\R^{m\times n}$ and
$w\in\R_{>0}^{m}$, we have
\begin{align*}
D_{w}\mproj(\mw\ma)[h]=\mDelta\mproj(\mw\ma)+\mproj(\mw\ma)\mDelta-2\mproj(\mw\ma)\mDelta\mproj(\mw\ma)
\end{align*}
where $\mw=\mdiag(w)$, $\mDelta=\mdiag(h/w)$, and $D_w f[h] \defeq \lim{t \rightarrow 0} \frac{1}{t}[f(w + th) - f(w)]$ denotes the directional derivative of $f$ with respect to $w$ in direction $h$. In particular,
we have $D_{w}\sigma(\mw\ma)[h]=2\mLambda(\mw\ma)\mw^{-1}h$.
\end{lemma}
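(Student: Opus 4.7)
The plan is to verify the derivative formula directly from the explicit expression
\[
\mproj(\mw\ma) \;=\; \mw\ma\bigl(\ma^\top\mw^2\ma\bigr)^{-1}\ma^\top\mw,
\]
using the product rule together with the standard identity $D_w[\mm^{-1}][h] = -\mm^{-1}(D_w \mm[h])\mm^{-1}$, and then to extract the leverage-score statement by reading off diagonal entries.

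First, I would record the two ``building block'' derivatives. Since $w \mapsto \mw$ is linear, we have $D_w \mw[h] = \mdiag(h) = \mw\mDelta = \mDelta\mw$, where $\mDelta = \mdiag(h/w)$ commutes with $\mw$. Consequently $D_w[\mw^2][h] = 2\mw\mDelta\mw = 2\mw^2\mDelta$, and therefore
\[
D_w\bigl[(\ma^\top \mw^2 \ma)^{-1}\bigr][h]
\;=\; -2\,(\ma^\top\mw^2\ma)^{-1}\ma^\top\mw^2\mDelta\ma\,(\ma^\top\mw^2\ma)^{-1}.
\]

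Next, I would apply the product rule to $\mproj(\mw\ma) = (\mw\ma)(\ma^\top\mw^2\ma)^{-1}(\ma^\top\mw)$ and collect the three resulting terms. The two outer terms give
\[
\mDelta\mw\ma(\ma^\top\mw^2\ma)^{-1}\ma^\top\mw \;+\; \mw\ma(\ma^\top\mw^2\ma)^{-1}\ma^\top\mw\mDelta \;=\; \mDelta\mproj(\mw\ma) + \mproj(\mw\ma)\mDelta,
\]
using that $\mDelta$ commutes with $\mw$. The middle term, using the formula for $D_w[(\ma^\top\mw^2\ma)^{-1}]$ above, equals
\[
-2\,\mw\ma(\ma^\top\mw^2\ma)^{-1}\ma^\top\mw^2\mDelta\ma(\ma^\top\mw^2\ma)^{-1}\ma^\top\mw \;=\; -2\,\mproj(\mw\ma)\,\mDelta\,\mproj(\mw\ma),
\]
where I again commuted $\mDelta$ past $\mw$ to recognize $\mw\ma(\ma^\top\mw^2\ma)^{-1}\ma^\top\mw = \mproj(\mw\ma)$ on both sides. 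Summing the three contributions yields the stated identity for $D_w \mproj(\mw\ma)[h]$.

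For the leverage-score derivative, I would take the $i$-th diagonal entry of the formula just derived. The diagonal of $\mDelta\mproj + \mproj\mDelta$ at index $i$ equals $2\,(h/w)_i\,\mproj(\mw\ma)_{i,i} = 2\,\sigma(\mw\ma)_i\,(h/w)_i$, i.e.\ the $i$-th entry of $2\,\mSigma(\mw\ma)\mw^{-1}h$. The diagonal of $-2\,\mproj\,\mDelta\,\mproj$ at index $i$ equals $-2\sum_j \mproj(\mw\ma)_{i,j}^{2}\,(h/w)_j$, i.e.\ the $i$-th entry of $-2\,\mproj^{(2)}(\mw\ma)\mw^{-1}h$. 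Adding and invoking the definition $\mLambda(\mw\ma) = \mSigma(\mw\ma) - \mproj^{(2)}(\mw\ma)$ produces $D_w \sigma(\mw\ma)[h] = 2\mLambda(\mw\ma)\mw^{-1}h$, as claimed. The only step that requires care is keeping track of the fact that $\mDelta$ commutes with $\mw$ (so the factors of $\mw$ get absorbed cleanly into copies of $\mproj(\mw\ma)$); everything else is routine matrix calculus.
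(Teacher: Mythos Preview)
Your proof is correct. The paper does not supply its own proof of this lemma---it is quoted from \cite{ls19}---and your direct computation via the product rule on $\mproj(\mw\ma)=\mw\ma(\ma^\top\mw^2\ma)^{-1}\ma^\top\mw$, followed by reading off the diagonal, is exactly the standard argument one would expect (and is essentially what appears in the cited source).
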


\begin{lemma} \label{lem:expand_sigma} Let $w \in \R^m_{> 0}$ be arbitrary and let $w^{\new}$ be a random vector such that $w^{\new}\approx_{\gamma}w$ with probability $1$ for some $0\leq\gamma\leq\frac{1}{80}$.
Let $\tau=\sigma(\mw\ma)+\eta$ and $\tau^{\new}=\sigma(\mw^{\new}\ma)+\eta$
for some $\eta \in \R^m_{> 0}$. Let $\delta_{w}=w^{\new}-w$, $\delta_{\tau}=\tau^{\new}-\tau$
and $\mw_{t}=(1-t)\mw+t\mw^{\new}$. Then, for 
\[
K \defeq 10\|\E_{w^{\new}} [ (\mw^{-1}\delta_{w})^{2} ] \|_{\sigma(\mw\ma)+\infty}+13\gamma\max_{t\in[0,1]}\E_{w^{\new}}[ \|\mw^{-1}|\delta_{w}|\|_{\mproj^{(2)}(\mw_{t}\ma) +\infty}]
\]
where $\|x\|_{M + \infty} \defeq \|x\|_\infty + C \|x\|_{M}$ for some fixed constant $C \geq 1$ and any norm $\norm{\cdot}_M$, the following hold
\begin{itemize}
\item $\|\mathbf{T}^{-1}\delta_{\tau}\|_{\infty}\leq5\gamma$,
\item $\|\E_{w^{\new}} [ (\mathbf{T}^{-1}\delta_{\tau})^{2} ] \|_{\tau +\infty} \leq K$
\item $\|\E_{w^{\new}} [ \mathbf{T}^{-1}(\tau^{\new}-\tau-2\mLambda(\mw \ma)\mw^{-1}\delta_{w}) ] \|_{\tau +\infty}\leq K$.
\end{itemize}
\end{lemma}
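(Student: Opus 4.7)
}
The starting point is the integral identity obtained from \Cref{lem:Dsigma} applied along the segment $w_t = (1-t)w + tw^{\new}$:
\[
\delta_\tau \;=\; \sigma(\mw^{\new}\ma) - \sigma(\mw\ma) \;=\; \int_0^1 2\mLambda(\mw_t\ma)\,\mw_t^{-1}\delta_w \,dt.
\]
Since $w^{\new}\approx_\gamma w$, every $w_t$ also satisfies $w_t\approx_\gamma w$, and standard multiplicative stability of leverage scores gives $\sigma(\mw_t\ma)\approx_{O(\gamma)} \sigma(\mw\ma)$. I will repeatedly use this to pull $t$-dependent quantities back to their $t=0$ counterparts, with all the $e^{O(\gamma)}$ overheads absorbed into constants at the end using $\gamma\le 1/80$.

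For the first claim ($\ell_\infty$ bound), I will use the decomposition $\mLambda = \mSigma - \mproj^{(2)}$ together with the identity $\sum_j \mproj^{(2)}_{ij} = (\mproj^2)_{ii} = \sigma_i$ (valid because $\mproj$ is idempotent) to obtain the entrywise bound $|[\mLambda u]_i|\leq \sigma_i|u_i| + [\mproj^{(2)}|u|]_i \leq 2\sigma_i\|u\|_\infty$. Setting $u = \mw_t^{-1}\delta_w$ and using $\|\mw^{-1}\delta_w\|_\infty \leq e^\gamma-1 \leq 2\gamma$ from $w^{\new}\approx_\gamma w$, then dividing by $\tau_i\geq \sigma(\mw\ma)_i$, gives $\|\mT^{-1}\delta_\tau\|_\infty \leq 5\gamma$.

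The third claim is where the main work lies: the quantity $\tau^{\new}-\tau - 2\mLambda(\mw\ma)\mw^{-1}\delta_w$ is the second-order Taylor remainder of $\sigma$ in the direction $\delta_w$, and can be written as $\int_0^1 2[\mLambda(\mw_t\ma)\mw_t^{-1} - \mLambda(\mw\ma)\mw^{-1}]\delta_w\,dt$. I will decompose the integrand into three contributions: (a) $\mSigma(\mw_t\ma) - \mSigma(\mw\ma)$, which is $O(\gamma)\,\mSigma(\mw\ma)$ by leverage-score stability; (b) $\mw_t^{-1} - \mw^{-1}$, contributing another $O(\gamma)$ factor in the pointwise sense; and (c) $\mproj^{(2)}(\mw_t\ma) - \mproj^{(2)}(\mw\ma)$, which is controlled by the paper's new stability result \Cref{lem:p2_stability}. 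Each piece carries at least one factor of $\gamma$, producing after division by $\mT$ and passage to the $\tau+\infty$ norm exactly the second summand of $K$ with coefficient $13\gamma$. The $\mproj^{(2)}$-norm rather than the weaker $\sigma$-norm appears because contribution (c) is governed precisely by a quadratic form in $\mproj^{(2)}$ acting on $\mw^{-1}\delta_w$; this is the reason \Cref{lem:p2_stability} is needed and is the main technical obstacle—the older leverage-score-only stability results of \cite{ls19} are not sharp enough here.

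For the second claim, I will combine two ingredients. Using item 1 pointwise, $(\mT^{-1}\delta_\tau)_i^2 \leq 5\gamma\cdot|\mT^{-1}\delta_\tau|_i$, so a first-moment bound obtained from the integral identity and the entrywise estimate $|[\mLambda u]_i|\leq \sigma_i|u_i| + [\mproj^{(2)}|u|]_i$ produces, after taking the $\tau+\infty$ norm and using $\sigma_i/\tau_i\leq 1$, the $13\gamma\cdot\|\mw^{-1}|\delta_w|\|_{\mproj^{(2)}+\infty}$ part of $K$. Independently, applying Cauchy--Schwarz on the integral gives $(\delta_\tau)_i^2 \leq 4\int_0^1 [\mLambda(\mw_t\ma)\mw_t^{-1}\delta_w]_i^2\,dt$, and the entrywise inequality $[\mLambda u]_i^2 \leq 2\sigma_i^2 u_i^2 + 2\sigma_i[\mproj^{(2)}u^2]_i$ (itself from Cauchy--Schwarz on $[\mproj^{(2)}|u|]_i$) combined with $\tau_i^2\geq \sigma_i\tau_i$ and $\sigma_i/\tau_i\leq 1$ yields the pure second-moment term $10\|\E[(\mw^{-1}\delta_w)^2]\|_{\sigma+\infty}$. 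Taking the maximum of the two bounds (or their sum, absorbed into constants) gives the claimed $K$; the $\phi''(v)$-norm corollary then follows immediately from H\"older's inequality applied to the $\tau+\infty$ norm and its dual.
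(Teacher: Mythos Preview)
There is a genuine gap in your treatment of the third bullet. Writing the remainder as $\int_0^1 2[\mLambda_t\mw_t^{-1} - \mLambda_0\mw^{-1}]\delta_w\,dt$ is correct, but the decomposition into (a)--(c) does not land where you claim. For contribution (c), \Cref{lem:p2_stability} only gives the quadratic-form comparison $\|v\|_{\mproj_t^{(2)}} \le e^{4\gamma}\|v\|_{\mproj_0^{(2)}}$ for nonnegative $v$; it says nothing about the \emph{vector} $(\mproj_t^{(2)}-\mproj_0^{(2)})r_t$ entrywise or in the $\tau+\infty$ norm after division by $\mT$ (and the entries of $\mproj_t^{(2)}-\mproj_0^{(2)}$ need not have a sign). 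For contributions (a) and (b), the result after $\mT^{-1}$ is of size $O(\gamma)|r_t|$, whose $\tau+\infty$ norm carries $\|r_t\|_\tau$; since $\|\cdot\|_\tau\ge\|\cdot\|_{\mproj^{(2)}}$ this is \emph{larger} than, not bounded by, the $13\gamma\cdot\|\cdot\|_{\mproj^{(2)}+\infty}$ term of $K$. In fact \Cref{lem:p2_stability} is not used at all in the paper's proof of this lemma (it enters only in the next lemma). The paper instead computes $\frac{d^2}{dt^2}\sigma_t$ explicitly: differentiating $\frac{d}{dt}\sigma_t = 2\mLambda_t r_t$ once more via \Cref{lem:Dsigma} yields the four-term expression
\[
\tfrac{d^2}{dt^2}\sigma_t = 2\mSigma_t r_t^2 - 8\mDelta_t\mproj_t^{(2)}r_t - 2\mproj_t^{(2)}r_t^2 + 8(\mproj_t\circ(\mproj_t\mDelta_t\mproj_t))r_t,
\]
and each term is bounded separately (the last via $e_i^\top\mproj_t\mDelta_t\mproj_t\mDelta_t\mproj_t e_i \le e_i^\top\mproj_t^{(2)}r_t^2$).

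The same norm mismatch undermines your Approach~1 for the second bullet: the estimate $|\mT^{-1}\delta_\tau|_i \lesssim |r_t|_i + \tau_i^{-1}[\mproj_t^{(2)}|r_t|]_i$, after multiplying by $5\gamma$ and passing to $\|\cdot\|_{\tau+\infty}$, produces $O(\gamma)\|r_t\|_{\tau+\infty}$ from the diagonal piece, not $O(\gamma)\|r_t\|_{\mproj^{(2)}+\infty}$. The paper obtains the finer $\mproj^{(2)}$-norm by keeping $\mSigma_t^{-1}\mproj_t^{(2)}r_t$ intact and using the identity $\|\mSigma_t^{-1}\mproj_t^{(2)}|r_t|\|_{\sigma_t}^2 = |r_t|^\top\mproj_t^{(2)}\mSigma_t^{-1}\mproj_t^{(2)}|r_t| \le \||r_t|\|_{\mproj_t^{(2)}}^2$ (valid because $\mSigma_t\succeq\mproj_t^{(2)}$), which is precisely what converts the $\sigma$-weighted norm into the $\mproj^{(2)}$-weighted norm appearing in $K$.
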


\begin{proof}
Let $w_{t}=w+t\delta_{w}$, $\sigma_{t}=\sigma(\mw_{t}\ma)$, $\tau_{t}=\sigma(\mw_{t}\ma)+\eta$.
For simplicity, we let $\sigma \defeq \sigma_0$ and $r_{t} \defeq \mw_{t}^{-1}\delta_{w}$. 

For the first result, recall that
$\sigma_{t}=\mdiag(\mw_{t}\ma(\ma^{\top}\mw_{t}^{2}\ma)^{-1}\ma^{\top}\mw_{t})$. Since, $w_{t}\approx_{\gamma}w$,
we have that $\sigma_{t}\approx_{4\gamma}\sigma$ and hence $\|\mathbf{T}^{-1}\delta_{\tau}\|_{\infty}\leq5\gamma$.

For the second result, Jensen's inequality shows that 
\begin{align}
(\mathbf{T}^{-1}\delta_{\tau})_{i}^{2} 
& = \left(\mathbf{T}^{-1}\int_{0}^{1} \left[\frac{ \mathrm{d} }{ \mathrm{d} t}\sigma_{t}\right] \mathrm{d} t\right)_{i}^{2}\leq\int_{0}^{1} \left(\mathbf{T}^{-1} \left[\frac{ \mathrm{d} }{ \mathrm{d} t}\sigma_{t}\right] \right)_{i}^{2}dt\label{eq:Texpand1}
\end{align}
Using that $\frac{ \mathrm{d} }{ \mathrm{d} t }\sigma_{t} =2\mLambda_{t} r_t$ (Lemma
\ref{lem:Dsigma}) and $\mLambda_{t}=\mSigma_{t}-\mproj_{t}^{(2)}$,
we have
\begin{equation}\label{eq:Texpand2}
\tau_{i}^{-2} \left( \left[\frac{ \mathrm{d} }{ \mathrm{d} t}\sigma_{t}\right]_i \right)^{2} = 4\tau_{i}^{-2}(\mLambda_{t} r_t)_{i}^{2} \leq 8 \tau_{i}^{-2}\left((\mSigma_{t} r_t)_{i}^{2}+(\mproj_{t}^{(2)} r_t)_{i}^{2}\right).
\end{equation}
Substituting (\ref{eq:Texpand2}) in (\ref{eq:Texpand1}), we have
\begin{align}
\|\E(\mathbf{T}^{-1}\delta_{\tau})^{2}\|_{\tau + \infty}  & \leq 8\int_{0}^{1}\left\Vert \mathbf{T}^{-2}\E\left((\mSigma_{t} r_t)^{2}+(\mproj_{t}^{(2)} r_t)^{2}\right)\right\Vert _{\tau + \infty}dt\nonumber \\
 & \leq 8 \max_{t\in[0,1]}\left(\left\Vert \mathbf{T}^{-2}\E(\mSigma_{t} r_t)^{2}\right\Vert _{\tau + \infty}+\left\Vert \mathbf{T}^{-2}\E(\mproj_{t}^{(2)} r_t)\right\Vert _{\tau + \infty}\right)\nonumber \\
 & \leq 10 \max_{t\in[0,1]}\left(\left\Vert \E r_t^{2}\right\Vert _{\sigma+\infty}+
 \left\Vert \E(\mSigma_{t}^{-1}\mproj_{t}^{(2)} r_t)^{2}\right\Vert _{\sigma_{t}+\infty} \right)
 \label{eq:Texpand3}
\end{align} 
where we used $w_{t}\approx_{\gamma}w$, $\sigma_{t}\approx_{4\gamma}\sigma$
and $\tau\geq\sigma$ at the end. 

For the second term in (\ref{eq:Texpand3}), using $\|\mSigma_{t}^{-1}\mproj_{t}^{(2)} r_t\|_{\infty}\leq\| r_t\|_{\infty}\leq\frac{9}{8}\gamma$
(Lemma \ref{lem:P2bound}), $\mSigma_{t}\succeq\mproj_{t}^{(2)}$
(Lemma \ref{lem:P2bound}) and that $\mproj_{t}^{(2)}$ is entry-wise non-negative, we have
\begin{align}
\left\Vert \E(\mSigma_{t}^{-1}\mproj_{t}^{(2)} r_t)^2\right\Vert _{\sigma_{t}+\infty} & \leq
\frac{9}{8}\gamma\E\left\Vert \mSigma_{t}^{-1}\mproj_{t}^{(2)}| r_t|\right\Vert _{\sigma_{t}+\infty}\nonumber \\
 & \leq \frac{9}{8} \gamma\E\|| r_t|\|_{\mproj_{t}^{(2)}+\infty}\leq
 1.3\gamma\E\|\mw^{-1}|\delta_{w}|\|_{\mproj_{t}^{(2)}+\infty}\label{eq:sigmaP2term}
\end{align}
Putting this into (\ref{eq:Texpand3}) gives the second result.

For the third result, Taylor expansion shows that
\begin{align*}
\sigma_{1} = \sigma_{0} + \frac{ \mathrm{d} }{ \mathrm{d} t } \sigma_{t} \Big|_{t =0} +\int_{0}^{1}(1-t) \left[\frac{ \mathrm{d}^{2}}{ \mathrm{d} t^{2} }\sigma_{t}  \right]  \mathrm{d} t.
\end{align*}
Using again that $\frac{ \mathrm{d} }{ \mathrm{d} t }\sigma_{t} =2\mLambda_{t} r_t$ (Lemma
\ref{lem:Dsigma}) we have that $\frac{ \mathrm{d} }{ \mathrm{d} t } \sigma_{t} \Big|_{t = 0}  = 2 \mLambda \mw^{-1} \delta_w$ and therefore,
\begin{align}
(*)\defeq & \|\E_{w^{\new}} [ \mathbf{T}^{-1}(\tau_{1}-\tau_{0}-2\mLambda\mw^{-1}\delta_{w}) ] \|_{\tau+\infty}\nonumber \\
\leq & \normFull{\int_{0}^{1}(1-t)\E_{w^{\new}} \left[ \mathbf{T}^{-1}\frac{ \mathrm{d}^{2}}{ \mathrm{d} t^{2}}\sigma_{t}(t) \right]  \mathrm{d} t }_{\tau+\infty}\nonumber \\
\leq & \frac{1}{2}\int_{0}^{1} \normFull{ \E_{w^{\new}} \left[ \mathbf{T}^{-1}\frac{ \mathrm{d}^{2} }{\mathrm{d} t^{2}}\sigma_{t}(t) \right] }_{\tau+\infty} \mathrm{d} t.\label{eq:Texpand4}
\end{align}
Hence, it suffices to bound $\frac{d^{2}}{dt^{2}}\sigma_{t}(t)$. 

Note that 
\begin{equation}
\frac{ \mathrm{d}^{2}}{ \mathrm{d} t^{2}}\sigma_{t}(t) = 2 \left[ \frac{ \mathrm{d} }{ \mathrm{d} t }\mLambda_{t} \right]  r_t - 2 \mLambda_{t}  r_t^{2}.\label{eq:dsigma2}
\end{equation}
Let $\mDelta_{t} \defeq \mdiag( r_t)$. Recall that $\mLambda_{t} = \mSigma_{t}-(\mw_{t}\ma(\ma^{\top}\mw_{t}^{2}\ma)^{-1}\ma^{\top}\mw_{t})^{(2)}$,
Lemma \ref{lem:Dsigma} shows
\begin{align*}
\frac{ \mathrm{d} }{ \mathrm{d} t}\mLambda_{t}
= & ~ \frac{ \mathrm{d} }{ \mathrm{d} t}\mSigma_{t}-2\mproj_{t} \circ \frac{ \mathrm{d} }{ \mathrm{d} t }(\mw_{t}\ma(\ma^{\top}\mw_{t}^{2}\ma)^{-1}\ma^{\top}\mw_{t})\\
= & ~ 2 \mdiag( \mLambda_{t}  r_t ) - 2 \mDelta_{t} \mproj_{t}^{(2)} - 2\mproj_{t}^{(2)} \mDelta_{t} + 4\mproj_{t} \circ ( \mproj_{t} \mDelta_{t} \mproj_{t} ).
\end{align*}
Using this in (\ref{eq:dsigma2}) gives
\begin{align*}
\frac{ \mathrm{d}^{2} }{ \mathrm{d} t^{2} }\sigma_{t}(t)
= & ~ 4\mdiag(\mLambda_{t} r_t)\cdot r_t-4\mDelta_{t}\mproj_{t}^{(2)} r_t
  -4\mproj_{t}^{(2)}\mDelta_{t} r_t+8(\mproj_{t}\circ(\mproj_{t}\mDelta_{t}\mproj_{t})) r_t-2\mLambda_{t} r_t^{2}\\
= & ~ 4\mDelta_{t}\mLambda_{t} r_t-4\mDelta_{t}\mproj_{t}^{(2)} r_t-4\mproj_{t}^{(2)} r_t^{2}
  +8(\mproj_{t}\circ(\mproj_{t}\mDelta_{t}\mproj_{t})) r_t-2\mSigma_{t} r_t^{2}+2\mproj_{t}^{(2)} r_t^{2}\\
= & ~ 2\mSigma_{t} r_t^{2}-8\mDelta_{t}\mproj_{t}^{(2)} r_t-2\mproj_{t}^{(2)} r_t^{2}+8(\mproj_{t}\circ(\mproj_{t}\mDelta_{t}\mproj_{t})) r_t.
\end{align*}
Putting this into (\ref{eq:Texpand4}) and using $w_{t}\approx_{\gamma}w$,
$\sigma_{t} \approx_{4\gamma} \sigma$, and $\tau \geq \sigma$, we have
\begin{align}
(*)\leq & \max_{t}\left[\| \E [ \mathbf{T}^{-1} \mSigma_{t} r_t^{2}\|_{\tau+\infty} ] + 4\|\E [ \mathbf{T}^{-1}\mDelta_{t}\mproj_{t}^{(2)} r_t ] \|_{\tau+\infty}\right]\nonumber \\
 & +\max_{t}\left[\|\E [ \mathbf{T}^{-1}\mproj_{t}^{(2)} r_t^{2} ] \|_{\tau+\infty}+4\|\E [ \mathbf{T}^{-1}(\mproj_{t}\circ(\mproj_{t}\mDelta_{t}\mproj_{t})) r_t ] \|_{\tau+\infty}\right]\nonumber \\
\leq & \max_{t}\left[1.2 \|\E [ (\mw^{-1}\delta_{w})^{2} ] \|_{\sigma+\infty}+5\| r_t\|_{\infty}\|\E [ \mathbf{\mSigma}_{t}^{-1}\mproj_{t}^{(2)} r_t ] \|_{\sigma_{t}+\infty}\right]\nonumber \\
 & +\max_{t}\left[1.2 \|\E [ \mathbf{\mSigma}_{t}^{-1}\mproj_{t}^{(2)} r_t^{2} ] \|_{\sigma_{t}+\infty}+5\|\E [ \mathbf{\mSigma}_{t}^{-1}(\mproj_{t}\circ(\mproj_{t}\mDelta_{t}\mproj_{t})) r_t ] \|_{\sigma_{t}+\infty}\right].\label{eq:Texpand5}
\end{align}
For the second term in (\ref{eq:Texpand5}), we follow a similar argument
as (\ref{eq:sigmaP2term}) and get
\begin{align*}
\|\E [ \mathbf{\mSigma}_{t}^{-1}\mproj_{t}^{(2)} r_t ] \|_{\sigma_{t}+\infty}
\leq \E [ \| \mathbf{\mSigma}_{t}^{-1}\mproj_{t}^{(2)} r_t\|_{\sigma_{t}+\infty} ] 
\leq \E \Big[ \|| r_t|\|_{\mproj_{t}^{(2)}\mSigma_{t}^{-1}\mproj_{t}^{(2)}+\infty} \Big]
\leq \E \Big[ \|| r_t|\|_{\mproj_{t}^{(2)}+\infty} \Big].
\end{align*}
Similarly, the third term in (\ref{eq:Texpand5}), we have
\begin{align*}
\|\E [ \mathbf{\mSigma}_{t}^{-1} \mproj_{t}^{(2)}  r_t^{2} ] \|_{\sigma_{t}+\infty} \leq \E \Big[ \| r_t^{2}\|_{\mproj_{t}^{(2)}+\infty} \Big] .
\end{align*}
For the fourth term, note that
\begin{align*}
e_i^\top (\mproj_{t}\circ(\mproj_{t}\mDelta_{t}\mproj_{t})) r_t 
& =\sum_{j \in [m]} (\mproj_{t})_{i,j} (\mproj_{t}\mDelta_{t}\mproj_{t})_{i,j} [ r_t]_j
  = e_i^\top \mproj_{t}\mDelta_{t}\mproj_{t}\mDelta_{t}\mproj_{t} e_i \\
 & \leq e_{i}^{\top}\mproj_{t}\mDelta_{t}^{2}\mproj_{t}e_{i}=e_{i}^{\top}\mproj_{t}^{(2)} r_t^{2}.
\end{align*}
Hence, we can bound the term again by $\E\| r_t^{2}\|_{\mproj_{t}^{(2)}}^{2}$.
In summary, we have
\begin{align*}
(*) 
& \leq 1.2 \|\E [ (\mw^{-1}\delta_{w})^{2} ] \|_{\sigma+\infty}+ 5\| r_t\|_{\infty}\E [ \|| r_t|\|_{\mproj_{t}^{(2)}+\infty} ] + 6.2\E [ \| r_t^{2}\|_{\mproj_{t}^{(2)}+\infty} ] \\
& \leq 1.2 \|\E [ (\mw^{-1}\delta_{w})^{2} ] \|_{\sigma+\infty}+11.2\| r_t\|_{\infty}\E [ \|| r_t|\|_{\mproj_{t}^{(2)}+\infty} ] \\
& \leq 1.2 \|\E [ (\mw^{-1}\delta_{w})^{2} ] \|_{\sigma+\infty}+13\gamma\E [ \|\mw^{-1}|\delta_{w}|\|_{\mproj_{t}^{(2)}+\infty} ]
\end{align*}
where we used that $\mproj_{t}^{(2)}$ is entry-wise non-negative
and $\| r_t\|_{\infty}\leq\frac{9}{8}\gamma$ at the end.
This gives the third result.
\end{proof}

\begin{lemma}[{\cite[Appendix A]{blss20}}]
\label{lem:P2sample}Let $\ma\in\R^{m\times n}$ be non-degenerate,
$\delta\in\R^{m}$ and $p_{i}\geq\min\{1,k\cdot\sigma_{i}(\ma)\}$.
Further, let $\tilde{\delta}\in\R^{m}$ be chosen randomly where for
all $i\in[m]$ independently $\tilde{\delta}_{i}=\delta_{i}/p_{i}$
with probability $p_{i}$ and $\tilde{\delta_{i}}=0$ otherwise. Then
\begin{align*}
\E [ \tilde{\delta} ] = \delta \quad \text{and} \quad 
\E\|\tilde{\delta}\|_{\mproj^{(2)}(\ma)}^{2}\leq(1+(1/k))\cdot\|\delta\|_{\sigma(\ma)}^{2}.
\end{align*}
\end{lemma}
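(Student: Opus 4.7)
The plan is to verify both claims by direct moment computation, leveraging only two standard facts about leverage scores and the Hadamard-square projection: that $(\mproj(\ma))_{i,i}=\sigma_i(\ma)$ and that $\mproj^{(2)}(\ma)\preceq\mSigma(\ma)$ (the latter being the standard Schur/diagonal-dominance bound for squared projections, referenced as \Cref{lem:P2bound} in the excerpt).

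First I would dispatch the unbiasedness claim: since $\tilde\delta_i$ is independently equal to $\delta_i/p_i$ with probability $p_i$ and $0$ otherwise, $\E[\tilde\delta_i]=p_i\cdot(\delta_i/p_i)=\delta_i$, giving $\E[\tilde\delta]=\delta$.

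Next I would expand the quadratic form entrywise. Writing $\mproj\defeq\mproj(\ma)$, we have
\[
\|\tilde\delta\|_{\mproj^{(2)}}^{2}=\sum_{i,j}\mproj_{i,j}^{2}\,\tilde\delta_{i}\tilde\delta_{j}.
\]
By independence, $\E[\tilde\delta_i\tilde\delta_j]=\delta_i\delta_j$ for $i\neq j$, while $\E[\tilde\delta_i^{2}]=\delta_i^{2}/p_i$. Splitting the sum into off-diagonal and diagonal pieces and reassembling yields
\[
\E\|\tilde\delta\|_{\mproj^{(2)}}^{2}=\|\delta\|_{\mproj^{(2)}}^{2}+\sum_{i}\mproj_{i,i}^{2}\,\delta_i^{2}\left(\tfrac{1}{p_i}-1\right)=\|\delta\|_{\mproj^{(2)}}^{2}+\sum_{i}\sigma_i(\ma)^{2}\delta_i^{2}\left(\tfrac{1}{p_i}-1\right).
\]

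To finish, I would bound each term. The first is at most $\|\delta\|_{\sigma(\ma)}^{2}$ by $\mproj^{(2)}(\ma)\preceq\mSigma(\ma)$. For the second, I would case-split on $p_i$: if $p_i=1$ the summand vanishes, while otherwise $p_i\geq k\sigma_i(\ma)$ by hypothesis, so $\sigma_i(\ma)^{2}/p_i\leq\sigma_i(\ma)/k$ and hence $\sigma_i(\ma)^{2}(1/p_i-1)\leq\sigma_i(\ma)/k$. Summing gives $(1/k)\|\delta\|_{\sigma(\ma)}^{2}$, yielding the stated bound $(1+1/k)\|\delta\|_{\sigma(\ma)}^{2}$.

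There is no real obstacle here; this is essentially a variance calculation for Bernoulli sampling, and the only non-trivial inputs are the leverage-score lower bound on $p_i$ and the inequality $\mproj^{(2)}\preceq\mSigma$, both of which are standard. The main subtlety to get right is the case analysis on $p_i=1$ versus $p_i<1$, which ensures the $1/p_i$ blowup is always controlled by the leverage-score floor.
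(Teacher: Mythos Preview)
Your proof is correct and is the standard variance computation for this result. Note that the paper does not actually prove this lemma; it is quoted as a black box from \cite[Appendix A]{blss20}, so there is no in-paper proof to compare against, but your argument is essentially the canonical one.
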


\begin{lemma}\label{lem:P2order}\label{lem:p2_stability}
for all non-degenerate  $\ma\in\R^{m\times n}$, $w,w'\in\R_{>0}^{m}$ with
$w'\approx_{\gamma}w$ for $\gamma>0$, and any non-negative vector
$v\in\R_{\geq0}^{m}$, we have
\[
\|v\|_{\mproj(\mw'\ma)^{(2)}}\leq e^{4\gamma}\|v\|_{\mproj(\mw\ma)^{(2)}}.
\]
\end{lemma}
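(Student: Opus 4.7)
Write $\mw' = \mw\mD$ where $\mD = \mdiag(w'/w)$. Since $w' \approx_\gamma w$, every diagonal entry of $\mD$ lies in $[e^{-\gamma}, e^{\gamma}]$. Setting $\mb \defeq \mw\ma$, a direct calculation gives
\[
\mproj(\mw'\ma) = \mD\mb(\mb^\top \mD^2 \mb)^{-1}\mb^\top \mD = \mD \mM \mD, \qquad \mM \defeq \mb(\mb^\top\mD^2\mb)^{-1}\mb^\top,
\]
so $(\mproj(\mw'\ma))_{ij}^2 = D_{ii}^2 D_{jj}^2 \mM_{ij}^2$. Consequently, for $\tilde v \defeq \mD^2 v$, we have
\[
v^\top \mproj(\mw'\ma)^{(2)} v = \tilde v^\top \mM^{(2)} \tilde v.
\]

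The first step is to compare $\mM$ with $\mproj(\mw\ma)$. Since $\mD^2 \approx_{2\gamma} \mI$ gives $\mb^\top \mD^2 \mb \approx_{2\gamma} \mb^\top\mb$, we get $(\mb^\top\mD^2\mb)^{-1} \preceq e^{2\gamma}(\mb^\top\mb)^{-1}$ and hence $\mM \preceq e^{2\gamma}\mproj(\mw\ma)$, with both sides PSD. The second step is to pass this spectral bound to the Hadamard square. Using the Schur product theorem (if $\mX \succeq 0$ and $\mY \succeq 0$ with $\mY \preceq \mZ$ for PSD $\mZ$, then $\mX \circ \mY \preceq \mX \circ \mZ$), applied twice, yields
\[
\mM^{(2)} = \mM \circ \mM \preceq e^{2\gamma}\, \mproj(\mw\ma) \circ \mM \preceq e^{4\gamma}\, \mproj(\mw\ma) \circ \mproj(\mw\ma) = e^{4\gamma}\, \mproj(\mw\ma)^{(2)}.
\]
Therefore $\tilde v^\top \mM^{(2)} \tilde v \leq e^{4\gamma}\, \tilde v^\top \mproj(\mw\ma)^{(2)} \tilde v$.

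Finally, to remove $\tilde v$ in favor of $v$, I use that $\mproj(\mw\ma)^{(2)}$ is entrywise nonnegative (it is the entrywise square of a real matrix) and that $v \ge 0$ (so $\tilde v \ge 0$ as well, and $\tilde v_i \le e^{2\gamma} v_i$ entrywise). Expanding the quadratic form coordinatewise,
\[
\tilde v^\top \mproj(\mw\ma)^{(2)} \tilde v = \sum_{i,j} \tilde v_i \tilde v_j \, [\mproj(\mw\ma)^{(2)}]_{ij} \leq e^{4\gamma}\sum_{i,j} v_i v_j\, [\mproj(\mw\ma)^{(2)}]_{ij} = e^{4\gamma}\, v^\top\mproj(\mw\ma)^{(2)} v.
\]
Combining gives $v^\top \mproj(\mw'\ma)^{(2)} v \leq e^{8\gamma}\, v^\top\mproj(\mw\ma)^{(2)} v$, and taking square roots yields the claim.

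The only potentially subtle point is the use of the Schur product theorem to lift the Löwner ordering $\mM \preceq e^{2\gamma}\mproj(\mw\ma)$ to the Hadamard square; the nonnegativity argument and the handling of the diagonal rescalings $\mD^2$ are otherwise straightforward.
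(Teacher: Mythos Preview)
Your proof is correct and takes essentially the same approach as the paper: both establish the spectral bound $\mproj(\mw'\ma) \preceq e^{2\gamma}\mD\mproj(\mw\ma)\mD$ (you phrase this as $\mM \preceq e^{2\gamma}\mproj(\mw\ma)$ after factoring out $\mD$), invoke the Schur product theorem to pass to the Hadamard square, and then use entrywise nonnegativity of $v$ and $\mproj(\mw\ma)^{(2)}$ to absorb the $\mD^2$ scaling, arriving at the same $e^{8\gamma}$ bound on the squared norm. The only cosmetic difference is that the paper applies the Schur product step via the identity $\mm^{(2)}-\mn^{(2)}=(\mm-\mn)\circ(\mm+\mn)$, whereas you apply it twice with one fixed PSD factor; both are equivalent.
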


\begin{proof}
Let $\md \defeq \mw'(\mw)^{-1}$. Then, we have that
\begin{align*}
0\preceq \mproj(\mw'\ma) & =\md\mw\ma(\ma^{\top}\mw'^{2}\ma)^{-1}\ma^{\top}\mw\md\\
 & \preceq e^{2\gamma}\md\mw\ma(\ma^{\top}\mw^{2}\ma)^{-1}\ma^{\top}\mw\md\\
 & = e^{2\gamma}\md\mproj(\mw\ma)\md.
\end{align*}
By Schur product theorem (that the entrywise product of two PSD matrices is PSD), we see that for any PSD $\mm, \mn \in \R^{m \times m}$ with $\mm \succeq \mn$ it is the case that $\mm^{(2)} - \mn^{(2)} = (\mm - \mn) \circ (\mm + \mn)$ is PSD (where we recall that $\circ$ denotes entrywise product and use the shorthand $\mm^{(2)} \defeq \mm \circ \mm$). Thus, $\mproj(\mw'\ma) \preceq e^{2\gamma}\md\mproj(\mw\ma)\md$ implies
\[
\mproj(\mw'\ma)^{(2)} \preceq  \left[e^{2\gamma}\md\mproj(\mw\ma)\md\right] \circ \left[e^{2\gamma}\md\mproj(\mw\ma)\md\right] = e^{4\gamma}\md^{2}\mproj(\mw\ma)^{(2)}\md^{2}.
\]
Hence, we have 
\begin{align*}
\|v\|_{\mproj(\mw'\ma)^{(2)}}^{2}  \leq e^{4\gamma}\cdot\|\md^{2}v\|_{\mproj(\mw\ma)^{(2)}}^{2}
  \leq e^{8\gamma}\cdot\|v\|_{\mproj(\mw\ma)^{(2)}}^{2}
\end{align*}
where we used that $v$ and $\mproj(\mw\ma)^{(2)}$ are coordinate-wise
non-negative in the last step.
\end{proof}

Now, we can bound the change of $\tau$ by bounding the change of
$w$.
\begin{lemma}
\label{lem:Edelta_tau_bound} In \Cref{alg:short_step_LS} the vector $\delta_\tau \defeq \tau(x^{\new},s^{\new}) - \tau(x,s)$ satisfies the following
\begin{itemize}
\item $\|\mT^{-1}\delta_{\tau}\|_{\infty}\leq25\gamma$,
\item $\delta_{\tau}=\mLambda((1-2\alpha)\mx^{-1}\delta_{x}-(1+2\alpha)\ms^{-1}\delta_{s})+\eta$
with $\|\mT^{-1}\E[\eta]\|_{\tau+\infty}\leq2000\gamma^{2}$, and
\item $\|\E[(\mT^{-1}\delta_{\tau})^{2}]\|_{\tau+\infty}\leq2000\gamma^{2}.$
\end{itemize}
In particular, we have $\E\left[\normFull{\mT^{-1}\delta_{\tau}}_{\phi''(v)}^{2}\right]\le500\gamma^{2}\norm{\phi''(v)}_{\tau+\infty}^{*}$
\end{lemma}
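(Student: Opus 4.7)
The plan is to reduce everything to \Cref{lem:expand_sigma} by setting $w \defeq x^{1/2-\alpha} s^{-1/2-\alpha}$ and $w^{\new} \defeq (x^{\new})^{1/2-\alpha} (s^{\new})^{-1/2-\alpha}$, so that $\tau(x,s) = \sigma(\mw\ma) + \tfrac{n}{m}\vones$ and $\tau(x^{\new},s^{\new}) = \sigma(\mw^{\new}\ma) + \tfrac{n}{m}\vones$ with the regularization vector $\eta = \tfrac{n}{m}\vones$ playing the role required by \Cref{lem:expand_sigma}. First I will verify the hypotheses: \Cref{lem:ls_second} gives $\|\mx^{-1}\delta_x\|_\infty, \|\ms^{-1}\delta_s\|_\infty \leq 2\gamma$ with probability $1$, and a first-order expansion of the power map shows $\mw^{-1}\delta_w$ is, up to $O(\gamma^2)$ multiplicative slack, equal to $(1/2-\alpha)\mx^{-1}\delta_x - (1/2+\alpha)\ms^{-1}\delta_s$; in particular $\|\mw^{-1}\delta_w\|_\infty \leq O(\gamma)$ and $w^{\new} \approx_{O(\gamma)} w$ almost surely, well within the $1/80$ requirement.

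The first bullet then follows immediately from the first conclusion of \Cref{lem:expand_sigma}, absorbing the constants into the claimed $25\gamma$ bound. For the second bullet, I apply the third conclusion of \Cref{lem:expand_sigma}, which gives $\delta_\tau = 2\mLambda \mw^{-1}\delta_w + \eta_0$ with $\|\mT^{-1}\E[\eta_0]\|_{\tau+\infty} \leq K$. Next I replace the first-order piece $2\mw^{-1}\delta_w$ by $(1-2\alpha)\mx^{-1}\delta_x - (1+2\alpha)\ms^{-1}\delta_s$, incurring an error that is quadratic in $\mx^{-1}\delta_x$ and $\ms^{-1}\delta_s$ (Taylor remainder of the map $(x,s)\mapsto x^{1/2-\alpha}s^{-1/2-\alpha}$); this remainder, once multiplied by $\mLambda$ (whose rows have $\|\cdot\|_{\tau+\infty}$ controlled via $\sigma \leq \tau$ and $\mproj^{(2)} \preceq \mSigma$), contributes only $O(\gamma^2)$ to $\|\mT^{-1}\cdot\|_{\tau+\infty}$ and is folded into the new $\eta$. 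The third bullet likewise follows from the second conclusion of \Cref{lem:expand_sigma} applied to $K$, and the final bound $\E\|\mT^{-1}\delta_\tau\|_{\phi''(v)}^2 \leq 500\gamma^2 \|\phi''(v)\|_{\tau+\infty}^*$ is obtained by the same generalized Cauchy--Schwarz trick used in \Cref{lem:ls_second}: bound $\|u^2\|_{\tau+\infty} \leq \|u\|_\infty \|u\|_{\tau+\infty}$ (using $\|u\|_\infty\leq 25\gamma$) and pair against $\|\phi''(v)\|_{\tau+\infty}^*$.

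The real work is bounding $K$. The first piece $10\|\E[(\mw^{-1}\delta_w)^2]\|_{\sigma+\infty}$ reduces, via $(\mw^{-1}\delta_w)^2 \leq 2((\mx^{-1}\delta_x)^2 + (\ms^{-1}\delta_s)^2) + O(\gamma^4)$ entrywise and $\|\cdot\|_{\sigma+\infty}\leq\|\cdot\|_{\tau+\infty}$, to the second-order bounds $\|\E[(\mx^{-1}\delta_x)^2]\|_{\tau+\infty}\leq 12\gamma^2$ and $\|(\ms^{-1}\delta_s)^2\|_{\tau+\infty}\leq 4\gamma^2$ from \Cref{lem:ls_second}, giving $O(\gamma^2)$. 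The second piece $13\gamma \max_{t} \E \|\mw^{-1}|\delta_w|\|_{\mproj^{(2)}(\mw_t\ma)+\infty}$ is more delicate and will be the main obstacle. Here \Cref{lem:P2order} lets me pass from the running matrix $\mproj^{(2)}(\mw_t\ma)$ to $\mproj^{(2)}(\oma)$ at the constant cost $e^{4\gamma}$, after which I dominate $|\mw^{-1}\delta_w|$ by $|\mx^{-1}\delta_x| + |\ms^{-1}\delta_s|$ (plus negligible higher order). For the $\ms^{-1}\delta_s$ contribution I combine $\|\ms^{-1}\delta_s\|_\infty \leq 2\gamma$ with $\|\ms^{-1}\delta_s\|_{\mproj^{(2)}}^2 \leq \|\ms^{-1}\delta_s\|_{\sigma}^2 \leq O(\gamma^2)/\cnorm^2$ coming from \Cref{lem:proj_bounds} applied to $\delta_p$. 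For the randomized $\mx^{-1}\delta_x$ contribution I invoke \Cref{lem:P2sample} with the sampling probabilities $q_i \geq C \sigma_i(\oma)\log(m/(\epsilon r))\gamma^{-2}$ from Algorithm~\ref{alg:short_step_LS}, which yields $\E\|\mR\delta_r\|_{\mproj^{(2)}}^2 \leq (1 + 1/k)\|\delta_r\|_\sigma^2$ with $k = \Omega(\log m/\gamma^2)$; the deterministic $g$ part is handled by \Cref{lem:proj_bounds}. Multiplied by the prefactor $13\gamma$, this yields an $O(\gamma^2)$ contribution as well, so $K\leq O(\gamma^2)$. Tuning the constants (using $\epsilon\leq\alpha^2/240$ and $\cnorm=10/\alpha$) produces the claimed $2000\gamma^2$ bound.
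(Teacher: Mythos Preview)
Your proposal is essentially correct and follows the same route as the paper: reduce to \Cref{lem:expand_sigma} with $w=x^{1/2-\alpha}s^{-1/2-\alpha}$, then bound $K$ by combining the second-order bounds of \Cref{lem:ls_second} for the first piece and \Cref{lem:P2order}/\Cref{lem:P2sample} for the second.

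One imprecision worth flagging: your passage ``\Cref{lem:P2order} lets me pass from $\mproj^{(2)}(\mw_t\ma)$ to $\mproj^{(2)}(\oma)$ at the constant cost $e^{4\gamma}$'' is not quite right. The row scalings of $\mw_t\ma$ and $\oma=\omx^{1/2}\oms^{-1/2}\ma$ differ by roughly $(xs)^{\alpha}$, which is \emph{not} $\gamma$-close to a constant vector. It is still within an $O(1)$ factor of a scalar (since $xs\approx\mu\tau$ and $\tau^{\alpha}\in[e^{-1/4},e^{1/4}]$ by the choice $\alpha=1/(4\log(4m/n))$), so your argument does go through with an $O(1)$ multiplicative loss rather than $e^{4\gamma}$; the constants absorb this. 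The paper sidesteps this nuisance by landing on $\mproj^{(2)}(\tilde{\mw}\ma)$ with $\tilde{\mw}=\mx^{1/2-\alpha}\ms^{-1/2-\alpha}$ (so that $\mw_t\approx_{5\gamma}\tilde{w}$ genuinely holds), and then observes via \Cref{lem:sigma_approx} that the algorithm's sampling by $\sigma_i(\oma)$ already implies sampling by $\Omega(\sigma_i(\tilde{\mw}\ma))$, as required for \Cref{lem:P2sample} with that matrix.

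For the ``in particular'' clause, your sketch via $\|u^2\|_{\tau+\infty}\le\|u\|_\infty\|u\|_{\tau+\infty}$ is more circuitous than needed; you can directly pair $\sum_i\phi''(v_i)\E[(\mT^{-1}\delta_\tau)_i^2]\le\|\phi''(v)\|_{\tau+\infty}^*\cdot\|\E[(\mT^{-1}\delta_\tau)^2]\|_{\tau+\infty}$ and invoke the third bullet.
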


\begin{proof}
Let $\tilde{w} \defeq x^{\frac{1}{2}-\alpha}s^{-\frac{1}{2}-\alpha}$, $\tilde{w}^{\new} \defeq  [x^\new]^{\frac{1}{2}-\alpha} [s^\new]^{-\frac{1}{2}-\alpha}$, $\tilde{\mw} \defeq \mdiag(\tilde{w})$, and $\tilde{\mw}_t \defeq \mdiag((1-t)w + t w^{\new}))$ for all $t \in [0, 1]$.
Using $\|\mx^{-1}\delta_{x}\|_{\infty}\leq2\gamma$,
$\|\ms^{-1}\delta_{s}\|_{\infty}\leq2\gamma$ (Lemma \ref{lem:ls_second})
and $\alpha \in[0, \frac{1}{5}]$, we have that $w^{\new}\approx_{5\gamma}w$. Consequently, we can invoke \Cref{lem:expand_sigma} with $w$, $w^{\new}$, $\mw_t$, and $\gamma$ set to be $\tilde{w}$, $\tilde{w}^{\new}$, $\tilde{\mw}_t$, and $5\gamma$  respectively. This gives the first result immediately and for the second and third result, for $\delta_{\tilde{w}} \defeq \tilde{w}^{\new} - \tilde{w}$ it suffices
to bound
\begin{align*}
K \defeq 10\|\E [ (\tilde{\mw}^{-1}\delta_{\tilde{w}})^{2} ] \|_{\sigma + \infty}+65 \gamma \max_{t\in[0,1]}\E_{\tilde{w}^{\new}} [ \|\tilde{\mw}^{-1}|\delta_{\tilde{w}}|\|_{\mproj^{(2)}(\tilde{\mw}_{t}\ma)+\infty} ] .
\end{align*}

Now, for all $t \in [0,1]$ let $x_t \defeq x + t \delta_x$ and $s_t \defeq x = t \delta s$. We have
\begin{align*}
\frac{d}{dt} \ln\left( x_t^{\frac{1}{2} - \alpha } s_t^{- \frac{1}{2} - \alpha} \right)
&= \mx_t^{-\frac{1}{2} + \alpha } \ms_t^{\frac{1}{2} + \alpha} \left[ \left(\frac{1}{2} - \alpha\right) \ms_t^{- \frac{1}{2} -\alpha} \mx_t^{\frac{1}{2} - \alpha - 1}   \delta_x
+ \left(- \frac{1}{2} - \alpha\right) \mx_t^{\frac{1}{2} -\alpha} \ms_t^{- \frac{1}{2} - \alpha - 1} \delta_s
\right] \\
&=  \left(\frac{1}{2} - \alpha\right) \mx_t^{-1}   \delta_x
- \left(\frac{1}{2} + \alpha\right) \ms_t^{-1} \delta_s
\end{align*}
and therefore we have
\begin{align*}
\norm{ \ln(\tilde{w}^{\new}) - \ln( \tilde{w} ) }_\infty &\leq \int_{0}^{1} \normFull{
 \left(\frac{1}{2} - \alpha\right) \mx_t^{-1}   \delta_x
- \left(\frac{1}{2} + \alpha\right) \ms_t^{-1} \delta_s
} dt \\
&\leq 
\left(\frac{1}{2} - \alpha\right) \int_{0}^{1}  \normFull{\mx_t^{-1}   \delta_x }_\infty dt
+ \left(\frac{1}{2} + \alpha\right)  \int_{0}^{1}  \normFull{\ms_t^{-1} \delta_s}_\infty dt \\
&\leq \frac{1}{1 - (1/40)} \left[
\left(\frac{1}{2} - \alpha\right) \int_{0}^{1}  \normFull{\mx^{-1}   \delta_x }_\infty dt
+ \left(\frac{1}{2} + \alpha\right)  \int_{0}^{1}  \normFull{\ms^{-1} \delta_s}_\infty dt \right]
\end{align*}
where in the last step we used that $\|\mx^{-1}\delta_{x}\|_{\infty}\leq\frac{1}{40}$ and
$\|\ms^{-1}\delta_{s}\|_{\infty}\leq\frac{1}{40}$. Since $\alpha \in [0, \frac{1}{5}]$ this implies that $\norm{ \ln(\tilde{w}^{\new}) - \ln( \tilde{w} ) }_\infty \leq 1/20$, i.e. $w^{\new} \approx_{1/20} w$. Consequently, \Cref{lem:mult_to_add} yields
\begin{align}
\norm{{\tilde{\mw}}^{-1}\delta_{\tilde{w}}}_\infty 
&\leq 
\frac{1 + (1/20)}{1 - (1/40)} \left[
\left(\frac{1}{2} - \alpha\right) \int_{0}^{1}  \normFull{\mx^{-1}   \delta_x }_\infty dt
+ \left(\frac{1}{2} + \alpha\right)  \int_{0}^{1}  \normFull{\ms^{-1} \delta_s}_\infty dt \right]
\nonumber
\\
&\leq \|\mx^{-1}\delta_{x}\|_\infty
+\norm{\ms^{-1}\delta_{s}}_\infty ~. \label{eq:mw_split}
\end{align}
Consequently, for the first term in $K$, using $\|\E[(\mx^{-1}\delta_{x})^{2}]\|_{\tau+\infty}\leq12\gamma^{2}$,
$\|(\ms^{-1}\delta_{s})^{2}\|_{\tau+\infty}\leq4\gamma^{2}$ (Lemma
\ref{lem:ls_second}), $\tau\geq\sigma$, and \eqref{eq:mw_split} we have
\begin{align*}
\|\E [ (\tilde{\mw}^{-1}\delta_{\tilde{w}})^{2} ] \|_{\sigma + \infty}
\leq & ~ 2\|\E(\mx^{-1}\delta_{x})^{2}\|_{\sigma + \infty}+2\|(\ms^{-1}\delta_{s})^{2}\|_{\sigma+\infty}\\
\leq & ~ 2(12\gamma^{2})+2(4\gamma^{2})=32\gamma^{2}.
\end{align*}

For the second term in $K$, we have again by \eqref{eq:mw_split} that
\begin{align*}
\E[ \|\tilde{\mw}^{-1}|\delta_{\tilde{w}}|\|_{\mproj^{(2)}(\mw_{t}\ma)+\infty} ]
& \leq \E [ \|\mx^{-1}|\delta_{x}|\|_{\mproj^{(2)}(\tilde{\mw}_{t}\ma)+\infty}+\|\ms^{-1}|\delta_{s}|\|_{\mproj^{(2)}(\tilde{\mw}_{t}\ma)+\infty} ] \\
& \leq 1.1 \E [ \|\mx^{-1}|\delta_{x}|\|_{\mproj^{(2)}(\tilde{\mw}\ma)+\infty}+\|\ms^{-1}|\delta_{s}|\|_{\mproj^{(2)}(\tilde{\mw}\ma)+\infty} ]
\end{align*}
where we used Lemma \ref{lem:P2order}, $\tilde{\mw}_{t} \approx_\gamma \tilde{\mw}$ and $\gamma<1/80$ at the end.

Since we sample coordinates $i$ of $\delta_{x}$ independently with
probability at least $10\sigma_{i}(\tilde{\mw}\ma)$, \Cref{lem:P2sample}  shows that
\begin{align*}
\E [ \| \mx^{-1} | \delta_{x} | \|_{\mproj^{(2)} ( \tilde{\mw} \ma ) } ]^{2} 
\leq
\E [ \| \mx^{-1} | \delta_{x} | \|_{\mproj^{(2)} ( \tilde{\mw} \ma ) }^{2} ] 
\leq 1.2 \cdot \| \E [ \mx^{-1} | \delta_{x} | ] \|_{\sigma(\tilde{\mw}\ma)}^{2} 
\end{align*}
Using this, we have
\begin{align*}
\E [ \|\tilde{\mw}^{-1}|\delta_{\tilde{w}}|\|_{\mproj^{(2)}(\tilde{\mw}_{t}\ma) + \infty} ] 
& \leq 1.4 \| \E [ \mx^{-1} | \delta_{x} | ] \|_{\sigma+\infty} + 1.1 \|\ms^{-1}\delta_{s}\|_{\sigma+\infty}\\
& \leq 2.5 \gamma+2.5\gamma=5\gamma
\end{align*}
where we used $\|\mx^{-1}\E [ |\delta_{x}| ] \|_{\tau + \infty} \leq 2 \gamma$, $\|\ms^{-1}\delta_{s}\|_{\tau + \infty}\leq2\gamma$
(Lemma \ref{lem:projection_sizes_ls}).

Hence, we have
\begin{align*}
K\leq 10(32\gamma^2)+65(5\gamma)(24\gamma^{2})\leq2000\gamma^2.
\end{align*}
\end{proof}

\subsubsection{Potential Decrease}
\label{sec:ls:potential_decrease}

Here we combine the analysis from \Cref{sub:ls:deltas} and \Cref{sub:delta_tau} to analyze our short-step method for the LS weight function.  We first provide two technical lemmas (\Cref{lem:P2bound} and \Cref{lem:LS_op_bound} respectively) from from \cite{ls19} and \cite{blss20} that we use in our analysis and then provide a sequence of results analogous to those we provided for the short-step method for the log barrier. In \Cref{lem:expect_bound_ls} we bound our first order expected progress from a step  (analogous to \Cref{lem:expect_bound}), in \Cref{lem:log_exp_progress_LS} we bound the expected decrease in the potential (analogous to \Cref{lem:log_exp_progress}), and in \Cref{lem:centrality_mainLS} we give our main centrality improvement lemma (analogous to \Cref{lem:centrality_main}. Finally, in \Cref{lem:infeasibility_bound_ls} we analyze the feasibility of a step (analagous to \Cref{lem:infeasibility_bound}) and put everything together to prove that \Cref{alg:short_step_LS} is a valid short step procedure in \Cref{lem:main_lem_LS} (analogous to \Cref{lem:main_lem}). 

\begin{lemma}[Lemma 47 in \cite{ls19}]
	\label{lem:P2bound}For any non-degenerate $\ma\in\R^{m\times n}$,
	we have
	\[
	\|\mSigma(\ma)^{-1}\mproj^{(2)}(\ma)\|_{\infty} \leq 1
	\text{ and }
	\|\mSigma(\ma)^{-1}\mproj^{(2)}(\ma)\|_{\tau(\ma)} \leq 1 ~.
	\] 
	Consequently,  $\|\mT(\ma)^{-1}\mproj^{(2)}(\ma)\|_{\infty} \leq 1$
	and $\|\mT(\ma)^{-1}\mproj^{(2)}(\ma)\|_{\tau(\ma)} \leq 1$.
\end{lemma}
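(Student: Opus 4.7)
The proof will rest on the basic identity $\sigma_i = \sum_j \mproj_{i,j}^2$, which I will establish first by using that $\mproj$ is symmetric and idempotent: $\sigma_i = \mproj_{i,i} = (\mproj\mproj^\top)_{i,i} = \sum_j \mproj_{i,j}^2$. Equivalently, $\mproj^{(2)}\vones = \sigma$, i.e.\ each row of $\mproj^{(2)}$ sums to the corresponding leverage score. From this the $\ell_\infty$ claim follows immediately: $\mm \defeq \mSigma^{-1}\mproj^{(2)}$ has entries $\mm_{i,j} = \mproj_{i,j}^2/\sigma_i \geq 0$ and row sums $\sum_j \mm_{i,j} = \sigma_i^{-1}\sum_j \mproj_{i,j}^2 = 1$, so $\mm$ is row-stochastic and its $\ell_\infty$-operator norm is exactly $1$.

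For the $\tau$-norm bound on $\mm$, the plan is to read each row of $\mm$ as a probability distribution. Setting $p_j^{(i)} \defeq \mproj_{i,j}^2/\sigma_i$, we have $(\mm x)_i = \sum_j p_j^{(i)} x_j$, and Jensen's inequality applied to $z \mapsto z^2$ yields $(\mm x)_i^2 \leq \sum_j p_j^{(i)} x_j^2 = \sigma_i^{-1}\sum_j \mproj_{i,j}^2 x_j^2$. Weighting by $\tau_i$, summing in $i$, and swapping sums reduces the claim to the per-coordinate estimate $\sum_i (\tau_i/\sigma_i)\mproj_{i,j}^2 \leq \tau_j$ for every $j$. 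To close this estimate I plan to expand $\tau_i = \sigma_i + n/m$ and invoke the PSD inequality $\mproj^{(2)} \preceq \mSigma$, which itself follows from the Schur product theorem applied to the decomposition $\mSigma - \mproj^{(2)} = \mproj \circ (\mI - \mproj)$, since both $\mproj$ and $\mI - \mproj$ are PSD and the entrywise product of PSD matrices is PSD.

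Finally, the ``consequently'' conclusions for $\mT^{-1}$ will follow by entrywise domination. Since $\tau \geq \sigma$ coordinatewise and $\mproj^{(2)}$ has nonnegative entries, we have $0 \leq (\mT^{-1}\mproj^{(2)})_{i,j} \leq (\mSigma^{-1}\mproj^{(2)})_{i,j}$, and consequently $|(\mT^{-1}\mproj^{(2)})x|_i \leq (\mSigma^{-1}\mproj^{(2)})|x|_i$ coordinatewise. Both $\|\cdot\|_{\infty}$ and $\|\cdot\|_{\tau}$ are monotone under entrywise absolute value of the input, so the bounds on $\mSigma^{-1}\mproj^{(2)}$ transfer immediately to $\mT^{-1}\mproj^{(2)}$.

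The main obstacle will be the per-coordinate estimate $\sum_i (\tau_i/\sigma_i)\mproj_{i,j}^2 \leq \tau_j$: after expanding, this amounts to controlling the diagonal entry $(\mproj\mSigma^{-1}\mproj)_{jj}$, and it is precisely in absorbing such a term via the regularizing $(n/m)\vones$ in $\tau$ together with the Schur-product PSD inequality $\mproj^{(2)} \preceq \mSigma$ that the argument is most delicate.
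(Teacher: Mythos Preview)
The paper does not prove this lemma; it is cited from \cite{ls19}. So there is no proof to compare against, but your proposal has a genuine gap that is worth spelling out.

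Your $\ell_\infty$ argument is fine, and the ``consequently'' deduction via entrywise domination is sound \emph{provided} the bounds on $\mSigma^{-1}\mproj^{(2)}$ hold. The problem is the $\tau$-norm bound on $\mSigma^{-1}\mproj^{(2)}$: the per-coordinate estimate you reduce to,
\[
\sum_i \frac{\tau_i}{\sigma_i}\,\mproj_{i,j}^2 \;\leq\; \tau_j,
\]
is equivalent (after expanding $\tau_i = \sigma_i + n/m$) to $(\mproj\,\mSigma^{-1}\mproj)_{j,j}\leq 1$, and this is \emph{false} in general. For $\ma = (2,1)^\top\in\R^{2\times 1}$ one has $\sigma = (4/5,1/5)$ and $\mproj_{i,j} = a_ia_j/5$, giving $(\mproj\,\mSigma^{-1}\mproj)_{1,1} = \sum_i \mproj_{i,1}^2/\sigma_i = 4/5 + 4/5 = 8/5 > 1$. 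In fact the full claim $\|\mSigma^{-1}\mproj^{(2)}\|_{\tau}\leq 1$ fails in this example: with $\tau = (13/10,7/10)$ one computes $\mSigma^{-1}\mproj^{(2)} = \bigl(\begin{smallmatrix}4/5&1/5\\4/5&1/5\end{smallmatrix}\bigr)$, and at $x = (1,13/28)$ the ratio $\|\mSigma^{-1}\mproj^{(2)}x\|_\tau^2/\|x\|_\tau^2 = 100/91 > 1$. So no amount of cleverness with the Schur inequality will close that estimate---you are chasing a false target.

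What \emph{is} true, and what the paper actually uses, is $\|\mT^{-1}\mproj^{(2)}\|_{\tau}\leq 1$ (and $\|\mSigma^{-1}\mproj^{(2)}\|_{\sigma}\leq 1$, which is presumably the intended first claim). The clean proof bypasses Jensen entirely: for any positive diagonal $\mD$, the weighted operator norm satisfies $\|\mD^{-1}\mproj^{(2)}\|_{D} = \|\mD^{-1/2}\mproj^{(2)}\mD^{-1/2}\|_2$, and the right-hand matrix is symmetric PSD with spectrum in $[0,1]$ exactly when $\mproj^{(2)}\preceq \mD$. Your own Schur-product argument gives $\mproj^{(2)} = \mproj\circ\mproj \preceq \mproj\circ\mI = \mSigma \preceq \mT$, which finishes both cases at once. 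The Jensen-and-swap route introduces an unnecessary asymmetry between the scaling ($\mSigma^{-1}$) and the norm weight ($\tau$), and that asymmetry is precisely what breaks.
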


\begin{lemma}[Lemma 26 in \cite{blss20}]
	\label{lem:LS_op_bound}In Algorithm~\ref{alg:short_step_LS}, $\|\Psi\|_{\tau+\infty}\leq1-\alpha$ where 
	$\Psi \defeq ((1-4\alpha^{2})\mT^{-1}\mLambda-2\alpha \mI)\omw^{-1/2}(\mI - 2\mq)\omw^{1/2}$,
	$\mLambda \defeq \mLambda(\oma)$ and $\mq =\oma\omh^{-1}\oma^{\top}$.
\end{lemma}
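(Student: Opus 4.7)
The plan is to decompose $\Psi = A B$ where $A \defeq (1-4\alpha^{2})\mT^{-1}\mLambda - 2\alpha\mI$ and $B \defeq \omw^{-1/2}(\mI-2\mq)\omw^{1/2}$, bound $\|B h\|_{\tau}$, $\|B h\|_\infty$, $\|A h\|_{\tau}$ and $\|A h\|_\infty$ individually, and then combine through $\|x\|_{\tau+\infty} = \|x\|_\infty + \cnorm \|x\|_\tau$ with $\cnorm = 10/\alpha$. The role of each piece is that $B$ is an approximate reflection (hence an approximate isometry in a suitable norm), while $A$ is a near-$(2\alpha)\mI$ contraction in $\ell_\infty$ and $\ell_\tau$, so the product gains the desired $(1-\alpha)$ factor.

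First, I would analyze $B$. Writing $\tilde{\mq} \defeq \omw^{-1/2}\mq\omw^{1/2} = \omw^{-1/2}\oma\omh^{-1}\oma^{\top}\omw^{1/2}$, one checks that if $\omh$ equalled $\oma^{\top}\oma$ exactly, then $\tilde{\mq}$ would be idempotent and self-adjoint with respect to the inner product $\langle \cdot,\omw\cdot\rangle$, making $\mI-2\tilde{\mq}$ an isometry in the $\omw$-norm. Because $\omh \approx_{\gamma}\oma^{\top}\oma$ and $\ow \approx_{3\epsilon}\tau$ (from $\ow \approx_{2\epsilon} w$ and $\epsilon$-centrality), this should yield $\|B h\|_{\tau} \leq e^{O(\epsilon+\gamma)}\|h\|_\tau$. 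For the $\ell_\infty$ side, the triangle inequality together with the $\infty$-norm bound of Lemma~\ref{lem:proj_bounds}, $\|\omw^{-1/2}\mq\omw^{1/2} h\|_{\infty}\leq 2\|h\|_\tau$, gives $\|B h\|_\infty \leq \|h\|_\infty + 4\|h\|_\tau$.

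Second, I would analyze $A$. Since $\mLambda = \mSigma - \mproj^{(2)}(\oma)$ with $\mproj^{(2)}$ entrywise non-negative and $\sigma \le \tau$, Lemma~\ref{lem:P2bound} together with $\|\mT^{-1}\mSigma\|_\infty \le 1$, $\|\mT^{-1}\mSigma\|_\tau \le 1$ yields $\|\mT^{-1}\mLambda\|_\infty \le 1$ and $\|\mT^{-1}\mLambda\|_\tau \le 1$. Combined with the triangle inequality (noting the opposite signs of the two pieces of $A$, which is what buys the contraction), this gives operator norm bounds of the form $\|A h\|_\star \le \max\{1 - 4\alpha^2, 2\alpha\}\cdot \|h\|_\star + O(\alpha^{2})\|h\|_\star$ in each of $\star \in \{\infty,\tau\}$, which after careful bookkeeping can be sharpened to at most $1 - 2\alpha + O(\alpha^2)$.

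Putting the two bounds together in the mixed norm:
\begin{align*}
\|\Psi h\|_{\tau+\infty}
&= \|A B h\|_\infty + \cnorm\|A B h\|_\tau \\
&\leq (1 - 2\alpha + O(\alpha^2))\bigl(\|B h\|_\infty + \cnorm \|B h\|_\tau\bigr)\\
&\leq (1 - 2\alpha + O(\alpha^2))\bigl(\|h\|_\infty + 4\|h\|_\tau + \cnorm \, e^{O(\epsilon+\gamma)}\|h\|_\tau\bigr).
\end{align*}
The crucial point is that the cross-term $4\|h\|_\tau$ produced by $B$ in the $\ell_\infty$ direction equals $(4/\cnorm)\,\cnorm\|h\|_\tau = (2\alpha/5)\,\cnorm\|h\|_\tau$, so it is absorbed by the $-2\alpha\mI$ contraction of $A$, and one arrives at $\|\Psi\|_{\tau+\infty} \le 1-\alpha$ provided $\epsilon,\gamma$ are polynomially smaller than $\alpha$ (which holds by our parameter settings $\gamma = \epsilon/(100\lambda)$ and $\epsilon = \alpha^2/240$). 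The main obstacle I anticipate is this last constant-balancing: making all the $O(\alpha^2)$ and $O(\epsilon+\gamma)$ slack terms add up to strictly less than $\alpha$, which is exactly why $\cnorm$ is chosen to be $10/\alpha$ and $\epsilon$ is taken quadratically small in $\alpha$.
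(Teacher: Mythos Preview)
The paper does not give its own proof of this lemma; it is cited verbatim from \cite{blss20}. Nevertheless, your proposal has a concrete gap worth flagging.

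Your decomposition $\Psi = AB$ with $A=(1-4\alpha^2)\mT^{-1}\mLambda-2\alpha\mI$ and $B=\omw^{-1/2}(\mI-2\mq)\omw^{1/2}$ is natural, and your bounds on $B$ (namely $\|Bh\|_\tau\le e^{O(\epsilon+\gamma)}\|h\|_\tau$ via the near-involution argument, and $\|Bh\|_\infty\le\|h\|_\infty+4\|h\|_\tau$ via Lemma~\ref{lem:proj_bounds}) are correct. The gap is in the $\ell_\infty$ bound on $A$. You assert that ``opposite signs'' yield $\|Ah\|_\infty\le(1-2\alpha+O(\alpha^2))\|h\|_\infty$, but this is false: the $\ell_\infty$ operator norm of $A$ is \emph{not} bounded by anything near $1$. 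Concretely, the $i$-th absolute row sum of $A$ equals $|(1-4\alpha^2)\tau_i^{-1}\sigma_i(1-\sigma_i)-2\alpha|+(1-4\alpha^2)\tau_i^{-1}\sigma_i(1-\sigma_i)$, and when $\sigma_i$ is small but $\sigma_i\gg n/m$ the quantity $\tau_i^{-1}\sigma_i(1-\sigma_i)$ can approach $1$, driving the row sum toward $2-2\alpha$. (The opposite-signs argument works for the $\tau$-norm because $\mT^{-1/2}\mLambda\mT^{-1/2}$ is symmetric with spectrum in $[0,1]$; there is no analogous spectral structure in $\ell_\infty$.)

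The fix is that $A$ also produces a cross term in the other direction. Using $|(\mproj^{(2)}h)_i|\le\sigma_i\|h\|_\tau$ (which follows from $\mproj_{ij}^2\le\sigma_i\sigma_j$ and Cauchy--Schwarz), one gets
\[
\|Ah\|_\infty \le (1-2\alpha)\|h\|_\infty + \|h\|_\tau,
\qquad
\|Ah\|_\tau \le (1-2\alpha)\|h\|_\tau.
\]
Chaining these with your bounds on $B$ gives an $\|h\|_\tau$-coefficient in $\|ABh\|_{\tau+\infty}$ of roughly $(5-8\alpha)+(1-2\alpha)\cnorm$, which must be at most $(1-\alpha)\cnorm$; since $\cnorm=10/\alpha$ this reduces to $5+O(\alpha)\le 10$, and the $\|h\|_\infty$-coefficient $(1-2\alpha)$ is already fine. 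So the outline is salvageable, but the specific step you flagged as ``careful bookkeeping'' on $\|A\|_\infty$ is not bookkeeping---it fails as stated and needs the additional $\tau\to\infty$ cross term for $A$, absorbed by $\cnorm$ in exactly the way you absorb $B$'s cross term.
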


\begin{lemma}[First-order Expected Progress Bound]
\label{lem:expect_bound_ls} In Algorithm~\ref{alg:short_step_LS}
we have
\begin{align*}
\norm{\mx^{-1}\E[\delta_{x}]+\ms^{-1}\E[\delta_{s}]-g-\mT^{-1}\E[\delta_{\tau}]}_{\tau+\infty}\leq(1-\alpha+ 36\epsilon/\alpha)\gamma.
\end{align*}
\end{lemma}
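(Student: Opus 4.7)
The plan is to compute the quantity inside the norm explicitly, expose the operator $\Psi$ from Lemma~\ref{lem:LS_op_bound}, and then bound the remaining error terms using Lemmas~\ref{lem:projection_sizes_ls},~\ref{lem:Edelta_tau_bound}, and~\ref{lem:mult_to_add}. First, since $\E[\mR] = \mI$, linearity of expectation applied to the definitions in \Cref{alg:short_step_LS} gives
\begin{align*}
\omx^{-1}\E[\delta_{x}] = (1+2\alpha)g - (1+2\alpha)\delta_{p} - \delta_{c}
\quad\text{and}\quad
\oms^{-1}\E[\delta_{s}] = (1-2\alpha)\delta_{p}.
\end{align*}
Summing yields $S \defeq \omx^{-1}\E[\delta_{x}] + \oms^{-1}\E[\delta_{s}] = (1+2\alpha)g - 4\alpha\,\delta_{p} - \delta_{c}$, and taking the same linear combination as in Lemma~\ref{lem:Edelta_tau_bound} gives $F \defeq (1-2\alpha)\omx^{-1}\E[\delta_{x}] - (1+2\alpha)\oms^{-1}\E[\delta_{s}] = (1-4\alpha^{2})(g - 2\delta_{p}) - (1-2\alpha)\delta_{c}$. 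Using $\delta_{p} = \omw^{-1/2}\mq\,\omw^{1/2}g$ with $\mq = \oma\omh^{-1}\oma^{\top}$, we may factor $g - 2\delta_{p} = \omw^{-1/2}(\mI - 2\mq)\omw^{1/2}g$, and likewise $S - g = 2\alpha\,\omw^{-1/2}(\mI - 2\mq)\omw^{1/2}g - \delta_{c}$.

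Next, Lemma~\ref{lem:Edelta_tau_bound} states that $\E[\delta_{\tau}] = \mLambda F + \E[\eta]$ with $\|\mT^{-1}\E[\eta]\|_{\tau+\infty} \leq 2000\gamma^{2}$, where $\mLambda = \mLambda(\oma)$. Substituting the expression for $F$ above and combining with $S - g$ yields, after the cancellations,
\begin{align*}
S - g - \mT^{-1}\E[\delta_{\tau}]
= -\Bigl[(1-4\alpha^{2})\mT^{-1}\mLambda - 2\alpha\,\mI\Bigr]\omw^{-1/2}(\mI-2\mq)\omw^{1/2} g
\; - \; \bigl[\mI - (1-2\alpha)\mT^{-1}\mLambda\bigr]\delta_{c}
\; - \; \mT^{-1}\E[\eta]\,,
\end{align*}
where the first bracket is exactly $\Psi$ from Lemma~\ref{lem:LS_op_bound}. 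The triangle inequality in the $\tau+\infty$ norm therefore bounds the left-hand side by $\|\Psi g\|_{\tau+\infty} + \|[\mI - (1-2\alpha)\mT^{-1}\mLambda]\delta_{c}\|_{\tau+\infty} + \|\mT^{-1}\E[\eta]\|_{\tau+\infty}$.

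Now I would bound each term: $\|\Psi g\|_{\tau+\infty} \leq (1-\alpha)\|g\|_{\tau+\infty} = (1-\alpha)\gamma$ by Lemma~\ref{lem:LS_op_bound}; the operator $\mT^{-1}\mLambda = \mT^{-1}\mSigma - \mT^{-1}\mproj^{(2)}$ has its $\tau+\infty \to \tau+\infty$ norm bounded by a small constant via Lemma~\ref{lem:P2bound} and $\mSigma \preceq \mT$, which together with $\|\delta_{c}\|_{\tau+\infty} \leq 11\epsilon\gamma/\alpha$ (Lemma~\ref{lem:projection_sizes_ls}) gives $O(\epsilon/\alpha)\gamma$; and the $\eta$ term contributes $O(\gamma^{2}) = O(\epsilon\gamma/\lambda)$, which is absorbed. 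Finally, the argument above worked with the barred iterates $\omx,\oms$ whereas the lemma statement involves $\mx,\ms$; this contributes an additional $O(\epsilon\gamma)$ error, since $\mx^{-1}\omx \approx_{\epsilon} \mI$ and $\ms^{-1}\oms \approx_{\epsilon}\mI$ together with $\|\omx^{-1}\E[\delta_{x}]\|_{\tau+\infty},\|\oms^{-1}\E[\delta_{s}]\|_{\tau+\infty} \leq 2\gamma$ (Lemma~\ref{lem:projection_sizes_ls}) lets us invoke Lemma~\ref{lem:mult_to_add} entrywise. Collecting these estimates yields the claimed $(1-\alpha + 36\epsilon/\alpha)\gamma$ bound.

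The main obstacle is the algebraic step producing the operator $\Psi$: one must take the precise linear combination $S - g - \mT^{-1}\mLambda F$ and verify that the coefficients in front of $\omw^{-1/2}(\mI-2\mq)\omw^{1/2}g$ collapse exactly to $-\Psi$. Once this identification is made, the remainder of the proof is a careful but routine bookkeeping of lower-order error terms, using the relatively loose bound $\epsilon \leq \alpha^{2}/240$ to absorb constants such as those from Lemma~\ref{lem:projection_sizes_ls} into the final $36\epsilon/\alpha$ slack.
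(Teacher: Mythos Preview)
Your proposal is correct and follows essentially the same approach as the paper: both compute the barred expected steps, apply Lemma~\ref{lem:Edelta_tau_bound}, and algebraically identify the operator $\Psi$ from Lemma~\ref{lem:LS_op_bound} acting on $g$, then bound the residual via Lemmas~\ref{lem:projection_sizes_ls} and~\ref{lem:P2bound}. The only organizational difference is that the paper absorbs the $\mx$-versus-$\omx$ (and $\ms$-versus-$\oms$) discrepancy up front into error terms $\eta_x,\eta_s$ before assembling $\Psi$, whereas you defer this correction to the end; note that when you do this you must also account for the bar/unbar discrepancy inside the $F$ that feeds into $\mT^{-1}\mLambda$ (Lemma~\ref{lem:Edelta_tau_bound} is stated with $\mx^{-1}\delta_x$, not $\omx^{-1}\delta_x$), but since $\|\mT^{-1}\mLambda\|_{\tau+\infty}\le 2$ this is again an $O(\epsilon\gamma)$ term and is absorbed.
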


\begin{proof}
Let $\mq \defeq \oma\omh^{-1}\oma^{\top}$. Note that
\begin{align*}
\E[\mx^{-1}\delta_{x}]=(1+2\alpha)(g-\omw^{-1/2}\mq\omw^{1/2}g)+\eta_{x}-\E[\delta_c]
\end{align*}
where $\eta_{x}=\E[\mx^{-1}\delta_{x}-\omx^{-1}\delta_{x}]$. Since $\omx \approx_{\epsilon} \mx$, \Cref{lem:mult_to_add}  and
 \Cref{lem:Edelta_tau_bound} show that
\[
\|\eta_{x}\|_{\tau+\infty}\leq
\norm{(\mi - \omx^{-1} \mx ) \mx^{-1} \E \delta_x}_{\tau + \infty}
\leq 2\epsilon \|\mx^{-1}\E\delta_{x}\|_{\tau+\infty}\leq4\epsilon\gamma
\]
Further, \Cref{lem:projection_sizes_ls} shows that $\|\delta_c\|_{\tau+\infty}\leq 11\epsilon\gamma/\alpha$.

Similarly, we have
\begin{align*}
\ms^{-1}\delta_{s} = ( 1 - 2 \alpha ) \omw^{-1/2} \mq \omw^{1/2}g + \eta_{s}
\end{align*}
with $\| \eta_{s} \|_{\tau+\infty} \leq 4 \epsilon\gamma$. Lemma \ref{lem:Edelta_tau_bound}
shows that
\begin{align*}
\mT^{-1}\delta_{\tau} = \mT^{-1}\mLambda((1-2\alpha)\mx^{-1}\delta_{x}-(1+2\alpha)\ms^{-1}\delta_{s})+\eta_{\tau}
\end{align*}
with $\|\E \eta_{\tau}\|_{\tau+\infty}\leq2000\gamma^{2}$. Combining
the three inequalities above, we have
\begin{align*}
 & \E\left[\mT^{-1}\delta_{\tau}-\mx^{-1}\delta_{x}-\ms^{-1}\delta_{s}\right]\\
= & \mT^{-1}\mLambda((1-2\alpha)\E[\mx^{-1}\delta_{x}]-(1+2\alpha)\ms^{-1}\delta_{s})+\eta_{\tau}-\E[\mx^{-1}\delta_{x}]-\ms^{-1}\delta_{s}\\
= & \mT^{-1}\mLambda((1-2\alpha)(1+2\alpha)(g-\omw^{-1/2}\mq\omw^{1/2}g)-(1+2\alpha)(1-2\alpha)\omw^{-1/2}\mq\omw^{1/2}g)\\
 & -(1+2\alpha)(g-\omw^{-1/2}\mq\omw^{1/2}g)-(1-2\alpha)\omw^{-1/2}\mq\omw^{1/2}g\\
 & +\eta_{\tau}+(1-2\alpha)\mT^{-1}\mLambda\eta_{x}-\eta_{x}-(1+2\alpha)\mT^{-1}\mLambda\eta_{s}-\eta_{s}\\
= & (\Psi-\mi)g+\eta
\end{align*}
with
\begin{align*}
\Psi & =((1-4\alpha^{2})\mT^{-1}\mLambda-2\alpha)\omw^{-1/2}(\mI-2\mq)\omw^{1/2},\\
\eta & =\eta_{\tau}+(1-2\alpha)\mT^{-1}\mLambda\eta_{x}-\eta_{x}-(1+2\alpha)\mT^{-1}\mLambda\eta_{s}-\eta_{s} + \E[\delta_c].
\end{align*}
By Lemma \ref{lem:LS_op_bound}, we have $\|\Psi\|_{\tau+\infty}\leq1-\alpha$.
By the fact that $\|\mT^{-1}\mLambda\|_{\tau+\infty}\leq2$ (Lemma
\ref{lem:P2bound}), we have that
\begin{align*}
\| \eta \|_{\tau+\infty} & \leq 2000\gamma^{2} + 24 \epsilon\gamma + 11 \epsilon\gamma/\alpha \leq 36 \epsilon\gamma/\alpha
\end{align*}
where we used $\gamma\leq\frac{\epsilon}{500}$ and $\alpha < 1/4$. Hence, we have
\begin{align*}
\|\E[\mx^{-1}\delta_{x}-\ms^{-1}\delta_{s}-\mT^{-1}\delta_{\tau}]-g\|_{\tau+\infty}
\leq  \|\Psi g\|_{\tau+\infty}+\|\eta\|_{\tau+\infty}
\leq  (1-\alpha+36\epsilon/\alpha)\gamma .
\end{align*}
\end{proof}
We now have everything to bound the effect of a step in terms of $\norm{\phi'(w)}_{\tau+\infty}$
and $\norm{\phi''(w)}_{\tau+\infty}$.
\begin{lemma}[Expected Potential Decrease]
\label{lem:log_exp_progress_LS}In Algorithm~\ref{alg:short_step_LS}
we have
\begin{align*}
\E[\phicent(x^{\new},s^{\new},\mu^{\new})]
\leq&~
\phicent(x,s,\mu)+\phi'(v)^{\top}g+(1-\alpha+20\epsilon/\alpha)\gamma\norm{\phi'(v)}_{\tau+\infty}^{*}\\
& +6000\gamma^{2}\norm{\phi''(v)}_{\tau+\infty}^{*}.
\end{align*}
\end{lemma}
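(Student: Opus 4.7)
The plan is to mirror the proof of Lemma~\ref{lem:log_exp_progress} but with the extra weight function $\tau$ entering as a fourth factor, and with all norms replaced by the mixed norm $\|\cdot\|_{\tau+\infty}$ and its dual. Concretely, I will invoke the Taylor expansion \Cref{lem:expand_potential} with $k=4$ and the choices
$u^{(1)} = x,\; c_1 = 1,\; \delta^{(1)} = \delta_x$;\;
$u^{(2)} = s,\; c_2 = 1,\; \delta^{(2)} = \delta_s$;\;
$u^{(3)} = \mu \vones,\; c_3 = -1,\; \delta^{(3)} = \delta_\mu \vones$;\;
$u^{(4)} = \tau(x,s),\; c_4 = -1,\; \delta^{(4)} = \delta_\tau.$
With these choices the induced $v$ equals $w(x,s,\mu)/\tau(x,s)$, so $\Phi(v^{\new}) = \phicent(x^{\new},s^{\new},\mu^{\new})$. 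The multiplicative stability assumptions of \Cref{lem:expand_potential} with $\|c\|_1 = 4$ need $\|\mU_j^{-1}\delta^{(j)}\|_\infty \leq \frac{1}{12(1+4\lambda)}$; this follows for $j=1,2$ from \Cref{lem:ls_second}, for $j=3$ from $|\delta_\mu/\mu| \leq r$, and for $j=4$ from the first bullet of \Cref{lem:Edelta_tau_bound} together with $\gamma = \epsilon/(100\lambda)$ and $\epsilon \leq \alpha^2/240$. The condition $v \leq \tfrac{13}{12}\vones$ follows from $v \approx_\epsilon \vones$ (centeredness).

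Taking expectations in the resulting expansion yields, with the constant $2(1+\|c\|_1) = 10$,
\begin{align*}
\E\phicent(x^{\new},s^{\new},\mu^{\new}) \leq \Phi(v) + \phi'(v)^\top \mv\, \E\bigl[\mx^{-1}\delta_x + \ms^{-1}\delta_s - \mu^{-1}\delta_\mu \vones - \mT^{-1}\delta_\tau\bigr] + (\text{second order}).
\end{align*}
For the first-order term I let $\xi \defeq \mx^{-1}\E\delta_x + \ms^{-1}\E\delta_s - \mT^{-1}\E\delta_\tau$ and write $\phi'(v)^\top \mv\xi = \phi'(v)^\top g + \phi'(v)^\top\bigl[\mv(\xi-g) + (\mv-\mI)g\bigr]$. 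Since $v \approx_\epsilon \vones$ gives $\|\mv-\mI\|_\infty \leq 2\epsilon$, and since \Cref{lem:expect_bound_ls} bounds $\|\xi - g\|_{\tau+\infty} \leq (1-\alpha + 36\epsilon/\alpha)\gamma$, Hölder in the mixed norm gives $|\phi'(v)^\top\bigl[\mv(\xi-g) + (\mv-\mI)g\bigr]| \leq (1-\alpha + O(\epsilon/\alpha))\gamma \|\phi'(v)\|^*_{\tau+\infty}$. The $\mu$-contribution $\phi'(v)^\top\mv(-\mu^{-1}\delta_\mu)\vones$ is absorbed into the $O(\epsilon\gamma/\alpha)$ error via $\|\vones\|_{\tau+\infty} = 1 + \cnorm \sqrt{2n}$, $\cnorm = 10/\alpha$, and $|\delta_\mu/\mu| \leq r = \epsilon\gamma/\sqrt{n}$.

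For the second-order term, apply the three bounds:
$\E\|\mx^{-1}\delta_x\|^2_{\phi''(v)} \leq 12\gamma^2 \|\phi''(v)\|^*_{\tau+\infty}$ and
$\|\ms^{-1}\delta_s\|^2_{\phi''(v)} \leq 4\gamma^2 \|\phi''(v)\|^*_{\tau+\infty}$ from \Cref{lem:ls_second}, and
$\E\|\mT^{-1}\delta_\tau\|^2_{\phi''(v)} \leq 500\gamma^2 \|\phi''(v)\|^*_{\tau+\infty}$ from \Cref{lem:Edelta_tau_bound}. The $\mu$-contribution $(\delta_\mu/\mu)^2 \|\phi''(v)\|_1$ is bounded by duality $\|\phi''(v)\|_1 \leq \|\vones\|_{\tau+\infty}\|\phi''(v)\|^*_{\tau+\infty} = O(\sqrt{n}/\alpha)\|\phi''(v)\|^*_{\tau+\infty}$ combined with $|\delta_\mu/\mu|^2 \leq \epsilon^2\gamma^2/n$, giving a negligible $O(\epsilon^2\gamma^2/\alpha)$ term. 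Multiplying by the Lemma~\ref{lem:expand_potential} prefactor $10$ and summing yields at most $10(12 + 4 + 500 + o(1))\gamma^2 \leq 6000\gamma^2$ times $\|\phi''(v)\|^*_{\tau+\infty}$, completing the bound.

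\paragraph{Main obstacle.} The main technical point is propagating the extra diagonal prefactor $\mv = \mw\mT^{-1}$ that appears in the first-order term of the expansion (since $v = w/\tau$ rather than $v = w$ as in the log-barrier case) through the already delicate bound of \Cref{lem:expect_bound_ls}, without losing the essential $(1-\alpha)$-contraction from Lemma~\ref{lem:LS_op_bound}. Bookkeeping all four sources of error — the $\mv$-distortion, the $(1-\alpha)\gamma$ contraction of $\Psi g$, the residual $\eta$ from Lemma~\ref{lem:expect_bound_ls}, and the $\delta_\tau$ quadratic term from Lemma~\ref{lem:Edelta_tau_bound} — and showing they collectively fit inside the stated constants $20\epsilon/\alpha$ and $6000$ is where the care lies; the $\delta_\tau$ second moment is by far the dominant second-order contribution since both the $x$ and $s$ second-moment bounds are much smaller.
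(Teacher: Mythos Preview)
Your proposal is correct and follows essentially the same route as the paper's proof: apply \Cref{lem:expand_potential} with the four factors $x,s,\mu\vones,\tau$, verify the $\ell_\infty$ stability conditions via \Cref{lem:ls_second} and \Cref{lem:Edelta_tau_bound}, bound the first-order term through \Cref{lem:expect_bound_ls} plus the $\delta_\mu$ contribution (using $\|\vones\|_{\tau+\infty} = 1 + \cnorm\sqrt{2n}$), and bound the second-order terms by the $\phi''$-norm estimates in \Cref{lem:ls_second} and \Cref{lem:Edelta_tau_bound}. Your treatment of the $\mv$ prefactor in the first-order term is in fact slightly more explicit than the paper's, which simply absorbs it into the final ``applying the lemmas yields the result'' step; the constants work out the same way.
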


\begin{proof}
We apply Lemma~\ref{lem:expand_potential} with $u^{(1)}=x$, $\delta^{(1)}=\delta_{x}$,
$c^{(1)}=1$, $u^{(2)}=s$, $\delta^{(2)}=\delta_{s}$, $c^{(2)}=1$,
$u^{(3)}=\mu\vones$, $\delta^{(3)}=\delta_{\mu}\vones$, $c^{(3)}=-1$,
$u^{(4)}=\tau(x,s)$, $\delta^{(4)}=\delta_{\tau}$, $c^{(4)}=-1$.
Note that, $v$ in the context of Lemma~\ref{lem:expand_potential}
is precisely $\frac{xs}{\mu\tau}$ and $v^{\new}$ in the context
of Lemma~\ref{lem:expand_potential} is precisely $\frac{x^{\new}s^{\new}}{\mu^{\new}\tau^{\new}}$.

Now, by Lemma~\ref{lem:ls_second}, the definition of $\gamma$,
and $\epsilon\in[0,1/80]$ we have
\begin{align*}
\norm{\ms^{-1}\delta_{s}}_{\infty} \leq 2\gamma \leq \frac{1}{50\lambda} \leq \frac{1}{12(1+\norm c_{1}\lambda)} .
\end{align*}
Similarly, Lemma~\ref{lem:ls_second} implies
\begin{align*}
\norm{\mx^{-1}\delta_{x}}_{\infty}\leq2\gamma\leq\frac{1}{50\lambda}\leq\frac{1}{12(1+\norm c_{1}\lambda)} .
\end{align*}
Further, by definition of $r$ we have $\mu^{-1}|\delta_{\mu}|\leq\gamma\leq(12(1+\norm c_{1}\lambda)^{-1})$. Thus, Lemma~\ref{lem:expand_potential} implies 
\begin{align}
&~
\E \left[\phicent(x^{\new},s^{\new},\mu^{\new})\right] \notag\\
 \leq&~
 \Phi(x,s,\mu)+\phi'(v)^{\top}\mv\left[(\mx^{-1} \E[\delta_{x}]+\ms^{-1} \delta_s-\mu^{-1}\delta_{\mu}\vones-\mT^{-1}\E[\delta_{\tau}])\right]\nonumber \\
 &~ +10\left(\E[\norm{\mx^{-1}\delta_{x}}_{\phi''(v)}^{2}]+\norm{\ms^{-1}\delta_{x}}_{\phi''(v)}^{2}+\norm{\mu^{-1}\delta_{\mu}\vones}_{\phi''(v)}^{2}+\E\left[\norm{\mT^{-1}\delta_{\tau}]}_{\phi''(v)}^{2}\right]\right)\,.\label{eq:step-1}
\end{align}

Now, since $v\approx_{\epsilon}\vones$, the assumption on $\delta_{\mu}$
implies 
\begin{align*}
|\mu^{-1}\delta_{\mu}\phi'(v)^{\top}\mv\vones|\leq\mu^{-1}|\delta_{\mu}|\exp(\epsilon)\norm{\phi'(v)}_{\tau+\infty}^{*}\| \vones \|_{\tau+\infty}\leq6\epsilon\gamma\norm{\phi'(v)}_{\tau+\infty}^{*}
\end{align*}
and
\begin{align*}
\norm{\mu^{-1}\delta_{\mu}\vones}_{\phi''(v)}^{2}=|\mu^{-1}\delta_{\mu}|^{2}\norm{\phi''(v)}_{1}\leq\left(\epsilon\gamma/\sqrt{n}\right)^{2}\norm{\phi''(v)}_{\tau+\infty}^{*}\|1\|_{\tau+\infty}\leq\gamma^{2}\norm{\phi''(v)}_{\tau+\infty}^{*}\,.
\end{align*}
Applying
Lemmas~\ref{lem:expect_bound_ls},~\ref{lem:ls_second},~and~\ref{lem:Edelta_tau_bound}
yields the result.
\end{proof}
Finally, we can prove the desired bound on $\E[\phicent(x^{\new},s^{\new},\mu^{\new})]$
by combining the preceding lemma with \Cref{lem:centrality_main_general} and bounds on $\norm{\cdot}_{\tau + \infty}$.
\begin{lemma}[Centrality Improvement of LS Barrier Short-Step]
	\label{lem:centrality_mainLS}
In Algorithm~\ref{alg:short_step_LS}
we have 
\begin{align*}
\E[\phicent(x^{\new},s^{\new},\mu^{\new})]\leq(1-\lambda r)\phicent(x,s,\mu)+\exp(\lambda\epsilon/6)\,.
\end{align*}
\end{lemma}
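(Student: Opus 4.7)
The plan is to mirror the proof of \Cref{lem:centrality_main} for the log barrier, applying \Cref{lem:centrality_main_general} to the expected potential decrease \Cref{lem:log_exp_progress_LS}, but with the axis-symmetric convex set $U$ chosen to be the unit ball of the mixed norm $\|\cdot\|_{\bar\tau+\infty}$ (rather than the $\ell_2$ ball used in the log barrier case). The heart of the argument is to verify the parameters and, in particular, to compute the correct value of the ``$u$'' parameter of \Cref{lem:centrality_main_general} for this new norm.

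First I would identify the objects. Setting $v = w(x,s,\mu)/\tau(x,s)$, Line~\ref{line:LSstep:v} guarantees $\|\ov - v\|_\infty \le \gamma$, so we may invoke \Cref{lem:centrality_main_general} with ``$w$'' $= v$, ``$v$'' $= \ov$ and $\delta = \gamma$. Line~\ref{line:LSstep:g} sets $g = -\gamma \nabla\Phi(\ov)^{\flat(\bar\tau)}$, and by the definition of the flat operator in \Cref{def:mixed norm} this is exactly $-\gamma \argmax_{z \in U}\langle\nabla\Phi(\ov), z\rangle$ when $U$ is the unit ball of $\|\cdot\|_{\bar\tau+\infty}$. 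Since $\bar\tau \approx_\epsilon \tau$, the norms $\|\cdot\|_{\bar\tau+\infty}$ and $\|\cdot\|_{\tau+\infty}$ agree up to a factor of $e^{\epsilon/2}$ and likewise their duals, so the bound from \Cref{lem:log_exp_progress_LS} can be rewritten in the $\|\cdot\|_U$ norm with only an $e^{\epsilon/2}$ blow-up in the coefficients $c_1 \defeq 1-\alpha + 20\epsilon/\alpha$ and $c_2 \defeq 6000$.

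The main technical step is bounding the parameter $u$ of \Cref{lem:centrality_main_general}, i.e.\ the largest radius of an $\ell_\infty$-ball contained in $U$. Clearly $U$ is contained in the $\ell_\infty$ ball of radius $1$. In the other direction, since $\sum_i \bar\tau_i \le e^\epsilon \sum_i (\sigma_i + n/m) \le e^\epsilon \cdot 2n$ (leverage scores sum to at most the rank), any $x$ with $\|x\|_\infty \le u$ satisfies
\[
\|x\|_{\bar\tau+\infty} \le u + \cnorm u \sqrt{\textstyle\sum_i \bar\tau_i} \le u(1 + \cnorm\sqrt{2ne^\epsilon}),
\]
so we may take $u = \Theta(\alpha/\sqrt{n})$ using $\cnorm = 10/\alpha$. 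This is the place where the LS weighting contributes the factor $1/\sqrt n$ rather than the $1/\sqrt m$ obtained in the log barrier case, which is exactly what allows the $\tilde O(\sqrt n)$-iteration convergence.

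Next I would verify the hypotheses of \Cref{lem:centrality_main_general}. Since $\epsilon = \alpha^2/240$, we have $c_1 = 1 - \alpha + \alpha/12 = 1 - 11\alpha/12 < 1$, and $\lambda\gamma = \epsilon/100 \le 1/8$ while $2\lambda\delta + c_2\lambda\gamma = 6002\lambda\gamma \le 61\epsilon \le \alpha/3 \le 0.5(1-c_1)$. Plugging into the conclusion of \Cref{lem:centrality_main_general} yields
\[
\E [\phicent(x^{\new},s^{\new},\mu^{\new})] \le \phicent(x,s,\mu) - \tfrac{1}{2}(1-c_1)\lambda\gamma u \cdot \phicent(x,s,\mu) + m.
\]
With $1-c_1 = 11\alpha/12$ and $u = \Theta(\alpha/\sqrt n)$ the coefficient in front of $\phicent$ is $\Theta(\alpha^2 \lambda\gamma/\sqrt n) = \Theta(\lambda \epsilon\gamma/\sqrt n) = \Theta(\lambda r)$, and a direct constant check shows this is at least $\lambda r$. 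Finally, $\lambda \ge 6\log(m)/\epsilon$ gives $m \le \exp(\lambda\epsilon/6)$, so we conclude
\[
\E [\phicent(x^{\new},s^{\new},\mu^{\new})]\le(1-\lambda r)\phicent(x,s,\mu)+\exp(\lambda\epsilon/6),
\]
as claimed. The main potential obstacle is the bookkeeping between $\bar\tau$ and $\tau$ in the dual norm and making sure the constants chosen in $\lambda, \gamma, r, \epsilon, \alpha$ all line up so that the $\alpha^2$ factor from $(1-c_1) u$ matches the $\epsilon$ factor in $r = \epsilon\gamma/\sqrt n$; this is essentially why the parameters in \Cref{alg:short_step_LS} were fixed as $\epsilon = \alpha^2/240$ and $\cnorm = 10/\alpha$.
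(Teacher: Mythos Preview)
Your proposal is correct and follows essentially the same route as the paper's proof: apply \Cref{lem:centrality_main_general} to the bound from \Cref{lem:log_exp_progress_LS} with $U$ the unit ball of the mixed norm, compute $u=\Theta(\alpha/\sqrt n)$ from $\sum_i\tau_i\le 2n$, and verify the parameter constraints using $\epsilon=\alpha^2/240$. The one place you are actually more careful than the paper is in taking $U$ to be the $\bar\tau+\infty$ ball (so that $g$ is literally the argmax over $U$) and then absorbing the $e^{\epsilon/2}$ discrepancy between $\|\cdot\|_{\bar\tau+\infty}^*$ and $\|\cdot\|_{\tau+\infty}^*$ into the constants; the paper sets $U=\{x:\|x\|_{\tau+\infty}\le1\}$ and glosses over this mismatch.
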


\begin{proof}
Note that $\|\ov-w/\tau\|_{\infty}\leq \gamma$. Consequently, we will apply \Cref{lem:centrality_main_general}
with $\delta = \gamma$, $c_{1}=1-\alpha+20\epsilon/\alpha$, $c_{2}=6000$,
and $U=\{ x : \|x\|_{\tau+\infty} \leq 1 \}$. Using $\gamma = \epsilon/(100\lambda)$, $\lambda = 36 \log(400 m/ \epsilon^2) / \epsilon$, and $\epsilon=\frac{\alpha^2}{240}$, we have $\delta \leq 1/(5\lambda)$, $\lambda \gamma \leq 1/8$, and 
\begin{align*}
2\lambda\delta+c_{2}\lambda\gamma=6004\lambda\gamma=60.04\epsilon\leq0.26\alpha^2\leq\frac{\alpha^2}{3}\leq\frac{1-c_{1}}{2}.
\end{align*}
Hence, the assumption in Lemma \ref{lem:centrality_main_general}
is satisfied. Further,  if  $\norm{x}_{\infty} \leq \frac{\alpha}{16 \sqrt{n} }$ then
\[
\norm{x}_{\tau + \infty}
= \cnorm \norm{x}_{\tau} + \norm{x}_{\infty}
\leq \frac{\alpha}{16 \sqrt{n}} \left[ \frac{10}{\alpha} \cdot \sqrt{2 n} + 1 \right]
\leq 1 
\]
and consequently $U$ contains the $\norm{\cdot}_{\tau+\infty}$ ball of radius $\alpha/(16 \sqrt{n})$. Consequently, applying \Cref{lem:centrality_main_general} and using that $(1- c_1) \geq \alpha/2$
\begin{align*}
\E[\phicent(x^{\new},s^{\new},\mu^{\new})] 
& \leq \phicent(x,s,\mu)-\frac{(1 - c_1)\lambda\gamma u}{2}\phicent(x,s,\mu)+m\\
& \leq \phicent(x,s,\mu)-\frac{\alpha^2 \lambda\gamma}{16\sqrt{n}}\phicent(x,s,\mu)+m
~.
\end{align*}
The result follows from $\exp(\frac{1}{6}\lambda\epsilon)\geq m$
and $r=\frac{\epsilon\gamma}{\sqrt{n}}$.
\end{proof}

Next, we bound the infeasibility after a step (\Cref{lem:infeasibility_bound_ls}). This lemma is analogous to \Cref{lem:infeasibility_bound} and consequently the proof is abbreviated. (See \Cref{sec:feasibility} for further intuition.)

\begin{lemma}[Feasibility Bound]\label{lem:infeasibility_bound_ls} 
In \Cref{alg:short_step_LS}
	if $\oma^{\top}\mr\oma\approx_{\gamma}\oma^{\top}\oma$ then
	\begin{equation}\label{eq:log_feas_bound_ls}
	\norm{\ma^{\top}x^{\new}-b}_{\ma^{\top}\mx^{\new}(\ms^{\new})^{-1}\ma}\leq 0.5 \epsilon\gamma ( \mu^{\new} )^{1/2}
	\end{equation}
	and consequently, this holds with probability at least $1-r$.
\end{lemma}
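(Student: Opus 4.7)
The plan is to closely mirror the proof of \Cref{lem:infeasibility_bound} and introduce only the modifications needed to accommodate the $(1+2\alpha)$ factors in the step and the mixed norm $\|\cdot\|_{\tau+\infty}$ in which $g$ is controlled. First, I would collect the step into the form $\delta_r = \omw^{-1/2}\oma\omh^{-1} d$, where now
\[
d \defeq (1+2\alpha)\,\oma^{\top}\omw^{1/2} g + \mu^{-1/2}(\ma^{\top}x - b),
\]
so that $\delta_x = \omx((1+2\alpha) g - \mR\delta_r)$ fits into the same template as the log-barrier proof. Then I would define the idealized step $x^{*} \defeq x + \omx((1+2\alpha) g - \delta_r^{*})$ with $\delta_r^{*} \defeq \omw^{-1/2}\oma(\oma^{\top}\oma)^{-1} d$ (i.e. setting $\mR=\mI$ and $\omh = \oma^{\top}\oma$), and check, using the identity $\ma^{\top}\omx = \sqrt{\mu}\,\oma^{\top}\omw^{1/2}$, that $\ma^{\top}x^{*} = b$ exactly. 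Thus
\[
\ma^{\top}x^{\new}-b = \sqrt{\mu}\,\bigl(\mI - \oma^{\top}\mR\oma\omh^{-1}\bigr)d .
\]

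Next, under the hypothesis $\oma^{\top}\mR\oma \approx_{\gamma}\oma^{\top}\oma \approx_{\gamma}\omh$, the same spectral computation as in \Cref{lem:infeasibility_bound} (using \Cref{lem:mult_to_add}) yields
\[
\|(\oma^{\top}\oma)^{-1/2}(\mI-\oma^{\top}\mR\oma\omh^{-1})\omh^{1/2}\|_{2} \leq 3\gamma,
\]
so that $\|\ma^{\top}x^{\new}-b\|_{(\oma^{\top}\oma)^{-1}} \leq 3\gamma\sqrt{\mu}\,\|d\|_{\omh^{-1}}$. The crucial difference from the log-barrier proof is in bounding $\|d\|_{\omh^{-1}}$. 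Splitting $d$ and using $\omh \approx_{\gamma}\oma^{\top}\oma$ together with $\ow\approx_{3\epsilon}\tau$, one gets
\[
\|\oma^{\top}\omw^{1/2} g\|_{\omh^{-1}} \leq e^{2\gamma}\|\omw^{1/2}g\|_{2} \leq e^{4\epsilon}\|g\|_{\tau}\leq e^{4\epsilon}\cdot\tfrac{1}{\cnorm}\|g\|_{\tau+\infty} = O(\alpha\gamma),
\]
since $\|g\|_{\tau+\infty}=\gamma$ and $\cnorm = 10/\alpha$. The correction contribution is bounded as before by $\mu^{-1/2}\|\ma^{\top}x-b\|_{\omh^{-1}} \leq e^{3\epsilon/2}\epsilon\gamma$ using approximate primal feasibility and $\omh\approx_{3\epsilon}\ma^{\top}\mx\ms^{-1}\ma$. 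Combining yields $\|d\|_{\omh^{-1}} = O(\alpha\gamma) + O(\epsilon\gamma) = O(\alpha\gamma)$, and therefore $\|\ma^{\top}x^{\new}-b\|_{(\oma^{\top}\oma)^{-1}} = O(\alpha\gamma^{2}\sqrt{\mu})$.

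To convert this into the desired norm, I would use \Cref{lem:ls_second} (giving $\|\mx^{-1}\delta_{x}\|_{\infty},\|\ms^{-1}\delta_{s}\|_{\infty}\leq 2\gamma$) and $\ox\approx_{\epsilon}x$, $\os\approx_{\epsilon}s$ to conclude that $\ma^{\top}\mx^{\new}(\ms^{\new})^{-1}\ma \approx \oma^{\top}\oma$ up to a small constant, upgrading the $(\oma^{\top}\oma)^{-1}$ bound to the $\ma^{\top}\mx^{\new}(\ms^{\new})^{-1}\ma$ bound. The parameter slack to obtain $0.5\epsilon\gamma\sqrt{\mu^{\new}}$ is what imposes $\gamma = O(\epsilon/\alpha)$; since $\gamma = \epsilon/(100\lambda)$ and $\epsilon = \alpha^{2}/240$, everything checks out. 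For the probability statement, by the third term in the definition of $q$ the matrix $\mR$ samples rows of $\oma$ with probabilities at least proportional to their leverage scores with an $O(\gamma^{-2}\log(m/(\epsilon r)))$ oversampling factor, so by the standard leverage-score sampling guarantee (Lemma~4 of \cite{clmmps15}) the event $\oma^{\top}\mR\oma\approx_{\gamma}\oma^{\top}\oma$ holds with probability at least $1-r$. The main obstacle here is the bookkeeping around the mixed norm: one must be careful to always pay the factor $1/\cnorm = \alpha/10$ when passing from $\|g\|_{\tau+\infty}$ to $\|g\|_{\tau}$, since it is precisely this factor of $\alpha$ that compensates for the fact that $\gamma$ scales only with $\epsilon$ and not with $\epsilon\alpha$, and produces the correct $0.5\epsilon\gamma$ in the final bound.
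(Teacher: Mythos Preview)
Your proposal is correct and follows the paper's proof essentially line by line: the same decomposition $\delta_r=\omw^{-1/2}\oma\omh^{-1}d$ with $d=(1+2\alpha)\oma^{\top}\omw^{1/2}g+\mu^{-1/2}(\ma^{\top}x-b)$, the same idealized step $x^{*}$, the same spectral bound $\|(\oma^{\top}\oma)^{-1/2}(\mI-\oma^{\top}\mR\oma\omh^{-1})\omh^{1/2}\|_{2}\le 3\gamma$, and the same use of $\|g\|_{\otau}\le \gamma/\cnorm$ to control the first term of $\|d\|_{\omh^{-1}}$. One minor note: in the paper's accounting the bound $(1+2\alpha)\|\oma^{\top}\omw^{1/2}g\|_{\omh^{-1}}\le 2\gamma$ is used without the explicit $\alpha$ factor, so the final constraint is simply $9\gamma\le 0.25\epsilon$ (already guaranteed by $\gamma=\epsilon/(100\lambda)$), and your closing remark that the $\alpha$ factor is ``what compensates'' is not actually needed for the inequality to close.
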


\begin{proof} 
	Recall that $x^{\new}=x+\delta_{x}$ where $\delta_x = \omx(g - \mr \delta_r)$, $\delta_{r} \defeq (1 + 2\alpha) \delta_{p}+\delta_{c}$, and 
	\[
	\delta_{p} = \omw^{-1/2}\oma\omh^{-1}\oma^{\top}\omw^{1/2}g 
	\text{ and }
	\delta_{c} = \mu^{-1/2}\cdot\omw^{-1/2}\oma\omh^{-1}(\ma^{\top}x-b)
	~.
	\]
	Consequently, we can rewrite the step as
	\[
	\delta_r = \omw^{-1/2} \oma \omh^{-1} d 
	\text{ where }
	d \defeq (1 + 2\alpha) \oma^\top \omw^{1/2} g +  \mu^{-1/2} (\ma^\top x - b) ~.
	\]
	
	Now, consider the idealized step $x^* \defeq x + \delta_x^*$ where $\delta_x^* \defeq \omx(g - \delta_r^*)$, and $\delta_r^* = \omw^{-1/2} \oma (\oma^\top \oma)^{-1} d$. Since, $\ma^{\top}\omx=\sqrt{\mu}\oma^{\top}\omw^{1/2}$  we have
	\[
	\ma^\top x^*
	= \ma^\top x
	+ \sqrt{\mu} \oma^\top \left( \mw^{1/2} g - \oma (\oma^\top \oma)^{-1} d \right)
	= \ma^\top x - (\ma^\top x - b) = b ~.
	\]
	 Consequently,
	\begin{align}
	\ma^\top x - b
	&= \ma^\top (x - x^*)
	= - \ma^\top \omx 
	(\mR \delta_r - \delta_r^*)
	=  \sqrt{\mu} \oma^\top \left(
	\oma (\oma^\top \oma)^{-1} - \mr \oma \omh^{-1} \right) d
	\nonumber
	\\
	&= \sqrt{\mu} \left(\mI - \oma^\top \mR \oma \omh^{-1}\right) d ~.
	\label{feasibility:equality:ls}
	\end{align}
	
	Since $\oma^{\top}\mr\oma\approx_{\gamma}\oma^{\top}\oma\approx_{\gamma}\omh$
	we have
	\begin{align*}
	\norm{(\oma^{\top}\oma)^{-1/2}(\mi-\oma^{\top}\mr\oma\omh^{-1})\omh^{1/2}}_{2} & =\norm{(\oma^{\top}\oma)^{-1/2}(\omh-\oma^{\top}\mr\oma)\omh^{-1/2}}_{2}\\
	& \leq\exp(\gamma)\norm{\omh^{-1/2}(\omh-\oma^{\top}\mr\oma)\omh^{-1/2}}_{2}\leq3\gamma
	\end{align*}
	where we used Lemma~\ref{lem:mult_to_add} and $(2\gamma+4\gamma^{2})\exp(\gamma)\leq3\gamma$
	for $\gamma\leq1/80$. Consequently, combining with \eqref{feasibility:equality} yields that
	\begin{align*}
	\norm{\ma^{\top}x^{\new}-b}_{(\oma^{\top}\oma)^{-1}} 
	&=
	\sqrt{\mu}\normFull{(\oma^{\top}\oma)^{-1/2}(\mi-\oma^{\top}\mr\oma\omh^{-1})\omh^{1/2}\omh^{-1/2}
		d}_{2}
	\leq 
	3\gamma\sqrt{\mu} \norm{d}_{\omh^{-1}}
	\\
	&
	\leq
	3\gamma\sqrt{\mu}\left( (1+2\alpha) \norm{\oma^{\top}\omw^{-1/2}g}_{\omh^{-1}}+\mu^{-1/2}\cdot\norm{\ma^{\top}x-b}_{\omh^{-1}}\right)\,.
	\end{align*}
	Next, note that by design of $g$ we have $\norm{g}_{\otau} \leq (\gamma/\cnorm)$ for $\cnorm = 10/\alpha$. Further, we have $\omw \approx_{4\epsilon} \otau$ as $\ow \approx_{2\gamma} w$, $w \approx_{\epsilon} \tau$, $\tau \approx_{\gamma} \otau$, and $\gamma \leq \epsilon$). Since $\epsilon \leq 1/80$ and $\alpha \leq 1/4$  this  implies
	\begin{align*}
	 \norm{\oma^{\top}\omw^{-1/2}g}_{\omh^{-1}} 
	&\leq  \exp(\gamma/2)\norm{\omw^{-1/2}g}_{\oma(\oma^{\top}\oma)^{-1}\oma^{\top}} 
	\leq  \exp(\gamma/2)\norm{\omw^{-1/2}g}_{2} \\
	&\leq  \exp(5 \epsilon / 2) \norm g_{\otau} \leq 
	\frac{\exp(5\epsilon / 2) \alpha \gamma}{10}
	\leq \frac{2 \gamma}{1 + 2\alpha}  ~.
	\end{align*}
	Further, by the approximate primal feasibility of $(x,s,\mu)$ we have
	\begin{align*}
	\norm{\ma^{\top}x-b}_{\omh^{-1}}\leq\exp(\gamma/2)\norm{\ma^{\top}x-b}_{\ma^{\top}\mx\ms^{-1}\ma}\leq2\epsilon\gamma\sqrt{\mu}\,.
	\end{align*}
	Combining yields that 
	\begin{align*}
	\norm{\ma^{\top}x^{\new}-b}_{(\oma^{\top}\oma)^{-1}}\leq3\gamma\sqrt{\mu}\left(2\gamma+2\epsilon\gamma\right)\leq9\gamma\sqrt{\mu}\gamma\,\leq 0.25 \epsilon\gamma \sqrt{\mu}
	~.
	\end{align*}
	
	Now, since $\norm{\ms^{-1}\delta_{s}}_{\infty}\leq2\gamma$ and $\norm{\mx^{-1}\delta_{x}}_{\infty}\leq2\gamma$
	by \Cref{lem:ls_second}
	and $\gamma\leq\epsilon/100$ we have that $\ma^{\top}\mx^{\new}(\ms^{\new})^{-1}\ma\approx_{\epsilon}\ma^{\top}\mx\ms\ma$.
	Combining with the facts that $\ma^{\top}\mx\ms\ma\approx_{2\gamma}\oma^{\top}\oma$,
	and $\mu^{\new}\approx_{2\epsilon\gamma}\mu$ yields \eqref{eq:log_feas_bound_ls}.
	The probability bound follows by the fact that $q_{i}\geq C\sigma_{i}(\oma)\log(n/(\epsilon r))\gamma^{-2}$
	and Lemma 4 in \cite{clmmps15}. That is, the matrix $\mr$ corresponds to sampling each row of $\ma$ with probability at least proportional to its leverage score. 
\end{proof}

Next we prove the main result of this section, that \Cref{alg:short_step_LS} is a valid $\textsc{ShortStep}$ procedure (see Definition~\ref{def:short_step}).

\begin{lemma}[LS Short Step]
\label{lem:main_lem_LS}
Algorithm~\ref{alg:short_step_LS} is a valid
$\textsc{ShortStep}$ procedure (\Cref{def:short_step})
with weight function $\tau(x,s)\defeq\sigma(\mx^{1/2-\alpha}\ms^{-1/2-\alpha}\ma)+\frac{n}{m}\vones \in\R^{m}\text{ for all }x,s\in\R_{>0}^{m}$
and parameters $\lambda \defeq 36 \epsilon^{-1} \log(400m/\epsilon^{2}) $,
$\gamma\defeq\frac{\epsilon}{100\lambda}$, $r\defeq\frac{\epsilon\gamma}{\sqrt{n}}$,
$\epsilon=\frac{\alpha^2}{240}$, and $\alpha=\frac{1}{4\log(\frac{4m}{n})}$,
i.e.
\begin{enumerate}
\item $\lambda\geq\frac{12\log(16m/r)}{\epsilon}$,
\item $\ma y^{\new}+s^{\new}=b$ for some $y^{\new}\in\R^{n}$,
\item $\E[\phicent(x^{\new},s^{\new},\mu^{\new})]\leq(1-\lambda r)\phicent(x,s,\mu)+\exp\left(\lambda\epsilon/6\right)$
\item $\P\left[\norm{\ma^{\top}x^{\new}-b}_{(\ma^{\top}\mx^{\new}(\ms^{\new})^{-1}\ma)^{-1}}\leq\frac{\epsilon}{2}\sqrt{\mu^{\new}}\right]\geq1-r\,.$
\end{enumerate}
\end{lemma}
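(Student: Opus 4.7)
The plan is to verify the four claims of Lemma~\ref{lem:main_lem_LS} directly from the machinery already established in this section, following the same structure as the proof of Lemma~\ref{lem:main_lem} but replacing $\sqrt{m}$ with $\sqrt{n}$ in the relevant calculations. Claims (3) and (4) are already essentially proved (as Lemma~\ref{lem:centrality_mainLS} and Lemma~\ref{lem:infeasibility_bound_ls}), so the proposal reduces to verifying the parameter inequality (1) and the dual feasibility claim (2).

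For claim (1), I would substitute $r = \epsilon\gamma/\sqrt{n} = \epsilon^{2}/(100\lambda\sqrt{n}) = \epsilon^{3}/(3600\log(400m/\epsilon^{2})\sqrt{n})$ and bound
\begin{equation*}
12\epsilon^{-1}\log(16m/r) \leq 12\epsilon^{-1}\log\bigl(16 \cdot 3600 \cdot \sqrt{n}\, m \epsilon^{-3}\log(400m/\epsilon^{2})\bigr),
\end{equation*}
then collapse the $\log\log$ factor into a power using $\sqrt{x}\log x \leq x$ (applied with $x = 400m/\epsilon^2$) as in the log barrier case. The precise constants work out exactly as in the proof of Lemma~\ref{lem:main_lem}(1), since the only change is $\sqrt{m} \to \sqrt{n}$ in the definition of $r$, which only helps the inequality.

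For claim (2), I would unfold $\delta_s = (1-2\alpha)\oms\delta_p$ with $\delta_p = \omw^{-1/2}\oma\omh^{-1}\oma^{\top}\omw^{1/2}g$. Using $\oma = \omx^{1/2}\oms^{-1/2}\ma$ and the identity $\omw^{-1/2}\omx^{1/2}\oms^{-1/2} = \sqrt{\mu}\,\oms^{-1}$ (which follows from $\ow = \ox\os/\mu$), one sees that $\oms \omw^{-1/2}\oma = \sqrt{\mu}\,\ma$, and hence
\begin{equation*}
\delta_s = (1-2\alpha)\sqrt{\mu}\,\ma\,\omh^{-1}\oma^{\top}\omw^{1/2}g \in \mathrm{im}(\ma).
\end{equation*}
Since $(x,s,\mu)$ is $\epsilon$-centered there exists $y \in \R^{n}$ with $\ma y + s = c$, and writing $\delta_s = \ma z$ gives $\ma(y-z) + s^{\new} = \ma y + s + \ma z - \delta_s = c$, establishing dual feasibility of $s^{\new}$.

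Claims (3) and (4) then follow immediately: claim (3) is exactly the conclusion of Lemma~\ref{lem:centrality_mainLS}, and claim (4) follows from Lemma~\ref{lem:infeasibility_bound_ls} combined with the leverage-score sampling guarantee $q_i \geq C\sigma_i(\oma)\log(m/(\epsilon r))\gamma^{-2}$ via \cite[Lemma~4]{clmmps15}, which ensures that with probability at least $1-r$ we have $\oma^{\top}\mr\oma\approx_{\gamma}\oma^{\top}\oma$. I do not expect a real obstacle here — all the substantive work (bounding $\delta_x,\delta_s,\delta_\tau$, controlling the second-order expansion of $\Phi$ under the mixed norm, and handling the $\mR^{(i)}$ sampling and the correction term $\delta_c$) has been done in Sections~\ref{sub:ls:deltas}–\ref{sec:ls:potential_decrease}; the present lemma just assembles these pieces with a short parameter verification.
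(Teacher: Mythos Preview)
Your proposal is correct and follows essentially the same approach as the paper's proof: claim (1) is verified by the same $\beta = m/\epsilon^2$ substitution and the inequality $\sqrt{x}\log x \le x$, claim (2) by observing $\delta_s \in \mathrm{im}(\ma)$ (your unfolding via $\oms\omw^{-1/2}\oma = \sqrt{\mu}\,\ma$ is exactly what underlies the paper's one-line statement), and claims (3) and (4) are delegated to Lemmas~\ref{lem:centrality_mainLS} and~\ref{lem:infeasibility_bound_ls} respectively. The only minor redundancy is that your reference to \cite[Lemma~4]{clmmps15} for the leverage-score sampling in claim (4) is already absorbed into the statement of Lemma~\ref{lem:infeasibility_bound_ls}.
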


\begin{proof}
We provide the proofs of all the four parts as follows:

(1) Note that $r=\frac{\epsilon\gamma}{\sqrt{n}}=\frac{\epsilon^{2}}{100\lambda\sqrt{n}}=\frac{\epsilon^{3}}{3600\log(400\beta)\sqrt{n}}$
for $\beta=\frac{m}{\epsilon^{2}}$. Thus, 
\begin{align*}
12\log(16m/r)\cdot\epsilon^{-1}\leq12\log(16\beta^{3/2}\cdot9600\log(400\beta))\cdot\epsilon^{-1}\leq12\log((400)^{3}\beta^{2})\cdot\epsilon^{-1}\leq\lambda
\end{align*}
where we used that $\sqrt{x}\log x \leq x$ for $x\geq1$, $(400)^{3}\geq 16\cdot9600\cdot400$,
and $\beta \geq 1$.

(2) Note that $\delta_{s} = \ma\omh^{-1} \oma^{\top}\omw^{1/2} g$. Consequently,
$\delta_{s}\in\im(\ma)$ and this follows from the fact that $\ma y+s=b$
for some $y\in\R^{n}$ by definition of $\epsilon$-centered.

(3) This follows immediately from Lemma~\ref{lem:centrality_mainLS}.

(4) This follows immediately from \Cref{lem:infeasibility_bound_ls}.
\end{proof}

\subsection{Potential Function and Analysis Tools}
\label{sec:potential_function}

Here we give general mathematical facts and properties of the potential function we used throughout this section in analyzing our IPM. First, we give a simple, standard technical lemma allowing us to relate different notions of multiplicative approximation \Cref{lem:mult_to_add}.
Then, in the remainder of this section we give facts about the potential function given in \Cref{def:central_potential}.

\multToAdd

\begin{proof}
	These follow from the fact that for all $\alpha\in\R$ with $|\alpha|\leq1/2$
	we have $1+\alpha\leq\exp(\alpha)\leq1+\alpha+\alpha^{2}$ and therefore
	$|\exp(\alpha)-1|\leq|\alpha|+|\alpha|^{2}$. 
\end{proof}

\begin{lemma}
	\label{lem:potential_simple_bounds} For all $w\in\R$ we have $0\leq\phi''(w)\leq\lambda\cdot(|\phi'(w)|+2\lambda)$ and for all $z_{0},z_{1}$ with $|z_{1}-z_{0}|\leq\delta$ we have
	$\phi''(z_{1})\leq\exp(\lambda\delta)\phi''(z_{0})$.
\end{lemma}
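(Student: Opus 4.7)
The plan is to work directly from the definition $\phi(w) = \exp(\lambda(w-1)) + \exp(-\lambda(w-1))$ by introducing the shorthand $a := \exp(\lambda(w-1))$ and $b := \exp(-\lambda(w-1))$. Then differentiation gives $\phi'(w) = \lambda(a - b)$ and $\phi''(w) = \lambda^2(a + b)$, so $|\phi'(w)| = \lambda|a-b|$ and the nonnegativity $\phi''(w) \geq 0$ is immediate from $a, b > 0$.

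For the upper bound, the desired inequality $\lambda^2(a+b) \leq \lambda \cdot (\lambda|a-b| + 2\lambda)$ reduces to showing $a + b \leq |a-b| + 2$. The key observation is that $a \cdot b = 1$, which forces $\min(a,b) \leq 1$. Writing $a + b = |a - b| + 2\min(a,b)$ and applying this bound then yields the inequality. This handles the first claim.

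For the second claim, I would reparameterize as $z_1 = z_0 + \Delta$ with $|\Delta| \leq \delta$. Using the multiplicative property of the exponential,
\begin{align*}
\phi''(z_1)
&= \lambda^2 \bigl( \exp(\lambda\Delta)\exp(\lambda(z_0-1)) + \exp(-\lambda\Delta)\exp(-\lambda(z_0-1)) \bigr).
\end{align*}
Bounding each of $\exp(\pm\lambda\Delta)$ by $\exp(\lambda|\Delta|) \leq \exp(\lambda\delta)$ factors this common scalar out of the sum, giving $\phi''(z_1) \leq \exp(\lambda\delta)\phi''(z_0)$ as desired.

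Both parts are short calculations from the definition, so I do not anticipate a real obstacle; the only mild subtlety is spotting the identity $a + b = |a-b| + 2\min(a,b)$ and noting $ab = 1$ to cleanly close the first inequality.
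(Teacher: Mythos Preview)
Your proposal is correct and follows essentially the same approach as the paper: both compute $\phi'$ and $\phi''$ explicitly and reduce the first inequality to the observation that the smaller of $\exp(\lambda(w-1))$ and $\exp(-\lambda(w-1))$ is at most $1$ (the paper writes this as $\exp(-\lambda|w-1|)\le 1$, you phrase it via $ab=1\Rightarrow\min(a,b)\le 1$). The second claim is handled identically in both by bounding $\exp(\pm\lambda\Delta)\le\exp(\lambda\delta)$.
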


\begin{proof}
	Direct calculation shows that 
	\begin{align*}
	\phi'(w)=\lambda\cdot\left(\exp(\lambda(w-1))-\exp(\lambda(1-w))\right)\text{ and }\phi''(w)=\lambda^{2}\cdot(\exp(\lambda(w-1))+\exp(\lambda(1-w)))\,.
	\end{align*}
	and therefore, clearly $\phi''(w)\geq0$ and the bounds on $\phi''(w)$ follow
	from
	\begin{align*}
	\left|\phi'(w)\right| & =\lambda\cdot(\exp(\lambda|w-1|)-\exp(-\lambda|w-1|))\\
	& =\frac{1}{\lambda}\phi''(w)-2\lambda\exp(-\lambda|w-1|))\geq\frac{1}{\lambda}\phi''(w)-2\lambda\,.
	\end{align*}
	The bound on $\phi''(z_1)$ is then immediate from $\phi''(z_1)=\lambda^{2}\cdot(\exp(\lambda(z_1-))+\exp(\lambda(1-z_1)))$.
\end{proof}

\expandPotential

\begin{proof}
For all $t\in[0,1]$ let $z_{t}\in\R^{n}$ be defined for all $i\in[n]$
by $[z_{t}]_{i}\defeq\prod_{j\in[k]}(u_{i}^{(j)}+t\cdot\delta_{i}^{(j)})^{c_{j}}$
and $f(t)\defeq\Phi(z_{t})$. Note that $z_{0}=v$ and $z_{1}=v^{\new}$
and consequently $f(1)=\Phi(v^{\new})$ and $f(0)\defeq\Phi(v)$.
Further, by Taylor's theorem, $f(1)=f(0)+f'(0)+\frac{1}{2}f''(\zeta)$
for some $\zeta\in[0,1]$. Consequently, it suffices to
show that $z_{t}\approx_{1/(11\lambda)}z_{0}$, compute $f'(0)$ and
upper bound $f''(\zeta)$. 

First, by direct calculation, we know that for all $i\in[n]$ and
$t\in[0,1]$
\begin{align*}
\frac{d}{dt}[z_{t}]_{i} & =\sum_{j\in[k]}\left[\left(\prod_{j'\in[k]\setminus j}(u_{i}^{(j')}+t\cdot\delta_{i}^{(j')})^{c_{j'}}\right)\cdot\left(\frac{d}{dt}(u_{i}^{(j)}+t\cdot\delta_{i}^{(j)})^{c_{j}}\right)\right]\\
 & =\sum_{j\in[k]}\left[\left([z_{t}]_{i}\cdot(u_{i}^{(j)}+t\cdot\delta_{i}^{(j)})^{-c_{j}}\right)\cdot c_{j}(u_{i}^{(j)}+t\cdot\delta_{i}^{(j)})^{c_{j}-1}\cdot\delta_{i}^{(j)}\right]\\
 & =[z_{t}]_{i}\sum_{j\in[k]}\frac{c_{j}\cdot\delta_{i}^{(j)}}{u_{i}^{(j)}+t\cdot\delta_{i}^{(j)}}
\end{align*}
and 
\begin{align*}
\frac{d^{2}}{dt^{2}}[z_{t}]_{i}=\left[\frac{d}{dt}[z_{t}]_{i}\right]\sum_{j\in[k]}\frac{c_{j}\cdot\delta_{i}^{(j)}}{u_{i}^{(j)}+t\cdot\delta_{i}^{(j)}}-[z_{t}]_{i}\sum_{j\in[k]}\frac{c_{j}(\delta_{i}^{(j)})^{2}}{(u_{i}^{(j)}+t\cdot\delta_{i}^{(j)})^{2}}\,.
\end{align*}
Thus, $\mz_{t}\defeq\mdiag(z_{t})$, $u_{t}^{(j)}\defeq u^{(j)}+t\delta^{(j)}$,
$\mU_{t}^{(j)}\defeq\mdiag(u_{t}^{(j)})$, and $\tilde{\delta}_{t}^{(j)}\defeq[\mU_{t}^{(j)}]^{-1}\delta^{(j)}$
satisfy 
\begin{equation}
\frac{d}{dt}z_{t}=\mz_{t}\sum_{j\in[k]}c_{j}\tilde{\delta}_{t}^{(j)}\text{ and }\frac{d^{2}}{dt^{2}}z_{t}=\mz_{t}\left(\left[\sum_{j\in[k]}c_{j}\tilde{\delta}_{t}^{(j)}\right]^{2}-\sum_{j\in[k]}c_{j}[\tilde{\delta}_{t}^{(j)}]^{2}\right)\label{eq:zt_compute}
\end{equation}
where for $a\in\R^{n}$ we define $a^{2}\in\R^{n}$ with $[a^{2}]_{i}\defeq[a_{i}]^{2}$.

Applying chain rule and (\ref{eq:zt_compute}) yields that
\begin{align*}
f'(t) & =\sum_{i\in[n]}\phi'([z_{t}]_{i})\frac{d}{dt}[z_{t}]_{i}=\phi'(z_{t})^{\top}\mz_{t}\sum_{j\in[k]}c_{j}\tilde{\delta}_{t}^{(j)}\,.
\end{align*}
Since $z_{0}=v$, $\mU_{0}^{(j)}=\mU_{j}$, and $\tilde{\delta}_{0}^{(j)}=\mU_{j}^{-1}\delta^{(j)}$
we have $f'(0)=\sum_{j\in[k]}c_{j}\phi'(v)^{\top}\mv\mU^{-1}\delta^{(j)}$
as desired.

Further application of chain rule and (\ref{eq:zt_compute}) yields
that
\begin{align*}
f''(t)= & \sum_{i\in[n]}\phi''([z_{t}]_{i})\left(\frac{d}{dt}[z_{t}]_{i}\right)^{2}+\sum_{i\in[n]}\phi'([z_{t}]_{i})\frac{d^{2}}{dt^{2}}z_{t}\\
= & \sum_{i\in[n]}\phi''([z_{t}]_{i})\cdot\left[\mz_{t}\sum_{j\in[k]}c_{j}\tilde{\delta}_{t}^{(j)}\right]_{i}^{2}+\sum_{i\in[n]}\phi'([z_{t}]_{i})\cdot\left([z_{t}]_{i}\left[\sum_{j\in[k]}c_{j}\tilde{\delta}_{t}^{(j)}\right]_{i}^{2}-[z_{t}]_{i}\sum_{j\in[k]}[\tilde{\delta}_{t}^{(j)}]_{i}^{2}\right)\\
= & \normFull{\sum_{j\in[k]}c_{j}\tilde{\delta}_{t}^{(j)}}_{\mz_{t}\phi''(z_{t})\mz_{t}+\mz_{t}\phi'(z_{t})}^{2}-\sum_{j\in[k]}c_{j}\normFull{\tilde{\delta}_{t}^{(j)}}_{\mz_{t}\phi'(z_{t})}^{2}.
\end{align*}
Now, since by Cauchy Schwarz for all $i\in[n]$ we have
\[
\sum_{j\in[k]}\left[c_{j}\tilde{\delta}_{t}^{(j)}\right]_{i}^{2}\leq\left(\sum_{j\in[k]}|c_{j}|\right)\cdot\left(\sum_{j\in[k]}|c_{j}|\cdot\left[\tilde{\delta}_{t}^{(j)}\right]_{j}^{2}\right)
\]
we have that
\begin{align}
f''(t) & \leq\norm c_{1}\sum_{j\in[k]}|c_{j}|\normFull{\tilde{\delta}_{t}^{(j)}}_{\mz_{t}\phi''(z_{t})\mz_{t}+\mz_{t}|\phi'(z_{t})|}^{2}-\sum_{j\in[k]}c_{j}\normFull{\tilde{\delta}_{t}^{(j)}}_{\mz_{t}\phi'(z_{t})}^{2}\nonumber \\
 & \leq\norm c_{1}\sum_{j\in[k]}|c_{j}|\normFull{\tilde{\delta}_{t}^{(j)}}_{\mz_{t}\phi''(z_{t})\mz_{t}}^{2}+\left(\norm c_{1}+1\right)\sum_{j\in[k]}|c_{j}|\cdot\normFull{\tilde{\delta}_{t}^{(j)}}_{\mz_{t}|\phi'(z_{t})|}^{2}\nonumber \\
 & \leq(\norm c_{1}+1)\sum_{j\in[k]}|c_{j}|\cdot\norm{\tilde{\delta}_{t}^{(j)}}_{(\mz_{t}^{2}+\mz_{t})\phi''(z_{t})}^{2}\label{eq:second_deriv_bound}
\end{align}
where in the last line we used that $\phi''(z_{t})\geq\lambda|\phi'(z_{t,})|\geq|\phi'(z_{t})|$
by Lemma~\ref{lem:potential_simple_bounds} and $\lambda\geq1$.

Now, by construction and assumption we have that $v$, $v^{\new},$$u^{(j)}$,
and $u_{t}^{(j)}$ are all entrywise positive. Since $\norm{\mU_{j}^{-1}\delta^{(j)}}_{\infty}\leq\frac{1}{12(1+\lambda\norm c_{1})}$
for all $j\in[k]$ we have that for all $t\in[0,1]$ and $i\in[n]$
\begin{align*}
[u_{t}^{(j)}]_{i}\leq\left(1-\frac{1}{12(1+\lambda\norm c_{1})}\right)^{-|c_{j}|}[u_{t}^{(j)}]_{0}\leq\exp\left(\frac{|c_{j}|}{11(1+\lambda\norm c_{1})}\right)[u_{t}^{(j)}]_{0}
\end{align*}
where we used that $1-x\leq\exp(-x)$ for all $x$ yields and for
all $x\in(0,1)$ and $c\geq0$ it is the case that $(1-x)^{-c}\geq\max\left\{ (1+x)^{c},(1+x)^{-c},(1+x)^{-c}\right\} $.
Consequently, $\mU_{t}^{(j)}\preceq\exp(\frac{1}{11\lambda})\mU$
and for all $t\in[0,1]$ and $i\in[n]$ we have
\begin{align*}
[z_{t}]_{i}\leq\prod_{j\in[k]}[u_{t}^{(j)}]_{i}\leq\prod_{j\in[k]}\exp\left(\frac{|c_{j}|}{11(1+\lambda\|c\|_{1})}\right)[u^{(j)}]_{i}\leq\exp(\frac{1}{11\lambda})[z_{0}]_{i}\,.
\end{align*}
Thus, $\mz_{t}\preceq\exp(\frac{1}{11\lambda})\mv$ and by analogous
calculation $\mU_{t}^{(j)}\succeq\exp(-\frac{1}{11\lambda})\mU^{(j)}$
and $\mz_{t}\succeq\exp(-\frac{1}{11\lambda})\mv$. 

By the preceding paragraph, $u_{t}^{(j)}\approx_{1/(11\lambda)}u^{(j)}$and
$z_{t}\approx_{1/(11\lambda)}v$. Further, $v\leq\frac{13}{12}$ and
Lemma~\ref{lem:mult_to_add} implies
\begin{align*}
\norm{z_{t}-v}_{\infty}=\norm{(\mv^{-1}\mz_{t}-\vones)\mv}_{2}\leq\norm{\mv^{-1}\mz_{t}-\vones}_{2}\norm{\mv}_{2}\leq\left(\frac{1}{11\lambda}+\left(\frac{1}{11\lambda}\right)^{2}\right)\frac{13}{12}\leq\frac{0.11}{\lambda}\,.
\end{align*}
Consequently, by Lemma~\ref{lem:potential_simple_bounds} we have
that $\phi''(z_{t})\le\exp(0.11)\phi''(v)$. Combining these facts
with (\ref{eq:second_deriv_bound}) then yields that 
\begin{align*}
f''(t)
\leq
&~ (\norm c_{1}+1)\sum_{j\in[k]}|c_{j}|\cdot\norm{\mU^{-1}\delta^{(j)}}_{\phi''(v)}\\
&~ \cdot\left(\left[\exp\left(\frac{1}{11\lambda}\right)\frac{13}{12}\right]^{2}+\left[\exp\left(\frac{1}{11\lambda}\right)\frac{13}{12}\right]\right)\exp\left(\frac{1}{11\lambda}\right)\exp(0.11)
\end{align*}
and the result follows from $\lambda\geq1$ and $f(1)=f(0)+f'(0)+\frac{1}{2}f''(\zeta)$
for some $\zeta\in[0,1]$.
\end{proof}

\begin{lemma}[Lemma~54 of \cite{ls19}]
\label{lem:smoothing:helper}For all $x\in\R^{m}$, we have
\begin{equation}
e^{\lambda\|x\|_{\infty}}\leq\Phi(x)\leq2me^{\lambda\|x\|_{\infty}}\enspace\text{ and }\enspace\lambda\Phi(x)-2\lambda m\leq\|\grad\Phi(x)\|_{1}\label{eq:smoothing:pot_prop_1}
\end{equation}
Furthermore, for any symmetric convex set $U\subseteq\R^{m}$ and
any $x\in\R^{m}$, let $x^{\flat}\defeq\argmax_{y\in U}\left\langle x,y\right\rangle $
and $\norm x_{U}\defeq\max_{y\in U}\left\langle x,y\right\rangle $.
Then for all $x,y\in\R^{m}$ with $\|x-y\|_{\infty}\leq\delta\leq\frac{1}{5\lambda}$
we have
\begin{align}
e^{-\lambda\delta}\norm{\grad\Phi(y)}_{U}-\lambda\|\grad\Phi(y)^{\flat}\|_{1} & \leq\left\langle \grad\Phi(x),\grad\Phi(y)^{\flat}\right\rangle \nonumber \\
 & \leq e^{\lambda\delta}\norm{\grad\Phi(y)}_{U}+\lambda e^{\lambda\delta}\|\grad\Phi(y)^{\flat}\|_{1}.\label{eq:smoothing:pot_prop_2}
\end{align}
If additionally $U$ is contained in a $\ell_{\infty}$ ball of radius
$R$ then
\begin{equation}
e^{-\lambda\delta}\norm{\grad\Phi(y)}_{U}-\lambda mR\leq\norm{\grad\Phi(x)}_{U}\leq e^{\lambda\delta}\norm{\grad\Phi(y)}_{U}+\lambda e^{\lambda\delta}mR.\label{eq:smoothing:pot_prop_3}
\end{equation}
\end{lemma}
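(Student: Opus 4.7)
The plan is to derive all three parts directly from the explicit form $\phi(v) = e^{\lambda(v-1)} + e^{-\lambda(v-1)}$ and $\phi'(v) = \lambda(e^{\lambda(v-1)} - e^{-\lambda(v-1)})$; interpreting $\|x\|_\infty$ throughout as $\|x-\vones\|_\infty$ (the form in which this lemma is actually invoked in \Cref{lem:path_following_master}), everything reduces to routine manipulations of shifted hyperbolic cosines. First, \eqref{eq:smoothing:pot_prop_1} follows from the pointwise bounds $e^{\lambda|v-1|} \le \phi(v) \le 2e^{\lambda|v-1|}$: the lower bound on $\Phi(x)$ comes from the single largest summand, and the upper bound from summing all $m$ contributions. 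The inequality $\lambda\Phi(x) - 2\lambda m \le \|\nabla\Phi(x)\|_1$ uses the identity $|\phi'(v)| = \lambda\phi(v) - 2\lambda\min(e^{\lambda(v-1)},e^{-\lambda(v-1)})$ together with the fact that $\min \le 1$, summed over $i$.

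The main work is \eqref{eq:smoothing:pot_prop_2}, which I would prove by a coordinatewise estimate
\[
\phi'(x_i) g_i \le e^{\lambda\delta}\phi'(y_i) g_i + \lambda e^{\lambda\delta}|g_i|
\]
for $g \defeq \nabla\Phi(y)^\flat$, then sum over $i$. The key is that since $U$ is axis-symmetric, the maximizer $g$ has each $g_i$ with the same sign as $\phi'(y_i)$. Writing $a_i^\pm \defeq e^{\pm\lambda(y_i-1)}$ and $b_i^\pm \defeq e^{\pm\lambda(x_i-1)}$, the hypothesis $|x_i-y_i|\le\delta$ gives $b_i^\pm \in [e^{-\lambda\delta}, e^{\lambda\delta}]\cdot a_i^\pm$, so in the case $g_i \ge 0$ (whence $a_i^+ \ge a_i^-$ and $a_i^-\le 1$ because $a_i^+ a_i^- = 1$) the worst-case choice $x_i = y_i + \delta$ yields
\[
\phi'(x_i) - e^{\lambda\delta}\phi'(y_i) \le \lambda a_i^-\,(e^{\lambda\delta} - e^{-\lambda\delta}) = 2\lambda a_i^-\sinh(\lambda\delta) \le \lambda e^{\lambda\delta},
\]
using $2\sinh(t)\le e^t$ and $a_i^- \le 1$. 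Multiplying by $g_i\ge 0$ gives the coordinate bound; the case $g_i<0$ is completely symmetric with $x_i = y_i - \delta$ and $a_i^+\le 1$. Summing and using $\langle\nabla\Phi(y),g\rangle = \|\nabla\Phi(y)\|_U$ yields the upper bound. The lower bound is proved via the analogous coordinatewise estimate $\phi'(x_i) g_i \ge e^{-\lambda\delta}\phi'(y_i) g_i - \lambda|g_i|$, obtained by minimizing over $|x_i-y_i|\le\delta$ and observing that the residual $2\lambda\sinh(\lambda\delta)a_i^{\mp}$ is bounded by $\lambda$ for $\lambda\delta \le 1/5$ since $2\sinh(1/5) < 1$.

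Finally, \eqref{eq:smoothing:pot_prop_3} is a short corollary: for the upper bound I take $g = \nabla\Phi(x)^\flat \in U$ in \eqref{eq:smoothing:pot_prop_2}, giving $\|\nabla\Phi(x)\|_U = \langle\nabla\Phi(x),g\rangle \le e^{\lambda\delta}\langle\nabla\Phi(y),g\rangle + \lambda e^{\lambda\delta}\|g\|_1 \le e^{\lambda\delta}\|\nabla\Phi(y)\|_U + \lambda e^{\lambda\delta} mR$, where $\|g\|_1 \le m\|g\|_\infty \le mR$ by the assumption that $U$ is contained in the $\ell_\infty$-ball of radius $R$; the lower bound is analogous with $g = \nabla\Phi(y)^\flat$.

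The main obstacle is the sign-aware coordinatewise estimate in the second step: everything hinges on using the axis-symmetry of $U$ to force $g_i$ to align with the dominant exponential branch of $\phi'(y_i)$, so that the small branch $\min(a_i^+, a_i^-) \le 1$ is precisely the one multiplied by the $\sinh(\lambda\delta)$ error factor. Without this alignment the bound would carry an uncontrollable factor of $\max(a_i^+, a_i^-) = e^{\lambda|y_i-1|}$, which can be arbitrarily large.
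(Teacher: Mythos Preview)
The paper does not prove this lemma; it is quoted from \cite{ls19} and used as a black box, so there is no in-paper proof to compare against. Your sketch is correct. Your observation that $\|x\|_\infty$ must be read as $\|x-\vones\|_\infty$ (consistent with the shift in the definition of $\phi$ here versus \cite{ls19}) is right, and the coordinatewise argument for \eqref{eq:smoothing:pot_prop_2}---using axis-symmetry of $U$ to force $\operatorname{sign}(g_i)=\operatorname{sign}(\phi'(y_i))$ so that only the subdominant branch $\min(a_i^+,a_i^-)\le 1$ multiplies the $\sinh(\lambda\delta)$ error---is exactly the mechanism that makes the bound work.

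One small imprecision in your derivation of the upper bound in \eqref{eq:smoothing:pot_prop_3}: you write ``take $g=\nabla\Phi(x)^\flat$ in \eqref{eq:smoothing:pot_prop_2}'', but \eqref{eq:smoothing:pot_prop_2} was established only for $g=\nabla\Phi(y)^\flat$, and your coordinatewise estimate relied on $\operatorname{sign}(g_i)=\operatorname{sign}(\phi'(y_i))$. With $g=\nabla\Phi(x)^\flat$ the signs can differ when $x_i,y_i$ straddle $1$. In that case both $|x_i-1|,|y_i-1|<\delta$, so $|\phi'(x_i)|,|\phi'(y_i)|<2\lambda\sinh(\lambda\delta)$ and one checks the inequality still holds under $\lambda\delta\le 1/5$; but the cleaner route is simply to apply \eqref{eq:smoothing:pot_prop_2} with the roles of $x$ and $y$ swapped (the hypothesis is symmetric) and combine its lower bound with $\langle\nabla\Phi(y),\nabla\Phi(x)^\flat\rangle\le\|\nabla\Phi(y)\|_U$.
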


\centralityMainGeneral

\begin{proof}
	Since entrywise $|\phi''(w)| \leq \lambda\cdot(|\phi'(w)|+2\lambda)$ by Lemma~\ref{lem:potential_simple_bounds},
	we have that 
	\begin{align*}
	\norm{\phi''(w)}_{U}\leq\lambda\norm{\phi'(w)}_{U}+2\lambda^{2}\|1\|_{U}\leq\lambda\norm{\phi'(w)}_{U}+2\lambda^{2}m
	\end{align*}
	where in the first step we used triangle inequality and the axis-symmetry of $U$ and in the second step we used that $U$ is contained in a $\ell_{\infty}$ ball of
	radius $1$.
	
	Now, we bound the term $\phi'(w)^{\top}g$. Applying \eqref{eq:smoothing:pot_prop_2}
	and \eqref{eq:smoothing:pot_prop_3} in \Cref{lem:smoothing:helper},
	we have
	\begin{align*}
	\frac{-\phi'(w)^{\top}g}{\gamma} & \geq e^{-\lambda\delta}\norm{\phi'(v)}_{U}-\lambda\|g\|_{1} 
	\tag{\eqref{eq:smoothing:pot_prop_2} of \Cref{lem:smoothing:helper}}
	\\
	& \geq e^{-2\lambda\delta}\norm{\phi'(w)}_{U}-e^{-\lambda\delta}\lambda m -\lambda\|g\|_{1}
	\tag{\eqref{eq:smoothing:pot_prop_3}  \Cref{lem:smoothing:helper}}
	\\
	& \geq e^{-2\lambda\delta}\norm{\phi'(w)}_{U}-2\lambda m.
	\tag{$U$ is contained in a ball of radius $1$}
	\end{align*}
	Hence, as $e^{-2\lambda \delta} \geq 1 - 2 \lambda$, we have 
	\begin{align*}
	\delta_{\Phi}\leq & -(1-2\lambda\delta-c_{1}-c_{2}\lambda\gamma)\gamma\norm{\phi'(w)}_{U}+2\lambda m\gamma+2c_{2}\gamma^{2}\lambda^{2}m\\
	\leq & - 0.5 ( 1-c_{1} ) \gamma\norm{\phi'(w)}_{U}+ m/2
	\end{align*}
	where we used $\lambda \gamma \leq 1 / 8$ and $2\lambda\delta+c_{2}\lambda\gamma\leq\frac{1-c_{1}}{2}$.
	Since $U$ contains the $\ell_{\infty}$ ball of radius $u$, we have
	$\|\phi'(w)\|_{U}\geq u\|\phi'(w)\|_{1}$. Using this and (\ref{eq:smoothing:pot_prop_1})
	in Lemma \ref{lem:smoothing:helper} yields
	\begin{align*}
	\delta_{\Phi} & \leq- 0.5( 1-c_{1} ) \gamma u\norm{\phi'(w)}_{1}+0.5m\\
	& \leq- 0.5( 1 - c_{1} ) \lambda \gamma u \Phi(w) + \lambda \gamma mu+ 0.5m 
	\end{align*}
	and the result follows from the facts that $\lambda \gamma \leq 1/8$ and $u \leq 1$.
\end{proof}

\subsection{Additional Properties of our IPMs}
\label{sec:stability_fixing}

Here we provide additional properties of our IPMs. 
First, we show that although the $x$ iterates of Algorithm~\ref{alg:short_step_LS} 
may change more than in \cite{blss20} in a single step due to sampling, 
they do not drift too far away from a more stable sequence of points. 
We use this lemma to efficiently maintain approximations to regularized leverage scores, 
which are critical for this method. 

\begin{lemma}\label{lem:close_to_stable_sequence}
Suppose that $q_{i} \geq \frac{m}{n}\frac{T}{\beta^{2}}\log(mT)\cdot((\delta_{r})_{i}^{2}/\|\delta_{r}\|_{2}^{2}+1/m)$
for all $i\in[m]$ where $\beta\in(0,0.1)$ and $T\geq\sqrt{n}$.
Let $(x^{(k)},s^{(k)})$ be the sequence of points found by Algorithm~\ref{alg:short_step_LS}.
With probability $1-m^{-10}$, there is a sequence of points $\widehat{x}^{(k)}$
from $1 \leq k \leq T$ such that
\begin{enumerate}
\item $\widehat{x}^{(k)}\approx_{\beta}x^{(k)}$,
\item $\|(\widehat{\mx}^{(k)})^{-1}(\widehat{x}^{(k+1)}-\widehat{x}^{(k)})\|_{\tau(\widehat{x}^{(k)},s^{(k)})}\leq5\gamma$,
\item $\|\mT(\widehat{x}^{(k)},s^{(k)})^{-1}(\tau(\widehat{x}^{(k+1)},s^{(k+1)})-\tau(\widehat{x}^{(k)},s^{(k)}))\|_{\tau(\widehat{x}^{(k)},s^{(k)})} \leq 80 \gamma$.
\end{enumerate}
\end{lemma}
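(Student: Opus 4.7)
My plan is to define $\widehat{x}^{(k)}$ as the ``derandomized'' iterates obtained by replacing the random sampling matrix $\mR^{(j)}$ in \Cref{alg:short_step_LS} by its expectation $\mi$, and then to show $\widehat{x}^{(k)} \approx_\beta x^{(k)}$ via a martingale concentration argument driven by the enlarged sampling probabilities $q_i$. Formally I set $\widehat{x}^{(0)} \defeq x^{(0)}$ and recursively
\begin{align*}
\widehat{x}^{(k+1)} \defeq \widehat{x}^{(k)} + \omx^{(k)}\bigl((1+2\alpha)g^{(k)} - \delta_r^{(k)}\bigr),
\end{align*}
where $\omx^{(k)}$, $g^{(k)}$, and $\delta_r^{(k)}$ are the quantities computed at iteration $k$ of \Cref{alg:short_step_LS} from $x^{(k)}, s^{(k)}$. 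Since $\E[\mR^{(k)}]=\mi$, the quantity $\widehat{x}^{(k+1)} - \widehat{x}^{(k)}$ is precisely the conditional expectation $\E[x^{(k+1)} - x^{(k)} \mid x^{(k)}, s^{(k)}]$, so $x^{(k)} - \widehat{x}^{(k)}$ is a sum of conditionally mean-zero increments.

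Assuming claim~1 temporarily, claims 2 and 3 should fall out of the step-size machinery already in \Cref{sec:ls_barrier_method}. For claim~2, \Cref{lem:projection_sizes_ls} bounds $\|\mx^{(k),-1}\E[\delta_x^{(k)}]\|_{\tau+\infty} \leq 2\gamma$; combining this with $\widehat{x}^{(k)} \approx_\beta x^{(k)}$ and $\ox^{(k)} \approx_\epsilon x^{(k)}$ yields $\|\widehat{\mx}^{(k),-1}(\widehat{x}^{(k+1)} - \widehat{x}^{(k)})\|_{\tau+\infty} \leq (1 + O(\beta + \epsilon))\cdot 2\gamma$. Since $\|\cdot\|_\tau \leq \|\cdot\|_{\tau+\infty}$ and $\beta \leq 0.1$, this immediately gives the required $5\gamma$ bound. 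For claim~3, I mirror the derivation of \Cref{lem:Edelta_tau_bound}: the first-order expansion $\delta_\tau \approx 2\mLambda\bigl((\tfrac12-\alpha)\widehat{\mx}^{-1}\delta_{\widehat{x}} - (\tfrac12+\alpha)\ms^{-1}\delta_s\bigr) + \eta$, together with claim~2, the bound $\|\ms^{-1}\delta_s\|_{\tau+\infty}\leq 2\gamma$ from \Cref{lem:projection_sizes_ls}, and $\|\mT^{-1}\mLambda\|_{\tau+\infty} \leq 2$ from \Cref{lem:P2bound}, produces the required $80\gamma$ bound.

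The main obstacle is establishing claim~1. Entrywise one has
\begin{align*}
\bigl[x^{(k)} - \widehat{x}^{(k)}\bigr]_i \;=\; \sum_{j=0}^{k-1} \ox^{(j)}_i\,\bigl(1 - \mR^{(j)}_{ii}\bigr)\bigl[\delta_r^{(j)}\bigr]_i,
\end{align*}
a martingale in $j$ with mean-zero conditional increments, and I plan to apply Freedman's inequality coordinatewise. A direct computation gives conditional variance at most $(\ox^{(j)}_i)^2 ([\delta_r^{(j)}]_i)^2 / q^{(j)}_i$ per step. The hypothesis $q^{(j)}_i \geq \tfrac{m}{n}\tfrac{T}{\beta^2}\log(mT)\cdot ([\delta_r^{(j)}]_i)^2/\|\delta_r^{(j)}\|_2^2$ yields $([\delta_r^{(j)}]_i)^2/q^{(j)}_i \leq \tfrac{n\beta^2}{mT\log(mT)}\|\delta_r^{(j)}\|_2^2$, while $\|\delta_r^{(j)}\|_2^2 \leq (m/n)\|\delta_r^{(j)}\|_\tau^2 \leq (m/n)(\gamma/40)^2$ by \eqref{eq:delta_r_upper_tau} and $\tau \geq n/m$. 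Summing over $T$ iterations the conditional-variance sum is at most $O(\beta^2 \gamma^2/\log(mT))\cdot (\max_{j\le k}\ox^{(j)}_i)^2$, and a matching maximum-increment bound follows from the $q^{(j)}_i \geq T\log(mT)/(n\beta^2)$ part of the hypothesis. Freedman's inequality then gives $|[x^{(k)} - \widehat{x}^{(k)}]_i|\leq (\beta/2)\,x^{(k)}_i$ with failure probability at most $m^{-12}/T$ per $(i,k)$. I handle the apparent circularity (the bound we want refers to $x^{(k)}$) by a standard stopping-time argument: let $\tau^\star$ be the first iteration where the multiplicative approximation fails, freeze all quantities at $\tau^\star$, and apply the concentration bound to the stopped martingale; a union bound over $i\in[m]$ and $k\leq T$ then yields $\Pr[\tau^\star \leq T] \leq m^{-10}$, completing the proof.
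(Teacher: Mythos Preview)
Your overall architecture matches the paper's, but your concentration argument for claim~1 has a scale-drift problem that breaks it. You work with the additive martingale
\[
\bigl[x^{(k)}-\widehat{x}^{(k)}\bigr]_i \;=\; \sum_{j<k}\ox_i^{(j)}\bigl(1-\mR_{ii}^{(j)}\bigr)[\delta_r^{(j)}]_i,
\]
and your Freedman bound on the deviation is, at best, of order $\beta\gamma\cdot\max_{j\le k}\ox_i^{(j)}$. To conclude $\widehat{x}^{(k)}\approx_\beta x^{(k)}$ you would need this to be at most $(\beta/2)\,x_i^{(k)}$, hence $\max_{j\le k}\ox_i^{(j)}/x_i^{(k)}\le O(1/\gamma)$. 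But each step only guarantees $\|\mx^{-1}\delta_x\|_\infty\le 2\gamma$, so over $k\le T$ steps the ratio $x_i^{(j)}/x_i^{(k)}$ can be as large as $e^{2\gamma T}$; with $T\ge\sqrt{n}$ and $\gamma=1/\polylog(n)$ this is far larger than $1/\gamma$. The stopping-time trick does not help here: the obstruction is not circularity but the fact that your increments carry the absolute scale of the iterates, and that scale genuinely drifts.

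The paper avoids this by defining the stable sequence in log-space,
\[
\ln\widehat{x}^{(k)} \;=\; \ln\widehat{x}^{(k-1)} + \E_{k}\bigl[\ln x^{(k)}\bigr]-\ln x^{(k-1)},
\]
so that $\Delta^{(k)}_i=\ln\widehat{x}_i^{(k)}-\ln x_i^{(k)}$ is a martingale whose increments are controlled by $|(\mR\delta_r)_i-(\mR'\delta_r)_i|$, a dimensionless quantity independent of the current magnitude of $x_i$. Bernstein then gives $|\Delta_i^{(k)}|\le\beta$ directly, which is exactly the multiplicative statement. Once claim~1 is obtained this way, your derivations of claims~2 and~3 are essentially correct (and indeed mirror the paper's), but they rely on claim~1 holding, so the gap above has to be closed first.
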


\begin{proof}
We define the sequence $\widehat{x}^{(k)}$ recursively by $\widehat{x}^{(1)}=x^{(1)}$
and
\begin{align*}
\ln\widehat{x}^{(k)} \defeq \ln \widehat{x}^{(k-1)} + \E_{k}[ \ln x^{(k)} ] - \ln x^{(k-1)}
\end{align*}
where $\E_{k} [ \ln x^{(k)} ] \defeq \E_{\mr_{k}}[ \ln x^{(k)}|x^{(k-1)} ]$
where $\mr_{k}$ is the random choice of $\mr$ at the $k$-th iteration.
Let $\Delta^{(k)}\defeq\ln\widehat{x}^{(k)}-\ln x^{(k)}$ and note
that this implies that
\begin{equation}
\Delta^{(k)}=\Delta^{(k-1)}+\E_{k}[\ln x^{(k)}]-\ln x^{(k)}\,.\label{eq:delta_recursion}
\end{equation}
Consequently, is a martingale, i.e. $\E_{k} [ \Delta^{(k)} ] = \Delta^{(k-1)}.$
Since $\widehat{x}^{(1)} = x^{(1)}$, we have $\Delta^{(1)}=0$ and
consequently to bound $\Delta^{(k)}$, it suffices to show that $\Delta^{(k)}-\Delta^{(k-1)}$
is small in the worst case and that the variance of $\Delta^{(k)}-\Delta^{(k-1)}$
is small.

For the worst case bound, note that by (\ref{eq:delta_recursion})
and Jensen's inequality
\begin{equation}
\left|\Delta_{i}^{(k)}-\Delta_{i}^{(k-1)}\right|=\left|\ln x_{i}^{(k)}-\E_{k}\ln x{}_{i}^{(k)}\right|\leq\sup_{x^{(k)},y^{(k)}}\left|\ln x_{i}^{(k)}-\ln y_{i}^{(k)}\right|\label{eq:delta_inf_bound}
\end{equation}
where $x_{i}^{(k)}$ and $y_{i}^{(k)}$ are two independent samples
using two different independent $\mr_{k}$ to take a step from $x^{(k-1)}$.
Lemma~\ref{lem:ls_second} shows that $x^{(k)}\approx_{0.1}x^{(k-1)}$
and $y^{(k)}\approx_{0.1}x^{(k-1)}$. Consequently, $x^{(k)}\approx_{0.2}y^{(k)}$.
Therefore, we have $\left|\Delta_{i}^{(k)}-\Delta_{i}^{(k-1)}\right|\leq\frac{3}{2}\frac{|x_{i}^{(k)}-y_{i}^{(k)}|}{x_{i}^{(k-1)}}$.
Using that $x^{(k)}-y^{(k)}=\omx\mr_{x,k}\delta_{r}-\omx\mr_{y,k}\delta_{r}$
(where $\mr_{x,k}$ and $\mr_{y,k}$ are two different independent
draws of $\mr$ in iteration $k$) and $\ox\approx_{0.1}x$, we can
simplify (\ref{eq:delta_inf_bound}) to
\begin{align*}
\left|\Delta_{i}^{(k)}-\Delta_{i}^{(k-1)}\right|\leq2|(\mr\delta_{r})_{i}-(\mr'\delta_{r})_{i}|.
\end{align*}
Recall that the $i$-th row of $\delta_{r}$ is sampled with probability
$\tilde{q}_{i}\defeq\min(q_{i},1)$. If the $i$-th row sampled with
probability $1$, then the difference is $0$. Hence, we can only
consider the case $\tilde{q}_{i}=q_{i}\geq\frac{m}{n}\frac{T}{\beta^{2}}\log(mT)\cdot((\delta_{r})_{i}^{2} / \|\delta_{r}\|_{2}^{2}+1/m)$.
In this case, we have
\begin{align*}
\left|\Delta_{i}^{(k)}-\Delta_{i}^{(k-1)}\right|\leq\frac{2}{q_{i}}|\delta_{r}|_{i}\leq\frac{2}{q_{i}}|\delta_{r}|_{i}=\frac{2|\delta_{r}|_{i}}{\frac{m}{n}\frac{T}{\beta^{2}}\log(mT)\cdot((\delta_{r})_{i}^{2}/\|\delta_{r}\|_{2}^{2}+1/m)}.
\end{align*}
One can check the worst case is given by $|\delta_{r}|_{i}=\|\delta_{r}\|_{2}/\sqrt{m}$
and we get
\begin{align*}
\left|\Delta_{i}^{(k)}-\Delta_{i}^{(k-1)}\right| & \leq\frac{2}{q_{i}}|\delta_{r}|_{i}\leq\frac{2}{q_{i}}|\delta_{r}|_{i}=\frac{\beta^{2}n}{\sqrt{m}T\log(mT)}\|\delta_{r}\|_{2}\\
 & \leq M\defeq\frac{\sqrt{n}\gamma\beta^{2}}{40T\log(mT)}.
\end{align*}
where we used $\tau(x^{(k)},s^{(k)})\geq\frac{n}{m}$ and $\|\delta_{r}\|_{\tau(x^{(k)},s^{(k)})}\leq\frac{\gamma}{40}$
(\ref{eq:delta_r_upper_tau}) at the last sentence.

For the variance bound, note that variance of a random variable $X$
can be written as $\E(X-\E X)^{2}=\frac{1}{2}\E(X-Y)^{2}$ where $Y$
is an independent sample from $\mx$. Hence, we have 
\begin{align*}
\E_{k}\left[[\Delta^{(k)}-\Delta^{(k-1)}]_{i}^{2}\right] & =\E_{k}\left[[\ln x^{(k)}-\E_{k}\ln x^{(k)}]_{i}^{2}\right]=\frac{1}{2}\E_{k}\left[[\ln x^{(k)}-\ln y^{(k)}]_{i}^{2}\right]\\
 & \leq\E_{k}\left[\frac{(x_{i}^{(k)}-y_{i}^{(k)})^{2}}{(x_{i}^{(k-1)})^{2}}\right]\leq2\E_{k}\left[(\mr_{x,k}\delta_{r}-\mr_{y,k}\delta_{r})_{i}^{2}\right]
\end{align*}
where we used that $x^{(k)}\approx_{0.2}y^{(k)}$, $x^{(k)}-y^{(k)}=\omx\mr_{x,k}\delta_{r}-\omx\mr_{y,k}\delta_{r}$
and $\ox\approx_{0.1}x$ as before at the last line. Again, if the
$i$-th row sampled with probability $1$, then the difference is
$0$. Hence, the last line is bounded by
\begin{align*}
\E_{k}\left[[\Delta^{(k)}-\Delta^{(k-1)}]_{i}^{2}\right]\leq\frac{2}{q_{i}}[\delta_{r}]_{i}^{2}\leq\frac{2n\|\delta_{r}\|_{2}^{2}}{mT\log(mT)}\leq\sigma^{2}\defeq\frac{\gamma^{2}\beta^{2}}{100T\log(mT)}
\end{align*}
where we used $q_{i}\geq\frac{m}{n}\frac{T}{\beta^{2}}\log(mT)\cdot(\delta_{r})_{i}^{2}/\|\delta_{r}\|_{2}^{2}$,
$\tau(x^{(k)},s^{(k)})\geq\frac{n}{m}$ and $\|\delta_{r}\|_{\tau(x^{(k)},s^{(k)})}\leq\frac{\gamma}{40}$
(\ref{eq:delta_r_upper_tau}) at the end.

Now, we apply Bernstein inequality and get that
\begin{align*}
\P(|\Delta_{i}^{(k)}|\leq\beta\text{ for any }k\in[T]) & \leq2T\exp(-\frac{\beta^{2}/2}{\sigma^{2}T+\frac{1}{3}M\beta})\\
 & =2T\exp(-\frac{\frac{1}{2}\beta^{2}}{\frac{\gamma^{2}\beta^{2}}{100T\log(mT)}T+\frac{1}{3}\cdot\frac{\sqrt{n}\gamma\beta^{2}}{40T\log(mT)}\cdot\beta})\\
 & \leq2T\exp(-25\log(mT))\leq\frac{1}{m^{10}}.
\end{align*}
This completes the proof of $\widehat{x}^{(k)}\approx_{\beta}x^{(k)}$
for all $k\leq T$.

For the Property 2, we note that $\widehat{x}^{(k+1)}\approx_{0.1}\widehat{x}^{(k)}\approx_{0.1}x^{(k)}$
and hence
\begin{align*}
\|(\widehat{\mx}^{(k)})^{-1}(\widehat{x}^{(k+1)}-\widehat{x}^{(k)})\|_{\tau(\widehat{x}^{(k)},s^{(k)})} & \leq2\|\ln\widehat{x}^{(k+1)}-\ln\widehat{x}^{(k)}\|_{\tau(x^{(k)},s^{(k)})}\\
 & =2\|\E\ln x^{(k+1)}-\ln x^{(k)}\|_{\tau(x^{(k)},s^{(k)})}.
\end{align*}
Using that $x^{(k+1)}\approx_{0.1}x^{(k)}$, we have 
\begin{align*}
\ln x_{i}^{(k+1)}=\ln x_{i}^{(k)}+((\mx^{(k)})^{-1}(x^{(k+1)}-x^{(k)}))_{i}\pm((\mx^{(k)})^{-1}(x^{(k+1)}-x^{(k)}))_{i}^{2}
\end{align*}
for all $i$ and hence
\begin{align*}
&~\|(\widehat{\mx}^{(k)})^{-1}(\widehat{x}^{(k+1)}-\widehat{x}^{(k)})\|_{\tau(\widehat{x}^{(k)},s^{(k)})} \\
\leq&~
2\|(\mx^{(k)})^{-1}(\E x^{(k+1)}-x^{(k)})\|_{\tau(x^{(k)},s^{(k)})}+2\|\E((\mx^{(k)})^{-1}(x^{(k+1)}-x^{(k)}))^{2}\|_{\tau(x^{(k)},s^{(k)})}.
\end{align*}
Lemma \ref{lem:projection_sizes_ls} shows that the first term is
bounded by $2\gamma$ and Lemma \ref{lem:ls_second} shows that the
second term bounded by $12\gamma^{2}$. Hence, the whole term is bounded
by $5\gamma$.

For the Property 3, we let $\delta_{\tau}=\tau(\widehat{x}^{(k+1)},s^{(k+1)})-\tau(\widehat{x}^{(k)},s^{(k)})$.
By Lemma \ref{lem:expand_sigma}, we have
\[
\delta_{\tau}=\mLambda(1-2\alpha)\widehat{\mx}^{-1}\delta_{\hat{x}}+6\eta
\]
where $\mLambda=\mLambda(\widehat{\mx}^{1/2-\alpha}\ms^{-1/2-\alpha}\ma)$
and 
\begin{align*}
|\eta_{i}|\leq & \sigma_{i}(e_{i}^{\top}\mSigma^{-1}\mproj^{(2)}\widehat{\mx}^{-1}\delta_{\hat{x}})^{2}+e_{i}^{\top}(\mSigma+\mproj^{(2)})(\widehat{\mx}^{-1}\delta_{\hat{x}})^{2}.
\end{align*}
Using $\|\mT^{-1}\mproj^{(2)}\|_{\tau}\leq1$, $\|\mT^{-1}\mLambda\|_{\tau}\leq2$
and $\|\mT^{-1}\mproj^{(2)}\|_{\infty}\leq1$ (Lemma \ref{lem:P2bound}),
we have
\begin{align*}
\|\mT^{-1}\delta_{\tau}\|_{\tau} & \leq\|\mT^{-1}\mLambda\widehat{\mx}^{-1}\delta_{\hat{x}}\|_{\tau}+6\|\mT^{-1}\mSigma^{-1}(\mproj^{(2)}\widehat{\mx}^{-1}\delta_{\hat{x}})^{2}\|_{\tau}+6\|\mT^{-1}(\mSigma+\mproj^{(2)})(\widehat{\mx}^{-1}\delta_{\hat{x}})^{2}\|_{\tau}\\
 & \leq2\|\widehat{\mx}^{-1}\delta_{\hat{x}}\|_{\tau}+6\|(\mT^{-1}\mproj^{(2)}\widehat{\mx}^{-1}\delta_{\hat{x}})^{2}\|_{\tau}+12\|(\widehat{\mx}^{-1}\delta_{\hat{x}})^{2}\|_{\tau}\\
 & \leq14\|\widehat{\mx}^{-1}\delta_{\hat{x}}\|_{\tau}+6\|\mT^{-1}\mproj^{(2)}\widehat{\mx}^{-1}\delta_{\hat{x}}\|_{\infty}\cdot\|\mT^{-1}\mproj^{(2)}\widehat{\mx}^{-1}\delta_{\hat{x}}\|_{\tau}\\
 & \leq14\|\widehat{\mx}^{-1}\delta_{\hat{x}}\|_{\tau}+6\|\widehat{\mx}^{-1}\delta_{\hat{x}}\|_{\infty}\cdot\|\widehat{\mx}^{-1}\delta_{\hat{x}}\|_{\tau}\\
 & \leq14\cdot5\gamma+6\cdot\frac{1}{3}\cdot5\gamma
\end{align*}
where we used $\|\widehat{\mx}^{-1}\delta_{\hat{x}}\|_{\tau}\leq5\gamma$
(property 2) and $\|\widehat{\mx}^{-1}\delta_{\hat{x}}\|_{\infty}\leq\frac{1}{3}$
 (since $x^{(k)}\approx_{0.1}x^{(k-1)}$).
\end{proof}

Our IPM maintains a point $x$ that is nearly feasible, 
that is a point with small norm $\|\mA^\top x - b\|_{(\mA^\top \mX \mS^{-1} \mA)^{-1}}^2$.
We now show that there is a truly feasible $x'$ with $x'_i \approx x_i$ for all $i$.

\begin{lemma}\label{lem:feasible_projection}
Let $(x,s)$ be a primal dual pair with $xs \approx_1 \tau(x,s) \cdot \mu$ for $\mu>0$, 
where $x$ is not necessarily feasible.
Let $\mH^{-1} \approx_\delta (\mA^\top \mX\mS^{-1} \mA)^{-1}$ for some $\delta \in [0,1/2]$.
Then the point 
$$
x' := x + \mX \mS^{-1} \mA \mH^{-1} (b - \mA^\top x)
$$
satisfies both
\begin{align}
\|\mA^\top x' - b\|_{(\mA^\top \mX \mS^{-1} \mA)^{-1}}^2
&\le
5\delta \cdot \|\mA^\top x - b\|_{(\mA^\top \mX \mS^{-1} \mA)^{-1}}^2
\text{, and} \label{eq:infeasibility_bound} \\
\|\mX^{-1} (x-x')\|_\infty^2
&\le 
\frac{5}{\mu}\| \mA x - b \|_{(\mA^\top \mX \mS^{-1} \mA)^{-1}}^2.
\label{eq:similarity_x_bound}
\end{align}
\end{lemma}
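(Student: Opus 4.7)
The plan is to expand both quantities using the definition $x' - x = \mX\mS^{-1}\mA\mH^{-1}(b - \mA^\top x)$ and reduce each bound to a spectral estimate coming from the hypothesis $\mH^{-1} \approx_\delta \mK^{-1}$, where for brevity I set $\mK \defeq \mA^\top \mX \mS^{-1} \mA$.

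For \eqref{eq:infeasibility_bound}, substituting the definition of $x'$ immediately gives
\[
\mA^\top x' - b = (\mI - \mK\mH^{-1})(\mA^\top x - b).
\]
Factoring through $\mK^{-1/2}$ on the left and $\mK^{1/2}$ on the right I would then show
\[
\|\mA^\top x' - b\|_{\mK^{-1}}^2 \leq \|\mI - \mK^{1/2}\mH^{-1}\mK^{1/2}\|_2^2 \cdot \|\mA^\top x - b\|_{\mK^{-1}}^2.
\]
Since $\mH^{-1} \approx_\delta \mK^{-1}$ is equivalent to $\mK^{1/2}\mH^{-1}\mK^{1/2} \approx_\delta \mI$, Lemma~\ref{lem:mult_to_add} bounds the right-hand operator norm by $\delta + \delta^2$, and the crude inequality $(\delta + \delta^2)^2 \leq 5\delta$ for $\delta \leq 1/2$ finishes \eqref{eq:infeasibility_bound}.

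For \eqref{eq:similarity_x_bound}, the definition directly yields $\mX^{-1}(x' - x) = \mS^{-1}\mA v$ where $v \defeq \mH^{-1}(b - \mA^\top x)$. Writing $\tilde{\mA} \defeq \mX^{1/2}\mS^{-1/2}\mA$, so that $\mK = \tilde{\mA}^\top \tilde{\mA}$ and $\sigma_i(\tilde{\mA}) = \tfrac{x_i}{s_i} e_i^\top \mA \mK^{-1} \mA^\top e_i$, a coordinatewise Cauchy--Schwarz in the $\mK$-norm gives
\[
\bigl(e_i^\top \mS^{-1}\mA v\bigr)^2 = \frac{(e_i^\top \tilde{\mA} v)^2}{x_i s_i} \leq \frac{\sigma_i(\tilde{\mA})}{x_i s_i} \cdot v^\top \mK v.
\]
Two further bounds then complete the argument: (a) the spectral approximation $\mH^{-1}\mK\mH^{-1} \preceq e^{2\delta} \mK^{-1}$, obtained by conjugation through $\mK^{1/2}$, yields $v^\top \mK v \leq e^{2\delta}\|b - \mA^\top x\|_{\mK^{-1}}^2$; (b) the approximate centrality hypothesis $xs \approx_1 \tau(x,s)\mu$, combined with the standard leverage-score bound $\sigma_i(\tilde{\mA}) \leq 1$ and the fact that $\tau$ is bounded below appropriately for the weight functions used here, yields $\tfrac{\sigma_i(\tilde{\mA})}{x_i s_i} \leq \tfrac{C}{\mu}$ for a small absolute constant $C$. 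Combining (a) and (b) gives the claimed $\tfrac{5}{\mu}$ bound after tracking constants.

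The main obstacle is choosing the Cauchy--Schwarz in \eqref{eq:similarity_x_bound} so that the leverage-score quantity $\sigma_i(\tilde{\mA})$ surfaces naturally: this is the bridge that converts a coordinatewise $\ell_\infty$-bound into something controlled by the global $\mK^{-1}$-norm of the current infeasibility. Once this is set up, both inequalities reduce to a routine application of $\mH^{-1} \approx_\delta \mK^{-1}$ via symmetric conjugation through $\mK^{1/2}$.
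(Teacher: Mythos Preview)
Your proposal is correct and essentially identical to the paper's proof. For \eqref{eq:infeasibility_bound} you bound $\|\mI - \mK^{1/2}\mH^{-1}\mK^{1/2}\|_2$ via Lemma~\ref{lem:mult_to_add}, while the paper expands the quadratic form $(\mI - \mH^{-1}\mK)\mK^{-1}(\mI - \mK\mH^{-1}) = \mK^{-1} - 2\mH^{-1} + \mH^{-1}\mK\mH^{-1}$ and bounds it by $(1 - 2e^{-\delta} + e^{2\delta})\mK^{-1}$ directly; these are equivalent. For \eqref{eq:similarity_x_bound} the paper splits $\mH^{-1} = \mH^{-1/2}\mH^{-1/2}$ and applies Cauchy--Schwarz there (picking up the $e^{2\delta}$ factor from two applications of $\mH^{-1}\preceq e^\delta\mK^{-1}$), then bounds $\|e_i^\top\mS^{-1}\mA\mK^{-1/2}\|_2^2 = \sigma_i(\tilde{\mA})/(x_is_i) \leq \tau_i/(x_is_i) \leq 3/\mu$ exactly as you do---your version simply applies Cauchy--Schwarz in the $\mK$-norm instead of the $\mH$-norm, which is cosmetic.
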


\begin{proof}
Let $\delta_x = \mX \mS^{-1} \mA (b - \mA^\top x)$.
We start by proving \eqref{eq:infeasibility_bound}.
\begin{align*}
\|\mA^\top x' - b\|_{(\mA^\top \mX \mS^{-1} \mA)^{-1}}^2
&=
\|\mA^\top (x + \delta_x) - b\|_{(\mA^\top \mX \mS^{-1} \mA)^{-1}}^2 \\
&=
\|(\mI - \mA^\top \mX \mS^{-1} \mA \mH^{-1})(\mA^\top x - b)\|_{(\mA^\top \mX \mS^{-1} \mA)^{-1}}^2
\end{align*}
By using $\mH^{-1} \approx_\delta (\mA\mX\mS^{-1}\mA)^{-1}$ we can bound
\begin{align*}
&~
(\mI - \mH^{-1}\mA^\top \mX \mS^{-1} \mA)(\mA^\top\mX\mS^{-1}\mA)^{-1}(\mI-\mA^{\top} \mX \mS^{-1}\mA\mH^{-1}) \\
=&~
(\mA^\top \mX \mS^{-1} \mA)^{-1} - 2\mH^{-1} + \mH^{-1} \mA^\top \mX \mS^{-1} \mA \mH^{-1} \\
\prec&~
(1-2\exp(-\delta) + \exp(2\delta)) \cdot (\mA^\top \mX \mS^{-1} \mA)^{-1} \\
\prec&~
5\delta (\mA^\top \mX \mS^{-1} \mA)^{-1}.
\end{align*}
This then implies
$$
\|\mA^\top x' - b\|_{(\mA^\top \mX \mS^{-1} \mA)^{-1}}^2
\le
5\delta \|\mA^\top x - b\|_{(\mA^\top \mX \mS^{-1} \mA)^{-1}}^2,
$$
so we have proven \eqref{eq:infeasibility_bound}.
Next, we prove \eqref{eq:similarity_x_bound}.
\begin{align*}
\|\mX^{-1} \delta_x \|_\infty 
&=
\max_{i \in [m]} \| e_i^\top \mS^{-1} \mA \mH^{-1} (b - \mA^\top x)\|_\infty \\
&\le
\max_{i \in [m]} \| e_i^\top \mS^{-1} \mA \mH^{-1} (b - \mA^\top x)\|_2 \\
&\le
\max_{i \in [m]} \| e_i^\top \mS^{-1} \mA \mH^{-1/2}\|_2 \|\mH^{-1/2} (b - \mA^\top x)\|_2 \\
&\le
\exp(2\delta) \cdot \max_{i \in [m]} \| e_i^\top \mS^{-1} \mA (\mA^\top \mX \mS^{-1} \mA)^{-1/2}\|_2 \|b - \mA^\top x\|_{(\mA^\top \mX \mS^{-1} \mA)^{-1}},
\end{align*}
where the third step uses Cauchy-Schwarz and the last step the approximation guarantee of $\mH^{-1}$.
The first norm of above term can be bounded by
\begin{align*}
\| e_i^\top \mS^{-1} \mA (\mA^\top \mX \mS^{-1} \mA)^{-1/2}\|_2^2 
&=
	e_i^\top \frac{1}{(\mX\mS)^{1/2}}\frac{\mX^{1/2}}{\mS^{1/2}} \mA 
	(\mA^\top \mX \mS^{-1} \mA)^{-1} 
	\mA^\top \frac{\mX^{1/2}}{\mS^{1/2}} \frac{1}{(\mX\mS)^{1/2}} e_i \\
&=
\frac{\sigma(\frac{\mX^{1/2}}{\mS^{1/2}})_i}{x_i s_i} 
\le
\frac{\sigma(\frac{\mX^{1/2-\alpha}}{\mS^{1/2+\alpha}})_i}{x_i s_i} 
\le
\frac{\tau(x,s)}{x_i s_i}
\le
\frac{3}{\mu},
\end{align*}
where the last step uses $xs \approx_1 \mu\cdot\tau(x,s)$
and $\sigma(x,s) \le \tau(x,s)$ for both possible choices of $\tau$
(i.e. the IPM of \Cref{sec:log_barrier_method} uses $\tau(x,s) = 1$ 
and \Cref{sec:ls_barrier_method} uses $\tau(x,s) = \sigma(x,s) + n/m$).
By using $\delta \le 1/2$ we then obtain
$$\|\mX^{-1} \delta_x\|_\infty^2 \le \frac{5}{\mu} \cdot \| \mA x - b \|_{(\mA^\top \mX \mS^{-1} \mA)^{-1}}^2.$$
\end{proof}

In the remaining of this section, we bound how much the iterates of our IPMs move from the initial point. 
Our proof relies on bounds on how fast the center
of a weighted self-concordant barrier moves. 
In particular we use the following lemma from \cite{ls19}
 specialized to the weighted log barrier.\footnote{This following \Cref{lem:stability_lemma} follows as a special case of Lemma~67 of  \cite{ls19} by choosing the $\phi_i(x)$ in that lemma to be $\phi_i \defeq  - \ln ([\ma_2^\top x - b_2]_i)$ noting that each $\phi_i$ is a $1$-self-concordant barrier for the set $\{x \in \R^{n_2} | [\ma_2^\top x]_i \geq [b_2]_i \}$ . }

\begin{lemma}[Special Case of \cite{ls19} Lemma~67]
	\label{lem:stability_lemma}
	\label{lem:slacks_bound} For arbitrary $\ma_{1}\in\R^{n \times m}$,
	$b_{1}\in\R^{m}$, $\ma_{2}\in\R^{n \times k}$, $b_{2}\in\R^{k}$,
	$b_3 \in \R^{n}$ and all $w \in\R_{>0}^{k}$, let
	$\Omega=\{x \in \R^{n} :\ma_{1}^{\top}x=b_{1},\ma_{2}^{\top}x > b_{2}\}$ let
	\[
	x_{w}\defeq\arg\min_{x\in\Omega} b_3^{\top}x-\sum_{i\in[k]}w_{i}\log([\ma_{2}^{\top}x-b_{2} ]_{i}).
	\]
	For all $w^{(0)},w^{(1)}\in\mathbb{R}_{>0}^{k}$, it holds that
	$x_{w^{(0)}}+t(x_{w^{(1)}}-x_{w^{(0)}})\in\Omega$ for all $t \in (-\gamma, 1 + \gamma)$ where  
	\[
	\gamma = \frac{\theta^2}{1 + 2\theta}
	\enspace \text{ and } \enspace
	\theta = 
	\frac{ \min_{i \in [k]}  \min\{ w_i^{(0)} , w_i^{(1)}  \}}
	{\sum_{i \in [k]}  |w_i^{(0)} - w_i^{(1)}  |}
	~.
	\]
\end{lemma}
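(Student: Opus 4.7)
The plan is to derive this lemma as a direct specialization of Lemma~67 of \cite{ls19}, which is a general stability result for minimizers of linear objectives penalized by a weighted sum of self-concordant barriers over an affine slice. The hint stated in the excerpt already specifies the correspondence: take each barrier to be $\phi_i(x) \defeq -\ln([\ma_2^\top x - b_2]_i)$, and take the weight vector to be $w$. Each such $\phi_i$ is a standard logarithmic barrier for the halfspace $K_i \defeq \{x \in \R^n : [\ma_2^\top x]_i > [b_2]_i\}$ and is well known to be $1$-self-concordant on $K_i$, which is precisely the hypothesis required by Lemma~67.

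Concretely, I would (i) identify the convex set $\bigcap_{i \in [k]} K_i \cap \{x : \ma_1^\top x = b_1\} = \Omega$ as the feasible region of the lemma in \cite{ls19}; (ii) identify the linear objective $c$ in that lemma with our $b_3$; and (iii) observe that the weighted minimizer in \cite{ls19} then coincides exactly with our $x_w$. Lemma~67 of \cite{ls19} bounds precisely how far the weighted analytic center of $\Omega$ can move when the weights change from $w^{(0)}$ to $w^{(1)}$, and measures the allowed motion in terms of the weighted $\ell_1$ drift $\sum_i |w_i^{(0)} - w_i^{(1)}|$ normalized by the smallest weight. Substituting $\nu_i = 1$ for all $i$ into the bound from that lemma gives $\gamma = \theta^2/(1+2\theta)$ for $\theta = \min_i \min\{w_i^{(0)}, w_i^{(1)}\}/\sum_i |w_i^{(0)} - w_i^{(1)}|$, which is exactly the quantity in the statement.

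The only mildly non-trivial verification is handling the equality constraint $\ma_1^\top x = b_1$: the cited lemma is phrased for an affine slice, so one must check that restricting each $\phi_i$ to $\{x : \ma_1^\top x = b_1\}$ preserves $1$-self-concordance on the restricted halfspace, which is immediate from the standard composition rule that self-concordance is preserved under affine restriction. Once this is noted, the lemma is a direct instantiation; I would not expect to need to redo the Dikin-ellipsoid/KKT argument underlying Lemma~67. Should a self-contained derivation be desired, the main step is to subtract the two optimality conditions for $x_{w^{(0)}}$ and $x_{w^{(1)}}$, test against $x_{w^{(1)}} - x_{w^{(0)}}$ (killing the Lagrange multiplier for $\ma_1^\top x = b_1$), and apply Cauchy--Schwarz together with the self-concordance-based Dikin-ellipsoid containment to obtain the quadratic-in-$\theta$ extension of the segment; I expect this reduction, rather than any new ingredient, to be the main obstacle if a proof from scratch were written.
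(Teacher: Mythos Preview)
Your proposal is correct and matches the paper's approach exactly: the paper does not give a standalone proof but simply notes (in a footnote) that the result is the special case of Lemma~67 of \cite{ls19} obtained by taking each $\phi_i(x) = -\ln([\ma_2^\top x - b_2]_i)$ as a $1$-self-concordant barrier for the corresponding halfspace. Your additional remarks on affine restriction and the Dikin-ellipsoid argument go slightly beyond what the paper records, but are consistent with the intended derivation.
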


Using the lemma above, we can bound how the fast primal-dual central
path moves by applying the lemma on both primal and dual separately.

\begin{lemma}\label{lem:size_bound_tool}
	Fix arbitrary non-degenerate $\ma\in\R^{m\times n}$, $b\in\R^{n}$,  and $c\in\R^{m}$. For all
	$w\in\R_{>0}^{m}$, we define $(x_{w},s_{w})\in\R_{>0}^{2m}$
	be the unique vector such that $\mx_{w}s_{w}=w$, $\ma^{\top}x_{w}=b$,
	$\ma y+s_{w}=c$ for some $y$. For any $w^{(0)},w^{(1)}\in\R_{>0}^{m}$,
	we have that entrywise 
	\[
	\frac{1}{\kappa}x_{w^{(1)}}\leq x_{w^{(0)}}\leq\kappa x_{w^{(1)}}
	\enspace \text{ and } \enspace
	\frac{1}{\kappa}s_{w^{(1)}}\leq s_{w^{(0)},}\leq\kappa s_{w^{(1)}}
	\]
	where $\kappa = 4 (W_{1} / W_{\min} )^2$ for 
	$W_{1} \defeq \|w^{(0)}\|_1 + \|w^{(1)}\|_1$ and $W_{\min} \defeq \min_{i \in [k]}  \min\{ w_i^{(0)} , w_i^{(1)} \}$.
\end{lemma}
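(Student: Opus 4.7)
The plan is to apply Lemma~\ref{lem:stability_lemma} separately to the primal and dual weighted log-barrier problems whose optimal solutions are exactly $x_w$ and $s_w$, and then convert the convex-combination statement of that lemma into the desired multiplicative bound.

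First, I would verify via KKT conditions that for each $w \in \R_{>0}^m$ one has
\begin{align*}
x_w \;=\; \argmin_{x \in \R^m : \mA^\top x = b,\, x > 0} \; c^\top x - \sum_{i \in [m]} w_i \log(x_i),
\end{align*}
and, analogously, that $s_w$ is characterized by the dual problem
\begin{align*}
\min_{y \in \R^n : \mA y < c} \; -b^\top y - \sum_{i \in [m]} w_i \log(c_i - [\mA y]_i),
\end{align*}
with $s_w = c - \mA y_w$. In both cases, the stationarity conditions recover $\mX_w s_w = w$ together with $\mA^\top x_w = b$ and $\mA y_w + s_w = c$, matching the definition in the lemma.

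Next I would invoke Lemma~\ref{lem:stability_lemma} twice. For the primal, take $\mA_1 = \mA$, $b_1 = b$, $\mA_2 = \mI$, $b_2 = 0$, $b_3 = c$; then $x_{w^{(0)}} + t(x_{w^{(1)}} - x_{w^{(0)}})$ lies in the feasible region (in particular is entrywise positive) for every $t \in (-\gamma, 1+\gamma)$, where $\gamma = \theta^2/(1+2\theta)$ and $\theta = W_{\min}/\sum_i |w_i^{(0)} - w_i^{(1)}|$. For the dual, apply the lemma with variable $y$, no equality constraints, $\mA_2 = -\mA$, $b_2 = -c$, $b_3 = -b$; the conclusion becomes $s_{w^{(0)}} + t(s_{w^{(1)}} - s_{w^{(0)}}) > 0$ for the same range of $t$.

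Now I would turn these convex-combination positivity statements into the desired entrywise multiplicative bounds: evaluating at $t = 1+\gamma$ gives $(1+\gamma)\,x_{w^{(1)}} - \gamma\,x_{w^{(0)}} > 0$, hence $x_{w^{(0)}} \le \frac{1+\gamma}{\gamma}\,x_{w^{(1)}}$ entrywise, and evaluating at $t = -\gamma$ gives the reverse inequality. A short calculation yields
\begin{align*}
\frac{1+\gamma}{\gamma} \;=\; \frac{(1+\theta)^2}{\theta^2} \;\le\; \Bigl(1 + \tfrac{W_\Delta}{W_{\min}}\Bigr)^2 \;\le\; \Bigl(1 + \tfrac{W_1}{W_{\min}}\Bigr)^2,
\end{align*}
where $W_\Delta \defeq \sum_i |w^{(0)}_i - w^{(1)}_i| \le W_1$. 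Since $W_1 \ge 2 W_{\min}$ (each of $\|w^{(0)}\|_1, \|w^{(1)}\|_1$ is at least $W_{\min}$), we have $W_1/W_{\min} \ge 2$, so $(1 + W_1/W_{\min})^2 \le (3/2)^2 (W_1/W_{\min})^2 \le 4 (W_1/W_{\min})^2 = \kappa$. The same argument applied to the dual gives the bound for $s$.

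The main obstacle I anticipate is the dualization step: one must carefully set up the weighted log-barrier in the $y$ variables so that Lemma~\ref{lem:stability_lemma} applies verbatim (in particular getting the signs of $\mA_2$ and $b_2$ right so that the domain condition $\mA_2^\top y > b_2$ matches $s = c - \mA y > 0$), and one must confirm that the KKT conditions for this dual minimization genuinely identify $s_w$ (equivalently $y_w$) as defined implicitly via $\mX_w s_w = w$ with a primal multiplier equal to $x_w$. Once this correspondence is nailed down, the rest is routine algebra.
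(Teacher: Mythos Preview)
Your proposal is correct and follows essentially the same approach as the paper: apply Lemma~\ref{lem:stability_lemma} to the primal barrier problem (with $\mA_1=\mA$, $b_1=b$, $\mA_2=\mI$, $b_2=0$, $b_3=c$) and to the dual barrier problem in the $y$ variable, then convert the ``convex combination stays feasible'' conclusion into the multiplicative bound $\kappa$. Your algebra at the end, computing $(1+\gamma)/\gamma=(1+1/\theta)^2$ exactly and then bounding, is a bit cleaner than the paper's route via monotonicity of $\theta^2/(1+2\theta)$ and $\gamma/(1+\gamma)$, but the argument is the same. One cosmetic slip: in your dual instantiation you wrote $\mA_2=-\mA$, but given the dimension conventions in Lemma~\ref{lem:stability_lemma} this should be $\mA_2=-\mA^\top$ (so that $\mA_2^\top y>b_2$ reads $-\mA y>-c$, i.e.\ $s=c-\mA y>0$); the paper's own dual instantiation has an analogous sign sloppiness, so you are in good company.
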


\begin{proof}
	Apply Lemma~\ref{lem:slacks_bound} with $\ma_{1}=\ma$, $b_{1}=b$,
	$\ma_{2}=\mi$, $b_{2}=0$ and $b_{3}=c$ so that $x_w$ is as defined in \Cref{lem:slacks_bound}.  Lemma~\ref{lem:slacks_bound} implies that
	$x_{w^{(0)}}+ t (x_{w^{(1)}}-x_{w^{(0)}}) > 0 $ for all $t \in (-\gamma, 1 + \gamma)$ 
	with $\gamma$ and $\theta$ defined as in \Cref{lem:stability_lemma}. Consequently, $\frac{1 + \gamma}{\gamma} x_{w^{(1)}} \geq x_{w^(0)} \geq \frac{\gamma}{1 + \gamma} x_{w^{(1)}}$. Now, $\theta \geq \frac{W_{\min}}{W_{1}}$ as $|w_i^{(0)} - w_i^{(1)}| \leq |w_i^{(0)}| + |w_i^{(1)}|$. Since $\theta^2 / (1 + 2\theta)$
	increases monotonically for positive $\theta$ and $\frac{W_{\min}}{W_{1}} \leq 1$ we have $\gamma \geq \frac{1}{3} (\frac{W_{\min}}{ W_{1}} )^2$. Similarly, since $\gamma / (1 + \gamma)$ increases monotonically for positive $\gamma$ and $\frac{1}{3} (\frac{W_{\min}}{ W_{1}} )^2 \leq \frac{1}{3}$ we have $\frac{\gamma}{1 + \gamma} \geq \frac{1}{4} (\frac{W_{\min}}{W_1}) \geq   \frac{1}{\kappa}$. Thus,  $\frac{1}{\kappa}x_{w^{(0)}}\leq x_{w^{(1)}}\leq\kappa x_{w^{(0)}}$.
	
	Similarly, to bound $s_{w^{(0)}}$ and $s_{w^{(1)}}$ we apply  Lemma~\ref{lem:slacks_bound} with $\ma_{1}=0$,
	$b_{1}=0$, $\ma_{2}=\ma^{\top}$, $b_{2}=c$ and $b_{3}=b$ so that $s_w$ is as defined in \Cref{lem:slacks_bound}.  Lemma~\ref{lem:slacks_bound} implies that
	$s_{w^{(0)}}+ t (s_{w^{(1)}}-s_{w^{(0)}}) > 0 $ for all $t \in (-\gamma, 1 + \gamma)$. Since this is the same statement that was shown for $x_{w^{(0)}}$ and $x_{w^{(1)}}$ in the preceding paragraph we also have  $\frac{1}{\kappa}s_{w^{(0)}}\leq s_{w^{(1)}}\leq\kappa s_{w^{(0)}}$.
\end{proof}

This allows us to bound the number of bits of precision required to appropriately represent all $x$, $s$ 
as they occur throughout \Cref{alg:meta}, our main path following routine.

\begin{lemma}\label{lem:bitlength}
Let $x^\init,s^\init,\mu^\init,\mu^\target$ be the initial parameters 
for \Cref{alg:meta} as in \Cref{lem:path_following_master}.
Let $W$ be a bound on the ratio of largest to smallest entry in both $x^\init$ and $s^\init$.
Let $W'$ be a bound on the ratio of largest to smallest entry for all $x$ and $s$ at each \Cref{line:for_loop_start} and the algorithm's termination. Then $\log W' = \tilde{O}(\log W + |\log \mu^\init/\mu^\target|)$ for $\tau = \tauLS$ and $\tau = \taulog$.
\end{lemma}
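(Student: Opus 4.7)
The plan is to compare each checkpoint iterate $(x,s,\mu)$ to the initial iterate $(x^\init,s^\init,\mu^\init)$ by treating both as small perturbations of exact central-path points and invoking \Cref{lem:size_bound_tool}. The main obstacle will be that the iterates of \Cref{alg:meta} are only \emph{approximately} primal-feasible, so they do not literally sit on the central path of \Cref{lem:size_bound_tool}; I will circumvent this with the feasibility-projection of \Cref{lem:feasible_projection}.

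First I would note that, by \Cref{lem:path_following_master}, the triple $(x,s,\mu)$ is $\epsilon$-centered at each \Cref{line:for_loop_start} and at termination, and by hypothesis the same holds for $(x^\init,s^\init,\mu^\init)$. Next I invoke \Cref{lem:feasible_projection} with $\mh = \ma^\top \mx\ms^{-1}\ma$ (so $\delta = 0$) to produce $x'$ satisfying $\ma^\top x' = b$ exactly and $\|\mx^{-1}(x-x')\|_\infty = O(\epsilon\gamma)$, and analogously define $x'^\init$. Since $\ma y + s = c$ already holds exactly (and likewise for the initial point), defining $w \defeq x's$ and $w^\init \defeq x'^\init s^\init$ makes $(x',s)$ and $(x'^\init,s^\init)$ precisely the central-path pairs $(x_w,s_w)$ and $(x_{w^\init},s_{w^\init})$ in the notation of \Cref{lem:size_bound_tool}.

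The heart of the argument is to bound the ratio $W_1/W_{\min}$ arising in \Cref{lem:size_bound_tool}. Using $\epsilon$-centrality, $w \approx_{O(\epsilon)} \mu\,\tau(x,s)$ entry-wise (and likewise for $w^\init$). For both weight functions I have $\|\taulog\|_1 = m$ and $\|\tauLS\|_1 = \|\sigma\|_1 + n = 2n$, so $\|w\|_1 = O(\mu m)$; and $\min_i \tau_i \geq n/m$ in both cases, so $\min_i w_i = \Omega(\mu\, n/m)$. Because $\mu$ moves monotonically toward $\mu^\target$ under the median update (and failed phases revert to an earlier checkpoint), every checkpoint has $\mu$ between $\mu^\init$ and $\mu^\target$. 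Combining these,
\[
\frac{W_1}{W_{\min}} = O\!\left(\frac{m^2}{n}\cdot \max\!\left\{\tfrac{\mu^\init}{\mu^\target},\tfrac{\mu^\target}{\mu^\init}\right\}\right),
\]
so the constant $\kappa = 4(W_1/W_{\min})^2$ of \Cref{lem:size_bound_tool} satisfies $\log \kappa = \tilde O(1) + O(|\log(\mu^\init/\mu^\target)|)$.

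Finally, I would conclude from \Cref{lem:size_bound_tool} that $x' \approx_{\log\kappa} x'^\init$ and $s \approx_{\log\kappa} s^\init$ entry-wise. Folding in $x \approx_{O(\epsilon\gamma)} x'$ and $x^\init \approx_{O(\epsilon\gamma)} x'^\init$ gives $\max_i x_i / \min_i x_i \leq \kappa^{O(1)} \cdot (\max_i x^\init_i / \min_i x^\init_i) \leq \kappa^{O(1)} W$, and the identical bound for $s$. Taking logarithms yields $\log W' = O(\log\kappa) + \log W = \tilde O(\log W + |\log(\mu^\init/\mu^\target)|)$, uniformly for both $\tau = \taulog$ and $\tau = \tauLS$, as required.
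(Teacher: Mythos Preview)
Your proposal is correct and follows essentially the same route as the paper: project to a feasible $x'$ via \Cref{lem:feasible_projection}, set $w=x's$, bound $W_1/W_{\min}$ using entrywise bounds on $\tau$ and the fact that $\mu$ lies between $\mu^\init$ and $\mu^\target$, and then invoke \Cref{lem:size_bound_tool} to compare to the initial point. Your write-up is in fact slightly more explicit than the paper's about defining $x'^\init$ and about why $\mu$ stays between $\mu^\init$ and $\mu^\target$.
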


\begin{proof}
Let $(x,s,\mu)$ be the state of the algorithm at either \Cref{line:for_loop_start} or the algorithm's termination. \Cref{lem:path_following_master} implies that $(x,s,\mu)$ is $\epsilon$-centered. Consequently, 
$\frac{1}{\sqrt{\mu}} \| \mA^\top x - b \|_{(\mA \mX \mS^{-1} \mA)^{-1}} = O(1)$.
By \Cref{lem:feasible_projection} this implies there is a feasible $x'$
with $x \approx_{O(1)} x'$ where for $w \defeq x' s$ and $x_w, s_w$ as defined in \Cref{lem:size_bound_tool}
we have $x_w = x'$ and $s_w = w$. Further, the definition of $\epsilon$-centered implies that  $w = x's \approx_{O(1)} xs \approx_{O(1)} \mu\tau(x,s)$. Now, for $\tau(x,s) = \taulog(x,s) = \vones$ we have
\begin{align*}
\mu\cdot\tau(x,s) \geq \min\{\mu^\init,\mu^\target\} 
\text{ and }
\norm{\mu  \cdot \tau(x,s)}_1 \leq m \cdot \max\{\mu^\init,\mu^\target\}.
\end{align*}
and for $\tau(x,s) = \tauLS(x,s) = \sigma(x,s)+\frac{n}{m} \vones$ we have
\begin{align*}
\mu \cdot \tau(x,s) \geq \frac{n}{m} \cdot \min\{\mu^\init,\mu^\target\} 
\text{ and }
\norm{\mu \cdot \tau(x,s)}_1 \leq 2n \cdot \max\{\mu^\init,\mu^\target\}.
\end{align*}
In either case, applying \Cref{lem:size_bound_tool} with 
\[
\kappa = O\left(\left(m \cdot  \frac{\max\{\mu^\init,\mu^\target\}}{\min\{\mu^\init,\mu^\target\}}\right)^2 \right)
= O\left( \left(m \cdot \max\left\{\frac{\mu^\init}{\mu^\target} , \frac{\mu^\target}{\mu^\init}\right\} \right)^2 \right)
\]
yields that
\[
\frac{1}{\kappa} x_{w^\init}
\leq  x_w \leq  \kappa x_{w^\init}
\text{ and }
\frac{1}{\kappa} s_{w^\init}
\leq  s_w \leq  \kappa s_{w^\init} ~.
\]
Since the largest ratio of the entries in $x^\init$ and $s^\init$ is bounded by $W$ this implies that 
$W'$ with $\log W' = \tilde{O}(\log W + |\log \mu^\init/\mu^\target|)$ is an upper bound on the largest ratio of entries in $x$ and $s$ as desired.
\end{proof}

\newpage

\section{Heavy Hitters}
\label{sec:matrix_vector_product}

This section concerns maintaining information 
about a matrix-vector product $\mA h$ 
for an incidence matrix $\mA\in \{-1,0,1\}^{m\times n}$ undergoing {\em row scaling}.
Formally, let $g\in \R^m$. 
Note that $\mdiag(g)\mA$ is the matrix obtained by 
scaling the $i^{th}$ row of $\ma$ by a factor of $g_i$. 
We want to be able to update entries of $g$ 
and compute certain information about $\mdiag(g)\mA h$ 
for query vector $h\in \R^n$. 
The data structure constructed in this section 
will be used in \Cref{sec:vector_maintenance}
to create a data structure for efficiently 
maintaining the slack of the dual solution inside our IPM method.

If $\mA$ is an incidence matrix representing some graph $G$, 
then $\mdiag(g)\mA h\in \R^m$ can be viewed as a vector of {\em values} of edges, 
defined as follows. 
View $h\in \R^n$ as a vector of {\em potentials} on nodes 
and view $g\in \R_{\ge0}^m$ as a vector of edge {\em weights} 
(node $v\in [n]$ has potential $h_v$ and edge $e\in[m]$ has weight $g_e$). 
Define the {\em value} of each edge $e=\{u,v\}$ 
as $|(\mdiag(g)\mA h)_e| = g_e |h(v)-h(u)|$. 
We would like to maintain certain information about the edge values 
when the potential and edge weights change.

The data structure of this section (\Cref{lem:large_entry_datastructure}) 
can report edges with large absolute values, i.e. large $|(\mdiag(g)\mA h)_e|$. 
It supports updating $g_e$ for some $e\in[m]$ ({\sc Scale} operation). 
Moreover, given $h\in \R^n$ and $\epsilon\geq 0$, 
it reports all edges (equivalently all entries in $\mdiag(g)\mA h$) 
with values at least $\epsilon$ ({\sc HeavyQuery} operation). 
Additionally, it can sample each edge with probability 
roughly proportional to its value {\em square} ({\sc Sample} operation). 
Thus, each edge $e=\{u,v\}$ is selected with probability roughly 
$\frac{(g_e (h_u - h_v))^2}{\|\mdiag(g) \mA h\|_2^2}.$ 

Our data structure requires near-linear preprocessing time 
and polylogarithmic time per {\sc Scale} and {\sc Sample} operation. 
It takes roughly $O(\|\mdiag(g) \mA h\|_2^2 \epsilon^{-2}+n)$ time to answer a query. 
(To make sense of this time complexity, 
note that reading the input vector $h$ takes $O(n)$ time,
and there can be at most and as many as 
$\|\mdiag(g) \mA h\|_2^2 \epsilon^{-2}$ edges reported: 
For any $x_1, x_2, \ldots, x_m$, there can be at most and as many as 
$(\sum_i x_i^2)/\epsilon^2$ indices $i$ such that $x_i\geq\epsilon$.) 
Note that the data structure is randomized but holds against an adaptive adversary. 
It can be made deterministic at the cost of $m^{1+o(1)}$ preprocessing time 
and $n^{o(1)}$ time per \textsc{Scale} (as opposed to $\polylog(n)$),
using \cite{cglnps20}.

We discuss our data structure to query all large entries (in terms of absolute value) of vector $\mdiag(g) \mA h$ and to sample entries of the vector proportional to the squared values. Formally, our data structure has the following guarantees.
\begin{lemma}[\ProjHeavyHitter]\label{lem:large_entry_datastructure}
	There exists a data structure \textsc{ProjectionHeavyHitter} %
	that supports the following operations.
	\begin{itemize}
		\item \textsc{Initialize($\mA \in \{-1,0,1\}^{m\times n}, g \in \R^m_{\geq 0}$)}: 
		The data structure is given the edge incidence matrix $\mA$ of a graph $G$ and a scaling vector $g$.
		It initializes in $O(m \log^9 n)$ time. 
		\item \textsc{Scale$(i \in [m], s \ge 0)$}: 
		Updates $g_i \leftarrow s$ in $O(\log^{14} n)$ amortized time.
		\item \textsc{HeavyQuery($h\in \R^n$, $\epsilon \in \R_{>0}$)}:
		With high probability, the data structure returns all $e\in [m]$ such that %
		$\left|\left(\mdiag(g) \mA h\right)_e\right| \geq \epsilon$
		in running time
		\begin{align*}
		O(\|\mdiag(g) \mA h\|_2^2 \epsilon^{-2} \log^8 n + n \log^2 n\log W),
		\end{align*}
		where 
		$W$ is the ratio of the largest to the smallest non-zero entries in $g$.
		\item \textsc{Sample}$(h \in \R^n, K \in \R_{>0})$:
		With high probability, in $O(K \log n + n \log^2 n \log W)$ time 
		the data-structure returns independently sampled indices of $\mdiag(g) \mA h\in \R^m$ (i.e., edges in graph $G$), where each edge $e=(u,v)$ is sampled with some probability $q_e$ which is at least
		\begin{align*}
		\min\left\{K \cdot \frac{(g_e (h_u - h_v))^2}{16 \|\mdiag(g) \mA h\|_2^2 \log^{8} n},1\right\},
		\end{align*}
		and with high probability there are at most $O(K \log n)$ entries returned.
		\item \textsc{Probability$(I\subset [m], h\in \R^n, K\in \R_{>0})$}: 
		Given a subset of edges $I$, this procedure returns for every $e\in I$ the probabilities $q_e$
		that $e$ would be sampled in the procedure \textsc{Sample$(h,K)$} . 
		The running time is $O(|I|+n\log^2n\log W)$.
		\item \textsc{LeverageScoreSample}$(K' \in \R_{>0})$:
		With high probability, in $O(K' n \log^{11} n \log W)$ time
		\textsc{LeverageScoreSample} returns a set of sampled edges 
		where every edge $e$ is included independently 
		with probability $p_e \ge K' \cdot \sigma(\mdiag(g)\mA)_e$, 
		and there are at most $O(K' n \log^{11} n \log W)$ entries returned.
		
		\item \textsc{LeverageScoreBound}$(K', I \subset [m])$:
		Given a subset of edges $I$, this procedure returns for every $e\in I$ the probabilities $p_e$ 
		that $e$ would be sampled in the procedure \textsc{LeverageScoreSample$(K')$}. 
		The running time is $O(|I|)$. %
	\end{itemize}
\end{lemma}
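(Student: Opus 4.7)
The plan is to build this data structure by combining two dynamic partitioning schemes---bucketing by edge weight and expander decomposition---so that each resulting subgraph is both weight-uniform and has large conductance. This reduces the global problem on $\mdiag(g)\mA h$ to many local problems on expanders with essentially uniform edge scalings, where a Cheeger-based argument identifies the nodes (and hence incident edges) that can contribute large entries.

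First, I would partition the edges into $O(\log W)$ weight buckets $E_k = \{e : g_e \in [2^{k},2^{k+1})\}$. For each bucket $E_k$, I would maintain a $\phi$-expander decomposition with $\phi = 1/\polylog(n)$ using the dynamic algorithm of \cite{BernsteinBNPSS20,sw19,cglnps20,ns17,nsw17,w17}, which supports edge insertions and deletions in $\polylog(n)$ amortized time and guarantees $\sum_q |V(E_{k,q})| = O(n \log n)$ for the induced expander subgraphs. A \textsc{Scale} operation moves one edge between a constant number of buckets, triggering a constant number of updates to these dynamic structures, yielding the claimed $O(\log^{14} n)$ amortized bound.

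For \textsc{HeavyQuery}$(h,\epsilon)$, I would process each induced expander $G'=(V',E')$ inside each bucket as follows. First, shift $h$ restricted to $V'$ by an additive multiple of $\onevec$ so that the shifted vector $h'$ is orthogonal to the degree vector $d_{G'}$; this does not change any edge value $h_u - h_v$ and takes $O(|V'|)$ time, so $O(n\log^2 n \log W)$ total across buckets. Within bucket $k$, an edge $(u,v)$ with $g_e|h_u - h_v| \ge \epsilon$ requires $|h'_u|+|h'_v| \ge \epsilon/2^{k+1}$, so it suffices to enumerate the incident edges of nodes with $|h'_v| \ge \epsilon/2^{k+2}$. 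Because $G'$ is an expander, Cheeger's inequality gives $d^\top_{G'} \mI_S d_{G'} \leq O(\phi^{-2}) \cdot {h'}^\top \mL_{G'} h'/(\epsilon/2^{k+2})^2$ for the set $S$ of such heavy nodes, which bounds the total work for visiting their incident edges by $O(\phi^{-2} \epsilon^{-2} \|\mdiag(g)\mA h\|_2^2)$ summed across all buckets and expanders; since $\phi^{-2} = \polylog(n)$, this yields the claimed time bound.

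For \textsc{Sample}$(h,K)$, the same bucketing plus expander decomposition scheme is used, but instead of enumerating large edges we sample inside each expander: for each node $v$ in each induced expander, sample incident edges with probability proportional to $(h_u - h_v)^2 g_e^2$ using a dynamic weighted sampler maintained per node, and then use the shifted-$h'$ argument above (large-value edges are incident to heavy nodes) to bound the total number of candidates inspected by $O(K\log n)$ with high probability. The probabilities are explicit from the construction, which also yields \textsc{Probability} in $O(|I|+n\log^2 n \log W)$ time (the $n\log^2 n\log W$ accounts for computing all the per-bucket shifts of $h$ once). For \textsc{LeverageScoreSample} and \textsc{LeverageScoreBound}, I would use that within each bucket the scaled incidence matrix is well-approximated by a uniform-weight incidence matrix on an expander, so leverage scores are estimated by effective-resistance upper bounds obtainable from $O(\log n)$ random Johnson-Lindenstrauss projections solved via Laplacian solvers on the bucket's Laplacian (\Cref{lem:laplacian_solver}), giving the advertised $\tilde O(K' n \log W)$ runtime.

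The hard part will be the \textsc{Sample} routine: I must argue that sampling edges proportional to local $(h_u-h_v)^2$ quantities at each endpoint of each expander combines (across buckets and across endpoints) into a single global distribution whose marginal on edge $e$ is at least the required $K \cdot (g_e(h_u-h_v))^2/(16\|\mdiag(g)\mA h\|_2^2\log^8 n)$, while simultaneously controlling the total number of samples returned to be $O(K\log n)$ with high probability. This requires carefully charging the double-counting from the two endpoints, from the bucket width, from the expander decomposition overhead $\sum_q |V(E_{k,q})|=O(n\log n)$, and from the Cheeger conductance $\phi^{-2}=\polylog(n)$, which together account for the $\log^8 n$ factor in the bound. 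A second subtlety is ensuring that the randomized expander decomposition maintains its guarantees against an adaptive adversary when used inside the IPM loop, which we inherit from \cite{BernsteinBNPSS20}; an alternative deterministic construction via \cite{cglnps20} is available at the stated cost.
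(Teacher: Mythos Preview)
Your high-level architecture (weight bucketing plus dynamic expander decomposition, shifting $h$ to be orthogonal to the degree vector, then using Cheeger) matches the paper, and your treatment of \textsc{Initialize}, \textsc{Scale}, and \textsc{HeavyQuery} is essentially the paper's proof. However, there are genuine gaps in \textsc{Sample} and \textsc{LeverageScoreSample}.

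For \textsc{Sample}, you propose to ``sample incident edges with probability proportional to $(h_u-h_v)^2 g_e^2$ using a dynamic weighted sampler maintained per node.'' This cannot work within the stated time: the weights $(h_u-h_v)^2$ depend on the query vector $h$, so no per-node sampler can be \emph{maintained} across queries---you would have to rebuild it, which costs $\sum_v \deg(v)=\Theta(m)$ per call. The paper avoids this by never touching edge values. Inside each expander $G_{i,j}$ it samples every edge incident to $v$ with probability $\min\{Q\cdot 2^{2i}(h^{(i,j)}_v)^2,1\}$, a quantity depending only on the \emph{node} $v$; this lets one draw a binomial per node and then pick that many incident edges uniformly, in time proportional to the output. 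The lower bound on the per-edge marginal then comes from $(h'_u-h'_v)^2\le 2((h'_u)^2+(h'_v)^2)$ together with the Cheeger bound on the normalizer $Q$. Your paragraph about ``charging the double-counting from the two endpoints'' is on the right track, but the crucial idea---sample by node potential rather than by edge value---is missing.

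For \textsc{LeverageScoreSample}, the Johnson--Lindenstrauss plus Laplacian-solver approach you describe yields \emph{approximations} to the leverage scores, but at cost $\tilde O(m)$ per call (the solver runs on a graph with $|E_k|$ edges, summed over buckets). The guarantee in the lemma is $O(K' n\,\polylog(n)\log W)$, which is sublinear in $m$; your route cannot achieve this. The paper instead observes that, since $2^{2i}\mL^{(i,j)}\preceq \hat\mL$, one has $\sigma(\mdiag(g)\mA)_e\le 4\,\chi_e^\top(\mL^{(i,j)})^\dagger\chi_e$, and then applies Cheeger on the expander to get the purely combinatorial bound $\chi_e^\top(\mL^{(i,j)})^\dagger\chi_e\le 2\phi^{-2}\bigl(1/\deg^{(i,j)}_u+1/\deg^{(i,j)}_v\bigr)$. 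Sampling each edge incident to $v$ with probability $\min\{16K'\phi^{-2}/\deg^{(i,j)}_v,1\}$ is then again a node-level operation, requires no linear-system solves, and gives both the sample and \textsc{LeverageScoreBound} in the stated time.
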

Our data structure exploits a classic spectral property of a graph, which is captured by the following simple known variant of Cheeger's inequality  (see, e.g.~\cite{ChuGPSSW18} for its generalization).

\begin{lemma}\label{lem:cheeger_based}
	Let $\mA \in \R^{m \times n}$ be an incidence matrix of an unweighted $\phi$-expander 
	and let $\mD \in \R^{n \times n}$ the degree matrix.
	Let $\mL = \mA^\top \mA$ be the corresponding Laplacian matrix, then for any $y\in \R^n$ such that $y \bot \mD\onevec_n$, we have
	\begin{align*}
	y^\top \mL y \ge 0.5 \cdot y^\top \mD y \cdot \phi^2.
	\end{align*}
\end{lemma}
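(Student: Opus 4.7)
The plan is to reduce the stated inequality to the easy (hard-to-use-by-us, easy-to-prove) direction of the classical Cheeger inequality for the normalized Laplacian. Define $\mathcal{L} \defeq \mD^{-1/2} \mL \mD^{-1/2}$ and perform the change of variables $z \defeq \mD^{1/2} y$. A direct substitution shows
\[
\frac{y^\top \mL y}{y^\top \mD y} \;=\; \frac{z^\top \mathcal{L} z}{z^\top z},
\]
so it suffices to lower bound the right-hand side. The hypothesis $y \perp \mD \onevec_n$ rewrites as $z^\top (\mD^{1/2} \onevec_n) = y^\top \mD \onevec_n = 0$, and $\mD^{1/2} \onevec_n$ is precisely the $0$-eigenvector of $\mathcal{L}$ (for a connected graph, which any $\phi$-expander with $\phi>0$ must be). Hence the Rayleigh quotient above is at least $\lambda_2(\mathcal{L})$.

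Second, I would invoke the easy direction of Cheeger's inequality, $\lambda_2(\mathcal{L}) \ge \phi_G^2/2$, where $\phi_G$ denotes the conductance of $G$. Since the paper's definition of a $\phi$-expander (from \Cref{sec:preliminaries}) is exactly the statement $\phi_G \ge \phi$, combining with the reduction above yields
\[
y^\top \mL y \;=\; \frac{y^\top \mL y}{y^\top \mD y} \cdot y^\top \mD y \;\ge\; \frac{\phi^2}{2} \cdot y^\top \mD y,
\]
which is the claim.

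The only place where real care is needed is making sure the constant and the conductance definition line up: the $1/2$ in the statement is exactly the constant from the standard easy-direction proof (which typically proceeds by squaring the numerator of $\min_S |E(S,\bar S)| / \min\{\mathrm{vol}(S), \mathrm{vol}(\bar S)\}$ via Cauchy--Schwarz on a test vector $z$), and the paper's normalization of conductance (using $\min\{\mathrm{vol}(S), \mathrm{vol}(\bar S)\}$ rather than $\mathrm{vol}(S)\mathrm{vol}(\bar S)/\mathrm{vol}(V)$) is the standard one. So apart from citing or briefly reproducing the easy direction, there is no technical obstacle; the entire content of the lemma is the change of variables plus a named classical bound.
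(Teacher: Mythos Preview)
Your proposal is correct and follows essentially the same approach as the paper's proof: both reduce to the normalized Laplacian via the substitution $z = \mD^{1/2} y$, identify $\mD^{1/2}\onevec_n$ as the zero eigenvector, and then invoke the easy direction of Cheeger's inequality $\lambda_2(\mD^{-1/2}\mL\mD^{-1/2}) \ge \phi^2/2$.
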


\begin{proof}
Cheeger's inequality says that $\frac{\phi^2}{2} \leq \lambda_2(\md^{-1/2} \mL \md^{-1/2})$ (see e.g.~\cite{SpielmanT11}).
Since $\md^{-1/2} \mL \md^{-1/2} $ is PSD and $\md^{1/2} \vones$ is in its kernel, i.e.~it is an eigenvector of value $0$ we have
$$
\frac{\phi^2}{2}
\leq \lambda_2(\md^{-1/2} \mL \md^{-1/2})
= \min_{x \perp \md^{1/2} \vones} \frac{x^\top \md^{-1/2} \mL \md^{-1/2} x}{x^\top x}
= \min_{y \perp \md \vones} \frac{x^\top \mL x}{x^\top \md x}.
$$
\end{proof}

To see the intuition of why we can efficiently implement the \textsc{HeavyQuery} and \textsc{Sample} operations, consider for the simple case when there is no scaling (i.e. $g=\vec{1}$) and the graph $G$ is a $\phi_G$-expander. Since $(\mA h)_e = h_u-h_v$ for $e=(u,v)$, we can shift $h$ by any constant vector without changing $\mA h$, so we can shift $h$ to make $h \bot \mD\onevec_n$. For any edge $e=(u,v)$ to have $|h_u-h_v|\geq \epsilon$, at least one of $|h_u|$ and $|h_v|$ has to be at least $\epsilon/2$. Thus, it suffices to check the adjacent edges of a node $u$ only when $|h_u|$ is large, or equivalently $h^2_u\epsilon^{-2}$ is at least $1/4$. Since checking the adjacent edges of any $u$ takes time $\deg(u)$, the time over all such nodes is bounded by $\sum_u \deg(u)h_u^2\epsilon^{-2}$, which is $O(h^\top \mD h \epsilon^{-2})$. By \Cref{lem:cheeger_based} this is $O(h^\top \mL h \phi_G^{-2}\epsilon^{-2})$, and note $h^\top \mL h =  \|\mA h\|_2^2$. The intuition behind \textsc{Sample} is similar. Since we can approximate $\|\mA h\|_2^2$ (in the denominator of edge probabilities) within a factor of $\phi_G^2$ by using $h^\top \mD h$, and $(h_u-h_v)^2$ (in the numerator) is at most $2(h_u^2 + h_v^2)$, this allows us to work mostly in the $n$-dimensional node space instead of the $m$-dimensional edge space. 

The above discussion would give us the desired result when $g$ is a constant vector and $\phi_G$ is large, i.e. $\Omega(\log^{-c} n)$ for some small constant $c$. To extend the argument to the general case with scaling vector $g$ and graph of smaller conductance, we simply partition the edges of $G$ to divide it into subgraphs, where each induced subgraph has roughly the same $g$ values on its edges and large conductance. For the first part (i.e., roughly constant $g$), we can bucket the edges by their $g$ values and move edges between buckets when their $g$ values get updated. To get subgraphs with large conductance, we utilize a {\em dynamic expander decomposition} algorithm stated in \Cref{lem:dynamicExpanderDecomposition} as a blackbox to further partition the edges of each bucket into induced expanders and maintain these expanders dynamically as edges move between buckets. We use the dynamic expander decomposition described in \cite{BernsteinBNPSS20} (building on tools developed for dynamic minimum spanning tree \cite{sw19,cglnps20,nsw17,ns17,w17}, especially \cite{sw19}). 

\begin{lemma}[Dynamic expander decomposition, \cite{BernsteinBNPSS20}]
	\label{lem:dynamicExpanderDecomposition}
	For any $\phi = O(1/\log^4 n)$ there exists a dynamic algorithm against an adaptive adversary, 
	that preprocesses an unweighted graph $G$ with $m$ edges
	in $O(\phi^{-1} m \log^5 n)$ time (or $O(1)$ time, if the graph is empty). 
	The algorithm can then maintain\footnote{The algorithm lists the number of changes (i.e.~which edges are added/removed to/from any of the expanders). 
	The update time is thus also an amortized bound on the number of changes to the decomposition.} a decomposition of $G$ 
	into $\phi$-expanders $G_1,...,G_t$, 
	supporting both edge deletions and insertions to $G$ in $O(\phi^{-2} \log^6 n)$ amortized time.
	The subgraphs $(G_i)_{1 \le i \le t}$ partition the edges of $G$, 
	and we have $\sum_{i=1}^t |V(G_i)| = O(n \log^2 n)$. 
	The algorithm is randomized Monte-Carlo, i.e.~with high probability every $G_i$ is an expander.
	
\end{lemma}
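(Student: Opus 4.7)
The plan is to follow the standard level-based batched reconstruction scheme combined with the expander pruning primitive, which is the blueprint used in the dynamic MST and dynamic sparsifier line of work (Nanongkai--Saranurak, Wulff-Nilsen, Saranurak--Wang, Chuzhoy et al.). A \emph{static} $\phi$-expander decomposition in time $\tilde O(m/\phi)$ is available as a black box via the cut-matching game of Saranurak--Wang, which immediately yields the preprocessing bound $O(\phi^{-1} m \log^5 n)$ for nonempty $G$ (the empty case is trivial). The work then concentrates on the update routine, on reconciling the pruning with the output parameter $\phi$, and on the global vertex-count bound $\sum_i |V(G_i)| = O(n \log^2 n)$.

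First I would maintain an $O(\log n)$-level hierarchy. At level $\ell$ the algorithm stores a $\phi$-expander decomposition of a subgraph together with a pending-edge buffer, and has absorbed at most $2^\ell$ updates since its last rebuild. An insertion is placed into the level-$0$ buffer; whenever the buffer at level $\ell$ exceeds $2^\ell$, its contents together with the decomposition at level $\ell$ are flushed into level $\ell+1$, and the static algorithm is rerun on the merged edge set. The amortized cost follows by charging each level-$\ell$ rebuild of $m_\ell$ edges to the $\ge 2^\ell$ updates that triggered it: rebuild cost is $\tilde O(m_\ell/\phi)$, so the per-update contribution from a single level is $\tilde O(1/\phi)$, and summing over the $O(\log n)$ levels together with an extra $\phi^{-1}$ from the static algorithm's dependence yields the claimed $O(\phi^{-2}\log^6 n)$.

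The second ingredient, handling deletions, is \emph{expander pruning}: given a $\phi$-expander $H$ undergoing edge deletions, one can maintain a decreasing subset $A\subseteq V(H)$ such that $H[A]$ remains an $\Omega(\phi/\log n)$-expander while the pruned vertices $V(H)\setminus A$ have total incident volume $\tilde O(1/\phi)$ per deletion. I would apply pruning to each affected $G_i$, and reinject edges incident to newly pruned vertices into the lowest level, where they re-enter the rebuild schedule above. To guarantee the \emph{output} parameter $\phi$ after the polylogarithmic pruning loss, each static rebuild is invoked with a slightly smaller target $\phi' = \Theta(\phi\cdot\polylog n)$. Adaptive adversary correctness is obtained by freshly redrawing the static algorithm's randomness at every rebuild and exposing only the resulting edge partition; since the cut-matching game succeeds with probability $1-n^{-c}$ for any constant $c$, a union bound over the $\poly(n)$ rebuilds in any polynomial-length update sequence suffices.

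The main obstacle is the global bound $\sum_i |V(G_i)| = O(n\log^2 n)$. Naively each rebuild duplicates vertices across many expander pieces, and since a vertex can simultaneously appear in expanders at every level, the direct sum is too large. I would control this in two steps: (i) use the ``boundary-linked'' form of the static decomposition so that the number of inter-cluster edges produced at a single rebuild is $\tilde O(\phi m)$, hence the clusters of each decomposition satisfy $\sum_i |V(G_i)| \le n + \tilde O(\phi m)$; (ii) amortize vertex copies across levels by charging each appearance of a vertex $v$ at level $\ell$ to a distinct update that routed $v$ into level $\ell$, giving $O(\log^2 n)$ amortized copies of $v$ in total. A secondary subtlety is ensuring that when pruned vertices are reinjected they do not re-trigger pruning in a cascade that breaks the amortization; this is handled by always reinjecting at level $0$ and letting the standard level-rebuild schedule absorb them.
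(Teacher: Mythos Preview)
The paper does not prove this lemma; it is imported verbatim from \cite{BernsteinBNPSS20} as a black box (see the sentence preceding the lemma statement in Section~5). So there is no ``paper's own proof'' to compare against---the only content in the paper is the citation.

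Your sketch is a reasonable outline of the template actually used in that line of work (static decomposition via Saranurak--Wang, level-based batched rebuilding for insertions, expander pruning for deletions, fresh randomness per rebuild for the adaptive adversary). If you intend this as a standalone proof, two points deserve more care. First, your amortization ``per-update contribution from a single level is $\tilde O(1/\phi)$'' implicitly assumes $m_\ell = O(2^\ell)$, but in the standard scheme the edge set at level $\ell$ can be much larger than the buffer that triggered the rebuild; the correct accounting charges the rebuild to \emph{all} edges at that level, each of which has moved up one level, and uses that an edge can move up only $O(\log n)$ times. Second, your argument for $\sum_i |V(G_i)| = O(n\log^2 n)$ is the weakest part: step~(ii) as written (``charging each appearance of $v$ at level $\ell$ to a distinct update that routed $v$ into level $\ell$'') does not obviously give $O(\log^2 n)$ copies per vertex, since a vertex can be reinjected many times via pruning without a corresponding external update. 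In \cite{BernsteinBNPSS20} this bound comes instead from a careful potential argument on the total volume pruned combined with the boundary-edge bound of the static decomposition; your step~(i) gestures at this but does not close the loop with the reinjection cascade you flag at the end.
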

Now we formally prove the guarantees of our data structure.

\begin{proof}[Proof of \Cref{lem:large_entry_datastructure}]
	We describe each of the operation independently.
	\paragraph{\textsc{Initialization}.}
	Let $G=(V, E)$ be the graph corresponding to the given incidence matrix $\mA$ with weight $g_e$ of each edge $e$.
	Partition the edges into subgraphs denoted as $G_i=(V, E_i)$, where 
	\begin{align}
	E_i=\{e \mid g_e\in [2^{i}, 2^{i+1})\} \label{eq:graph weight classified}
	\end{align}
	i.e. each $G_i$ is an unweighted subgraph of $G$ consisting of edges of roughly the same $g_e$ values. Let $G_{-\infty}$ be the subgraph induced by all zero-weight edges. %
	Next, we choose $\phi = 1/\log^4 n$ and initialize the expander decomposition algorithm in \Cref{lem:dynamicExpanderDecomposition}
	on each of the $G_i$, which results in $\phi$-expanders $G_{i,1},...,G_{i,t_i}$ for each $i$. Note this can be implemented so the running time does not depend on the ratio $W$ of largest and smallest non-zero entries in $g$ if we only spend time on the non-empty $G_i$'s (and if a later update of $g$ creates a new non-empty subgraph $G_i$, we attribute the time to initialize the expander decomposition of $G_i$ to the \textsc{Scale} operation. In total we spend $O(m \log^9 n)$ time for the initialization since the average time for each edge is $O(\log^9 n)$.
	\paragraph{\textsc{Scale}.}
	Changing $g_e \leftarrow s$ means the edge weight of $e$ is changed to $s$.
	Thus we may need to delete the given edge $e$ from its current graph $G_{i}$
	and insert it into some other $G_{i'}$, 
	so that \eqref{eq:graph weight classified} is maintained.
	By \Cref{lem:dynamicExpanderDecomposition} it takes 
	$O(\log^{14} n)$ amortized time 
	to update the expander decomposition of $G_i$ and $G_{i'}$.

	\paragraph{\textsc{HeavyQuery}.} For every $i\neq -\infty$ we iterate over each of the $\phi$-expander $G_{i,j}$ and do the following. Let $m'$ and $n'$ denote the number of edges and nodes in $G_{i,j}$ respectively.  
	Let $\mB$ be the incidence matrix of $G_{i,j}$; thus rows and columns of $\mB$ correspond to edges and nodes in $G_{i,j}$, respectively. 
	To simplify notations, we pretend rows of $\mb$ are in $\R^n$ instead of $\R^{n'}$ (i.e. keep a ghost column even for nodes in $G$ but not in $G_{i,j}$).\footnote{\label{foot:query:assumption}Without this assumption, $\mB$ might not contain all column in $\mA$ and we need to define vector $\hat{h}$ which contains $|V(G_{i,j})|$ entries of $h$ corresponding to nodes in $G_{i,j}$, which introduce unnecessary notations. However, it is crucial to actually work with $n'$-dimensional vectors for efficiency.}
	Note that $G_{i,j}$ is unweighted and each row of $\mB$ corresponds to an edge in $G_{i,j}$ and also appears as a row in $\mA\in \R^{m\times n}$; thus $\mB\in \R^{m'\times n}$ for some $m'\leq m$. 
	All $m'$-row/column (respectively $n$-row/column) matrices here have each row/column correspond to an edge (respectively node) in $G_{i,j}$, so we index them using edges $e$ and nodes $v$ in $G_{i,j}$; for example, we refer to a row of $\mB$ as $\mB_e$ and an entry of $h\in \R^n$ as $h_v$. 
	To answer the query (finding all $e$ such that $|(\mdiag(g)\mA h)_e|\geq \epsilon$), it suffices to find all edges $e$ in $G_{i,j}$ such that 
	\begin{align}
	|(\mdiag(g)\mB h)_e|\geq \epsilon \label{eq:query:original goal}
	\end{align}
	because rows of $\mB$ is a subset of rows of $\mA$. Finding $e$ satisfying \eqref{eq:query:original goal} is done as follows.\\
	{\em Step 1:} Shift $h$ so it is orthogonal to the degree vector of $G_{i,j}$: 
	\begin{align}
	h' \leftarrow h - 
	\onevec_n \cdot (\onevec_n^\top \mD h) / (\onevec_n^\top \mD \onevec_n),
	\label{eq:query:modify_h}
	\end{align} 
	where $\mD \in \R^{n \times n}$ is the diagonal degree matrix with $\mD_{v,v} = \deg_{G_{i,j}}(v)$.\\
	{\em Step 2:} Let $\delta=\epsilon/2^{i+1}$, and find all $e$ with 
	\begin{align}
	|(\mB h')_e|\geq \delta.\label{eq:query:new goal}
	\end{align}
	as follows.
	For every node $v$ with $|h'_v| \geq 0.5 \delta$,
	find all edges $e$ incident to $v$ that satisfies \eqref{eq:query:new goal}. 
	Among these edges, return those satisfying \eqref{eq:query:original goal}.   
	
	\smallskip\noindent{\em Correctness:} 
	To see we correctly return all edges satisfying \eqref{eq:query:original goal}, let $e$ be any such edge, then we have
	\begin{align*}
	|(\mB h')_e| &= |(\mB h)_e| &\mbox{(since $\mB\onevec_n = \zerovec_m$)}\\
	&\geq \epsilon/g_e &\mbox{(since $|(\mdiag(g)\mB h)_e|=g_e |(\mB h)_e| $)}\\
	& \ge \epsilon/2^{i+1}=\delta &\mbox{(since $g_e\leq 2^{i+1}$ for every edge $e$ in $G_{i,j}$).}
	\end{align*}
	Thus, $e$ must satisfy \eqref{eq:query:new goal}. 
	So if our algorithm discovers all edges satisfying \eqref{eq:query:new goal}, 
	we will find all edges satisfying \eqref{eq:query:original goal} as desired.
	
	It is left to show that our algorithm actually discovers all edges satisfying \eqref{eq:query:new goal}. 
	Note that an edge $e=(u,v)$ satisfies \eqref{eq:query:new goal} 
	only if $|h'(u)|\geq 0.5\delta$ or $|h'(v)|\geq 0.5\delta$: 
	if $|h'(u)|< 0.5\delta$ and $|h'(v)|< 0.5\delta$, 
	then  $|(\mB h')_e|=|h'(v)-h'(u)|\leq |h'(v)|+|h'(u)|< \delta$. 
	Since in step 2 we consider edges incident to every node $u$ 
	such that $|h'(u)|>0.5\delta$, 
	the algorithm discovers all edges satisfying \eqref{eq:query:new goal}.

	\smallskip\noindent{\em Time Complexity:} 
	Step 1 (computing $h'$) can be easily implemented to take $O(|V(G_{i,j})|)$ (see \Cref{foot:query:assumption}).
	For step 2 (finding $e$ satisfying  \eqref{eq:query:new goal}), 
	let $\hat{V}=\{v\in V(G_{i,j})\mid |h'_v| \geq 0.5 \delta\}$, 
	then the time we spend in step 2 is bounded by
	\begin{align}\label{eq:query:time complexity}
	O(\sum_{v \in \hat{V}} \deg_{G_{i,j}}(v))
	= & ~ O(\sum_{v \in \hat{V}} \deg_{G_{i,j}}(v) (h'_v)^2 / \delta^2 ) \notag \\
	\leq  & ~ O\left(\sum_{v \in V(G_{i,j})} \deg_{G_{i,j}}(v) (h'_v)^2 / \delta^2 \right) \notag \\
	= & ~ O((h')^\top \mD h' / \delta^2 ). 
	\end{align}
	Above, the first equality is because 
	$(h'_v)^2/\delta^2\geq 0.25$ for every $v\in \hat{V}$. 
	To bound this time complexity further,  observe that 
	\begin{align}
	h' \bot \mD \onevec_n \mbox{ and } \|\mB h\|_2^2=\|\mB h'\|_2^2,\label{eq:h' properties}
	\end{align}
	where the latter is because each row of the incidence matrix $\mB$ sums to zero, so $\mB h$ remains the same under constant shift.
	By \Cref{lem:cheeger_based}, %
	$$\|\mB h\|_2^2 = \|\mB h'\|_2^2= (\mB h')^\top (\mB h') 
	= (h')^\top \mL h' %
	\stackrel{\text{Lem~\labelcref{lem:cheeger_based}}}{\ge} (h')^\top \mD h' \phi^2,$$
	where here $\mL=\mB^\top \mB$ is the Laplacian matrix of $G_{i,j}$. 
	Thus, the time complexity in \eqref{eq:query:time complexity} can be bounded by
	$$O( \|\mB h\|_2^2 / (\phi^{2} \delta^2) ) = O(\|\mB h\|_2^2 \phi^{-2} 2^{2i} / \epsilon^2).$$
	Summing the time complexity over all maintained expanders $G_{i,j}$ 
	with $\mB_{i,j}$ being the relevant incidence matrix $\mB$%
	\footnote{Without the assumption discussed in \Cref{foot:query:assumption}, 
		we also need to define $h_{i,j}$ to be the corresponding $\hat{h}$ here as well. 
		Again, this does not change our calculation below.}, 
	the total time for answering a query is
	\begin{align*}
	& O\left( \sum_{i,j} \left(\underbrace{\left|V(G_{i,j})\right|}_{\text{step 1}} + \underbrace{\|\mB_{i,j} h\|_2^2 \phi^{-2} 2^{2i} / \epsilon^2}_{\text{step 2}}\right)\right)\\
	&= O\left(\left(\sum_{i,j}|V(G_{i,j})|\right)+\left(\sum_i \sum_{(u,v)\in E_i} \left(h(u)-h(v)\right)^22^{2i} \phi^{-2} \epsilon^{-2}  \right)\right)\\
	&=O\left(\left(\sum_{i,j}|V(G_{i,j})|\right)+\left(\sum_i \sum_{(u,v)\in E_i} g_{(u,v)}^2\left(h(u)-h(v)\right)^2 \phi^{-2} \epsilon^{-2}  \right)\right)\\
	&=O\left( n \log^2 n \log W + \|\mdiag(g) \mA h\|_2^2 \epsilon^{-2} \log^8 n\right).
	\end{align*}
	The second equality is because  $g_{(u,v)}\geq 2^i$ for all $(u,v)\in E_i$ 
	(by \eqref{eq:graph weight classified}). 
	The third is because $\sum_{j} |V(G_{i,j})| = O(n \log^2 n)$ for every $i$ 
	(by \Cref{lem:dynamicExpanderDecomposition}), and also because the $E_i$'s partition the edges in the original graph.

	\paragraph{\textsc{Sample}.}
	Again, let $G_{i,j}$ be an expander of the maintained decomposition
	and denote $\mD^{(i,j)}$ its diagonal degree matrix.
	Then we define $h^{(i,j)}$ as in \eqref{eq:query:modify_h} for each $G_{i,j}$
	which satisfies $(h_v - h_u)^2 = (h^{(i,j)}_v - h^{(i,j)}_u)^2$ and $h^{(i,j)} \bot \mD^{(i,j)} \onevec$.
	
	Next, let 
	$$
	Q = \frac{K}{
		\sum_{i,j} 2^{2i} \sum_{v \in V(G_{i,j})} 
		(h^{(i,j)}_v)^2 \deg_{G_{i,j}}(v)
	}
	$$ 
	then we perform the following procedure for each expander $G_{i,j}$ of our decomposition. For each node $v$ in $G_{i,j}$, we sample each edge incident to $v$ with probability
	$$
	\min\{ Q \cdot 2^{2i} (h^{(i,j)}_v)^2, 1\}.
	$$
	If an edge $(u,v)$ is sampled twice (i.e. once for $u$ and once for $v$), 
	then it is included only once in the output.
	
	Computing all $h^{(i,j)}$'s and $Q$ takes $O(n \log^2 n \log W)$ time 
	and the sampling of edges can be implemented 
	such that the complexity is bounded by the number of included edges.
	For example by first sampling a binomial for each node 
	and then picking the corresponding number of incident edges uniformly at random.
	
	The expected number of included edges can be bounded by
	\begin{align*}
	\sum_{i,j} \sum_{(u,v) \in E(G_{i,j})} Q \cdot 2^{2i} ((h^{(i,j)}_v)^2 + (h^{(i,j)}_u)^2)
	=
	2Q \sum_{i,j}  2^{2i} \sum_{u} \deg_{G_{i,j}}(u) (h^{(i,j)}_u)^2
	=
	2K
	\end{align*}
	Thus the expected runtime is $O(K + n \log^2 n \log W)$,
	and for a w.h.p bound this increases to $O(K\log n + n \log^2 n \log W)$. %
	
	We are left with proving the claim, i.e., each edge $(u,v)$ is sampled with probability at least
	$$
	\min \left\{ K \cdot \frac{g_{(u,v)}^2 (h_u - h_v)^2}{16 \|\mdiag(g) \mA h \|_2^2 \log^8 n},1 \right\}.
	$$
	If $Q \ge 2^{2i} (h^{(i,j)}_v)^{-2}$ or $Q \ge 2^{2i} (h^{(i,j)}_u)^{-2}$, 
	then this is clear since then the sampling probability is just $1$.
	So we consider the case of $Q \le \min(2^{2i} (h^{(i,j)}_v)^{-2},2^{2i} (h^{(i,j)}_u)^{-2})$.
	Then the probability is
	\begin{align*}
	q_{(u,v)} = &~
	Q \cdot 2^{2i} \left((h^{(i,j)}_v)^2 + (h^{(i,j)}_u)^2\right) - Q^2 2^{4i} (h^{(i,j)}_v)^2 (h^{(i,j)}_u)^2 \\
	\ge&~
	Q \cdot 2^{2i} \left((h^{(i,j)}_v)^2 + (h^{(i,j)}_u)^2 - |(h^{(i,j)}_v) (h^{(i,j)}_u)|\right) \\
	\ge&~
	0.5 Q \cdot 2^{2i} \left((h^{(i,j)}_v)^2 + (h^{(i,j)}_u)^2 \right) \\
	\ge&~
	0.25 Q \cdot 2^{2i} (h^{(i,j)}_v - h^{(i,j)}_u)^2 \\
	\ge&~
	0.25 Q \cdot g_{u,v}^2 (h_v - h_u)^2
	\end{align*}
	The first inequality is due to $Q \le 2^{2i} (h^{(i,j)}_v)^{-2}$ and $Q\le2^{2i} (h^{(i,j)}_u)^{-2}$.
	
	Further, we can bound $Q$ as
	\begin{align*}
	Q 
	= 
	\frac{K}{\sum_{i,j} 2^{2i} \sum_{v \in V(G_{i,j})} (h^{(i,j)}_v)^2 \deg_{G_{i,j}}(u)}
	\ge
	\frac{K\phi^2}{4\|\mdiag(g) \mA h\|_2^2},
	\end{align*}
	by applying the Cheeger-inequality to each $\sum_{v \in V(G_{i,j})} (h^{(i,j)}_v)^2 \deg_{G_{i,j}}(u)$.
	In summary we obtain that any edge $(u,v)$ is included with probability at least
	$$
	K \frac{g_{u,v}^2 (h_v - h_u)^2}{16 \|\mdiag(g) \mA h\|_2^2 \log^8 n}.
	$$
	\paragraph{\textsc{Probability}.}
	As discussed in the analysis of \textsc{Sample}, we can compute $Q$ and all $h^{(i,j)}$'s in $O(n\log^2n\log W)$ time, then for each edge $e\in I$, we can look up the subgraph $G_{i,j}$ it belongs to, and compute its probability $q_e$ used in \textsc{Sample} in $O(1)$ time, which takes $O(|I|)$ time in total for all edges in $I$.
	\paragraph{\textsc{LeverageScoreSample}.}
	We will sample the edges in each expander $G_{i,j}$ separately. Let $\mL^{(i,j)}$ denote the Laplacian of the unweighted subgraph $G_{i,j}$, and we consider it as in the original dimension $n\times n $. Since $g_e\in [2^i,2^{i+1}]$ if $e$ is an edge in $G_{i,j}$, we have \[
	\sum_{(i,j)} 2^{2i} \mL^{(i,j)}\preceq \hat{\mL} \preceq	\sum_{(i,j)} 2^{2i+2} \mL^{(i,j)}
	\]
	where $\hat{\mL}=\mA^\top \mdiag(g)^2 \mA$ is the weighted Laplacian of the entire graph $G$ with weights $g_e^2$ on each edge $e$. Since $\mL^{(i,j)}\succeq 0$ for all $i,j$, we know $2^{2i} \mL^{(i,j)}\preceq \hat{\mL}$ for all $i,j$, which gives $\hat{\mL}^\dagger \preceq 2^{-2i} (\mL^{(i,j)})^\dagger$ since the null space of $\mL^{(i,j)}$ contains the null space of $\hat{\mL}$. As a result, for an edge $e=(u,v)$ in the expander $G_{i,j}$, we know
	\[
	\sigma(\mdiag(g)\mA)_e = g_e^2 \chi_e^\top \hat{L}^\dagger \chi_e \leq g_e^2 2^{-2i}\chi_e^\top (\mL^{(i,j)})^\dagger \chi_e^\top \leq 4 \chi_e^\top (\mL^{(i,j)})^\dagger \chi_e,
	\]
	where $\chi_e$ is the row of $A$ corresponding to $e$, so it has only two non-zero entries, $1$ at $u$ and $-1$ at $v$. Thus, it suffices to have
	\[
	p_e\geq 4K' \chi_e^\top (\mL^{(i,j)})^\dagger \chi_e.
	\]
	As $G_{i,j}$ is a $\phi$-expander for $\phi=1/\log^4n$, 
	using Cheeger's inequality (i.e.~$\lambda_2(\mL^{(i,j)}) \ge \phi^2/2$) 
	we can derive %
	that for all $y\bot \onevec_n$, $y^\top (\mL^{(i,j)})^\dagger y \leq 2\phi^{-2}y^\top (\mD^{(i,j)})^{-1} y$, 
	where $\mD^{(i,j)}$ is the diagonal degree matrix of $G_{i,j}$. Thus, we know
	\[
		\chi_e^\top (\mL^{(i,j)})^\dagger \chi_e \leq 2\phi^{-2} \left(\frac{1}{\deg^{(i,j)}_u}+\frac{1}{\deg^{(i,j)}_v} \right).
	\]
	Similar to how we implement \textsc{Sample}, we can go through each node $v$ (with non-zero degree) in $G_{i,j}$ and sample each edge incident to $v$ with probability
	\[
	p_v = \min\left\{\frac{16K'\phi^{-2}}{\deg^{(i,j)}_u},1\right\},
	\]
	and only include an edge once if it is sampled twice at both endpoints. Again this can be implemented so that the complexity is bounded by the number of included edges, and the expected number of sampled edges can be bounded by
	\begin{align*}
	\sum_{i,j}\sum_{(u,v)\in E(G_{i,j})} \frac{16K'\phi^{-2}}{\deg^{(i,j)}_u} + \frac{16K'\phi^{-2}}{\deg^{(i,j)}_v} & = \sum_{i,j}\sum_u \deg^{(i,j)}_u\cdot \frac{16K'\phi^{-2}}{\deg^{(i,j)}_u} \\
	&= 16K'\phi^{-2}\sum_{i,j} |V(G_{i,j})| \leq O(K'n \log^{10} n \log W),
	\end{align*}
	and we can include additional $\log n$ factor to make the bound hold with high probability. To see the probability that an edge is sampled satisfies the requirement, note
	\begin{align*}
	&p_e \geq \max{p_u,p_v} \geq \frac{p_u+p_v}{2}\\
	&\geq \min\left\{8K'\phi^{-2}(\frac{1}{\deg^{(i,j)}_u}+\frac{1}{\deg^{(i,j)}_v}),1\right\}\\
	&\geq 4K'\chi_e (\mL^{(i,j)})^\dagger \chi_e \geq \sigma(\mdiag(g)\mA)_e 
	\end{align*}
	\paragraph{\textsc{LeverageScoreBound}.} We can get the exact probability an edge $e$ is sampled in the previous implementation of \textsc{LeverageScoreSample}. Suppose $e=(u,v)$ is in subgraph $G_{i,j}$, we can compute in constant time
	\[
	p_u = \min\left\{\frac{16K'\phi^{-2}}{\deg^{(i,j)}_u},1\right\}, p_v=\min\left\{\frac{16K'\phi^{-2}}{\deg^{(i,j)}_v},1\right\},
	\] and $p_e = p_u+p_v - p_up_v$. Going through all edges in $I$ takes $O(|I|)$ time.
\end{proof}

\newpage

\section{Dual Solution Maintenance}
\label{sec:vector_maintenance}
In this section we discuss how we efficiently maintain an approximation of the dual slack (i.e. $s$) in our IPM.
Recall that our dual slack vector starts with an initial value $s^{\init}$, 
and in each iteration $t$ accumulates an update with the generic form of $\mA\delta^{(t)}$ 
where $\mA \in \R^{m \times n}$ is the edge incidence matrix.
The actual computation $\delta^{(t)}$ relies on other data structures 
such as gradient maintenance,
which will be discussed in later sections. 
In this section, we focus on maintaining an approximation of the dual slack assuming $\delta^{(t)}$ is given to us.

\begin{theorem}[\VecMaintainer]
\label{thm:vector_maintenance}
There exists a data-structure that supports the following operations%
\begin{itemize}
\item \textsc{Initialize($\mA\in\R^{m\times n}, v^{\init}\in \R^m, \epsilon>0$)}
The data-structure stores the given edge incidence matrix $\mA \in \R^{m \times n}$,
the vector $v^{\init} \in \R^m$
and accuracy parameter $0 < \epsilon \le 1$
in $\tilde{O}(m)$ time.
\item \textsc{Add($h\in\R^n$)}:
Suppose this is the $t$-th time the {\sc Add} operations is called, 
and let $h^{(k)}$ be the vector $h$ given when the {\sc Add} operation is called for the $k^{th}$ time. 
Define $v^{(t)}\in\R^m$ to be the vector
$$
v^{(t)} = 
v^{\init} + \mA \sum_{k=1}^t h^{(k)}.
$$
Then the data structure returns a vector $\ov^{(t)} \in \R^m$ such that
$
\ov^{(t)} \approx_{\epsilon} v^{(t)}.
$
The output will be in a compact representation to reduce the size. 
In particular, the data-structure returns a pointer to $\ov$ 
and a set $I \subset [m]$ of indices $i$
where $\ov^{(t)}_i$ is changed compared to $\ov^{(t-1)}_i$, i.e., the result of the previous call to \textsc{Add}. 
The total time after $T$ calls to \textsc{Add} is
$$
\tilde{O}\left(
T n \log W + T\epsilon^{-2}\cdot\sum_{t=1}^T \| (v^{(t)}-v^{(t-1)})/v^{(t)}\|_2^2 
\right).
$$
\item \textsc{ComputeExact()}: 
Returns $v^{(t)}\in \R^m$ in $O(m)$ time, 
where $t$ is the number of times \textsc{Add} is called so far 
(i.e., $v^{(t)}$ is the state of the exact vector $v$ after the most recent call to \textsc{Add}).

\end{itemize}
\end{theorem}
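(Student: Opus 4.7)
The plan is to reduce the problem to repeated invocations of the \ProjHeavyHitter data structure from \Cref{lem:large_entry_datastructure} and to organize the queries in a multi-scale schedule so that both slowly-accumulating and single-step drifts are detected efficiently. First, \textsc{ComputeExact} and the bookkeeping are trivial: we explicitly maintain the cumulative potential $H^{(t)} \defeq \sum_{k=1}^t h^{(k)}$ in $O(n)$ time per call to \textsc{Add}, and in \textsc{ComputeExact} we simply return $v^\init + \mA H^{(t)}$, which takes $O(m)$ time. The only nontrivial task is, at each call to \textsc{Add}, to identify a set $I\subseteq[m]$ of entries where the currently stored approximation $\ov^{(t-1)}$ has drifted out of the $\approx_\epsilon$ window around $v^{(t)}$, compute $v^{(t)}_i$ exactly for each $i\in I$ (this is $O(1)$ per index since $\mA$ is an incidence matrix and $H^{(t)}$ is stored), and overwrite $\ov$ on $I$.

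For the detection, I will keep an instance of \ProjHeavyHitter on the matrix $\mA$ whose row scaling $g$ is tied to the current approximation via $g_i = 1/\ov_i$ (updated by \textsc{Scale} whenever we refresh an entry of $\ov$). Crucially, the accumulated error since entry $i$ was last refreshed is $v^{(t)}_i - \ov^{(t-1)}_i = (\mA \Delta)_i$ where $\Delta$ is a sum of $h$-vectors over a contiguous block of iterations, so issuing \textsc{HeavyQuery}$(\Delta,\epsilon)$ returns exactly the indices where the relative drift over that block exceeds $\Theta(\epsilon)$. To cover entries whose drift accrues only after many iterations, I will run $O(\log T)$ parallel schedules indexed by $k = 0, 1, \ldots, \lceil \log_2 T\rceil$: schedule $k$ triggers every $2^k$ iterations and issues a \textsc{HeavyQuery} on the $h$-sum over the preceding $2^k$ iterations, with tolerance $\epsilon' = \Theta(\epsilon/\log T)$. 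A standard dyadic argument then shows that whenever $\ov_i$ would actually drift out of the $\approx_\epsilon$ window, the drift will be detected within at most one invocation of one of the schedules, so the invariant $\ov^{(t)} \approx_\epsilon v^{(t)}$ holds after \textsc{Add}.

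The runtime follows by summing the cost bound of \textsc{HeavyQuery} over all invocations: each query pays the $O(n\log^2 n\log W)$ additive term, and pays $\tilde O(\epsilon^{-2}\|(v^{(t')} - v^{(t)})/\ov\|_2^2)$ for the block $(t,t']$ it inspects, which is within an $O(1)$ factor of $\epsilon^{-2}\|(v^{(t')}-v^{(t)})/v^{(t')}\|_2^2$ by the multiplicative-approximation guarantee on $\ov$. Telescoping over the $O(\log T)$ schedules and $T$ iterations yields the advertised bound. I expect the main obstacle to be the telescoping step for the ``slow drift'' detection: a naive triangle-inequality over a long block would give $\|\sum_k (\mA h^{(k)})/\ov\|_2^2$, but the returned edge set must be charged so that the total cost scales with $\sum_t \|(v^{(t)}-v^{(t-1)})/v^{(t)}\|_2^2$ rather than with cross terms. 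The standard way around this is to notice that each actual ``refresh'' of an entry is amortized against the per-step drift that caused it, so that the per-schedule cost at scale $2^k$ is bounded by $2^k$ times the average per-step squared drift in that block, and summing over scales yields only an $O(\log T)$ overhead. The $\textsc{Scale}$ cost for updating $g$ on the returned indices is absorbed into the $\polylog$ factors hidden in $\tilde O(\cdot)$.
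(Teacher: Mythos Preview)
Your approach is essentially the paper's: maintain the running sum $H^{(t)}$, keep $O(\log T)$ copies of \ProjHeavyHitter with scaling $g_i=1/\ov_i$ and threshold $\Theta(\epsilon/\log T)$, and query copy $k$ every $2^k$ iterations on the accumulated $h$-sum over that block. The correctness argument via a dyadic decomposition of the interval since the last refresh is exactly the paper's \Cref{lem:transform_t}. So at the structural level you have the right proof.

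There is, however, one genuine missing ingredient in the complexity argument. You write that the query cost on a block $(t,t']$ is $\tilde O(\epsilon^{-2}\|(v^{(t')}-v^{(t)})/\ov\|_2^2)$ and then bound this by Cauchy--Schwarz as $2^k$ times the per-step sum $\sum_{s}\|(v^{(s)}-v^{(s-1)})/v^{(s)}\|_2^2$. But Cauchy--Schwarz only gives you $2^k\sum_{s}\|(v^{(s)}-v^{(s-1)})/\ov\|_2^2$, and replacing $\ov$ by $v^{(s)}$ coordinatewise requires $\ov_i\approx v^{(s)}_i$ for \emph{every} $s$ in the block. This fails precisely for indices $i$ whose $\ov_i$ was refreshed mid-block: for such $i$ the current $\ov_i$ reflects $v_i$ near the end of the block and can be off by an unbounded factor from $v^{(s)}_i$ near the beginning (e.g.\ if $v_i$ halves every step, the level-$2^k$ query cost on coordinate $i$ explodes). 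Your amortization remark about refreshes being charged to per-step drift does not address this denominator mismatch.

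The paper's fix is simple but essential: whenever $\ov_i$ is refreshed, set $g_i\gets 0$ in \emph{every} $D_j$ (so $i$ stops contributing to the cost of the current batch of $D_j$), record $i$ in a set $F_j$, and at the next batch boundary of $D_j$ restore $g_i\gets 1/\ov_i$ and re-verify all $i\in F_j$. With this, every index contributing to the $D_j$ query has $\ov_i$ unchanged over the entire block, so $\ov_i\approx v^{(s)}_i$ for all $s$ in the block and your Cauchy--Schwarz step goes through. Once you add this bookkeeping, the rest of your argument matches the paper's proof.
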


\begin{algorithm2e}
\caption{\label{alg:vector_maintenance}Algorithm for \Cref{thm:vector_maintenance}}
\SetKwProg{Members}{members}{}{}
\SetKwProg{Proc}{procedure}{}{}
\Members{}{
$D,\hat{f}, \ov\in \R^m, t\in \N$\\
$D_j,f^{(j)} \in \R^n$ and $F_j\subset [m]$ for $0\le j\le \log n$\\
}
\Proc{\textsc{Initialize}$(\mA, v^{\init}, \epsilon)$}{
	$\ov \leftarrow v^{\init}$, $\hat{f} \leftarrow \zerovec_n$,
	$t \leftarrow 0$\\
	\For{$j=0,...,\log n$}{
		$D_j.\textsc{Initialize}(\mA,1/\ov,0.2 \epsilon/\log n)$ (\ProjHeavyHitter, \Cref{lem:large_entry_datastructure})\\
		$f^{(j)} \leftarrow \zerovec_n$,
		$F_j\leftarrow \emptyset$\\
	}
}
\Proc{\textsc{FindIndices}$(h\in\R^n)$}{
	$I \leftarrow \emptyset$\\
	\For{$j=\log n,...,0$}{
		$f^{(j)} \leftarrow f^{(j)} + h$ 
		\Comment{When $2^j | t$, then $f^{(j)} = \sum_{k=t-2^j+1}^t h^{(k)}$}\\
		\If{$2^j | t$}{
			$I \leftarrow I \cup D_j.\textsc{HeavyQuery}(f^{(j)})$ \label{line:detect_changes}\\
			$f^{(j)} \leftarrow \zerovec_n$
		}
	}
	\Return $I$
}
\Proc{\textsc{VerifyIndex}$(i)$}{
	\If{$|\ov_i - (v^\init + \mA \hat{f})_i| \ge 0.2\epsilon \ov_i/\log n$}{ \label{line:check_change}
		$\ov_i \leftarrow (v^\init + \mA \hat{f})_i$ \label{line:set_vi}\\
		\For{$j=0,...,\log n$}{
			$F_j \leftarrow F_j \cup \{ i \}$, 
			$D_j.\textsc{Scale}(i, 0)$ \label{line:reinsert_i} \Comment{Notify other $D_j$'s to stop tracking $i$.}
		}
		\Return True
	}
	\Return False
}
\Proc{\textsc{Add}$(h\in\R^n)$}{
	$t \leftarrow t + 1$, $\hat{f} \leftarrow \hat{f} + h$,
	$I \leftarrow \textsc{FindIndices}(h)$ \label{line:findIndices}\\
	$I \leftarrow \left\{i|i\in I \mbox{ and }\textsc{VerifyIndex}(i)=True\right\}$ \label{line:verify_I}\\
	\lFor{$j:2^j | t$}{
		$I \leftarrow I \cup \left\{i|i\in F_j \mbox{ and }\textsc{VerifyIndex}(i)=True\right\}$ \label{line:verify_F}
	}
	\For{$j:2^j | t$}{
		\For{$i\in I \cup F_j$}{
			$D_j.\textsc{Scale}(i, 1 / \ov_i)$ \label{line:reweight}
		}
		$F_j \leftarrow \emptyset$ \label{line:empty_F}\\
	}
	\Return $I$, $\ov$
}
\Proc{\textsc{ComputeExact}$()$}{
	\Return $v^\init + \mA \hat{f}$
}
\end{algorithm2e}

From the graph-algorithmic perspective, our data structure maintains a flow\footnote{Here, a flow refers any assignment of a real value to each edge.} $v\in \R^E$ on an unweighted oriented graph $G=(V,E)$ (initially $v=v^{\init}$).
This flow is changed by the \textsc{Add}($h$) operation, whose input is the vertex potential $h\in \R^E$ inducing a new electrical flow to be augmented; in particular, the new flow on each edge $(i,j)$ is $v_{(i,j)}\leftarrow v_{(i,j)}+h_j-h_i$.  The \textsc{Add} operation then returns necessary information so that the user can maintain an approximation of flow $v$. The user can query for the exact value of $v$ by calling \textsc{ComputeExact()}. Note that the graph $G$ never changes; though, our data structured can be easily modified to allow $G$ to have edge resistances that may change over time.

To make sense of the running time for the \textsc{Add} operations, note that reading the input over $T$ \textsc{Add} calls needs $Tn$ time, and in total there can be as many as $T\epsilon^{-2}\cdot\sum_{t=1}^T \| (v^{(t)}-v^{(t-1)})/v^{(t)}\|_2^2 $ changes to the output vector we need to perform to guarantee the approximation bound. Thus, ignoring the $\log W$ factor, the complexity in our theorem is optimal as it is just the bound on input and output size.

Throughout this section we denote $h^{(t)}$ the input vector $h$ 
of the $t$-th call to \textsc{Add} (or equivalently referred to as the $t$-th iteration),
and let $v^{(t)} = v^\init + \mA \sum_{k=1}^t h^{(k)}$ be the state of the exact solution $v$ 
(as defined in \Cref{thm:vector_maintenance}) 
for the $t$-th call to \textsc{Add}.

In our algorithm (see \Cref{alg:vector_maintenance}) we maintain a vector $\hat{f}$ 
which is the sum of all past input vectors $h$, 
so we can retrieve the exact value of $v^{(t)}_i = v^\init_i + (\mA \hat{f})_i$ for any $i$ efficiently.
This value is computed and assigned to $\ov_i$ whenever the approximation $\ov$ that we maintain no longer satisfies the error guarantee for some coordinate $i$. 
As to how we detect when this may happen, we know the difference between $v^{(t)}$ and the state of $v$ at an earlier $t'$-th \textsc{Add} call is
\[
v^{(t)} - v^{(t')} = \mA \left(\sum_{k=t'+1}^t h^{(t)}\right),
\]
and thus we can detect all coordinates $i$ 
that changes above certain threshold from $t'$ to $t$-th \textsc{Add} call 
using the \ProjHeavyHitter data structure of \Cref{lem:large_entry_datastructure} 
(by querying it with $\sum_{k=t'+1}^t h^{(t)}$ as the parameter $h$). 
Note since the error guarantee we want is multiplicative in $\ov$ 
(i.e., $\ov^{(t)}_i \in 1\pm \epsilon v^{(t)}_i$ for all $i$), 
while the threshold $\epsilon$ in \Cref{lem:large_entry_datastructure} is absolute and uniform, 
we give $1/\ov$ as the scaling vector to \ProjHeavyHitter to accommodate this. 

Since the most recent updates on $\ov_i$ for different indices $i$'s happen at different iterations, 
we need to track accumulated changes to $v_i$'s over different intervals 
to detect the next time an update is necessary for each $i$. 
Thus, it is not sufficient to just have one copy of \ProjHeavyHitter. 
On the other hand, keeping one individual copy of \ProjHeavyHitter for each $0 \le t' < t$ will be too costly in terms of running time. 
We handle this by instantiating $\log n$ copies of the \ProjHeavyHitter data structure $D_j$ for $j=0,...,\log n$, 
each taking charge of batches with increasing number of iterations. 
In particular, the purpose of $D_j$ is to detect all coordinates $i$ in $v$ 
with large accumulated change over batches of $2^j$ iterations 
(see how we update and reset $f^{(j)}$ in \textsc{FindIndices} in \Cref{alg:vector_maintenance}). 
Each $D_j$ has its local copy of a scaling vector, which is initialized to be $1/\ov$, 
we refer to it as $\hat{g}^{(j)}$, and the cost to query $D_j$ is proportional to $\|\mdiag(\hat{g}^{(j)}) \mA f^{(j)}\|_2^2$. 
Note $f^{(j)}$ accumulates updates over $2^j$ iterations, 
and $\|\sum_{k=1}^{2^j} h^{(k)}\|_2^2$ can be as large as $2^j\sum_{k=1}^{2^j} \|h^{(k)}\|_2^2$. 
Since we want to bound the cost of our data structure by the sum of the squares of updates 
(which can in turn be bounded by our IPM method) 
instead of the square of the sum of updates, 
querying $D_j$ incurs an additional $2^j$ factor overhead. 
Thus for efficiency purposes, if $v_i$ would take much less than $2^j$ iterations to accumulate a large enough change,
we can safely let $D_j$ stop tracking $i$ during its current batch 
since $v_i$'s change would have been detected by a $D_{j'}$ of appropriate (and much smaller) $j'$ 
so that $\ov_i$ would have been updated to be the exact value 
(see implementation of \textsc{VerifyIndex}). 
Technically, we keep a set $F_j$ to store all indices $i$ that $D_j$ stops tracking for its current batch of iterations and set $\hat{g}^{(j)}_i$ to $0$ so we don't pay for coordinate $i$ when we query $D_j$. 
At the start of a new batch of $2^j$ iterations for $D_j$, we add back all indices in $F_j$ to $D_j$ (\Cref{line:reweight}) and reset $F_j$. 
As a result, only those $i$'s that indeed would take (close to) $2^j$ iterations 
to accumulate a large enough change are necessary to be tracked by $D_j$, 
so we can query $D_j$ less often for large $j$ to offset its large cost. 
In particular, we query each $D_j$ every $2^j$ iterations (see \Cref{line:detect_changes}).

We start our formal analysis with the following lemma.

\begin{lemma}\label{lem:findIndices}
Suppose we perform the $t$-th call to \textsc{Add}, 
and let $\ell$ be the largest integer with $2^{\ell}|t$ (i.e. $2^{\ell}$ divides $t$). %
 Then the call to \textsc{FindIndices} in \Cref{line:findIndices}
returns a set $I \subset [m]$ containing all $i \in [m]$ 
such that there exists some $0\le j\le\ell$ satisfying both
$i \notin F_j$ and 
$|v^{(t-2^j)}_i - v^{(t)}_i| \ge 0.2 |\ov|_i \epsilon / \log n$.
\end{lemma}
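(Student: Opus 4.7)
The plan is to establish three invariants about the state of the data structures $D_j$ and the auxiliary vectors $f^{(j)}, F_j$ just before \textsc{FindIndices} queries $D_j$, and then invoke the \textsc{HeavyQuery} guarantee of \Cref{lem:large_entry_datastructure} to conclude.

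First, I would argue by induction on $t$ that the comment in \Cref{alg:vector_maintenance} is correct: after the update $f^{(j)} \leftarrow f^{(j)} + h^{(t)}$ inside the call to \textsc{FindIndices} at iteration $t$, whenever $2^j \mid t$ the vector $f^{(j)}$ equals $\sum_{k=t-2^j+1}^{t} h^{(k)}$. The base case is immediate from the initialization $f^{(j)} \leftarrow \vzero$, and the inductive step follows because between two consecutive multiples of $2^j$ the vector $f^{(j)}$ is only incremented (never reset) and then reset to $\vzero$ right after the query. Consequently, at the moment $D_j$ is queried we have
\[
\mA f^{(j)} \;=\; \mA \sum_{k=t-2^j+1}^{t} h^{(k)} \;=\; v^{(t)} - v^{(t-2^j)}.
\]

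Second, I would track the scaling vector $\hat{g}^{(j)}$ used by $D_j$ through the \textsc{Scale} calls in \Cref{line:reinsert_i,line:reweight}. The invariant to prove (again by induction on $t$) is that just before the query at iteration $t$ with $2^j \mid t$, we have $\hat{g}^{(j)}_i = 0$ for every $i \in F_j$ and $\hat{g}^{(j)}_i = 1/\ov_i$ for every $i \notin F_j$. This holds because \Cref{line:reinsert_i} zeroes out the entry of $D_j$ whenever an index is added to $F_j$, and \Cref{line:reweight} reinstates the correct scaling $1/\ov_i$ (using the freshly assigned $\ov_i$ from \Cref{line:set_vi}) for every $i \in F_j$ precisely at the boundary of each batch of $2^j$ iterations, at which point $F_j$ is emptied.

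Third, combining these two invariants with the guarantee of \textsc{HeavyQuery} (\Cref{lem:large_entry_datastructure}) with threshold $0.2\epsilon/\log n$, the call $D_j.\textsc{HeavyQuery}(f^{(j)})$ returns, with high probability, every index $i \notin F_j$ for which
\[
\bigl|\bigl(\mdiag(\hat{g}^{(j)})\,\mA f^{(j)}\bigr)_i\bigr|
\;=\;\frac{|v^{(t)}_i - v^{(t-2^j)}_i|}{\ov_i}
\;\ge\;\frac{0.2\,\epsilon}{\log n}.
\]
Equivalently, every $i \notin F_j$ with $|v^{(t-2^j)}_i - v^{(t)}_i| \ge 0.2\,\ov_i\,\epsilon/\log n$ is included in the result. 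Taking the union over all $j$ with $2^j \mid t$, which is exactly $\{0,1,\dots,\ell\}$ by definition of $\ell$, gives the claimed property of $I$.

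The main obstacles are purely bookkeeping: keeping straight which batch each $D_j$ is in (so that $f^{(j)}$ is the correct partial sum at the moment of querying), and verifying that $\hat{g}^{(j)}$ has been kept consistent with the current approximation $\ov$ under the interleaved \textsc{Scale} operations coming from both \Cref{line:reinsert_i} (zeroing entries that move into $F_{j}$) and \Cref{line:reweight} (restoring entries at batch boundaries). No substantive estimation is needed beyond the external \textsc{HeavyQuery} guarantee, which already yields the high-probability correctness statement required by the lemma.
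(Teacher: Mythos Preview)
Your proposal is correct and follows essentially the same approach as the paper: establish that $f^{(j)} = \sum_{k=t-2^j+1}^{t} h^{(k)}$ at query time, track the scaling vector of $D_j$ so that $\hat g^{(j)}_i = 1/|\ov_i|$ for $i\notin F_j$ and $0$ otherwise, and then invoke the \textsc{HeavyQuery} guarantee and union over $j\in\{0,\dots,\ell\}$. If anything, you are more explicit than the paper about why the invariants hold; the only cosmetic slip is writing $\ov_i$ in place of $|\ov_i|$ in the final inequality.
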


\begin{proof}
Pick any $j\in [0,\ell]$, if
$|v^{(t-2^j)}_i - v^{(t)}_i| \ge 0.2 |\ov_i| \epsilon / \log n$,
then
$$
\left|
e_i^\top \mA \sum_{k=t-2^j+1}^t h^{(k)}
\right|
\ge 0.2 |\ov_i| \epsilon / \log n.
$$
We will argue why \textsc{FindIndices} detect all $i$'s satisfying this condition.

Note that we have $f^{(j)} = \sum_{k=t-2^j+1}^t h^{(k)}$, and thus by guarantee of \Cref{lem:large_entry_datastructure} 
when we call $D_j.$\textsc{HeavyQuery}$(f^{(j)})$ (in \Cref{line:detect_changes}), we obtain for every $j\in[0,\ell]$ all $i\in[m]$ with
$$
\left|\hat{g}^{(j)}_i e_i^\top  \mA \sum_{k=t-2^j+1}^t h^{(k)} \right| \ge 0.2 \epsilon / \log n.
$$
Here $\hat{g}^{(j)}_i = 0$ if $i \in F_j$ by \Cref{line:reinsert_i}, 
which happens whenever $\ov_i$ is changed in \Cref{line:set_vi}. 
Thus by \Cref{line:reweight} we have $\hat{g}^{(j)}_i = 1 / |\ov_i|$ for all $i \notin F_j$. 
Equivalently, we obtain all indices $i \notin F_j$ satisfying the following condition, which proves the lemma.
\begin{align*}
\left| e_i^\top \mA \left(\sum_{k=t-2^j+1}^t h^{(k)}\right) ~ \right|
\ge
0.2 |\ov_i| \epsilon / \log n
\end{align*}
\end{proof}
To guarantee that the approximation $\ov$ we maintain is within the required $\epsilon$ error bound of the exact vector $v$, we need to argue that the $D_j$'s altogether are sufficient to detect all potential events that would cause $\ov_i$ to become outside of $(1\pm\epsilon)v$. It is easy to see that if an index $i$ is included in the returned set $I$ of \textsc{FindIndices} (\Cref{line:findIndices}), then our algorithm will follow up with a call to \textsc{VerifyIndex$(i)$}, which will guarantee that $\ov_i$ is close to the exact value $v_i$ (or $\ov_i$ will be updated to be $v_i$). Thus, if we are in iteration $t$, and $\ot$ is the most recent time \textsc{VerifyIndex$(i)$} is called, we know $\ov_i\approx v^{(\ot)}_i$, the value of $\ov_i$ remains the same since iteration $\ot$, and the index $i$ is not in the result of \textsc{FindIndices} for any of the iterations after $\ot$. We will demonstrate the last condition is sufficient to show $v^{(\ot)}_i \approx v^{(t)}_i$, which in turn will prove $\ov_i \approx v^{(t)}_i$. To start, we first need to argue that for any two iterations $\ot < t$, the interval can be partitioned into a small number of batches such that each batch is exactly one of the batches tracked by some $D_j$. 
\begin{lemma}
\label{lem:transform_t}
Given any $\ot < t$, there exists a sequence $$t = t_0 > t_1 > ... > t_k = \overline{t}$$ 
such that $k\leq 2\log t$ and $t_{z+1} = t_z - 2^{\ell_z}$ where $\ell_z$ satisfies $2^{\ell_z} | t_z $ for all $z=0,\ldots,k-1$. 
\end{lemma}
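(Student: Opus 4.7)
The plan is to construct the sequence greedily and then bound its length via a unimodality argument on the exponents.

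First, I would define the sequence recursively: set $t_0 = t$, and while $t_z > \overline{t}$, let $\ell_z$ be the largest nonnegative integer satisfying both $2^{\ell_z} \mid t_z$ and $t_z - 2^{\ell_z} \geq \overline{t}$, then put $t_{z+1} = t_z - 2^{\ell_z}$. The choice $\ell_z = 0$ is always admissible since $2^0 = 1$ divides every integer and $t_z - 1 \geq \overline{t}$, so the recursion is well-defined and terminates at some index $k$ with $t_k = \overline{t}$. The remaining task is to show $k \leq 2 \log t$.

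The key idea is that the sequence $\ell_0, \ell_1, \dots, \ell_{k-1}$ is unimodal. Let $\nu_2(n)$ denote the $2$-adic valuation. Partition the steps into an \emph{ascending phase} consisting of those $z$ with $\ell_z = \nu_2(t_z)$, and a \emph{descending phase} consisting of those $z$ with $\ell_z < \nu_2(t_z)$. I claim that once the algorithm enters the descending phase it never returns to the ascending phase, and within each phase the exponents are strictly monotone. For the ascending phase: writing $t_z = a \cdot 2^{\nu_2(t_z)}$ with $a$ odd gives $t_{z+1} = (a-1) \cdot 2^{\nu_2(t_z)}$, and since $a-1$ is even, $\nu_2(t_{z+1}) > \nu_2(t_z) = \ell_z$, so $\ell_{z+1} \geq \nu_2(t_{z+1}) > \ell_z$. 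For the descending phase: if $\ell_z < \nu_2(t_z)$, then $2^{\ell_z+1} \mid t_z$, hence $\nu_2(t_{z+1}) = \ell_z$, which forces $\ell_{z+1} \leq \ell_z$; equality $\ell_{z+1} = \ell_z$ would require $t_z - 2^{\ell_z+1} \geq \overline{t}$, contradicting the maximality of $\ell_z$ given that $2^{\ell_z+1} \mid t_z$. Thus $\ell_z$ strictly decreases in the descending phase, and in particular the algorithm cannot re-enter the ascending phase.

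Since $0 \leq \ell_z \leq \lfloor \log_2 t \rfloor$ throughout, each phase has at most $\lfloor \log_2 t \rfloor + 1$ steps, giving $k \leq 2\log t$ (absorbing the small additive constant, as may be assumed for $t$ large enough; the edge cases $t - \overline{t} \leq 1$ are trivial with $k \leq 1$).

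The main subtlety will be the proof that the descending phase is absorbing and that $\ell_z$ strictly decreases there; both rely on the fact that after subtracting the maximal admissible $2^{\ell_z}$, the next valuation drops to exactly $\ell_z$, combined with the maximality of the greedy choice ruling out any further $\geq \ell_z$ step. Everything else is routine.
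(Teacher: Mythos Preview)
Your argument is correct and gives a genuinely different proof from the paper's. The paper constructs the sequence explicitly via the binary representations: it identifies the midpoint $t' = \lfloor t/2^\ell \rfloor \cdot 2^\ell$ where $2^\ell$ is the most significant bit on which $t$ and $\ot$ differ, reaches $t'$ from $t$ by repeatedly stripping the least significant one-bit, and reaches $t'$ from $\ot$ (reversing the sequence) by repeatedly adding the least significant one-bit; each half contributes at most $\log t$ terms. Your approach is instead to run the obvious greedy rule and prove a unimodality invariant on the exponents $\ell_z$ via $2$-adic valuations. Both work; the paper's construction is slightly more hands-on, while your argument shows that \emph{any} greedy-maximal sequence already has the required length bound without needing to know the midpoint in advance. (Amusingly, on the paper's worked example the two constructions produce the identical sequence.)

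One small slip to fix: in the ascending phase you write ``so $\ell_{z+1} \geq \nu_2(t_{z+1}) > \ell_z$,'' but of course $\ell_{z+1} \leq \nu_2(t_{z+1})$ always. What you mean is that \emph{if step $z+1$ is also ascending} then by definition $\ell_{z+1} = \nu_2(t_{z+1}) > \ell_z$; otherwise you have entered the descending phase, which you already showed is absorbing. With that correction the unimodality is complete. The additive-constant hand-wave at the end is the same one the paper makes, so no harm there.
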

\begin{proof}
We can find such a sequence $(t_z)_z$ of length at most $2\log n$ as follows. Write both $t,\ot$ as $\log t$-bit binary, and start from the most significant bit. $t$ and $\ot$ will have some common prefix, and the first bit where they differ must be a $1$ for $t$ and $0$ for $\ot$ since $t>\ot$. Suppose the bit they differ corresponds to $2^\ell$, then let $t'= \lfloor t/2^\ell\rfloor2^\ell$, i.e. $t'$ is $t$ but with everything to the right of $2^\ell$ zeroed out. We can create the sequence in two halves. First from $t$ to $t'$ by iteratively subtracting the current least significant $1$-bit. To go from $t'$ down to $\ot$, consider the sequence backwards, we start from $\ot$, and iteratively add a number equal to the current least significant $1$-bit. This will keep pushing the least significant $1$ to the left, and eventually arrive at $t'$. Clearly this gives a sequence satisfying the condition that $t_{z+1} = t_z - 2^{\ell_z}$ and $2^{\ell_z} | t_z $ for all $t_z$ created in our sequence. The length of the sequence is at most $2\log t$, since each half has length at most $\log t$. For an example see \Cref{fig:transform_t}.
\end{proof}
\begin{figure}
\centering
\begin{tabular}{l|l|l}
$t_k$ & {\small binary representation of $t_k$} & $l_k$ \\
\hline
$t=t_0$   & 10100110\textbf{1} & $l_0=0$ \\
$t_1$   & 101001\textbf{1}00 & $l_1=2$ \\
$t_2$   & 10100\textbf{1}000 & $l_2=3$ \\
$t_3$   & 101\textbf{0}00000 & $l_3=5$ \\
\end{tabular}
\begin{tabular}{l|l|l}
$t_k$ & {\small binary representation of $t_k$} & $l_k$ \\
\hline
$t_4$        & 100100\textbf{0}00 & $l_4=2$ \\
$t_5$        & 1000111\textbf{0}0 & $l_5=1$ \\
$t_6$        & 10001101\textbf{0} & $l_6=0$ \\
$\ot=t_{7}$ & 100011001
\end{tabular}
\caption{\label{fig:transform_t}
Example of transform $t$ to $\ot$ in $2\log t$ steps. $t'$ in the proof of \Cref{lem:transform_t} is $t_3$.
Here $2^{\ell_k} | t_k $ and $t_{k+1} = t_k - 2^{\ell_k}$ for all $k$, and the bit (i.e. $2^{\ell_k}$) is highlighted in each row. 
}
\end{figure}
Now we can argue $\ov$ stays in the desired approximation range around $v$.
\begin{lemma}[Correctness of \Cref{thm:vector_maintenance}]
Assume we perform the $t$-th call to \textsc{Add},
the returned vector $\ov$ satisfies 
$|v^{(t)}_i - \ov_i| \le \epsilon |v^{(t)}_i|$ for all $i\in[m]$, 
and $I$ contains all indices that have changed since the $(t-1)$-th \textsc{Add} call.
\end{lemma}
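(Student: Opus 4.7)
The claim about $I$ is immediate from the code: the set $I$ returned by \textsc{Add} is exactly the union of those indices $i$ for which \textsc{VerifyIndex}$(i)$ returned \texttt{True} (from Lines~\ref{line:verify_I} and \ref{line:verify_F}), and these are precisely the indices where $\ov_i$ was overwritten in \Cref{line:set_vi}. Every other index satisfies $\ov_i^{(t)} = \ov_i^{(t-1)}$. The real work is establishing the multiplicative bound $|v^{(t)}_i - \ov_i| \le \epsilon|v^{(t)}_i|$, and my plan is to do this coordinate-by-coordinate.

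Fix $i$ and let $\ot \le t$ be the largest index with the property that \textsc{VerifyIndex}$(i)$ returned \texttt{True} at iteration $\ot$ (set $\ot=0$ and interpret $v^{(0)} = v^\init$ otherwise). Since $\ov_i$ is only written in \Cref{line:set_vi} and at that moment is set to $v^\init_i + (\mA\hat{f})_i = v^{(\ot)}_i$, we have $\ov_i = v^{(\ot)}_i$ throughout iterations $\ot,\ot+1,\dots,t$. If $\ot=t$ the desired bound is trivial, so assume $\ot<t$ and apply \Cref{lem:transform_t} to obtain a chain $t=t_0>t_1>\dots>t_k=\ot$ of length $k\le 2\log t$ with $t_{z+1}=t_z-2^{\ell_z}$ and $2^{\ell_z}\mid t_z$.

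The key combinatorial step is to rule out $i\in F_{\ell_z}$ at the start of iteration $t_z$ for any $z<k$. Indeed, $F_{\ell_z}$ is cleared at the end of every iteration $t'$ with $2^{\ell_z}\mid t'$, and the most recent such $t'$ strictly below $t_z$ is exactly $t_{z+1}$; hence $i\in F_{\ell_z}$ at the start of $t_z$ would force some iteration in $(t_{z+1},t_z)$ at which \textsc{VerifyIndex}$(i)$ returned \texttt{True}, contradicting maximality of $\ot=t_k\le t_{z+1}$. With $i\notin F_{\ell_z}$ established, \Cref{lem:findIndices} applied at iteration $t_z$ tells us that if $|v^{(t_z)}_i-v^{(t_{z+1})}_i|\ge 0.2\epsilon|\ov_i|/\log n$, then $i$ lies in the set returned by \textsc{FindIndices} at $t_z$, and so \textsc{VerifyIndex}$(i)$ is invoked (Line~\ref{line:verify_I}). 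This invocation cannot return \texttt{True} (else $t_z>\ot$ would contradict maximality), so it returns \texttt{False}, giving $|\ov_i - v^{(t_z)}_i|<0.2\epsilon|\ov_i|/\log n$, i.e.\ $|v^{(\ot)}_i - v^{(t_z)}_i|<0.2\epsilon|\ov_i|/\log n$.

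To finish, define $S=\{z<k:|v^{(t_z)}_i-v^{(t_{z+1})}_i|\ge 0.2\epsilon|\ov_i|/\log n\}$. If $S=\emptyset$, the telescoping sum $v^{(t)}_i-v^{(\ot)}_i=\sum_{z=0}^{k-1}(v^{(t_z)}_i-v^{(t_{z+1})}_i)$ has norm $<k\cdot 0.2\epsilon|\ov_i|/\log n\le 0.4\epsilon|\ov_i|$. Otherwise, let $z^\star=\min S$; then the telescoping sum from $t_0$ to $t_{z^\star}$ is $<z^\star\cdot 0.2\epsilon|\ov_i|/\log n$, and the preceding paragraph bounds $|v^{(\ot)}_i-v^{(t_{z^\star})}_i|<0.2\epsilon|\ov_i|/\log n$, so a second triangle inequality yields $|v^{(t)}_i-v^{(\ot)}_i|<(z^\star+1)\cdot 0.2\epsilon|\ov_i|/\log n\le 0.4\epsilon|\ov_i|$. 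In either case $|v^{(t)}_i-\ov_i|\le 0.4\epsilon|\ov_i|$; rearranging $|\ov_i|\le|v^{(t)}_i|+0.4\epsilon|\ov_i|$ gives $|\ov_i|\le|v^{(t)}_i|/(1-0.4\epsilon)$, so $|v^{(t)}_i-\ov_i|\le\frac{0.4\epsilon}{1-0.4\epsilon}|v^{(t)}_i|\le\epsilon|v^{(t)}_i|$ for $\epsilon\le 1$, as required. The only step with any real subtlety is the $F_{\ell_z}$-avoidance argument in the third paragraph; once that is in place, the rest is bookkeeping with the triangle inequality, so I expect the main obstacle to be articulating the clearing schedule of $F_{\ell_z}$ cleanly.
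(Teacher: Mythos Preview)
Your proof is correct and follows the same overall blueprint as the paper's: fix a coordinate, anchor at a past iteration $\ot$ where $\ov_i$ was last synced, invoke \Cref{lem:transform_t} to get a dyadic chain $t=t_0>\dots>t_k=\ot$, bound each link using \Cref{lem:findIndices}, and telescope. The difference is purely in the choice of anchor. You take $\ot$ to be the last iteration at which \textsc{VerifyIndex}$(i)$ \emph{returned True}; the paper instead takes $\ot$ to be the last iteration at which \textsc{VerifyIndex}$(i)$ was \emph{called at all}. With the paper's choice, for every $t_z>\ot$ one knows immediately that $i$ lies in neither the \textsc{FindIndices} output nor in $F_{\ell_z}$ (since membership in either triggers a call), and \Cref{lem:findIndices} then gives $|v^{(t_z)}_i-v^{(t_{z+1})}_i|<0.2\epsilon|\ov_i|/\log n$ for \emph{every} $z$ directly---no clearing-schedule argument and no case split on $S$ are needed; one just telescopes and adds the $0.2\epsilon|\ov_i|/\log n$ slack coming from the (possibly False) call at $\ot$.

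Your route costs you the explicit $F_{\ell_z}$-avoidance argument and the dichotomy on $S$, but buys you the exact equality $\ov_i=v^{(\ot)}_i$ (rather than the paper's near-equality). Both reach $|v^{(t)}_i-\ov_i|\le 0.4\epsilon|\ov_i|+O(\epsilon|\ov_i|/\log n)$ and finish identically. The $F_{\ell_z}$ argument you flagged as the subtle step is fine as written: since $2^{\ell_z}\mid t_z$ and $t_{z+1}=t_z-2^{\ell_z}$, the last clear of $F_{\ell_z}$ prior to iteration $t_z$ is indeed at $t_{z+1}\ge\ot$, so any insertion would witness a True return strictly after $\ot$.
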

\begin{proof}
By $\hat{f} = \sum_{j=1}^t h^{(j)}$ we have $v^\init +\mA \hat{f} = v^{(t)}$ in \Cref{line:check_change}.
So after a call to $\textsc{VerifyIndex}(i)$ we know that $|\ov_i - v^{(t)}_i| < 0.2 \epsilon \ov_j / \log n$,
either because the comparison $|\ov_i - (v^\init + \mA \hat{f})_i | \ge 0.2 \epsilon \ov_i /\log n$ in \Cref{line:check_change} returned false,
or because we set $\ov_i \leftarrow (v^\init + \mA \hat{f})_i$ in \Cref{line:set_vi}. 
Note this is also the only place we may change $\ov_i$. 
So consider some time $\overline{t} \le t$ 
when $\textsc{VerifyIndex}(i)$ was called for the last time 
(alternatively $\ot = 0$).
Then $\ov_i$ has not changed during the past $t - \ot$ calls to \textsc{Add}, 
and we know $|\ov_i - v^{(\ot)}_i| \le 0.2  \epsilon \ov_i / \log n$.
We now want to argue that $|v^{(\ot)} - v^{(t)}| \le 0.2 \epsilon \ov_i$, 
which via triangle inequality would then imply $\ov_i \approx_\epsilon v^{(t)}_i$.

For $t = \ot$ this is obvious, so consider $\ot < t$. We know from \Cref{lem:transform_t} the existence of a sequence
$$t = t_0 > t_1 > ... > t_k = \overline{t}$$ 
with $2^{\ell_z} | t_z $ and $t_{z+1} = t_z - 2^{\ell_z}$. In particular, this means that the interval between iteration $t_{z+1}$ and $t_z$ correspond to exactly a batch tracked by $D_{l_z}$. Thus, at iteration $t_z$ when \textsc{FindIndices} is called, $D_{l_z}.\textsc{HeavyQuery}$ is executed in \Cref{line:detect_changes}. This gives us $|v^{(t_z)}_i-v^{(t_{z+1})}_i|< 0.2 \epsilon |\ov_i| / \log n$ for all $z$, because by \Cref{lem:findIndices} 
the set $I \cup (\bigcup_j F_j)$ contains all indices $i$ 
which might have changed by $0.2 |\ov_i| \epsilon / \log n$
over the past $2^\ell$ iterations for any $2^\ell | t$,
and because $\textsc{VerifyIndex}(i)$ is called for all $i \in I \cup (\bigcup_j F_j)$ 
in \Cref{line:verify_I} and \Cref{line:verify_F}.

Note we can assume $\log t \leq \log n$ by resetting the data-structure after $n$ iterations, and this bounds the length of the sequence $k\leq 2\log n$. This then yields the bound
\[
|v^{(\ot)}_i - v^{(t)}_i|  = |v^{(t_k)}_i - v^{(t_0)}_i| 
\le~ \sum_{z=1}^{k} |v^{(t_z)}_i - v^{(t_{z-1})}_i|
\le~ k\cdot 0.2 \epsilon |\ov_i| / \log n 
\le ~ 0.4\epsilon |\ov_i|
\]
Thus we have $|\ov_i - v^{(t)}_i| \le (0.4 \epsilon+0.2 \epsilon/ \log n) |\ov_i|$,
which implies $\ov_i \approx_{\epsilon} v_i$. 
It is also straightforward to check that when we return the set $I$ at the end of \textsc{Add}, 
$I$ contains all the $i$'s where $\textsc{VerifyIndex}(i)$ is called and returned true in this iteration, 
which are exactly all the $i$'s where $\ov_i$'s are changed in \Cref{line:set_vi}.
\end{proof}
Now we proceed to the complexity of our data structure. We start with the cost of \textsc{FindIndices}, which is mainly on the cost of querying $D_j$'s. As we discussed at the beginning, there can be a large overhead for large $j$, but this is compensated by querying large $j$ less frequently. 
\begin{lemma}\label{lem:dual_query_complexity}
After $T$ calls to \textsc{Add}, the total time spent in \textsc{FindIndices} and \textsc{VerifyIndex} is bounded by
\[
\tilde{O}\left(
T\epsilon^{-2}\sum_{t=1}^T \| (v^{(t)}-v^{(t-1)})/v^{(t)}\|_2^2 
+ Tn \log W
\right)
\]
\end{lemma}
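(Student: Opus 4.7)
The plan is to separately account for the three sources of cost: (i) the \textsc{HeavyQuery} calls inside \textsc{FindIndices}, (ii) the \textsc{VerifyIndex} bookkeeping, and (iii) the $n\log W$ ``base'' cost of each query. By \Cref{lem:large_entry_datastructure}, a single call $D_j.\textsc{HeavyQuery}(f^{(j)})$ at iteration $t$ (with $2^j\mid t$) runs in time
\[
O\!\left(\|\mdiag(\hat g^{(j)})\,\mA f^{(j)}\|_2^2\,\epsilon^{-2}\log^{10} n \;+\; n\log^2 n\log W\right),
\]
where $\hat g^{(j)}_i = 1/\ov_i$ for $i\notin F_j$ and $\hat g^{(j)}_i=0$ for $i\in F_j$, because indices in $F_j$ had their scaling set to $0$ by \Cref{line:reinsert_i} (due to \textsc{VerifyIndex} updating $\ov_i$ during the current batch), and all other indices had scaling reset to $1/\ov_i$ at the start of this batch via \Cref{line:reweight}.

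The first main step is to control $\|\mdiag(\hat g^{(j)})\mA f^{(j)}\|_2^2$. Since $f^{(j)}=\sum_{k=t-2^j+1}^{t} h^{(k)}$ and $\mA h^{(k)}=v^{(k)}-v^{(k-1)}$, we have $\mA f^{(j)}=v^{(t)}-v^{(t-2^j)}$. For each $i\notin F_j$, the approximation $\ov_i$ did not change during the batch $[t-2^j+1,t]$, so by the correctness guarantee $\ov_i\approx_\epsilon v^{(k)}_i$ for every such $k$, and in particular $1/\ov_i^2\le O(1)/(v^{(k)}_i)^2$. Applying Cauchy--Schwarz coordinatewise,
\[
\sum_{i\notin F_j}\frac{(v^{(t)}_i-v^{(t-2^j)}_i)^2}{\ov_i^2}
\;\le\; 2^j\sum_{k=t-2^j+1}^{t}\sum_{i\notin F_j}\frac{(v^{(k)}_i-v^{(k-1)}_i)^2}{\ov_i^2}
\;\le\; O(2^j)\sum_{k=t-2^j+1}^{t}\|(v^{(k)}-v^{(k-1)})/v^{(k)}\|_2^2.
\]

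The second main step is to sum across time and across levels. For a fixed $j$, there are $T/2^j$ queries to $D_j$, and each iteration $k$ lies in exactly one of the $2^j$-batches, so
\[
\sum_{t:2^j\mid t,\,t\le T}2^j\sum_{k=t-2^j+1}^{t}\|(v^{(k)}-v^{(k-1)})/v^{(k)}\|_2^2 \;=\; 2^j\sum_{k=1}^{T}\|(v^{(k)}-v^{(k-1)})/v^{(k)}\|_2^2.
\]
Since $D_j$ is only ever queried for $j\le\log T$, summing the $2^j$ factors yields $\sum_{j=0}^{\log T}2^j=O(T)$, giving the $T\epsilon^{-2}\sum_{k=1}^{T}\|(v^{(k)}-v^{(k-1)})/v^{(k)}\|_2^2$ term (times $\polylog n$). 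For the additive $n\log W$ per query, summing over levels gives $\sum_{j=0}^{\log T}(T/2^j)\cdot n\log^2 n\log W = \tilde O(Tn\log W)$.

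Finally, for \textsc{VerifyIndex}, each invocation takes $\tilde O(1)$ time: computing $(v^\init+\mA\hat f)_i$ is $O(1)$ because each row of $\mA$ has only two non-zero entries, and the $O(\log n)$ \textsc{Scale} calls cost $\tilde O(1)$ each by \Cref{lem:large_entry_datastructure}. The total number of \textsc{VerifyIndex} invocations is at most $\sum_{t\le T}\bigl(|I_t|+\sum_{j:2^j\mid t}|F_j|\bigr)$, where $I_t$ is the output of \textsc{FindIndices} at time $t$. Each index in $I_t$ was returned by some \textsc{HeavyQuery}, so $\sum_t|I_t|$ is dominated by the \textsc{HeavyQuery} runtimes already bounded above, and indices entering $F_j$ only arise from successful \textsc{VerifyIndex} calls initiated by earlier \textsc{HeavyQuery} hits within the same $2^j$-batch, which is absorbed by the same bound up to a $\polylog n$ factor. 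I expect the main subtlety to be this last bookkeeping, namely tightly amortizing the $F_j$-sized work at batch boundaries against the detections that originally populated the $F_j$'s; the rest is driven by the clean telescoping afforded by the level structure.
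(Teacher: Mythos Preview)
Your approach is essentially the same as the paper's: bound each $D_j.\textsc{HeavyQuery}$ via \Cref{lem:large_entry_datastructure}, use that $\hat g^{(j)}_i=0$ for $i\in F_j$ and $1/\ov_i$ otherwise, telescope $\mA f^{(j)}=\sum_k(v^{(k)}-v^{(k-1)})$, apply Cauchy--Schwarz to pick up the $2^j$ factor, and finally sum over levels. The paper does the triangle inequality before Cauchy--Schwarz rather than coordinatewise, but this is cosmetic.

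The one place where your write-up is imprecise is the $F_j$ bookkeeping. You assert that indices entering $F_j$ ``only arise from successful \textsc{VerifyIndex} calls initiated by earlier \textsc{HeavyQuery} hits within the same $2^j$-batch.'' That is not quite true: an index can enter $F_j$ via a successful \textsc{VerifyIndex} that was triggered by processing some \emph{other} $F_{j'}$ (Line~\ref{line:verify_F}), not by a \textsc{HeavyQuery} hit. So tracing everything back to \textsc{HeavyQuery} outputs does not immediately close the argument. The paper sidesteps this recursion by instead bounding the total number of \emph{successful} \textsc{VerifyIndex} calls directly: each success means $|v^{(t)}_i-v^{(\hat\ell_i)}_i|\ge \Theta(\epsilon/\log n)\,|\ov_i|$, and a Cauchy--Schwarz over the interval $[\hat\ell_i+1,t]$ shows the number of such events is at most $\tilde O\bigl(T\epsilon^{-2}\sum_t\|(v^{(t)}-v^{(t-1)})/v^{(t)}\|_2^2\bigr)$. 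Since each successful call inserts $i$ into at most $\log n+1$ sets $F_j$, the total work on Line~\ref{line:verify_F} is $(\log n+1)$ times this quantity. This is presumably what you had in mind by ``absorbed by the same bound up to a $\polylog n$ factor,'' and with this fix your proof goes through.
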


\begin{proof}
We start with the cost of \textsc{FindIndices}. 
Every call to \textsc{Add} invokes a call to \textsc{FindIndices}, 
so we denote the $t$-th call to \textsc{FindIndices} as the one associated with the $t$-th \textsc{Add}. 
Fix any $j$ and consider the cost for $D_j$. 
We update $f^{(j)}$ once in each call, which takes $O(n)$. 
Every $2^j$ calls would incur the cost to $D_j.\textsc{HeavyQuery}(f^{(j)})$. 
Without loss of generality we consider the cost of the first time this happens (at iteration $2^j$) since the other batches follow the same calculation. 
We denote $\hat{g}^{(j)}$ as the scaling vector in $D_j$ when the query happens. 
We know $\hat{g}^{(j)}_i = 0$ if $i \in F_j$, 
and $\hat{g}^{(j)}_i = 1 / |\ov_i|$ otherwise. 
Note here we can skip the superscript indicating the iteration number, 
since if $i\notin F_j$ it must be that $\ov_i$ has not changed over the $2^j$ iterations. 
The cost to query $D_j$ in \Cref{line:detect_changes} can then be bounded by.
\begin{align*}
&~
\tilde O(\|\mdiag(\hat{g}^{(j)} \mA f^{(j)}_k \|_2^2 \epsilon^{-2}  + n \log W)) \\
\le&~
\tilde O(\|\sum_{t=1}^{2^j} \mdiag(1/v^{(t)}) \mA h^{(t)}\|_2^2 \epsilon^{-2} + n \log W)\\
\le&~
\tilde O(\left(\sum_{t=1}^{2^j}\| \mdiag(1/v^{(t)}) \mA h^{(t)}\|_2 \right)^2 \epsilon^{-2}  + n \log W)\\
\le&~
\tilde O(2^j\cdot\sum_{t=1}^{2^j}\| \mdiag(1/v^{(t)}) \mA h^{(t)}\|_2^2 \epsilon^{-2} + n \log W)
\end{align*}
The first line is by \Cref{lem:large_entry_datastructure}, and note we use $0.2\epsilon/\log n$ as the error parameter in the call. 
The first inequality is by looking at $\hat{g}^{(j)}_i$ for each index $i$ separately. 
The value is either $0$ so replacing it by $1/v^{(t)}_i$ only increase the norm, 
or we know $i\notin F_j$, so $\hat{g}^{(j)}_i$ is within a constant factor of $1/v^{(t)}_i$ 
since $v^{(t)}_i\approx v^{(1)}_i \approx \ov_i$ for all $t\in[1,2^j]$. 
The second inequality uses the triangle inequality, and the third inequality uses Cauchy-Schwarz.
The cost of all subsequent queries to $D_j$ follows similar calculation, 
and as this query is only performed once every $2^j$ iterations, the total time after $T$ iterations is
\begin{align*}
\tilde{O}\left(
T\epsilon^{-2}\sum_{t=1}^T \| \mdiag(1/v^{(t)}) \mA h^{(t)}\|_2^2 + T2^{-j}n \log W
\right)\\
=
\tilde{O}\left(
T\epsilon^{-2}\sum_{t=1}^T \| (v^{(t)}-v^{(t-1)})/v^{(t)}\|_2^2 + T2^{-j}n \log W
\right)
\end{align*}
Note that the equality follows the definition of $v^{(t)}$ in \Cref{thm:vector_maintenance}. 
We can then sum over the total cost for all $D_j$'s for $j=1,\ldots, \log n$ 
to get the final running time bound in the lemma statement.

As to the cost of \textsc{VerifyIndex}, 
each call computes $(v^\init + \mA \hat{f})_i$
which takes $O(1)$ time as each row of $\mA$ has only two non-zero entries. 
Further, the updates to $F_j$'s and calls to $D_j.\textsc{Scale}$ take $\tilde{O}(1)$ time. 
Now we need to bound the total number of times we call \textsc{VerifyIndex}, 
which can only happen in two cases. 
The first case (\Cref{line:verify_I}) is when $i$ is returned by \textsc{FindIndices} in \Cref{line:findIndices}, 
and the number of times we call \textsc{VerifyIndex} is bounded by the size of $I$, 
which is in turn bounded by the running time of \textsc{FindIndices}. 
So the total cost over $T$ iterations can be bounded by the total cost of \textsc{FindIndices}. 
The second case (\Cref{line:verify_F}) is when $i$ is in some $F_j$ 
because $\ov_i$ was updated due to $v_i$ changing by more than $0.2\epsilon/\log n$, 
and the number of times this occurs can be bounded by 
\[
\tilde{O}\left(
T\epsilon^{-2}\sum_{t=1}^T \| (v^{(t)}-v^{(t-1)})/v^{(t)}\|_2^2 
\right).
\]
Adding up the total cost of \textsc{VerifyIndex} and \textsc{FindIndices} proves the lemma.
\end{proof}

We proceed to prove the complexity bounds in \Cref{thm:vector_maintenance}.
\paragraph{\textsc{Initialize}.} 
The main work is to initialize the data structures $D_j$ for $j=1,...,\log n$, 
which takes $\tilde{O}(m)$ time in total by \Cref{lem:large_entry_datastructure}.
\paragraph{\textsc{Add}.} 
The cost not associated with any \textsc{FindIndices} and \textsc{VerifyIndex} is $O(n)$. 
Together with \Cref{lem:dual_query_complexity} gives the bound of total time of $T$ calls to \textsc{Add} 
\[
\tilde{O}\left(
T\epsilon^{-2}\sum_{t=1}^T \| (v^{(t)}-v^{(t-1)})/v^{(t)}\|_2^2 
+ Tn \log W
\right)
\]
\paragraph{\textsc{ComputeExact}.} 
This just takes $O(\nnz(A))$ to compute the matrix-vector product, 
which is $O(m)$ since $A$ is an edge incidence matrix.

\newpage

\section{Gradient and Primal Solution Maintenance}
\label{sec:gradient_maintenance}

In \Cref{sec:log_barrier_method} and \Cref{sec:ls_barrier_method} we provided two IPMs. 
These IPMs try to decrease some potential function $\Phi(v)$ for $v \in \R^m$, 
so they require the direction of steepest descent, which is typically given by the gradient $\nabla\Phi(v)$.
However, the two IPMs allows for approximations 
and works with different norms, so they actually require the maximizer
\begin{align}
\og := \argmax_{w \in \R^m: \|w\|\le 1} \langle \nabla\Phi(\ov), w\rangle,
\label{eq:norm_maximizer}
\end{align}
which gives the direction of steepest ascent with respect to some norm $\|\cdot\|$ and some $\ov \approx v$.
The IPM of \Cref{sec:log_barrier_method} uses the $\ell_2$-norm 
(in which case the solution of above problem is just $\og = \nabla\Phi(\ov)/\|\nabla\Phi(\ov)\|_2$)
whereas the IPM of \Cref{sec:ls_barrier_method} uses the norm $\|\cdot\|_{\tau+\infty}$.

In this section we describe a framework that is able to maintain the maximizer of \eqref{eq:norm_maximizer}
efficiently.
Note that the maximizer is an $m$-dimensional vector,
so writing down the entire vector in each iteration of the IPM would be too slow.
Luckily, the IPM only requires the vector $\mA^\top \omX \og$,
where $\ox \approx x$ is an approximation of the current primal solution
and $\mA \in \R^{m \times n}$ is the constraint matrix of the linear program.
This vector is only $n$ dimensional, so we can afford to write it down explicitly.
Hence our task is to create a data structure that can efficiently maintain $\mA^\top \omX \og$.
This will be done in \Cref{sec:gradient:reduction}.

Note that the vector $\og$ is closely related to the primal solution $x$.
In each iteration of our IPM, the primal solution $x$ changes by 
(see \eqref{eq:over:update_xs} and \eqref{eq:over:delta_x new} in \Cref{sec:overview:newIPM})
\begin{align*}
x^\new \leftarrow x + \eta \omX \og - \mR h,
\end{align*}
for some $h \in \R^n$,
$\eta \in \R$,
random sparse diagonal matrix $\mR \in \R^{m \times m}$, 
$\omX = \mdiag(\ox)$ for $\ox \approx x$, 
and $\og$ is as in \eqref{eq:norm_maximizer}.
Because of the sparsity of $\mR$,
we can thus say that during iteration $t$
the primal solution $x^{(t)}$ is of the form
\begin{align}
x^{(t)} = x^\init + \sum_{k = 1}^t \left( h'^{(k)} + \eta \omX^{(k)} \og^{(k)} \right), \label{eq:gradient:sum}
\end{align}
where $h'^{(k)}$ is the vector $\mR h$ during iteration number $k$,
and $\og^{(k)}$, $\ox^{(k)}$ are the vectors $\og$ and $\ox$ during iteration number $k$.

Thus in summary, the primal solution $x$ is just the sum of (scaled) gradient vectors $\og^{(k)}$
and some sparse vectors $h'^{(k)}$.
The main result of this section will be data structures
(\Cref{thm:gradient_maintenance,thm:gradient_maintenance_simple}) 
that maintain both $\mA^{\top} \omX \og$
and an approximation $\ox$ of the primal solution $x$.

We state the result with respect to the harder case, 
when using the $\|\cdot\|_{\tau+\infty}$-norm for \eqref{eq:norm_maximizer}.
At the end of \Cref{sec:gradient:accumulator}, when we finish this result,
we also state what the variant would look like for the easier case of using the $\ell_2$-norm,
where the vector $\og$ is given by the simple expression $\nabla\Phi(\ov)/\|\nabla\Phi(\ov)\|_2$.

Recall that for $v \in \R^m$ we defined
$\Phi(v):= \sum_{i=1}^m \exp ( \lambda (v_i - 1 ) ) + \exp ( -\lambda ( v_i - 1 ) )$ 
for some given fixed parameter $\lambda$ of value $\polylog n$. 
Our main result of this section is the following \Cref{thm:gradient_maintenance}.
When using \Cref{thm:gradient_maintenance} in our IPM,
we will use $g = \eta \ox$ for some scalar $\eta \in \R$, 
so that \textsc{QuerySum} returns the desired approximation of \eqref{eq:gradient:sum}.

\begin{theorem}
\label{thm:gradient_maintenance} 
There exists a deterministic data-structure
that supports the following operations
\begin{itemize}
\item $\textsc{Initialize }(\mA\in\R^{m\times n}, x^{\init} \in \R^m, g\in\R^{m}, \ttau\in\R^{m}, z\in\R^{m},\epsilon>0)$:
	The data-structure preprocesses the given matrix $\mA\in\R^{m\times n}$,
	vectors $x^{\init},g,\ttau,z\in\R^{m}$, and accuracy parameter $\epsilon>0$
	in $\tilde{O}(\nnz(\mA))$ time. We denote $\mG$ the diagonal matrix $\mdiag(g)$. 
	The data-structure assumes $0.5\le z\le2$ and $n/m\le\ttau\le2$.
\item $\textsc{Update}(i \in [m], a \in \R, b \in \R, c \in \R)$: 
	Sets $g_{i}\leftarrow a$, $\ttau_{i} \leftarrow b$ and $z_i \leftarrow c$ in $O(\|e_i^\top \mA\|_0)$ time. 
	The data-structure assumes $0.5\le z\le2$ and $n/m\le\ttau\le2$.
\item $\textsc{QueryProduct}()$: 
	Returns $\mA^{\top}\mG\nabla\Phi(\oz)^{\flat(\otau)} \in \R^n$ for some $\otau \in \R^m$, $\oz \in \R^m$ 
	with $\otau \approx_\epsilon \ttau$ and $\|\oz-z\|_{\infty}\le \epsilon$,
	where 
	\begin{align*}
	x^{\flat(\otau)} := \argmax_{\|w\|_{\otau + \infty}} \langle x, w \rangle.
	\end{align*}
	Every call to \textsc{QueryProduct} must be followed by a call to \textsc{QuerySum},
	and we bound their complexity together (see \textsc{QuerySum}).
\item $\textsc{QuerySum}(h \in \R^m)$:
	Let $v^{(\ell)}$ be the vector $\mG\nabla\Phi(\oz)^{\flat(\otau)}$ used for the result of the $\ell$-th call to \textsc{QueryProduct}.
	Let $h^{(\ell)}$ be the input vector $h$ given to the $\ell$-th call to \textsc{QuerySum}.
	We define 
	\begin{align*}
	  x^{(t)} := x^{\init} + \sum_{\ell=1}^{t} v^{(\ell)} + h^{(\ell)}.  
	\end{align*}
	Then the $t$-th call to \textsc{QuerySum} returns a vector $\ox \in \R^m$ with $\ox \approx_\epsilon x^{(t)}$.
	
	Assuming the input vector $h$ in a sparse representation (e.g. a list of non-zero entries), 
	then after $T$ calls to \textsc{QuerySum} and \textsc{QueryProduct} 
	the total time for all calls together is bounded by
	\begin{align*}
	O\left(
		T n \epsilon^{-2} \log^2 n
		+ \log n \cdot \sum_{\ell=0}^T \|h^{(\ell)}\|_0 
		+ T \log n \cdot \sum_{\ell=1}^T \|v^{(\ell)}/x^{(\ell-1)}\|_2^2 / \epsilon^2
	\right).
	\end{align*}
	The output $\ox \in \R^m$ is returned in a compact representation to reduce the size. In particular, the data-structure returns a pointer to $\ox$ 
and a set $J \subset [m]$ of indices which specifies which entries of $\ox$ have changed 
	between the current and previous call to \textsc{QuerySum}.
\item $\textsc{ComputeExactSum}()$:
	Returns the exact $x^{(t)}$ in $O(m\log n)$ time.
\item $\textsc{Potential}()$:
	Returns $\Phi(\oz)$ in $O(1)$ time for some  $\oz$ such that $\|\oz-z\|_{\infty}\le \epsilon$. 
\end{itemize}
\end{theorem}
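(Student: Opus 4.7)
The plan is to exploit the observation that $\nabla\Phi(\oz)^{\flat(\otau)}$ admits a very low-dimensional representation. Since $z_i\in[1/2,2]$, I would let $\oz_i$ be $z_i$ rounded down to the nearest multiple of $\epsilon/2$, producing at most $O(\epsilon^{-1})$ distinct values, all within $\epsilon$ of $z_i$; similarly I would round $\otau_i$ to the nearest power of $1+\epsilon$ in the range $[n/m,2]$, producing $O(\epsilon^{-1}\log(m/n))$ distinct values. This partitions $[m]$ into $B=\tilde O(\epsilon^{-2})$ buckets $B_1,\ldots,B_B$ on which both $\oz$ and $\otau$ are constant. Because $\nabla\Phi(\oz)_i=\phi'(\oz_i)$ depends on $i$ only through $\oz_i$, and because the maximizer defining $h^{\flat(\otau)}$ depends on coordinate $i$ only through the pair $(h_i,\otau_i)$ (an explicit KKT analysis of the coupled $\ell_\infty$ and weighted-$\ell_2$ constraint produces entries of the form $\sign(h_i)\cdot\min\{\alpha_1,\alpha_2 |h_i|/\otau_i\}$ for global Lagrange multipliers $\alpha_1,\alpha_2$), the vector $\nabla\Phi(\oz)^{\flat(\otau)}$ is constant on each bucket. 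Denote this common value $\beta_k$, so that $v:=\mG\nabla\Phi(\oz)^{\flat(\otau)}$ satisfies $v_i=g_i\beta_{k(i)}$.

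Given the bucket structure, I would explicitly maintain, for each bucket $k$, the $n$-vector $u_k:=\mA^\top(g\odot\mathbf{1}_{B_k})$, so that $\mA^\top v=\sum_{k=1}^B \beta_k u_k$. Then \textsc{QueryProduct} first computes the $B$ scalars $\beta_k$ from the current bucket definitions and the two Lagrange multipliers (which depend only on bucket sums of $|\phi'(\oz)|$ and the bucket cardinalities), then forms $\sum_k\beta_k u_k$, for a total of $\tilde O(n\epsilon^{-2})$ time. An \textsc{Update}$(i,a,b,c)$ is handled lazily: I only alter $\oz_i$ or $\otau_i$ when the raw $z_i,\ttau_i$ has crossed a discretization boundary, at which point $i$ moves from its old bucket $k$ to a new bucket $k'$, and I subtract $g_i^{\text{old}}\mA^\top e_i$ from $u_k$ and add $a\,\mA^\top e_i$ to $u_{k'}$. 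If only $g_i$ changes but the bucket stays fixed, a single update to $u_{k(i)}$ suffices. Each update thus costs $O(\|e_i^\top\mA\|_0)$.

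For \textsc{QuerySum}, the cumulative sum $x^{(t)}=x^{\init}+\sum_\ell(v^{(\ell)}+h^{(\ell)})$ splits into a sparse part from the $h^{(\ell)}$, which I accumulate directly in $O(\|h^{(\ell)}\|_0)$ time, and a dense part from the $v^{(\ell)}$. Because each $v^{(\ell)}$ is bucket-constant, the contribution to $x_i$ over any interval $[\ell_0,\ell_1]$ during which $i$ remains in bucket $k$ is exactly $g_i\sum_{\ell\in[\ell_0,\ell_1]}\beta_k^{(\ell)}$, computable in $O(1)$ from a running prefix sum $S_k^{(t)}:=\sum_{\ell\le t}\beta_k^{(\ell)}$ that I maintain per bucket. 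Refreshing $\ox$ then reduces to detecting the indices $i$ for which the exact $x_i^{(t)}$ has drifted more than $\epsilon\ox_i$ from its currently stored value. For this I adopt the multiscale design of Section~\ref{sec:vector_maintenance}: at each scale $j\le\log n$, every $2^j$ iterations I detect every index whose accumulated $v_i$-change since the last checkpoint exceeds $0.2\epsilon\ox_i/\log n$, using the bucket scalars $(S_k^{(t)}-S_k^{(t-2^j)})$ together with the \ProjHeavyHitter primitive of Section~\ref{sec:matrix_vector_product} applied to the weighted membership indicators of each bucket; flagged indices have their $\ox_i$ reset to the exact value obtained from the bucket history. The output $\ox$ is returned in the prescribed compact form by emitting exactly the set of indices refreshed since the previous call.

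Adding up: $T$ calls to \textsc{QueryProduct} cost $O(Tn\epsilon^{-2}\log^2 n)$, where the additional $\log^2 n$ absorbs discretization and multiscale bookkeeping; the sparse accumulations total $O(\log n\cdot\sum_\ell\|h^{(\ell)}\|_0)$; and the refresh cost amortizes against $T\log n\cdot\sum_\ell\|v^{(\ell)}/x^{(\ell-1)}\|_2^2/\epsilon^2$ by the same counting argument used to analyze \VecMaintainer. The \textsc{Potential} query is $O(1)$ because $\Phi(\oz)$ decomposes as $\sum_k |B_k|\phi(\oz^{(k)})$ and can be maintained incrementally as indices move between buckets. The main obstacle I anticipate lies in the first paragraph: rigorously characterizing the flat-operator maximizer to show it is exactly bucketwise constant (including edge cases where the $\ell_\infty$ constraint and weighted-$\ell_2$ constraint are simultaneously tight), and proving a Lipschitz-style property ensuring that the number of bucket reassignments triggered by updates can be charged to genuine changes in $(z,\ttau)$ rather than to discretization artifacts. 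A secondary subtlety is that the bucket count $\tilde O(\epsilon^{-2})$ must interact with the multiscale refresh scheme without compounding the $\log$ factors beyond $\log^2 n$, which will require choosing checkpoint granularities and \ProjHeavyHitter parameters consistently across scales.
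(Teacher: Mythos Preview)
Your first half---discretizing $z$ and $\ttau$ into $K=\tilde O(\epsilon^{-2})$ buckets, maintaining the per-bucket sums $u_k=\mA^\top(g\odot\mathbf{1}_{B_k})$, and returning $\sum_k\beta_k u_k$ in \textsc{QueryProduct}---is exactly what the paper does (Lemma~\ref{lem:gradient_reduction}). For the bucket-constancy of the flat operator you propose an explicit KKT computation; the paper instead gives a short convexity/averaging argument (Lemma~\ref{lem:projected_flat}): if a maximizer $w$ assigned different values $w_i\neq w_j$ to two indices in the same bucket, replacing both by $(w_i+w_j)/2$ keeps the objective fixed and does not increase either norm, so some bucket-constant maximizer exists. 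This sidesteps the edge-case analysis you flagged.

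Where you diverge substantially is in \textsc{QuerySum}. You import the multiscale scheme of Section~\ref{sec:vector_maintenance} and invoke the \ProjHeavyHitter primitive to detect drifted indices. Two issues: first, \ProjHeavyHitter (Lemma~\ref{lem:large_entry_datastructure}) is built specifically for incidence matrices via expander decompositions of a graph---it does not apply to the bucket-membership structure you describe, so the invocation does not type-check as stated. Second, and more importantly, the multiscale machinery is unnecessary here. The reason Section~\ref{sec:vector_maintenance} needs multiscale batching is that the update to $s$ is $\mA h^{(t)}$ for a dense $n$-vector, which cannot be summarized by a few scalars; detecting large accumulated entries then genuinely requires the expander-based heavy-hitter. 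In the primal accumulator, by contrast, between consecutive bucket-moves or scalings of index $i$ the accumulated change is exactly $g_i\cdot(f^{(t)}_{k(i)}-f^{(\hat\ell_i)}_{k(i)})$ for a \emph{single scalar} $f^{(t)}_k=\sum_{\ell\le t}\beta_k^{(\ell)}$ per bucket. So detecting when this crosses the threshold $\epsilon|\ox_i|/(10|g_i|)$ reduces to, for each bucket $k$, maintaining the indices $i\in I_k$ sorted by their threshold values $\Delta^{(high)}_i,\Delta^{(low)}_i$ and walking the lists whenever $f^{(t)}_k$ moves. This is what the paper does (Algorithm~\ref{alg:gradient_accumulator}); it handles slow drift directly with no batching, and the $O(\log n)$ in the complexity bound comes merely from maintaining the sorted lists, not from any multiscale overhead.
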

The main idea is that, 
although the exact gradient $\nabla\Phi(z)^{\flat(\ttau)}\in \R^m$ 
can be indeed $m$-dimensional and thus costly to maintain, 
if we perturb $\ttau$ and $z$ slightly we can show that 
$\nabla\Phi(\oz)^{\flat(\otau)}$ will be in $O(\epsilon^{-2} \log n)$ dimension. 
Here we say dimension in the sense that the $m$ entries 
can be put into $O(\epsilon^{-2} \log n)$ buckets, 
and entries in the same bucket share a common value. 
The proof consists of two steps, and we employ two sub data structures. 
First, we construct the $O(\epsilon^{-2} \log n)$ dimensional approximation 
$\nabla\Phi(\oz)^{\flat(\otau)}$ to the exact gradient. 
Using its low dimensional representation we are able 
to efficiently maintain $\mA \mG \nabla\Phi(\oz)^{\flat(\otau)}$.
The data structure that maintains this low dimensional representation 
and $\mA \mG \nabla\Phi(\oz)^{\flat(\otau)}$ is discussed in \Cref{sec:gradient:reduction}.
The subsequent \Cref{sec:gradient:accumulator} presents a data structure 
that takes the low dimensional representation of the gradient as input 
and maintains the desired sum for \textsc{QuerySum}. %

\subsection{Gradient Maintenance}
\label{sec:gradient:reduction}

In this section we discuss the computation of the 
$O(\epsilon^{-2} \log n)$ dimensional representation 
of the gradient $\nabla\Phi(\oz)^{\flat(\otau)}$, 
which in turn provides a good enough approximation 
to the real gradient $\nabla\Phi(z)^{\flat(\ttau)}\in \R^m$.  
This representation is then used to efficiently maintain 
$\mA^{\top} \mG \nabla\Phi(\oz)^{\flat(\otau)}$. 
The formal result is as follows:
\begin{lemma}
\label{lem:gradient_reduction} 
There exists a deterministic data-structure
that supports the following operations
\begin{itemize}
\item $\textsc{Initialize }(\mA\in\R^{m\times n},g\in\R^{m},\ttau\in\R^{m},z\in\R^{m},\epsilon>0)$:
	The data-structure preprocesses the given matrix $\mA\in\R^{m\times n}$,
	vectors $g,\ttau,z\in\R^{m}$, and accuracy parameter $\epsilon>0$
	in $O(\nnz(\mA))$ time. 
	The data-structure assumes $0.5\le z\le2$ and $n/m\le\ttau\le2$.
	The output is a partition 
	$\bigcup_{k=1}^K I_k = [m]$ 
	with $K = O(\epsilon^{-2} \log n)$.
\item $\textsc{Update}(i \in [m], a \in \R, b \in \R, c \in \R)$: 
	Sets $g_{i}=a$, $\ttau_{i}=b$ and $z_i=c$ in $O(\|e_i^\top \mA)$ time. 
	The data-structure assumes $0.5\le z\le2$ and $n/m\le\ttau\le2$.
	The index $i$ might be moved to a different set, so the data-structure returns $k$ such that $i \in I_k$.
\item $\textsc{Query}()$: 
	Returns $\mA^{\top}\mG\nabla\Phi(\oz)^{\flat(\otau)} \in \R^n$ for some $\otau \in \R^m$, 
	$\oz \in \R^m$ with $\otau \approx_\epsilon \ttau$ 
	and $\|\oz-z\|_{\infty}\le \epsilon$,
	where $x^{\flat(\otau)} := \argmax_{\|w\|_{\otau + \infty} \le 1} \langle x, w \rangle$.
	The data-structure further returns the low dimensional representation $s\in\R^K$ such that
	\begin{align*}\sum_{k=1}^K s_k \mathbf{1}_{i \in I_k} = \left( \nabla\Phi(\oz)^{\flat(\otau)}\right)_i\end{align*}
	for all $i \in [m]$,
	in $O(n\epsilon^{-2}\log n )$ time. 
\item $\textsc{Potential}()$
	Returns $\Phi(z)$ in $O(1)$ time.
\end{itemize}
\end{lemma}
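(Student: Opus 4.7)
The approach is to discretize the inputs so that $\nabla\Phi(\oz)^{\flat(\otau)}$ takes at most $K = O(\epsilon^{-2}\log n)$ distinct coordinate values, so that it can be encoded by a partition $I_1,\dots,I_K$ of $[m]$ together with $K$ scalars $s_1,\dots,s_K$ (the ``low dimensional representation'' $s$ returned by \textsc{Query}). Given this encoding, I would maintain per-block aggregates $y_k \defeq \mA^{\top}\mG\mathbf{1}_{I_k} = \sum_{i\in I_k} g_i(\mA^{\top} e_i) \in \R^{n}$ and use the identity $\mA^{\top}\mG\nabla\Phi(\oz)^{\flat(\otau)} = \sum_{k=1}^{K} s_k y_k$ on \textsc{Query}. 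Each $y_k$ is $n$-dimensional, so assembling the query output costs $O(Kn) = O(n\epsilon^{-2}\log n)$ as required; an update to a single coordinate only moves it between two blocks and modifies the two affected $y_k$'s by $\pm g_i(\mA^{\top} e_i)$, costing $O(\|e_i^{\top}\mA\|_0)$.

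To build the partition, I would round entrywise: set $\oz_i$ to be $z_i$ rounded to the nearest multiple of $\epsilon$ (giving $O(1/\epsilon)$ distinct values in $[\tfrac12, 2]$, while maintaining $\|\oz - z\|_{\infty}\le\epsilon$), and set $\otau_i$ to be $\ttau_i$ rounded to the nearest power of $1+\epsilon$ (giving $O(\epsilon^{-1}\log(m/n)) = O(\epsilon^{-1}\log n)$ distinct values in $[n/m, 2]$, while maintaining $\otau\approx_{\epsilon}\ttau$). Since $\phi'$ is a fixed scalar function, $(\nabla\Phi(\oz))_i$ depends only on $\oz_i$; hence the pair $((\nabla\Phi(\oz))_i,\,\otau_i)$ takes at most $O(\epsilon^{-2}\log n)$ joint values, and I take $I_k$ to be the corresponding level sets. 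Both the boundedness of $z$ and of $\ttau$ are used decisively here to bound the number of discretization levels.

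The main obstacle is to verify that on each $I_k$ the vector $w^{\star}\defeq\nabla\Phi(\oz)^{\flat(\otau)}$ is actually constant. Splitting the constraint $\|w\|_{\otau+\infty} = \|w\|_{\infty} + \cnorm\|w\|_{\otau}\le 1$ into $\|w\|_{\infty}\le t$ and $\cnorm\|w\|_{\otau}\le 1-t$ and dualizing the quadratic piece yields the closed form
\[
w^{\star}_i = \sign(g_i)\cdot\min\!\Big(t,\; \tfrac{|g_i|}{c\,\otau_i}\Big),
\]
where $g = \nabla\Phi(\oz)$ and the scalars $t, c \ge 0$ depend on $g$ and $\otau$ only through $m$-wide aggregates (a combined function of the total $\otau$-mass in each block and the common value of $|g_i|/\otau_i$ there). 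In particular, $t$ and $c$ are shared across all blocks, so $w^{\star}_i$ is determined by the block containing $i$ alone, furnishing the scalar $s_k$. The values of $t, c$ can be recovered during \textsc{Query} from $K$ per-block summaries in $O(K)$ time, producing all $s_k$ and, after the $O(Kn)$ assembly $\sum_k s_k y_k$, the final output.

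The remaining operations are bookkeeping. \textsc{Initialize} computes all bucket assignments and the $y_k$'s in $O(\nnz(\mA))$ time and records the running sum $\Phi(z) = \sum_i \phi(z_i)$ for \textsc{Potential}. \textsc{Update} recomputes the bucket of $i$ in $O(1)$, adjusts $y_k, y_{k'}$ in $O(\|e_i^{\top}\mA\|_0)$, and updates $\Phi(z)$ additively in $O(1)$. The delicate part I expect to be most onerous is the closed-form characterization of $w^{\star}$ above, including verifying monotonicity and uniqueness of $(t,c)$ so they can be recovered efficiently from the per-bucket summaries, and double-checking that the rounded pair $(\oz,\otau)$ meets the approximation guarantees $\|\oz-z\|_\infty\le\epsilon$ and $\otau\approx_\epsilon \ttau$ advertised in the lemma statement.
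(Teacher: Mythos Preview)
Your proposal is correct and its overall architecture matches the paper's proof: bucket jointly by rounded $z$ and rounded $\ttau$ into $K=O(\epsilon^{-2}\log n)$ classes, maintain the per-bucket aggregates $y_k=\mA^{\top}\mG\mathbf{1}_{I_k}$ (the paper calls these $w^{(k,\ell)}$), and on \textsc{Query} output $\sum_k s_k y_k$ in $O(Kn)$ time; \textsc{Potential} is a running scalar sum.

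The one substantive difference is how you establish that $\nabla\Phi(\oz)^{\flat(\otau)}$ is constant on each bucket. You derive the KKT closed form $w^{\star}_i=\sign(g_i)\min(t,\,|g_i|/(c\,\otau_i))$ and read off constancy directly, then recover the global scalars $(t,c)$ from per-bucket summaries. The paper instead proves a general symmetry lemma (\Cref{lem:projected_flat}): whenever the objective $x$ and the weight $v$ are constant on a partition, one can average any maximizer over each block without leaving the feasible set or decreasing the inner product, so a block-constant maximizer exists. It then reduces to a $K$-dimensional problem of the same shape and invokes the $O(K\log K)$ algorithm of \cite{ls19} (\Cref{lem:ball_projection}, \Cref{cor:compute_flat}) as a black box. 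Your closed form is essentially what that black box computes internally, so the two routes are equivalent; the paper's symmetry argument is cleaner to state and avoids re-deriving the thresholding algorithm, while your approach is more self-contained. One nit: finding $(t,c)$ from the block summaries generally requires sorting the $K$ thresholds, so that step is $O(K\log K)$ rather than $O(K)$, but this is dominated by the $O(Kn)$ assembly anyway.
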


\begin{algorithm2e}[t!]
\caption{Algorithm for reducing the dimension of $\nabla\Phi(\oz)^{\flat}$ 
	and maintaining $\mA^{\top}\mG\nabla\Phi(\oz)^{\flat}$ 
	(\Cref{lem:gradient_reduction}) \label{alg:gradient_reduction}}
\SetKwProg{Members}{members}{}{}
\SetKwProg{Proc}{procedure}{}{}
\Members{}{
$I^{(k,\ell)}$ partition of $[m]$. \\
$w^{(k,\ell)}\in\R^{n}$ \tcp*{Maintained to be $\mA^{\top}\mG \mathbf{1}_{i \in I^{(k,\ell)}}$}
$g,z \in \R^m$, $p \in \R$ \tcp*{$p$ is maintained to be $\Phi(z)$}
}
\Proc{\textsc{Initialize}$(\mA\in\R^{m\times n},g\in\R^{m},\ttau\in\R^{m},z\in\R^{m},\epsilon>0)$}{
	$\mA\leftarrow\mA$, $g\leftarrow g$, $z\leftarrow z$, $p \leftarrow \Phi(z)$ \\
	$I^{(k,\ell)}\leftarrow \emptyset, w^{(k,\ell)}\leftarrow \zerovec \quad \forall k=1,\ldots,\log^{-1}_{(1-\epsilon)}(n/m)$ and $\ell=0,...,(1.5/\epsilon)$\\
	\For{$i\in[1,m]$}{
		Find $k,\ell$ such that $0.5+\ell\epsilon/2\le z_{i}<0.5+(\ell+1)\epsilon/2$ and $(1-\epsilon)^{k+1}\le\ttau_{i}\le(1-\epsilon)^{k}.$\\
		Add $i$ to $I^{(k,\ell)}$ and set $\otau_{i}\leftarrow(1-\epsilon)^{k+1}$ \\
		$w^{(k,\ell)}\leftarrow w^{(k,\ell)} + \mA^{\top}g_i e_i$ \\
		}
	
	\Return $(I^{(k,\ell)})_{k,\ell\ge0}$
}
\Proc{\textsc{Update}$(i\in[m],a\in\R,b\in\R,c\in\R)$}{
	 $p \leftarrow \exp(\lambda c) + \exp(-\lambda c) - (\exp(\lambda z_i) + \exp(-\lambda z_i))$ \\
	$z_i \leftarrow c$ \\
	Find $k,\ell$ such that $i \in I^{(k,\ell)}$, then remove $i$ from $I^{(k,\ell)}$. \\
	$w^{(k,\ell)}\leftarrow w^{(k,\ell)}-\mA^{\top} g_{i} e_i$ \\
	Find $k,\ell$ such that 
		$0.5+\ell\epsilon/2\le c<0.5+(\ell+1)\epsilon/2$
		and $(1-\epsilon)^{k+1}\le b\le(1-\epsilon)^{k}$, 
		then insert $i$ into $I^{(k,\ell)}$. \\
	$w^{(k,\ell)}\leftarrow w^{(k,\ell)}+\mA^{\top}a e_i$ \\
	$g_{i}\leftarrow a$ \\
	\Return $k,\ell$
}
\Proc{\textsc{Query}$()$}{
	\Comment{Construct scaled low dimensional representation of $\nabla\Phi(\oz)$}\\
	Let $x_{k,\ell} = |I^{(k,\ell)}|\left(\lambda \exp(\lambda (0.5+\ell\epsilon/2 - 1)) -\lambda \exp(-\lambda(0.5+\ell\epsilon/2-1))\right)$ \\
	Interpret $x$ as an $O(\epsilon^{-2} \log n)$ dimensional vector. \\
	\Comment{Construct scaled low dimensional representation of $\otau$, here $C$ is the constant when define $\|\cdot\|_{\tau+\infty}$}
	Let $v$ be the vector with $v_{k,\ell} = \sqrt{|I^{(k,\ell)}|(1-\epsilon)^{k+1}}/C$. \\
	$s \leftarrow \argmax_{y:\|vy\|_{2}+\|y\|_{\infty}\leq 1}
	\left\langle x,y\right\rangle $
	via \Cref{cor:compute_flat}. \\
	\Return $s$	and $\sum_{k,l} s_{k,\ell} w^{(k,\ell)}$
}
\Proc{\textsc{Potential}$()$}{
	\Return $p$
}
\end{algorithm2e}

Before describing the low dimensional representation of $\Phi(\oz))^{\flat(\otau)}$,
we first state the following result which we will use to compute $x^{\flat(\otau)}$. 
Note this is just the steepest ascent direction of $x$ 
with respect to a custom norm, 
and when $w,v$ are two vectors of the same dimension 
we refer to $wv$ as the entry-wise product vector.

\begin{lemma}[{\cite[Algorithm 8]{ls19}}]\label{lem:ball_projection}
Given $x \in \R^n$, $v \in \R^n$, we can compute in $O(n \log n)$ time
\begin{align*}
u = \underset{ \|w\|_2 + \|w v^{-1}\|_\infty \le 1 }{ \argmax } \langle x,w \rangle.
\end{align*}
\end{lemma}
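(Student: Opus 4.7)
The plan is to reduce the problem to a one-parameter family of clipped $\ell_2$-projection subproblems and then pick the best parameter via a sort-and-sweep. First, I would exploit sign symmetry: because the constraint $\|w\|_2 + \|w v^{-1}\|_\infty \le 1$ is invariant under $w_i \mapsto -w_i$ (assuming $v_i > 0$; zero entries of $v$ force $w_i = 0$ and can be removed up front), the optimizer must satisfy $\operatorname{sign}(u_i) = \operatorname{sign}(x_i)$. So writing $\tilde{x}_i = |x_i|$ and $y_i = |u_i| \ge 0$, the problem becomes
\begin{align*}
\max \sum_i \tilde{x}_i y_i \quad \text{subject to} \quad y \ge 0, \ \|y\|_2 + \|y/v\|_\infty \le 1.
\end{align*}
Parameterize by $b = \|y/v\|_\infty \in [0,1]$, so the constraints split into $0 \le y_i \le b v_i$ and $\|y\|_2 \le 1 - b$.

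For each fixed $b$, the subproblem is a standard box-constrained $\ell_2$-projection, whose KKT conditions yield the water-filling form $y_i = \min(b v_i,\ \tilde{x}_i/\mu)$ for some Lagrange multiplier $\mu = \mu(b) \ge 0$ determined by $\|y\|_2 = 1 - b$. The crucial structural fact is that the ``clipped'' set $S_{b,\mu} = \{i : \tilde{x}_i / v_i \ge b\mu\}$ is always a prefix of the indices once they are sorted in decreasing order of $\tilde{x}_i / v_i$. Moreover, applying KKT to the \emph{original} two-norm problem (not just the fixed-$b$ subproblem), the combined stationarity condition $x_i = \lambda(y_i/\|y\|_2 + \alpha_i/v_i)$ with $\alpha_i \ge 0$ supported on the active set, together with $a + b = 1$, pins down a closed-form pair $(b, \mu)$ for each candidate prefix size $k = |S|$.

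The algorithm is therefore: sort indices by $\tilde{x}_i/v_i$ in $O(n \log n)$ time; precompute prefix sums of $v_i^2$ and suffix sums of $\tilde{x}_i^2$ in $O(n)$; then sweep $k = 0, 1, \ldots, n$, and for each $k$ solve the resulting two-equation system in $O(1)$ using the precomputed sums to obtain $(b, \mu)$, evaluate the objective $\sum_i \tilde{x}_i y_i$, and track the best $k$. Finally, reconstruct $u$ coordinate-by-coordinate in $O(n)$ by applying the water-filling formula with the winning $(b,\mu)$ and reinstating signs from $x$. Total runtime is $O(n \log n)$ dominated by the sort.

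The main obstacle will be proving correctness of the sweep, namely that for each $k$ the computed $(b, \mu)$ is consistent with the assumption that $S = \{1,\ldots,k\}$ is exactly the clipped set, i.e.\ that $\tilde{x}_k / v_k \ge b\mu \ge \tilde{x}_{k+1}/v_{k+1}$. This requires showing monotonicity of the threshold $b\mu$ as a function of $k$, so that exactly one $k$ is feasible and it corresponds to the global optimum; this follows from a direct calculation showing that moving an index from the unclipped to the clipped side relaxes the $\ell_2$ budget in a controlled, monotone way. Secondary technicalities include handling the degenerate corners $b = 0$ (pure $\ell_2$ projection), $b = 1$ (pure clipping, $\mu = \infty$), and ties in $\tilde{x}_i/v_i$, all of which are disposed of as explicit edge cases before the main sweep.
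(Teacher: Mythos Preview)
The paper does not prove this lemma; it is imported directly from \cite{ls19} (Algorithm 8 there) and simply cited as a black box, so there is no in-paper proof to compare your proposal against. Your outline---reduce by sign symmetry, parameterize by the $\ell_\infty$ budget $b$, observe that the inner problem is a box-constrained $\ell_2$ maximization whose KKT solution is water-filling with the clipped set being a prefix under the ordering by $\tilde{x}_i/v_i$, then sort once and sweep over prefix sizes---is exactly the structure of the algorithm in \cite{ls19}, and the $O(n\log n)$ cost is indeed governed by the sort. The one place where your sketch is a bit vague is the claim that for each candidate prefix size $k$ the pair $(b,\mu)$ is pinned down by a closed-form two-equation system solvable in $O(1)$ from prefix/suffix sums; in \cite{ls19} this is handled by noting that once the active set is fixed, the optimality conditions are a scalar equation in a single threshold parameter that is piecewise-quadratic and monotone, which is what makes the sweep valid. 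Your ``monotonicity of $b\mu$ in $k$'' is the right instinct, but you should be prepared to derive it explicitly rather than assert it.
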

We have the following simple corollary of \Cref{lem:ball_projection}.
\begin{corollary}\label{cor:compute_flat}
Given $x \in \R^n$, $v \in \R^n$, we can compute in $O(n \log n)$ time
\begin{align*}
u = \underset{ \|v w\|_2 + \|w\|_\infty \le 1 }{ \argmax } \langle x,w \rangle .
\end{align*}
\end{corollary}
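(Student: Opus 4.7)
The plan is to reduce Corollary~\ref{cor:compute_flat} to Lemma~\ref{lem:ball_projection} by an elementary change of variables, so that essentially no new algorithmic work is needed beyond a single invocation of the routine from \cite{ls19}.

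First I will substitute $w = w'/v$ (entrywise, assuming $v$ is strictly positive, which is the regime in which the corollary is invoked since $v_{k,\ell} = \sqrt{|I^{(k,\ell)}|(1-\epsilon)^{k+1}}/C > 0$ in \textsc{Query}). Under this substitution the two-norm constraint transforms as $\|vw\|_2 = \|w'\|_2$, the infinity-norm term becomes $\|w\|_\infty = \|w'/v\|_\infty = \|w' v^{-1}\|_\infty$, and the objective becomes $\langle x, w\rangle = \langle x, w'/v\rangle = \langle x/v, w'\rangle$. Hence
\begin{align*}
\max_{\|vw\|_2 + \|w\|_\infty \le 1} \langle x, w\rangle
\;=\;
\max_{\|w'\|_2 + \|w' v^{-1}\|_\infty \le 1} \langle x/v, w'\rangle,
\end{align*}
where both the feasible set and the objective are now exactly in the form handled by Lemma~\ref{lem:ball_projection} with inputs $(x/v, v)$.

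Second I will apply Lemma~\ref{lem:ball_projection} to the transformed instance to obtain in $O(n \log n)$ time the maximizer $w^\ast \in \R^n$ of the right-hand side. Then I will output $u = w^\ast / v$, i.e. $u_i = w^\ast_i / v_i$ for each $i \in [n]$; this final entrywise division and the initial computation of $x/v$ both take $O(n)$ time, so the overall runtime remains $O(n \log n)$. Correctness follows because the bijection $w \leftrightarrow w' = vw$ preserves feasibility and objective value, so the maximizer on one side pulls back to the maximizer on the other. There is no real obstacle here; the only thing to verify is that $v$ is coordinate-wise nonzero wherever the corollary is used, which is guaranteed by the construction of $v$ in \textsc{Query}.
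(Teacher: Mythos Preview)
Your proposal is correct and follows essentially the same approach as the paper: both perform the entrywise substitution $w' = vw$, reduce to Lemma~\ref{lem:ball_projection} applied to $(x/v, v)$, and then divide the resulting maximizer by $v$. Your extra remark that $v$ is coordinate-wise positive in the context where the corollary is invoked is a reasonable clarification not made explicit in the paper.
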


\begin{proof}
We have
\begin{align*}
\max_{\|v w\|_2 + \|w\|_\infty \le 1} \langle x,w \rangle = \max_{\|wv\|_2 + \|w\|_\infty} \langle xv^{-1},wv \rangle
=
\max_{\|w'\|_2 + \|w'v^{-1}\|_\infty} \langle xv^{-1},w' \rangle
\end{align*}
by substituting $w' = wv$.
Thus for the maximizer we have
\begin{align*}
\underset{ \|v w\|_2 + \|w\|_\infty \le 1 }{\argmax} \langle x,w \rangle
=
v^{-1} \underset{ \|w'\|_2 + \|w'v^{-1}\|_\infty }{ \argmax } \langle xv^{-1},w' \rangle,
\end{align*}
where the maximizer of on the right hand side can be computed in $O(n \log n)$ via \Cref{lem:ball_projection}
and entry-wise multiplying by $v^{-1}$ takes only $O(n)$ additional time.
\end{proof}

Recall $x^{\flat(\tau)} := \argmax_{\|w\|_{\tau+\infty} \le 1} \langle x, w \rangle$ 
and $\|w\|_{\tau+\infty} := \|w\|_\infty + C\|w\sqrt{\tau}\|_2$ for some constant $C$. 
\Cref{cor:compute_flat} allows us to compute $x^{\flat(\tau)}$ 
(using the corollary with $v=C\sqrt{\tau}$) 
for any $x$ in its original dimension. 
However, if $x$ is an $m$ dimensional vector such as $\nabla\Phi(z)$, 
then \Cref{cor:compute_flat} is too slow to be used in every iteration of the IPM.

To address this, we first discretize $z$ and $\ttau$ 
by rounding the entries to values determined by the error parameter $\epsilon$ 
to get $\oz,\otau$ which are in low dimensional representation 
(i.e. put the indices into buckets). 
In particular, $\oz$ will be the vector $z$ 
with each entry rounded down to the nearest value 
of form $0.5+\ell\epsilon/2$ for non-negative integer $\ell$, 
and $\otau$ rounds down each entry of $\ttau$ 
to the nearest power of $(1-\epsilon)$. 
It is fairly straightforward to see from definition 
this in turn also gives low dimensional representation of $\nabla\Phi(\oz)$. 
In the next lemma we show when $x$ and $\tau$ are both low dimensional, 
$x^{\flat(\tau)}$ also admits a low dimensional representation 
that can be computed efficiently, 
which let us to compute $\nabla\Phi(\oz)^{\flat(\otau)}$. 
In the following lemma, when we use a set (i.e. a bucket) of indices as a vector,
we mean it's indicator vector with $1$ at indices belong to the set and $0$ elsewhere. 

\begin{lemma}\label{lem:projected_flat}
Given $\ox,\ov,\og \in \R^k$, $x,v \in \R^n$, 
and a partition $\bigcup_{i=1}^k I^{(i)} = [n]$,
such that
$x = \sum_{i=1}^k \ox_i I^{(i)}$,
$v = \sum_{i=1}^k \ov_i I^{(i)}$
and $\og_i = \sqrt{|I^{(i)}|}$, 
let $\ou \in \R^k$ be the maximizer of
\begin{align*}
\ou := \argmax_{\ow\in\R^k:\|(\og\ov)\ow\|_2 + \|\ow\|_\infty \le 1} \langle \ox~\og^2,\ow \rangle.
\end{align*}
We can define $u \in \R^n$ via $u = \sum_{i=1}^k \ou_i I^{(i)}$, then
\begin{align*}
\langle v, u \rangle = \max_{w\in\R^n:\|vw\|_2 + \|w\|_\infty\le 1} \langle x,w \rangle.
\end{align*}
\end{lemma}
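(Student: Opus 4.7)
The plan is to reduce the $n$-dimensional constrained maximization to the $k$-dimensional one by showing that the optimum is attained at a \emph{block-constant} maximizer, i.e.\ one that is constant on each $I^{(i)}$; the low-dimensional problem then falls out algebraically.

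First I would verify that for any $w \in \R^n$ which is block-constant, i.e.\ $w = \sum_{i=1}^k \ow_i I^{(i)}$ for some $\ow \in \R^k$, both the objective and the two norm terms transform exactly into their reduced versions: using $\og_i^2 = |I^{(i)}|$ and that $x,v$ are block-constant,
\begin{align*}
\langle x, w \rangle &= \sum_i |I^{(i)}|\, \ox_i \ow_i = \langle \ox\, \og^2, \ow \rangle, \\
\|vw\|_2^2 &= \sum_i |I^{(i)}|\, \ov_i^2 \ow_i^2 = \|(\og\ov)\ow\|_2^2, \\
\|w\|_\infty &= \|\ow\|_\infty.
\end{align*}
Thus the reduced problem defining $\ou$ is exactly the restriction of the original problem to block-constant $w$; the vector $u = \sum_i \ou_i I^{(i)}$ achieves objective value $\langle \ox\, \og^2, \ou\rangle = \langle x, u\rangle$ in the original problem.

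Second, and this is the only real content, I would prove the block-averaging reduction: for any feasible $w \in \R^n$, the vector $w' \in \R^n$ defined by
\[
w'_j \;=\; \frac{1}{|I^{(i)}|}\sum_{j' \in I^{(i)}} w_{j'} \qquad \text{for } j \in I^{(i)}
\]
is block-constant, has the same objective value (because $x$ is constant on each $I^{(i)}$ so $\langle x, w\rangle$ depends only on the block sums $S_i := \sum_{j\in I^{(i)}} w_j$), and satisfies both norm constraints at least as tightly. The $\ell_\infty$ bound is immediate from $|w'_j| = |S_i|/|I^{(i)}| \le \max_{j'\in I^{(i)}} |w_{j'}|$, and for the weighted $\ell_2$ bound Cauchy--Schwarz gives $S_i^2/|I^{(i)}| \le \sum_{j'\in I^{(i)}} w_{j'}^2$, hence $\|vw'\|_2 \le \|vw\|_2$. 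Therefore the block-constant maximum equals the original maximum, and combining with the first step yields $\langle x, u\rangle = \max_{\|vw\|_2 + \|w\|_\infty \le 1} \langle x, w\rangle$ (which matches the claim up to what appears to be a typo of $v$ for $x$ on the left-hand side).

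No step should pose a real obstacle; the main thing to execute carefully is the Cauchy--Schwarz step in the block-averaging argument, and to make sure the identifications $\og_i^2 = |I^{(i)}|$ and the entrywise products in $(\og\ov)\ow$ and $\ox\, \og^2$ are tracked correctly so that the two optimization problems match term-by-term.
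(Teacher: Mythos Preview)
Your proposal is correct and follows the same overall strategy as the paper: show that block-constant vectors transform the $n$-dimensional problem exactly into the $k$-dimensional one, and then argue that restricting to block-constant vectors does not decrease the optimum. The only difference is in how the second step is executed. The paper takes any maximizer $w$ with $w_i \neq w_j$ for two indices in the same block, replaces just those two entries by their pairwise average $(w_i+w_j)/2$, checks the constraints and objective are preserved, and iterates. You instead average over the whole block at once and invoke Cauchy--Schwarz; this is a cleaner one-shot symmetrization and avoids the iteration. Both arguments are standard and equally valid. You are also right that the $\langle v,u\rangle$ on the left-hand side of the displayed conclusion is a typo for $\langle x,u\rangle$; the paper's own proof establishes the latter.
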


\begin{proof}
Consider $\ow \in \R^k$ and $w \in \R^n$ with $w = \sum_{i=1}^k \ow_i I^{(i)}$.
Then $\|w\|_\infty = \|\ow\|_\infty$ and
\begin{align*}
\|wv\|_2^2 
= 
\sum_i (w_i v_i)^2 
= 
\sum_j \sum_{i\in I^{(j)}} (\ow_j \ov_j)^2 
= 
\sum_j |I^{(j)}| (\ow_j \ov_j)^2 
= 
\|(\og\ov)\ow\|_2^2.
\end{align*}
So $\ow$ satisfies $\|(\og\ov)\ow\|_2 + \|\ow\|_\infty \le 1$ iff $w$ satisfies $\|vw\|_2 + \|w\|_\infty\le 1$.
Further, 
\begin{align*}
\langle \ox~\og^2, \ow \rangle 
= 
\sum_{i=1}^k |I^{(i)}| \ox_i \ow_i
= 
\sum_{i=1}^n x_i w_i
= 
\langle x, w \rangle.
\end{align*}
Thus
\begin{align*}
\max_{\ow\in\R^k:\|(\og\ov)\ow\|_2 + \|\ow\|_\infty \le 1} \langle \ox~\og^2,\ow \rangle
\le
\max_{w\in\R^n:\|vw\|_2 + \|w\|_\infty \le 1} \langle x,w \rangle
\end{align*}
which in turn means that $u \in \R^k$ as defined in the lemma satisfies
\begin{align*}
\langle x, u \rangle = \langle \ox~\og^2, \ou \rangle \le \max_{w\in\R^n:\|vw\|_2 + \|w\|_\infty \le 1} \langle x,w \rangle.
\end{align*}
Let $S := \{w \in \R^n : \|vw\|_2 + \|w\|_\infty \le 1$ and $w_i = w_j$ for all $i,j\in I^{(\ell)}\}$,
i.e. the set of $n$ dimensional vectors $w$ 
that has a low dimensional representation via some $\ow \in\R^k$ 
and $w = \sum_{i=1}^k \ow_i I^{(i)}$.
Then
\begin{align*}
\langle x, u \rangle = \max_{w\in S} \langle x,w \rangle.
\end{align*}
by definition of $u$.
We claim that
\begin{align*}
\max_{w\in S} \langle x,w \rangle
=
\max_{w\in\R^n:\|vw\|_2 + \|w\|_\infty \le 1} \langle x,w \rangle,
\end{align*}
which would then conclude the proof of the lemma.

So assume there exists some maximizer $w\in\R^n$ 
with $\|vw\|_2 + \|w\|_\infty \le 1$ 
and there are indices $i,j,\ell$ 
with $i,j \in I^{(\ell)}$ and $w_i \neq w_j$, 
i.e. the vector $w \notin S$. 
Then define $w' \in \R^n$ 
with $w'_i = w'_j = (w_i + w_j) / 2$.
For this vector we have 
$\|w'\|_\infty \le \|w\|_\infty$ and
\begin{align*}
\|vw'\|_2^2 
= 
\sum_{k\notin \{i,j\}} v_k^2 w_k^2 + v_i^2 (w_i + w_j)^2 / 2 
\le 
\sum_{k\notin \{i,j\}} v_k^2w_k^2 + v_i^2 w_i^2 + v_j^2 w_j^2 
= 
\|vw\|_2^2,
\end{align*}
where the first equality is because $v_i = v_j$ for $i,j$ in same $I^{\ell}$. 
So we know $w'$ also satisfies $\|vw'\|_2 + \|w'\|_\infty \le 1$. On the other hand,
\begin{align*}
\langle x, w \rangle 
= 
\sum_{k \notin \{i,j\}} x_k w_k + x_i w_i + x_i w_j
= 
\sum_{k \notin \{i,j\}} x_k w_k + x_i (w_i+w_j)/2 + x_i (w_i+w_j)/2
= 
\langle x, w' \rangle.
\end{align*}
Thus $w'$ must also be a maximizer, and if we repeat this transformation we eventually have $w' \in S$.
This concludes the proof.
\end{proof}

We now have all tools to prove \Cref{lem:gradient_reduction}.

\begin{proof}[Proof of \Cref{lem:gradient_reduction}]
The data-structure is given by \Cref{alg:gradient_reduction}.
It is straightforward to check that \textsc{Initialize} and \textsc{Update} guarantees that
the data-structure maintains the following invariants:
\begin{itemize}
\item $\bigcup_{k,\ell} I^{(k,\ell)} = [m]$ is a partition of $[m]$.
\item $i \in I^{(k,\ell)}$ if and only if 
$0.5+\ell\epsilon/2\le z_{i}<0.5+(\ell+1)\epsilon/2$
and $(1-\epsilon)^{k+1}\le\ttau_{i}\le(1-\epsilon)^{k}$.
\item $w^{(k,\ell)} = \mA^{\top}\mG \cdot I^{(k,\ell)} \in \R^n$.
\item $p = \sum_i \exp(\lambda z_i) + \exp(-\lambda z_i) = \Phi(z)$.
\end{itemize}
Moreover, our rounding of $z,\ttau$ gives 
$O(\epsilon^{-1}\log_{1-\epsilon}^{-1}(n/m))$ dimensional representations 
because of the assumption $0.5\leq z\leq 2$ and $n/m\leq \ttau\leq 2$, 
and is simply $O(\epsilon^{-2}\log n)$. 

Next, let us analyze the function \textsc{Query}. 
We can interpret the decomposition $\bigcup_{k,\ell} I^{(k,\ell)} = [m]$ 
as a decomposition of the form $\bigcup_{t} I^{(t)} = [m]$ by renaming the indices. 
Recall $\oz,\otau$ are low dimensional approximations of $z,\ttau$ with each entry rounded. 
It is straightforward to see the vector $x$ constructed in the first line of \textsc{Query} 
is just the low dimensional representation of $\nabla\Phi(\oz)$ 
where each entry is scaled appropriately by the $|I^{(k,\ell)}|$'s, 
so the $x$ we constructed corresponds to $\ox\og$ in \Cref{lem:projected_flat}. 
Similarly, the $v$ vector constructed in \textsc{Query} 
is the low dimensional representation of $\sqrt{\otau}$ 
scaled appropriately to serve as the $\og\ov$ in \Cref{lem:projected_flat}. 
Thus by \Cref{lem:projected_flat} the vector $s$ obtained in \textsc{Query} is the vector
such that $s' := \sum_{t} s^{(t)} I^{(t)} \in \R^m$ satisfies
\begin{align*}
\langle s' , \nabla \Phi(\oz) \rangle
=
\max_{C\|\sqrt{\otau}y\|_2 + \|y\|_\infty \le 1} \langle \nabla \Phi(\oz), y\rangle,
\end{align*}
That is, $s'=\nabla\Phi(\oz)^{\flat(\otau)}$, 
and the returned $s$ is a low dimensional representation of it. 
Moreover, $\sum_{k,l} s_{k,\ell} w^{(k,\ell)}$ is exactly 
$\mA^{\top}\mG \nabla\Phi(\oz)^{\flat(\otau)}$ 
because we maintained 
$w^{(k,\ell)} = \mA^{\top}\mG \cdot I^{(k,\ell)} \in \R^n$.

\paragraph{Complexity}
The complexity of \textsc{Initialize} is dominated by the initialization of $w^{(k,\ell)}$'s, 
which takes $O(\nnz(A))$ since we just partition the (scaled) rows of $\mA$ 
into buckets according to our rounding and sum up the vectors in each bucket. 
The complexity of \textsc{Update}$(i, \cdot, \cdot, \cdot)$ is $O(\|e_i^\top \mA \|_0)$, 
i.e. the cost of maintaining 
$w^{(k,\ell)} = \mA^{\top}\mG \cdot I^{(k,\ell)} \in \R^n$. 
The time for \textsc{Potential} is $O(1)$ since it just returns the value $p$. 

As to the complexity of \textsc{Query}, 
as we work with $O(\epsilon^{-2}\log n)$ dimensional vectors in \textsc{Query}, 
the time complexity according to \Cref{lem:projected_flat} is 
$O(\epsilon^{-2} \log n \log(\epsilon^{-2} \log n))$ 
for computing the maximizer $s$.
Constructing $\mA^\top \mG (\nabla \Phi(\oz))^{\flat(\otau)} = \sum_{k,\ell} w^{(k,\ell)} s^{(k, \ell)}$ 
takes $O(n \epsilon^{-2} \log n)$ time, 
which subsumes the first complexity, 
assuming $\epsilon$ is polynomial in $1/n$.
Thus \textsc{Query} takes $O(n \epsilon^{-2} \log n)$ time.
\end{proof}
\subsection{Primal Solution Maintenance}
\label{sec:gradient:accumulator}

In this section we discuss the data structure that effectively maintains an approximation of the sum of all gradients computed so far, which is captured by \Cref{lem:gradient_accumulator}. 
The input to the data structure is the $O(\epsilon^{-2} \log n)$ dimensional representations of the gradients of each iteration,
which is computed by the data structure of the previous subsection. 
At the end of this subsection we will combine \Cref{lem:gradient_reduction} and the following \Cref{lem:gradient_accumulator} to obtain the main result \Cref{thm:gradient_maintenance}, 
which allows us to efficiently maintain an approximation $\ox$ of the primal solution.
\begin{lemma}
\label{lem:gradient_accumulator} 
There exists a deterministic data-structure
that supports the following operations
\begin{itemize}
\item $\textsc{Initialize }(x^{\init}\in\R^m,g\in\R^m,(I_k)_{1\le k \le K},\epsilon>0)$:
	The data-structure initialized on the given vectors $x^{\init},g\in\R^m$,
	the partition $\bigcup_{k=1}^K I_k = [m]$ where $K=O(\epsilon^{-2}\log n)$, and the accuracy parameter $\epsilon > 0$	in $O(m)$ time.
\item $\textsc{Scale}(i \in [m], a \in \R)$: 
	Sets $g_{i}\leftarrow a$ in $O(\log n)$ amortized time.
\item $\textsc{Move}(i \in [m],k\in[1,K])$: 
	Moves index $i$ to set $I_k$ in $O(\log n)$ amortized time.
\item $\textsc{Query}(s \in \R^K, h \in \R^m)$: 
	Let $g^{(\ell)}$ be the state of vector $g$ during the $\ell$-th call to \textsc{Query}
	and let $s^{(\ell)}$ and $h^{(\ell)}$ be the input arguments of the respective call. The vector $h$ will always be provided as a sparse vector so that we know where are the non-zeros in the vector. Define $y^{(\ell)} = \mG^{(\ell)} \sum_{k=1}^K I_k^{(\ell)} s_k^{(\ell)}$ and $x^{(t)} = x^{\init} + \sum_{\ell=1}^t h^{(\ell)} + y^{(\ell)}$,
	then the $t$-th call to \textsc{Query} returns a vector $\ox \approx_\epsilon x^{(t)}$.
	After $T$ calls to \textsc{Query}, the total time of all $T$ calls is bounded by
	\begin{align*}O\left(
		TK+\log n\cdot \sum_{\ell=0}^T \|h^{(\ell)}\|_0 + T \log n\cdot \sum_{\ell=1}^T \|y^{(\ell)} / x^{(\ell-1)}\|_2^2 / \epsilon^2
	\right).\end{align*}
	The vector $\ox \in \R^m$ is returned as a pointer 
	and additionally a set $J \subset [m]$ is returned 
	that contains the indices where $\ox$ changed compared to the result of the previous \textsc{Query} call.
\item $\textsc{ComputeExactSum}()$:
	Returns the current exact vector $x^{(t)}$ in $O(m\log n)$ time.
\end{itemize}
\end{lemma}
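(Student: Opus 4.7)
The plan is to maintain $\ox$ lazily, refreshing each entry $\ox_i$ only when it is in danger of violating the approximation guarantee $\ox \approx_\epsilon x^{(t)}$. The key structural observation is that if index $i$ has remained in bucket $k$ with fixed weight $g_i$ since the iteration $\ell_0$ at which $\ox_i$ was last refreshed, and all intervening $h_i^{(\ell)}$ are zero, then the accumulated change is simply $x_i^{(t)}-\ox_i = g_i(\sigma_k^{(t)}-\sigma_k^{(\ell_0)})$, where $\sigma_k^{(t)} := \sum_{\ell=1}^t s_k^{(\ell)}$ is a per-bucket running sum. I therefore maintain $\sigma_k$ for each of the $K$ buckets (cost $O(K)$ per \textsc{Query}), a sparsely-stored cumulative $\hat h^{(t)} := \sum_\ell h^{(\ell)}$ (cost proportional to $\|h^{(\ell)}\|_0$ per \textsc{Query}), and per-index bookkeeping $(\ell_0(i), g_i, k_i, \ox_i, \hat h_i^{(\ell_0)}, \sigma_{k_i}^{(\ell_0)})$ so that the exact $x_i^{(t)}$ can be reconstructed in $O(1)$ time given the global running sums.

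To detect when $\ox_i$ is about to lose the $\approx_\epsilon$ guarantee, I attach to each $i$ two \emph{threshold values} $c_i \pm T_i$ with $c_i = \sigma_{k_i}^{(\ell_0(i))}$ and $T_i \asymp \epsilon |\ox_i|/|g_i|$; once $\sigma_{k_i}^{(t)}$ moves outside $[c_i - T_i, c_i + T_i]$, the bucket contribution alone has pushed $x_i$ beyond tolerance. Within each bucket I store two balanced BSTs, one keyed on $c_i + T_i$ and one on $c_i - T_i$. On each \textsc{Query}$(s,h)$, after updating the $\sigma_k$ values and applying non-zero entries of $h$ to $\hat h$, for every bucket $k$ with $s_k \neq 0$ I perform a one-sided range query on its BSTs to extract all indices whose threshold has been crossed. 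Each such $i$ has $\ox_i$ refreshed to the exact current value, is reinserted with fresh $(c_i, T_i)$, and is added to the output set $J$. Non-zero entries of $h^{(\ell)}$ are processed analogously: for each affected $i$ I test whether the accumulated change has broken tolerance and, if so, refresh. The operations \textsc{Scale}$(i,a)$ and \textsc{Move}$(i,k)$ also force a refresh of $\ox_i$ together with a BST reinsertion into the (possibly new) bucket, costing $O(\log n)$ per call. Finally, \textsc{ComputeExactSum} sweeps all $m$ indices, reconstructing each $x_i^{(t)}$ via the stored data at $O(\log n)$ amortized per index.

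Correctness will follow readily from the threshold invariant, since a refresh makes $\ox_i = x_i^{(t)}$ exactly and the thresholds were chosen so that no index can drift by more than a constant factor of $\epsilon |\ox_i|$ between refreshes. The main obstacle I expect is matching the amortized \textsc{Query} bound involving $T \log n \cdot \sum_{\ell=1}^T \|y^{(\ell)}/x^{(\ell-1)}\|_2^2/\epsilon^2$. The intuition is that each refresh triggered by a threshold crossing requires the relative squared change $((x_i^{(t)}-\ox_i)/\ox_i)^2$ to have exceeded $\Omega(\epsilon^2)$, and since across a stretch of iterations this relative squared change telescopes into sums of $\|y^{(\ell)}/x^{(\ell-1)}\|_2^2$, the total number of threshold-induced refreshes (hence the total BST and $J$-output work) is controlled as claimed. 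Making this precise, in particular attributing the crossings detected in a single \textsc{Query} batch to the energy contributed by that query rather than only to the cumulative energy, will likely require a careful bucketed charging argument in the same spirit as the one underlying \Cref{thm:vector_maintenance} in \Cref{sec:vector_maintenance}.
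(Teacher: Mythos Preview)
Your proposal is essentially identical to the paper's construction (\Cref{alg:gradient_accumulator}): per-bucket running sums $f^{(t)}_k = \sigma_k^{(t)}$, per-index thresholds $\Delta_i^{(high/low)} = c_i \pm T_i$ with $T_i \asymp \epsilon|\ox_i|/|g_i|$, and sorted lists (your BSTs) within each bucket to extract crossed indices in output-sensitive time. The only minor deviations are that the paper simply refreshes $\ox_i$ unconditionally whenever $h_i^{(\ell)} \neq 0$ (rather than testing tolerance first), which keeps the invariant cleaner, and the amortized charging is just the direct Cauchy--Schwarz bound you sketch---it does not need the multi-level batching machinery of \Cref{thm:vector_maintenance}, since here the bucket structure already makes each exact $x_i^{(t)}$ reconstructible in $O(1)$.
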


The implementation of the data structure is given in \Cref{alg:gradient_accumulator}. 
At a high level, each iteration $t$ (i.e. \textsc{Query} call) 
the vector $x$ accumulates an update of $h^{(t)} + y^{(t)}$. 
For the $h$ component, we always include the update to our approximation $\ox$ as soon as we see $h$. 
This is because the $h$ provided by our IPM will be sparse, 
and we can afford to spend $\tilde{O}(\|h\|_0)$ each iteration to carry out the update. 
For the $y$ component in the update, we use a "`lazy"' update idea.
In particular, we maintain a vector $\hat{\ell}\in \R^m$ 
where $\hat{\ell}_i$ stores the most recent iteration $t$ (i.e. \textsc{Query} call) 
when $\ox_i$ is updated to be the exact value $x^{(t)}_i$ for each index $i$. 
We update $\ox_i$ to be the exact value (in \textsc{ComputeX}) 
whenever $h_i$ is non-zero, $g_i$ is scaled, or $i$ is moved to a different set $I_k$. 
If none of these events happen, the accumulated update to $x_i$ since iteration $\hat{\ell}_i$ 
will have a very simple form, so we just store the accumulated updates up to each iteration $t$ in a vector $f^{(t)}$.
To detect when the accumulated update on $x_i$ has gone out of the approximation bound $\epsilon \ox_i$ 
since the last time $\ox_i$ is updated, 
we use $\Delta^{(high)}_i$ and $\Delta^{(low)}_i$ 
to store the upper and lower boundary of the approximation range around $\ox_i$, 
and update $\ox_i$ once the accumulated updates goes beyond the range. 
We proceed with the proof of our result.

\begin{algorithm2e}[t!]
\caption{Algorithm for accumulating $\mG \nabla\Phi(\ov)^{\flat}$ 
(\Cref{lem:gradient_accumulator}) \label{alg:gradient_accumulator}}
\SetKwProg{Members}{members}{}{}
\SetKwProg{Proc}{procedure}{}{}
\SetKwProg{Priv}{private procedure}{}{}
\Members{}{
	$I_1,...,I_K$ \tcp*{Partition $\bigcup_k I_k = [m]$}
	$t \in \N, \ox \in \R^m$ \tcp*{\textsc{Query} counter and approximation of $x^{(t)}$}
	$\hat{\ell}\in\N^{m}$ \tcp*{$\hat{\ell}_i$ is value of $t$ when we last update $\ox_i \leftarrow x_i$}
	$f^{(t)} \in \R^K$ \tcp*{Maintain $f^{(t)}=\sum_{k=1}^t s^{(k)}$}
	$\Delta^{(high)},\Delta^{(low)} \in \R^m$ \tcp*{Maintain $\Delta_i = f^{(\hat{\ell}_i)}_k\pm |\epsilon \ox_i / (10 g_i)| $ if $i\in I_k$}
}
\Proc{\textsc{Initialize}$(x^{\init} \in \R^m, g \in \R^m, (I_k)_{1\le k \le K})$}{
	$\ox \leftarrow x^{\init}$, 
	$(I_k)_{1\le k \le K} \leftarrow (I_k)_{1\le k \le K}$,
	$t \leftarrow 0$,
	$f^{(0)} \leftarrow \zerovec_K$,
	$w \leftarrow \zerovec_m$, 
	$g \leftarrow g$
}
\Priv{\textsc{ComputeX}$(i,h_i)$}{
	Let $k$ be such that $i \in I_k$ \\
	$\ox_i \leftarrow \ox_i + g_i \cdot (f^{(t)}_k -  f^{(\hat{\ell}_i)}_k)+ h_i$  \label{line:update_bin}\\
	$\hat{\ell}_i \leftarrow t$ \\
}
\Priv{\textsc{UpdateDelta}$(i)$}{
	Let $k$ be such that $i \in I_k$. \\
	$\Delta^{(high)}_i \leftarrow  f^{(\hat{\ell}_i)}_k+|\epsilon \ox_i / (10 g_i)| $ \\
	$\Delta^{(low)}_i \leftarrow  f^{(\hat{\ell}_i)}_k -|\epsilon \ox_i / (10 g_i)| $
}
\Proc{\textsc{Move}$(i\in[m], k)$}{
	\textsc{ComputeX}$(i,0)$ \\
	Move index $i$ to set $I_k$ \\
	$\textsc{UpdateDelta}(i)$
}
\Priv{\textsc{Scale}$(i, a)$}{
	\textsc{ComputeX}$(i,0)$ \\
	$g_i \leftarrow a$ \\
	$\textsc{UpdateDelta}(i)$
}
\Proc{\textsc{Query}$(s \in \R^K, h \in \R^m)$}{
	$t \leftarrow t + 1$, $J \leftarrow \emptyset$ \\
	$f^{(t)} \leftarrow f^{(t-1)} + s$\label{line:group_bin_ft} \\
	\For{$i$ such that $h_i \neq 0$}{
		\textsc{ComputeX}$(i,h_i)$, $\textsc{UpdateDelta}(i)$, $J \leftarrow J \cup \{ i\}$ \label{line:nonzero_hi}
	}
	\For{$k=1,\ldots,K$}{
		\For{$i \in I_k$ with $f^{(t)}_k > \Delta^{(high)}_i$ or $f^{(t)}_k < \Delta^{(low)}_i$}{
			\textsc{ComputeX}$(i,0)$, $\textsc{UpdateDelta}(i)$, $J \leftarrow J \cup \{ i\}$ \label{line:large_change_x_f}
		}
	}

	\Return $\ox$, $J$
}
\Proc{\textsc{ComputeExactSum}$()$}{
	\lFor{$i\in[m]$ and $\hat{\ell}_i<t$}{
		$\textsc{ComputeX}(i,0)$, $\textsc{UpdateDelta}(i)$
	}
	\Return $\ox$
}
\end{algorithm2e}

\begin{proof}[Proof of \Cref{lem:gradient_accumulator}]
	We start by analyzing the correctness.
	\paragraph{Invariant}
	Let $s^{(t)},h^{(t)},g^{(t)},I_k^{(t)}$ be the state of $s,h,g,I_k$ during the $t$-th call to \textsc{Query}
	and by definition of $x^{(t)}$ we have for any index $i$ that
	\begin{align*}
	x^{(t)}_i = x^{\init}_i +\sum_{\ell=1}^{t}g^{(\ell)}_i \left(\sum_{k=1}^K  s^{(\ell)}_k \mathbf{1}_{i\in I_k^{(\ell)}}\right) + h^{(\ell)}_i.
	\end{align*}
	It is easy to check that $\hat{\ell}_i$ always store the most recent iteration when $\ox_i$ is updated by \textsc{ComputeX}$(i,h_i)$. We first prove by induction that this update is always calculated correct, that is, the data-structure maintains the invariant
	$\ox_i = x^{(\hat{\ell}_i)}_i$.

We see from \Cref{line:group_bin_ft} that the data-structure maintains $f^{(t)} = \sum_{k=1}^t s^{(k)}$.
	Further note that \textsc{ComputeX}$(i,h_i)$ is called 
	whenever $h_i$ is non-zero (\Cref{line:nonzero_hi}), $g_i$ is changed, or $i$ is moved to a different $I_k$. Thus if $\hat{\ell}_i < t$, we know none of these events happened during iteration $\ell\in (\hat{\ell}_i,t]$ and the only moving part is the $s^{(\ell)}$'s over these iterations, which is exactly $f^{(t)}-f^{(\hat{\ell}_i)}$.
	Thus, if $k$ is the set $I_k$ where $i$ belongs to over iterations $(\hat{\ell}_i,t]$,   the execution of \Cref{line:update_bin} gives
	\begin{align*}
	&~
	g_i \cdot (f^{(t)}_k - f^{(\hat{\ell}_i)}_k) + h^{(t)}_i
	=
	g_i \sum_{\ell=\hat{\ell}_i+1}^t s^{(\ell)}_k + h^{(t)}_i 
	=
	h^{(t)}_i + \sum_{\ell=\hat{\ell}_i+1}^t g^{(\ell)}_i s^{(\ell)}_k \\
	=&~
	h^{(t)}_i + \sum_{\ell=\hat{\ell}_i+1}^t g^{(\ell)}_i \left(\sum_{k=1}^K  s^{(\ell)}_k \mathbf{1}_{i\in I_k^{(\ell)}}\right) 
	=
	\sum_{\ell=\hat{\ell}_i+1}^t g^{(\ell)}_i \left(\sum_{k=1}^K  s^{(\ell)}_k \mathbf{1}_{i\in I_k^{(\ell)}}\right) + h^{(\ell)}_i
	\end{align*}
	where the first equality uses $ f^{(t)}=\sum_{\ell=1}^t s^{(\ell)}$ and the second equality uses $g^{(\ell)}_i=g^{(t)}_i$ for all $\hat{\ell}_i < \ell \le t$,
	because \textsc{ComputeX}$(i,h_i)$ is called whenever $g_i$ is changed.
	The third equality is because \textsc{ComputeX}$(i,h_i)$ is called whenever $i$ is moved to a different set, 
	so $i \in I_k^{(\ell)}$ for the same $k$ for all $\hat{\ell}_i < \ell \le t$.
	The last equality uses $h^{(\ell)}_i = 0$ for $\hat{\ell}_i < \ell < t$, because \textsc{ComputeX}$(i,h_i)$ is called whenever $h_i$ is non-zero.
	Thus by induction over the number of calls to \textsc{ComputeX}$(i,h_i)$, when $\hat{\ell}_i$ is increased to $t$ we have
	\begin{align*}
	\ox_i = 	
	x^{(t)}_i = x^{\init}_i +\sum_{\ell=1}^{t}g^{(\ell)}_i \left(\sum_{k=1}^K  s^{(\ell)}_k \mathbf{1}_{i\in I_k^{(\ell)}}\right) + h^{(\ell)}_i	 = x^{(t)}_i,
	\end{align*}
	so the invariant is always maintained.
	\paragraph{Correctness of Query.}
	We claim that the function \textsc{Query} returns a vector $\ox$ such that for all $i$ 
	\begin{align*}\ox_i \approx_\epsilon x^{(t)}_i := x^{\init}_i + \sum_{\ell=1}^{t}g^{(\ell)}_i \left(\sum_{k=1}^K  s^{(\ell)}_k \mathbf{1}_{i\in I_k^{(\ell)}}\right) + h^{(\ell)}_i.\end{align*}
	Given the invariant discussed above, we only need to guarantee \textsc{ComputeExact}$(i,h_i)$ is called whenever the approximation guarantee is violated for some $i$. Moreover, same as when we proved the invariant above, we only need to guarantee this in the case that since iteration $\hat{\ell}_i$, $h_i$ is always $0$, $g_i$ remains constant and $i$ remains in the same $I_k$ for some $k$. Thus, the task is equivalent to detect whenever 
	\begin{align*}
	\left| \sum_{\ell=\hat{\ell}_i+1}^{t} g^{(\ell)}_i s^{(\ell)}_k \right| = |g_i \cdot (f^{(t)}_k - f^{(\hat{\ell}_i)}_k)|> \frac{\epsilon|\ox_i|}{10}.
	\end{align*}
	which is the same as 
	\begin{align*}
	f^{(t)}_k \notin [f^{(\hat{\ell}_i)}_k - |(\epsilon x_i)/(10 g_i)|,f^{(\hat{\ell}_i)}_k + |(\epsilon x_i)/(10 g_i)|] 
	\end{align*}
	Note that the lower and upper limits in the above range are exactly $\Delta^{(inc)}_i$ and $\Delta^{(dec)}_i$	as maintained by \textsc{UpdateDelta}$(i)$, which will be called whenever any of the terms involved in the calculation of these limits changes.
	Thus \Cref{line:large_change_x_f} makes sure that we indeed maintain
	\begin{align*}\ox_i \approx_\epsilon  x^{(t)}_i \qquad \forall i.\end{align*} Also it is easy to check the returned set $J$ contains all $i$'s such that $\ox_i$ changed since last \textsc{Query}.
	
	\paragraph{Complexity}
	The call to \textsc{Initialize} takes $O(m+K)$ as we initialize a constant number of $K$ and $m$ dimensional vectors, and this reduces to $O(m)$ since there can be at most $m$ non-empty $I_k$'s.
	A call to \textsc{ComputeX} takes $O(1)$ time.
	
	To implement \Cref{line:large_change_x_f} efficiently without enumerating all $m$ indices, we maintain for each $k\in[K]$ two sorted lists of the $i$'s in $I_k$, 
	sorted by $\Delta_i^{(high)}$ and $\Delta_i^{(low)}$ respectively.
	Maintaining these sorted lists results in $O(\log n)$ time per call for \textsc{UpdateDelta}.
	Hence \textsc{Move} and \textsc{Scale} also run in $O(\log n)$ time.
	To implement the loop for \Cref{line:large_change_x_f} we can go through the two sorted lists in order, but stop as soon as the check condition no longer holds. This bounds the cost of the loop by $O(K)$ plus $O(\log n)$ times the number of indices $i$ satisfying $f^{(t)}_k > \Delta^{(high)}_i$ or $f^{(t)} < \Delta^{(low)}_i$,
	i.e. $|f^{(t)}_k - f^{(\hat{\ell}_i)}_k| > \Theta(\epsilon x^{(\hat{\ell}_i)}_i)$.
	Note if a \textsc{ComputeX} and \textsc{UpdateDelta} is triggered by this condition for any $i$, $h_i$ must be $0$ during $(\hat{\ell}_i,t]$ iterations. Thus, let $z^{(t)} := x^{\init} + \sum_{\ell=1}^t \mG^{(\ell)} \sum_k I_k^{(\ell)} s_k^{(\ell)}$, we can rewrite that condition as
	$|z^{(t)}_i - z^{(\hat{\ell}_i)}_i| > \Theta(|\epsilon x^{(\hat{\ell}_i)}_i|)$.
	Throughout $T$ calls to \textsc{Query}, 
	we can bound the total number of times where $i$ satisfies 
	$|z^{(t)}_i - z^{(\hat{\ell}_i)}_i| > \Theta(|\epsilon x^{(\hat{\ell}_i)}_i|)$
	by
	\begin{align*}
	O\left(T \sum_{\ell=1}^T \| \mG^{(\ell)}(\sum_k I_k^{(\ell)} s_k^{(\ell)}) / x^{(\ell-1)} \|_2^2 / \epsilon^2\right).
	\end{align*}
	The number of times \textsc{ComputeX} and \textsc{UpdateDelta} are triggered due to $h^{(t)}_i\neq 0$ is $\|h^{(t)}\|_0$ each iteration, and updating $f^{(t)}$ takes $O(K)$ time. So the total time for $T$ calls to \textsc{Query} can be bounded by
	\begin{align*}
	O(TK + \log^n\cdot\sum_{\ell=0}^T \|h^{(\ell)}\|_0 + \log^n \cdot T\sum_{\ell=1}^T \|\mG^{(\ell)} (\sum_k s_k^{(\ell)I_k^{(\ell)} }) / x^{(\ell-1)}\|_2^2 / \epsilon^2).
	\end{align*}
	The time for \textsc{ComputeExactSum} is $O(m\log n)$ since it just calls \textsc{ComputeX} and \textsc{UpdateDelta} on all $m$ indices.
\end{proof}
\begin{proof}[Proof of \Cref{thm:gradient_maintenance}] 
The data-structure for \Cref{thm:gradient_maintenance} follows directly by combining
\Cref{lem:gradient_reduction} and \Cref{lem:gradient_accumulator}.
The result for \textsc{QueryProduct} is obtained from \Cref{lem:gradient_reduction},
and the result for \textsc{QuerySum} is obtained from \Cref{lem:gradient_accumulator} using the vector $s \in \R^K$ returned by \Cref{lem:gradient_reduction} 
as input to \Cref{lem:gradient_accumulator}.
\end{proof}

Note that for our IPM variant using the steepest descent direction in $\ell_2$ norm, we use a simple normalized $\nabla\Phi(\oz)/\|\nabla\Phi(\oz)\|_2$ instead of $(\nabla \Phi(\oz))^{\flat(\ttau)}$. 
Here 
$$
	\nabla\Phi(\oz)/\|\nabla\Phi(\oz)\|_2 = \argmax_{\|w\|_2 \le 1} \langle \nabla\Phi(\oz), w \rangle
$$ is analogous to 
$$
	(\nabla \Phi(\oz))^{\flat(\otau)}= \argmax_{\|w\|_{\otau+\infty} \le 1} \langle \nabla\Phi(\oz), w \rangle,
$$
as these are both steepest ascent directions of $\nabla\Phi(\oz)$ but in different norms. 
It is straightforward to see how to simplify and extend the data structure of \Cref{thm:gradient_maintenance} to work for this setting. We state without proof the following variant. Note in this case we only need to round $z$ so the low dimensional representation is in $O(\epsilon^{-1})$ instead of $O(\epsilon^{-2}\log n)$ dimension. 
\begin{theorem}
\label{thm:gradient_maintenance_simple} 
There exists a deterministic data-structure
that supports the following operations
\begin{itemize}
\item $\textsc{Initialize }(\mA\in\R^{m\times n},x^{\init} \in \R^m, g\in\R^{m},z\in\R^{m},\epsilon>0)$:
	The data-structure preprocesses the given matrix $\mA\in\R^{m\times n}$,
	vectors $x^{\init},g,z\in\R^{m}$, and accuracy parameter $\epsilon>0$
	in $\tilde{O}(\nnz(\mA))$ time. 
	The data-structure assumes $0.5\le z\le2$.
\item $\textsc{Update}(i \in [m], a \in \R, c \in \R)$: 
	Sets $g_{i}\leftarrow a$ and $z_i \leftarrow c$ in $O(\|e_i^\top \mA\|_0)$ time. 
	The data-structure assumes $0.5\le z\le2$.%
\item $\textsc{QueryProduct}()$: 
	Returns $\mA^{\top}\mG\nabla\Phi(\oz)/\|\nabla\Phi(\oz)\|_2 \in \R^n$ for some $\oz \in \R^m$ 
	with $\|\oz-z\|_{\infty}\le \epsilon$.
	Every call to \textsc{QueryProduct} must be followed by a call to \textsc{QuerySum},
	and we bound their complexity together (see \textsc{QuerySum}). 
\item $\textsc{QuerySum}(h \in \R^m)$:
	Let $v^{(\ell)}$ be the vector $\mG\nabla\Phi(\oz)/\|\nabla\Phi(\oz)\|_2$ used for the result of the $\ell$-th call to \textsc{QueryProduct}.
	Let $h^{(\ell)}$ be the input vector $h$ given to the $\ell$-th call to \textsc{QuerySum}.
	We define 
	\begin{align*}  x^{(t)} := x^{\init} + \sum_{\ell=1}^{t} v^{(\ell)} + h^{(\ell)}.  \end{align*}
	Then the $t$-th call to \textsc{QuerySum} returns a vector $\ox \in \R^m$ with $\ox \approx_\epsilon x^{(t)}$.
	
	Assuming the input vector $h$ in a sparse representation (e.g. a list of non-zero entries), 
	after $T$ calls to \textsc{QuerySum} and \textsc{QueryProduct} the total time for all calls together is bounded by
	\begin{align*}O\left(
		T n \epsilon^{-1} \log (1/\epsilon)
		+ \log n \cdot \sum_{\ell=0}^T \|h^{(\ell)}\|_0 
		+ T \log n \cdot \sum_{\ell=1}^T \|v^{(\ell)}/x^{(\ell-1)}\|_2^2 / \epsilon^2
	\right).\end{align*}
	
	The output $\ox \in \R^m$ is returned in a compact representation to reduce the size. In particular, the data-structure returns a pointer to $\ox$ 
and a set $J \subset [m]$ of indices which specifies which entries of $\ox$ have changed 
	between the current and previous call to \textsc{QuerySum}.
\item $\textsc{ComputeExactSum}()$
	Returns the exact $x^{(t)}$ in $O(m\log n)$ time.
\item $\textsc{Potential}()$
	Returns $\Phi(\oz)$ in $O(1)$ time.
\end{itemize}
\end{theorem}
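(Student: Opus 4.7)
The plan is to mirror the two-layer structure used in the proof of \Cref{thm:gradient_maintenance}, but with both layers simplified to exploit the easier $\ell_2$ geometry. Concretely, I would combine a \emph{gradient reduction} layer, analogous to \Cref{lem:gradient_reduction}, with a \emph{primal accumulator} layer that is essentially \Cref{lem:gradient_accumulator} used as a black box. The accumulator layer takes as input a low-dimensional representation $s \in \R^K$ of the current step direction and a sparse vector $h$, and maintains the running sum $x^{(t)}$ up to multiplicative accuracy $\epsilon$ in time charged by $\|h\|_0$, $K$, and $\|v^{(\ell)}/x^{(\ell-1)}\|_2^2/\epsilon^2$. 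Since this layer is agnostic to how the low-dimensional representation was produced, I would reuse it verbatim; all the work therefore lies in redesigning the gradient reduction layer.

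For the reduction layer, I would keep a single discretization: round each $z_i$ down to the nearest multiple of $\epsilon/2$, giving a partition $\bigcup_{k=1}^K I_k = [m]$ with $K = O(\epsilon^{-1})$ since $0.5 \le z \le 2$. For each bucket $k$ maintain the precomputed vector $w^{(k)} = \ma^\top \mg \cdot \mathbf{1}_{I_k} \in \R^n$, exactly as in \Cref{alg:gradient_reduction}. \textsc{Initialize} and \textsc{Update} are then the obvious simpler versions of their counterparts in \Cref{lem:gradient_reduction}; in particular \textsc{Update}$(i,a,c)$ costs $O(\|e_i^\top \ma\|_0)$ since each row of $\ma$ contributes to exactly one $w^{(k)}$. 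I would also maintain $p = \Phi(\oz)$ incrementally to support \textsc{Potential} in $O(1)$ time.

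For \textsc{QueryProduct}, since $\og := \nabla\Phi(\oz)/\|\nabla\Phi(\oz)\|_2$ is constant on each bucket, it admits a low-dimensional representation $s \in \R^K$ where $s_k$ equals the common value of $\og_i$ for $i \in I_k$. Concretely, if $\ov \in \R^K$ collects the rounded $z$-value of each bucket and $n_k = |I_k|$, then the bucketed components $u_k \defeq \phi'(\ov_k)$ of $\nabla\Phi(\oz)$ satisfy $\|\nabla\Phi(\oz)\|_2^2 = \sum_k n_k u_k^2$, so I can compute this norm and hence every $s_k = u_k / \|\nabla\Phi(\oz)\|_2$ in $O(K) = O(\epsilon^{-1})$ time. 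The returned vector $\ma^\top \mg \og = \sum_k s_k w^{(k)}$ is then assembled in $O(nK) = O(n\epsilon^{-1})$ time; the extra $\log(1/\epsilon)$ factor absorbs bookkeeping for sorted bucket lists. Feeding this $s$, together with the user-supplied $h$, into the accumulator of \Cref{lem:gradient_accumulator} handles \textsc{QuerySum} and \textsc{ComputeExactSum} directly, and the final complexity bound follows by summing the three contributions: $Tn\epsilon^{-1}\log(1/\epsilon)$ from the \textsc{QueryProduct} calls, $\log n \cdot \sum_\ell \|h^{(\ell)}\|_0$ from the sparse $h$-updates in the accumulator, and $T \log n \cdot \sum_\ell \|v^{(\ell)}/x^{(\ell-1)}\|_2^2/\epsilon^2$ from change detection in the accumulator.

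Overall no step looks genuinely hard: the $\ell_2$ maximizer is trivial (just normalization), so the analogue of \Cref{lem:projected_flat} and the call to \Cref{cor:compute_flat} are simply not needed. The only mild subtlety is verifying that the accumulator's guarantees are stated generically enough to accept the scaled bucket representation $s$ together with an arbitrary sparse $h$, but this is exactly the interface already provided by \Cref{lem:gradient_accumulator}. Thus the main work of the proof is essentially bookkeeping: specializing the proof of \Cref{thm:gradient_maintenance} with one coordinate of discretization suppressed.
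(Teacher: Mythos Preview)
Your proposal is correct and matches the paper's approach. The paper itself does not give a formal proof of this theorem; it simply remarks that one simplifies the data structure of \Cref{thm:gradient_maintenance} by dropping the $\tau$-discretization (since the $\ell_2$ maximizer is just normalization), which reduces the low-dimensional representation from $O(\epsilon^{-2}\log n)$ to $O(\epsilon^{-1})$ buckets, and then reuses \Cref{lem:gradient_accumulator} unchanged --- exactly what you sketch.
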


\newpage

\section{Minimum Weight Perfect Bipartite $b$-Matching Algorithms}
\label{sec:matching}

In this section we prove the main result: 
a nearly linear time algorithm for minimum weight perfect bipartite $b$ matching.
The exact result proven in this section is the following theorem.

\begin{restatable}{theorem}{thmFastMatching}\label{thm:perfect_b_matching_fast}
For $b,c \in \Z^m$, we can solve minimum weight perfect bipartite $b$-matching for cost vector $c$ in
$\tilde{O}((m + n^{1.5}) \log^2 (\|c\|_\infty\|b\|_\infty))$ time.
\end{restatable}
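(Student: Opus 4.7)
The plan is to instantiate the LS-barrier IPM of \Cref{sec:ls_barrier_method} on the transshipment LP \eqref{eq:intro:LP} encoding minimum weight perfect bipartite $b$-matching, with every step implemented via the data structures of \Cref{sec:matrix_vector_product,sec:vector_maintenance,sec:gradient_maintenance}, and then round the approximately optimal fractional iterate to an exact integral optimum. Concretely, we start by writing the problem as $\min_{x\ge 0,\,\ma^\top x=b} c^\top x$, where $\ma\in\{-1,0,1\}^{m\times n}$ is the signed incidence matrix of the bipartite graph (orient all edges from one side to the other, so each row has exactly one $+1$ and one $-1$). Since $\ma$ is degenerate we append the identity block as described in \Cref{sec:degenerate}; this preserves the graph-algorithmic structure needed for the \textsc{HeavyHitter} data structure while making $\ma$ non-degenerate. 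We then invoke \Cref{sec:initial_point:initial_point} to construct an $\epsilon$-centered initial point $(x^{\init},s^{\init},\mu^{\init})$ for the LS weight function, with $\log(\mu^{\init}/\mu^{\target})=O(\log(\|b\|_\infty\|c\|_\infty))$.

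Next, I would call $\textsc{PathFollowing}(x^{\init},s^{\init},\mu^{\init},\mu^{\target})$ (\Cref{alg:meta}) with the LS short step \Cref{alg:short_step_LS}, which by \Cref{lem:main_lem_LS} and \Cref{lem:path_following_master} runs in $\tilde O(\sqrt{n}\cdot\log(\mu^{\init}/\mu^{\target}))=\tilde O(\sqrt{n}\log(\|b\|_\infty\|c\|_\infty))$ iterations. Each iteration is implemented by plugging together the data structures identified in \Cref{sec:over:data structures,sec:overview:vector}: \Cref{thm:gradient_maintenance} maintains $\ox\approx x$ and returns the low-dimensional representation of $\og$ together with $\ma^\top\omx\og$; \Cref{thm:vector_maintenance} maintains $\os\approx s$ by accumulating the rank-one updates $\ma h^{(i)}$ coming from \eqref{eq:over:delta_s prime}; \Cref{lem:large_entry_datastructure} (\textsc{Sample} and \textsc{LeverageScoreSample}) produces the sparse random matrix $\mR^{(i)}$ meeting the sampling requirements of \Cref{lem:ls_second} and \Cref{lem:infeasibility_bound_ls}; a dynamic spectral sparsifier feeding a Laplacian solver (\Cref{lem:laplacian_solver}) realizes $(\omh^{(i)})^{-1}$; leverage-score maintenance is handled via \Cref{sec:leverage_score} whose stability is provided by \Cref{lem:close_to_stable_sequence}; and the infeasibility vector $\ma^\top x-b$ is maintained incrementally using the sparsity of $\mR^{(i)}$ and the explicit value of $\ma^\top \omx^{(i)}\og^{(i)}$ obtained from \textsc{QueryProduct}.

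The main technical step is the amortized accounting. For each iteration the sparse sampled update costs $\tilde O(n+m/\sqrt{n})$ by the guarantees of \textsc{Sample}/\textsc{LeverageScoreSample}; the per-iteration input size to the maintenance data structures is $\tilde O(n)$ via \textsc{QueryProduct}; and the total ``detection'' cost of \Cref{thm:vector_maintenance} and \Cref{thm:gradient_maintenance} over all $T=\tilde O(\sqrt{n})$ iterations is governed by $\sum_i \|\ms^{-1}\delta_s\|_2^2$ and $\sum_i \|\mx^{-1}\delta_x\|_2^2$, which by \Cref{lem:projection_sizes_ls} and \Cref{lem:ls_second} are each $O(\gamma^2)$ per iteration in the $\tau+\infty$ norm and hence $O(n)$ in $\ell_2$ after accounting for $\tau\ge n/m$. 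Combined with the $\tilde O(m)$ one-time initialization (for the \textsc{HeavyHitter} expander decomposition, the sparsifier and the gradient structure), this yields a total running time of $\tilde O(m+n^{1.5})$ per $\log$-factor of $\mu^{\init}/\mu^{\target}$, giving the claimed $\tilde O((m+n^{1.5})\log^2(\|b\|_\infty\|c\|_\infty))$ bound (one log from the number of IPM restarts times $\log(\mu^{\init}/\mu^{\target})$, another from the bit complexity bound of \Cref{lem:bitlength} that controls $W'$ inside the data structures).

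Finally, \Cref{lem:path_following_master} only returns an approximately centered, approximately feasible pair; I would first apply \Cref{lem:feasible_projection} to convert it to a truly feasible fractional $x'$ with $x'\approx x$ (this costs one Laplacian solve and preserves optimality up to $1/\poly(n)$), and then invoke the integral rounding of \Cref{sec:initial_point:integral}, which for polynomially bounded integer $b,c$ turns a sufficiently accurate fractional optimum into an exact integral minimum weight perfect $b$-matching in $\tilde O(m)$ time. The hardest part of the proof is the bookkeeping step that certifies each of the six data structures can be simultaneously fed correctly consistent inputs--in particular, that the lazy/sampled representations of $\ox^{(i)}$, $\os^{(i)}$, $\og^{(i)}$ and $\bar\tauLS(x^{(i)},s^{(i)})$ produced across structures still satisfy the precision assumptions \ref{item:over:def ox os}--\ref{item:over:delta_c} required by \Cref{alg:short_step_LS}; this uses the stability guarantees of \Cref{lem:close_to_stable_sequence} and the compatibility clauses already built into \Cref{thm:vector_maintenance} and \Cref{thm:gradient_maintenance} (returning the index sets $I$, $J$ of changed coordinates so that dependent structures can be kept in sync without paying for unchanged entries).
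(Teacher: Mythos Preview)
Your proposal captures the overall architecture correctly---LS-barrier IPM, the six data structures, amortized accounting via the $\tau+\infty$ norm bounds, and integral rounding---and your explanation of where the two $\log(\|b\|_\infty\|c\|_\infty)$ factors come from matches \Cref{lem:improved_path_following}. However, there is a genuine gap in the initialization step.

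You write that you ``invoke \Cref{sec:initial_point:initial_point} to construct an $\epsilon$-centered initial point $(x^{\init},s^{\init},\mu^{\init})$ for the LS weight function.'' But \Cref{lem:initial_point} does \emph{not} produce a centered point for the given cost vector $c$: it constructs a feasible $(x,s)$ with $xs\approx_\epsilon\tau(x,s)$ only for a \emph{new} cost vector $c'$ that it chooses (with $n/(4m(1+\|b\|_\infty))\le c'\le 3n$), which need not bear any relation to $c$. The paper's algorithm (\Cref{alg:perfect_b_matching}) therefore runs \textsc{PathFollowing} \emph{twice}: first with cost $c'$ moving $\mu$ \emph{up} to $45\|c'\|_\infty\|b\|_\infty$ (toward the analytic center), then invokes \Cref{lem:switch_cost} to replace $c'$ by the true (perturbed) cost vector while preserving $\epsilon$-centeredness, and only then runs \textsc{PathFollowing} a second time with the true cost moving $\mu$ down to $1/\poly(m,\|b\|_\infty,\|c\|_\infty)$. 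Your single call to \textsc{PathFollowing} cannot work because you have no centered starting point for the actual objective.

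Two smaller points: (i) \Cref{lem:feasible_projection} does not yield a \emph{truly} feasible $x'$; it only contracts the infeasibility by a factor $5\delta$. The paper instead uses \Cref{lem:make_flow_feasible}, which exploits the star vertex of the starred flow graph (\Cref{def:starred_flow_graph}) to absorb residual demand and produce an exactly feasible flow. (ii) The Isolation-Lemma perturbation of \Cref{lem:isolation_lemma} must be applied to the cost vector \emph{before} the second \textsc{PathFollowing} run, not as a post-processing step, since the IPM must be driven to accuracy below the gap created by the perturbation.
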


We show how to combine the tools from 
\Cref{sec:ipm,sec:matrix_vector_product,sec:vector_maintenance,sec:gradient_maintenance}
to obtain our fast $b$-matching algorithm.
In \Cref{sec:log_barrier_method} and \Cref{sec:ls_barrier_method} we present two algorithms
that each encapsulate a variant of the IPM.
These algorithms only specify which computations must be performed,
but they do not specify how they must be implemented.
So our task is to use the data structures from 
\Cref{sec:matrix_vector_product,sec:vector_maintenance,sec:gradient_maintenance}
to implement these IPMs.

This section is split into five subsections:
We first consider the simpler, 
but slower IPMs from \Cref{sec:log_barrier_method},
based on the $\log$-barrier method.
This method is formalized via \Cref{alg:short_step}. 

\Cref{line:log:R} of  \Cref{alg:short_step},
requires random sampling to sparsify the solution of some linear system. 
Consequently, our first task (\Cref{sec:matching:sampling})
is to use the data structure from \Cref{sec:matrix_vector_product}
to implement this sampling procedure.

The next task is to implement the IPM of \Cref{alg:short_step},
which is done in \Cref{sec:matching:log}.

Note that \Cref{alg:short_step} only formalized a single step of the IPM,
and the complete method is formalized by combining \Cref{alg:short_step} with
\Cref{alg:meta}.
So the next task (\Cref{sec:matching:pathfollowing}) is to implement \Cref{alg:meta}
efficiently using our data structures.

We then use our implementation of the IPM in \Cref{sec:matching:slow}
to obtain a $\tilde{O}(n\sqrt{m})$ time algorithm.
At last, in \Cref{sec:matching:fast}, 
we show which small modification must be performed to
replace the slower $\log$-barrier based IPM
with the faster leverage score based IPM of \Cref{sec:ls_barrier_method}.
This is also where we prove \Cref{thm:perfect_b_matching_fast}.

\subsection{Primal Sampling}
\label{sec:matching:sampling}

Our first task is to implement the random sampling of \Cref{line:log:R}
of \Cref{alg:short_step}.
Let $\mA \in \R^{m \times n}$, $g,g' \in \R^m_{\ge0}$, and $h \in \R^n$, then for the random sampling 
we want to sample each $i\in[m]$ with some probability $q_i$ such that
$$
q_i \ge \min\{1, \sqrt{m} ((\mG \mA h)_i^2/\|\mG \mA h\|_2^2 + 1/m) + C \cdot \sigma(\mG' \mA)_i \log(m/(\epsilon r))\gamma^{-2}\},
$$
where $C$ is some large constant and $\gamma = 1/\polylog n$.
As this sampling is performed in each iteration of the IPM,
it would be prohibitively expensive to explicitly compute the leverage scores $\sigma(\mG'\mA)$ or the vector $\mG \mA h$,
as that would require $\Omega(m)$ time.
Consequently, instead we will use the data structure of \Cref{lem:large_entry_datastructure}
to efficiently perform this sampling.

\begin{algorithm2e}[t!]
\caption{Sampling Algorithm for the IPM \label{alg:sample_primal}}
\SetKwProg{Globals}{global variables}{}{}
\SetKwProg{Proc}{procedure}{}{}
\Proc{\textsc{SamplePrimal}$(K \in \R_{>0}, D^{(\sample, \sigma)}, D^{(\sample)}, h \in \R^n)$}{
	\LineComment{$D^{(\sample, \sigma)}, D^{(\sample)}$ instances of \Cref{lem:large_entry_datastructure}}
	$I_u \leftarrow D^{(\sample, \sigma)}.\textsc{LeverageScoreSample}(3C \log(m / ( \epsilon r )) \gamma^{-2})$ \\
	$I_v \leftarrow D^{(\sample)}.\textsc{Sample}(h, 3K\sqrt{m} \cdot (16\log^8(n)))$ \\
	$I_w \subset [m]$, where $\P[i \in I_w] = 3K/\sqrt{m}$ independently for each $i$. \\
	$I \leftarrow I_u \cup I_v \cup I_w$ \\
	Let $u,v,w$ be $\zerovec_m$. \\
	\LineComment{Set $v_i = \P[i \in I_v]$, $u_i = \P[i \in I_u]$, $w_i = \P[i \in I_w]$ for $i \in I$}
	$v_I \leftarrow D^{(\sample)}.\textsc{Probability}(I, h, 3K\sqrt{m} \cdot (16\log^8(n)))$ \\
	$u_I \leftarrow D^{(\sample, \sigma)}.\textsc{LeverageScoreBound}(3C \log(m / ( \epsilon r )) \gamma^{-2}, I)$ \\
	$w_i \leftarrow 3K/\sqrt{m}$ for $i \in I$ \\
	$\mR \leftarrow \mathbf{0}_{m \times m}$ \\
	\For{$i \in I$}{
		$\mR_{i,i} \leftarrow 1/\min\{1, u_i + v_i + w_i\}$ 
			with probability $\frac{\min\{1, u_i + v_i + w_i\}}{1-(1-u_i)(1-v_i)(1-w_i)}$
			\label{line:fix_probability}
	}
	\Return $\mR$
}
\end{algorithm2e}

\begin{lemma}\label{lem:sample_primal}
Consider a call to \textsc{SamplePrimal}$(K \in \R_{>0}, D^{(\sample, \sigma)}, D^{(\sample)}, h \in \R^n)$ (\Cref{alg:sample_primal}).
Let $\mG \mA$ be the matrix represented by $D^{(\sample)}$ (\Cref{lem:large_entry_datastructure}),
and $\mG' \mA$ be the matrix represented by $D^{(\sample, \sigma)}$ (\Cref{lem:large_entry_datastructure}),
then each $\mR_{i,i}$ is $1/q_i$ with some probability $q_i$ and zero otherwise, 
where
\begin{align*}
q_i \ge \min\{1, K\sqrt{m} ((\mG \mA h)_i^2/\|\mG \mA h\|_2^2 + 1/m) + C \cdot \sigma(\mG' \mA)_i \log(m/(\epsilon r))\gamma^{-2}\}
\end{align*}
for some large constant $C > 0$.
Further, w.h.p., 
there are at most %
\begin{align*}
\tilde{O} (n \log ( m / ( \epsilon r ) ) \log W + K \sqrt{m})
\end{align*}
non-zero entries in $\mR$ which is also a bound on the complexity of \textsc{SamplePrimal}.
Here $W$ is the ratio of largest to smallest (non-zero) entry in $\mG$ and in $\mG'$.
\end{lemma}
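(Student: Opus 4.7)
The plan is to verify the claimed marginal sampling distribution of $\mR$, and then bound the support size and runtime by combining the guarantees of the three independent sampling sub-procedures. Throughout, let $u_i, v_i, w_i$ denote the marginal probabilities that index $i$ is included in $I_u, I_v, I_w$ respectively.

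First, I would invoke the three corresponding guarantees of \Cref{lem:large_entry_datastructure}: \textsc{LeverageScoreSample} with parameter $3C\log(m/(\epsilon r))\gamma^{-2}$ ensures $u_i \ge 3C \sigma(\mG'\mA)_i \log(m/(\epsilon r))\gamma^{-2}$; \textsc{Sample} with parameter $48K\sqrt{m}\log^8 n$ ensures $v_i \ge 3K\sqrt{m}(\mG\mA h)_i^2/\|\mG\mA h\|_2^2$; and the explicit independent coin flips give $w_i = 3K/\sqrt{m}$. Since the three processes are independent, inclusion-exclusion gives $\P[i \in I] = 1-(1-u_i)(1-v_i)(1-w_i)$, which satisfies the two-sided bound
\[ \tfrac{u_i+v_i+w_i}{3} \;\le\; \max(u_i,v_i,w_i) \;\le\; 1-(1-u_i)(1-v_i)(1-w_i) \;\le\; \min\{1,\, u_i+v_i+w_i\}. \]
Composing this with the conditional rescaling of \Cref{line:fix_probability}, the total probability that $\mR_{i,i}$ takes its nonzero value $1/\min\{1, u_i+v_i+w_i\}$ equals $\min\{1, u_i+v_i+w_i\}$ whenever the rescaling ratio is at most $1$, and otherwise equals $1-(1-u_i)(1-v_i)(1-w_i) \ge (u_i+v_i+w_i)/3$. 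In either case $q_i \defeq \P[\mR_{i,i}\neq 0]$ is at least $(u_i+v_i+w_i)/3$, and the explicit factors of $3$ in the three sampling parameters are chosen precisely so that this matches the lemma's required lower bound.

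Second, for the support size, I would combine the w.h.p.~size bounds of \textsc{LeverageScoreSample} and \textsc{Sample} from \Cref{lem:large_entry_datastructure} with a Chernoff bound for the independent Bernoulli sampling $I_w$: this yields $|I_u| = \tilde O(n\log(m/(\epsilon r))\log W)$, $|I_v| = \tilde O(K\sqrt m)$, and $|I_w| = \tilde O(K\sqrt m)$ w.h.p. Since $\sqrt m \le n$ for graphs and $K = \polylog n$ in our IPM instantiations, $|\supp(\mR)| \le |I| = \tilde O(n\log(m/(\epsilon r))\log W)$ w.h.p. For the runtime, \textsc{LeverageScoreSample} costs $\tilde O(n\log(m/(\epsilon r))\log W)$, \textsc{Sample} costs $\tilde O(n\log W + K\sqrt m)$, the Bernoulli sampling of $I_w$ is $O(K\sqrt m)$, and the probability-recovery calls \textsc{LeverageScoreBound} and \textsc{Probability} contribute $O(|I| + n\log^2 n\log W)$; summing and absorbing constants gives the stated bound.

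The main obstacle is the probability calculation in the second step: handling the rescaling cleanly in the regime where the nominal conditional probability $\min\{1,u_i+v_i+w_i\}/(1-(1-u_i)(1-v_i)(1-w_i))$ exceeds $1$. This requires exploiting the lower bound $1-(1-u_i)(1-v_i)(1-w_i) \ge (u_i+v_i+w_i)/3$ together with the factor-of-$3$ oversampling built into $u_i, v_i, w_i$, so that the lower bound on $q_i$ claimed in the lemma survives the cap; the remaining steps are then routine applications of \Cref{lem:large_entry_datastructure}.
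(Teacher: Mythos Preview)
Your overall structure matches the paper's: bound $u_i, v_i, w_i$ via \Cref{lem:large_entry_datastructure}, combine by independence, and sum the size/time bounds of the three sub-samplers. The gap is in your treatment of the rescaling in \Cref{line:fix_probability}. The lemma requires not merely a lower bound on $q_i \defeq \P[\mR_{i,i}\neq 0]$ but also that the nonzero value of $\mR_{i,i}$ equals $1/q_i$ exactly---this is what forces $\E[\mR]=\mI$, which the IPM analysis relies on. In your ``ratio $>1$'' regime you obtain $q_i = 1-(1-u_i)(1-v_i)(1-w_i)$ while the algorithm sets $\mR_{i,i} = 1/\min\{1,u_i+v_i+w_i\}$, and in general these differ (e.g.\ $u_i=v_i=w_i=1/2$ gives $q_i=7/8$ but $\mR_{i,i}=1$). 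So the case split you propose does not establish the lemma as stated; surviving the cap on the \emph{lower bound} of $q_i$ is not enough.

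The paper avoids the case split entirely. Your own chain already gives $1-(1-u_i)(1-v_i)(1-w_i) \ge \max(u_i,v_i,w_i) \ge (u_i+v_i+w_i)/3$, and the paper's proof reads \Cref{line:fix_probability} with $\min\{1,(u_i+v_i+w_i)/3\}$ rather than $\min\{1,u_i+v_i+w_i\}$ (the pseudocode appears to be missing the $/3$). With that factor the conditional probability is always at most $1$, so no capping occurs, $q_i = \min\{1,(u_i+v_i+w_i)/3\}$ exactly, $\mR_{i,i}=1/q_i$, and the factors of $3$ built into the calls to \textsc{LeverageScoreSample}, \textsc{Sample}, and the Bernoulli step deliver the lemma's lower bound directly. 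Your sparsity and runtime accounting are essentially the paper's; note that the paper's own proof records the intermediate bound as $\tilde O(n\log(m/(\epsilon r))\log W + K\sqrt m)$ before absorbing the $K\sqrt m$ term.
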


\begin{proof}
We start by analyzing the vector $u,v,w$. 
Afterwards we analyze the distribution of the matrix $\mR$.
At last we bound the number of non-zero elements in $\mR$
and the complexity of the algorithm.
\paragraph{Distribution of vectors $u,v,w$}
By guarantee of \Cref{lem:large_entry_datastructure} 
the methods $D^{(\sample)}.\textsc{Probability}$ 
and $D^{(\sample, \sigma)}.\textsc{LeverageScoreBound}$ 
return the sampling probabilities
used by $D^{(\sample)}.\textsc{Sample}$ and $D^{(\sample, \sigma)}.\textsc{LeverageScoreSample}$.
Thus the vectors $u,v$ satisfy
\begin{align*}
u_i = \P[i \in I_u] \text{ with probability } \P[i \in I] \text{ and $0$ otherwise.} \\
v_i = \P[i \in I_v] \text{ with probability } \P[i \in I] \text{ and $0$ otherwise.} 
\end{align*}
Further by \Cref{lem:large_entry_datastructure} we have
\begin{align*}
\P[i \in I_u] \ge 3C \cdot \sigma(\mG' \mA) \log(m/(\epsilon r))\gamma^{-2}
\text{~~~and~~~}
\P[i \in I_v] \ge 3K\sqrt{m} (\mG\mA h)_i^2 / \|\mG \mA h\|_2^2.
\end{align*}
For the vector $w$ note that we have $w_i = 3K/\sqrt{m}$ for each $i \in I$,
and $w_i = 0$ otherwise.

Here the set $I$ is distributed as follows
\begin{align*}
\P[i \in I] &= \P[i \in I_u \cup Iv \cup I_w] \\
&= 1-(1-\P[i \in I_u])(1-\P[i \in I_v])(1-\P[i \in I_w]) \\
&= 1-(1-u_i)(1-v_i)(1-w_i).
\end{align*}

\paragraph{Distribution of matrix $\mR$}

Note that we have
\begin{align*}
\P[i \in I] 
&\ge \min\{1, \max\{~ \P[i \in I_u],~ \P[i \in I_v],~ \P[i \in I_w] ~\} \} \\
&\ge \min\{1,~ (\P[i \in I_u] + \P[i \in I_v] + \P[i \in I_w])/3 ~\} \\
&= \min\{1,~ (u_i + v_i + w_i)/3 ~\},
\end{align*}
thus the probability used in \Cref{line:fix_probability} is well defined (i.e. at most $1$).
Hence \Cref{line:fix_probability} ensures that 
\begin{align*}
~\P \left[ \mR_{i,i} \neq 0 \right]
=& ~ \P[i \in I] \cdot \frac{\min\{1, (u_i + v_i + w_i)/3  \}}{1-(1-u_i)(1-v_i)(1-w_i)} \\
=& \min\{1, (u_i + v_i + w_i)/3  ~\}
\end{align*}
Further, the non-zero $\mR_{i,i}$ are set to exactly the inverse of that probability
and we have
\begin{align*}
&~\min\{1, (u_i + v_i + w_i)/3 \} \\
\ge& ~ \min \Big\{1,  C \sigma(\mG'\mA)_i \log(m / ( \epsilon r )) \gamma^{-2} + K / \sqrt{m}  + K\sqrt{m} (\mG\mA h)_i^2 / \|\mG \mA h\|_2^2  \Big\}
\end{align*}
so $\mR$ has the desired distribution.

\paragraph{Sparsity of $\mR$ and complexity bound}

The number of nonzero entries of $\mR$ can w.h.p. be bounded by
\begin{align*}
& ~ \tilde{O}(n\log(m / ( \epsilon r )) \gamma^{-2} \log W)
+
O(K \sqrt{m} \log n)
+
O(K \sqrt{m} \log n) \\
= & ~
\tilde{O}\left(n \log\left(m / ( \epsilon r )\right) \log W + K \sqrt{m}\right),
\end{align*}
via \Cref{lem:large_entry_datastructure} and $C = O(1)$.

The complexity is dominated by the two sample methods of \Cref{lem:large_entry_datastructure}
whose complexity is bounded by
$$\tilde{O}\left(n \log\left(m / ( \epsilon r )\right) \log W + K \sqrt{m}\right).$$
\end{proof}

\subsection{Log Barrier Short Step}
\label{sec:matching:log}

Here we show how to efficiently implement the log barrier based IPM,
formalized by \Cref{alg:short_step}. 
Note that, \Cref{alg:short_step} does not specify how to perform the computations, 
rather it simply specifies a sufficient set of conditions for the method to perform as desired. 
Consequently, in this section we show how to implement these computations using our data structures.

To be more precise,  consider \Cref{line:log:approx_xs} of \Cref{alg:short_step},
which specifies to find some $\os \approx_\epsilon s$.
The algorithm does not explain how to find such an approximation, 
so we must implement this ourself.
Note that in \Cref{alg:short_step}, 
$s$ is defined incrementally via
$$
s^\new 
\leftarrow 
s + \gamma \mA \omH^{-1} \mA^\top \omX \nabla\Phi(\ov)/\|\nabla\Phi(\ov)\|_2
$$
for any $\omH \approx \mA^\top \omX \omS^{-1} \mA$ and $\ov \approx xs/\mu$. However, here $h' := \gamma \mA^\top \omX \nabla\Phi(\ov)/\|\nabla\Phi(\ov)\|_2$
is exactly the return value of \textsc{QueryProduct}, 
promised by the data structure of \Cref{thm:gradient_maintenance_simple}.
So we can efficiently compute $h'$ in each iteration of the IPM by using
\Cref{thm:gradient_maintenance_simple}.

Next, we must compute $h'' := \omH^{-1} h'$.
This can be done by first sparsifying the weighted Laplacian matrix $\mL := (\mA^\top \mX \mS^{-1} \mA)$
via leverage score sampling\footnote{From a graph perspective this might also be known as sampling by the effective resistances.}. 
This sampling can be performed efficiently 
via the \textsc{LeverageScoreSample} method 
of the data structure 
of \Cref{lem:large_entry_datastructure}.
After sparsifying the Laplacian $\mL$
we can apply a Laplacian solver (\Cref{lem:laplacian_solver}) to compute $h''$ efficiently.
	
At last, we must compute $s + \mA h'$.
Note however, that \Cref{alg:short_step} never requires the exact vector $s$,
and it always suffices to just know some element-wise approximation $\os \approx s$.
Hence the problem we actually need to solve
is to maintain an element-wise approximation of the sum $s = s^\init + \sum_{t=1}^T \mA h''^{(t)}$ 
(this represents the vector $s$ after $T$ steps of the IPM),
for a sequence of vectors $h''^{(1)},...,h''^{(T)} \in \R^n$.
That is exactly the guarantee of the data structure of \Cref{thm:vector_maintenance}.

The only thing left to do
is to maintain a similar approximation $\ox \approx x$ 
of the primal solution via \Cref{thm:gradient_maintenance_simple}.
Then we have an efficient implementation of \Cref{alg:short_step}.

A formal description of this implementation can be found in \Cref{alg:implement:short_step_log_barrier}
and we are left with proving that this algorithm does indeed implement all steps of \Cref{alg:short_step}.
It is difficult to analyze the complexity per call to \Cref{alg:implement:short_step_log_barrier},
as many complexity bounds of the used data structures only bound the overall time over several iterations. 
Consequently, we defer the complexity analysis to the next section,
where we analyze the overall complexity of several consecutive calls to \Cref{alg:implement:short_step_log_barrier}.
For now we only prove the correctness of \Cref{alg:implement:short_step_log_barrier} 
in \Cref{lem:implement:short_step_log_barrier}.

\begin{algorithm2e}[t!]
\caption{Implementation of \Cref{alg:short_step} \label{alg:implement:short_step_log_barrier}}
\SetKwProg{Globals}{global variables}{}{}
\SetKwProg{Proc}{procedure}{}{}
\Globals{}{
	\LineComment{Imported from \Cref{alg:implement:path_following}}
	$D^{(x,\nabla)}$ instance of \Cref{thm:gradient_maintenance_simple},
	initialized on $(\mA, x^\init, \gamma x^\init, x^\init s^\init/\mu^\init,\gamma/16)$ \label{line:step:initDx} \label{line:step:first_assumption} \\
	$D^{(s)}$ instance of \Cref{thm:vector_maintenance},
	initialized on $(\mA, s^\init, \gamma/16)$ \label{line:step:initDs} \\
	$D^{(\sample)}$ instance of \Cref{lem:large_entry_datastructure},
	initialized on $(\mA, 1/s^\init)$\\
	$D^{(\sample, \sigma)}$ instance of \Cref{lem:large_entry_datastructure},
	initialized on $(\mA, \sqrt{x^\init / s^\init})$ \\
	$\Delta \in \R^n$ maintained to be $\mA^\top x - b$ \\
	$\mu,\omu \in \R$ progress parameter $\mu$ of the IPM and its approximation $\omu \approx_{\gamma/8} \mu$\\
	$\ox,\os \in \R^m$ element-wise approximations $\ox \approx_{\gamma/8} x$ and $\os \approx_{\gamma/8} s$ \label{line:step:last_assumption} \\
	$\gamma$ accuracy parameter
}
\Proc{\textsc{ShortStep}$(\mu^{\new}>0)$}{
	\LineComment{Update $\omu$ and data structures that depend on it.}
	\If{$\omu \not\approx_{\gamma/8} \mu^\new$}{ \label{line:step:mu}
		$\omu \leftarrow \mu^\new$ \\
		\lFor{$i \in [m]$}{$D^{(x,\nabla)}.\textsc{Update}(i, \gamma\ox_i, (\ox_i\os_i)/\omu)$}
	}
	\LineComment{Leverage Score Sampling to sparsify $(\mA^\top \mX\mS^{-1} \mA)$ with $\gamma/2$ spectral approximation}
	$I_v \leftarrow D^{(\sample,\sigma)}.\textsc{LeverageScoreSample}(c \gamma^{-2} \log n)$ for some large enough constant $c > 0$ \label{line:step:sample}\\
	$v \leftarrow D^{(\sample,\sigma)}.\textsc{LeverageScoreBound}(c \gamma^{-2} \log n, I_v)$ \label{line:step:bound}\\
	$v_i \leftarrow 1/v_i$ for $i$ with $v_i \neq 0$ \label{line:step:ls_sample}\\
	\LineComment{Perform \textsc{ShortStep} (\Cref{alg:short_step})}
	$h' \leftarrow D^{(x,\nabla)}.\textsc{QueryProduct()}$ 
			\Comment{$h' = \gamma \mA^\top \omX \nabla\Phi(\ov) / \| \nabla\Phi(\ov) \|_2$} 
			\label{line:step:gradient} \\
	$h'' \leftarrow$ solve $(\mA^\top \mV \mA)^{-1} (h' + \Delta)$ with $\gamma/2$ accuracy via Laplacian Solver. 
			\label{line:step:solve}
			\Comment{$h'' = \omH^{-1} (h' + (\mA^\top x - b))$, $\delta_r = \mS^{-1} \mA h''$} \\
	$\mR \leftarrow \textsc{SamplePrimal}(1, D^{(\sample)}, D^{(\sample, \sigma)}, h'')$ \label{line:step:sample_primal}\\
	$x^{\tmp}, I_x \leftarrow D^{(x,\nabla)}.\textsc{QuerySum}(-\mR \omX \omS^{-1} \mA h'')$ 
			\label{line:step:x_tmp}\\
	$s^{\tmp}, I_s \leftarrow D^{(s)}.\textsc{Add}((\mA^\top \mV \mA)^{-1} h', \gamma/16)$ 
			\label{line:step:s_tmp}\\
	$\Delta \leftarrow \Delta + h' - \mA^\top \mR \omX \omS^{-1} \mA h''$ 
			\label{line:step:delta}
			\Comment{Maintain $\Delta = \mA^\top x - b$}\\
	\LineComment{Update $\ox, \os$ and data structures that depend on them.}
	\For{$i \in I_x \cup I_s$}{ \label{line:step:update_x_s}
		\If{$x^{\tmp}_i \not\approx_{\gamma/16} \ox_i$ 
					or $s^{\tmp}_i \not\approx_{\gamma/16} \os_i$}{ \label{line:step:update_x_s_condition}
			$\ox_i \leftarrow x^{\tmp}_i$, $\os_i \leftarrow s^{\tmp}_i$ \\
			$D^{(x,\nabla)}.\textsc{Update}(i, \gamma\ox_i, (\ox_i\os_i)/\omu)$ \\
			$D^{(\sample)}.\textsc{Scale}(i, 1/\os_i)$ \\
			$D^{(\sample, \sigma)}.\textsc{Scale}(i, \sqrt{\ox_i/\os_i})$ \\
		}
	}
}
\end{algorithm2e}

\begin{lemma}\label{lem:implement:short_step_log_barrier}
\Cref{alg:implement:short_step_log_barrier} implements \Cref{alg:short_step},
assuming the data structures are initialized as stated in \Cref{alg:implement:short_step_log_barrier}. %
\end{lemma}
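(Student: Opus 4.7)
The goal is to verify that every quantity that \Cref{alg:short_step_log} asks us to compute (namely $\ox, \os, \ov, g, \omh, \delta_p, \delta_c, \delta_r, \mR, \delta_x, \delta_s$, $x^\new$, $s^\new$) is correctly produced, up to the permitted multiplicative slack, by the data-structure operations invoked in \Cref{alg:implement:short_step_log_barrier}. Since the algorithm is inductively defined across iterations, my plan is to maintain a small set of loop invariants and verify that each line of \Cref{alg:implement:short_step_log_barrier} preserves them while delivering the required object of \Cref{alg:short_step_log}. The invariants I will carry are: (i) $\ox \approx_{\gamma/8} x$, $\os \approx_{\gamma/8} s$, $\omu \approx_{\gamma/8} \mu$; (ii) $\Delta = \mA^\top x - b$ exactly; (iii) the heavy-hitter data structures $D^{(\sample)}$, $D^{(\sample,\sigma)}$ currently store $\mdiag(1/\os)\mA$ and $\mdiag(\sqrt{\ox/\os})\mA$ respectively; and (iv) the gradient/primal maintainer $D^{(x,\nabla)}$ is in sync, meaning its $g$-vector equals $\gamma \ox$ and its $z$-vector equals $\ox\os/\omu$, with the implicit sum representing the current $x$.

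First I would verify that the block starting at \Cref{line:step:mu} restores invariant (i) and keeps $D^{(x,\nabla)}$ consistent whenever $\omu$ is updated; combined with \Cref{thm:gradient_maintenance_simple} this guarantees that $\textsc{QueryProduct}$ at \Cref{line:step:gradient} returns $\mA^\top \omX \nabla\Phi(\oz)/\|\nabla\Phi(\oz)\|_2$ for some $\oz$ with $\|\oz - \ox\os/\omu\|_\infty \leq \gamma/16$. Combining the allowed slacks in (i) one checks that this $\oz$ is within $\gamma$ of $w(x,s,\mu)$ in $\ell_\infty$, which is exactly the condition required of $\ov$ in \Cref{line:log:v}; it follows that $\mA^\top \omX g$ (as an $n$-vector) is equal to $h'$ up to sign, playing the role demanded in \Cref{alg:short_step_log}. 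Second, I would verify that the leverage-score sparsifier built in \Cref{line:step:sample}--\Cref{line:step:ls_sample} yields, via the guarantees of \Cref{lem:large_entry_datastructure} and the concentration bound of Lemma~4 of \cite{clmmps15} it cites, a matrix $\mA^\top \mV \mA \approx_{\gamma/4} \mA^\top \omX \oms^{-1} \mA$; feeding this into the Laplacian solver of \Cref{lem:laplacian_solver} with accuracy $\gamma/2$ produces an $\omh$ with $\omh^{-1}\approx_{\gamma}(\mA^\top\omX\oms^{-1}\mA)^{-1}$. Hence the vector $h''$ of \Cref{line:step:solve} is exactly $\omh^{-1}(h' + \Delta)$, and multiplying by $\omw^{-1/2}\oma$ (which is $\oms^{-1}\mA$ scaled) shows that $\mS^{-1}\mA h'' = \delta_r$ in the notation of \Cref{alg:short_step_log}.

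Third, I would analyze the sampling in \Cref{line:step:sample_primal}. By \Cref{lem:sample_primal}, the matrix $\mR$ produced by $\textsc{SamplePrimal}$ has the desired marginal distribution, and since the underlying matrices stored in $D^{(\sample)}$ and $D^{(\sample,\sigma)}$ are exactly the $\oms$- and $\sqrt{\omx/\oms}$-rescaled $\mA$ guaranteed by invariant (iii), the $q_i$ lower bounds required by \Cref{alg:short_step_log} are met (after absorbing the $(1\pm O(\gamma))$ discrepancies between $\oms, \omx$ and $\ms,\mx$ into the constant $C$). Next I would check that the implicit update at \Cref{line:step:x_tmp} is correct: $D^{(x,\nabla)}.\textsc{QuerySum}(-\mR\omX\oms^{-1}\mA h'')$ returns (by \Cref{thm:gradient_maintenance_simple}) an element-wise $\gamma/16$-approximation of the running sum $x^\init + \sum (\gamma \ox \cdot \oz^\flat - \mR\omX\oms^{-1}\mA h'')$, which by induction is the new $x^\new = x + \omx(g - \mR \delta_r)$ of \Cref{alg:short_step_log}. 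An entirely analogous argument using \Cref{thm:vector_maintenance} handles $\os^\new$ via \Cref{line:step:s_tmp}.

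Fourth, the invariant $\Delta = \mA^\top x - b$ is preserved by \Cref{line:step:delta} since $\mA^\top \delta_x = \mA^\top\omX g - \mA^\top \mR \omX \oms^{-1}\mA h''$, i.e.\ the explicit formula that line applies. Finally, the block at \Cref{line:step:update_x_s} restores invariant (i) using the sets $I_x, I_s$ returned by the two maintainers; here the key point is that by the guarantees of \Cref{thm:gradient_maintenance_simple} and \Cref{thm:vector_maintenance}, every index where the true $x^\new_i$ or $s^\new_i$ has drifted out of a $(1\pm\gamma/16)$-window of $\ox_i,\os_i$ is contained in $I_x \cup I_s$, so it suffices to re-check only those indices. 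The main technical obstacle I anticipate is carefully tracking the compounding of the $\gamma/16, \gamma/8, \gamma/4$ slacks through the composition $\omh^{-1}\approx(\mA^\top\oma)^{-1}$, $\oz\approx w$, $\ow\approx\vones$, so as to certify that all preconditions of \Cref{lem:main_lem} (in particular $\omh\approx_\gamma \oma^\top\oma$ and $\oma^\top \mR\oma \approx_\gamma \oma^\top\oma$) are actually met by our chosen constants; the rest is bookkeeping using the data-structure guarantees already established.
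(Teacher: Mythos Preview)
Your proposal is correct and follows essentially the same approach as the paper: you identify the same loop invariants (approximation quality of $\ox,\os,\omu$; the exact relation $\Delta=\mA^\top x-b$; and synchronization of the heavy-hitter and gradient data structures with the current $\ox,\os$), then verify line-by-line that each step of \Cref{alg:implement:short_step_log_barrier} both realizes the corresponding quantity of \Cref{alg:short_step_log} and preserves the invariants. The paper organizes the argument identically, first assuming the invariants to show correctness of each computed quantity ($h',h'',\mR,x^\tmp,s^\tmp,\Delta$), then separately arguing each invariant is restored at the end of the call; the slack-tracking you flag as the ``main technical obstacle'' is handled there by exactly the $\gamma/16 \to \gamma/8 \to \gamma$ compounding you anticipate.
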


\begin{proof}
As \Cref{alg:implement:short_step_log_barrier} uses many different data structures
that share similar variable names 
(e.g variable $g$ occurs in \Cref{lem:large_entry_datastructure,%
thm:vector_maintenance,%
thm:gradient_maintenance_simple}),
we will use the notation $D.var$
to refer to variable $var$ of data structure $D$.
For example $D^{(\sample)}.g$ refers to variable $g$
of data structure $D^{(\sample)}$ 
(which is an instance of \Cref{lem:large_entry_datastructure}).

\paragraph{Assumptions}

By \Cref{line:step:first_assumption} to \Cref{line:step:last_assumption}
of \Cref{alg:implement:short_step_log_barrier},
we assume that the following assumptions hold true
at the start of the first call to \textsc{ShortStep}
(\Cref{alg:implement:short_step_log_barrier}).
\begin{align}
\ox \approx_{\gamma/8} x,~
\os &\approx_{\gamma/8} s,~
\omu \approx_{\gamma/8} \mu \label{eq:step:approx_xs}\\
\Delta &= \mA^\top x - b \label{eq:step:Delta}\\
D^{(x,\nabla)}.g = \gamma \ox,~
D^{(x,\nabla)}.z &= \gamma \ox\os/(\omu),~
D^{(x,\nabla)}.\epsilon = \gamma /16 \label{eq:step:Dxnabla}\\
D^{(sample, \sigma)}.g &= \sqrt{\ox\os^{-1}} \label{eq:step:Dsamplesigma} \\
D^{(sample)}.g &= \os^{-1} \label{eq:step:Dsample}
\end{align}
We prove further below that these assumptions also hold true for all further
calls to \textsc{ShortStep} (\Cref{alg:implement:short_step_log_barrier}).
However, first we prove that,
if these assumptions are satisfied,
then \textsc{ShortStep} (\Cref{alg:implement:short_step_log_barrier})
performs the computations required by the IPM of \Cref{alg:short_step}.

\paragraph{\Cref{alg:implement:short_step_log_barrier} implements \Cref{alg:short_step}}

We argue the correctness line by line of \Cref{alg:short_step}.
\Cref{line:log:approx_xs} (\Cref{alg:short_step})
is satisfied by assumption \eqref{eq:step:approx_xs}.
\Cref{line:log:v} and \Cref{line:log:g} (\Cref{alg:short_step})
asks us to compute
$$
g \leftarrow - \gamma\nabla\Phi(\ov)/\|\nabla\Phi(\ov)\|
$$
for $\ov \approx_\gamma xs/\mu$.
While we do not compute $g$,
our implementation does compute $\mA \omX g$ as follows:
\Cref{line:step:gradient} of \Cref{alg:implement:short_step_log_barrier}
computes
\begin{align*}
h' 
= 
\gamma \mA^\top \omX \nabla\Phi(\ov)/\|\nabla\Phi(\ov)\|_2
= 
-\mA^\top \omX g
\end{align*}
for some $\ov \approx_{\gamma/2} D^{(x,\nabla)}.z = \ox\os/(\omu) \approx_{\gamma/2} xs/\mu$
by the guarantees of \Cref{thm:gradient_maintenance_simple}
and assumptions \eqref{eq:step:Dxnabla}.

By \Cref{line:log:define_shorthand} and \Cref{line:logstep:omh}
of \Cref{alg:short_step}
we must obtain a matrix $\omH \approx_\gamma \mA^\top \omX \omS^{-1} \mA$.
In our implementation \Cref{alg:implement:short_step_log_barrier}
this is done in \Cref{line:step:ls_sample},
where a vector $v$ is constructed with
\begin{align*}
v_i = 1/p_i \text{ with probability } p_i \ge \min\{1, c \cdot \sigma(\omX^{1/2}\omS^{1/2} \mA)_i \log n\} \text{ and } v_i = 0 \text{ otherwise.}
\end{align*}
by the guarantee of \Cref{lem:large_entry_datastructure} and assumption \eqref{eq:step:Dsamplesigma}.
Thus with high probability $(\mA^\top \mV \mA) \approx_{\gamma/2} (\mA \omX\omS^{-1} \mA)$,
so by applying a $\gamma/2$-accurate Laplacian Solver (\Cref{lem:laplacian_solver})
(i.e. in \Cref{line:step:solve}) we are able to represents some 
\begin{align*}
\omH^{-1} \approx_{\gamma} (\mA^\top \omX\omS^{-1} \mA)^{-1}
\end{align*}

For $\omA := \omX^{1/2}\omS^{1/2} \mA$, $\mW := \mu^{-1}\omX\omS$,
\Cref{line:log:delta_r} of \Cref{alg:short_step}
wants us to compute
\begin{align*}
\delta_r
=&~
\omW^{-1/2} \omA\omH^{-1}\omA^\top\omW^{1/2} g \\
&~+
\mu^{-1/2}\omW^{-1/2}\omA\omH^{-1}(\mA^\top x - b)
\end{align*}
This is done implicitly in \Cref{line:step:solve} 
of our implementation \Cref{alg:implement:short_step_log_barrier}.
By assumption \eqref{eq:step:Delta} we have $\Delta = \mA^\top x - b$,
thus \Cref{line:step:solve} computes $h''$ with 
\begin{align*}
h'' = \omH^{-1}(h' + \mA^\top x-b),
\end{align*} 
so 
\begin{align*}
\omS^{-1}\mA h''
=&~
\omS^{-1}\mA \omH^{-1}(h' + \mA^\top x-b) \\
=&~
\omS^{-1}\mA \omH^{-1}(\gamma \mA \omX \nabla\Phi(\ov)/\|\nabla \Phi(\ov)\| + \mA^\top x-b) \\
=&~ 
(\omX\omS/\mu)^{-1/2}(\omX^{1/2}\omS^{-1/2}\mA)\omH^{-1}(\omX^{1/2}\omS^{-1/2}\mA)^\top (\omX \omS/\mu)^{1/2} \nabla\Phi(\ov)/\|\nabla \Phi(\ov)\| \\
&~+
\mu^{-1/2}(\omX\omS/\mu)^{-1/2}(\omX^{1/2}\omS^{-1/2}\mA)\omH^{-1}(\mA^\top x-b) \\
=&~
\mW^{-1/2}\omA\omH^{-1}\omA^\top \mW^{1/2} \nabla\Phi(\ov)/\|\nabla \Phi(\ov)\| \\
&~
+\mu^{-1/2}\mW^{-1/2}\omA\omH^{-1}(\mA^\top x-b) \\
=:&~
\delta_r.
\end{align*}
Thus we have access to $\delta_r$ via $\omS^{-1}\mA h''$.

\Cref{line:log:R} of \Cref{alg:short_step} wants us to construct
a random diagonal matrix $\mR$ where $\mR_{i,i} = 1/q_i$ with probability $q_i$ and $\mR_{i,i} = 0$ otherwise,
where
\begin{align*}
q_i \ge \min\{1, 
	\sqrt{m} ((\delta_r)_i^2/\|\delta_r\|_2^2 + 1/m) + C \cdot \sigma_i(\omX^{1/2}\omS^{-1/2}\mA) \log(m/(\epsilon r))\gamma^{-2}\}
\end{align*}
By assumption \eqref{eq:step:Dsample} and \eqref{eq:step:Dsamplesigma} 
we have $D^{(\sample)}.g = \os^{-1}$ and $D^{(\sample,\sigma)}.g =\sqrt{\ox\os^{-1}}$,
so by \Cref{line:step:sample_primal} 
\Cref{line:step:sample_primal} of \Cref{alg:implement:short_step_log_barrier} 
returns $\mR$ with the desired properties.

\Cref{line:log:delta_x} and \Cref{line:log:move_xs} of \Cref{alg:short_step} wants us to compute
$$x^\new \leftarrow x + \omX(g - \omR\delta_r).$$
By \Cref{thm:gradient_maintenance_simple}
and assumption \eqref{eq:step:Dxnabla}
\Cref{line:step:x_tmp} of our implementation \Cref{alg:implement:short_step_log_barrier}
computes $x^\tmp$ with
\begin{align*}
x^\tmp
\approx_{\gamma/16}
x + \omX g - \mR\omX\omS^{-1}\mA h'' 
=
x + \omX (g - \omR\delta_r).
\end{align*}

\Cref{line:log:delta_s} and \Cref{line:log:move_xs} of \Cref{alg:short_step} asks us to compute
\begin{align*}
s^\new \leftarrow s + \omS \delta_p \\
\delta_p := \omW^{-1/2}\omA\omH^{-1}\omA^\top\omW^{1/2}g
\end{align*}
By \Cref{thm:vector_maintenance}
and assumption \eqref{eq:step:approx_xs}
\Cref{line:step:s_tmp} of our implementation \Cref{alg:implement:short_step_log_barrier}
computes $s^\tmp$ with
\begin{align*}
s^\tmp
&\approx_{\gamma/16}
s + \mA \omH^{-1} h'
=
s + \mA \omH^{-1} \mA^\top \omX g \\
&=
s + \omS (\omX\omS/\mu)^{-1/2} (\omX^{1/2}\omS^{-1/2} \mA) \omH^{-1} (\omX^{1/2}\omS^{-1/2} \mA)^\top (\omX\omS/\mu)^{1/2} g \\
&= 
s + \omS~ \omW^{-1/2}\omA\omH^{-1}\omA^\top\omW^{1/2}g
= 
s + \omS \delta_p
\end{align*}

\paragraph{Assumption on $\ox$, $\os$, $\omu$:}
Previously we assumed $\ox \approx_{\gamma/8} x$, $\os \approx_{\gamma/8} s$, and $\omu \approx_{\gamma/16}$.
Here we show that this assumption is true by induction, that is, 
if it was true before calling \textsc{ShortStep}, 
then it is also true after executing \textsc{ShortStep}.

We previously argued that $x^{\tmp} \approx_{\gamma/16} x^{\new}$ 
and $s^{\tmp} \approx_{\gamma/16} s^{\new}$.
So by the FOR-loop of \Cref{line:step:update_x_s} 
we have $\ox \approx_{\gamma/16} x^{\tmp}$, 
and thus $\ox \approx_{\gamma/8} x^{\new}$.
Likewise $\os \approx_{\gamma/8} s^{\new}$.

The assumption $\omu \approx_{\gamma/16}$
is true by \Cref{line:step:mu}.

\paragraph{Assumption on $D^{(\sample,\sigma)}$, $D^{(\sample)}$, and $D^{(x,\nabla)}$:}

All data structures that depend on $\ox$ and $\os$ are updated, 
whenever an entry of $\ox$ or $\os$ changes 
(see \Cref{line:step:update_x_s}).
So the assumption that $D^{(\sample,\sigma)}.g = \sqrt{\ox/\os}$,
$D^{(\sample)}.g = 1/\os$, 
$D^{(x,\nabla)}.z = \ox\oz/\omu$ 
and $D^{(x,\nabla)}.g = \gamma \ox$,
are all still satisfied after the execution of \textsc{ShortStep}.

Likewise, whenever $\omu$ changes
we set $D^{(x,\nabla)}.z = \ox\oz/\omu$.

\paragraph{Assumption $\Delta = \mA^\top x - b$:}
If $\Delta = \mA^\top x - b$ initially, 
then after \Cref{line:step:delta} we have $\Delta = \mA^\top x^{\new} - b$ 
for $x^{\new} = x + \omX g - \mR\omX\omS^{-1}\mA h''$.
Thus we always maintain $\mA^\top x - b$, whenever $x$ changes.

\end{proof}

\subsection{Path Following Algorithm}
\label{sec:matching:pathfollowing}

Here we implement the path following procedure \Cref{alg:meta} using our data structures.
Note that \Cref{alg:meta} consists of essentially a single FOR-loop,
which calls \textsc{ShortStep} (\Cref{alg:implement:short_step_log_barrier}),
so the main task for the implementation
is just the initialization of the data structures
used in \Cref{alg:implement:short_step_log_barrier}.
\Cref{lem:complexity_pathfollowing} then analyzes the complexity of
\Cref{alg:implement:path_following}, i.e. our implementation of \Cref{alg:meta}.

\begin{algorithm2e}[t!]
\caption{Implementation of \Cref{alg:meta} \label{alg:implement:path_following}}
\SetKwProg{Globals}{global variables}{}{}
\SetKwProg{Proc}{procedure}{}{}
\Globals{}{
	$D^{(x,\nabla)}$ instance of \Cref{thm:gradient_maintenance_simple} \\
	$D^{(s)}$ instance of \Cref{thm:vector_maintenance} \\
	$D^{(\sample)}$ instance of \Cref{lem:large_entry_datastructure} \\
	$D^{(\sample, \sigma)}$ instance of \Cref{thm:leverage_score_maintenance} \\
	$\mA \in \R^{m\times n}$, $b \in \R^n$ linear constraints $\mA^\top x = b$ \\
	$\Delta \in \R^n$ maintained to be $\mA^\top x - b$ \\
	$\mu,\omu \in \R$ progress parameter $\mu$ of the IPM and its approximation $\omu$\\
	$\ox,\os \in \R^m$ element-wise approximations of $x$ and $s$ \\
	$\epsilon, \lambda, \gamma, r \in \R_+$ accuracy parameters
}
\Proc{\textsc{PathFollowing}$(x^{\init}, s^{\init}, \mu^{\init}, \mu^{\target})$}{
	$\epsilon \leftarrow \frac{\log n}{80}$, $\lambda \leftarrow \frac{36 \log(240m/\epsilon^2)}{\epsilon}$, $\gamma \leftarrow \frac{\epsilon}{100 \lambda}$, $r \leftarrow \frac{\epsilon\gamma}{\sqrt{m}}$\\
	$\Delta = \mA^\top x^{\init} - b$ \\
	$\mu \leftarrow \mu^{\init}$, $\omu \leftarrow \mu^{\init}$, %
	$\ox \leftarrow x^{\init}$, $\os \leftarrow \os^{\init}$, $\otau \leftarrow \tau$ \\
	\LineComment{Initialize data structures}
	$D^{(x,\nabla)}.\textsc{Initialize}(\mA, x^{\init}, x^{\init}, x^{\init} s^{\init} / \mu^{\init}, \gamma/16)$ \\
	$D^{(s)}.\textsc{Initialize}(\mA, s^{\init}, \gamma/16)$ \\
	$D^{(\sample)}.\textsc{Initialize}(\mA, 1/s^{\init})$ \\
	$D^{(\sample, \sigma)}.\textsc{Initialize}(\mA, \sqrt{x^{\init} / s^{\init}})$ \\
	\LineComment{Follow the Central Path}
	\While{$\mu \neq \mu^{\target}$}{
		\For{$i=1,2,...,\frac{\epsilon}{r}$}{ \label{line:following:for_loop}
			$\mu \leftarrow \median(\mu^{\target}, (1-r)\mu, (1+r)\mu)$ \\
			$\textsc{ShortStep}(\mu)$ \\
			$p_1 \leftarrow D^{(x, \nabla)}.\textsc{Potential}()$ \\
			$p_2 \leftarrow \Delta^\top (\mA^\top \mG \mA)^{-1} \Delta$, 
					where $g$ is a leverage score sample from $D^{(\sample, \sigma)}$. \\
			\lIf{$p_1 > \exp(\frac{3\lambda\epsilon}{4}-\frac{1}{3})$ or $p_2 > \gamma\epsilon\sqrt{\mu}\exp(-1/3)$}{
				\textbf{break}%
			}
		}
		\If{$p_1 > \exp(\frac{\lambda\epsilon}{4}-\frac{1}{3})$ or $p_2 > \gamma\epsilon\sqrt{\mu}\exp(-1/3)$}{
			Revert iterations of previous for loop
		}
	}
	\LineComment{Compute and return exact result}
	$x \leftarrow D^{(x, \nabla)}.\textsc{ComputeExactSum}()$, $s \leftarrow D^{(s)}.\textsc{ComputeExact}()$ \\
	\Return $x, s$
}
\end{algorithm2e}

\begin{lemma}\label{lem:complexity_pathfollowing}
The time complexity of \Cref{alg:implement:path_following} is
\begin{align*}
\tilde{O}\left(n\sqrt{m} \cdot |\log \mu^{\target} / \mu^{\init} |\cdot (\log W + |\log \mu^{\target} / \mu^{\init} | )\right),
\end{align*}
where $W$ is an upper bound on the ratio of largest to smallest entry in both $x^{\init}$ and $s^{\init}$.
\end{lemma}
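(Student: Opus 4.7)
The plan is to combine \Cref{lem:path_following_master} with the amortized complexity bounds of the individual data structures instantiated in \Cref{alg:implement:path_following}, after first bounding the dynamic range of the iterates. Write $L \defeq |\log \mu^{\target}/\mu^{\init}|$. Since $r = \epsilon\gamma/\sqrt m$ and $1/r = \tilde O(\sqrt m)$, \Cref{lem:path_following_master} shows that the total number $T$ of calls to \textsc{ShortStep} in the inner FOR-loop at \Cref{line:following:for_loop} is $\tilde O(\sqrt m \cdot L)$ in expectation, absorbing the failure-and-restart overhead (each phase succeeds with probability $\geq 1/4$). Because the same lemma guarantees $(x,s,\mu)$ is $\epsilon$-centered at every entry to \Cref{line:for_loop_start} and at termination, \Cref{lem:bitlength} bounds the ratio of largest to smallest entries of all $x,s,\ox,\os$ arising throughout the execution by $2^{\tilde O(\log W + L)}$. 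Thus every ``$\log W$'' factor appearing in \Cref{lem:large_entry_datastructure}, \Cref{thm:vector_maintenance}, and \Cref{thm:gradient_maintenance_simple} is really $\tilde O(\log W + L)$.

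The non-amortized per-iteration cost of one \textsc{ShortStep} plus the $p_1,p_2$ checks is $\tilde O(n(\log W + L))$, by summing: (i) the \textsc{LeverageScoreSample} and \textsc{LeverageScoreBound} calls at \Cref{line:step:sample,line:step:bound} via \Cref{lem:large_entry_datastructure}; (ii) the Laplacian solve at \Cref{line:step:solve} on an $\tilde O(n)$-row spectral sparsifier via \Cref{lem:laplacian_solver}; (iii) \textsc{SamplePrimal} at \Cref{line:step:sample_primal} via \Cref{lem:sample_primal}; and (iv) evaluating $p_2 = \Delta^\top(\mA^\top\mG\mA)^{-1}\Delta$, which is another leverage score sample plus Laplacian solve; $p_1$ is returned in $O(1)$ time by \textsc{Potential}. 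The occasional full refresh of $D^{(x,\nabla)}$ triggered at \Cref{line:step:mu} costs $O(m)$ but fires only $\tilde O(L)$ times, because $\mu$ changes multiplicatively by $(1\pm r)$ so accumulating a factor $\exp(\gamma/8)$ requires $\tilde\Theta(\sqrt m)$ inner steps; this contributes $\tilde O(mL) \leq \tilde O(n\sqrt m\, L)$ overall. Multiplying the per-step bound by $T$ yields the first contribution of $\tilde O(n\sqrt m\, L(\log W + L))$.

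The remaining work lies in the amortized guarantees of \Cref{thm:vector_maintenance} (for $D^{(s)}.\textsc{Add}$) and \Cref{thm:gradient_maintenance_simple} (for $D^{(x,\nabla)}.\textsc{QueryProduct}/\textsc{QuerySum}$), together with the $O(\log n)$ per-index updates at \Cref{line:step:update_x_s}. Both theorems expose two sums over iterations: $\sum_\ell \|h^{(\ell)}\|_0$, and $\sum_\ell \|v^{(\ell)}/x^{(\ell-1)}\|_2^2$ (and its $s$-analogue). The first is bounded via \Cref{lem:sample_primal}: the sampled matrix $\mR$ has $\tilde O(n(\log W + L))$ nonzeros w.h.p., so the sparse corrections $h^{(\ell)}$ passed into \textsc{QuerySum} and \textsc{Add} satisfy $\sum_\ell \|h^{(\ell)}\|_0 = \tilde O(Tn(\log W + L)) = \tilde O(n\sqrt m\, L(\log W + L))$. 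The second is bounded via \Cref{lem:projection_sizes}, which gives $\|\omx^{-1}\delta_x\|_2, \|\oms^{-1}\delta_s\|_2 = O(\gamma)$ at every $\epsilon$-centered iterate; dividing each squared summand by $\epsilon^2 = \Theta(\gamma^2)$ leaves $O(1)$ per iteration, so the residual ``$T\log n\cdot\sum_\ell\cdots$'' and ``$T\epsilon^{-2}\sum_t\cdots$'' terms collapse to $\tilde O(T^2) = \tilde O(mL^2) \leq \tilde O(n\sqrt m\, L^2)$, using $\sqrt m \leq n$.

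The main obstacle I anticipate is verifying the step-size bound at \emph{every} iteration reached by the inner loop, including those inside phases later reverted at \Cref{line:failed_meta} of \Cref{alg:meta}: \Cref{lem:projection_sizes} applies only when $(x,s,\mu)$ is $\epsilon$-centered, an invariant that \Cref{lem:path_following_master} formally re-establishes only at phase boundaries. Fortunately, the proof of \Cref{lem:path_following_master} already shows that with constant probability all intermediate iterates inside a phase remain $\epsilon$-centered (otherwise the phase is broken early and discarded), so \Cref{lem:projection_sizes} is applicable throughout and the reverted work incurs only an $O(1)$ expected overhead, already absorbed by the $\tilde O(\sqrt m\, L)$ bound on $T$. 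Summing the three contributions $\tilde O(n\sqrt m\, L(\log W + L))$, $\tilde O(n\sqrt m\, L(\log W + L))$, and $\tilde O(n\sqrt m\, L^2)$ gives the claimed total runtime.
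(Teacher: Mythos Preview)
Your proposal is correct and follows essentially the same approach as the paper's proof: bound the iteration count via \Cref{lem:path_following_master}, the bit-length via \Cref{lem:bitlength}, the per-step sampling and solver costs via \Cref{lem:large_entry_datastructure}, \Cref{lem:sample_primal}, and \Cref{lem:laplacian_solver}, and the amortized costs of $D^{(s)}$ and $D^{(x,\nabla)}$ via \Cref{thm:vector_maintenance}, \Cref{thm:gradient_maintenance_simple}, and the step-size bounds from \Cref{lem:projection_sizes}. The only organizational difference is that the paper groups the accounting by data structure whereas you group it into ``per-iteration'' versus ``amortized'' buckets; your explicit discussion of why \Cref{lem:projection_sizes} remains applicable inside phases that are later reverted is a point the paper leaves implicit.
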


\begin{proof}
We analyze the overall complexity by first bounding the time for the initialization.
Then we bound the total time spend on the most inner FOR-loop,
which also includes all calls to \textsc{ShortStep} (\Cref{alg:implement:short_step_log_barrier}).

\paragraph{Initialization}

\Cref{alg:implement:path_following} starts by computing 
$\Delta = \mA^\top x - b$, 
which takes $\tilde{O}(m)$ time.
The subsequent initializations of the data structures of 
\Cref{thm:gradient_maintenance,%
thm:vector_maintenance,%
lem:large_entry_datastructure},
require $\tilde{O}(m)$ as well.

\paragraph{Last line}

In the last line of \Cref{alg:implement:path_following},
we compute the exact $x$ and $s$ values,
which by \Cref{thm:gradient_maintenance_simple} and \Cref{thm:vector_maintenance}
require $\tilde{O}(m)$ time.

\paragraph{FOR-loop}

The most inner loop (\Cref{line:following:for_loop}) 
is repeated a total of 
\begin{align*}
K := O(r^{-1} |\log \mu^{\target} / \mu^{\init} |) = \tilde{O}(\sqrt{m} |\log \mu^{\target}  / \mu^{\init} |)
\end{align*}
times by \Cref{lem:path_following_master}.
We now analyze the cost per iteration.

The cost for $D^{(x,\nabla)}.\textsc{Potential}()$ is bounded by $O(1)$.
Computing $\Delta^\top (\mA^\top \mG \mA)^{-1} \Delta$ can be implemented via Laplacian Solver (\Cref{lem:laplacian_solver}),
so the complexity is bounded by obtaining the leverage score sample $g$, 
which also bound the number of nonzero entries.
By \Cref{lem:large_entry_datastructure} the complexity is bounded by $\tilde{O}(n \log W')$,
where $W'$ is the ratio of largest to smallest entry in $\sqrt{x/s}$.
By \Cref{lem:bitlength} we can bound $\log W'$ by $\Ot(\log W + |\log \mu^{\target}/\mu^{\init}|)$.
Thus computing $\Delta^\top (\mA^\top \mV \mA)^{-1} \Delta$ 
requires $\tilde{O}(n (\log W + |\log \mu^{\target}/\mu^{\init}|))$ time.

We are left with analyzing the cost of \textsc{ShortStep} (\Cref{alg:implement:short_step_log_barrier}).

\paragraph{Updating the data structures}

Performing $D^{(x,\nabla)}.\textsc{Update}$ for all $i \in [m]$ in \Cref{line:step:mu} of \Cref{alg:implement:short_step_log_barrier}
takes $O(m)$ time, but it happens only once every $O(r/\gamma) = \tilde{O}(\sqrt{m})$ calls to \textsc{ShortStep},
because $\mu$ changes by an $(1 \pm r)$ factor between any two calls to \textsc{ShortStep}.
The total cost of this call to $D^{(x,\nabla)}.\textsc{Update}$ over all calls to \textsc{ShortStep} together can thus be bounded by $\tilde{O}(\sqrt{m} K)$.

Next consider the calls to \textsc{Update} and \textsc{Scale} of our various data structures in
\Cref{line:step:update_x_s_condition} of \Cref{alg:implement:short_step_log_barrier}.
Note that throughout $K$ calls to \textsc{ShortStep}, 
the entries of $s$ can change at most $O(K^2/\gamma^2)$ times by a factor of $(1\pm\gamma/16)$,
because $\|\mS^{-1} \delta_1\|_2^2 < 1/2$ by \Cref{lem:projection_sizes}.
On the other hand, the entries of $x$ can change at most $\tilde{O}(Kn(\log(W) +|\log(\mu^{\init}/\mu^{\target})|) + K^2/\gamma^2)$ 
times by a factor of $(1\pm\gamma/16)$. 
This is because $x$ changes by $\gamma \omX \nabla \Phi(\ov)/\|\nabla\Phi(\ov)\|_2 - \mR\omX\omS^{-1} \mA h''$,
where for the first term we have $\|\mX^{-1} \gamma \omX \nabla \Phi(\ov)/\|\nabla\Phi(\ov)\|_2 \|_2 < 2\gamma$ %
and the second term has at most $\tilde{O}(n(\log(W) +|\log(\mu^{\init}/\mu^{\target})|))$ non-zero entries.

So the total number of times the condition in \Cref{line:step:update_x_s_condition} is satisfied
can be bounded by $\tilde{O}(Kn(\log(W) +|\log(\mu^{\init}/\mu^{\target})|) + K^2/\gamma^2)$.
Thus the cost for all calls to \textsc{Scale} and \textsc{Update} of the data structures is bounded by
$\tilde{O}(K^2 + Kn(\log(W) +|\log(\mu^{\init}/\mu^{\target})|))$, 
as each call to \textsc{Scale} and \textsc{Update} of any of the used data structures costs $\tilde{O}(1)$,
and $\gamma = \tilde\Omega(1)$.

\paragraph{$D^{(\sample)}$ and $D^{(\sample, \sigma)}$ data structure}

As argued before, 
the ratio of largest to smallest weight of $x$ and $s$ can be bounded via \Cref{lem:bitlength}.
Hence performing the leverage score sample to obtain $v$ in \Cref{line:step:sample} to \Cref{line:step:ls_sample}
of \Cref{alg:implement:short_step_log_barrier}
requires at most $\tilde{O}(n (\log(W) + |\log(\mu^{\init}/\mu^{\target})|))$ time
by \Cref{lem:large_entry_datastructure}.
Likewise, each call to $\textsc{SamplePrimal}$ in \Cref{line:step:sample_primal} of \Cref{alg:implement:short_step_log_barrier} requires at most
$\tilde{O}(n (\log(W) + |\log(\mu^{\init}/\mu^{\target})|))$ time
by \Cref{lem:sample_primal},
which is also a bound on the number of non-zero entries in matrix $\mR$.
This results in a total of 
$\tilde{O}(n K (\log(W) + |\log(\mu^{\init}/\mu^{\target})|))$ time
after all $K$ calls to \textsc{ShortStep}.

\paragraph{$D^{(x,\nabla)}$ data structure}

The total cost of $D^{(x.\nabla)}.\textsc{QueryProduct}$ 
and $\textsc{QuerySum}$ is bounded by
$$
\tilde{O}(Kn + Kn (\log(W) + |\log(\mu^{\init}/\mu^{\target})|) + K^2)
$$
by \Cref{thm:gradient_maintenance_simple}.
This is because (i) the number of non-zero entries in $-\mR\omX\omS^{-1}\mA h''$ is bounded by
$\tilde{O}(Kn (\log(W) + |\log(\mu^{\init}/\mu^{\target})|))$,
and (ii) the vector $v^{(\ell)}$ and $x^{(\ell-1)}$ in \Cref{thm:gradient_maintenance_simple} 
are exactly the vectors 
$\gamma \omX \nabla \Phi(\ov)/\|\nabla \Phi(\ov)\|_2$ 
and $x$ in \Cref{alg:implement:short_step_log_barrier} 
during the $\ell$-th call to \textsc{ShortStep}.
By $\ox \approx_{\gamma/8} x$ we have $\|\mX^{-1} \gamma \omX \nabla \Phi(\ov)/\|\nabla \Phi(\ov)\|_2 \|_2 = O(\gamma)$ which yields the complexity bound above.

\paragraph{$D^{(s)}$ data structure}

The total of all calls to $D^{(s)}.\textsc{Add}$ require at most 
\begin{align*}
&~ \tilde{O}(K^2 \|\mS^{-1} \delta_s \|_2^2 + K n (\log W + |\log(\mu^{\init}/\mu^{\target})|))\\
=&~
\tilde{O}(K^2 + K n (\log W + |\log(\mu^{\init}/\mu^{\target})|))
\end{align*}
by \Cref{thm:vector_maintenance} 
and $\|\mS^{-1} \delta_s \|_2 = O(1)$ 
by \Cref{lem:projection_sizes}.

\paragraph{Updating $\Delta$}

Updating $\Delta$ in \Cref{line:step:delta} takes $O(n + \|\mR\|_0) = \tilde{O}(n (\log(W) +|\log(\mu^{\init}/\mu^{\target})|))$ time,
where we bound $\|\mR\|_0$ via \Cref{lem:sample_primal}.
The same bound holds for computing $\mR\omX\omS^{-1} \mA h''$ in \Cref{line:step:x_tmp}.
After all calls to \textsc{ShortStep} this contributes a total cost of $\tilde{O}(Kn(\log(W) +|\log(\mu^{\init}/\mu^{\target})|))$.

\paragraph{Total complexity}

In summary, the time for all calls to \textsc{ShortStep} together is bounded by 
\begin{align*}
& ~ \tilde{O}(K^2 + K n (\log W + |\log(\mu^{\init}/\mu^{\target})|) \\
= & ~ \tilde{O}(n\sqrt{m} \cdot |\log ( {\mu^{\target}} / {\mu^{\init}} ) | \cdot (\log W + |\log ( {\mu^{\target}} / {\mu^{\init}} ) |)).
\end{align*}
This also subsumes the initialization cost.
\end{proof}

\subsection{$\tilde O(n\sqrt{m})$-Time Algorithm}
\label{sec:matching:slow}

In this section we use the implemented IPM (\Cref{alg:implement:path_following})
to obtain an algorithm that solves minimum weight bipartite $b$-matching 
in $\tilde{O}((n\sqrt{m}) \log^{2} W)$ time.
More accurately, we will prove the following theorem.

\begin{theorem}\label{thm:perfect_b_matching_slow}
\Cref{alg:perfect_b_matching} finds a minimum weight perfect bipartite $b$-matching.
On a graph with $n$ nodes, $m$ edges, and edge cost vector $c \in \Z^m$,
the algorithm runs in $\tilde{O}(n\sqrt{m} \log^2 (\|c\|_\infty\|b\|_\infty))$ time.
\end{theorem}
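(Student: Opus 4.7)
The plan is to cast minimum weight perfect bipartite $b$-matching as an instance of the primal LP \eqref{eq:primal_dual}, invoke the implemented path-following routine \Cref{alg:implement:path_following} (whose correctness is guaranteed by \Cref{lem:path_following_master} combined with \Cref{lem:implement:short_step_log_barrier}), and then round the resulting approximately centered, approximately feasible point to an exact optimal integral matching. Concretely, I would let $\mA \in \{-1,0,1\}^{m \times n}$ be the (signed) edge-vertex incidence matrix of the bipartite graph (after orienting edges from one side to the other so that $\mA^\top x = b$ encodes the $b$-matching constraints), and take $c$ as the given edge cost. Since such incidence matrices are degenerate (the all-ones is in the kernel), as noted at the beginning of \Cref{sec:over:graph view} and handled in \Cref{sec:degenerate}, I would append an identity block to $\mA$, making the constraint matrix non-degenerate so that \Cref{alg:implement:path_following} applies.

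Next I would invoke the initial point construction of \Cref{sec:initial_point:initial_point} to obtain an $\epsilon$-centered $(x^\init, s^\init, \mu^\init)$ for the (suitably perturbed) LP with $\mu^\init = \poly(n, \|c\|_\infty, \|b\|_\infty)$ and $\log W = O(\log(\|c\|_\infty \|b\|_\infty))$ for the ratio of largest to smallest initial entries. I would run $\textsc{PathFollowing}(x^\init, s^\init, \mu^\init, \mu^\target)$ with a target $\mu^\target = 1/\poly(n, \|c\|_\infty, \|b\|_\infty)$ small enough that, by standard IPM duality-gap arguments, any $\epsilon$-centered $(x^\target, s^\target, \mu^\target)$ is close enough to the optimum that the rounding procedures of \Cref{sec:initial_point:feasible} and \Cref{sec:initial_point:integral} produce an exact integral minimum-weight perfect $b$-matching. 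Both thresholds satisfy $|\log(\mu^\target/\mu^\init)| = O(\log(\|c\|_\infty \|b\|_\infty))$.

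For the runtime, I would plug these bounds into \Cref{lem:complexity_pathfollowing}: the path-following cost is
\[
\tilde{O}\left( n\sqrt{m} \cdot |\log(\mu^\target/\mu^\init)| \cdot (\log W + |\log(\mu^\target/\mu^\init)|) \right)
= \tilde{O}\left(n\sqrt{m} \log^2(\|c\|_\infty \|b\|_\infty)\right),
\]
which absorbs the preprocessing cost for building $\ma$, the initial point construction (linear in $m$ up to polylog factors), and the feasibility projection plus integral rounding (each easily bounded by $\tilde{O}(m + n^{1.5})$ or less via Laplacian solves on the original graph).

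The main obstacle in a full write-up is the bookkeeping at the two boundaries of the IPM, not the IPM itself. On the initialization side, I must verify that the appended-identity trick from \Cref{sec:degenerate} together with the initial point of \Cref{sec:initial_point:initial_point} produces $(x^\init, s^\init, \mu^\init)$ satisfying both the approximate centrality and the approximate primal feasibility requirements of \Cref{def:central_point} with the correct $\epsilon, \gamma$ constants used throughout \Cref{sec:log_barrier_method}, and with the bit-length bound $\log W = O(\log(\|c\|_\infty\|b\|_\infty))$ needed to apply \Cref{lem:bitlength}. On the termination side, I must verify that $(x^\target, s^\target)$ returned by \Cref{alg:implement:path_following} -- which is only $\epsilon$-approximately feasible in the norm $\|\cdot\|_{(\mA^\top \mX \mS^{-1} \mA)^{-1}}$ -- can be first projected to a truly feasible point via \Cref{lem:feasible_projection} while moving $x$ by only an $O(\epsilon)$ multiplicative factor, and then rounded to an exact integral optimum using the combinatorial postprocessing of \Cref{sec:initial_point:integral} within the claimed time budget. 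Assuming these auxiliary pieces work as stated in their respective sections, the theorem follows by direct composition.
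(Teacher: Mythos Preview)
Your high-level plan is right, and your invocation of \Cref{lem:complexity_pathfollowing} for the running time is exactly what the paper does. However, you are missing one non-trivial algorithmic idea that the paper relies on: the \emph{two-phase path following with cost-vector switching}.

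The initial point construction of \Cref{sec:initial_point:initial_point} (specifically \Cref{lem:initial_point_simple}) does \emph{not} produce an $\epsilon$-centered point for the LP with the actual cost vector $c$. It produces a feasible primal-dual pair $(x,s)$ with $xs = 1$ (so $\mu^{\init} = 1$, not $\poly(n,\|c\|_\infty,\|b\|_\infty)$ as you wrote) for a \emph{different} cost vector $c'$ that is chosen precisely so that such a centered point is trivial to write down. The paper's \Cref{alg:perfect_b_matching} then runs \textsc{PathFollowing} \emph{backwards}, increasing $\mu$ from $1$ up to $45\|c'\|_\infty\|b\|_\infty$, which pushes $(x,s)$ toward the center of the polytope. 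At that point \Cref{lem:switch_cost} shows that replacing $c'$ by the true cost $c''$ only perturbs the slack $s$ multiplicatively by $1 \pm O(\epsilon)$, so $(x, s + c'' - c')$ remains $\epsilon$-centered for the \emph{real} LP. Only then is \textsc{PathFollowing} run a second time, now decreasing $\mu$ toward $\mu^{\target}$. Your single call to \textsc{PathFollowing} cannot work because you never explain how to get a centered starting point for the real cost $c$; your parenthetical ``(suitably perturbed) LP'' papers over exactly this difficulty. The cost-switching lemma is the substantive missing ingredient, not mere bookkeeping.

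A smaller point: the paper's reduction goes through the starred flow graph of \Cref{def:starred_flow_graph} (adding an extra vertex $z$ with edges to all of $U \cup V$), which is what both the initial point construction and the feasibility repair \Cref{lem:make_flow_feasible} are stated for. The appended-identity trick of \Cref{sec:degenerate} you cite is an orthogonal device for non-degeneracy of $\mA$; you need both (or to verify that the identity-block variant of the initial point and feasibility-repair lemmas go through, which the paper only sketches).
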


\begin{algorithm2e}[t!]
\caption{Minimum cost perfect $b$-matching algorithm \label{alg:perfect_b_matching}}
\SetKwProg{Globals}{global variables}{}{}
\SetKwProg{Proc}{procedure}{}{}
\Globals{}{

}
\Proc{\textsc{MinimumCostPerfectBMatching}$(G=(U,V,E), b \in \Z^{U \cup V}, c \in \Z^E)$}{
	\LineComment{Given minimum cost perfect $b$-matching instance on bipartite graph $G$ with cost vector $c$.}
	Construct $G' = (U \cup V \cup \{z\}, E')$, $x$, $s$, $c'$, $d'$ as in \Cref{lem:initial_point_simple}. \\
	Let $c'' \in \R^{E'}$ with $c''_{u,v} = c_{u,v}$ and $c''_{u,z} = c''_{z,v} = \|b\|_1\|c\|_\infty$ for all $u \in U$, $v \in V$. \\
	Modify cost vector $c''$ as in \Cref{lem:isolation_lemma} to create unique optimum. \\
	Let $\mA$ be the incidence matrix of $G'$ and rename $d'$ to $b$. \\
	\LineComment{Solve the linear program $\min c''^\top x$ for $\mA^\top x = b$ and $x \ge 0$.}
	$x, s \leftarrow \textsc{PathFollowing}(x, s, 1,~ 45 \|c'\|_\infty\|b\|_\infty)$ 
			\label{line:matching:follow_1}\\
	$s \leftarrow s + c'' - c'$ \label{line:matching:switch}\\
	$x, s \leftarrow \textsc{PathFollowing}(x, s, 45 \|c'\|_\infty\|b\|_\infty,~ \frac{1}{24n\cdot 12 m^{2}(\|b\|_\infty\|c''\|_\infty)^{3}})$ 
			\label{line:matching:follow_2}\\
	$x \leftarrow $ feasible solution via \Cref{lem:make_flow_feasible}. \\
	\LineComment{$x$ is now a feasible flow with $c''^\top x \le \OPT + \frac{1}{12m^{2}(\|b\|_\infty\|c''\|_\infty)^{3}}$.}
	Round entries of $x$ to the nearest integer. \\
	\Return $x$
}
\end{algorithm2e}

\paragraph{Outline of \Cref{thm:perfect_b_matching_slow}}

\Cref{alg:perfect_b_matching} can be outlined as follows:
first we reduce the problem to uncapacitated minimum cost flows,
so that we can represent the problem as a linear program of the form
\begin{align*}
\min_{ \mA^\top x = b, x \ge 0} c^\top x,
\end{align*}
where $\mA \in \R^{m \times n}$ is an incidence matrix 
and $b \in \R^n$ are is a demand vector.

We must then construct an initial feasible flow $x \in \R^m$ and
a feasible slack $s \in \R^m$ of a dual solution,
such that $x_i s_i \approx \mu$ for all $i\in[m]$ and some $\mu \in \R$.
Unfortunately, we are not able to construct such an initial pair $(x,s)$
for the given edge cost vector $c \in \R^m$.
So instead, we construct an initial pairs 
for some other cost vector $c' \in \R^m$ (\Cref{lem:initial_point_simple}).

We then use the IPM (\Cref{alg:implement:path_following})
to move this initial solution \emph{away} from the optimal solution towards the center of the polytope.
Once our intermediate solution pair $(x,s)$ is sufficiently close to the center,
then it does no longer matter which cost vector was used.%
So we are then able to switch the cost vector $c'$ with the actual cost vector $c$
(\Cref{lem:switch_cost}).
After switching the cost vector, 
we then use the IPM (\Cref{alg:implement:path_following}) again 
to move our solution pair $(x,s)$ towards the new optimal solution induced by $c$.

Note that the obtained primal solution $x$ is not feasible.
\Cref{lem:path_following_master} only guarantees that $x$ is close to a feasible solution.
So we then use \Cref{lem:make_flow_feasible} to turn $x$ 
into a nearby feasible solution $x'$.

The obtained solution $x'$ is an almost optimal \emph{fractional} solution.
Via the Isolation-Lemma \ref{lem:isolation_lemma},
we can simply round each entry of $x'$ to the nearest integer to obtain an optimal integral solution.

\paragraph{Reduction to uncapacitated minimum flow}

We start the formal proof of \Cref{thm:perfect_b_matching_slow}
by proving the reduction of minimum-cost bipartite $b$-matching 
to uncapacitated minimum cost flows.
Given a bipartite graph $G=(U,V,E)$ and vector $b \in \N^V$,
we construct the following flow instance.

\begin{definition}\label{def:starred_flow_graph}
Let $G=(U,V,E)$ be a bipartite graph
and $b \in \Z^{U \cup  V}$ a vector with $\sum_{u \in U} b_u - \sum_{v \in V} b_v = 0$
(i.e. a $b$-matching instance).

We define the corresponding \emph{starred flow graph} as $G' = (V', E')$ 
with nodes $V' = V \cup \{z\}$,
and directed edges $E' = E \cup \{ (u, z), (z,v) \mid u \in U, v \in V\}$
and demand vector $d' \in \R^{V'}$ with 
$d_v = b_v$ for $v \in V$, 
$d_u = - b_u$ for $u \in U$ 
and $d_z =0$.
\end{definition}

The reduction for minimum cost $b$-matching to uncapacitated minimum cost flow is now trivially given.
Given a cost vector $c \in \R^E$ for the bipartite $b$-matching instance,
we can define a cost vector $c' \in \R^{E'}$ for the starred flow graph $G'$,
where $c'_{u,v} = c_{u,v}$, $c'_{u,z} = c'_{z,v} = \|b\|_1\|c\|_\infty$ for all $u\in U$ and $v\in V$.
Then any integral minimum cost flow on $G'$ with cost less than $\|b\|_1\|c\|_\infty$ for cost vector $c'$
results in a minimum weight perfect $b$ matching on $G$.
On the other hand, if no such flow exists, then $G$ has no perfect $b$-matching.

\paragraph{Initial point}

Next, we use the following \Cref{lem:initial_point_simple} 
to construct an initial point $(x,s)$ for the IPM \Cref{alg:implement:path_following}
with $x_i s_i = \mu$ for some $\mu \in \R$ and all $i \in [m]$.
\Cref{lem:initial_point_simple} is proven in \Cref{sec:initial_point:initial_point}.

\begin{restatable}{lemma}{lemInitialPointSimple}\label{lem:initial_point_simple}
We are given a minimum weight perfect $b$-matching instance 
on graph $G=(U,V,E)$ with cost vector $c \in \R^E$.

Let $G' = (V', E')$ with demand vector $d'$ be the corresponding starred flow graph
(see \Cref{def:starred_flow_graph})
and let $n$ be the number of nodes of $G'$ and $m$ the number of edges.

For $\tau(x,s) := 1$,
we can construct in $O(m)$ time a cost vector $c' \in \R^{E'}$ for $G'$
and a feasible primal dual solution pair $(x,s)$ for the minimum cost flow problem on $G'$ with cost vector $c'$,
where the solution satisfies $xs = \tau(x,s)$.
The cost vector $c'$ satisfies
\begin{align*}
\|b\|_\infty^{-1} \le c' \le n.
\end{align*}

\end{restatable}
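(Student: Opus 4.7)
My plan is to construct the primal--dual pair and the cost vector explicitly, and take the trivial dual potential $y = 0$ so that dual feasibility becomes $s = c'$ and the centrality condition $xs = \vones$ forces $s = 1/x$ and $c' = 1/x$ entrywise. This reduces the lemma to producing a single primal flow $x > 0$ on $G'$ that is feasible for demand $d'$, has every entry in $[1/n, \|b\|_{\infty}]$, and can be written down in $O(m)$ time.

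For the primal flow I would put a small uniform value on every original edge, say $x_{u,v} = 1/n$ for each $(u,v) \in E$, and let the star edges through $z$ absorb the resulting imbalance. Flow conservation at each $u \in U$ then forces $x_{u,z} = b_u - \deg_E(u)/n$, and at each $v \in V$ forces $x_{z,v} = b_v - \deg_E(v)/n$. Conservation at $z$ itself comes for free because $\sum_u \deg_E(u) = |E| = \sum_v \deg_E(v)$ and $\sum_u b_u = \sum_v b_v$ (the latter being necessary for a perfect $b$-matching to exist and may be assumed throughout, or enforced by preprocessing). For positivity, since $u \in U$ only has neighbors in $V$ we have $\deg_E(u) \le n - 2$, and together with $b_u \ge 1$ this yields $x_{u,z} \ge 2/n$; the same bound holds for $x_{z,v}$. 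Thus all entries of $x$ lie in $[1/n,\,\|b\|_{\infty}]$, and taking $s := 1/x$ and $c' := s$ gives $c' \in [\|b\|_{\infty}^{-1},\, n]$ as required. All of this is clearly computable in $O(m)$ total time.

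The main subtlety, and the only place one needs to be careful, is the positivity of $x_{u,z}$ and $x_{z,v}$: if some $b_w$ were $0$ or some $\deg_E(w)$ reached $n-1$, the chosen $\delta = 1/n$ would be borderline. I would dispose of the first issue by dropping nodes with $b_w = 0$ during the reduction (they are irrelevant to the $b$-matching problem) and of the second by the structural bound $\deg_E(u) \le |V| \le n-2$ coming from bipartiteness and the presence of the extra node $z$. Once positivity is secured the rest is routine verification, and the bound $c' \le n$ (respectively $c' \ge \|b\|_{\infty}^{-1}$) follows directly from the range of $x$ derived above.
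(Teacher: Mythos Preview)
Your proposal is correct and follows essentially the same construction as the paper: put $x_{u,v}=1/n$ on original edges, set $x_{u,z}=b_u-\deg(u)/n$ and $x_{z,v}=b_v-\deg(v)/n$ on the star edges, take $y=0$, and let $s=c'=1/x$. Your treatment of the positivity of the star-edge flows (via $\deg\le n-2$ and the handling of $b_w=0$) is in fact slightly more careful than the paper's, which simply asserts $b_v\ge1$ and $\deg_G(v)<n$.
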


\paragraph{Switching the cost vector}

\Cref{lem:initial_point_simple} returns some cost vector $c'$ that differs from $c$.
As outlined before and as can be seen in \Cref{line:matching:switch} of \Cref{alg:perfect_b_matching},
we will switch the cost vectors once our point $(x,s)$ 
is far enough away from the optimum.
\Cref{lem:switch_cost}, which formally shows that this switching is possible,
is proven in \Cref{sec:initial_point:switch}.

\begin{restatable}{lemma}{lemSwitchCost}\label{lem:switch_cost}
Consider the function $\tau(x,s)$ and a uncapacitated min-cost-flow instance 
as constructed in \Cref{lem:initial_point} or \Cref{lem:initial_point_simple}
and let $\mA$ be the incidence matrix of the underlying graph $G'$.
Let $c$ be the cost vector constructed in \Cref{lem:initial_point}
and let $c'$ be any other cost vector with $\|c'\|_\infty \ge 1$.

Assume we have a primal dual solution pair $(x,s)$ 
for cost vector $c$ and demand $d$
with $\|\mA x - d \|_{(\mA \mX \mS^{-1} \mA)^{-1}} \le \sqrt{\mu}/4$
and $xs \approx_{\epsilon/3} \mu \tau(x,s)$ 
for any $\epsilon \le 1$ and $\mu \ge 45 \|c'\|_\infty \|d\|_\infty \epsilon$.

If we replace the cost vector $c$ with cost vector $c'$
then $(x,s+c'-c)$ is a primal dual solution pair 
for the new min-cost-flow instance with cost vector $c'$
and we have
$x(s+c'-c) \approx_{\epsilon/2} \mu \tau(x, s+c'-c)$.

\end{restatable}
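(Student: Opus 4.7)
The plan is to verify each of the three conditions of \Cref{def:central_point} (dual feasibility, approximate primal feasibility, approximate centrality) for the perturbed pair $(x, s+c'-c)$ relative to the new cost $c'$ and the same parameter $\mu$. Dual feasibility is immediate: if $\mA y + s = c$, then $\mA y + (s+c'-c) = c'$, so the same $y$ serves as the dual variable. Approximate primal feasibility is almost immediate as well, since $x$ itself is unchanged; the only subtlety is that the norm $\|\cdot\|_{(\mA^\top \mx (\ms+c'-c)^{-1} \mA)^{-1}}$ uses a perturbed diagonal matrix, so I would first establish $s + c'-c \approx s$ multiplicatively (see the key estimate below), which then implies that the new matrix norm is within a constant factor of the old, strengthening the hypothesized bound $\sqrt{\mu}/4$ into the $\epsilon\gamma\sqrt{\mu}$ form required by the definition.

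The main work is in verifying approximate centrality. Writing $x_i(s_i + c'_i - c_i) = x_i s_i + x_i(c'_i - c_i)$, and using the hypothesis $xs \approx_{\epsilon/3} \mu\tau(x,s)$, it suffices to show two things: (i) the perturbation $|x_i(c'_i - c_i)|$ is at most roughly $(\epsilon/12)\mu\tau(x,s)_i$ entrywise, and (ii) the weight function satisfies $\tau(x, s+c'-c) \approx \tau(x,s)$. Item (ii) is trivial for the logarithmic barrier since $\taulog \equiv \vones$ is independent of $s$; for the LS barrier, it will follow from $s+c'-c \approx s$ together with a continuity-of-leverage-scores argument essentially identical to \Cref{lem:expand_sigma} and \Cref{lem:p2_stability}, noting that a small multiplicative perturbation of the weight vector $\mx^{1/2-\alpha}\ms^{-1/2-\alpha}$ only perturbs $\sigma$ (and hence $\tau$) by a comparable factor.

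The crux is item (i), which requires combining a lower bound on $s_i$ with upper bounds on $|c'_i - c_i|$ and $x_i$. By hypothesis $c$ is the cost vector supplied by \Cref{lem:initial_point} (or \Cref{lem:initial_point_simple}), so $\|c\|_\infty$ is bounded polynomially in the problem parameters and $\|c'_i - c_i\| \le \|c'\|_\infty + \|c\|_\infty = O(\|c'\|_\infty)$ (using $\|c'\|_\infty \ge 1$). The centrality hypothesis $x_i s_i \approx_{\epsilon/3} \mu\tau(x,s)_i$ lets me translate any lower bound on $s_i$ into an upper bound on $x_i$ (and vice versa); I will exploit the structure of the starred flow graph to bound $\|x\|_\infty$ by $O(\|d\|_\infty)$ up to polylogarithmic factors (using \Cref{lem:feasible_projection} to pass from the approximately feasible $x$ to a nearby truly feasible flow and bounding the latter by total demand), yielding $s_i \ge \mu\tau(x,s)_i / \tilde O(\|d\|_\infty)$. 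Plugging into the perturbation gives
\[
\frac{|x_i (c'_i - c_i)|}{\mu\tau(x,s)_i} \le \frac{|c'_i - c_i|}{s_i} \cdot e^{\epsilon/3} \le \tilde O\!\left(\frac{\|c'\|_\infty \|d\|_\infty}{\mu}\right),
\]
which is bounded by $\epsilon/12$ under the hypothesis on $\mu$.

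The main obstacle is getting the per-coordinate bound on $x_i$ (or equivalently the lower bound on $s_i$) through the non-trivial structure of the starred graph and reconciling the resulting constants with the precise threshold $45\|c'\|_\infty\|d\|_\infty\epsilon$ stated in the lemma; some care is needed because $x$ is only approximately feasible, so the bound on $\|x\|_\infty$ must be derived via \Cref{lem:feasible_projection} to a nearby feasible $x'$ and then transferred back to $x$ using the multiplicative closeness guarantee there. Once this bound is in hand, all three conditions fall out routinely.
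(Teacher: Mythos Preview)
Your proposal is correct and follows essentially the same route as the paper: show dual feasibility trivially via $\mA y + (s+c'-c)=c'$, upper-bound $x_i$ by passing through \Cref{lem:feasible_projection} to a nearby feasible flow bounded by $\|d\|_\infty$, convert this into a lower bound on $s_i$ via centrality, and then control $|c'_i-c_i|/s_i$ using the explicit bound $\|c\|_\infty\le 3n$ from \Cref{lem:initial_point}. Two minor notes: the lemma as stated only asserts approximate centrality (not full $\epsilon$-centeredness), so your primal-feasibility discussion is extra but harmless; and your observation that one must also check $\tau(x,s+c'-c)\approx\tau(x,s)$ for the LS barrier is a point the paper's proof leaves implicit.
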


\paragraph{Rounding to an optimal integral solution}

After running the IPM for the cost vector $c$
(i.e. \Cref{line:matching:follow_2} of \Cref{alg:perfect_b_matching})
we obtain a near optimal and near feasible fractional solution $x$.
The following \Cref{lem:make_flow_feasible} shows
that we can convert this near feasible solution to a truly feasible solution
that is still near optimal. The proof is deferred to \Cref{sec:initial_point:feasible}.

\begin{restatable}{lemma}{lemMakeFlowFeasible}\label{lem:make_flow_feasible}
Consider any $\epsilon > 0$ and an uncapacitated min-cost flow instance on a starred flow graph (\Cref{def:starred_flow_graph})
with cost vector $c \in \R^E$, demand vector $d \in \R^V$
and the property that any feasible flow $f$ satisfies $f \le \|d\|_\infty$.

Assume we are given a primal dual solution pair $(x,s)$,
with $xs \approx_{1/2} \mu \tau(x,s)$ and
\begin{align*}
\frac{1}{\mu}\cdot\|\mA x - d\|_{(\mA^\top \mX \mS^{-1}\mA)^{-1}}^2 \le \frac{1}{10}
\end{align*}
for $\mu \le \frac{\epsilon}{24 n}$.

Let $\delta = \frac{\epsilon^2}{12(\|c\|_\infty\|d\|_\infty m)^2}$, 
then in $\tilde{O}(m \log \delta^{-1})$ time 
we can construct a feasible flow $f$ with $c^\top f \le \OPT + \epsilon$,
where $\OPT$ is the optimal value of the min-cost flow instance.
\end{restatable}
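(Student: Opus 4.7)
The plan is to reduce the near-feasible primal iterate $x$ to an exactly feasible flow in two stages: first use the feasibility projection (Lemma~\ref{lem:feasible_projection}) to push the infeasibility $\mathbf{A}^\top x - d$ down to a polynomially small residual, then exploit the star vertex $z$ of the starred flow graph to route away any remaining residual along star edges. The final LP-duality bound on $c^\top f$ is then cheap.

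First I would apply Lemma~\ref{lem:feasible_projection} with a Laplacian solver of accuracy $\delta$ (as specified in the statement) to produce $x' := x + \mathbf{X}\mathbf{S}^{-1}\mathbf{A}\mathbf{H}^{-1}(d-\mathbf{A}^\top x)$. The hypothesis $xs\approx_{1/2}\mu\tau(x,s)$ and $\|\mathbf{A}^\top x-d\|^2_{(\mathbf{A}^\top\mathbf{X}\mathbf{S}^{-1}\mathbf{A})^{-1}}\le\mu/10$ imply $\|\mathbf{X}^{-1}(x-x')\|_\infty^2 \le 5/\mu \cdot \mu/10 = 1/2$, so $x'\ge 0$ entrywise, and also give $\|\mathbf{A}^\top x' - d\|_{(\mathbf{A}^\top\mathbf{X}\mathbf{S}^{-1}\mathbf{A})^{-1}}^2 \le 5\delta \cdot \mu/10$. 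Next, let $r := d - \mathbf{A}^\top x' \in \mathbb{R}^{V'}$ denote the remaining residual demand; note $\mathbf{1}^\top r = 0$. Because $z$ is connected to every other node by a directed star edge, I would define a correction $\rho\in\mathbb{R}^{E'}$ supported only on star edges by $\rho_{(z,v)} := r_v$ for $v\in V$ and $\rho_{(u,z)} := -r_u$ for $u\in U$. A direct calculation then gives $\mathbf{A}^\top(x'+\rho) = d$ exactly, and the balance at $z$ holds automatically from $\mathbf{1}^\top r=0$. The initial star-edge flows constructed in Lemma~\ref{lem:initial_point_simple} are $\Omega(\|d\|_\infty)$ and were preserved under the projection step up to a $(1\pm 1/\sqrt{2})$ factor, while $r$ will be made polynomially smaller by the choice of $\delta$, so $f := x' + \rho \ge 0$.

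For the cost bound I would split
\[
c^\top f = c^\top x + c^\top(x'-x) + c^\top\rho
\]
and control each term separately. The term $c^\top x$ is handled via LP duality: writing $c = \mathbf{A} y + s$ yields $c^\top x = s^\top x + d^\top y + (\mathbf{A}^\top x - d)^\top y$, so $c^\top x \le \mathrm{OPT} + s^\top x + |(\mathbf{A}^\top x - d)^\top y|$, where $s^\top x \le e^{1/2}\mu\|\tau(x,s)\|_1 = O(\mu n) \le \epsilon/3$ by the choice $\mu\le\epsilon/(24n)$ and the cross term is bounded via Cauchy--Schwarz in the $(\mathbf{A}^\top\mathbf{X}\mathbf{S}^{-1}\mathbf{A})^{\pm 1}$ norms. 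For $c^\top(x'-x)$, the explicit formula from Lemma~\ref{lem:feasible_projection} together with $c=\mathbf{A} y + s$ produces two summands: $y^\top \mathbf{A}^\top\mathbf{X}\mathbf{S}^{-1}\mathbf{A}\mathbf{H}^{-1}(d-\mathbf{A}^\top x)$ and $s^\top \mathbf{X}\mathbf{S}^{-1}\mathbf{A}\mathbf{H}^{-1}(d-\mathbf{A}^\top x) = x^\top\mathbf{A}\mathbf{H}^{-1}(d-\mathbf{A}^\top x)$, both of which I would bound again by Cauchy--Schwarz in the Laplacian norm together with the spectral approximation $\mathbf{H}^{-1}\approx_\delta(\mathbf{A}^\top\mathbf{X}\mathbf{S}^{-1}\mathbf{A})^{-1}$. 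Finally, $|c^\top\rho| \le \|c\|_\infty\|\rho\|_1 \le \|c\|_\infty\sqrt{n}\,\|r\|_2$; the bound $\|r\|_{(\mathbf{A}^\top\mathbf{X}\mathbf{S}^{-1}\mathbf{A})^{-1}}^2\le\delta\mu/2$ is converted to an $\ell_2$-bound using the smallest nonzero eigenvalue of $\mathbf{A}^\top\mathbf{X}\mathbf{S}^{-1}\mathbf{A}$, which is lower-bounded in the starred graph because every node has a cheap-enough edge to $z$, making $\mathbf{X}\mathbf{S}^{-1}$ bounded below on star edges. Plugging in $\delta = \epsilon^2/(12(\|c\|_\infty\|d\|_\infty m)^2)$ then makes each of the three terms at most $\epsilon/3$. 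The total runtime is dominated by a single Laplacian solve at accuracy $\delta$, which costs $\tilde{O}(m\log\delta^{-1})$ by Lemma~\ref{lem:laplacian_solver}, while the star correction is $O(n)$.

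The main obstacle will be the quantitative conversion of the weighted infeasibility norm $\|\cdot\|_{(\mathbf{A}^\top\mathbf{X}\mathbf{S}^{-1}\mathbf{A})^{-1}}$ into a usable $\ell_1$- or $\ell_2$-bound on $r$, since a generic spectral lower bound on $\mathbf{A}^\top\mathbf{X}\mathbf{S}^{-1}\mathbf{A}$ can be as bad as $1/\mathrm{poly}(\|c\|_\infty\|d\|_\infty,m)$. The structure of the starred flow graph is precisely what rescues the argument: the star edges provide a spanning-star subgraph whose weighted Laplacian is well-conditioned once we verify (using Lemma~\ref{lem:bitlength} and the initial point of Lemma~\ref{lem:initial_point_simple}) that the entries of $\mathbf{X}\mathbf{S}^{-1}$ along star edges are bounded below by $1/\mathrm{poly}(\|c\|_\infty\|d\|_\infty,m)$. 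Tracking these polynomial factors carefully, and showing that they fit inside the $\log\delta^{-1}$ budget, is where essentially all of the bookkeeping effort in the proof will go.
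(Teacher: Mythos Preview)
Your high-level plan matches the paper's: project $x$ to $x'$ via Lemma~\ref{lem:feasible_projection}, repair the tiny residual using the star vertex, then bound the cost via LP duality. However, two of your key technical steps are broken as written.

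First, the nonnegativity argument for $f=x'+\rho$ is incorrect. You claim the star-edge entries of $x'$ are $\Omega(\|d\|_\infty)$ because they were so in the initial point of Lemma~\ref{lem:initial_point_simple}, but the present lemma is applied \emph{after} the IPM has driven $\mu$ down to $\epsilon/(24n)$. At that point $x$ is close to an optimal flow, which puts essentially nothing on the high-cost star edges; since $x_es_e\approx_{1/2}\mu\tau_e$ and $s_e$ is large on star edges, $x_e$ (hence $x'_e$) is tiny there, and $x'+\rho$ can go negative. The paper sidesteps this with a one-sided repair: first \emph{reduce} flow at every $u\in U$ (resp.\ $v\in V$) that has excess, which only decreases already-nonnegative entries; then route the remaining deficits through $z$, which only \emph{adds} flow on star edges.

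Second, your norm conversion is in the wrong direction. To pass from $\|r\|_{(\mathbf A^\top\mathbf X\mathbf S^{-1}\mathbf A)^{-1}}$ to $\|r\|_2$ you need an \emph{upper} bound on $\lambda_{\max}(\mathbf A^\top\mathbf X\mathbf S^{-1}\mathbf A)$, not a lower bound on $\lambda_{\min}$; a spanning star with bounded-below weights gives the latter, which is useless here. The paper (Lemma~\ref{lem:demand_error_bound}) does this without any star structure: $\lambda_{\max}(\mathbf A^\top\mathbf X\mathbf S^{-1}\mathbf A)\le\max_i(x_i/s_i)\cdot\lambda_{\max}(\mathbf A^\top\mathbf A)\le 2n\cdot 3m\|d\|_\infty^2/(n\mu)$, where the bound on $x_i/s_i$ follows from $xs\approx_{1}\mu\tau$, $\tau\ge n/m$, and the instance hypothesis $x\lesssim\|d\|_\infty$ (obtained via Lemma~\ref{lem:feasible_projection}).

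Finally, your cost decomposition is more laborious than needed and drags in a bound on the dual vector $y$. The paper simply writes $c^\top f\le c^\top x'+\|c\|_2\|f-x'\|_2$ and controls $c^\top x'-\mathrm{OPT}$ by the duality gap $\sum_i x'_is_i\le 2\sum_i x_is_i\le 12\mu n$, which needs nothing beyond $x'\approx x$ and $xs\approx_{1/2}\mu\tau$.
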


At last, we convert the feasible near optimal fractional solution $x$
to a truly optimal integral solution via the Isolation-Lemma of \cite{ds08}. 
\Cref{lem:isolation_lemma} below shows that, 
if the flow instance is integral 
and the set of optimal flows has congestion at most 
$W$ (where the congestion of a flow refers to the maximum flow value over all edges), 
then we can obtain an optimal feasible flow by (i) slightly perturbing the
problem, (ii) solving the perturbed problem approximately with $1/\poly(mW)$
additive error, and (iii) rounding the flow on each edge to nearest
integer.
We show in \Cref{sec:isolation} how \Cref{lem:isolation_lemma} 
is obtained via the Isolation-Lemma of \cite{ds08,klivans2001randomness}.

\begin{restatable}{lemma}{lemRoundIntegral}%
	\label{lem:isolation_lemma}
	Let $\Pi=(G,b,c)$ be an instance
	for minimum-cost flow problem where $G$ is a directed graph with
	$m$ edges, the demand vector $b\in\{-W,\dots,W\}^{V}$ and the cost
	vector $c\in\{-W,\dots,W\}^{E}$. 
	Further assume that all optimal flows have congestion at most $W$.
	
	Let the perturbed instance $\tPi=(G,b,\tc)$
	be such that $\tc_{e}=c_{e}+z_{e}$ where $z_{e}$ is a random number
	from the set $\left\{ \frac{1}{4m^{2}W^{2}},\dots,\frac{2mW}{4m^{2}W^{2}}\right\} $.
	Let $\tf$ be a feasible flow for $\tPi$ whose cost is at most $\opt(\tPi)+\frac{1}{12m^{2}W^{3}}$
	where $\opt(\tPi)$ is the optimal cost for problem $\tPi$. 
	Let $f$	be obtained by rounding the flow $\tf$ on each edge to the nearest integer. 
	Then, with probability at least $1/2$, 
	$f$ is an optimal feasible flow for $\Pi$.
\end{restatable}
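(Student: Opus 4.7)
The plan is to combine a per-edge Isolation-Lemma argument with a flow-polytope rounding analysis based on the cost gap between distinct integer feasible flows and simple integer circulations.

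First, I would show that with probability at least $1/2$ over the $z_e$'s, $\tilde\Pi$ has a unique optimal integer feasible flow $f^*$, and that $f^*$ is also optimal for $\Pi$. Fix all $z_{e'}$ with $e' \neq e$, and for each $k \in \{0,\dots,W\}$ let $G_k$ be the minimum of $\sum_{e' \neq e}\tilde c_{e'} f_{e'} + c_e k$ over integer feasible flows with $f_e = k$; the hypothesis that all $\Pi$-optimal flows have congestion at most $W$ bounds the relevant range of $k$. As a function of $z_e$ alone, the optimal cost of $\tilde\Pi$ becomes $\min_k(G_k + k z_e)$, a concave piecewise-linear function with integer slopes in $\{0,\dots,W\}$ and hence at most $W$ breakpoints; two optima can disagree on edge $e$ only at a breakpoint, so $\Pr[\text{edge }e\text{ bad}] \leq W/(2mW) = 1/(2m)$, and a union bound over the $m$ edges gives probability $\geq 1/2$ that every edge has a uniquely determined optimal value, forcing $f^*$ to be unique. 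Since any flow of congestion $\leq W$ incurs total perturbation $\sum_e z_e f_e \leq m\cdot (1/(2mW))\cdot W = 1/2$, we get $c^\top f^* \leq \tilde c^\top f^* \leq \opt(\Pi)+1/2$, and integrality of $c^\top f^*$ and $\opt(\Pi)$ forces $c^\top f^* = \opt(\Pi)$.

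Second, I would prove a uniform cost-gap: for any integer vertex $v \neq f^*$ of the flow polytope $P = \{f : \mA^\top f = b,\, f \geq 0\}$, the difference
\[
  \tilde c^\top(v - f^*) \;=\; c^\top(v - f^*) \;+\; \frac{1}{4m^2W^2}\sum_e (4m^2W^2 z_e)(v_e - f^*_e)
\]
is a strictly positive rational whose denominator divides $4m^2W^2$, hence at least $1/(4m^2W^2)$. The same rationality argument combined with uniqueness of $f^*$ (which rules out zero-cost circulations) shows every primitive extreme ray of the recession cone of $P$ --- an integral simple directed cycle $\chi_C$ with entries in $\{0,1\}$ --- also satisfies $\tilde c^\top \chi_C \geq 1/(4m^2W^2)$.

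Finally, I would decompose $\tilde f = f_0 + r$, where $f_0 = \lambda^* f^* + \sum_{v \neq f^*}\lambda_v v$ is a convex combination of integer vertices of $P$ (integral by total unimodularity of $\mA$) and $r = \sum_l \mu_l \chi_{C_l}$ is a nonnegative combination of primitive simple cycles. Applying the two gap bounds gives $\tilde c^\top f_0 - \opt(\tilde\Pi) \geq (1-\lambda^*)/(4m^2W^2)$ and $\tilde c^\top r \geq \|r\|_\infty / (4m^2W^2)$, so combining with $\tilde c^\top \tilde f \leq \opt(\tilde\Pi) + 1/(12m^2W^3)$ yields the joint budget $(1-\lambda^*) + \|r\|_\infty \leq 1/(3W)$. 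Since each integer vertex $v$ in the decomposition has $v_e \leq W$, the maximum of $aW+b$ subject to $a+b \leq 1/(3W)$ and $a,b \geq 0$ gives $|(\tilde f - f^*)_e| \leq (1-\lambda^*)\,W + \|r\|_\infty \leq 1/3$ on every edge, so rounding $\tilde f$ coordinate-wise to the nearest integer recovers exactly $f^*$, which by the first step is optimal for $\Pi$. The main obstacle will be making the decomposition step fully rigorous when $P$ is unbounded: $\tilde f$ is not a priori congestion-bounded, so the integer vertices $v$ appearing with positive weight in $f_0$ could themselves have large congestion, invalidating the bound $v_e \leq W$. I would handle this by choosing a canonical decomposition --- iteratively extracting primitive integer circulations from $\tilde f - f^*$ and absorbing them into $r$ --- so that all remaining vertices in $f_0$ stay within the congestion-$W$ integral sub-polytope, with the cycle-gap bound applied to each extracted circulation ensuring the joint budget above remains valid.
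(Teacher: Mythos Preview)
Your Steps 1 and 2 are sound. Where the paper simply invokes the Klivans--Spielman isolation lemma (\Cref{lem:cited_isolation_lemma}) as a black box, your per-edge breakpoint argument is a valid direct proof of that lemma specialized to this setting, and your gap bound $1/(4m^2W^2)$ is exactly what the paper uses.

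The paper's Step 3 is considerably simpler than yours: it writes $\tilde f$ as a convex combination $\lambda f^* + (1-\lambda)g$ of integral feasible points, derives $1-\lambda \le 1/(3W)$ from the gap, and concludes $\|\tilde f - f^*\|_\infty \le (1-\lambda)\|g-f^*\|_\infty \le 1/3$, implicitly using that \emph{every} integral feasible point has congestion at most $W$. That is stronger than the lemma's literal hypothesis (only optimal flows are assumed bounded), but it holds in the paper's single application, the starred flow graph, where every edge is incident to a node whose demand caps the flow.

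You correctly identify that under the literal hypothesis the polytope can be unbounded and vertices of $P$ can have congestion up to $\|b\|_1/2 = \Theta(nW)$, which would destroy the bound $(1-\lambda^*)\cdot W$. However, your proposed fix --- extracting primitive integer circulations from $\tilde f - f^*$ so that the remaining vertices lie in the congestion-$W$ sub-polytope --- does not obviously work: $\tilde f - f^*$ is a \emph{fractional} circulation that need not decompose into nonnegative integer cycles, and there is no clear mechanism forcing the residual convex combination to avoid high-congestion vertices. A cleaner repair is to split the non-$f^*$ vertices in the Carath\'eodory decomposition of $f_0$ into two classes: (a) $\Pi$-optimal vertices, which by hypothesis have congestion $\le W$ and $\tilde c$-gap $\ge 1/(4m^2W^2)$; and (b) non-$\Pi$-optimal vertices, for which $c^\top v \ge \opt(\Pi)+1$ by integrality, hence $\tilde c$-gap $\ge 1/2$. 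The budget $1/(12m^2W^3)$ then forces the total weight on class (b) to be $O(1/(m^2W^3))$, which absorbs their $O(nW)$ congestion with room to spare; together with $\tilde c^\top \chi_C \ge 1$ for every directed cycle (a zero-$c$-cost cycle would yield optimal flows of unbounded congestion), your $\|\tilde f - f^*\|_\infty \le 1/3$ conclusion goes through.
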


\paragraph{Proof of \Cref{thm:perfect_b_matching_slow}}

We start by proving the correctness of \Cref{alg:perfect_b_matching}
in \Cref{lem:perfect_b_matching_correct} by using the previously stated 
\Cref{lem:initial_point_simple,lem:switch_cost,lem:make_flow_feasible,lem:isolation_lemma}.
Then we analyze the complexity of \Cref{alg:perfect_b_matching} 
in \Cref{lem:perfect_b_matching_complexity_slow}.
\Cref{lem:perfect_b_matching_correct} and \Cref{lem:perfect_b_matching_complexity_slow}
together form the proof of \Cref{alg:perfect_b_matching}.

\begin{lemma}\label{lem:perfect_b_matching_correct}
With high probability \Cref{alg:perfect_b_matching} 
returns an optimal minimum weight perfect bipartite $b$-matching.
\end{lemma}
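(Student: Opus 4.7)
The plan is to trace the algorithm step by step, verifying that each stage produces input satisfying the preconditions of the next. The overall structure follows the classical IPM template (initial point $\to$ follow path to center $\to$ switch to true objective $\to$ follow path to optimum $\to$ round) but with the complication that our primal iterates are only approximately feasible, and the optimal integrality relies on the Isolation Lemma.

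First, I would reduce the minimum weight perfect bipartite $b$-matching instance to an uncapacitated min-cost flow problem on the starred flow graph $G'$ (Definition~\ref{def:starred_flow_graph}) with cost vector $c''$. The reduction is standard: since the auxiliary edges through $z$ cost $\|b\|_1 \|c\|_\infty$, any integral optimal flow either (i) has cost exceeding $\|b\|_1\|c\|_\infty$, witnessing that no perfect $b$-matching exists, or (ii) uses no auxiliary edge, in which case its restriction to $E$ is a minimum weight perfect $b$-matching. The Isolation Lemma step perturbs $c''$ by an additive $\Theta(1/m^2 W^2)$ term so that the optimal integral flow becomes unique with probability at least $1/2$ (Lemma~\ref{lem:isolation_lemma}).

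Next, I would track the primal-dual pair $(x,s)$ through both invocations of \textsc{PathFollowing}. Lemma~\ref{lem:initial_point_simple} supplies an initial pair satisfying $xs = \tau(x,s)$ (thus $0$-centered) for the auxiliary cost vector $c'$ at $\mu^{\init} = 1$, and by construction $\mA^\top x = b$, so the approximate primal feasibility hypothesis of Lemma~\ref{lem:path_following_master} holds trivially. Running $\textsc{PathFollowing}$ up to $\mu^{\target} = 45\|c'\|_\infty \|b\|_\infty$ yields an $\epsilon$-centered pair for $c'$ at that $\mu$. At this point I invoke Lemma~\ref{lem:switch_cost} (applicable since $\mu \geq 45\|c''\|_\infty \|b\|_\infty \epsilon$ and the current pair is $\epsilon$-centered, hence satisfies the primal feasibility bound): replacing $s$ by $s + c'' - c'$ yields an $(\epsilon/2)$-centered pair for the true cost vector $c''$, which in particular satisfies the input hypothesis of the second $\textsc{PathFollowing}$ call. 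The second call then drives $\mu$ down to $\mu^{\target} = \tfrac{1}{24n\cdot 12m^2(\|b\|_\infty\|c''\|_\infty)^3}$, again producing an $\epsilon$-centered output by Lemma~\ref{lem:path_following_master}.

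To extract an actual feasible flow from the resulting nearly-feasible $x$, I apply Lemma~\ref{lem:make_flow_feasible}. The centeredness of the output gives $xs \approx_{\epsilon/3} \mu \tau(x,s)$ and $\|\mA^\top x - b\|_{(\mA^\top \mX \mS^{-1}\mA)^{-1}} \leq \epsilon \gamma \sqrt{\mu}$, which (since $\gamma = 1/\polylog n$ and $\epsilon < 1$) easily implies the hypotheses $xs \approx_{1/2} \mu\tau(x,s)$ and $\mu^{-1}\|\mA^\top x-b\|^2 \leq 1/10$. The choice of $\mu^{\target}$ is precisely what makes $\mu \leq \varepsilon/(24n)$ for $\varepsilon := 1/(12m^2(\|c''\|_\infty \|b\|_\infty)^3)$, so Lemma~\ref{lem:make_flow_feasible} returns a feasible flow $f$ with $c''^\top f \leq \OPT + \varepsilon$. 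Finally, since this additive error matches the threshold of Lemma~\ref{lem:isolation_lemma} (applied with $W = \|c\|_\infty\|b\|_\infty$), rounding $f$ entrywise to the nearest integer yields the unique optimal integral flow with probability at least $1/2$; standard repetition boosts this to high probability, and converting back to a perfect $b$-matching (dropping any $z$-incident edges, which are excluded by the cost gap) completes the argument.

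The main obstacle is verifying the numerical threading of parameters: confirming that the $\mu^{\target}$ used in \Cref{line:matching:follow_2} is small enough to simultaneously (i) satisfy the $\mu \le \epsilon/(24n)$ requirement of Lemma~\ref{lem:make_flow_feasible} with $\epsilon$ chosen so that Lemma~\ref{lem:isolation_lemma}'s rounding works, (ii) not be so small that $|\log(\mu^{\target}/\mu^{\init})|$ overwhelms the claimed runtime bound, and that the initial $\mu^{\target} = 45\|c'\|_\infty\|b\|_\infty$ used in the first call satisfies Lemma~\ref{lem:switch_cost}'s lower bound with the correct constant. All other obligations reduce to direct application of the cited lemmas.
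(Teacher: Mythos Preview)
Your proposal is correct and follows essentially the same approach as the paper's proof: trace the algorithm through the sequence (initial point via Lemma~\ref{lem:initial_point_simple}) $\to$ (first \textsc{PathFollowing}) $\to$ (cost switch via Lemma~\ref{lem:switch_cost}) $\to$ (second \textsc{PathFollowing}) $\to$ (feasibility via Lemma~\ref{lem:make_flow_feasible}) $\to$ (integer rounding via Lemma~\ref{lem:isolation_lemma}) $\to$ (repetition for high probability). Your treatment is in fact slightly more explicit than the paper's about the parameter threading, which the paper largely glosses over; the one minor slip is that in your Isolation Lemma step the congestion bound is simply $\|b\|_\infty$ (from the demand at each node) rather than $\|c\|_\infty\|b\|_\infty$, but this only helps.
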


\begin{proof}
Given the bipartite graph $G$ and vector $b$, 
\Cref{lem:initial_point_simple} constructs the corresponding starred flow graph
with demand vector $d \in \R^n$,
a cost vector $c' \in \R^m$, 
and a feasible primal dual pair $(x,s)$ with $xs = 1$.
Thus \textsc{PathFollowing} in \Cref{line:matching:follow_1} returns $x,s$ 
with 
\begin{align*}
xs \approx_{\epsilon/3} 45 \|c'\|_\infty\|b\|_\infty =: \mu\text{, and} \\
\mu^{-1/2} \|\mA^\top x - b\|_{(\mA \mX \mS^{-1} \mA)^{-1}} \le \epsilon\gamma < 1/4
\end{align*}
by \Cref{lem:path_following_master} and \Cref{lem:main_lem}.
By \Cref{lem:switch_cost} we can switch the cost vector $c'$ with $c''$,
such that $xs \approx_{\epsilon/2} \tau(x,s)$.
Then \textsc{PathFollowing} in \Cref{line:matching:follow_2} 
returns $x,s$ with
\begin{align*}
xs \approx_{\epsilon/3} \frac{1}{24n\cdot 12 m^{2}(\|b\|_\infty\|c''\|_\infty)^{3}} =: \mu' \text{, and}
\end{align*}
\begin{align*}
\mu'^{-1/2} \|\mA^\top x - b\|_{(\mA \mX \mS^{-1} \mA)^{-1}} \le \epsilon\gamma < 1/10
\end{align*}
by \Cref{lem:path_following_master} and \Cref{lem:main_lem}.
\Cref{lem:make_flow_feasible} turns the given $x$ into a feasible flow on graph $G'$,
with 
\begin{align*}
c''^\top x \le \OPT + \frac{1}{12 m^{2}(\|b\|_\infty\|c''\|_\infty)^{3}}.
\end{align*}
Further we know that every optimal flow on that graph can have congestion at most $\|b\|_\infty$,
as the congestion is bounded by the demand of the incident nodes.
Thus by \Cref{lem:isolation_lemma}, 
we can round the entries of $x$ to the nearest integer 
and obtain an optimal solution for the minimum cost flow with cost vector $c''$.
Since $c''_{u,z} = c''_{z,v} = n \|c\|_\infty$, 
the optimal solution will not use any of the edges $(u,z)$ or $(z,v)$ 
for any $u \in U$ or $v \in V$.
Thus the obtained flow $x$ can be interpreted as a perfect $b$-matching, 
because the demands $d'_u = -b_u$ and $d'_v = b_v$ match the given $b$-vector.
Note that we only obtain the optimal solution with probability at least $1/2$,
but we can repeat the algorithm $O(\log n)$ times and return the minimum result,
which is the optimal solution with high probability.
\end{proof}

\begin{lemma}\label{lem:perfect_b_matching_complexity_slow}
\Cref{alg:perfect_b_matching} runs in $\tilde{O}(n\sqrt{m} \log^2 (\|c\|_\infty\|b\|_\infty))$ time.
\end{lemma}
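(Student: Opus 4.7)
The plan is to charge the total running time of \Cref{alg:perfect_b_matching} to its two invocations of \textsc{PathFollowing} (\Cref{line:matching:follow_1} and \Cref{line:matching:follow_2}); every other step is either linear or $\tilde O(m)$ in the size of the graph. I will apply the runtime bound of \Cref{lem:complexity_pathfollowing}, which gives
\[
\tilde O\bigl(n\sqrt{m}\cdot |\log(\mu^{\target}/\mu^{\init})|\cdot(\log W+|\log(\mu^{\target}/\mu^{\init})|)\bigr),
\]
where $W$ bounds the ratio of largest to smallest entry of the initial $x^{\init}$ and $s^{\init}$. So the task reduces to bounding, for each of the two calls, (i) the log-ratio $|\log(\mu^{\target}/\mu^{\init})|$ and (ii) the weight ratio $W$ at the start of the call.

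For the first call, \Cref{lem:initial_point_simple} produces an initial primal-dual pair with $xs=\vones$ and cost vector $c'$ satisfying $\|b\|_{\infty}^{-1}\le c'\le n$, so a direct inspection shows the initial entries lie in a window of size $\poly(n,\|b\|_{\infty})$, hence $\log W=O(\log(n\|b\|_{\infty}))$. Since $\mu^{\init}=1$ and $\mu^{\target}=45\|c'\|_{\infty}\|b\|_{\infty}=O(n\|b\|_{\infty})$, the log-ratio is also $O(\log(n\|b\|_{\infty}))$, so this call costs $\tilde O(n\sqrt{m}\log^{2}(\|b\|_{\infty}))$. The cost-vector switch at \Cref{line:matching:switch} and the application of \Cref{lem:switch_cost} are $O(m)$.

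For the second call, the starting pair is the output of the first call; by \Cref{lem:bitlength}, for any iterate produced during \textsc{PathFollowing}, the ratio of largest to smallest entry of $x$ or $s$ is bounded by $\tilde O(\log W^{(0)}+|\log(\mu_{1}/\mu_{0})|)$ in logarithm, which unrolls to $\tilde O(\log(\|c\|_{\infty}\|b\|_{\infty}))$ after the first call. For the second invocation we have $\mu^{\init}=\Theta(n\|b\|_{\infty}\|c''\|_{\infty})$ and
\(
\mu^{\target}=1/(24n\cdot12 m^{2}(\|b\|_{\infty}\|c''\|_{\infty})^{3}),
\)
so the log-ratio is $O(\log(m\|c\|_{\infty}\|b\|_{\infty}))$. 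Plugging into \Cref{lem:complexity_pathfollowing} gives $\tilde O(n\sqrt{m}\log^{2}(\|c\|_{\infty}\|b\|_{\infty}))$. Finally, \Cref{lem:make_flow_feasible} runs in $\tilde O(m\log\delta^{-1})$ with $\delta^{-1}=\poly(m,\|c\|_{\infty},\|b\|_{\infty})$, and the integer rounding is $O(m)$; both are absorbed. Summing the two \textsc{PathFollowing} bounds yields the claimed total runtime.

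The only non-routine ingredient is controlling $\log W$ for the second call, which is where I would invoke \Cref{lem:bitlength} to prevent a blow-up of the weight ratio during the first phase; everything else is bookkeeping against \Cref{lem:complexity_pathfollowing}.
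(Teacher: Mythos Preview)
Your proposal is correct and follows essentially the same approach as the paper: bound the cost of each \textsc{PathFollowing} invocation via \Cref{lem:complexity_pathfollowing} by verifying that both $|\log(\mu^{\target}/\mu^{\init})|$ and $\log W$ are $\tilde O(\log(\|c\|_\infty\|b\|_\infty))$, and observe that all remaining steps are $\tilde O(m\log(\|c\|_\infty\|b\|_\infty))$. You are actually more careful than the paper in one place: the paper simply asserts the bound on $\log W$, while you correctly identify that for the second call one needs \Cref{lem:bitlength} to control the weight ratio of the $(x,s)$ produced by the first call (and that the cost switch only perturbs $s$ multiplicatively by \Cref{lem:switch_cost}).
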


\begin{proof}
Constructing the initial point via \Cref{lem:initial_point} takes $O(m)$ time.
Next, moving away from the optimal solution, 
switching the cost vector, 
and then moving towards the optimal solution 
takes $O(n\sqrt{m} \log^2 (\|c\|_\infty\|b\|_\infty))$ time by \Cref{lem:complexity_pathfollowing},
as both $|\log \mu^{\target}/\mu^{\init}|$ and $\log W$ are bounded by $\tilde{O}(\log(\|c\|_\infty\|b\|_\infty))$.
At last, we construct a feasible $x$ via \Cref{lem:make_flow_feasible} in $\tilde{O}(m \log(\|c\|_\infty\|b\|_\infty))$ time
and rounding each entry of $x$ to the nearest integer takes only $O(m)$ time.
\end{proof}

\subsection{Nearly Linear Time Algorithm for Moderately Dense Graphs}
\label{sec:matching:fast}

In this section we improve the previous 
$\tilde{O}(n\sqrt{m})$-time algorithm of \Cref{thm:perfect_b_matching_slow}
to run in $\tilde{O}(m+n^{1.5})$-time instead.
For that we will modify the \textsc{ShortStep} algorithm 
(\Cref{alg:implement:short_step_log_barrier}) 
to use the faster improved IPM of \Cref{sec:ls_barrier_method} (\Cref{alg:short_step_LS}).

Note that \Cref{alg:short_step_LS} (leverage score based IPM)
is very similar to \Cref{alg:short_step} (log barrier based IPM).
Hence when implementing \Cref{alg:short_step_LS},
we can start with the implemention of \Cref{alg:short_step} from \Cref{sec:matching:log}
and perform a few modifications.
The main difference is that \Cref{alg:short_step_LS} uses $(\nabla\Phi(\ov))^{\flat(\otau)}$
for $\ov \approx xs/(\mu\tau(x,s))$ where $\tau(x,s) = \sigma(\sqrt{\mX\mS^{-1}}\mA) + n/m$,
while \Cref{alg:short_step} uses $\nabla\Phi(\ov)/\|\nabla\Phi(\ov)\|_2$ for $\ov \approx xs$.
Previously, \Cref{alg:implement:short_step_log_barrier} (the implementation of \Cref{alg:short_step})
used \Cref{thm:gradient_maintenance_simple} to maintain $\nabla\Phi(\ov)/\|\nabla\Phi(\ov)\|_2$.
We now replace this data structure by \Cref{thm:gradient_maintenance} 
to maintain $(\nabla\Phi(\ov))^{\flat(\otau)}$ instead.

The next difference in implementation is, that we must have an accurate approximation $\otau \approx \tau(x,s)$.
Note that, while \Cref{lem:large_entry_datastructure} does provide upper bounds on the leverage scores,
it does not yield a good approximation.
So we must use a different data structure for maintaining $\otau \approx \tau(x,s)$.
Such a data structure was previously given by \cite{blss20},
and we state these results in \Cref{sec:leverage_score}.
This data structure requires that $\ox,\os$ are very accurate approximations of $x,s$,
so another difference is that we use much smaller accuracy parameters than before.
However, the accuracy is smaller by only some $\polylog(n)$-factor, 
so it will not affect the complexity besides an additional $\tilde{O}(1)$-factor.

\Cref{alg:implement:short_step_ls_barrier} gives a formal description 
of our implementation of \Cref{alg:short_step_LS}
and \Cref{lem:implement:short_step_ls_barrier} shows 
that this implementation is indeed correct.

Afterward, we show in \Cref{lem:improved_path_following} 
that the complexity of \Cref{alg:implement:path_following} improves 
when using this new implementation of the faster IPM.

\begin{algorithm2e}[t!]
\caption{Implementation of \Cref{alg:short_step_LS} \label{alg:implement:short_step_ls_barrier}}
\SetKwProg{Globals}{global variables}{}{}
\SetKwProg{Proc}{procedure}{}{}
\Globals{}{
	Same variables as in \Cref{alg:implement:short_step_log_barrier}, and additionally \\
	$D^{(x,\nabla)}$ instance of \Cref{thm:gradient_maintenance} using $\gamma/(10^5 \log^2 n)$ accuracy \\
	$D^{(s)}$ instance of \Cref{thm:vector_maintenance} using $\gamma/(10^5 \log^2 n)$ accuracy\\
	$D^{(\tau)}$ instance of \Cref{thm:leverage_score_maintenance} with accuracy $\gamma/2$ \\
	$\otau \in \R^m$ element-wise approximations of $\tau(x,s)$ \\
	$T\ge\sqrt{n}$ upper bound on how often \textsc{ShortStep} is called \\
	$\alpha=\frac{1}{4\log(\frac{4m}{n})}$ \\
}
\Proc{\textsc{ShortStep}$(\mu^{\new}>0)$}{
	\LineComment{Update $\omu$ and data structures that depend on it.}
	\If{$\omu \not\approx_{\gamma/8} \mu^\new$}{ \label{line:ls:update_mu}
		$\omu \leftarrow \mu^\new$ \\
		\lFor{$i \in [m]$}{$D^{(x,\nabla)}.\textsc{Update}(i, \gamma\ox_i, (\ox_i\os_i)/(\omu\otau_i))$}
	}
	\LineComment{Leverage Score Sampling to sparsify $(\mA \mX/\mS \mA)$ with $\gamma/2$ spectral approximation}
	$I_v \leftarrow D^{(\sample,\sigma)}.\textsc{LeverageScoreSample}(c \gamma^{-2} \log n)$ for some large enough constant $c > 0$ \\
	$v \leftarrow D^{(\sample,\sigma)}.\textsc{LeverageScoreBound}(c \gamma^{-2} \log n, I_v)$\\
	$v_i \leftarrow 1/v_i$ for $i$ with $v_i \neq 0$ \\
	\LineComment{Perform \textsc{ShortStep} (\Cref{alg:short_step_LS})}
	$h' \leftarrow D^{(x,\nabla)}.\textsc{QueryProduct()}$ 
			\Comment{$h' = \gamma \mA \omX \nabla\Phi(\ov) / \| \nabla\Phi(\ov) \|_2$} \\
	$h'' \leftarrow$ solve $(\mA^\top \mV \mA)^{-1} (h' + \Delta)$ with $\gamma/2$ accuracy via Laplacian Solver (\Cref{lem:laplacian_solver}). 
			\Comment{$h'' = \omH^{-1} (h' + (\mA^\top x - b))$, $\delta_r = \mS^{-1} \mA h''$} \\
	$\mR \leftarrow \textsc{SamplePrimal}(c_1 (\sqrt{m}/n) \cdot T \gamma \log^2 n \log(mT), D^{(\sample)}, D^{(\sample, \sigma)}, h'')$ for some large enough constant $c_1 > 0$ such that $x \approx_{\gamma/(10^5 \log^2 n)} \widehat{x}$ via \Cref{lem:close_to_stable_sequence} \\
	$x^{\tmp}, I_x \leftarrow D^{(x,\nabla)}.\textsc{QuerySum}(-\mR \omX \omS^{-1} \mA h'')$ \\
	$s^{\tmp}, I_s \leftarrow D^{(s)}.\textsc{Add}((\mA^\top \mV \mA)^{-1} h', \gamma/16)$ \\
	$\Delta \leftarrow \Delta + h' - \mA^\top \mR \omX \omS^{-1} \mA h''$ 
			\Comment{Maintain $\Delta = \mA^\top x - b$}\\
	\LineComment{Update $\ox, \os$ and data structures that depend on them.}
	\For{$i \in I_x \cup I_s$}{ \label{line:ls:update_xs}
		\If{$x^{\tmp}_i \not\approx_{\gamma/(10^5 \log n)} \ox_i$ 
					or $s^{\tmp}_i \not\approx_{\gamma/(10^5 \log^2 n)} \os_i$}{
			$\ox_i \leftarrow x^{\tmp}_i$, $\os_i \leftarrow s^{\tmp}_i$ \\
			$D^{(x,\nabla)}.\textsc{Update}(i, \gamma\ox_i, (\ox_i\os_i)/(\omu\otau_i))$ \\
			$D^{(\sample)}.\textsc{Scale}(i, 1/\os_i)$ \\
			$D^{(\sample, \sigma)}.\textsc{Scale}(i, \sqrt{\ox_i/\os_i})$ \\
			$D^{(\tau)}.\textsc{Scale}(i, (\ox_i/\os_i)^{1-2\alpha})$ \Comment{So $\otau \approx \sigma(\mX^{1/2-\alpha}\mS^{-1/2-\alpha})+n/m$}\\
		}
	}
	\LineComment{Update $\otau$ and data structures that depend on it.}
	$I_\tau, \otau \leftarrow D^{(\tau)}.\textsc{Query}()$ \\
	\For{$i \in I_\tau$}{ \label{line:ls:update_tau}
		$D^{(x,\nabla)}.\textsc{Update}(i, \gamma\ox_i, \otau_i, (\ox_i\os_i)/(\omu\otau_i))$
	}
}
\end{algorithm2e}

\begin{lemma}\label{lem:implement:short_step_ls_barrier}
\Cref{alg:implement:short_step_ls_barrier} implements \Cref{alg:short_step_LS}.
\end{lemma}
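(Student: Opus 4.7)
The plan is to follow the same template as the proof of \Cref{lem:implement:short_step_log_barrier}, adapted to the LS-weight setting. I will set up a list of inductive invariants maintained by the data structures, verify them at initialization, show that each line of \Cref{alg:short_step_LS} is faithfully realized assuming the invariants, and finally verify that the invariants survive one call to \textsc{ShortStep}.

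First, I would enumerate the invariants carried from one call to the next, namely
\begin{align*}
\ox \approx_{\gamma/(10^4 \log^2 n)} \hat x, \quad
\os \approx_{\gamma/(10^4 \log^2 n)} s, \quad
\otau \approx_{\gamma/2} \tau(x,s), \quad
\omu \approx_{\gamma/8} \mu,
\end{align*}
together with $\Delta = \mA^\top x - b$ and the syntactic invariants that the scaling vectors stored inside the auxiliary data structures agree with $\ox,\os,\otau,\omu$: $D^{(x,\nabla)}.g = \gamma\ox$, $D^{(x,\nabla)}.z = \ox\os/(\omu\otau)$, $D^{(x,\nabla)}.\ttau = \otau$, $D^{(\sample)}.g = 1/\os$, $D^{(\sample,\sigma)}.g = \sqrt{\ox/\os}$, and $D^{(\tau)}$ is configured so that its output approximates $\sigma(\mX^{1/2-\alpha}\mS^{-1/2-\alpha}\mA) + n/m$. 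Here $\hat x$ denotes the stable companion sequence furnished by \Cref{lem:close_to_stable_sequence}, which is the crucial change from the log-barrier proof: because $x$ itself can jump by the sampling matrix $\mR$, we cannot maintain $\ox \approx x$ directly, but \Cref{lem:close_to_stable_sequence} gives us $\hat x \approx_{\gamma/(10^5\log^2 n)} x$ with small step-to-step change, and this is what $D^{(x,\nabla)}$ actually tracks.

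Next, assuming the invariants, I would walk through \Cref{alg:short_step_LS} line by line, mirroring the structure of the log-barrier case. \Cref{line:LSstep:xstau} follows from the $\ox,\os,\otau$ invariants. \Cref{line:LSstep:v,line:LSstep:g} are realized by $D^{(x,\nabla)}.\textsc{QueryProduct}$: by \Cref{thm:gradient_maintenance} this returns $\mA^\top \omX \cdot \gamma\,\nabla\Phi(\oz)^{\flat(\otau')}$ for some $\oz$ with $\|\oz - \ox\os/(\omu\otau)\|_\infty \le \gamma/(10^5\log^2 n)$ and $\otau' \approx \otau$, which together with the $\otau,\omu,\ox,\os$ invariants gives $\oz \approx w(x,s,\mu)/\tau(x,s)$ within the $\gamma$ tolerance required by \Cref{alg:short_step_LS}. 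The sparsification of \Cref{line:LSstep:omh} is obtained exactly as in the log-barrier proof via \Cref{lem:large_entry_datastructure}'s leverage-score sample composed with the Laplacian solver (\Cref{lem:laplacian_solver}), giving an implicit $\omH^{-1}$ with $\omH \approx_\gamma \oma^\top \oma$. Using $\Delta = \mA^\top x - b$, the Laplacian solve produces $h''$ such that $\oms^{-1}\mA h''$ realizes $\delta_r$ as defined in \Cref{line:LSstep:delta_c} (the extra $(1+2\alpha)$ scaling is folded into $h'$ by the \textsc{QueryProduct} output, which already contains the $(1+2\alpha)$ factor once we absorb it into the returned gradient). The sampling matrix $\mR$ is produced by \textsc{SamplePrimal} with the enlarged oversampling rate $K = c_1(\sqrt m/n) T \gamma\log^2 n \log(mT)$, which by \Cref{lem:sample_primal} dominates the distributional requirement of \Cref{line:LSstep:R}; the larger $K$ is exactly what \Cref{lem:close_to_stable_sequence} needs to guarantee $\hat x \approx_{\gamma/(10^5\log^2 n)} x$. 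Finally, \textsc{QuerySum} and $D^{(s)}.\textsc{Add}$ yield $x^{\tmp}\approx x + \omX((1+2\alpha)g - \mR\delta_r)$ and $s^{\tmp} \approx s + (1-2\alpha)\oms \delta_p$, matching the updates at the end of \Cref{alg:short_step_LS} (the $(1-2\alpha)$ factor enters through the constant built into $\textsc{QueryProduct}$'s returned $h'$).

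For the inductive step, I would verify that the post-update invariants hold. The $\omu$ update in \Cref{line:ls:update_mu} and the $\ox,\os$ updates in the loop of \Cref{line:ls:update_xs} immediately re-establish the approximation and syntactic invariants by construction, as in the log-barrier proof. The new and delicate invariant is $\otau \approx_{\gamma/2} \tau(x,s)$: after \textsc{ShortStep} modifies $\ox,\os$, each modified index triggers $D^{(\tau)}.\textsc{Scale}$ with the new $(\ox_i/\os_i)^{1-2\alpha}$, after which $D^{(\tau)}.\textsc{Query}$ returns an updated $\otau$ and a change-set $I_\tau$; we then propagate the new $\otau$ entries into $D^{(x,\nabla)}$ in \Cref{line:ls:update_tau}. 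Finally, the $\Delta$-update line maintains $\Delta = \mA^\top x - b$ verbatim.

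The main obstacle is handling the two layers of indirection peculiar to the LS-barrier setting. First, the companion sequence $\hat x$ from \Cref{lem:close_to_stable_sequence} is not the sequence actually produced by the algorithm, so I need to argue that what $D^{(x,\nabla)}$ internally tracks as $\ox$ is a valid approximation to $\hat x$, and that $\hat x \approx x$ (from \Cref{lem:close_to_stable_sequence}) then yields $\ox \approx x$ to the coarser tolerance required by the IPM -- this is where the choice of oversampling constant $c_1$ and the careful split $\gamma/(10^5\log^2 n)$ versus $\gamma/(10^4\log^2 n)$ is essential. Second, the interaction between $\otau$ and the gradient data structure requires that whenever $\otau$ changes we re-synchronize $D^{(x,\nabla)}$, but we must also ensure that small $\otau$ perturbations do not accumulate into a violation of the $\gamma$ accuracy on $\oz$. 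For this I will rely on the $\approx_{\gamma/2}$ guarantee of $D^{(\tau)}$ from \Cref{thm:leverage_score_maintenance} (referenced via \Cref{sec:leverage_score}) combined with the multiplicative stability afforded by \Cref{lem:ls_second} and \Cref{lem:Edelta_tau_bound}, which together ensure that $\tau(x,s)$ only changes by a $\gamma/\polylog(n)$ factor per iteration.
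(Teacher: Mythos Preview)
Your high-level template---carry the log-barrier correctness proof over and verify the new ingredients (the $\flat(\otau)$-gradient and the leverage-score maintenance)---matches the paper's approach. The paper is simply more economical: it explicitly invokes \Cref{lem:implement:short_step_log_barrier} as a black box and only checks the two places where \Cref{alg:implement:short_step_ls_barrier} deviates, namely that $D^{(\tau)}$ maintains $\otau \approx_{3\gamma/4} \tau(x,s)$ (via the accuracy of $\ox,\os$ fed into its \textsc{Scale} calls) and that $D^{(x,\nabla)}$ now produces $\ov \approx_\gamma xs/(\mu\tau(x,s))$ and $\ttau \approx_\gamma \tau(x,s)$. You reprove several of the unchanged steps, which is harmless but unnecessary.

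There is, however, a genuine misconception in your treatment of the stable companion sequence $\hat x$. You write that ``$x$ itself can jump by the sampling matrix $\mR$, so we cannot maintain $\ox \approx x$ directly'' and that ``$D^{(x,\nabla)}$ actually tracks $\hat x$''. Both are false. The \textsc{QuerySum} interface of \Cref{thm:gradient_maintenance} takes the sparse perturbation $-\mR\omX\omS^{-1}\mA h''$ as its argument $h$ and incorporates it into the exact internal sum $x^{(t)}$; hence $\ox \approx_{\gamma/(10^5\log^2 n)} x$ is maintained \emph{directly}, exactly as in the log-barrier case, with no reference to $\hat x$. The sequence $\hat x$ from \Cref{lem:close_to_stable_sequence} is a non-constructive object (defined via conditional expectations over $\mR$) that the algorithm never computes, and it plays no role in the correctness statement you are proving. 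It enters only in the \emph{complexity} analysis (\Cref{lem:improved_path_following}), where it is used to verify the stability hypotheses \eqref{eq:closeness_assumption}--\eqref{eq:stability_assumption} required by the leverage-score data structure \Cref{thm:leverage_score_maintenance}. So your proposed invariant ``$\ox \approx_{\gamma/(10^4\log^2 n)} \hat x$'' is the wrong thing to track, and the two-layer argument you sketch (approximate $\hat x$, then invoke $\hat x \approx x$) is both unnecessary for this lemma and difficult to make precise, since nothing in the algorithm is tied to $\hat x$.
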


\begin{proof}
Note that the only difference between \Cref{alg:short_step} 
and \Cref{alg:short_step_LS} is the gradient.
While in \Cref{alg:short_step} we use $\nabla \Phi(\ov)/\|\nabla \Phi(\ov)\|_2$ for $\ov \approx_\gamma xs/\mu$,
the faster \Cref{alg:short_step_LS} uses $(\nabla \Phi(\ov))^{\flat(\ttau)}$ for $\ov \approx_\gamma xs/(\mu\tau(x,s))$ 
and $\ttau \approx_\gamma \tau(x,s)$.

Likewise \Cref{alg:implement:short_step_ls_barrier} 
is almost the same as \Cref{alg:implement:short_step_log_barrier}.
The main difference being that we replaced the data structure 
of \Cref{thm:gradient_maintenance_simple} 
for maintaining $\nabla \Phi(\ov)/\|\nabla \Phi(\ov)\|_2$
by the data structure of \Cref{thm:gradient_maintenance} 
for maintaining $(\nabla \Phi(\ov))^{\flat(\otau)}$.
The next difference is that we now use 
\Cref{thm:leverage_score_maintenance} 
to maintain the leverage scores.

Given that the rest of the implementation (\Cref{alg:implement:short_step_ls_barrier}) of the LS-barrier method 
is identical to the implementation (\Cref{alg:implement:short_step_log_barrier}) of the log-barrier method,
we only need to verify that these modifications maintain the desired values with sufficient accuracy.

\paragraph{Leverage Scores}

Note that \Cref{line:ls:update_xs} makes sure that
\Cref{thm:leverage_score_maintenance} always uses $g = (\ox/\os)^{1-2\alpha}$.
As $D^{(x,\nabla)}$ and $D^{(s)}$ return $\gamma/(10^5 \log^2 n)$ approximations
and we change $\ox,\os$ whenever these approximations change by some $\gamma/(10^5 \log^2 n)$ factor,
we have $(\ox/\os)^{1-2\alpha} \approx_{4\gamma/(10^5 \log^2 n)} (x/s)^{1-2\alpha}$.
This then implies
$$
\otau 
\approx_{\gamma/2} 
\sigma(\omX^{1/2-\alpha}\omS^{-1/2-\alpha}\mA) 
\approx_{4\gamma/(10^5 \log^2 n)} 
\sigma(\omX^{1/2-\alpha}\omS^{-1/2-\alpha}\mA) 
=: 
\tau(x,s),
$$
assuming $D^{(\tau)}$ (\Cref{thm:leverage_score_maintenance}) is initialized with accuracy $\gamma/4$.
Hence \Cref{thm:leverage_score_maintenance} maintains $\otau \approx_{3\gamma/4} \tau(x,s)$.

\paragraph{Gradient}

In \Cref{line:ls:update_mu}, \Cref{line:ls:update_xs} and \Cref{line:ls:update_tau} 
we make sure that \Cref{thm:gradient_maintenance} 
uses $z = \ox\os/(\omu\otau)$.
Hence the data structure maintains $(\nabla\phi(\ov))^{\flat(\ttau)}$
for $\ov \approx_{\gamma/16} z = \ox\os/(\omu\otau) \approx_{15\gamma/16} xs/\tau(x,s)$, 
so $\ov \approx_{\gamma} xs/(\tau(x,s))$.
Further, $\ttau \approx_{\gamma/(10^5 \log^2 n)} \otau \approx_{3\gamma/4} \tau(x,s)$,
so $\ttau \approx_{\gamma} \tau(x,s)$.

\end{proof}

We now show that the complexity of \Cref{alg:implement:path_following}
improves when using \Cref{alg:implement:short_step_ls_barrier} 
for the \textsc{ShortStep} function.

\begin{lemma}\label{lem:improved_path_following}
We can improve \Cref{alg:implement:path_following} to run in
\begin{align*}
\tilde{O}\left((m + n^{1.5}) \cdot |\log \mu^{\target}/\mu^{\init} |\cdot (\log W + |\log \mu^{\target} / \mu^{\init} | ) \right)
\end{align*}
time, where $\log W$ is a bound on the ratio of largest to smallest entry in $x^{\init}$ and $s^{\init}$.
\end{lemma}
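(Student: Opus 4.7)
The plan is to follow the same bookkeeping as in the proof of \Cref{lem:complexity_pathfollowing}, replacing each cost estimate with its LS-barrier analogue, and to exploit that \Cref{alg:short_step_LS} yields only $K = \tilde{O}(\sqrt{n}\,|\log \mu^{\target}/\mu^{\init}|)$ total iterations rather than $\tilde{O}(\sqrt{m}\,|\log \mu^{\target}/\mu^{\init}|)$. The initialization and the final \textsc{ComputeExactSum}/\textsc{ComputeExact} calls remain $\tilde{O}(m)$, and by \Cref{lem:bitlength} the weight ratios encountered by the sampling data structures are bounded by $\tilde{O}(\log W + |\log \mu^{\target}/\mu^{\init}|)$, so the ``$\log W$'' factors propagate exactly as before.

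First I would handle the per-iteration work. Each short step performs one Laplacian solve on a spectral sparsifier of $\mA^\top \omX\oms^{-1}\mA$ (with $\tilde{O}(n)$ edges, by \Cref{lem:large_entry_datastructure}), which costs $\tilde{O}(n)$ by \Cref{lem:laplacian_solver}. The call to \textsc{SamplePrimal} now uses $K = c_1(\sqrt{m}/n)\,T\gamma\log^2 n\log(mT)$, so \Cref{lem:sample_primal} (with this $K$) yields $\|\mR\|_0 = \tilde{O}(n + m/\sqrt{n})$ and the same bound on the sampling time, which also dominates the cost of updating $\Delta$ and of computing $\mA^\top \mR\omX\oms^{-1}\mA h''$. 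Summed over $K$ iterations this contributes $\tilde{O}(K(n+m/\sqrt{n})) = \tilde{O}(n^{1.5} + m)$ (absorbing the $\log(\mu^{\target}/\mu^{\init})$ factor).

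Next I would analyze the three amortized data structures and this is where the bulk of the work, and the main potential obstacle, lies.
\begin{itemize}
\item \emph{Gradient maintenance $D^{(x,\nabla)}$.} By \Cref{thm:gradient_maintenance} the total cost is
\[
\tilde{O}\!\left(Kn\epsilon^{-2} + \sum_{\ell=1}^K \|h^{(\ell)}\|_0 + K \sum_{\ell=1}^K \|v^{(\ell)}/x^{(\ell-1)}\|_2^2/\epsilon^2\right).
\]
The first term is $\tilde{O}(Kn) = \tilde{O}(n^{1.5})$, and using $\|h^{(\ell)}\|_0 \le \|\mR\|_0 = \tilde{O}(n + m/\sqrt{n})$ the second term is $\tilde{O}(n^{1.5} + m)$. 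The third term is the delicate one; the key step is to convert the $\|\cdot\|_{\tau+\infty}$-bound on the step into an $\ell_2$-bound. Since $g = -\gamma\nabla\Phi(\ov)^{\flat(\otau)}$ satisfies $\|g\|_{\tau+\infty} \le \gamma$, we get $\|g\|_\tau \le \gamma/\cnorm$, and because $\tau \ge n/m$ this yields $\|g\|_2^2 \le (m/n)\gamma^2/\cnorm^2$. Combined with the $\tau+\infty$ bounds on $\delta_s,\delta_x$ from \Cref{lem:ls_second} this gives $\sum_\ell \|v^{(\ell)}/x^{(\ell-1)}\|_2^2 = \tilde{O}(K\gamma^2 m/n)$, so the third term becomes $\tilde{O}(K^2 m/n) = \tilde{O}(m)$.
\item \emph{Dual maintenance $D^{(s)}$.} By \Cref{thm:vector_maintenance} the cost is $\tilde{O}(Kn\log W + K\epsilon^{-2}\sum_\ell \|\oms^{-1}\delta_s^{(\ell)}\|_2^2)$. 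The same $\tau\ge n/m$ trick applied to $\|\oms^{-1}\delta_s\|_{\tau+\infty} \le 2\gamma$ from \Cref{lem:ls_second} yields $\|\oms^{-1}\delta_s\|_2^2 \le \tilde{O}(\gamma^2 m/n)$, giving a total of $\tilde{O}(n^{1.5}\log W + m)$.
\item \emph{Leverage score maintenance $D^{(\tau)}$.} By \Cref{thm:leverage_score_maintenance} (invoked via \Cref{lem:close_to_stable_sequence} with the chosen sampling rate $K = c_1(\sqrt{m}/n)T\gamma\log^2 n\log(mT)$), the total cost is $\tilde{O}(m + Kn + K^2 m/n) = \tilde{O}(m + n^{1.5})$.
\end{itemize}

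Finally, I would verify that the per-iteration index-updates in \Cref{line:ls:update_xs,line:ls:update_tau} fit within this budget: by the same argument used in \Cref{lem:complexity_pathfollowing}, the total number of coordinates on which $\ox,\os,\otau$ change by an $\epsilon$-factor across $K$ iterations is bounded by $K^2/\gamma^2 + K\|\mR\|_0 = \tilde{O}(n^{1.5} + m)$, and each such change triggers $\tilde{O}(1)$ calls to \textsc{Update}/\textsc{Scale} of the maintained structures. Summing all contributions gives the claimed $\tilde{O}((m + n^{1.5})\,|\log\mu^{\target}/\mu^{\init}|\,(\log W + |\log\mu^{\target}/\mu^{\init}|))$ bound. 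The main obstacle is the careful $\|\cdot\|_{\tau+\infty}$-to-$\|\cdot\|_2$ conversion in the second-moment estimates, since naively a unit $\tau+\infty$ norm only gives an $\ell_2$ norm of order $\sqrt{m/n}$, and it is precisely this loss that forces the $m$ (rather than $n^{1.5}$) term in the final bound and that must be tracked consistently across the three data structures.
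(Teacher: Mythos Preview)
Your proposal is correct and follows essentially the same approach as the paper: rerun the bookkeeping of \Cref{lem:complexity_pathfollowing} with $K=\tilde O(\sqrt{n}\,|\log\mu^{\target}/\mu^{\init}|)$, convert all $\|\cdot\|_{\tau+\infty}$ step-size bounds to $\ell_2$ via $\tau\ge n/m$ (this is exactly how the paper gets $\|\ms^{-1}\delta_s\|_2^2,\|\omX g/x\|_2^2=O(m/n)$), and invoke \Cref{lem:close_to_stable_sequence} to meet the stability hypotheses of \Cref{thm:leverage_score_maintenance}. One small imprecision: in your final paragraph the count of $\os$-changes should be $\tilde O(K^2 m/n)$ rather than $K^2/\gamma^2$, since in the LS case $\|\ms^{-1}\delta_s\|_2^2=O(m/n)$ rather than $O(1)$; this does not affect your final bound because $K^2 m/n=\tilde O(m)$ is already in the budget.
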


\begin{proof}
The main difference is that we run 
\Cref{alg:implement:short_step_ls_barrier} (the LS-barrier) for \textsc{ShortStep} 
instead of \Cref{alg:implement:short_step_log_barrier} (the log-barrier).
For that we must first initialize \Cref{thm:leverage_score_maintenance} via
$D^{(\tau)}.\textsc{Initialize}((x^{\init}/s^{\init})^{1/2-\alpha}, \gamma/4)$.
This additional initialization takes $\tilde{O}(m)$ time.
We also run \Cref{thm:gradient_maintenance}
instead of \Cref{thm:gradient_maintenance_simple}
to maintain $(\nabla\Phi(\ov))^{\flat(\ttau)}$ instead of $\nabla\Phi(\ov)/\|\nabla\Phi(\ov)\|_2$,
but the complexities of these two data structures are the same.
Note that we must initialize $D^{(x,\nabla)}$ and $D^{(s)}$ for a higher accuracy ($\gamma/(10^5 \log^2 n)$)
than in the previous log-barrier implementation (which had accuracy $\gamma/16$).
However, this increases the complexity by only a $\tilde{O}(1)$ factor.

\paragraph{Complexity of \Cref{alg:implement:short_step_ls_barrier}}

The main difference to the previous analysis from \Cref{lem:complexity_pathfollowing} 
is that now \textsc{ShortStep} is called only 
$K = \tilde{O}(\sqrt{n} |\log(\mu^{\init}/\mu^{\target})|)$ times.
We will now list all the differences 
compared to the previous analysis of \Cref{lem:complexity_pathfollowing}.
Note that \Cref{alg:implement:short_step_ls_barrier} assumes a global variable $T \ge \sqrt{n}$,
which is an upper bound on how often \textsc{ShortStep} is called.
So we modify \Cref{alg:implement:path_following} to set $T = K = \tilde{O}(\sqrt{n} |\log(\mu^{\init}/\mu^{\target})|)$.
Further note that as before, 
we can bound the ratio $W'$ of largest to smallest entry in $x$ and $s$ throughout all calls to \textsc{ShortStep}
by $\log W' = \tilde{O}(\log W + |\log(\mu^{\init}/\mu^{\target})|)$
via \Cref{lem:bitlength}.

\paragraph{Sampling the primal}
We scale the sampling probability for \textsc{SamplePrimal} 
by $\tilde{O}(\sqrt{m} K / n)$.
The complexity per call thus increases to
$$
\tilde{O}(n (\log W + |\log(\mu^{\init}/\mu^{\target})|) + K m / n)
$$
by \Cref{lem:sample_primal}.
Further note that by \Cref{lem:close_to_stable_sequence} the sampling probability is large enough for
our sequence of primal solutions $x^{(1)},x^{(2)},...$ 
to be close to some sequence $\widehat{x}^{(1)},\widehat{x}^{(2)},...$, 
such that $x^{(t)} \approx_{\gamma/(10^5 \log^2 n} \widehat{x}^{(t)}$ 
and $\|(\widehat{\mX}^{(t-1)})^{-1}(\widehat{x}^{(t)} - \widehat{x}^{(t-1)})\|_\tau$ 
for all $t$.

\paragraph{$D^{(x,\nabla)}$ data structure}
The number of times $\omu$ changes is now bounded by $\tilde{O}(K/\sqrt{n})$,
so the total cost of updating $D^{(x,\nabla)}$ in \Cref{line:ls:update_mu} 
is now bounded by $\tilde{O}(Km/\sqrt{n})$.
The cost of $D^{(x,\nabla)}.\textsc{QueryProduct}$ and \textsc{QuerySum} is bounded by
\begin{align*}
&~\tilde{O}(
Kn + K n (\log W + |\log(\mu^{\init}/\mu^{\target})|) + K \cdot \|\omX (\nabla\Phi(\ov))^{\flat(\otau)} / x\|_2^2
) \\
=&~
\tilde{O}(
Kn + K n (\log W + |\log(\mu^{\init}/\mu^{\target})|) + K \cdot m/n
),
\end{align*}
by \Cref{thm:gradient_maintenance},
where we use that 
\begin{align*}
\|\omX (\nabla\Phi(\ov))^{\flat(\otau)} / x\|_2^2 
&= 
O(1) \cdot \|\nabla\Phi(\ov))^{\flat(\otau)}\|_2^2 
= 
O(m/n) \cdot \|\nabla\Phi(\ov))^{\flat(\otau)}\|_{\otau} \\
&=
\tilde{O}(m/n) \cdot \|\nabla\Phi(\ov))^{\flat(\otau)}\|_{\otau+\infty}
=
\tilde{O}(m/n)
\end{align*}

\paragraph{$D^{(s)}$ data structure}
The complexity of calls to $D^{(s)}.\textsc{Add}$ (\Cref{thm:vector_maintenance})
is now bounded by $\tilde{O}(K^2 \cdot m/n + Kn (\log W + |\log(\mu^{\init}/\mu^{\target})|))$,
because $\|\mS^{-1} \delta_s \|_2^2 \le O(m/n) \|\mS^{-1} \delta_s \|_{\tau(x,s)}^2 = O(m/n)$.

\paragraph{$D^{(\tau)}$ data structure}
Any change to $\otau$ causes $\tilde{O}(1)$ cost in \Cref{line:ls:update_tau}.
The number of times any entry of $\otau$ changes is bounded by
the \textsc{Query} complexity of \Cref{thm:leverage_score_maintenance}.
This complexity can be bounded by
$\tilde{O}(K^2 m/n + Kn(\log W + |\log(\mu^{\init}/\mu^{\target})|))$,
when applying \Cref{lem:bitlength}
and the fact that we scale the sampling probability for \textsc{SampleProbability}
by $\tilde{O}(\sqrt{m} K / n)$.
Note that this complexity bound only holds when for the given input sequence $g^{(t)} := (\ox^{(t)}/\os^{(t)})^{1-\alpha}$,
there exists sequence $\widehat{g}^{(t)} \approx_{\gamma / (2 \cdot 16\cdot144 \log^2 n)} g^{(t)}$ which satisfies
\eqref{eq:closeness_assumption} and \eqref{eq:stability_assumption}.
This is satisfied because of the accuracy of 
$\ox \approx_{2\gamma/(10^5 \log^2 n)} x$, 
$\os \approx_{2\gamma/(10^5 \log^2 n)} s$, 
the stability property $\|\mS^{-1} \delta_s\|_\tau \le 2\gamma$ (\Cref{lem:ls_second}), 
and the stable sequence $\widehat{x}^{(t)} \approx_{\gamma/(10^5 \log^2 n)} x^{(t)}$ that exists according to \Cref{lem:close_to_stable_sequence}.
So the sequence 
$\widehat{g}^{(t)} := (\widehat{x}^{(t)}/s^{(t)})^{1-2\alpha}$ 
satisfies
$\widehat{g}^{(t)} \approx_{\gamma / (2 \cdot 16\cdot144 \log^2 n)} g^{(t)} = (x^{(t)}/s^{(t)})^{1-2\alpha}$, \eqref{eq:closeness_assumption} and \eqref{eq:stability_assumption}.

\paragraph{Updating the data structure}
The number of times that any entry of $\os$ changes 
now bounded by $\tilde{O}(K^2 m/n)$,
because $\|\mS^{-1} \delta_s \|_2^2 = O(m/n)$.
Likewise the number of times that any entry of $\os$ changes is bounded by
$\tilde{O}(Kn (\log W + |\log(\mu^{\init}/\mu^{\target})| + K^2 m/n)$
because $x$ changes by $\omX (\nabla \Phi(\ov))^{\flat(\otau)} - \mR\omX\omS^{-1} \mA h''$,
where for the first term we have $\|\omX (\nabla\Phi(\ov))^{\flat(\otau)} / x\|_2^2 = O(m/n)$
and the second term as only at most 
$\tilde{O}(n (\log W + |\log(\mu^{\init}/\mu^{\target})|) + K m / n)$ non-zero entries.
So updating all the data structures via calls to 
\textsc{Update} and \textsc{Scale} in \Cref{line:ls:update_xs}
now takes $\tilde{O}(Kn (\log W + |\log(\mu^{\init}/\mu^{\target})|) + K^2 \cdot m/n)$ time.

\paragraph{Total Complexity}

The total complexity is thus
\begin{align*}
&~
\tilde{O}(
m 
+ 
K^2 m/n
+
Kn (\log W + |\log(\mu^{\init}/\mu^{\target})|)
)\\
=&~
\tilde{O}(
(m + n^{1.5}) |\log(\mu^{\init}/\mu^{\target})| (\log W + |\log(\mu^{\init}/\mu^{\target})|)
).
\end{align*}

\end{proof}

Note that \Cref{lem:initial_point_simple}, as used in our $\tilde{O}(n\sqrt{m})$-time algorithm,
only provides an initial point $(x,s)$ with $xs = \mu$ for some $\mu \in \R$.
For the faster IPM, however, we require $xs \approx \mu \tau(x,s)$
with $\tau(x,s) := \sigma(x,s) + n/m$ as defined in \Cref{def:LS_weight}.
The following \Cref{lem:initial_point} shows that such a point can be constructed,
and it is proven in \Cref{sec:initial_point:initial_point}.

\begin{restatable}{lemma}{lemInitialPoint}\label{lem:initial_point}
We are given a minimum weight perfect $b$-matching instance 
on graph $G=(U,V,E)$ with cost vector $c \in \R^E$.

Let $G' = (V', E')$ with demand vector $d'$ be the corresponding starred flow graph
(see \Cref{def:starred_flow_graph})
and let $n$ be the number of nodes of $G'$ and $m$ the number of edges.

For $\tau(x,s) := \sigma(x,s) + m/n$ and for any $\epsilon > 0$
we can construct in $\tilde{O}(m \poly(\epsilon^{-1}))$ time 
a cost vector $c' \in \R^{E'}$ for $G'$
and a feasible primal dual solution pair $(x,s)$ 
for the minimum cost flow problem on $G'$ with cost vector $c'$,
where the solution satisfies
\begin{align}
xs \approx_{\epsilon} \tau(x,s).
\end{align}
The cost vector $c'$ satisfies
\begin{align*}
\frac{n}{4 m (1+\|b\|_\infty)} \le c' \le 3n.
\end{align*}

\end{restatable}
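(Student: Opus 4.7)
}

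The plan is to exploit the flexibility of the star vertex $z$ to construct $(x,s,c')$ essentially in closed form, and then use a short iterative refinement to handle the self-consistency of $\tau(x,s)$. On original edges $(u,v)\in E$ I will route a uniform small amount of flow $x_{u,v}=\alpha_0$; the remaining demand is absorbed by the auxiliary edges by setting $x_{u,z}=b_u-\alpha_0\deg_G(u)$ and $x_{z,v}=b_v-\alpha_0\deg_G(v)$, choosing $\alpha_0 = \Theta(1/(m(1+\|b\|_\infty)))$ so that all auxiliary flows stay positive and bounded. Since $\sum_u b_u=\sum_v b_v$, this $x$ satisfies $\mA^\top x = d'$. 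For the dual I take $y=0$, so $s=c'$, reducing the task to the simultaneous choice of $s$ and $c'$.

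The key observation is that $\tau(x,s)=\sigma(x,s)+n/m$ lives in $[n/m,\,1+n/m]$, so the centering condition $xs\approx_\epsilon \mu\tau(x,s)$ can only be met if each coordinate product $x_es_e$ is matched to either the ``heavy'' regime ($\approx\mu$) or the ``light'' regime ($\approx\mu n/m$). I would arrange this by choosing $s$ on auxiliary edges to be $\Theta(1)$ (giving $x_e s_e\approx\mu$ there) and on original edges to be $\Theta(n/\alpha_0 \cdot (\cdot))$, chosen so $x_es_e=\mu n/m$. With this choice the weight vector $w_e=x_e^{1/2-\alpha}s_e^{-1/2-\alpha}$ is dominated by the star of auxiliary edges through $z$; a direct effective-resistance computation on $\mA^\top \mW^2\mA$ (the auxiliary contribution is a weighted Laplacian of a star and can be inverted in closed form via Sherman--Morrison) shows $\sigma_e\approx 1$ on auxiliary edges and $\sigma_e=o(n/m)$ on original edges, verifying that the partition into the two regimes is consistent with the classification I used to set $x_es_e$ in the first place. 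This yields a crude $O(1)$-centered pair.

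To upgrade this $O(1)$-centered pair to $\epsilon$-centered, I would run an inner homotopy that slowly drives an auxiliary parameter from the closed-form construction toward the true weight function, taking short Newton-type steps that perturb only $s$ (and correspondingly $c'$, preserving $y=0$ and leaving $\mA^\top x=d'$ untouched). Each step requires one multiplicative leverage-score approximation, which \Cref{sec:leverage_score} provides in $\tilde O(m)$ time via Johnson--Lindenstrauss plus a Laplacian solve (\Cref{lem:laplacian_solver}). The stability of $\sigma$ under small multiplicative perturbations of the weights, proved in \Cref{lem:expand_sigma}, guarantees that $\poly(\epsilon^{-1})$ such steps bring the discrepancy $\|xs/(\mu\tau(x,s))-\vones\|_\infty$ below $\epsilon$, giving the claimed runtime.

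The main obstacle, as anticipated, is the circular dependence between $x,s$ and $\tau(x,s)=\sigma(x,s)+n/m$: a naive closed-form choice yields $\sigma$ only up to error depending on the gap between heavy and light edge weights, which is not $\epsilon$-small. Resolving this via the star-dominated approximation plus short-step refinement is the technical heart of the argument; the cost bound $\tfrac{n}{4m(1+\|b\|_\infty)}\le c'\le 3n$ then follows by reading off the explicit formulas for $s=c'$ on each edge type and observing that the refinement stage changes each coordinate by at most a constant factor.
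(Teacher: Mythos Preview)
Your high-level scaffolding matches the paper: fix the primal $x$ as a uniform flow on original edges plus balancing flow through the star, take $y=0$ so that $c'=s$, and then solve for $s$ satisfying $xs\approx_\epsilon\tau(x,s)$. The paper even uses $x_{u,v}=1/n$ on original edges, essentially your $\alpha_0$.

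Where you diverge is in how you solve for $s$, and this is where the paper has a cleaner idea you missed. Once $x$ is fixed, substitute $s'=xs$ and $\mA'=\mX\mA$. Then
\[
xs\approx_\epsilon\sigma(\mX^{1/2-\alpha}\mS^{-1/2-\alpha}\mA)+\tfrac{n}{m}
\quad\Longleftrightarrow\quad
s'\approx_\epsilon\sigma(\mS'^{\,-1/2-\alpha}\mA')+\tfrac{n}{m},
\]
which is exactly the regularized $\ell_p$ Lewis-weight fixed point $w\approx\sigma(\mW^{1/2-1/p}\mA')+\eta$ with $p=1/(1+\alpha)\in(0,4)$. The paper simply invokes the known Lewis-weight algorithm (\Cref{lem:lewis_weight}, from \cite{CohenP15,blss20}) to compute $s'$ in $\tilde O(m\,\poly(\epsilon^{-1}))$ time, sets $s=s'/x$, and reads off the bounds on $c'=s$ from $n/m\le\tau\le 2$ and the explicit bounds on $x$. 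No heavy/light partition, no star Sherman--Morrison calculation, no homotopy.

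Your route instead tries to guess the fixed point structurally (star edges heavy, original edges light) and then refine. Two issues: first, the claim that original-edge leverage scores are $o(n/m)$ is not established and depends delicately on the relative weights $w_e=x_e^{1/2-\alpha}s_e^{-1/2-\alpha}$, which in turn depend on the $s$ you are trying to construct; it is not obvious that your parameter choices make the star dominate uniformly. Second, and more seriously, your refinement step---``Newton-type steps\ldots\ \Cref{lem:expand_sigma} guarantees $\poly(\epsilon^{-1})$ steps suffice''---is precisely the Lewis-weight iteration in disguise, but \Cref{lem:expand_sigma} only gives local first-order stability of $\sigma$; it does not by itself give a convergent fixed-point scheme. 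Proving that this iteration contracts is the actual content of \cite{CohenP15}, which you would end up reproving. Recognizing the Lewis-weight reduction collapses this whole stage to a single black-box call.
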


We now obtain \Cref{thm:perfect_b_matching_fast}
by replacing the slow $\sqrt{m}$-iteration IPM,
used in \Cref{alg:perfect_b_matching},
with the faster $\sqrt{n}$-iteration IPM.

\thmFastMatching*

\begin{proof}
This follows directly from \Cref{alg:perfect_b_matching} 
(\Cref{thm:perfect_b_matching_slow}),
when we use the faster variant of \Cref{alg:implement:path_following}
(\Cref{lem:improved_path_following} instead of \Cref{lem:complexity_pathfollowing}).
To construct the initial point we now use \Cref{lem:initial_point} instead of \Cref{lem:initial_point_simple}.
\end{proof}

\subsection{More Applications to Matching, Flow, and Shortest Paths}

\label{sec:app}

In this section, we apply known reductions and list applications
of our main results (see \Cref{fig:reductions}). 
In \Cref{sec:app int}, we show that we can obtain
\emph{exact} algorithms for many fundamental problems when all the numbers
from the problem instance are integral. 
For a history overview for some of these problems see 
\Cref{table:results:SSSP,table:results:perfect matching}.
Then, in \Cref{sec:app frac},
we show that even when the numbers are not integral, we can still
$(1+\epsilon)$-approximately solve all the problems with $\log^{2}(1/\epsilon)$
dependency in the running time.

\begin{figure}[ht]
	\begin{centering}
		\includegraphics[trim=2cm 8.5cm 16.75cm 0.5cm,clip,width=0.5\textwidth]{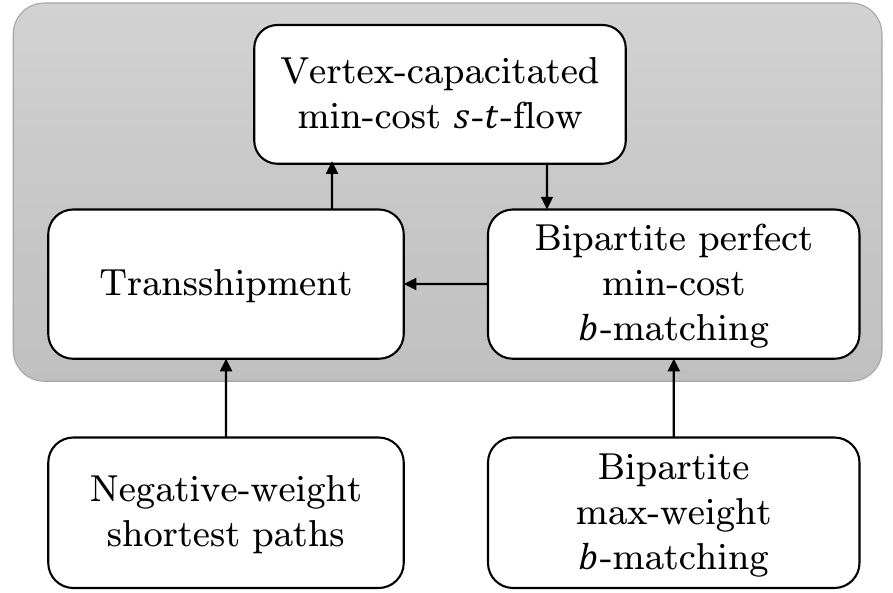}
		\par\end{centering}
	\caption{\label{fig:reductions}Reductions between problems}
	
\end{figure}

\begin{table}[ht]
    \centering
    \begin{tabular}{|c|c|c|c|c|}
    \hline
        {\bf Year} & {\bf Author} & {\bf Reference} & {\bf Complexity} & {\bf Notes}  \\ \hline
        1955 & Shimbel & \cite{shimbel55} & $O(n^{4})$ &  \\ \hline
        1956 & Ford & \cite{ford56} & $O(Wn^{2}m)$ & \\ \hline
        1958 & Bellman, Moore & \cite{bellman58} \cite{moore59} & $O(nm)$ & * \\ \hline
        1983 & Gabow & \cite{Gabow85} & $O(n^{3/4}m\log W)$ & \\ \hline
        1989 & Gabow and Tarjan & \cite{GabowT89} & $O(\sqrt{n}m\log(nW))$ & \\ \hline
        1993 & Goldberg & \cite{goldberg93} & $O(\sqrt{n}m\log(W))$ & * \\ \hline
        2005 & Sankowski; Yuster and Zwick & \cite{s05}, \cite{yz05} & $\tilde{O}(Wn^{\omega})$ &  \\ \hline
        2017 & Cohen, Madry, Sankowski, Vladu & \cite{cmsv17} & $\tilde{O}(m^{10/7}\log W)$ & \\ \hline
        2020 & Axiotis, Madry, Vladu & \cite{amv20} & $\tilde{O}(m^{4/3+o(1)}\log W)$ & * \\ \hline
        2020 & & This paper & $\tilde{O}((m+n^{1.5})\log^2 W)$ & * \\ \hline
    \end{tabular}
    \caption{The complexity results for the {\bf SSSP} problem with negative weights
		({*} indicates asymptotically the best bound for some range of parameters). The table is extended from \cite{cmsv17}}
    \label{table:results:SSSP}
\end{table}%

\begin{table}[ht]
    \centering
    \begin{tabular}{|c|c|c|c|c|}
         \hline
        {\bf Year} & {\bf Author} & {\bf Reference} & {\bf Complexity} & {\bf Notes}  \\ \hline
        1931 &  Egervary & \cite{egervary31} & $O(Wn^{2}m)$ & \\\hline
			1955 & Khun and Munkers & \cite{kuhn55},\cite{munkres57} & $O(Wn^{2}m)$ & \\ \hline 
			1960 & Iri & \cite{iri60} & $O(n^{2}m)$  & \\ \hline
			1969 & Dinic and Kronrod & \cite{dinic-kronrod-69} & $O(n^{3})$ & \\ \hline
			1970 & Edmonds and Karp & \cite{EdmondsK72} & $O(nm+n^{2}\log n)$ & * \\ \hline
			1983 & Gabow & \cite{Gabow85} & $O(n^{\frac{3}{4}}m\log W)$  & \\ \hline
			1989 & Gabow and Tarjan & \cite{GabowT89} & $O(\sqrt{n}m\log(nW))$ & * \\ \hline 
			1999 & Kao, Lam, Sung and Ting & \cite{kao-99} & $O(W\sqrt{n}m)$ &  \\ \hline 
			2006 & Sankowski & \cite{s06} & $O(Wn^{\omega})$ &   \\ \hline
			2017 & Cohen, Madry, Sankowski, Vladu & \cite{cmsv17} & $\tilde{O}(m^{10/7}\log W)$ & \\ \hline
			2020 & Axiotis, Madry, Vladu & \cite{amv20} & $\tilde{O}(m^{4/3+o(1)}\log W)$ & * \\ \hline
			2020 & & This paper & $\tilde{O}((m+n^{1.5})\log^2 W)$ & * \\ \hline
    \end{tabular}
    \caption{The complexity results for the {\bf minimum-weight bipartite perfect matching}
		problem ({*} indicates asymptotically the best bound for some range
		of parameters). The table is extended from \cite{cmsv17}.}
    \label{table:results:perfect matching}
\end{table}%

\subsubsection{Exact Algorithms for Integral Problems}

\label{sec:app int}

In this section, we define several problems such that the numbers
from the problem instance are integral, and show how to solve them
exactly. We will later allow fractions in the problem instance in
\Cref{sec:app frac}, and show how to solve those problems approximately.
\begin{definition}
[$b$-matching]For any graph $G = (V,E)$, a demand vector $b \in \Z_{\ge0}^{V}$,
and a cost vector $c \in \Z^{E}$, a \emph{$b$-matching} is an assignment
$x \in \Z_{\ge 0}^{E}$ where $\sum_{e \ni v} x_{e} \le b_{v}$ for all
$v \in V$. 
We say that $x$ is \emph{perfect} if $\sum_{e\ni v} x_{e} = b_{v}$
for all $v \in V$. The \emph{cost/weight} of $x$ is $c^{\top}x = \sum_{e\in E} c_{e} x_{e}$.
When $b$ is an all-one vector, the $b$-matching is referred to simply as a {\em matching}. 
We say that $x$ is \emph{fractional} if we allow $x \in \R_{\ge0}^{E}$.
Similarly, $b$ is fractional if $b \in \R_{\ge0}^{V}$. 
\end{definition}

In the \emph{maximum weight bipartite $b$-matching} problem, given
a bipartite graph $G$, a demand vector $b$,
and a cost vector $c$, we need to find a $b$-matching with maximum
weight. In the \emph{minimum/maximum weight bipartite perfect $b$-matching}
problem, we need to find a perfect $b$-matching with minimum/maximum
weight or report that no perfect $b$-matching exists. Below, we show
a relation between these problems. 
We let $W$ denote the largest absolute value used for specifying any value in the problem
(i.e. $b_v \le W$ for all $v\in V$ and $-W \le c_{u,v} \le W$ for all $(u,v) \in E$).

\paragraph{Equivalence of variants of perfect $b$-matching.}

Most variants of the problems for perfect $b$-matching
are equivalent. First, we can assume that the cost is non-negative by
working with the cost vector $c'$ where $c'(e)=c(e)+W$ for all $e\in E$.
This is because the cost of \emph{every} perfect $b$-matching is
increased by exactly $W \cdot \| b \|_1 / 2$. Second, the maximization
variant is equivalent with the minimization variant by working with
the cost $c'(e) = -c(e)$ because the cost of every perfect $b$-matching
is negated. 

\paragraph{Maximum weight non-perfect $b$-matching $\rightarrow$ Maximum weight perfect $b$-matching.}

We can reduce the non-perfect variant to the perfect one. Given
a problem instance $(G,b,c)$ of maximum weight bipartite $b$-matching
where $G=(U,V,E)$ is a bipartite graph, $b$ is a demand vector and
$c$ is a cost vector, we construct an instance $(G',b',c')$ of 
maximum weight bipartite perfect $b$-matching as follows. 
First, note that we can assume that all entries in $c$ are non-negative since we never add edges of negative cost to the optimal maximum weight non-perfect $b$-matching. 
Next, we add
new vertices $u_{0}$ into $U$ and $v_{0}$ into $V$. Set $b'_{ u_0 } = \sum_{v\in V} b_{v}$,
$b'_{v_{0}} = \sum_{u \in U} b_{u}$, and $b'_{x}=b_{x}$ for all $x\in U\cup V$.
We add edges connecting $u_{0}$ to all vertices in $V$ (including
$v_{0}$) and connecting $v_{0}$ to all vertices in $U$.
The cost of all these edges incident to $u_{0}$ or $v_{0}$ is $0$.
It is easy to see that, given a perfect $b'$-matching $x'$ in $G'$,
we can obtain a $b$-matching $x$ with the same cost in $G$ simply
by deleting all edges incident to $u_{0}$ and $v_{0}$. Also, given
any $b$-matching $x$ in $G$, there is a perfect $b'$-matching
$x'$ in $G'$ with the same cost that is obtained from $x$ by ``matching
the residual demand'' on vertices in $U$ and $V$ to $v_{0}$ and
$u_{0}$ respectively. More formally, we set $x'_{e} = x_{e} \forall e \in E$,
$x'_{(u_{0},v)} = b_{v} - \sum_{(u,v) \in E}x_{(u,v)} \forall v \in V$,
$x'_{(u,v_{0})} = b_{u} - \sum_{(u,v)\in E}x_{(u,v)} \forall u \in U$,
and $x'_{(u_{0},v_{0})} = \sum_{(u,v)\in E}x_{(u,v)}$. It is easy to
verify that $x'$ is a perfect $b'$-matching with the same cost as
$x$. 

From the above discussion, we conclude:
\begin{proposition}
Consider the following problems when the input graph has $n$ vertices
and $m$ edges, and let $W$ denote the largest number in the problem
instance:
\begin{itemize}
\item minimum weight bipartite perfect $b$-matching (with positive cost
vector),
\item minimum weight bipartite perfect $b$-matching, and
\item maximum weight bipartite perfect $b$-matching.
\end{itemize}
If one of the above problems can be solved in $T(m,n,W)$ time, then
we can solve the other problems in $T(m,n,2W)$ time. Additionally, we
can solve the maximum weight bipartite $b$-matching problem in $T(m+O(n),n,O(Wn))$
time.
\end{proposition}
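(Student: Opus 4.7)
The plan is to chain together three simple reductions, each of which preserves the graph up to $O(n)$ extra nodes/edges and perturbs the largest magnitude $W$ by at most a constant factor, exactly as sketched in the paragraphs preceding the statement. First I would reduce ``minimum weight bipartite perfect $b$-matching'' to its positive-cost variant: given $(G,b,c)$ with $c \in \{-W,\dots,W\}^E$, replace $c$ by $c'_e \defeq c_e + W \geq 0$. Because every perfect $b$-matching in a bipartite graph uses exactly $\|b\|_1/2$ edge-copies (each edge contributes $1$ to the degree sum on each side), the total cost under $c'$ is shifted by the constant $W\|b\|_1/2$ relative to the cost under $c$, so the set of optimal solutions is unchanged. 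Running the positive-cost solver on $(G,b,c')$ with parameter $2W$ then yields an optimum for the original instance in time $T(m,n,2W)$.

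Next, the equivalence between minimization and maximization with arbitrary (possibly negative) cost is immediate: negate the cost vector. Every perfect $b$-matching's value is negated, so a minimizer becomes a maximizer and vice versa; this preserves $W$ exactly, so both directions are $T(m,n,W)$. Chaining with the previous reduction shows that any one of the three listed problems yields each of the others in time $T(m,n,2W)$.

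For the final claim, I would use the padding reduction already outlined in the text. Given a max-weight non-perfect $b$-matching instance $(G=(U,V,E), b, c)$, first observe that edges of negative cost are never used in an optimum, so we may zero them out and assume $c \geq 0$. Introduce new vertices $u_0$ added to $U$ and $v_0$ added to $V$, set $b'_{u_0} \defeq \sum_{v\in V} b_v$, $b'_{v_0} \defeq \sum_{u \in U} b_u$, and $b'_x \defeq b_x$ otherwise; add all edges from $u_0$ to $V \cup \{v_0\}$ and from $v_0$ to $U$, each with cost $0$. Then verify the bijection between $b$-matchings $x$ in $G$ and perfect $b'$-matchings $x'$ in $G'$: extend $x$ by letting $u_0$ and $v_0$ absorb the slack in the degree constraints as specified in the excerpt, and conversely delete all edges incident to $u_0$ or $v_0$ to recover $x$; the cost is preserved since the added edges have cost $0$. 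The new graph has $n + 2$ vertices, $m + O(n)$ edges, and the largest number in the instance is $\max(W, \sum_v b_v) \leq Wn$, giving the stated $T(m + O(n), n, O(Wn))$ bound by invoking the max-weight perfect $b$-matching solver.

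No step presents a genuine obstacle; the only thing to be mildly careful about is that the cost-shift reduction relies on \emph{every} perfect $b$-matching having the same total edge-multiplicity $\|b\|_1/2$, which is exactly where bipartiteness is used, and that the padding graph $G'$ is still bipartite with sides $U \cup \{u_0\}$ and $V \cup \{v_0\}$ so that the target solver applies as a black box.
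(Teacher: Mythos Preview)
Your proposal is correct and follows the paper's approach exactly: the same cost-shift by $+W$, the same negation for min/max, and the same padding construction with two dummy vertices $u_0,v_0$ and zero-cost edges. One small clarification: the identity $\sum_e x_e = \|b\|_1/2$ for a perfect $b$-matching holds in any graph (each edge contributes to two vertex-degree sums), so bipartiteness is not actually what makes the cost-shift work---it is needed only so that the black-box bipartite solver applies to the padded instance $G'$.
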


\begin{definition}
[Flow]For any directed graph $G=(V,E)$, a demand vector $b\in\Z^{V}$ where
$\sum_{v\in V}b_{v}=0$, and a cost vector  $c\in\Z^{E}$, a \emph{flow} (w.r.t.~$b$) is an assignment $f\in\Z_{\ge0}^{E}$ such that, for all $v\in V$,
$\sum_{(u,v)\in E}f_{(u,v)}-\sum_{(v,u)\in E}f_{(v,u)}=b_{v}$. The
\emph{cost} of $f$ is $c^{\top}f=\sum_{e\in E}c_{e}f_{e}$. We say
that $f$ is \emph{fractional} if we allow $f\in\R_{\ge0}^{E}$.

An $s$-$t$ flow $f$ is a flow such that $b_{v}=0$ for all $v\in V\setminus\{s,t\}$.
So $b_{s}=k=-b_{t}$ for some $k\in\R$. We say that $k$ is the \emph{value}
of the $s$-$t$ flow $f$. Let $\CAP\in\Z_{>0}^{V\setminus\{s,t\}}$
denote vertex capacities. We say that $f$ \emph{respects} the vertex
capacity if $\sum_{(u,v)\in E}f_{(u,v)}\le\CAP_{v}$ for all $v\in V\setminus\{s,t\}$.
\end{definition}

In the \emph{uncapacitated minimum-cost flow} problem (a.k.a.~the
\emph{transshipment} problem), given a directed graph $G=(V,E)$, a demand
vector $b$, and a cost vector $c$, we need either (1) report that  no flow w.r.t.~$b$ exists, (2) report that the optimal cost is $-\infty$ (this can happen if there is a negative cycle w.r.t.~$c$)
or (3) find such flow with minimum cost.
We will need the following observation:
\begin{claim}
	\label{claim:lower bound cost transhipment}If there is a feasible solution
	for the transshipment problem and the optimal cost is not $-\infty$,
	then the optimal cost is between $[-W^{2}n^{3},W^{2}n^{3}]$.
\end{claim}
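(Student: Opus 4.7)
The plan is to reduce to the linear programming relaxation and exploit total unimodularity. Write the problem as the LP
\[
\min_{f \in \R_{\ge 0}^E,~\ma^\top f = b} c^\top f,
\]
where $\ma \in \{-1,0,1\}^{E \times V}$ is the signed incidence matrix of $G$. Since $\ma$ is totally unimodular and $b$ is integral, every basic feasible solution (BFS) of this LP is integral, so it suffices to upper bound $|c^\top f^*|$ over optimal BFS's $f^*$ of the LP.

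Next I would argue that an optimal BFS exists under the hypothesis of the claim. The feasible polytope is nonempty by assumption and is pointed (it lies in the nonnegative orthant), so it has vertices. Since $c^\top f$ is bounded below on the feasible region (the optimum is not $-\infty$) and above by plugging in any feasible point, the LP attains its optimum, and it does so at a vertex $f^*$.

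The key structural step is that the support of $f^*$ is a forest in the underlying undirected graph. Indeed, a BFS has the property that the columns of $\ma$ corresponding to the support of $f^*$ are linearly independent; if the support contained an undirected cycle, the signed incidence vectors of those edges would be linearly dependent, a contradiction. In particular $|\mathrm{supp}(f^*)| \le n - 1$.

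From acyclicity, I would bound $\|f^*\|_\infty$ by a leaf-peeling argument on the support forest: the flow across any edge $e$ of the forest equals (up to sign) the sum of demands $b_v$ on one side of $e$, and this is at most $\sum_{v \in V} |b_v| \le n W$. Combining, $\|f^*\|_0 \le n$ and $\|f^*\|_\infty \le n W$, hence $\|f^*\|_1 \le n^2 W$, and therefore
\[
|c^\top f^*| \le \|c\|_\infty \|f^*\|_1 \le W \cdot n^2 W = n^2 W^2 \le n^3 W^2,
\]
which gives the claimed bound.

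The only subtle point is ruling out that the LP optimum is strictly smaller than the integer optimum; but by total unimodularity these coincide when $b$ is integral, so this is automatic. The ``hard part'' is really just the BFS-is-a-forest argument, which is a standard fact about network flow polytopes and follows directly from linear independence of the active columns of $\ma$; everything else is bookkeeping.
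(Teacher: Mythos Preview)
Your proof is correct and in fact yields a slightly stronger bound ($n^{2}W^{2}$ rather than $n^{3}W^{2}$), but it takes a different route than the paper. The paper argues directly and combinatorially: since the optimum is not $-\infty$ there is no negative-cost cycle, so an optimal flow decomposes into paths only; the total flow over all paths equals $\|b\|_{1}/2$, hence each edge carries at most $\|b\|_{1}/2 \le nW/2$, and multiplying by $|E|\cdot\|c\|_{\infty}$ gives the bound. Your argument instead invokes LP geometry (total unimodularity, existence of an optimal BFS, and the fact that its support is a forest) to control both $\|f^{*}\|_{0}$ and $\|f^{*}\|_{\infty}$. The paper's approach is lighter---just flow decomposition, no LP machinery---while yours exploits more structure (the forest support) to sharpen the bound and would generalize more readily to other TU systems.
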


\begin{proof}
	If the optimal cost is not $-\infty$, there is no negative cycle.
	So an optimal flow $f$ can be decomposed into a collection of paths.
	The total flow value summing overall flow paths is exactly $\sum_{v\in V}\frac{||b||_{1}}{2}$.
	So the flow value on every edge is at most $\sum_{v\in V}\frac{||b||_{1}}{2}$.
	Therefore, the cost of $f$ is at least $\sum_{v\in V}\frac{||b||_{1}}{2}\times\min_{e\in E}c_{e}\times|E|\ge-W^{2}n^{3}$
	and at most $\sum_{v\in V}\frac{||b||_{1}}{2}\times\max_{e\in E}c_{e}\times|E|\le W^{2}n^{3}$
	.
\end{proof}

In the \emph{vertex-capacitated minimum-cost $s$-$t$ flow} problem, given
a directed $G$ and a pair of vertices $(s,t)$, a target value $k$,
a vertex capacity vector $\CAP$, and a cost vector $c$, we either
report that no such $s$-$t$ flow with value $k$ respecting vertex
capacities $\CAP$ exists, or find such flow with minimum cost. 

Below, we show the equivalence between these two problems and the minimum
weight perfect $b$-matching problem.

\paragraph{Minimum weight perfect $b$-matching $\rightarrow$ Transshipment. }

Observe that minimum weight bipartite perfect $b$-matching
is a special case of the transshipment problem when we only allow
edges that direct from vertices with positive demands (for the transshipment
problem) to vertices with negative demands.

\paragraph{Transshipment $\rightarrow$ Vertex-capacitated minimum-cost $s$-$t$
flow.}

Given a transshipment instance $(G=(V,E),b,c)$, we can construct
an instance $(G',k,\CAP,c')$ for vertex-capacitated $s$-$t$ flow
as follows. First, we add a special source $s$ and $t$ into the
graph. For each vertex $v$ with positive demand $b_{v}>0$, we add
a dummy vertex $s_{v}$ and two edges $(s,s_{v})$ and $(s_{v},v)$.
For each vertex $v$ with positive demand $b_{v}<0$, we add a dummy
vertex $t_{v}$ and two edges $(v,t_{v})$ and $(t_{v},t)$. Let $G'$
denote the resulting graph. The vertex capacity of each $s_{v}$ is
$\CAP_{s_{v}}=b_{v}$. The vertex capacity of each $t_{v}$ is $\CAP_{t_{v}}=-b_{v}$.
For original vertices $v\in V$, we set $\CAP_{v}=3n^3 W^2 + n||b||_{1}/2$. We set the cost $c'_{e}=c_{e}$
for all original edges $e\in E$, otherwise $c'_{e}=0$ for all new
edges $e$. Lastly, we set the target value $k=\sum_{v:b_{v}>0}b_{v}=||b||_{1}/2$. 

First, we claim there is an $s$-$t$ flow in $G'$ with value $k$
if and only if there is a flow in $G$ satisfying the demand $b$. To see this, consider any $s$-$t$ flow $f'$ with value $k$ respecting vertex
capacities in $G'$. Observe that the vertex capacities of all
$s_{v}$ must be exactly saturated, because there are $k$ units of flow going out from $s$, 
but the total capacities of all $s_v$ are at most $k$. This similarly holds for all $t_v$.
Therefore, for all
original vertices $v\in V$, $\sum_{(u,v)\in E}f'_{(u,v)}-\sum_{(v,u)\in E}f'_{(v,u)}=b_{v}$.
Observe that, by deleting all non-original vertices in $G'$, $f'$
corresponds to a flow $f$ (w.r.t. $b$) in $G$ with the same cost
as $f$. 

Therefore, if the algorithm for solving vertex-capacitated $s$-$t$ flow reports that there is no
$s$-$t$ flow in $G'$ with value $k$, then we just report that
there is no flow in $G$ satisfying the demand $b$. 
Otherwise, we obtain an $s$-$t$ flow $f'$ in $G'$ with value $k$.

We claim that the cost of $f'$ is less than $-W^{2}n^{3}$ if and only if there
is a negative cycle w.r.t.~$c$ in $G$. If the cost of $f'$ is
less than $-W^{2}n^{3}$, then there is a flow in $f$ satisfying
the demand $b$ with the same cost by the above argument. So, by \Cref{claim:lower bound cost transhipment},
the optimal cost of flow in $G$ must be $-\infty$ and, hence,
there is a negative cycle in $G$. In the other direction, suppose
there is a negative cycle $C$ in $G$, then it suffices to construct a $s$-$t$
flow $f'_{1}$ in $G'$ with value $k$ with cost less than $-W^{2}n^{3}$. Start with an arbitrary $s$-$t$ flow $f'_{2}$ in $G'$
with value $k$ such that $f'_{2}$ can be decomposed into a collection
of paths only (no cycle). We know that $f'_{2}$ exists because $f'$
exists. Observe that the flow value in $f'_{2}$ is at most $k$ on
every edge and at most $kn$ on every vertex. So the cost of $f'_{2}$
is at most $k\times\max_{e\in E}c_{e}\times|E|\le n^{3}W^{2}$. Now,
$f'_{1}$ is obtained from $f'_{2}$ by augmenting a flow along the
negative cycle $C$ with value $3n^{3}W^{2}$. Note that this augmentation
is possible because the capacity of every original vertex $v\in V$ is 
$nk+3n^{3}W^{2}$. So the cost of $f'_{1}$ is at most $n^{3}W^{2}-3n^{3}W^{2}\le-2n^{3}W^{2}$. 

By the above argument, we simply check the cost of $f'$. If it is
less than $-W^{2}n^{3}$, we report that the optimal cost for the
transshipment problem is $-\infty$. Otherwise, we return the corresponding
flow $f$ from the discussion above.

\paragraph{Vertex-capacitated minimum-cost $s$-$t$ flow $\rightarrow$ Minimum
weight perfect $b$-matching.}

We show a reduction that directly generalizes the well-known reduction
from minimum-cost vertex-disjoint $s$-$t$ paths to minimum weight
perfect matching (see e.g.~Section 3.3 of \cite{Sankowski09} and
Chapter 16.7c of \cite{schrijver2003combinatorial}). 

Suppose we are given an instance $(G=(V,E),k,\CAP,c)$ for the vertex-capacitated
minimum-cost $s$-$t$ flow problem (possibly with negative costs). First, we assume that $s$ has
no incoming edges and $t$ has no outgoing edges. To assume this, we
can add vertices $s_{0}$ and $t_{0}$, and edges $(s_{0},s)$ and
$(t,t_{0})$ both with zero cost. Set $\CAP_{s}=\CAP_{t}=\sum_{v\in V\setminus\{s,t\}}\CAP_{v}$.
By treating $s_{0}$ and $t_{0}$ as the new source and sink, our assumption
is justified.

Next, we construct a bipartite graph $G'=(L,R,E')$, a cost vector
$c'\in\Z^{E'}$, a demand vector $b'\in\Z^{L\cup R}$ as follows.
For each $v\in V\setminus\{t\}$, we create a copy $v_{l}\in L$.
For each $v\in V\setminus\{s\}$, we create a copy $v_{r}\in R$.
For $v\in V\setminus\{s,t\}$, add an edge $(v_{l},v_{r})\in E'$
with cost $c'_{(v_{l},v_{r})}=0$. For each $e=(u,v)\in E$, add $(u_{l},v_{r})\in E'$
with cost $c'_{e}=c_{e}$. (Note that here we exploit the assumption that
$s$/$t$ has no incoming/outgoing edges.) We assign the demand
vector $b'$ as follows: $b'_{s_{l}}=k$, $b'_{t_{r}}=k$, and, for
every $v\in V\setminus\{s,t\}$, $b'_{v_{l}}=b'_{v_{r}}=\CAP_{v}$.

Observe that every perfect $b'$-matching $x$ in $G'$ corresponds
to a set $\cC$ of cycles in $G\cup\{(t,s)\}$ where each vertex $v$
is part of at most $\CAP_{v}$ cycles and $(t,s)$ is part of exactly
$k$ cycles. Hence, $\cC$ in turn corresponds to an $s$-$t$ flow
$f$ in $G$ of value $k$ that respect the vertex capacities $\CAP$.
Moreover, the cost of $x$ is exactly the same as the cost of $f$. 

From the above three reductions, we conclude:
\begin{proposition}
Consider the following problems when the input graph has $n$ vertices
and $m$ edges, and $W$ denotes the largest number in the problem
instance:
\begin{itemize}
\item minimum weight bipartite perfect $b$-matching,
\item uncapacitated minimum-cost flow (a.k.a. transshipment), and
\item vertex-capacitated minimum-cost $s$-$t$ flow.
\end{itemize}
If one of the above problems can be solved in $T(m,n,W)$ time, then
we can solve the other problems in $T(m+O(n),O(n),O(n^3W^2))$ time.
\end{proposition}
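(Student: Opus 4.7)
The plan is to observe that the three reductions established immediately before this proposition already form a directed cycle $\text{matching} \to \text{transshipment} \to \text{vertex-capacitated } s\text{-}t \text{ flow} \to \text{matching}$, so any pair of the three problems is connected by a path of length at most two in this cycle. Thus it suffices to verify, for each such composition, that the parameters $(m,n,W)$ blow up only to $(m + O(n), O(n), O(n^3 W^2))$, and that the overhead of performing the reductions themselves is subsumed by $T(m+O(n),O(n),O(n^3 W^2))$. Since each individual reduction adds only $O(n)$ vertices and edges and requires only $O(m+n)$ deterministic preprocessing time to build the new graph, cost vector, and demand vector, the reduction overhead is negligible and the only real work is tracking parameter growth.

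First I would walk through the "longest" composition, namely transshipment $\to$ vertex-capacitated $s$-$t$ flow $\to$ matching. Reduction~2 yields an instance with $m+O(n)$ edges, $O(n)$ vertices, and maximum numerical value $\max_v \CAP_v \le 3n^3 W^2 + n\|b\|_1/2 = O(n^3 W^2)$. Feeding this into Reduction~3 doubles the vertex count, contributes one edge per internal vertex plus one per original edge (so the edge count remains $m + O(n)$), preserves costs (so $\max_e |c'_e| \le W$), and sets demands $b'_{v_l}=b'_{v_r}=\CAP_v$ and $b'_{s_l}=b'_{t_r}=k$, so that $\|b'\|_\infty \le \max(k,\max_v \CAP_v) = O(n^3 W^2)$. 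The composite instance therefore has parameters $(m + O(n), O(n), O(n^3 W^2))$, matching the claim.

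The remaining compositions are easier. The reduction matching $\to$ transshipment is a trivial embedding preserving all parameters, so matching $\to$ vertex-capacitated $s$-$t$ flow reduces via Reduction~2 alone and yields $(m+O(n), O(n), O(n^3 W^2))$. The reduction vertex-capacitated $s$-$t$ flow $\to$ matching is Reduction~3, producing an instance with $(m + O(n), O(n), O(W))$, which trivially stays inside the claimed envelope; composing with the trivial embedding matching $\to$ transshipment gives vertex-capacitated $s$-$t$ flow $\to$ transshipment with the same bounds. In every case, the reduction itself runs in $O(m+n)$ time, the target instance has size $m+O(n)$ and at most $O(n)$ vertices, and the maximum numerical entry is at most $O(n^3 W^2)$, so any $T(\cdot,\cdot,\cdot)$-time algorithm for the target problem solves the source problem in $T(m+O(n),O(n),O(n^3 W^2))$ time. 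There is no real obstacle here—the only care needed is in checking the $O(n^3 W^2)$ bound on $\CAP_v$ inside Reduction~2, which follows from the explicit choice $\CAP_v = 3n^3 W^2 + n\|b\|_1/2$ used to accommodate flow augmentations along negative cycles.
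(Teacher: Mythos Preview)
Your proposal is correct and follows the same approach as the paper, which simply invokes the three reductions described immediately before the proposition and (implicitly) tracks parameter growth through their compositions. One minor slip: Reduction~3 does not quite give $W' = O(W)$, because the preprocessing step sets $\CAP_s = \CAP_t = \sum_{v} \CAP_v \le nW$, so the resulting matching instance has $W' = O(nW)$; this is still well within the $O(n^3W^2)$ envelope, so your conclusion is unaffected.
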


\paragraph{Negative-weight single-source shortest paths $\rightarrow$ Transshipment
/ Dual of minimum weight perfect $b$-matching.}

We also consider the \emph{negative-weight single-source shortest
path} problem. In this problem, we are given a graph $G=(V,E)$,
a source $s\in V$, and a cost vector $c\in\Z^{E}$, the goal is to
compute a shortest path tree rooted at $s$ w.r.t. the cost $c$, or detect a negative cycle.

The problem can be reduced to the transshipment problem as follows. 
First, by performing depth first search, we assume that the source $s$ can reach all other vertices. (The unreachable vertices have distance $\infty$ and can be removed from the graph.)
Next, we
make sure that the shortest path is unique by randomly perturbing
the weight of each edge (exploiting the Isolation lemma as in, e.g., \cite[Section~6]{Cabello2013multiple}).
Then, we set the demand vector $b_{s}=n-1$ and $b_{v}=-1$ for all
$v\neq s$. As $s$ can reach all other vertices, we know that there exists a flow satisfying the demand $b$. 
Finally, we invoke the algorithm for solving the transshipment instance $G$ with this demand $b$. 
If the algorithm reports that the optimal cost is $-\infty$, then we report that there is a negative cycle. Otherwise, we obtain a flow $f$ that can be decomposed into $s$-$v$ paths $P_v$ for all $v \in V\setminus \{s\}$. Observe that $P_v$ must be a shortest $s$-$v$ path.
Since the shortest paths are unique, we have that the union of $P_v$ over all $v \in V \setminus \{s\}$, which is exactly the support of $f$, form a shortest path tree. 
Therefore, we simply return the support of $f$ as the answer.

\paragraph{Minimum mean cycle, deterministic MDP $\rightarrow$ Negative-weight single-source shortest path.} 
Given a cost function $c$ on edges,
the \emph{mean cost} of a directed cycle $\sigma$ is $\bar{c}(\sigma) = \frac{1}{|\sigma|}\cdot \sum_{e \in \sigma} c(e)$. The \emph{minimum mean cycle} problem is to find a cycle of minimum mean cost. The \emph{deterministic markov decision process (MDP)} problem \cite{Madani02} is the same problem except that we want to find a cycle with maximum mean cost. These two problems are clearly equivalent by negating all the edge cost.

Given a subroutine for detecting a negative cycle, Lawler \cite{Lawler1972} shows that we can also solve the minimum mean cycle problem on a graph of size $n$ by calling the negative cycle subroutine $O(\log (nW))$ times on graphs with the same order.

\paragraph{Plugging Algorithms to Reductions.}
Fianlly, by plugging the main algorithm from \Cref{thm:perfect_b_matching_fast}
into the above reductions, we immediately obtain several applications. 
\begin{corollary}[Exact Algorithms for Integral Problems]
\label{cor:app exact}
There are algorithms with running time $\Ot((m+n^{1.5})\log^{2}W)$
for solving the following problems \emph{exactly}:
\begin{itemize}[noitemsep]
\item minimum weight bipartite perfect $b$-matching, 
\item maximum weight bipartite perfect $b$-matching,
\item maximum weight bipartite $b$-matching,
\item uncapacitated minimum-cost flow (a.k.a. transshipment), 
\item vertex-capacitated minimum-cost $s$-$t$ flow, and
\item negative-weight single-source shortest path,
\item minimum mean cycle, and
\item deterministic MDP
\end{itemize}
when the input graph has $n$ vertices and $m$ edges, and $W$ denotes
the largest absolute value used for specifying any value in the problem.
Here, we assume that all numbers used for specifying the input are integral.
\end{corollary}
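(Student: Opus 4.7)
The plan is to simply instantiate \Cref{thm:perfect_b_matching_fast} as the algorithmic primitive and then invoke each of the reductions laid out in the discussion above \Cref{cor:app exact}, verifying in each case that (i) the graph size grows by at most a constant factor and (ii) the largest absolute value $W'$ in the reduced instance satisfies $\log W' = O(\log W)$, so that the final running time remains $\tilde O((m+n^{1.5})\log^2 W)$.

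The first three bullets (the matching variants) are immediate: minimum weight bipartite perfect $b$-matching is exactly \Cref{thm:perfect_b_matching_fast}; the maximum weight perfect version follows by negating $c$; and maximum weight bipartite (non-perfect) $b$-matching is obtained by the described augmentation with a pair of auxiliary vertices $u_0,v_0$ of demand $\|b\|_1$, which leaves $m$ unchanged up to $O(n)$ edges and keeps $W$ at $O(nW)$. For vertex-capacitated minimum-cost $s$-$t$ flow I would apply the reduction to minimum weight perfect $b$-matching that uses the bipartite double cover with two copies $v_l,v_r$ of each vertex; the new demand vector is bounded by $\sum_v \mathrm{cap}_v = O(nW)$ and the cost vector is unchanged, so the weight parameter remains polynomial in the original $W$. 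For transshipment I would chain its reduction to vertex-capacitated $s$-$t$ flow (adding a super-source/sink and dummy vertices $s_v,t_v$), using \Cref{claim:lower bound cost transhipment} to threshold the returned flow cost at $-W^2n^3$ to distinguish the $-\infty$ case from a genuine minimum; again this only inflates $W$ polynomially in $n$ and $W$.

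For negative-weight single-source shortest paths I would first run DFS from $s$ to restrict to reachable vertices, then apply the Isolation-Lemma style perturbation so that shortest paths are unique w.h.p., and set $b_s = n-1$, $b_v = -1$ for $v\neq s$ before calling the transshipment algorithm. If the returned flow cost is $-\infty$ (detected via the threshold above) we report a negative cycle; otherwise the support of the optimal flow is exactly the shortest-path tree by path decomposition and uniqueness. Minimum mean cycle reduces via Lawler's technique to $O(\log(nW))$ calls to a negative-cycle detection routine on graphs of the same size, which in turn is a special case of negative-weight SSSP; deterministic MDP is the same problem after negating costs. The total overhead is only a $\polylog(nW)$ factor on top of a single invocation.

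I do not expect a serious obstacle in any individual reduction since all of them are standard and already spelled out in the text. The only point that requires a bit of care is accounting: I need to check that after composing the reductions for the shortest-path and minimum mean cycle applications (which both go through transshipment, which itself goes through vertex-capacitated $s$-$t$ flow, which goes through perfect $b$-matching), the accumulated polynomial blow-up of the weight parameter stays of the form $\poly(n,W)$ so that $\log^2 W' = O(\log^2 W)$ up to polylogarithmic factors absorbed by $\tilde O$. This is straightforward since each individual step only multiplies $W$ by a $\poly(n,W)$ factor, and there are only a constant number of nested reductions for each listed problem.
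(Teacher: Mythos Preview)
Your proposal is correct and follows exactly the paper's approach: the paper's ``proof'' is a single sentence stating that the corollary follows by plugging \Cref{thm:perfect_b_matching_fast} into the reductions spelled out in the preceding discussion, and you have simply made explicit the parameter accounting that the paper leaves implicit. There are no substantive differences.
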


\subsubsection{Approximation Algorithms for Fractional Problems}

\label{sec:app frac}

Generally, for each the problem considered in the previous section, we can consider the same problem where the values that describe the input might not be integral.
We say that an algorithm solves a problem with additive approximation factor of $\epsilon$ if it always returns a solution whose cost is at most $\opt+\epsilon$ where $\opt$ is the cost of the optimal solution.

\begin{corollary}[Approximation Algorithms for Fractional Problems]
There are algorithms with running time $\Ot((m+n^{1.5})\log^{2}\frac{W}{\epsilon})$
for solving the following problems with \emph{additive} approximation factor of $\epsilon$:
\begin{itemize}[noitemsep]
\item minimum weight bipartite perfect $b$-matching (a.k.a.~optimal transport), 
\item maximum weight bipartite perfect $b$-matching,
\item maximum weight bipartite $b$-matching,
\item uncapacitated minimum-cost flow (a.k.a. transshipment), 
\item vertex-capacitated minimum-cost $s$-$t$ flow, and
\item negative-weight single-source shortest path,
\item minimum mean cycle, and
\item deterministic MDP
\end{itemize}
when the input graph has $n$ vertices and $m$ edges, and $W$ denotes
the ratio of the largest non-zero absolute value to the smallest non-zero absolute value used for specifying any value in the problem.
\end{corollary}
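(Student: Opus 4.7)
The plan is to derive this corollary from Theorem~\ref{thm:intro:transshipment} (or equivalently from the IPM-based algorithm built in Sections~\ref{sec:matching:fast}~and~\ref{sec:matching}) by applying the same set of reductions used in Section~\ref{sec:app int} to pass from fractional inputs to a transshipment instance, and then invoking the additive-$\epsilon$ guarantee of Theorem~\ref{thm:intro:transshipment} rather than the exact integer algorithm of Corollary~\ref{cor:app exact}. First I would normalize the input so that the largest absolute value appearing in the specification equals $1$; then every non-zero input value lies in $[-1,1]$ with magnitude at least $1/W$, so in particular $\|b\|_\infty\le 1$ and $\|c\|_\infty\le 1$, and the desired additive error $\epsilon$ on the original problem translates to a (possibly rescaled) additive error $\epsilon' = \epsilon/\mathrm{poly}(n,W)$ on the normalized one.

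Next I would carry out, problem by problem, the reductions of Section~\ref{sec:app int}, verifying that each is ``approximation-preserving'' in the sense that an additive-$\epsilon'$ solution of the reduced transshipment instance yields an additive-$\epsilon$ solution of the original. For minimum/maximum weight bipartite perfect $b$-matching and optimal transport, this is immediate because the problem is a special transshipment instance on a bipartite graph. For non-perfect $b$-matching we add the same $u_0,v_0$ gadget as in Section~\ref{sec:app int}. For vertex-capacitated minimum-cost $s$-$t$ flow, we use the same bipartite $b$-matching gadget, which inflates $\|b\|_\infty$ by at most a $\mathrm{poly}(n,W)$ factor. For negative-weight SSSP, we use the integral reduction with demand $b_s=n-1,b_v=-1$; approximate shortest-path-tree identification requires only that, after uniqueness-perturbation via the Isolation Lemma, the support of the returned near-optimal flow equals the shortest-path tree, which holds provided $\epsilon'$ is polynomially smaller than the perturbation gap. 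Finally, minimum mean cycle and deterministic MDP reduce to $O(\log(nW/\epsilon))$ calls of negative-cycle detection (Lawler's method), and we invoke the transshipment algorithm once per call to detect a negative cycle to additive accuracy $\Theta(\epsilon/n)$, which is enough to certify signs of cycle means.

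After these reductions, the resulting transshipment instance has $O(m+n)$ edges and $O(n)$ vertices, $\|b\|_\infty,\|c\|_\infty\le \mathrm{poly}(n,W)$, and we need additive accuracy $\epsilon'' = \epsilon/\mathrm{poly}(n,W)$. Plugging into Theorem~\ref{thm:intro:transshipment} gives time
\[
\tilde O\Bigl((m+n^{1.5})\log^2\!\bigl(\tfrac{\|b\|_\infty\|c\|_\infty}{\epsilon''}\bigr)\Bigr)
= \tilde O\Bigl((m+n^{1.5})\log^2\!\bigl(\tfrac{W}{\epsilon}\bigr)\Bigr),
\]
since $\log \mathrm{poly}(n,W/\epsilon) = O(\log(W/\epsilon)+\log n)$ and the extra $\log n$ is absorbed into $\tilde O(\cdot)$. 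The Lawler outer loop contributes a further $\log(nW/\epsilon)$ factor that is likewise absorbed.

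The main obstacle I expect is the accounting for the cost-magnitude blow-ups in the two more delicate reductions, namely (i) vertex-capacitated $s$-$t$ flow, where the auxiliary vertex capacity $nk+3n^3W^2$ used in Section~\ref{sec:app int} to guarantee that negative-cycle flow improvements are large must be re-chosen as a function of $\epsilon$ so that the threshold separating ``cost $=-\infty$'' from finite optima in the fractional setting remains robust to additive-$\epsilon$ perturbations; and (ii) negative-weight SSSP, where one must confirm that an additive-$\epsilon$ transshipment solution produces the same tree support as an exact solution after Isolation-Lemma perturbation. In both cases the fix is to choose the perturbation scale $\epsilon'' = \epsilon/\mathrm{poly}(n,W)$ small enough that the fractional gap between the optimal and second-best integer-support structures dominates the additive error, and this only affects the $\log^2(W/\epsilon)$ factor by a constant.
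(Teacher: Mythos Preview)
Your approach is workable but takes a genuinely different route from the paper. The paper's proof is considerably simpler: rather than pushing fractional inputs through the reductions and invoking the additive-$\epsilon$ transshipment solver directly, the paper \emph{rounds} every input value to an integral multiple of $1/\poly(nW/\epsilon)$ (so the rounded instance needs $O(\log(nW/\epsilon))$ bits per number), and then solves the rounded instance \emph{exactly} using the integral algorithm of Corollary~\ref{cor:app exact}. Since any feasible solution's cost changes by at most $n^2\cdot\max_e c(e)\cdot 1/\poly(nW/\epsilon)\le\epsilon$ under the rounding, the exact optimum of the rounded instance is an additive-$\epsilon$ optimum of the original.

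The advantage of the paper's approach is that it completely sidesteps the ``approximation-preserving'' verification you flag as the main obstacle: once the instance is integral, all the reductions of Section~\ref{sec:app int} apply verbatim and exactly, so there is no need to re-analyze the vertex-capacity gadget for $s$-$t$ flow, no need to worry about whether the support of an approximately optimal flow recovers the shortest-path tree, and no need to argue that Lawler's binary search tolerates additive error in negative-cycle detection. Your direct route would require handling each of these carefully (in particular, extracting per-vertex $\epsilon$-approximate distances from an $\epsilon$-approximate transshipment solution is not immediate, since the transshipment objective is the \emph{sum} of distances), whereas rounding-then-exact makes all of them trivial. Both approaches give the same running time, but the paper's is a one-line reduction rather than a case analysis.
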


We note that the \emph{optimal transport} problem (a.k.a.~the \emph{transportation}
problem) is the same as the minimum weight bipartite perfect $b$-matching
problem except that we allow all vectors to be fractional, including the demand vector $b$, the cost vector $c$, and the solution vector $x$.
Several papers in the literature assume that the input graph is a complete bipartite
graph. We do not assume this. 

The idea of the proof above is to round up all the values from the description of the input to 
(integral) multiple of $1/\poly(nW/\epsilon)$ which takes $O(\log(nW/\epsilon))$ bits.
Then, we solve the rounded problem instance exactly using the algorithm \Cref{cor:app exact} for integral problems.
Consider an optimal solution $S$ with cost $\OPT$ before rounding.
Observe that after rounding, the rounded cost of this optimal solution $S$ can increase by at most $n^{2}\times\max_{e\in E}c(e)\times\frac{1}{\poly(nW/\epsilon)}\le\epsilon$.
So the optimal cost in the rounded instance has cost at most $\OPT+\epsilon$.

\section*{Acknowledgments}

We thank Yang Liu for helpful conversations, feedback on earlier drafts of the paper, and suggesting \Cref{lem:P2order} and its application.
This project has received funding from the European Research Council (ERC) under the European
Unions Horizon 2020 research and innovation programme under grant agreement No 715672. Danupon
Nanongkai is also partially supported by the Swedish Research Council (Reg. No. 2015-04659 and 2019-05622). Aaron Sidford is supported by NSF CAREER Award CCF-1844855 and PayPal research gift. Yin Tat Lee is supported by NSF awards CCF-1749609, CCF-1740551, DMS-1839116, Microsoft Research Faculty Fellowship, a Sloan Research Fellowship.
Di Wang did part of this work while at Georiga Tech,
and was partially supported by NSF grant CCF-1718533. Richard Peng was partially supported by NSF grants CCF-1718533 and CCF-1846218. Zhao Song was partially supported by Ma Huateng Foundation, Schmidt Foundation, Simons Foundation, NSF, DARPA/SRC, Google and Amazon.

\ifdefined\DEBUG
\newpage
\fi

\bibliographystyle{alpha}
\bibliography{ref}

\newcommand{\etalchar}[1]{$^{#1}$}
\begin{thebibliography}{KNPW11}

\bibitem[AKPS19]{adil2019iterative}
Deeksha Adil, Rasmus Kyng, Richard Peng, and Sushant Sachdeva.
\newblock Iterative refinement for $\ell_p$-norm regression.
\newblock In {\em Proceedings of the Thirtieth Annual ACM-SIAM Symposium on
  Discrete Algorithms (SODA)}, pages 1405--1424. SIAM,
  \url{https://arxiv.org/pdf/1901.06764.pdf}, 2019.

\bibitem[AMV20]{amv20}
Kyriakos Axiotis, Aleksander Madry, and Adrian Vladu.
\newblock Circulation control for faster minimum cost flow in unit-capacity
  graphs.
\newblock In {\em {FOCS}}, pages 93--104. {IEEE}, 2020.
\newblock \url{https//arxiv.org/pdf/2003.04863.pdf}.

\bibitem[Ans96]{Anstreicher96}
Kurt~M. Anstreicher.
\newblock Volumetric path following algorithms for linear programming.
\newblock {\em Math. Program.}, 76:245--263, 1996.

\bibitem[AS20]{AdilS20}
Deeksha Adil and Sushant Sachdeva.
\newblock Faster $p$-norm minimizing flows, via smoothed $q$-norm problems.
\newblock In {\em SODA}, pages 892--910. SIAM, 2020.

\bibitem[ASZ20]{AndoniSZ19}
Alexandr Andoni, Clifford Stein, and Peilin Zhong.
\newblock Parallel approximate undirected shortest paths via low hop emulators.
\newblock pages 322--335, 2020.
\newblock \url{https://arxiv.org/pdf/1911.01956.pdf}.

\bibitem[AWR17]{AltschulerWR17}
Jason Altschuler, Jonathan Weed, and Philippe Rigollet.
\newblock Near-linear time approximation algorithms for optimal transport via
  sinkhorn iteration.
\newblock In {\em {NIPS}}, pages 1964--1974, 2017.

\bibitem[BBG{\etalchar{+}}20]{BernsteinBNPSS20}
Aaron Bernstein, Jan van~den Brand, Maximilian~Probst Gutenberg, Danupon
  Nanongkai, Thatchaphol Saranurak, Aaron Sidford, and He~Sun.
\newblock Fully-dynamic graph sparsifiers against an adaptive adversary.
\newblock {\em CoRR}, abs/2004.08432, 2020.

\bibitem[Bel58]{bellman58}
R.~Bellman.
\newblock {On a Routing Problem}.
\newblock {\em Quarterly of Applied Mathematics}, 16(1):87--90, 1958.

\bibitem[BGS20]{BernsteinGS20scc}
Aaron Bernstein, Maximilian~Probst Gutenberg, and Thatchaphol Saranurak.
\newblock Deterministic decremental reachability, scc, and shortest paths via
  directed expanders and congestion balancing.
\newblock pages 1123--1134, 2020.

\bibitem[BJKS18]{BlanchetJKS18}
Jose~H. Blanchet, Arun Jambulapati, Carson Kent, and Aaron Sidford.
\newblock Towards optimal running times for optimal transport.
\newblock {\em CoRR}, abs/1810.07717, 2018.

\bibitem[BLSS20]{blss20}
Jan van~den Brand, Yin~Tat Lee, Aaron Sidford, and Zhao Song.
\newblock Solving tall dense linear programs in nearly linear time.
\newblock In {\em {STOC}}, pages 775--788. {ACM}, 2020.
\newblock \url{https://arxiv.org/pdf/2002.02304.pdf}.

\bibitem[BNS19]{BrandNS19}
Jan van~den Brand, Danupon Nanongkai, and Thatchaphol Saranurak.
\newblock Dynamic matrix inverse: Improved algorithms and matching conditional
  lower bounds.
\newblock In {\em {FOCS}}, pages 456--480. {IEEE} Computer Society, 2019.

\bibitem[Bra20]{b20}
Jan van~den Brand.
\newblock A deterministic linear program solver in current matrix
  multiplication time.
\newblock In {\em SODA}, pages 259--278. {SIAM}, 2020.

\bibitem[CCE13]{Cabello2013multiple}
Sergio Cabello, Erin~W Chambers, and Jeff Erickson.
\newblock Multiple-source shortest paths in embedded graphs.
\newblock {\em SIAM Journal on Computing}, 42(4):1542--1571, 2013.

\bibitem[CGL{\etalchar{+}}20]{cglnps20}
Julia Chuzhoy, Yu~Gao, Jason Li, Danupon Nanongkai, Richard Peng, and
  Thatchaphol Saranurak.
\newblock A deterministic algorithm for balanced cut with applications to
  dynamic connectivity, flows, and beyond.
\newblock In {\em {FOCS}}, 2020.
\newblock \url{https://arxiv.org/pdf/1910.08025.pdf}.

\bibitem[CGP{\etalchar{+}}18]{ChuGPSSW18}
Timothy Chu, Yu~Gao, Richard Peng, Sushant Sachdeva, Saurabh Sawlani, and
  Junxing Wang.
\newblock Graph sparsification, spectral sketches, and faster resistance
  computation, via short cycle decompositions.
\newblock In {\em {FOCS}}, pages 361--372. {IEEE} Computer Society, 2018.

\bibitem[CKM{\etalchar{+}}14]{CohenKMPPRX14}
Michael~B. Cohen, Rasmus Kyng, Gary~L. Miller, Jakub~W. Pachocki, Richard Peng,
  Anup~B. Rao, and Shen~Chen Xu.
\newblock Solving sdd linear systems in nearly $m\log^{1/2}n$ time.
\newblock In {\em Proceedings of the 46th Annual ACM Symposium on Theory of
  Computing (STOC)}, pages 343--352, 2014.

\bibitem[CLM{\etalchar{+}}15]{clmmps15}
Michael~B Cohen, Yin~Tat Lee, Cameron Musco, Christopher Musco, Richard Peng,
  and Aaron Sidford.
\newblock Uniform sampling for matrix approximation.
\newblock In {\em Proceedings of the 2015 Conference on Innovations in
  Theoretical Computer Science (ITCS)}, pages 181--190, 2015.

\bibitem[CLS19]{cls19}
Michael~B. Cohen, Yin~Tat Lee, and Zhao Song.
\newblock Solving linear programs in the current matrix multiplication time.
\newblock In {\em {STOC}}, pages 938--942. {ACM}, 2019.
\newblock \url{https://arxiv.org/pdf/1810.07896}.

\bibitem[CMSV17]{cmsv17}
Michael~B Cohen, Aleksander Madry, Piotr Sankowski, and Adrian Vladu.
\newblock Negative-weight shortest paths and unit capacity minimum cost flow in
  ${O}(m^{10/7} \log {W})$ time.
\newblock In {\em Proceedings of the Twenty-Eighth Annual ACM-SIAM Symposium on
  Discrete Algorithms (SODA)}, pages 752--771. SIAM, 2017.

\bibitem[CP15]{CohenP15}
Michael~B. Cohen and Richard Peng.
\newblock $\ell_p$ row sampling by lewis weights.
\newblock In {\em STOC}, pages 183--192. {ACM}, 2015.

\bibitem[Din70]{Dinic70}
Efim~A Dinic.
\newblock Algorithm for solution of a problem of maximum flow in networks with
  power estimation.
\newblock In {\em Soviet Math. Doklady}, volume~11, pages 1277--1280, 1970.

\bibitem[DK69]{dinic-kronrod-69}
E.~A. Dinic and M.~A. Kronrod.
\newblock {An Algorithm for the Solution of the Assignment Problem}.
\newblock {\em Soviet Math. Dokl}, 10:1324--1326, 1969.

\bibitem[DP14]{DuanP14}
Ran Duan and Seth Pettie.
\newblock Linear-time approximation for maximum weight matching.
\newblock {\em J. {ACM}}, 61(1):1:1--1:23, 2014.

\bibitem[DS08]{ds08}
Samuel~I Daitch and Daniel~A Spielman.
\newblock Faster approximate lossy generalized flow via interior point
  algorithms.
\newblock In {\em Proceedings of the fortieth annual ACM symposium on Theory of
  computing (STOC)}, pages 451--460, 2008.

\bibitem[Ege31]{egervary31}
E.~Egerv\'{a}ry.
\newblock Matrixok kombinatorius tulajdons\'{a}gair\'{o}l (hungarian) on
  combinatorial properties of matrices.
\newblock {\em Matematikai \'{e}s Fizikai Lapok}, 38:16--28, 1931.

\bibitem[EK72]{EdmondsK72}
Jack Edmonds and Richard~M. Karp.
\newblock Theoretical improvements in algorithmic efficiency for network flow
  problems.
\newblock {\em J. {ACM}}, 19(2):248--264, 1972.

\bibitem[Gab85]{Gabow85}
Harold~N. Gabow.
\newblock Scaling algorithms for network problems.
\newblock {\em J. Comput. Syst. Sci.}, 31(2):148--168, 1985.
\newblock announced at FOCS'83.

\bibitem[Gal14]{Gall14a}
Fran{\c{c}}ois~Le Gall.
\newblock Powers of tensors and fast matrix multiplication.
\newblock In {\em {ISSAC}}, pages 296--303. {ACM}, 2014.

\bibitem[GLPS10]{glps10}
Anna~C Gilbert, Yi~Li, Ely Porat, and Martin~J Strauss.
\newblock Approximate sparse recovery: optimizing time and measurements.
\newblock {\em SIAM Journal on Computing 2012 (A preliminary version of this
  paper appears in STOC 2010)}, 41(2):436--453, 2010.

\bibitem[Gol93]{goldberg93}
Andrew~V. Goldberg.
\newblock Scaling algorithms for the shortest paths problem.
\newblock In {\em SODA '93: Proceedings of the fourth annual ACM-SIAM Symposium
  on Discrete algorithms}, pages 222--231. Society for Industrial and Applied
  Mathematics, 1993.

\bibitem[GRST20]{GoranciRST20hierarchy}
Gramoz Goranci, Harald R{\"{a}}cke, Thatchaphol Saranurak, and Zihan Tan.
\newblock The expander hierarchy and its applications to dynamic graph
  algorithms.
\newblock {\em CoRR}, abs/2005.02369, 2020.

\bibitem[GT89]{GabowT89}
Harold~N. Gabow and Robert~Endre Tarjan.
\newblock Faster scaling algorithms for network problems.
\newblock {\em {SIAM} J. Comput.}, 18(5):1013--1036, 1989.

\bibitem[GT91]{GabowT91}
Harold~N. Gabow and Robert~Endre Tarjan.
\newblock Faster scaling algorithms for general graph-matching problems.
\newblock {\em J. {ACM}}, 38(4):815--853, 1991.

\bibitem[HIKP12]{hikp12a}
Haitham Hassanieh, Piotr Indyk, Dina Katabi, and Eric Price.
\newblock Nearly optimal sparse {F}ourier transform.
\newblock In {\em Proceedings of the forty-fourth annual ACM symposium on
  Theory of computing (STOC)}, pages 563--578. ACM,
  \url{https://arxiv.org/pdf/1201.2501.pdf}, 2012.

\bibitem[HK73]{HopcroftK73}
John~E. Hopcroft and Richard~M. Karp.
\newblock An $n^{5/2}$ algorithm for maximum matchings in bipartite graphs.
\newblock {\em {SIAM} J. Comput.}, 2(4):225--231, 1973.
\newblock Announced at FOCS'71.

\bibitem[IM81]{IbarraM81}
Oscar~H. Ibarra and Shlomo Moran.
\newblock Deterministic and probabilistic algorithms for maximum bipartite
  matching via fast matrix multiplication.
\newblock {\em Inf. Process. Lett.}, 13(1):12--15, 1981.

\bibitem[Iri60]{iri60}
M.~Iri.
\newblock A new method for solving transportation-network problems.
\newblock {\em Journal of the Operations Research Society of Japan}, 3:27--87,
  1960.

\bibitem[JSWZ20]{JiangSWZ20}
Shunhua Jiang, Zhao Song, Omri Weinstein, and Hengjie Zhang.
\newblock Faster dynamic matrix inverse for faster lps.
\newblock {\em CoRR}, abs/2004.07470, 2020.

\bibitem[Kap17]{k17}
Michael Kapralov.
\newblock Sample efficient estimation and recovery in sparse {FFT} via
  isolation on average.
\newblock In {\em 58th Annual IEEE Symposium on Foundations of Computer Science
  (FOCS)}. \url{https://arxiv.org/pdf/1708.04544}, 2017.

\bibitem[Kar73]{Karzanov73}
Alexander~V Karzanov.
\newblock On finding maximum flows in networks with special structure and some
  applications.
\newblock {\em Matematicheskie Voprosy Upravleniya Proizvodstvom}, 5:81--94,
  1973.

\bibitem[Kar84]{Karmarkar84}
Narendra Karmarkar.
\newblock A new polynomial-time algorithm for linear programming.
\newblock {\em Combinatorica}, 4(4):373--396, 1984.
\newblock Announced at STOC'84.

\bibitem[KLP{\etalchar{+}}16]{KyngLPSS16}
Rasmus Kyng, Yin~Tat Lee, Richard Peng, Sushant Sachdeva, and Daniel~A.
  Spielman.
\newblock Sparsified cholesky and multigrid solvers for connection laplacians.
\newblock In {\em {\em STOC'16: Proceedings of the 48th Annual ACM Symposium on
  Theory of Computing}}, 2016.

\bibitem[KLST99]{kao-99}
M.-Y. Kao, T.~W. Lam, W.-K. Sung, and H.-F. Ting.
\newblock A decomposition theorem for maximum weight bipartite matchings with
  applications to evolutionary trees.
\newblock In {\em {\em Proceedings of the 7th Annual European Symposium on
  Algorithms}}, pages 438--449, 1999.

\bibitem[KMP10]{KoutisMP10}
Ioannis Koutis, Gary~L. Miller, and Richard Peng.
\newblock Approaching optimality for solving {SDD} systems.
\newblock In {\em Proceedings of the 51st Annual IEEE Symposium on Foundations
  of Computer Science (FOCS)}, pages 235--244, 2010.

\bibitem[KMP11]{KoutisMP11}
Ioannis Koutis, Gary~L. Miller, and Richard Peng.
\newblock A nearly $m \log n$-time solver for {SDD} linear systems.
\newblock In {\em Proceedings of the 52nd Annual IEEE Symposium on Foundations
  of Computer Science (FOCS)}, pages 590--598, 2011.

\bibitem[KNPW11]{knpw11}
Daniel~M Kane, Jelani Nelson, Ely Porat, and David~P Woodruff.
\newblock Fast moment estimation in data streams in optimal space.
\newblock In {\em Proceedings of the forty-third annual ACM symposium on Theory
  of computing (STOC)}, pages 745--754, 2011.

\bibitem[KOSZ13]{KelnerOSZ13}
Jonathan~A. Kelner, Lorenzo Orecchia, Aaron Sidford, and Zeyuan~Allen Zhu.
\newblock A simple.
\newblock In {\em STOC'13: Proceedings of the 45th Annual ACM Symposium on the
  Theory of Computing}, pages 911--920. combinatorial algorithm for solving
  {SDD} systems in nearly-linear time. In, 2013.

\bibitem[KPSW19]{kpsw19}
Rasmus Kyng, Richard Peng, Sushant Sachdeva, and Di~Wang.
\newblock Flows in almost linear time via adaptive preconditioning.
\newblock In {\em Proceedings of the 51st Annual {ACM} {SIGACT} Symposium on
  Theory of Computing (STOC)}, pages 902--913.
  \url{https://arxiv.org/pdf/1906.10340.pdf}, 2019.

\bibitem[KS01]{klivans2001randomness}
Adam~R Klivans and Daniel Spielman.
\newblock Randomness efficient identity testing of multivariate polynomials.
\newblock In {\em Proceedings of the thirty-third annual ACM symposium on
  Theory of computing (STOC)}, pages 216--223, 2001.

\bibitem[KS16]{KyngS16}
Rasmus Kyng and Sushant Sachdeva.
\newblock Approximate gaussian elimination for laplacians - fast, sparse, and
  simple.
\newblock In {\em {FOCS}}, pages 573--582. {IEEE} Computer Society, 2016.

\bibitem[Kuh55]{kuhn55}
H.~W Kuhn.
\newblock The hungarian method for the assignment problem.
\newblock {\em Naval Research Logistics Quarterly}, 2:83--97, 1955.

\bibitem[Law72]{Lawler1972}
Eugene~L. Lawler.
\newblock {\em Optimal Cycles in Graphs and the Minimal Cost-To-Time Ratio
  Problem}, pages 37--60.
\newblock Springer Vienna, Vienna, 1972.

\bibitem[LHJ19]{LinHJ19}
Tianyi Lin, Nhat Ho, and Michael~I. Jordan.
\newblock On efficient optimal transport: An analysis of greedy and accelerated
  mirror descent algorithms.
\newblock In {\em {ICML}}, volume~97 of {\em Proceedings of Machine Learning
  Research}, pages 3982--3991. {PMLR}, 2019.

\bibitem[Li20]{Li19}
Jason Li.
\newblock Faster parallel algorithm for approximate shortest path.
\newblock In {\em {STOC}}, volume \url{https://arxiv.org/pdf/1911.01626.pdf},
  pages 308--321. {ACM}, 2020.

\bibitem[LNNT16]{lnnt16}
Kasper~Green Larsen, Jelani Nelson, Huy~L Nguyen, and Mikkel Thorup.
\newblock Heavy hitters via cluster-preserving clustering.
\newblock In {\em 57th Annual Symposium on Foundations of Computer Science
  (FOCS)}, pages 61--70. IEEE, \url{https://arxiv.org/pdf/1604.01357}, 2016.

\bibitem[LS13]{lee2013efficient}
Yin~Tat Lee and Aaron Sidford.
\newblock Efficient accelerated coordinate descent methods and faster
  algorithms for solving linear systems.
\newblock In {\em 2013 IEEE 54th Annual Symposium on Foundations of Computer
  Science}, pages 147--156. IEEE, 2013.

\bibitem[LS14]{ls14}
Yin~Tat Lee and Aaron Sidford.
\newblock Path finding methods for linear programming: Solving linear programs
  in ${O}( \sqrt{rank} )$ iterations and faster algorithms for maximum flow.
\newblock In {\em 55th Annual IEEE Symposium on Foundations of Computer Science
  (FOCS)}, pages 424--433. \url{https://arxiv.org/pdf/1312.6677.pdf},
  \url{https://arxiv.org/pdf/1312.6713.pdf}, 2014.

\bibitem[LS15]{LeeS15}
Yin~Tat Lee and Aaron Sidford.
\newblock Efficient inverse maintenance and faster algorithms for linear
  programming.
\newblock In {\em {FOCS}}, pages 230--249. {IEEE} Computer Society, 2015.

\bibitem[LS19]{ls19}
Yin~Tat Lee and Aaron Sidford.
\newblock Solving linear programs with $\sqrt{rank}$ linear system solves.
\newblock In {\em arXiv preprint}. \url{https://arxiv.org/pdf/1910.08033.pdf},
  2019.

\bibitem[LS20a]{ls20_focs}
Yang~P Liu and Aaron Sidford.
\newblock Faster divergence maximization for faster maximum flow.
\newblock In {\em arXiv preprint}. \url{https://arxiv.org/pdf/2003.08929.pdf},
  2020.

\bibitem[LS20b]{ls20_stoc}
Yang~P Liu and Aaron Sidford.
\newblock Faster energy maximization for faster maximum flow.
\newblock In {\em STOC}. \url{https://arxiv.org/pdf/1910.14276.pdf}, 2020.

\bibitem[LSZ19]{lsz19}
Yin~Tat Lee, Zhao Song, and Qiuyi Zhang.
\newblock Solving empirical risk minimization in the current matrix
  multiplication time.
\newblock In {\em {COLT}}, volume~99 of {\em Proceedings of Machine Learning
  Research}, pages 2140--2157. {PMLR}, 2019.
\newblock \url{https://arxiv.org/pdf/1905.04447}.

\bibitem[Mad02]{Madani02}
Omid Madani.
\newblock Polynomial value iteration algorithms for detrerminstic mdps.
\newblock In {\em UAI}, pages 311--318, 2002.

\bibitem[Mad13]{m13}
Aleksander Madry.
\newblock Navigating central path with electrical flows: From flows to
  matchings, and back.
\newblock In {\em {FOCS}}, pages 253--262. {IEEE} Computer Society, 2013.

\bibitem[Mad16]{m16}
Aleksander Madry.
\newblock Computing maximum flow with augmenting electrical flows.
\newblock In {\em 2016 IEEE 57th Annual Symposium on Foundations of Computer
  Science (FOCS)}, pages 593--602. IEEE, 2016.

\bibitem[Moo59]{moore59}
E.~F. Moore.
\newblock {The Shortest Path Through a Maze}.
\newblock In {\em {\em Proceedings of the International Symposium on the Theory
  of Switching}}, pages 285--292, 1959.

\bibitem[MS04]{MuchaS04}
Marcin Mucha and Piotr Sankowski.
\newblock Maximum matchings via gaussian elimination.
\newblock In {\em {FOCS}}, pages 248--255. {IEEE} Computer Society, 2004.

\bibitem[Mun57]{munkres57}
J.~Munkres.
\newblock {Algorithms for the Assignment and Transportation Problems}.
\newblock {\em Journal of SIAM}, 5(1):32--38, 1957.

\bibitem[{Net}56]{ford56}
R.~Ford {Network Flow Theory}.
\newblock {\em Paper P-923}.
\newblock The RAND Corperation, Santa Moncia, California, 1956.

\bibitem[NS17]{ns17}
Danupon Nanongkai and Thatchaphol Saranurak.
\newblock Dynamic spanning forest with worst-case update time: adaptive, las
  vegas, and ${O}(n^{1/2 - \epsilon})$-time.
\newblock In {\em Proceedings of the 49th Annual ACM SIGACT Symposium on Theory
  of Computing (STOC)}, pages 1122--1129, 2017.

\bibitem[NS19]{ns19}
Vasileios Nakos and Zhao Song.
\newblock Stronger l\({}_{\mbox{2}}\)/l\({}_{\mbox{2}}\) compressed sensing;
  without iterating.
\newblock In {\em {STOC}}, pages 289--297. {ACM}, 2019.

\bibitem[NSW17]{nsw17}
Danupon Nanongkai, Thatchaphol Saranurak, and Christian Wulff{-}Nilsen.
\newblock Dynamic minimum spanning forest with subpolynomial worst-case update
  time.
\newblock In {\em FOCS}, pages 950--961. {IEEE} Computer Society, 2017.

\bibitem[NSW19]{nsw19}
Vasileios Nakos, Zhao Song, and Zhengyu Wang.
\newblock (nearly) sample-optimal sparse fourier transform in any dimension;
  ripless and filterless.
\newblock In {\em {FOCS}}, pages 1568--1577. {IEEE} Computer Society, 2019.

\bibitem[NT97]{NesterovT97}
Yurii~E. Nesterov and Michael~J. Todd.
\newblock Self-scaled barriers and interior-point methods for convex
  programming.
\newblock {\em Math. Oper. Res.}, 22(1):1--42, 1997.

\bibitem[Pag13]{p13}
Rasmus Pagh.
\newblock Compressed matrix multiplication.
\newblock {\em ACM Transactions on Computation Theory (TOCT)}, 5(3):1--17,
  2013.

\bibitem[Qua19]{Quanrud19}
Kent Quanrud.
\newblock Approximating optimal transport with linear programs.
\newblock In {\em SOSA@SODA}, volume~69 of {\em {OASICS}}, pages 6:1--6:9.
  Schloss Dagstuhl - Leibniz-Zentrum f{\"{u}}r Informatik, 2019.

\bibitem[Ren88]{Renegar88}
James Renegar.
\newblock A polynomial-time algorithm, based on newton's method, for linear
  programming.
\newblock {\em Math. Program.}, 40(1-3):59--93, 1988.

\bibitem[San04]{Sankowski04}
Piotr Sankowski.
\newblock Dynamic transitive closure via dynamic matrix inverse (extended
  abstract).
\newblock In {\em {FOCS}}, pages 509--517. {IEEE} Computer Society, 2004.

\bibitem[San05]{s05}
Piotr Sankowski.
\newblock {\em Algorithms -- ESA 2005: 13th Annual European Symposium, Palma de
  Mallorca, Spain, October 3-6, 2005. Proceedings}.
\newblock chapter Shortest Paths in Matrix Multiplication Time, pages 770--778.
  Springer Berlin Heidelberg, Berlin, Heidelberg, 2005.

\bibitem[San06]{s06}
Piotr Sankowski.
\newblock {\em Automata, Languages and Programming: 33rd International
  Colloquium, ICALP 2006, Venice, Italy, July 10-14, 2006, Proceedings, Part
  I}.
\newblock chapter Weighted Bipartite Matching in Matrix Multiplication Time,
  pages 274--285. Springer Berlin Heidelberg, Berlin, Heidelberg, 2006.

\bibitem[San09]{Sankowski09}
Piotr Sankowski.
\newblock Maximum weight bipartite matching in matrix multiplication time.
\newblock {\em Theor. Comput. Sci.}, 410(44):4480--4488, 2009.

\bibitem[Sch03]{schrijver2003combinatorial}
Alexander Schrijver.
\newblock {\em Combinatorial optimization: polyhedra and efficiency},
  volume~24.
\newblock Springer Science \& Business Media, 2003.

\bibitem[She17]{Sherman17}
Jonah Sherman.
\newblock Generalized preconditioning and undirected minimum-cost flow.
\newblock In {\em SODA}, pages 772--780. SIAM, 2017.

\bibitem[Shi55]{shimbel55}
A.~Shimbel.
\newblock {Structure in Communication Nets}.
\newblock In {\em In Proceedings of the Symposium on Information Networks},
  pages 199--203, Brooklyn, 1955. Polytechnic Press of the Polytechnic
  Institute of Brooklyn.

\bibitem[ST03]{SpielmanT03}
Daniel~A. Spielman and Shang-Hua Teng.
\newblock Solving sparse.
\newblock In {\em FOCS'03: Proceedings of the 44th Annual IEEE Symposium on
  Foundations of Computer Science}, pages 416--427, diagonally-dominant linear
  systems in time ${O}(m^{1.31})$. In, 2003. symmetric.

\bibitem[ST04]{SpielmanT04}
Daniel~A. Spielman and Shang{-}Hua Teng.
\newblock Nearly-linear time algorithms for graph partitioning, graph
  sparsification, and solving linear systems.
\newblock In {\em STOC'04: Proceedings of the 36th Annual ACM Symposium on the
  Theory of Computing}, pages 81--90. {ACM}, 2004.

\bibitem[ST11]{SpielmanT11}
Daniel~A. Spielman and Shang{-}Hua Teng.
\newblock Spectral sparsification of graphs.
\newblock {\em {SIAM} J. Comput.}, 40(4):981--1025, 2011.

\bibitem[SW19]{sw19}
Thatchaphol Saranurak and Di~Wang.
\newblock Expander decomposition and pruning: Faster, stronger, and simpler.
\newblock In {\em SODA}, pages 2616--2635. {SIAM}, 2019.

\bibitem[VA93]{VaidyaA93}
Pravin~M Vaidya and David~S Atkinson.
\newblock A technique for bounding the number of iterations in path following
  algorithms.
\newblock In {\em Complexity in Numerical Optimization}, pages 462--489. World
  Scientific, 1993.

\bibitem[Vai87]{Vaidya87}
Pravin~M. Vaidya.
\newblock An algorithm for linear programming which requires ${O}(( (m+n)n^2 +
  (m+n)^{1.5} n){L})$ arithmetic operations.
\newblock In {\em {STOC}}, pages 29--38. {ACM}, 1987.

\bibitem[Vai89]{Vaidya89a}
Pravin~M. Vaidya.
\newblock Speeding-up linear programming using fast matrix multiplication
  (extended abstract).
\newblock In {\em FOCS}, pages 332--337. {IEEE} Computer Society, 1989.

\bibitem[Vai91]{Vaidya90}
Pravin~M. Vaidya.
\newblock Solving linear equations with symmetric diagonally dominant matrices
  by constructing good preconditioners.
\newblock Technical report, Unpublished manuscript, UIUC 1990. A talk based on
  the manuscript was presented at the IMA Workshop on Graph Theory and Sparse
  Matrix Computation Mineapolis, October 1991.

\bibitem[Wil12]{Williams12}
Virginia~Vassilevska Williams.
\newblock Multiplying matrices faster than coppersmith-winograd.
\newblock In {\em STOC}, pages 887--898. {ACM}, 2012.

\bibitem[Wul17]{w17}
Christian Wulff{-}Nilsen.
\newblock Fully-dynamic minimum spanning forest with improved worst-case update
  time.
\newblock In {\em Proceedings of the 49th Annual ACM SIGACT Symposium on Theory
  of Computing (STOC)}, pages 1130--1143, 2017.

\bibitem[YZ05]{yz05}
Raphael Yuster and Uri Zwick.
\newblock Answering distance queries in directed graphs using fast matrix
  multiplication.
\newblock pages 389--396, 2005.

\end{thebibliography}

\newpage
\appendix

\section*{Appendix}

\section{Initial Point and Solution Rounding}
\label{sec:initial_point}

In this section we discuss how to construct the initial point for our IPM 
and how to convert fractional to integral solutions.

In \Cref{sec:initial_point:initial_point} we show how to construct a feasible primal solution $x$
and a feasible dual slack $s$, such that $xs \approx \tau(x,s)$, as required by the IPM.
Unfortunately, for a given minimum cost $b$-matching instance we are not able to construct such an initial point.
Instead, we replace the given cost vector $c$ by some alternative cost vector $c'$. 
The idea is to then run the IPM in reverse,
so that our solution pairs $(x,s)$ becomes less tailored to $c'$ 
and further away from the constraints, 
i.e. the points move closer towards the center of the polytope. 
Once we are sufficiently close to the center
it does not matter if we use cost vector $c$ or $c'$, so we can simply switch the cost vector.
That switching the cost vector is indeed possible is proven in \Cref{sec:initial_point:switch}.
After switching the cost vector, we then run the IPM again 
as to move the point $(x,s)$ closer towards the optimal solution.

Our IPM is not able to maintain a truly feasible primal solution $x$,
instead the solution is only approximately feasible.
Thus we must discuss how to turn such an approximately feasible solution into a truly feasible solution,
which is done in \Cref{sec:initial_point:feasible}.
Further, the solution we obtain this way might be fractional,
so at last in \Cref{sec:initial_point:integral} we show how to obtain an integral solution.

\subsection{Finding an Initial Point 
(Proof of \Cref{lem:implement:short_step_log_barrier,lem:implement:short_step_ls_barrier})}
\label{sec:initial_point:initial_point}

We want to solve minimum cost $b$-matching instances.
In order to construct an initial point, 
we reduce this problem to uncapacitated minimum cost flows.
For the reduction we first construct the underlying graph for the flow instance 
from a given bipartite $b$-matching instance
as described in \Cref{def:starred_flow_graph}.
We are left with constructing a feasible primal solution $x$, 
a cost vector $c'$ 
and a feasible slack $s$ of the dual 
w.r.t. the cost vector $c'$,
such that $xs \approx \tau(x,s)$ as required by our IPM.

In \Cref{sec:log_barrier_method} and \Cref{sec:ls_barrier_method} we present two different IPMs,
where the first uses $\tau(x,s) = 1$ 
and the second uses $\tau(x,s) = \sigma(x,s) + n/m$
where $\sigma(x,s):=\sigma(\mx^{1/2-\alpha}\ms^{-1/2-\alpha}\ma)$
for $\alpha =\frac{1}{4\log(4m/n)}$.
Hence we will also present two lemmas for construct the initial point,
one for each option of $\tau(x,s)$.

\lemInitialPointSimple*

\begin{proof}
We first construct the primal solution $x$, 
then we construct the dual slack $s$,
and at last we construct a cost vector $c$ for which there is a dual solution $y$ with slack $s$.

\paragraph{Primal solution (feasible flow)}

Let $n,m$ be the number of nodes and edges in $G'$.
For our primal solution $x \in \R^{E'}$ be define the following flow
\begin{align*}
x_{u,v} &= \frac{1}{n} \text{ for $u \in U$, $v \in V$,$(u,v)\in E$,} \\
x_{u,z} &= b_u - \frac{\deg_G(u)}{n} \text{ for $u \in U$,} \\
x_{z,v} &= b_v - \frac{\deg_G(v)}{n} \text{ for $v \in V$.} 
\end{align*}
Here we have $x > 0$ because $b_v \ge 1$ 
and $\deg_G(v) < n$ for all $v \in V \cup U$.
The flow also satisfies the demand $d$, 
because for every $u \in U$ there is a total amount of 
$\sum_{(u,v) \in E} 1/n + b_u - \deg_G(u)/n = b_u = d_u$ leaving $u$, 
and likewise a total flow of $b_v$ reaching $v \in V$.
Further, the flow reaching and leaving $z$ is $0$, because 
\begin{align*}
\sum_{u\in U} (b_u - \deg_G(u)/n) 
&= 
(\sum_{u \in U} b_u) + |E|/n \\
&= 
(\sum_{v \in V} b_v) + |E|/n \\
& = \sum_{v\in V} (b_v - \deg_G(v)/n),
\end{align*}
where the first term i the flow reaching $z$ and the last term is the flow leaving $z$.
Thus in summary, the flow $x$ is a feasible flow.

\paragraph{Dual slack and cost vector}

Now, we define the slack $s$ and the cost vector $c'$.
The slack of the dual are of the form $s_{u,v} = c'_{u,v} - y _u + y_v$ for $(u,v) \in E'$.
We simply set $y = \zerovec_n$ and $s_{u,v} = c'_{u,v} = 1/x_{u,v}$ for every $(u,v) \in E'$,
so then $xs = 1$.
As $1/n \le x_{u,v} \le b_u - \deg_G(u)/n \le \|b\|_\infty$,
we get $ \|b\|_\infty^{-1} \le c' \le n$.

\end{proof}

\begin{lemma}[\cite{CohenP15}, Theorem 12 of \cite{blss20}]\label{lem:lewis_weight}
Given $p\in(0,4)$, $\eta>0$ and incidence matrix $\mA\in\R^{m\times n}$ w.h.p.~in $n$, 
we can compute $w\in\R^m_{>0}$ with $w\approx_{\epsilon}\sigma(\mW^{\frac{1}{2}-\frac{1}{p}}\mA)+\eta\onevec$ in $\tilde{O}(m\poly(1/\epsilon))$ time. 
\end{lemma}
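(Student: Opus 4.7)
The plan is to follow the iterative fixed-point scheme of Cohen--Peng \cite{CohenP15} adapted to incidence matrices so that each iteration runs in nearly-linear time. Recall that the (unregularized) $\ell_p$ Lewis weights are the unique positive vector $w^*$ satisfying $w^*_i = \sigma(\mW^{*\,1/2-1/p}\mA)_i$ for all $i\in[m]$. I would add the regularizer $\eta\vones$ by studying the analogous fixed-point equation $w_i = \sigma(\mW^{1/2-1/p}\mA)_i + \eta$, and iterate $w^{(k+1)} \leftarrow \widehat\sigma(\mW^{(k)\,1/2-1/p}\mA) + \eta\vones$, where $\widehat\sigma$ denotes an approximate leverage-score computation (below). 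The convergence will be proved by showing this iteration is a contraction in the multiplicative (entrywise log) metric whenever $p<4$, since the $i$th coordinate of the map depends on $\log w$ only through the scaling exponent $|1/2 - 1/p| < 1/2$, and leverage scores have derivative bounded by $2$ in the same metric (this is exactly the Cohen--Peng contraction estimate); the additive $\eta$ term only helps because it strictly dampens the log-sensitivity of the update. Iterating $O(\log(1/\epsilon))$ times starting from $w^{(0)}=\vones$ thus yields a vector within a factor $(1\pm\epsilon)$ entrywise of the true regularized fixed point.

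The next step is to implement each iteration in $\tilde O(m\cdot\poly(1/\epsilon))$ time. Here I exploit that $\mA$ is an incidence matrix, so $\mW^{1/2-1/p}\mA$ has the form of a weighted incidence matrix, and $(\mA^\top \mW^{1-2/p}\mA)^{-1}$ can be applied approximately via the Laplacian solver of \Cref{lem:laplacian_solver} in $\tilde O(m)$ time per solve. Combining this with the standard Johnson--Lindenstrauss sketch of Spielman--Srivastava, leverage scores $\sigma(\mW^{1/2-1/p}\mA)_i$ can be computed to multiplicative accuracy $(1\pm \epsilon')$ entrywise, with high probability, in time $\tilde O(m\cdot \epsilon'^{-2})$: take $O(\epsilon'^{-2}\log n)$ random $\pm 1$ projections $q^{(j)}$, apply the Laplacian solver to compute $\mW^{1/2-1/p}\mA(\mA^\top \mW^{1-2/p}\mA)^{-1}\mA^\top \mW^{1/2-1/p} q^{(j)}$, and take entrywise squared norms. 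This gives a per-iteration bound of $\tilde O(m\cdot\poly(1/\epsilon))$.

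To close the argument I would chain the contraction bound with the per-step multiplicative error $(1\pm\epsilon')$ of the approximate leverage-score oracle. A standard inexact-fixed-point calculation (identical to the one in \cite{CohenP15} and Theorem~12 of \cite{blss20}) shows that if the contraction factor is $c<1$ and each step incurs multiplicative error $\epsilon'$, then after $K=O(\log(1/\epsilon)/\log(1/c))$ iterations the output satisfies $w^{(K)}\approx_{\epsilon}\sigma(\mW^{*\,1/2-1/p}\mA)+\eta\vones$, provided $\epsilon' \le (1-c)\epsilon/2$. Choosing $\epsilon'=\Theta(\epsilon)$ (since $c$ is an absolute constant strictly below $1$ for any fixed $p<4$) yields the claimed total runtime of $\tilde O(m\cdot\poly(1/\epsilon))$.

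The main obstacle is the contraction argument when $p$ approaches $4$: the natural contraction constant degenerates as $|1/2-1/p|\to 1/2$, which would force $K\to\infty$. I would handle this exactly as in \cite{CohenP15}, by exhibiting a potential function (a weighted $\ell_\infty$ norm of $\log(w/w^*)$) whose decrease per iteration is controlled uniformly over $p\in(0,4-\delta)$ for any fixed slack, and verifying that the additive regularization $\eta\vones$ preserves monotonicity of the iteration. The rest—efficient implementation via Laplacian solvers and JL sketching, and the error-accumulation analysis—then follows from standard tools and mirrors the proof given in Theorem~12 of \cite{blss20}.
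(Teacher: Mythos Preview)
The paper does not prove this lemma; it is imported from \cite{CohenP15} and Theorem~12 of \cite{blss20} with only the one-line remark that replacing the general linear-system solve by a Laplacian solver (\Cref{lem:laplacian_solver}) turns the $n^\omega$ term into $\tilde O(m)$ for incidence matrices. Your sketch correctly reconstructs the underlying Cohen--Peng fixed-point scheme (plus the $\eta$-regularization from \cite{blss20}) together with JL-based leverage-score estimation and exactly this Laplacian-solver substitution, so it matches the approach of the sources the paper invokes as a black box.

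One minor inaccuracy worth flagging: with the iteration you wrote, $w\leftarrow\sigma(\mW^{1/2-1/p}\mA)+\eta\vones$, the contraction factor is $2|1/2-1/p|=|1-2/p|$, which degenerates as $p\downarrow 1$, not as $p\to 4$ (at $p=4$ it equals $1/2$). Cohen--Peng instead iterate on $w^{2/p}$, yielding contraction factor $|p/2-1|$ and hence the full range $p\in(0,4)$. This does not affect the paper's use of the lemma, where $p$ is close to $2$.
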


We remark that the complexity stated in \cite{blss20} is $\tilde{O}((\nnz(\mA)+n^\omega)\poly(1/\epsilon))$,
because it is stated for general matrices. However, when using a Laplacian Solver (\Cref{lem:laplacian_solver}) 
the $n^\omega$ term becomes $\tilde{O}(m)$.

\lemInitialPoint*

\begin{proof}
We first construct the primal solution $x$, 
then we construct the dual slack $s$,
and at last we construct a cost vector $c$ for which there is a dual solution $y$ with slack $s$.

\paragraph{Primal solution (feasible flow)}

The construction of the primal solution $x$ is the same as the one constructed in the proof of~\Cref{lem:initial_point_simple}, and its feasibility follows the same argument.
\paragraph{Dual slack}

Let $\mA$ be the incidence matrix of graph $G'$ and define $\mA' := \mX \mA$.
We construct a vector $s'$ with the property 
$$
s' \approx_\epsilon \sigma(\mS'^{-1/2-\alpha} \mA') + \frac{n}{m},
$$
This construction is done via \Cref{lem:lewis_weight}.

Next, define $s := s'/x$, so the vectors $x$ and $s$ satisfy
\begin{align*}
xs
&=
s' 
\approx_\epsilon
\sigma(\mS'^{-1/2-\alpha} \mA') + \frac{n}{m} 
=
\sigma(\mS'^{-1/2-\alpha} \mX \mA) + \frac{n}{m} \\
&=
\sigma((\mX\mS)^{-1/2-\alpha} \mX \mA) + \frac{n}{m}
=
\sigma(\mX^{1/2-\alpha}\mS^{-1/2-\alpha} \mA) + \frac{n}{m} \\
&=
\tau(x,s)
\end{align*}
Thus the vector $s$ satisfies $xs \approx_\epsilon \tau(x,s)$ 
and we are only left with making sure that $s$ is indeed a slack vector of the dual.

\paragraph{Constructing cost vector}

We define the cost vector $c' := s$ and consider the dual solution $y = \zerovec_{n}$.
The dual problem of the uncapacitated min-cost-flow is
\begin{align*}
c'_{u,v} &\ge y_u - y_v \text{ for $(u,v) \in E'$}.
\end{align*}
So by setting $y = \zerovec_{n}$ the dual slack is exactly $c' = s$,
i.e. the vector $s$ is indeed a slack of the dual problem.
As $c' := s$, we can bound
\begin{align*}
c'_i &\le e^\epsilon \tau(x,s) /x_i \le 3n \text{ and }\\
c'_i &\ge e^{-\epsilon} \tau(x,s)/x_i \ge \frac{n}{3m\|d\|_\infty},
\end{align*}
where $x \le \|d\|_\infty$ comes from the fact 
that any edge incident to $u \in U$ or $v \in V$
can carry at most $|d_u|$ or $|d_v|$ units of flow,
because all edges are directed away from $u$ and towards $v$.
\end{proof}

\subsection{Switching the Cost Vector  (Proof of \Cref{lem:switch_cost})}
\label{sec:initial_point:switch}

The construction of our initial point replaces the given cost vector by some other cost vector.
We now show that it is possible to revert this replacement, 
provided that our current point $(x,s)$ 
is far enough away from the optimal solution. Below we restate \Cref{lem:switch_cost}. 

\lemSwitchCost*

\begin{proof}

Let $y$ be the dual solution with $\mA y + s = c$,
and let $s' := s+c'-c$ be the slack vector when replacing cost vector $c$ by $c'$.
Then $\mA y + s' = \mA y + s + c' - c = c'$, so if $s' \ge 0$, then it is indeed a valid slack vector.

In order to show that $s' \ge 0$ and $xs' \approx_{\epsilon/2} \mu \tau(x,s')$,
we will bound the relative difference between $s$ and $s'$, 
i.e. we want an upper bound on
$$
\frac{s' - s}{s} = \frac{c' - c}{s}.
$$
This relative difference is small, when $s$ is large. 
So we want to lower bound $s$. 
By $xs \approx_{1/2} \mu \tau(x,s)$ we have
\begin{align}
s_i \ge \frac{\mu \tau(x,s)_i}{2x_i} \ge \frac{\mu n}{2mx_i} \text{ for all }i,
\label{eq:upper_bound_dual_slack}
\end{align}
where we used that $\tau(x,s) \ge m/n$.
In order to bound this term, we must first find an upper bound on $x$.

By \Cref{lem:feasible_projection} and the assumption 
$\|\mA x - d \|_{(\mA \mX \mS^{-1} \mA)^{-1}} \le \sqrt{\mu} / 4$
we have that there is a feasible $x'$ with $\|x^{-1}(x'-x)\|_\infty \le 1/2$.
For this feasible $x'$ we know that $x' \le \|d\|_\infty$
and thus $x \le 1.5 \|d \|_\infty$.

By \eqref{eq:upper_bound_dual_slack} we have
$$
s_i \ge \frac{\mu n}{2mx_i} \ge \frac{\mu n}{1.5 \|d\|_\infty}
$$
which in turn implies
$$
\|\frac{s' - s}{s}\|_\infty 
\le 
(\|c'\|_\infty + \|c\|_\infty) \frac{1.5 \|d\|_\infty}{\mu n}
\le
(\|c'\|_\infty + 3n) \frac{1.5 \|d\|_\infty}{\mu n}
\le
\frac{4.5 \|c'\|_\infty \|d\|_\infty}{\mu}
$$
Thus for $\mu \ge 45 \|c'\|_\infty \|d\|_\infty \epsilon$ we can bound
$$
\left\|\frac{s' - s}{s}\right\|_\infty
\le
\frac{\epsilon}{10}.
$$
This then implies $xs' \approx_{\epsilon/2} \mu \tau(x,s')$.
\end{proof}

\subsection{Rounding an Approximate to a Feasible Solution (Proof of \Cref{lem:make_flow_feasible})}
\label{sec:initial_point:feasible}

Our IPM is not able to maintain a feasible primal solution $x$,
instead we can only guarantee that $\frac{1}{\mu} \| \mA^\top x - b \|_{(\mA^\top \mX \mS^{-1} \mA)^{-1}}^2$ is small.
We now show that this is enough to construct a feasible solution
that does not differ too much from the provided approximately feasible solution.
We further show how small the progress parameter $\mu$ of the IPM must be, 
in order to obtain a small additive $\epsilon$ error compared to the optimal solution. Below we restate \Cref{lem:make_flow_feasible}. 

\lemMakeFlowFeasible*

In order to prove \Cref{lem:make_flow_feasible},
we will first prove a helpful lemma that bounds $\| \mA^\top x - b \|_2$
with respect to $\frac{1}{\mu} \| \mA^\top x - b \|_{(\mA^\top \mX \mS^{-1} \mA)^{-1}}^2$.

\begin{lemma}\label{lem:demand_error_bound}
Let $G'$ be a uncapacitated flow instance for demand $d$ and cost $c$
where for every feasible flow $f$ we have $f \le \|d\|_\infty$.
Let $\mA$ be the incidence matrix of $G'$.
Given a primal dual pair $(x,s)$ with infeasible $x$ and $xs \approx_1 \tau(x,s)$,
define 
$$
\delta := \frac{1}{\mu} \cdot \| \mA^\top x - d \|_{(\mA^\top \mX \mS^{-1} \mA)^{-1}}^2.
$$
Then
$$
\| \mA^\top x - d \|_2 \le 6 m \|d\|_\infty^2 \delta
$$
\end{lemma}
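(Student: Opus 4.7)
The plan is to bound $\|r\|_2 = \|\mA^\top x - d\|_2$ via the standard spectral estimate
\[
\|r\|_2^2 \;\le\; \lambda_{\max}(\mL)\cdot r^\top \mL^{-1} r \;=\; \lambda_{\max}(\mL)\cdot \mu \delta,
\]
where $\mL := \mA^\top \mX \mS^{-1} \mA$. This reduces the task to upper bounding $\lambda_{\max}(\mL)$. Since $\mA$ is an incidence matrix, $\mL$ is a weighted graph Laplacian with edge weights $w_e := x_e/s_e$, for which Gershgorin (applied to the Laplacian) gives $\lambda_{\max}(\mL) \le 2 \max_v \sum_{e \ni v} w_e \le 2 \sum_e w_e$.

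To control these weights, I would use the centering hypothesis $xs \approx_1 \mu \tau(x,s)$, which yields $s_e \ge \mu \tau_e/(e\, x_e)$ and hence $w_e \le e\, x_e^2/(\mu \tau_e)$. For both IPM weight functions considered ($\tau = \mathbf{1}$ for the log barrier, $\tau \ge (n/m)\mathbf{1}$ for the LS barrier) this becomes $w_e \le O(m\, x_e^2/(n\mu))$. To remove the remaining dependence on $x_e$, I would invoke \Cref{lem:feasible_projection} with an exact solve $\mH = \mL$: it produces a point $x'$ with $\mA^\top x' = d$ and $\|\mX^{-1}(x-x')\|_\infty \le \sqrt{5\delta}$, which under the hypothesis $\delta \le 1/10$ is at most $\sqrt{1/2}$. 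In particular $x'_e > 0$, so the structural assumption that every feasible flow is bounded entrywise by $\|d\|_\infty$ forces $x'_e \le \|d\|_\infty$, and the multiplicative proximity then gives $x_e \le O(\|d\|_\infty)$. Combining these estimates yields $\sum_e w_e \le O(m\|d\|_\infty^2/\mu)$ and hence
\[
\|r\|_2^2 \;\le\; \lambda_{\max}(\mL)\cdot\mu\delta \;\le\; O(m\|d\|_\infty^2 \delta),
\]
matching the stated bound up to constants.

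The main obstacle I anticipate is reconciling the natural form of the bound with the one in the statement: the chain above produces an estimate on the \emph{squared} $\ell_2$-norm, $\|r\|_2^2 \le O(m\|d\|_\infty^2\delta)$, whereas the lemma records $\|r\|_2 \le 6 m\|d\|_\infty^2\delta$. I would read the latter as a bound on $\|r\|_2^2$ (most likely a notational issue) or, if the linear form is taken literally, as being obtained by combining the spectral bound above with the crude estimate $\|r\|_2 \le \|r\|_1 \le 2\|x\|_1 + \|d\|_1 = O(m\|d\|_\infty)$ together with the hypothesis $\delta \le 1/10$. In either reading the substantive content of the proof is the spectral/Laplacian chain outlined above, and the remaining work is just careful tracking of constants so that the coefficient of $m\|d\|_\infty^2\delta$ lands at $6$ rather than the constant implicit in the $O(\cdot)$ above---which is where the argument should exploit the tightest possible bound on $x_e$ coming from \Cref{lem:feasible_projection}.
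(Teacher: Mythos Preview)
Your strategy is essentially the paper's: the same spectral inequality $\|r\|_2^2 \le \lambda_{\max}(\mL)\cdot\mu\delta$ with $\mL = \mA^\top\mX\mS^{-1}\mA$, the same use of $xs\approx_1\mu\tau$ to bound each $w_e = x_e/s_e$, and the same appeal to \Cref{lem:feasible_projection} to get $x_e = O(\|d\|_\infty)$. You are also right that the displayed conclusion is really a bound on $\|r\|_2^2$; the paper's own derivation ends at exactly $\|\mA^\top x - d\|_2^2 \le 6m\|d\|_\infty^2\delta$.

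The one place where you diverge, and where your chain actually loses a factor, is in bounding $\lambda_{\max}(\mL)$. You use Gershgorin and then pass to $2\sum_e w_e$; with your own per-edge estimate $w_e \le O\!\big(m\,x_e^2/(n\mu)\big) \le O\!\big(m\|d\|_\infty^2/(n\mu)\big)$ this sums to $O\!\big(m^2\|d\|_\infty^2/(n\mu)\big)$, not $O\!\big(m\|d\|_\infty^2/\mu\big)$ as you wrote. The paper instead factors out the maximum weight,
\[
\mL \;=\; \mA^\top\mX\mS^{-1}\mA \;\preceq\; \Big(\max_i \tfrac{x_i}{s_i}\Big)\,\mA^\top\mA,
\]
and uses the unweighted Laplacian bound $\lambda_{\max}(\mA^\top\mA)\le 2n$; combined with $\max_i x_i/s_i \le 3m\|d\|_\infty^2/(n\mu)$ this gives $\lambda_{\max}(\mL)\le 6m\|d\|_\infty^2/\mu$ directly, which is precisely the missing $m/n$ factor and lands on the constant $6$. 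Replacing your Gershgorin-then-sum step with this factorization fixes the gap; the rest of your argument is fine.
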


\begin{proof}
By \Cref{lem:feasible_projection}
we know that there is a feasible $x'$ with $\|x^{-1}(x'-x)\|_\infty \le 3\delta$.
\begin{align*}
\| \mA^\top x - d \|_2^2
&\le
\frac{1}{\lambda_{\min}((\mA^\top \mX \mS^{-1} \mA)^{-1})}
\| \mA^\top x - d \|_{(\mA^\top \mX \mS^{-1} \mA)^{-1}}^2 \\
&=
\lambda_{\max}(\mA^\top \mX \mS^{-1} \mA)
\mu \delta \\
&\le
\max_{i} \{ x_i/s_i \} \lambda_{\max}(\mA^\top \mA) \mu \delta \\
&\le
2n \max_{i} \{ x_i/s_i \} \mu \delta \\
\end{align*}
Because of $xs \approx_1 \mu \tau(x,s)$ and $x \le \|d\|_\infty$ we have
\begin{align*}
s_i \ge \frac{\mu \tau(x,s)}{3 x_i} \ge \frac{n\mu}{3 m \|d\|_\infty}
\end{align*}
which implies
\begin{align*}
\frac{x_i}{s_i}
\le
\frac{3m\|d\|_\infty^2}{n\mu}
\end{align*}
and thus
\begin{align*}
\| \mA^\top x - d \|_2^2
\le
6 m \|d\|_\infty^2 \delta.
\end{align*}
\end{proof}

We can now prove \Cref{lem:make_flow_feasible}.

\begin{proof}[Proof of \Cref{lem:make_flow_feasible}]
The algorithm works as follows.
First we compute
$$
x' = x + \mX \mS^{-1} \mA \mH^{-1} (d - \mA^\top x)
$$
for some $\mH^{-1} \approx_{\delta} (\mA^\top \mX \mS^{-1} \mA)^{-1}$.
This can be done in $\tilde{O}(m\log \delta^{-1})$ time via a Laplacian solver.

Then for every $u \in U$ with $(\mA^\top x')_u > d_u$ 
we reduce the flow on some edges incident to $u$ 
such that the demand of $u$ is satisfied.
We also do the same for every $v \in V$ with $(\mA^\top x')_v < d_v$.
Let $x''$ be the resulting flow.
After that we route $(\mA^\top x'')_u - d_u$ flow from $u \in U$ to $z$ 
and likewise $d_v - (\mA^\top x'')_v$ flow from $z$ to $v \in V$.
Let $f$ be the resulting flow, then $f$ is a feasible flow as all demands are satisfied.
Constructing $f$ from $x'$ this way takes $\tilde{O}(m)$ time.

\paragraph{Optimality}

We are left with proving claim $c^\top f \le \OPT + \epsilon$.
By \Cref{lem:feasible_projection} the vector $x'$ satisfies
\begin{align}
\|\mA^\top x' - d\|_{(\mA \mX \mS^{-1} \mA)^{-1}}^2
&\le 
5\delta \cdot \|\mA^\top x - d\|_{(\mA \mX \mS^{-1} \mA)^{-1}}^2
\le
\frac{\delta\mu}{2}
\text{, and} \\
\|\mX^{-1}(x - x')\|_\infty^2
&\le 
\frac{5}{\mu} \cdot \| \mA x - b \|_{(\mA^\top \mX \mS^{-1} \mA)^{-1}}^2
\le
\frac{1}{2} \label{eq:x_is_close}
\end{align}
So by \Cref{lem:demand_error_bound} we have 
$$
\|\mA^\top x' - d\|_2^2
\le 
\frac{6 m \|d\|_\infty^2 \|\mA^\top x' - d\|_{(\mA \mX \mS^{-1} \mA)^{-1}}^2 }{\mu}
\le
3 m \|d\|_\infty^2 \delta.
$$
As this bounds the maximum error we have for the demand of each node,
this also bounds how much we might change the flow on some edge,
i.e. we obtain the bound
$$
\|x' - f\|_2
\le 
\|d\|_\infty \sqrt{3 m \delta}.
$$
Thus we have $$
c^\top f 
\le
c^\top x'
+
\|c\|_2 \|x-f\|_2
\le
c^\top x'
+
m \|c\|_\infty \|d\|_\infty \sqrt{3 \delta}.
$$
We are left with bounding $c^\top x'$. For this note that the duality gap of $(x',s)$ can be bounded by
\begin{align*}
\sum_{i \in [m]} x'_i s_i
\le
2
\sum_{i \in [m]} x_i s_i 
\le
6
\mu \sum_{i \in [m]} \tau(x,s)_i 
\le
12 \mu n,
\end{align*}
where we use \eqref{eq:x_is_close}, $xs \approx_1 \mu \tau(x,s)$, and $\sum_{i\in[m]}\tau(x,s)_i = 2n$.
As the duality gap is an upper bound on the difference to the optimal value, 
we have
$$
c^\top x'
\le
\OPT + 12\mu n
$$
so for $\mu \le \frac{\epsilon}{24 n}$
and $\delta \le \frac{\epsilon^2}{12(\|c\|_\infty\|d\|_\infty m)^2}$
we have
$$
c^\top f 
\le
\OPT + 12 \mu n + m \|c\|_\infty \|d\|_\infty\sqrt{3\delta}
\le
\OPT +
\epsilon.
$$
\end{proof}

\subsection{Rounding to an Integral Solution (Proof of \Cref{lem:isolation_lemma})}
\label{sec:initial_point:integral}
\label{sec:isolation}

In \Cref{sec:initial_point:feasible} we showed how to obtain a feasible nearly optimal fractional solution.
We now show that we can also obtain a feasible truly optimal integral solution 
by rounding each entry of the solution vector $x$ to the nearest integer.
This only works, if the optimal solution is unique.
In order to make sure that the optimal solution is unique, we use a variant of the Isolation-lemma. 
In high-level, this lemma shows that by adding a small random cost to each edge, 
the minimum-cost flow becomes unique. Below we restate \Cref{lem:isolation_lemma}.

\lemRoundIntegral*

A variant of \Cref{lem:isolation_lemma} was proven in \cite{ds08} via the following result from \cite{klivans2001randomness}.

\begin{lemma}[{\cite[Lemma 4]{klivans2001randomness}}]
\label{lem:cited_isolation_lemma}
Let $C$ be any collection of distinct linear forms in variables $z_1,...,z_\ell$ 
with coefficients in the range $\{0,...,K\}$.
If $z_1,...,z_\ell$ are independently chosen uniformly at random in $\{0,...,K\ell/\epsilon\}$, 
then, with probability greater than $1-\epsilon$, 
there is a unique form of minimal value at $z_1,...,z_\ell$.
\end{lemma}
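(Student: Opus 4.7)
The plan is to prove this by a conditioning argument on each coordinate, inspired by the standard Mulmuley–Vazirani–Vazirani isolation technique, but exploiting the bounded-coefficient structure of the linear forms. A naive union bound over pairs of forms in $C$ gives only $\binom{|C|}{2}/(N+1)$ probability of a tie between two specific forms (where $N = K\ell/\epsilon$), which is useless since $|C|$ may be enormous. The crucial observation is that for each coordinate, the number of ``tie events'' that can occur is controlled by the number of distinct coefficients on that coordinate, which is at most $K+1$, \emph{independent of $|C|$}.

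Concretely, I would fix a coordinate $i \in [\ell]$ and condition on the values $z_{-i} = (z_j)_{j \neq i}$. For each slope $v \in \{0,1,\dots,K\}$, let $C_{i,v} = \{L \in C : a_i^L = v\}$ and write $L^*_v = \arg\min_{L \in C_{i,v}} L(z)$ (breaking ties arbitrarily); these minimizers depend only on $z_{-i}$. Then the map $z_i \mapsto \min_{L \in C} L(z)$ is the pointwise minimum of the at-most-$(K{+}1)$ lines $v z_i + L^*_v(z_{-i})$, i.e., the lower envelope of lines with pairwise distinct integer slopes in $\{0,1,\dots,K\}$. Such an envelope is piecewise linear and concave with at most $K$ breakpoints (corners).

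The next step is to argue that any tie at the minimum must coincide with one of these corners for some coordinate $i$. Suppose distinct $L_1, L_2 \in C$ both attain $\min_{L \in C} L(z)$. Since $L_1 \neq L_2$, they differ in some coordinate $i_0$, so $a_{i_0}^{L_1} \neq a_{i_0}^{L_2}$. Viewed as functions of $z_{i_0}$ with $z_{-i_0}$ fixed, $L_1$ and $L_2$ have different slopes; both lie on the lower envelope at the chosen $z_{i_0}$, which forces $z_{i_0}$ to be a corner of that envelope. Hence the ``tie'' event is contained in $\bigcup_{i=1}^\ell \{z_i \text{ is a corner w.r.t.\ } z_{-i}\}$. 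For each $i$, conditional on $z_{-i}$ there are at most $K$ bad integer values of $z_i$ among $\{0,1,\dots,N\}$, so the conditional probability is at most $K/(N+1)$; an ordinary union bound over $i$ yields $\Pr[\text{tie}] \le \ell K/(N+1) < \ell K/N = \epsilon$, giving a unique minimum with probability $>1-\epsilon$.

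The main obstacle is recognizing the slope-counting structure and avoiding the naive pairwise union bound. Once one views the problem as the lower envelope of lines with slopes restricted to $\{0,\dots,K\}$, the combinatorial fact that there are $\le K$ corners makes everything routine. A minor technical point to verify carefully is that the corner characterization captures \emph{all} possible ties (including those involving three or more simultaneously minimizing forms, and those where two forms have identical coefficients on some coordinates but differ on others) --- this is handled by choosing the coordinate $i_0$ on which the two tied forms actually disagree.
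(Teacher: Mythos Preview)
The paper does not give its own proof of this lemma; it is quoted verbatim from \cite{klivans2001randomness} and used as a black box to derive \Cref{lem:isolation_lemma}. So there is nothing in the paper to compare against.

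Your argument is correct. The key step---that a tie between two minimizing forms $L_1,L_2$ forces, on any coordinate $i_0$ where their coefficients differ, $z_{i_0}$ to land on one of the at most $K$ breakpoints of the lower envelope of the lines $z_{i_0}\mapsto v\,z_{i_0}+\min_{L\in C_{i_0,v}}\sum_{j\neq i_0}a_j^L z_j$---is sound, and the union bound $\sum_i K/(N+1)<\ell K/N=\epsilon$ goes through. The subtlety you flag about ties involving more than two forms, or forms agreeing on some coordinates, is indeed handled by your choice of $i_0$ as a coordinate of actual disagreement. This is in fact the same mechanism as in the original Klivans--Spielman proof (which phrases it slightly differently, tracking for each $i$ the at most $K$ values of $z_i$ at which the identity of the minimizing slope class changes), so you have essentially reconstructed their argument.
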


Here we prove \Cref{lem:isolation_lemma} via a similar proof to the one in \cite{ds08}.

\begin{proof}[Proof of \Cref{lem:isolation_lemma}]
We start by proving that $\tilde{\Pi}$ has a unique optimal solution.
We then show that rounding a nearly optimal fractional matching results in the optimal integral matching.

\paragraph{Unique Optimal Solution}

Let $C \subset \R^E$ be the set of optimal integral flows.
Then we can interpret each $f \in C$ as a linear form 
$z \mapsto \langle f, z\rangle$ in the variables $z_1,...,z_m$.
Further we have $0 \le f_e \le W$ for all $f \in C$ and $e \in E$. 

Now consider $z \in \R^m$ where each $z_i$ 
is an independent uniformly sampled element from $\{0,...,K\ell/\epsilon\}$.
By \Cref{lem:isolation_lemma} the minimizer $\argmin_{f \in C} \langle f, z \rangle$ is unique with probability at least $1/2$.
As $\langle f, c \rangle = \langle f', c \rangle$ for all $f,f' \in C$,
we have that for $c' := c + z/(4m^2W^2)$ there also exists a unique $f \in C$ that minimizes $\langle f, c' \rangle$.
Further note that $\langle f, c' \rangle < \langle f, c \rangle + 1$
so this $f$ is also the unique optimal solution to $\tPi$.

\paragraph{Rounding to the optimal integral matching}

Since we added random multiples of $\frac{1}{4m^2W^2}$ to each edge cost,
the second best \emph{integral} matching $f'$ has cost at least $\opt(\tPi) + \frac{1}{4m^2W^2}$.

Now assume we have some feasible fractional matching $\tf$.
As every feasible fractional matching is a convex combination of feasible integral matchings,
we can write $tf = \lambda f + (1-\lambda) g$,
where $\lambda \in [0,1]$, $f$ is the optimal (integral) matching, 
and $g$ is a feasible fractional matching with cost at least $\opt(\tPi) + \frac{1}{4m^2W^2}$
(i.e. $g$ is a convex combination of non-optimal feasible integral matchings).

If $\tf$ has cost at most $\opt(\tPi)+\frac{1}{12 m^2 W^3}$, then $\lambda \ge 1-\frac{1}{3W}$
and thus $\|\tf-f\|_\infty \le 1/3$.
So by rounding the entries of $\tf$ to the nearest integer we obtain the optimal flow $f$.
\end{proof}

\section{Leverage Score Maintenance}
\label{sec:leverage_score}

In this section we explain how to obtain the following data structure 
for maintaining approximate leverage scores
via results from \cite{blss20}.

\begin{theorem}\label{thm:leverage_score_maintenance}
There exists a Monte-Carlo data-structure%
, which works against an adaptive adversary, that supports the following procedures: 
\begin{itemize}
\item \textsc{Initialize}$(\mA \in \R^{m \times n}, g \in \R^m, \epsilon \in (0,1) )$: 
	Given incidence matrix $\mA \in \R^{m \times n}$, scaling $g \in \R^m$ and accuracy $\epsilon>0$, 
	the data-structure initializes in $\tilde{O} ( m \epsilon^{-2})$ time.
\item \textsc{Scale}$(I \subset [m], u \in \R^I)$: 
	Sets $g_{I} = u$ in $\tilde{O}( |I| \epsilon^{-2} )$ time.
\item \textsc{Query}$()$: 
	Let $g\t \in \R^m$ be the vector $g \in \R^m$ during $t$-th call to \textsc{Query}
	and assume $g\t \approx_{1/32} g^{(t-1)}$. 
	W.h.p. in $n$ the data-structure outputs a vector $\ttau \in \R^m$
	such that $\ttau \approx_{\epsilon} \sigma(\sqrt{\mG\t} \mA) + \frac{n}{m} \vones$. 
	Further, with probability at least $\frac{1}{2}$ the total cost over $K$ steps is 
	\begin{align*}
	\tilde{O}\Big( K^2 \cdot \epsilon^{-4} \cdot m/n + K ( n \epsilon^{-2} + n \epsilon^{-1} \log W )\Big),
	\end{align*}
	where $W$ is a bound on the ratio of largest to smallest entry of any $g\t$.
\end{itemize}

Let $g^{(k)}$
be the state of $g$
during the $k$-th call to \textsc{Query}
and define $\tau(g) := \sigma(\sqrt{\mG} \mA) + \frac{n}{m} \vones$.
Then the complexity bound on \textsc{Query} holds, 
if there exists a sequence $\tg^{(1)},...,\tg^{(K)} \in \R^m$  where for all $k\in[K]$
\begin{gather}
g_i^{(k)} \in \left( 1 \pm \frac{1}{16\cdot 144 \epsilon \log^2 n} \right) \tg_i^{(k)} \text{ for all $i \in [m]$,  
and} \label{eq:closeness_assumption}\\
\frac{1}{\epsilon^2}\|\mG^{(k)-1}(g^{(k+1)} - g^{(k)})\|_{\tau(g^{(k)})}^2 
+ \|(\mT(g^{(k)})^{-1}(\tau(g^{(k+1)}) - \tau(g^{(k)}))\|_{\tau(g^{(k)})}^2
= O(1). \label{eq:stability_assumption}
\end{gather}

\end{theorem}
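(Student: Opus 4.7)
The plan is to reduce the problem to a sequence of vector-maintenance instances via the standard Johnson--Lindenstrauss sketch for leverage scores, and then replace the generic matrix-vector primitives used in \cite{blss20} by Laplacian-solver and \ProjHeavyHitter primitives specialized to incidence matrices. Concretely, for a JL matrix $\mathbf{Q}\in\R^{n\times r}$ with $r = O(\epsilon^{-2}\log n)$, one has with high probability
\[
\sigma(\sqrt{\mG}\mA)_i \;=\; g_i\cdot e_i^\top \mA (\mA^\top \mG \mA)^{-1}\mA^\top e_i \;\approx_{\epsilon/4}\; g_i\sum_{j=1}^{r}\bigl(e_i^\top \mA (\mA^\top \mG \mA)^{-1/2}\mathbf{Q}_{\cdot,j}\bigr)^{2}.
\]
Since $\mA$ is an incidence matrix, $\mA^\top \mG \mA$ is a Laplacian and we may substitute $(\mA^\top \mG \mA)^{-1/2}\mathbf{Q}_{\cdot,j}$ by an $(1\pm\epsilon/8)$-spectral approximation produced by running the Laplacian solver of \Cref{lem:laplacian_solver} on a dynamic spectral sparsifier of $\mA^\top\mG\mA$ (maintained by sampling from upper bounds on leverage scores as in \Cref{sec:over:data structures}). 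This lets us express an $\epsilon$-accurate $\tilde\tau$ as $g\cdot \sum_{j=1}^{r}(\mA y^{(j)})^{2} + n/m$ for $r$ implicitly maintained vectors $y^{(j)}\in\R^n$.

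Next I would implement each of the $r$ coordinates as an instance of the vector-maintenance problem of \Cref{thm:vector_maintenance}: between two consecutive \textsc{Query} calls the change in $y^{(j)}$ is the Laplacian solve applied to a change in the right-hand side induced by \textsc{Scale} operations, which has the exact shape handled by \textsc{Add}. Detecting which coordinates $i$ of $g\cdot (\mA y^{(j)})^{2}$ have left their $(1\pm\epsilon)$ window can be phased through $O(\log n)$ scales exactly as in \Cref{sec:vector_maintenance}: on each scale we use a \ProjHeavyHitter (\Cref{lem:large_entry_datastructure}) on the aggregated change of $y^{(j)}$, and use the fact that $|h_u-h_v|\ge \epsilon$ forces $|h_u|$ or $|h_v|$ to be $\Omega(\epsilon)$ to restrict to few candidate nodes. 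Once a candidate is flagged, the exact new entry of $\tilde\tau_i$ can be computed in $\tilde O(r)=\tilde O(\epsilon^{-2})$ time by evaluating $g_i\sum_j(\mA y^{(j)})_i^{2}$ on that single coordinate.

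Finally, to bound the total runtime over $K$ queries, I would amortize using the two stability hypotheses \eqref{eq:closeness_assumption}--\eqref{eq:stability_assumption}. The first guarantees that there is a nearby stable sequence $\tilde g^{(k)}$ that the internal state can be analyzed against (mirroring the use of \Cref{lem:close_to_stable_sequence}), while the second bounds $\sum_k \|\mT^{-1}(\tau^{(k+1)}-\tau^{(k)})\|_\tau^2 = O(K)$ and similarly for $g$; combined with the $\|\cdot\|_{\tau}^{2} \le (m/n)\|\cdot\|_{2}^{2}$ inequality used throughout \Cref{sec:matching:fast}, this bounds the total number of $(k,i)$ pairs requiring an update by $\tilde O(K^{2}\epsilon^{-4}m/n)$. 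Summing the $\tilde O(\epsilon^{-2})$ per-update cost with the per-query cost of running the $r$ heavy hitters and sparsifier maintenance, namely $\tilde O(n\epsilon^{-2}+n\epsilon^{-1}\log W)$, yields the claimed bound.

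The main obstacle will be the amortized analysis: one has to argue that replacing the exact sequence $g^{(k)}$ by a stable $\tilde g^{(k)}$ does not corrupt the $(1\pm\epsilon)$ guarantee of the JL-based estimator (because an $O(\epsilon/\log^{2}n)$ coordinate-wise multiplicative drift translates into at most an $O(\epsilon)$ drift in leverage scores via \Cref{lem:expand_sigma}), and that the resulting martingale-like analysis, analogous to \Cref{lem:close_to_stable_sequence}, gives the stated success probability with only the $K^{2}\epsilon^{-4}m/n$ blow-up. Everything else — the JL reduction, the Laplacian-solver replacement, and the scale-partitioned heavy-hitter driver — is a direct combination of ingredients already developed earlier in the paper.
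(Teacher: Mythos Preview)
Your JL-based picture is essentially what underlies the machinery of \cite{blss20} that the paper invokes, but there is a genuine gap in the amortization step. You write that ``between two consecutive \textsc{Query} calls the change in $y^{(j)}$ is the Laplacian solve applied to a change in the right-hand side induced by \textsc{Scale} operations.'' This is not true: when $g$ changes, the Laplacian $\mA^\top\mG\mA$ itself changes, so $y^{(j)}$ does \emph{not} evolve by an additive $L^{-1}\Delta b$. The cost of the heavy-hitter calls is therefore governed by
\[
\sum_j \bigl\|\sqrt{\mG^{(t)}}\mA\, y^{(j,t)}-\sqrt{\mG^{(t-1)}}\mA\, y^{(j,t-1)}\bigr\|_2^2
\;\approx\;
\bigl\|\sqrt{\mG^{(t)}}\mA\Psi^{(t)}\mA^\top\sqrt{\mG^{(t)}}-\sqrt{\mG^{(t-1)}}\mA\Psi^{(t-1)}\mA^\top\sqrt{\mG^{(t-1)}}\bigr\|_F^2,
\]
i.e.\ exactly the Frobenius-norm term appearing in \Cref{lem:leverage_score_maintenance}. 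The stability hypotheses \eqref{eq:closeness_assumption}--\eqref{eq:stability_assumption} do \emph{not} by themselves bound this quantity: they bound coordinate-wise drift of $g$ and of $\tau$ in the $\tau$-norm, but if $\Psi^{(t)}$ is obtained from a fresh sparsifier (or the true inverse) each step, the projection can jump by far more than these norms suggest. Your argument that ``the total number of $(k,i)$ pairs is $\tilde O(K^2\epsilon^{-4}m/n)$'' implicitly assumes you already control this Frobenius drift, which you have not done.

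The paper closes this gap by a different, and non-obvious, mechanism: it does not resparsify from scratch each step. It runs the stabilizing scheme of \cite[Algorithm~8]{blss20} (restated as \Cref{lem:inverse_maintenance}, \Cref{alg:inverse_maintenance}), which maintains a \emph{slowly varying} pair $(\tg,v)$ with $\mA^\top\mV\mA\approx_\epsilon\mA^\top\mG\mA$, and then invokes \cite[Lemma~11]{blss20} (\Cref{lem:frobenius_property_clean}) to show that, under exactly the assumptions \eqref{eq:closeness_assumption}--\eqref{eq:stability_assumption}, the expected sum of Frobenius-norm changes of the resulting projections over $K$ steps is only $O(K\log^{5/2}n)$. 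Plugging this into the black-box \Cref{lem:leverage_score_maintenance} and applying Markov's inequality is what yields the $K^2\epsilon^{-4}m/n$ term with probability $1/2$. In short, your proposal re-derives the easy direction (JL plus heavy-hitter reduces leverage-score maintenance to bounding projection drift) but is missing precisely the Frobenius-norm control via a carefully \emph{stabilized} sparsifier, which is the actual content of the proof. A minor additional issue: your sketch uses $(\mA^\top\mG\mA)^{-1/2}\mathbf{Q}$, which a Laplacian solver does not give you; the correct JL formulation applies $\mathbf{Q}$ on the tall side and uses $(\mA^\top\mG\mA)^{-1}$.
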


We use the following \Cref{lem:leverage_score_maintenance} from \cite{blss20}
which describes a data structure for maintaining approximate leverage scores.
Note that the complexity scales with the Frobenius norm of some matrix.
We obtain \Cref{thm:leverage_score_maintenance} by combining
\Cref{lem:leverage_score_maintenance} with another data structure
that guarantees that this norm bound is small.

\begin{lemma}[{\cite[Theorem 8, Algorithm 7, when specialized to incidence matrices]{blss20}}]\label{lem:leverage_score_maintenance}
There exists a Monte-Carlo data-structure%
, which works against an adaptive adversary, 
that supports the following procedures: 
\begin{itemize}
\item \textsc{Initialize}$(\mA \in \R^{m \times n}, g \in \R^m, \epsilon \in (0 , 1) )$: 
	Given incidence matrix $\mA\in\R^{m\times n}$, scaling $g\in\R^{m}$ and accuracy $\epsilon>0$, 
	the data-structure initializes in $\tilde{O}(m \epsilon^{-2})$ time.
\item \textsc{Scale}$(i \in [m], u \in \R)$: 
	Given $i\in[m]$ and $u\in\R$ sets $g_{i}=u$ in $\tilde{O}( \epsilon^{-2} )$ time.
\item \textsc{Query}$(\Psi\t \in \R^{n \times n}, \Psi\t\safe \in \R^{n \times n})$: 
	Let $g\t \in \R^m$ be the vector $g \in \R^m$ during $t$-th call to \textsc{Query},
	assume $g\t \approx_{1/32} g^{(t-1)}$ 
	and define $\mH^{(t)} = \mA^{\top} \mG\t \mA \in \R^{n \times n}$. 
	Given random input-matrices $\Psi\t\in\R^{n\times n}$ and $\Psi\t\safe \in \R^{n \times n}$ 
	such that $$\Psi\t\approx_{ \epsilon/(24 \log n) }(\mH\t)^{-1},
	\Psi\t\safe\approx_{\epsilon/(24\log n)}(\mH\t)^{-1}$$
	and any randomness used to generate $\Psi\t\safe$ is independent of the randomness used to generate $\Psi\t$, 
	w.h.p. in $n$ the data-structure outputs a vector $\ttau\in\R^m$ independent of $\Psi^{(1)},...,\Psi\t$ 
	such that $\ttau\approx_{\epsilon}\sigma(\sqrt{\mG\t}\mA)+\frac{n}{m} \vones$. 
	Further, the total cost over $K$ steps is 
	\begin{align*}
	&\tilde{O}\Big(\Big(\sum_{t\in[K]}\|\sqrt{\mG^{(t)}}\mA\Psi^{(t)}\mA^{\top}\sqrt{\mG^{(t)}}-\sqrt{\mG^{(t-1)}}\mA\Psi^{(t-1)}\mA^{\top}\sqrt{\mG^{(t-1)}}\|_{F}\Big)^{2}\cdot\text{\ensuremath{\epsilon^{-4}}}m/n \\
	&~
	+K\left(T_{\Psi\safe}+T_{\Psi}+n\epsilon^{-2} + n\epsilon^{-1}\log W\right)\Big)
	\end{align*}
	where $T_{\Psi},T_{\Psi\safe}$ is the time required to multiply a vector with $\Psi^{(t)} \in \R^{n \times n}$ 
	and $\Psi^{(t)}\safe \in \R^{n \times n}$ respectively  
	(i.e. in case the matrices are given implicitly via data structures).
\end{itemize}
\end{lemma}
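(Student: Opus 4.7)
Since this is explicitly a restatement (specialization to incidence matrices) of \cite[Theorem 8]{blss20}, my proof plan is to reproduce their construction and verify that each step still works in the incidence-matrix setting. The core idea is Johnson--Lindenstrauss sketching of leverage scores \`a la Spielman--Srivastava combined with a careful two-copy scheme to handle the adaptive adversary. Concretely, I will maintain $k = O(\epsilon^{-2}\log n)$ random Gaussian vectors $q_1,\ldots,q_k \in \R^n$ together with sketch vectors $s^{(j)} \in \R^m$ that approximate $\sqrt{\mG}\mA (\mA^\top \mG \mA)^{-1/2} q_j$, and output $\ttau_i = \frac{n}{m} + \sum_j (s^{(j)}_i)^2$. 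By the JL identity $\sigma_i(\sqrt{\mG}\mA) = \|e_i^\top \sqrt{\mG}\mA (\mA^\top \mG \mA)^{-1/2}\|_2^2$, this gives the desired $(1\pm\epsilon)$-approximation w.h.p.

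\textbf{Realizing the sketch without exact inverses.} Since we only have $\Psi \approx_{\epsilon/(24\log n)} (\mA^\top \mG \mA)^{-1}$, we instead approximate $(\mA^\top \mG \mA)^{-1/2} q_j$ implicitly. The standard trick is to form $\mA^\top \sqrt{\mG}\, Q'$ for a random $\pm 1$ sketch $Q' \in \R^{m \times k}$ from the edge side (so that $\frac{1}{k}(\mA^\top \sqrt{\mG}Q')(\mA^\top \sqrt{\mG}Q')^\top \approx \mA^\top \mG \mA$), then apply $\Psi$, then multiply by $\sqrt{\mG}\mA$ on the left; by the $\epsilon/\log n$ spectral closeness of $\Psi$ the induced leverage-score estimator is still $(1\pm\epsilon)$-accurate. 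All matrix products here amount to $k$ applications of $\Psi$ (cost $k T_\Psi$) plus $O(\nnz(\mA)k) = O(mk)$ work for multiplications with $\sqrt{\mG}\mA$, which matches the $\tilde O(n\epsilon^{-2})$ per-step term.

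\textbf{Lazy maintenance and adaptive adversary.} The key reason the complexity scales with a Frobenius norm rather than $m$ per step is that we do not rebuild $s^{(j)}$ from scratch each \textsc{Query}: we only recompute coordinates $i$ whose estimate $\ttau_i$ might have drifted outside the $(1\pm\epsilon)$ window. To \emph{detect} stale entries without leaking dependence on the $\Psi^{(t)}$'s (which an adaptive adversary could exploit), we use the independently-randomized $\Psi\safe^{(t)}$: maintain a \emph{second}, ``safe'' sketch built with $\Psi\safe$, compare it against a stored cached estimate, and only when they disagree do we recompute the actual output coordinate using $\Psi^{(t)}$. Because $\Psi\safe$'s randomness is independent of $\Psi^{(1)},\ldots,\Psi^{(t)}$, the set of indices flagged for update is independent of those matrices; in particular, the revealed vector $\ttau$ has distribution independent of $\Psi^{(1:t)}$, which is exactly the guarantee the lemma states. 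Each coordinate update costs $O(k) = \tilde O(\epsilon^{-2})$, and the number of updates is bounded by the total Frobenius-norm change of $\sqrt{\mG}\mA \Psi \mA^\top \sqrt{\mG}$ divided by $\epsilon^2$, times an extra $m/n$ factor coming from the fact that one coordinate's drift affects the sketch through a rank-one perturbation of norm at most $\sqrt{\tau_i}\cdot(\text{change})$; this is where the $\|\cdot\|_F^2 \cdot \epsilon^{-4}\cdot m/n$ term enters.

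\textbf{Main obstacle.} The subtle part, and the one I would have to reproduce most faithfully from \cite{blss20}, is the adaptive-adversary argument: verifying that the interplay between the ``safe'' sketch (used only to decide \emph{whether} to refresh a coordinate) and the ``fast'' sketch (used to produce the refreshed value) really does keep $\ttau$ independent of the history of $\Psi^{(t)}$'s, even when the adversary sees past outputs. A secondary concern is that the Frobenius-norm charging has to be done per JL column and then union-bounded, which is where the $\text{polylog}(n)$ factors in $\tilde O$ and the $n \epsilon^{-1}\log W$ additive term (bookkeeping for entries of $g$ crossing the discretization thresholds) come from. Once these two points are in place, all remaining work is routine verification that incidence-matrix structure changes none of the estimates in \cite{blss20}.
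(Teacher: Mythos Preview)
Your proposal misses the actual content of the specialization and contains a complexity error that shows why.

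The paper does \emph{not} reprove \cite[Theorem~8]{blss20}; it cites it directly. The only argument given (in the paragraph immediately following the lemma) is this: the \cite{blss20} data structure internally reduces leverage-score maintenance to detecting large entries of $\mdiag(g)\mA h$ for query vectors $h \in \R^n$. In \cite{blss20} this subroutine is implemented by a generic heavy-hitter data structure that costs $\tilde O(\|e_i^\top \mA\|_0)$ per row update, hence a factor-of-$n$ overhead for dense $\mA$. The specialization to incidence matrices consists of (i) plugging in the paper's new expander-based \textsc{HeavyHitter} (\Cref{lem:large_entry_datastructure}) in place of that subroutine, and (ii) using that each row of $\mA$ has only two nonzeros. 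That is the entire proof.

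Your sketch never mentions \Cref{lem:large_entry_datastructure}, and your final sentence (``incidence-matrix structure changes none of the estimates'') is the opposite of the point: the incidence structure is precisely what enables the faster heavy-hitter, and without it you cannot hit the stated bounds. This shows up concretely in your ``Realizing the sketch'' paragraph, where you write that multiplying by $\sqrt{\mG}\mA$ costs $O(\nnz(\mA)k) = O(mk)$ and that this ``matches the $\tilde O(n\epsilon^{-2})$ per-step term.'' It does not: $mk = \tilde O(m\epsilon^{-2})$, not $\tilde O(n\epsilon^{-2})$, and paying $\Omega(m)$ per \textsc{Query} would ruin the whole algorithm. The way \cite{blss20} avoids this is by using the heavy-hitter to detect only the coordinates whose sketch entries changed significantly, and the way \emph{this} paper gets the additional savings is by using the graph-specific \Cref{lem:large_entry_datastructure} for that detection. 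Relatedly, the $n\epsilon^{-1}\log W$ term you attribute to ``bookkeeping for entries of $g$ crossing discretization thresholds'' is in fact the $n\log^2 n\log W$ overhead of \textsc{HeavyQuery} in \Cref{lem:large_entry_datastructure}, coming from the $O(\log W)$ weight buckets in the dynamic expander decomposition.

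Your high-level description of the JL-plus-two-copies mechanism from \cite{blss20} is broadly right, but it is not what the lemma asks you to supply here.
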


The initialization, scale and query complexity of \cite[Theorem 8, Algorithm 7]{blss20}
are larger by a factor of $n$ compared to the complexities stated in \Cref{lem:leverage_score_maintenance}.
This difference in complexity is because the data structure of \cite{blss20} was proven for general matrices $\mA$ which may have up to $n$ entries per row, 
while in \Cref{lem:leverage_score_maintenance} we consider only incidence matrices with only $2$ entries per row.
The proof of \cite[Theorem 8, Algorithm 7]{blss20} works by reducing the problem of maintaining leverage scores 
to the problem of detecting large entries of the product $\mdiag(g)\mA h$, 
for some vectors $g \in \R^m_{\ge 0}$, $h \in \R^n$.
This is exactly what our data structure of \Cref{lem:large_entry_datastructure} does.
By plugging our data structure of \Cref{lem:large_entry_datastructure} 
into the algorithm of \cite{blss20} and observing the sparsity of $\mA$, 
we obtain the faster complexities as stated in \Cref{lem:leverage_score_maintenance} 
for incidence matrices $\mA$. %

We combine \Cref{lem:leverage_score_maintenance} with the following \Cref{alg:inverse_maintenance}. 
The data structure of \Cref{alg:inverse_maintenance} guarantees 
that the Frobenius norm 
(as used in the complexity statement of \Cref{lem:leverage_score_maintenance}) 
stays small.
The properties of \Cref{alg:inverse_maintenance} are stated in \Cref{lem:inverse_maintenance}
and \Cref{lem:frobenius_property_clean}, where the latter bounds the change of the Frobenius norm.

\begin{lemma}[{\cite[Theorem 10, Algorithm 8]{blss20}}]
\label{lem:inverse_maintenance}
There exists a randomized data-structure that supports the following operations (\Cref{alg:inverse_maintenance})
\begin{itemize}
\item \textsc{Initialize}$(\og \in \R^m, \otau \in \R^m, \epsilon)$:
	Preprocess vector $\og \in \R^m, \otau \in \R^m$ and accuracy parameter $\epsilon > 0$
	in $O(m)$ time.
	Returns two vector $\tg, v \in \R^{m}$.
\item \textsc{Update}$(I \subset [m], s \in \R^{I}, t \in \R^{I})$ 
	Sets $\og_I \leftarrow s$, $\otau_I \leftarrow t$ in $\tilde{O}(|I|)$ time.
	Returns set $J \subset [m]$ and two vectors $\tg, v \in \R^{m}$.
	Here $\tg$ is returned via pointer and $J$ lists the indices $i$ where $\tg_i$ or $v_i$ changed compared to the previous output.
\end{itemize}
If the update does not depend on any of the previous outputs,
and if there is some matrix $\mA \in \R^{m \times n}$ with $\otau \approx_{0.5} \tau(\omG^{1/2} \mA)$, 
then $\tg, v$ satisfy following two properties:
(i) with high probability $\| v \|_0 = O(n \epsilon^{2} \log n)$.
(ii) $\tg \approx_\epsilon \og$,
and with high probability $\mA^\top \mV \mA \approx_{\epsilon} \mA^\top \omG \mA$.
\end{lemma}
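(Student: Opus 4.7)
}
The plan is to obtain the data structure by composing the two black boxes in the appendix: the leverage‑score maintainer of \Cref{lem:leverage_score_maintenance} and the inverse maintainer of \Cref{lem:inverse_maintenance}. The former already gives a correct $\ttau \approx_\epsilon \sigma(\sqrt{\mG^{(t)}}\mA) + \tfrac{n}{m}\vones$ \emph{once} it is supplied with spectral approximations $\Psi^{(t)}, \Psi^{(t)}\safe$ of $(\mA^\top \mG^{(t)} \mA)^{-1}$ of the prescribed accuracy $\epsilon/(24\log n)$. I will produce these approximations using \Cref{lem:inverse_maintenance} together with a Laplacian solver (\Cref{lem:laplacian_solver}). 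The wrapping data structure just stores both sub‑structures, forwards \textsc{Scale} to both, and for \textsc{Query} first queries the inverse maintainer to get a sparse $v$ with $\mA^\top \mV \mA \approx_{\epsilon/(96 \log n)} \mA^\top \mG^{(t)} \mA$ and $\|v\|_0 = \tilde O(n\epsilon^{-2})$, then runs two independent invocations of the Laplacian solver on the sparsified Laplacian $\mA^\top \mV \mA$ to obtain operators $\Psi^{(t)}$ and $\Psi^{(t)}\safe$ with independent randomness, and finally calls the leverage‑score maintainer's \textsc{Query}$(\Psi^{(t)}, \Psi^{(t)}\safe)$. Because $\mV$ is $\tilde O(n\epsilon^{-2})$‑sparse, each Laplacian solve yields $T_\Psi = T_{\Psi\safe} = \tilde O(n\epsilon^{-2})$ time per call; this matches the desired $K\cdot n\epsilon^{-2}$ contribution in the runtime bound.

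The main work is bounding the Frobenius‑norm sum that appears in the complexity bound of \Cref{lem:leverage_score_maintenance}, namely
\[
F \defeq \sum_{t\in[K]} \left\| \sqrt{\mG^{(t)}}\mA\Psi^{(t)}\mA^\top\sqrt{\mG^{(t)}} - \sqrt{\mG^{(t-1)}}\mA\Psi^{(t-1)}\mA^\top\sqrt{\mG^{(t-1)}} \right\|_F .
\]
Since $\Psi^{(t)} \approx_{\epsilon/(24\log n)} (\mA^\top \mG^{(t)}\mA)^{-1}$, each term is within a constant of $\|\mproj(\sqrt{\mG^{(t)}}\mA) - \mproj(\sqrt{\mG^{(t-1)}}\mA)\|_F$ up to an additive error that contributes $O(\epsilon)$ per step, i.e.~$O(K\epsilon)$ total and thus negligible once squared and multiplied by $\epsilon^{-4} m/n$. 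So it suffices to control the change of the projection matrix $\mproj(\sqrt{\mG}\mA)$ along the sequence.

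For this I invoke the stable sequence: the assumption gives $\tg^{(k)} \approx g^{(k)}$ within a polylog‑shrunk factor, so by \Cref{lem:P2order} the projections along $g^{(k)}$ and $\tg^{(k)}$ agree up to Frobenius error that telescopes to $O(K)$. For the smooth sequence $\tg^{(k)}$ I use the derivative formula of \Cref{lem:Dsigma}: writing $\mDelta_t = \mdiag(r_t)$ with $r_t = (\tg^{(k+1)} - \tg^{(k)})/\tg_t$ along a straight‑line interpolation, the integrand is $\mDelta_t \mproj_t + \mproj_t \mDelta_t - 2\mproj_t\mDelta_t\mproj_t$, whose squared Frobenius norm reduces, via the identities in the proof of \Cref{lem:expand_sigma} and \Cref{lem:P2bound}, to $O(\|r_t\|_{\mSigma_t}^2 + \|r_t\|_{\mproj_t^{(2)}}^2) = O(\|r_t\|_{\tau(\tg_t)}^2)$. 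Combining with \eqref{eq:stability_assumption} gives that $\|\mproj(\sqrt{\mG^{(k+1)}}\mA) - \mproj(\sqrt{\mG^{(k)}}\mA)\|_F^2 = O(\epsilon^2)$ per step, so by Cauchy–Schwarz $F = O(K\epsilon)$ and $F^2 \epsilon^{-4} m/n = \tilde O(K^2 \epsilon^{-2} m/n)$, which is absorbed (up to polylog) into the $\tilde O(K^2 \epsilon^{-4} m/n)$ in the statement.

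The main obstacle is the Frobenius norm calculation on the stable sequence: turning the pointwise leverage‑score stability \eqref{eq:stability_assumption} into a quadratic bound on the Frobenius difference of projection matrices requires care with the cross‑term $\mproj_t\mDelta_t\mproj_t$ and the factor‑of‑$\polylog$ slack one loses moving between $g^{(k)}$ and $\tg^{(k)}$. A secondary but manageable point is verifying that the randomness conditions of \Cref{lem:leverage_score_maintenance} are met: $\Psi^{(t)}\safe$ must be generated from randomness independent of $\Psi^{(t)}$, which we enforce by using two fresh independent runs of the Laplacian solver, and the inverse maintainer's updates must not depend on previous queries, which holds because both sub‑structures are driven only by the exogenous \textsc{Scale} input (this is also where the ``with probability at least $1/2$'' clause originates, via a Markov‑type bound across the $K$ queries).
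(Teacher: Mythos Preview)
First, a mismatch: the statement you were given is \Cref{lem:inverse_maintenance}, which is a cited black box from \cite{blss20} and is not proved in this paper. What you have actually written is a proof sketch for \Cref{thm:leverage_score_maintenance}. Comparing your sketch to the paper's proof of \emph{that} theorem, the overall composition (feed the inverse maintainer of \Cref{lem:inverse_maintenance} into the leverage-score maintainer of \Cref{lem:leverage_score_maintenance}) is exactly right, but two steps break.

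\textbf{Independence of $\Psi\safe$.} You build both $\Psi^{(t)}$ and $\Psi^{(t)}\safe$ by solving against the \emph{same} sparsifier $\mA^\top \mV^{(t)} \mA$, with only the Laplacian solver randomness refreshed. But $v^{(t)}$ is itself random (each coordinate is a Bernoulli sample in \Cref{alg:inverse_maintenance}), so $\Psi^{(t)}$ and $\Psi^{(t)}\safe$ share that randomness and are not independent, violating the hypothesis of \Cref{lem:leverage_score_maintenance}. The paper (\Cref{alg:leverage_score}) instead builds $\Psi^{(t)}\safe$ from a \emph{fresh} leverage-score sample $w$ drawn using the previous $\otau$; since \Cref{lem:leverage_score_maintenance} guarantees its output $\otau$ is independent of $\Psi^{(1)},\dots,\Psi^{(t)}$, this recovers true independence.

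\textbf{Frobenius-norm bound.} Your reduction ``$\Psi^{(t)}\approx (\mA^\top\mG^{(t)}\mA)^{-1}$ spectrally, hence each term is $\|\mproj(\sqrt{\mG^{(t)}}\mA)-\mproj(\sqrt{\mG^{(t-1)}}\mA)\|_F$ up to $O(\epsilon)$'' does not hold: a rank-$n$ matrix with operator norm $O(\epsilon)$ has Frobenius norm $O(\epsilon\sqrt{n})$, not $O(\epsilon)$, and the sparsifier $v$ only gives an $O(\epsilon)$ spectral (not Frobenius) approximation regardless of solver accuracy. Summed over $K$ steps this contributes $\Theta(K\epsilon\sqrt{n})$ to $F$, and $F^2\epsilon^{-4}m/n$ then picks up a factor of $m$, destroying the bound. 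The actual obstacle is that $v^{(t)}$ is \emph{resampled} at coordinates in $J$ each step, so $\sqrt{\tmG^{(t)}}\mA(\mA^\top\mV^{(t)}\mA)^{-1}\mA^\top\sqrt{\tmG^{(t)}}$ can jump even when $\tg$ moves smoothly. The paper does not attempt to reduce to projection-matrix stability; it invokes \Cref{lem:frobenius_property_clean}, which bounds the \emph{expected} Frobenius sum directly (using that only the few coordinates in $J$ are resampled) by $O(K\log^{5/2}n)$, then absorbs the Laplacian-solver error by taking solver accuracy $\delta=\epsilon/\poly(n)$, and finally applies Markov to get the ``with probability $\ge 1/2$'' clause.
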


\begin{lemma}[{\cite[Lemma 11]{blss20}}]
\label{lem:frobenius_property_clean}
Let $\og^{(k)},\otau^{(k)} \in \R^m$, $\tg^{(k)}, v^{(k)} \in \R^m$ be the vectors $\og$, $\otau$, $\tg$, $v$
right before the $k$-th call to \textsc{Update} of \Cref{lem:inverse_maintenance}.
Assume there exists a sequence $g^{(1)},...,g^{(K)} \in \R^m$ and matrix $\mA \in \R^{m \times n}$
with 
\begin{align*}
\og_i^{(k)} \in (1 \pm 1 / ( 16 \epsilon \log n) ) g_i^{(k)},
\otau_i^{(k)} \in (1 \pm 1 /( 16 \log n ) ) \tau(\mG^{(k)} \mA)_i, \forall i \in [m]
\end{align*} 
and 
\begin{align*}
\epsilon^{-2} \| (\mG^{(k)})^{-1}(g^{(k+1)} - g^{(k)})\|_{\tau(g^{(k)})}^2 + 
\|(\mT^{(k)})^{-1} (\tau^{(k+1)} - \tau^{(k)}) \|_{\tau(g^{(k)})} = O(1)
\end{align*}
for all $k \in [K-1]$, then
\begin{align*}
\E \left[
	\sum_{k\in[K-1]}
	\left\|
	\sqrt{\tmG^{(k+1)}} \mA (\mA^\top \mV^{(k+1)} \mA)^{-1} \mA^\top \sqrt{\tmG^{(k+1)}} - \sqrt{\tmG^{(k)}} \mA (\mA^\top \mV^{(k)} \mA)^{-1} \mA^\top \sqrt{\tmG^{(k)}}
	\right\|_F
	\right] 
\end{align*}
is at most $O(K\log^{5/2} n)$.
\end{lemma}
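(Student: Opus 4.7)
\medskip

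\noindent\textbf{Proof plan for \Cref{lem:frobenius_property_clean}.}
The plan is to track the Frobenius difference
$\|\mM^{(k+1)} - \mM^{(k)}\|_F$, where
$\mM^{(k)} \defeq \sqrt{\tmG^{(k)}} \mA (\mA^\top \mV^{(k)} \mA)^{-1} \mA^\top \sqrt{\tmG^{(k)}}$,
by decomposing the change in two stages: (i) the change induced by $\tg^{(k)} \to \tg^{(k+1)}$ while keeping $v$ fixed, and (ii) the change induced by $v^{(k)} \to v^{(k+1)}$ while keeping $\tg$ fixed. For stage (i), I would use \Cref{lem:Dsigma} (the derivative formula for projection-type matrices), which writes the derivative of $\mproj(\mW \mA)$ along a direction $h$ as $\mDelta \mproj + \mproj \mDelta - 2 \mproj \mDelta \mproj$ with $\mDelta = \mdiag(h/w)$. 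Integrating this along the straight line from $\tg^{(k)}$ to $\tg^{(k+1)}$ and using the spectral closeness $\mA^\top \mV \mA \approx_\epsilon \mA^\top \omG \mA$ from \Cref{lem:inverse_maintenance}, the Frobenius change can be bounded by a constant times a leverage-score weighted norm of $\tg^{-1}(\tg^{(k+1)} - \tg^{(k)})$. Since $\tg^{(k)}$ and $\og^{(k)}$ are multiplicatively close and $\og^{(k)}$ in turn approximates $g^{(k)}$, this in turn is bounded by $O(\epsilon \log n)$ times $\|(\mG^{(k)})^{-1}(g^{(k+1)}-g^{(k)})\|_{\tau(g^{(k)})}$, plus a contribution from the change in $\tau$ (coming from the indirect dependence of $\tg$ on $\otau$).

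For stage (ii), I would use the explicit structure of the inverse maintenance algorithm of \Cref{lem:inverse_maintenance}: $v$ is produced by leverage-score sampling against $\mA^\top \omG \mA$, and between consecutive calls only a small number of coordinates of $v$ are resampled. Each coordinate change contributes to $\|\mM^{(k+1)}-\mM^{(k)}\|_F$ an amount bounded (again via the derivative formula and the approximation $\mA^\top \mV \mA \approx \mA^\top \omG \mA$) by something proportional to the leverage score of the affected row. The key is to show that the expected number of resampled coordinates at step $k$ is $\tilde O(\epsilon^{-2})$ times a quantity like $\|\mT^{-1}\delta_\tau\|_{\tau} + \epsilon^{-1}\|\mG^{-1}\delta_g\|_{\tau}$, which, under assumption \eqref{eq:stability_assumption}, sums to $O(K)$ over the horizon; after taking square roots (Frobenius is an $\ell_2$-type norm of contributions) and appropriate Cauchy--Schwarz, this yields the $O(K \log^{5/2} n)$ bound.

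Putting the two stages together via triangle inequality and Cauchy--Schwarz: using \eqref{eq:closeness_assumption} to replace $\tg,\otau$ by $g,\tau(g)$ at the cost of $\polylog(n)$ factors, and \eqref{eq:stability_assumption} to bound $\sum_k \|(\mG^{(k)})^{-1}(g^{(k+1)}-g^{(k)})\|_{\tau(g^{(k)})}^2 + \|(\mT^{(k)})^{-1}(\tau^{(k+1)}-\tau^{(k)})\|_{\tau(g^{(k)})}^2 = O(K)$, a Cauchy--Schwarz on the telescoping sum yields the claimed $O(K \log^{5/2} n)$ bound.

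\medskip

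\noindent\textbf{Anticipated main obstacle.} The hardest part will be stage (ii): the \emph{expected} change bound. This requires a careful use of the specific randomized inverse-maintenance scheme from \cite{blss20} to argue that, although $v$ is produced by random sampling, the expected number of coordinates whose sampling weight changes between step $k$ and step $k+1$ can be charged to the (deterministic, stable) quantities $\delta_g$ and $\delta_\tau$, and that the \emph{Frobenius} contribution of such a per-coordinate change is controlled by the leverage score weighted norm appearing in \eqref{eq:stability_assumption}. Matching the stochastic accounting with the deterministic bound \eqref{eq:stability_assumption}, and making sure that the multiplicative slack coming from \eqref{eq:closeness_assumption} only costs polylogarithmic factors (rather than blowing up inside the Frobenius norm), is the delicate part that justifies the $\log^{5/2} n$ factor in the final bound.
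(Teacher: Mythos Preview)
The paper does not prove \Cref{lem:frobenius_property_clean}; it is quoted verbatim as \cite[Lemma 11]{blss20} and used as a black box in the proof of \Cref{thm:leverage_score_maintenance}. So there is no in-paper proof to compare your plan against.

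That said, your outline is broadly consistent with what the inverse-maintenance scheme of \Cref{alg:inverse_maintenance} suggests: the Frobenius change naturally splits into the effect of moving $\tg$ and the effect of resampling $v$, the former is controlled by a projection-derivative bound of the type in \Cref{lem:Dsigma}, and the latter is a stochastic accounting over coordinates whose $\ttau$ or $\tg$ crossed the bucketing thresholds in \textsc{Update}. The part you flag as the main obstacle is indeed where the work is: the bucketing rule in \Cref{alg:inverse_maintenance} (the sorted $y$ vector and the choice of $i_k$) is precisely designed so that the total leverage-score mass of resampled coordinates over $K$ steps is $O(K\,\polylog n)$, and turning that into a Frobenius bound requires the rank-one update formula for $(\mA^\top \mV \mA)^{-1}$ together with the spectral guarantee $\mA^\top \mV \mA \approx_\epsilon \mA^\top \omG \mA$. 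Your plan names these ingredients but does not execute them; for a complete argument you would need to reproduce the amortized analysis of the bucketing scheme from \cite{blss20}, which is not reproduced in this paper either.
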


\begin{algorithm2e}[t!]
\caption{\label{alg:inverse_maintenance} Pseudocode for \Cref{lem:inverse_maintenance}, based on \cite[Algorithm 8]{blss20}.}
\SetKwProg{Members}{members}{}{}
\SetKwProg{Proc}{procedure}{}{}
\Members{}{
$\tg$, $\og$, $\ttau$, $\otau$, $\epsilon$, $v$, $y$, $\gamma := c \log n$ for some large constant $c$.
}
\Proc{\textsc{Initialize}$(\og, \otau, \epsilon)$}{
	$\og \leftarrow \og$, $\tg \leftarrow \og$, 
	$\otau \leftarrow \otau$, $\ttau \leftarrow \otau$, 
	$\epsilon \leftarrow \epsilon$,
	$y \leftarrow \zerovec_m$ \\
	$v_i \leftarrow \tg_i/ \min\{1, \gamma \epsilon^{-2} \ttau_i\}$ independently for each $i \in [m]$ with probability $\min\{1, \gamma \epsilon^{-2} \ttau_i\}$ and $v_i \leftarrow 0$ otherwise.
	
}
\Proc{\textsc{Update}$(I \subset [m],s \in \R^{I},t \in \R^{I})$}{
	$\og_I \leftarrow s$, $\otau_I \leftarrow t$ \\
	$y_i \leftarrow \frac{8}{\epsilon}(\og_{i}/\tg_{i}-1)$
		and $y_{i+n}\leftarrow 2(\otau_{i}/\ttau_{i}-1)$ for $i\in I$ \\
	Let $\pi:[2n]\rightarrow[2n]$ be a sorting permutation such that $|y_{\pi(i)}|\ge |y_{\pi(i+1)}|$ \\
	For each integer $\ell$, we define $i_{\ell}$ be the smallest
		integer such that $\sum_{j\in[i_{\ell}]}\otau_{\pi(j)}\geq2^{\ell}$. \\
	Let $k$ be the smallest integer such that $|y_{\pi(i_{k})}|\leq1-\frac{k}{2\left\lceil \log n\right\rceil }$
	$J \leftarrow \emptyset$ \\	
	\For{each coordinate $j\in[i_{k}]$}{
		Set $i=\pi(j)$ if $\pi(j)\leq n$ and set $i=\pi(j)-n$ otherwise \\
		$\tg_{i}\leftarrow \og_{i}$, $\ttau_{i}\leftarrow\otau_{i}$,
		$y_j \leftarrow 0$, $J \leftarrow J \cup \{ i \}$ \\
		$v_{i}\leftarrow\begin{cases}
			\tg_{i}/\min\{1,\gamma\epsilon^{-2}\cdot\ttau_{i}\} 
				& \text{with probability }\min\{1,\gamma\epsilon^{-2}\cdot\ttau_{i}\}\\
			0 & \text{otherwise}
		\end{cases}$
	}
	\Return $J$, $\tg$, $v$
}
\end{algorithm2e}

The complexity of \textsc{Update} in \cite[Lemma 11]{blss20} is a bit higher since they consider dense matrices.
Here we show that for sparse matrices the amortized update time is nearly linear in the input size.

\begin{lemma}
The function \textsc{Update} can be implemented to run in amortized $\tilde{O}(|I|)$ time.
\end{lemma}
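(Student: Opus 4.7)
The plan is to maintain, across consecutive calls to \textsc{Update}, persistent data structures so that each call only touches $O(|I|)$ coordinates plus polylogarithmic overhead. On the face of it, computing the sorting permutation $\pi$ and the prefix sums $\sum_{j \in [i_\ell]} \otau_{\pi(j)}$ would each cost $\Omega(n)$ if performed from scratch; replacing these by incrementally maintained structures is the whole game.

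First, I will keep $y \in \R^{2n}$ in a balanced binary search tree keyed by $|y_j|$, augmented at each node with the subtree sum of the associated weights $\otau_{j \bmod n}$. At the start of each call only the coordinates $y_i, y_{i+n}$ for $i \in I$ are rewritten, so refreshing the BST costs $O(|I| \log n)$. The in-order traversal realises $\pi$ implicitly, and the augmentation lets us locate each threshold $i_\ell$ by a single root-to-leaf descent in $O(\log n)$ time. Since $\sum_j \otau_j = O(n)$ and $\otau_j \ge n/m$, the exponent $\ell$ ranges over at most $O(\log n)$ values, so sweeping over $\ell$ to find the smallest admissible $k$ costs $O(\log^2 n)$ in total, and each of the auxiliary lookups $|y_{\pi(i_\ell)}|$ costs $O(\log n)$ through the same tree.

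The delicate step is the inner loop over $j \in [i_k]$, which performs $O(\log n)$ work per coordinate (a BST decrease-key of $y_j \to 0$, the writes $\tg_i \leftarrow \og_i$ and $\ttau_i \leftarrow \otau_i$, the independent Bernoulli resampling of $v_i$). Although $i_k$ may exceed $|I|$ within a single call, I will amortize as follows. Immediately after processing $j$ in the inner loop, we have $y_j = 0$; for $j$ to re-enter the loop at a later call, one of the assignments $y_i \leftarrow \tfrac{8}{\epsilon}(\og_i/\tg_i - 1)$ or $y_{i+n} \leftarrow 2(\otau_i/\ttau_i - 1)$ must have fired in the meantime, which only happens when the corresponding index lies in some subsequent $I$. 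Each $I$-membership of an index $i$ can therefore be charged for at most two future inner-loop visits (one for $j = i$, one for $j = i+n$), so the total inner-loop work over $K$ calls is $O\bigl(\log n \cdot \sum_{t \le K} |I^{(t)}|\bigr)$, which is $\tilde{O}(|I|)$ amortized per call.

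The main obstacle will be formalising this amortization argument, in particular verifying the invariant that between any two consecutive inner-loop visits of $j$ the value $y_j$ can grow away from $0$ only via an $I$-triggered update (rather than, say, by an implicit effect of updates to unrelated coordinates). Once this invariant is in hand the remaining pieces are standard augmented-BST bookkeeping, and the claimed $\tilde{O}(|I|)$ amortized bound follows.
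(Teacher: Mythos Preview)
Your proposal is correct and follows essentially the same approach as the paper's own proof: maintain $y$ in a balanced BST augmented with $\otau$-prefix sums (so that $i_\ell$ and $k$ are found by $O(\log n)$-time descents rather than by resorting), and amortize the inner loop by observing that once a coordinate is processed its $y$-value is reset to $0$ and can only become nonzero again through a future $I$-membership. Your write-up is in fact more explicit than the paper's (you spell out the $O(\log^2 n)$ cost of scanning over $\ell$, the factor of two from the $(i,i+n)$ pairing, and you flag the invariant that needs checking), but the underlying idea is identical.
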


\begin{proof}
We maintain a balanced binary search tree to sort the $y$. This takes $\tilde{O}(|I|)$ amortized time.
Then use that binary search to tree as a prefix data-structure,
i.e. each node is marked by the sum of it's children to the left.
Then the prefix of first $i$ elements is obtained by going from the root down to the $i$th element 
and adding the value of a node whenever we take the right edge.
When removing/inserting a node, updating these partial sums takes only $O(depth) = \tilde{O}(1)$ time.
Hence the amortized time for maintaining this structure is $\tilde{O}(|I|)$.

The rest of \textsc{Update} simply uses that data-structure, 
and can find the right value for $i_\ell$ and $k$ via binary-search.
Note that every index $i$ for which the loop of \textsc{Update} is executed, 
has been part of some input set $I$ before and will not be included in that loop again, 
unless it is part of another future input set $I$.
Hence we can charge the cost of the loop to previous calls to \textsc{Update}.
So we obtain $O(|I|)$ amortized cost for the loop per call to \textsc{Update}.
\end{proof}

\begin{algorithm2e}
\caption{\label{alg:leverage_score}Pseudocode for the data structure of \Cref{thm:leverage_score_maintenance}}
\SetKwProg{Members}{members}{}{}
\SetKwProg{Proc}{procedure}{}{}
\Members{}{
$D^{(\mathrm{stable})}$, $D^{(\tau)}$, $\tg \in \R^m$, $v \in \R^m$
}
\Proc{\textsc{Initialize}$(\mA\in\R^{m\times n},g\in\R^{m},\epsilon\in(0,1))$}{
	$\otau \leftarrow$ compute $\sigma(\sqrt{\mG} \mA) + n/m$. \\
	$\tg, v \leftarrow D^{(\mathrm{stable})}.\textsc{Initialize}(g, \otau, \epsilon / (144 \log n))$ (\Cref{lem:inverse_maintenance})
		\label{line:ls:initA}\\
	$D^{(\tau)}.\textsc{Initialize}(\mA, \tg, \epsilon / 10)$ (\Cref{lem:leverage_score_maintenance})
		\label{line:ls:initB}\\
}
\Proc{\textsc{Scale}$(I\subset[m],u\in\R^{|I|})$}{
	$J, \tg, v \leftarrow D^{(\mathrm{stable})}.\textsc{Update}(i, u)$ \\
	$D^{(\tau)}.\textsc{Scale}(i, \tg_i)$ for all $i\in J$. \label{line:ls:scale}\\
}
\Proc{\textsc{Query}$()$}{
	$\Psi\leftarrow$ Laplacian Solver for $(\mA^\top \mV \mA)^{-1}$ \label{line:ls:ls_sample_1}\\
	$\Psi\safe\leftarrow$ Laplacian Solver for $(\mA^\top \mW \mA)^{-1}$ where $w \in \R^m$ is a leverage score sample from $\otau$. \label{line:ls:ls_sample_2}\\
	$I, \otau \leftarrow D^{(\tau)}.\textsc{Query}(\Psi, \Psi\safe)$ \\
	\Return $I$, $\otau$
}
\end{algorithm2e}

By combining \Cref{lem:leverage_score_maintenance} with \Cref{lem:inverse_maintenance}
to get a good bound on the Frobenius norm via \Cref{lem:frobenius_property_clean}
we then obtain \Cref{thm:leverage_score_maintenance}.

\begin{proof}[Proof of \Cref{thm:leverage_score_maintenance}]

The algorithm is given by \Cref{alg:leverage_score}.
We will first prove correctness, and then bound the complexity.

\paragraph{Correctness}

Note that by \Cref{line:ls:initB} and \Cref{line:ls:scale}, 
the data structure $D^{(\tau)}$ always maintains the leverage score of $\sqrt{\tg} \mA$, 
where $\tg$ is the vector maintained by $D^{(sample)}$ (\Cref{lem:inverse_maintenance}).
We will argue that $D^{(\tau)}$ maintains $\otau \approx_{\epsilon/2} \tau(\sqrt{\tmG} \mA)$.
By $\tg \approx_{\epsilon/4} g$ this then implies $\otau \approx_\epsilon \tau(\mG \mA)$.

For \Cref{lem:inverse_maintenance} to return a good approximation,
we require that (i) $\Psi \approx_{1/(48 \log n)} \mA \tmG \mA$, 
(ii) $\Psi\safe \approx_{1/(48 \log n)} \mA \tmG \mA$,
and (iii) the randomness of $\Psi\safe$ does not depend on the randomness in $\Psi$.

By \Cref{lem:inverse_maintenance} the vector $v$ satisfies $\mA^\top \mV \mA \approx_{\epsilon/(144 \log n)} \mA \mG \mA$,
and $\tg \approx_{\epsilon/(144 \log n)} g$. 
So for a $\epsilon/(144 \log n)$ accurate Laplacian Solver (\Cref{lem:laplacian_solver}) 
we have $\Psi \approx_{\epsilon/(48 \log n)} \mA \tmG \mA$.
Thus condition (i) is satisfied.

We assume that the result of the last call to $D^{(\tau)}.\textsc{Query}$ satisfied $\otau \approx \tau(\sqrt{\tg} \mA)$.
Then by the assumption $g^{(t)} \approx_{1/32} g^{(t-1)}$ 
(i.e. the vector $g$ does not change too much between two calls to \textsc{Query}) 
we have that any entry of $\tau(\sqrt{\mG} \mA)$ can change by at most a constant factor 
between any two calls to \textsc{Query}.
As $\tg \approx g$, 
this means that this old $\otau$ from the previous iteration 
is still a constant factor approximation of the new $\tau(\sqrt{\tg} \mA)$.
So we can use the old $\otau$ of the previous iteration for the leverage score sampling in \Cref{line:ls:ls_sample_2}
to obtain a good approximation of $\mA \tmG \mA$.
Note that the approximation can be arbitrarily (i.e. $\delta > 0$) close by sampling $O(n\delta^{-2}\log n)$ entries.
Hence we can have a Laplacian solver with $\Psi\safe \approx_{\epsilon/(48 \log n)} \mA \tmG \mA$.
Also note that the randomness of $\Psi\safe$ depends only on $\otau$, 
which by \Cref{lem:leverage_score_maintenance} does not depend on $\Psi$.
Thus (iii) is also satisfied.

\paragraph{Complexity}

The complexities for \textsc{Initialization} and \textsc{Scale} 
come directly from \Cref{lem:leverage_score_maintenance} and \Cref{lem:inverse_maintenance}.
So let us consider \textsc{Query} instead.

By assumption \eqref{eq:stability_assumption} and \eqref{eq:closeness_assumption} the requirements for \Cref{lem:frobenius_property_clean} are satisfied and we have
\begin{align*}
\E\left[
	\sum_{k\in[K-1]}
	\left\|
	\sqrt{\mG^{(k+1)}} \mA (\mA^\top \mV^{(k+1)} \mA)^{-1} \mA^\top \sqrt{\mG^{(k+1)}} - \sqrt{\mG^{(k)}} \mA (\mA^\top \mV^{(k)} \mA)^{-1} \mA^\top \sqrt{\mG^{(k)}}
	\right\|_F
	\right] 
\end{align*}
is at most $O(K\log^{5/2} n)$.

Note that the Laplacian Solver for $(\mA^\top \mV^{(k)} \mA)^{-1}$ can be made $\delta >0$ accurate at cost $O(\log \delta^{-1})$ (\Cref{lem:laplacian_solver}),
so by choosing some $\delta = O(\epsilon/\poly(n))$, we can bound
$$
\E\left[
	\sum_{k\in[K-1]}
	\left\|
	\sqrt{\mG^{(k+1)}} \mA \Psi^{(k+1)} \mA^\top \sqrt{\mG^{(k+1)}}
	- \sqrt{\mG^{(k)}} \mA \Psi^{(k)} \mA^\top \sqrt{\mG^{(k)}}
	\right\|_F
	\right] 
\le
O(K\log^{5/2} n).
$$
By Markov inequality we thus have with probability at least $1/2$
that the queries to \Cref{lem:leverage_score_maintenance} have total complexity
$
\tilde{O}(
K^2 \cdot \epsilon^{-4} \cdot m/n + K(n\epsilon^{-2} + n \epsilon^{-1} \log W)
),$
where we used that both $\Psi$ and $\Psi\safe$ can be applied in $\tilde{O}(n \epsilon^{-2})$ time,
because of the sparsity of $v$ and $w$.

\end{proof}

\section{Degeneracy of $\mA$}
\label{sec:degenerate}

For all our data structures and the final algorithm from \Cref{sec:matching}, for simplicity we always assumed that the constraint matrix $\mA \in \R^{m \times n}$ of the linear program 
is the incidence matrix of some directed graph. 
However, this comes with some other problem:
Note that the rank of such a matrix is at most $n-1$, so $\mA$ is a so called \emph{degenerate} matrix,
and the Laplacian $\mL = (\mA^\top \mA)$ is singular, i.e. $\mL^{-1}$ does not exist.
However, the IPM presented in \Cref{sec:ipm} assumes 
that the constraint matrix is non-degenerate and that $(\mA^\top \mA)^{-1}$ exists.
So technically it is not clear if the IPM can indeed be used 
to solve the perfect $b$-matching problem as we did in \Cref{sec:matching}.
We outline here how we can assume that $\mA$ is indeed non-degenerate
and what small modifications must be performed to our data structures
to handle the new matrix $\mA$.

Assume $\mA$ is the incidence matrix of some minimum weight perfect bipartite $b$-matching instance,
and the corresponding formulation as a linear program is
$\min_{\mA^\top x = b, x \ge 0} c^\top x$.
Instead of using the incidence matrix $\mA$ and cost vector $c$, 
we use
$$
\mA' = \left[ \begin{array}{c}
\mA \\
\mI_{n \times n}
\end{array}\right], \text{ and }
c' = \left[ \begin{array}{c}
c \\
\|b\|_1\|c\|_\infty \cdot \onevec_n.
\end{array}\right]
$$
That is, we add an $n \times n$ identity block to the bottom of $\mA$
and add a large cost at the bottom of $c$ to make sure the optimal solution of
$\min_{\mA'^\top x'=b, x'\ge0} c'^\top x'$
has $x'_k =0$ for all $m < k \le m+n$,
so that the first $m$ coordinates of $x'$ are an optimal solution for the original linear program.

After this modification, the matrix $\mA'$ is full rank and the IPM from \Cref{sec:ipm} can be applied.
However, the matrix $\mA'$ is no longer an incidence matrix, 
so we have to modify our data structures that assume $\mA$ is an incidence matrix.

\paragraph{\Cref{lem:large_entry_datastructure}:}
We can detect the large entries of $\mA' h$ by using \Cref{lem:large_entry_datastructure}
for the first $m$ entries and simply computing the bottom $n$ entries of $\mA' h$ explicitly.
This additional cost of $O(n)$ is subsumed by \Cref{lem:large_entry_datastructure}.
We can handle the sampling in \textsc{Sample} the same way 
by computing the impact of the bottom $n$ entries of $\mA' h$ to the norm
by computing the bottom $n$ entries explicitly.
For the upper bound on the leverage scores and the leverage score sample,
note that adding more rows to $\mA$ only decreases the leverage score of the original rows.
Hence we can use the previous upper bound for the top $m$ rows
and simply use $1$ as an upper bound on the leverage scores of the bottom $n$ rows.

\paragraph{Laplacian Solvers:}
We use Laplacian solvers in several algorithms.
While $\mA'^\top \mG \mA'$ is no longer a Laplacian matrix,
it is still an SDDM (symmetric diagonally dominant matrix)
and solving systems in such matrices also has fast nearly linear time solvers.

\paragraph{Other data structures}

All other data structures do not require any special structure to the matrix $\mA'$
besides of very sparse rows.
While results such as \Cref{thm:vector_maintenance} state that $\mA$ must be an incidence matrix,
this is only because it uses \Cref{lem:large_entry_datastructure} internally,
which has already been covered.
Likewise, \Cref{lem:leverage_score_maintenance} is obtained via
a reduction from \cite{blss20} to the data structure \Cref{lem:large_entry_datastructure}.

\paragraph{Initial Points}

Finding initial points is even easier now, as we no longer have to add a star to the bipartite graph.
To construct a feasible $x'$, we again route $1/n$ flow on each edge of the bipartite graph,
and then set the bottom $n$ coordinates of $x'$ such that we satisfy $\mA'^\top x' = b$.
The slack of the dual is constructed in the same way as before in \Cref{lem:initial_point}
and \Cref{lem:initial_point_simple}.

\end{document}